\newcommand\numberthis{\addtocounter{equation}{1}\tag{\theequation}}
\newtcolorbox{myframe}[1][]{
  enhanced,
  arc=0pt,
  outer arc=0pt,
  colback=white,
  boxrule=0.8pt,
  #1
}
\newtcolorbox{algbox}{breakable,colback=gray,colframe=gray,standard jigsaw,opacityback=0.15,opacityframe=0.5,boxrule=0.75pt,before upper={\parindent15pt}}
\newcommand\precdot{\mathrel{\ooalign{$\prec$\cr
  \hidewidth\hbox{$\cdot\mkern0.5mu$}}}}
\newcommand\succdot{\mathrel{\ooalign{$\succ$\cr
  $\cdot$}}}
\newcommand{\bs}[1]{\boldsymbol{#1}}
\newcommand{\mcA}{\mathcal{A}}
\newcommand{\mcB}{\mathcal{B}}
\newcommand{\mcQ}{\mathcal{Q}}
\newcommand{\mcD}{\mathcal{D}}
\newcommand{\mcF}{\mathcal{F}}
\newcommand{\mcH}{\mathcal{H}}
\newcommand{\mcN}{\mathcal{N}}
\DeclareMathOperator{\id}{\mathds{1}}
\newcommand{\ftwo}{\mathbb{F}_2}
\DeclareMathOperator{\poly}{poly}
\DeclareMathOperator{\polylog}{polylog}
\DeclareMathOperator{\supp}{supp}
\newcounter{relctr} %
\newcommand\labelrel[2]{%
  \begingroup
    \refstepcounter{relctr}%
    \stackrel{\textnormal{(\alph{relctr})}}{\mathstrut{#1}}%
    \originallabel{#2}%
  \endgroup
}
\newtheorem{theorem}{Theorem}[section]
\newtheorem{definition}[theorem]{Definition}
\newtheorem{remark}[theorem]{Remark}
\newtheorem{lemma}[theorem]{Lemma}
\newtheorem{cor}[theorem]{Corollary}
\newtheorem{prop}[theorem]{Proposition}
\newtheorem{fact}[theorem]{Fact}
\newtheorem{claim}[theorem]{Claim}
\newtheorem{example}[theorem]{Example}
\newtheoremstyle{named}{}{}{\itshape}{}{\bfseries}{.}{.5em}{\thmnote{#3}}
\theoremstyle{named}
\tikzset{snake it/.style={decorate, decoration=snake}}
\newcommand{\polyloglog}{\operatorname{polyloglog}}
\newcommand{\loglog}{\operatorname{loglog}}
\newcommand{\logloglog}{\operatorname{logloglog}}
\newcommand{\polylogloglog}{\operatorname{polylogloglog}}
\newcommand{\loglogloglog}{\operatorname{loglogloglog}}
\newcommand{\quasipolylog}{\operatorname{quasipolylog}}
\newcommand{\quasipoly}{\operatorname{quasipoly}}
\newcommand{\CNOT}{\mathsf{CNOT}}
\newcommand{\Had}{\mathsf{H}}
\newcommand{\CCZ}{\mathsf{CCZ}}
\newcommand{\CZ}{\mathsf{CZ}}
\newcommand{\F}{\mathbb{F}_2}
\newcommand{\Fq}{\mathbb{F}_q}
\newcommand{\FF}{\mathbb{F}}
\newcommand{\SWAP}{\mathsf{SWAP}}
\newcommand{\ROT}{\mathsf{ROT}}
\newcommand{\PERM}{\mathsf{PERM}}
\newcommand{\PAULI}{\mathsf{PAULI}}
\newcommand{\CLIFF}{\mathsf{CLIFF}}
\newcommand{\CLIFFM}{\mathsf{CLIFF_M}}
\newcommand{\MEAS}{\mathsf{M}}
\newcommand{\MEASD}{\tilde{\mathsf{M}}}
\newcommand{\stab}{\operatorname{Stab}}
\newcommand{\mixedstate}[1]{\ensuremath{\mathcal{D}\left(\mathcal{H}_2^{\otimes #1}\right)}}
\newcommand{\ndep}[1]{\ensuremath{\widehat{#1}(n_L)}}
\NewDocumentCommand\gadget{}{\ensuremath{\mathsf{Gad}}}
\NewDocumentCommand\qcode{}{\ensuremath{\mathcal{Q}}}
\NewDocumentCommand\qOp{}{\ensuremath{\mathsf{Op}}}
\NewDocumentCommand\weightenum{mm}{\ensuremath{\mathcal{W}{(#1;~#2)}}}
\NewDocumentCommand\fault{o}{%
  \IfNoValueTF{#1}
   {\ensuremath{\mathbf{f}}}
   {\ensuremath{\mathbf{#1}}}%
}
\DeclareMathOperator*{\wtsum}{\boxplus}
\DeclareMathOperator*{\wtprod}{\circledast}
\title{Quantum fault tolerance with constant-space and logarithmic-time overheads}
\date{}
\author{Quynh T. Nguyen\thanks{Harvard. \href{mailto:qnguyen@g.harvard.edu}{qnguyen@g.harvard.edu} } \qquad Christopher A. Pattison\thanks{Caltech. \href{mailto:cpattiso@caltech.edu}{cpattiso@caltech.edu}}}
\date{\today}
\begin{document}

\maketitle

\begin{abstract}

In a model of fault-tolerant quantum computation with noiseless constant-time auxiliary classical computation per quantum operation, we construct a quantum fault tolerance protocol with constant-space and $\widetilde{O}(\log N)$-time overheads, where the notation $\widetilde{O}(\cdot)$ hides sub-polylogarithmic factors. This significantly improves over the previous state-of-the-art protocol of Yamasaki and Koashi that achieved constant-space and quasi-polylogarithmic-time overhead. Our construction is obtained by using constant-rate quantum locally testable codes (qLTC) of appropriately chosen block size and developing new fault-tolerant gadgets on qLTCs and qLDPC codes. In particular, we obtain the following new technical results:

  (1) We develop a magic state distillation protocol with $(\log \frac{1}{\varepsilon})^{\gamma}$ spacetime overhead, where $\gamma \rightarrow 0$ as $\varepsilon \rightarrow 0$. This result differs from recent works in that we use a simple and self-contained construction using Reed-Solomon codes to obtain low \emph{spacetime} overhead (rather than just space overhead as in recent works). We show a similar result for stabilizer state distillation.
  
  (2) We prove that quantum codes based on the cubical complex construction admit sequential and parallel single-shot decoders against adversarial errors of weight scaling with the code distance. In particular, our proof applies to a recent family of almost-good qLTCs  of Dinur-Lin-Vidick and the good qLDPC codes of Dinur-Hsieh-Lin-Vidick.

  (3) Using the introduced distillation schemes, we develop fault-tolerant logical state preparation procedures with $\widetilde{O}(1)$-spacetime overhead on qLTCs. Here, the qLTC property is used to quickly test if a state is too far from the codespace before proceeding.

  (4) We introduce the use of multiple resource states (from a small-sized set) to obtain addressable qLDPC logic that can be easily prepared using our state preparation schemes and transversal qLDPC gates.

    We obtain the main result by combining the above results with carefully chosen parameters. In doing so, we introduce a new \emph{weight enumerator formalism} to prove fault tolerance in a composable way, which is of independent interest.
    To our knowledge, this gives the lowest spacetime overhead to date in the considered model of quantum fault tolerance, which, for the first time, matches that of classical fault tolerance up to sub-polylogarithmic factors.
    We conjecture this is optimal up to sub-polylogarithmic factors.
\end{abstract}

\newpage
\tableofcontents

\newpage
\section{Introduction}
\subsection{Background and main result}
The optimal overhead for fault-tolerant computation is a fundamental question in computer science and engineering. In the setting of classical computation, this question was essentially settled in the 1990s. \cite{pippenger1991lower, gacs1994lower} showed
the existence of a Boolean function whose noiseless circuit complexity is $|C|$ but requires $\Omega(|C| \log |C|)$ noisy gates to compute. This lower bound is met by the original fault tolerance scheme in von Neumann's lectures that initiated the study of fault-tolerant computation~\cite{von1956probabilistic}.

In the setting of quantum computation, this question is largely open and appears to be much more challenging. On one hand, existing rigorous proofs of fault tolerance for quantum circuits are often very complicated. This complexity can be attributed to the inherent subtleties of quantum information that causes new technical and conceptual challenges~\cite{gottesman2009introduction}, such as in conditioning on a subsystem, handling entangled ancillas, tracking propagation of quantum errors, dealing with the lack of universal transversal gate sets~\cite{eastin2009restrictions}, etc.
In fact, it is only recent that linear-distance quantum LDPC codes have been shown to exist~\cite{panteleev2022asymptotically}. 
As such, until recently, there has only been a handful of rigorously proven end-to-end quantum fault tolerance protocols~\cite{aharonov1999fault, aliferis2005quantum}. On the other hand, lower bounds for the fault tolerant overhead in the quantum case are also naturally expected to be more challenging. The currently best lower bound on quantum fault tolerance overhead is from the work of~\cite{fawzi2022lower} who derived a space lower bound of $Q^{-1} n$, where $Q$ is the quantum capacity of the noise channel\footnote{This bound also applies for the computation model considered in this work where small-sized noiseless classical computation is assumed.}. To our knowledge, there is currently no spacetime lower bound that is better than linear.

Motivated by practical implementations of quantum computers, a commonly studied model of fault-tolerant quantum computation is one in which a noisy quantum circuit is executed with the help of a (small-sized) noiseless classical computation. A metric of high interest is the space overhead, i.e., the number of physical qubits used at any point in time of the quantum computation. In his seminal work~\cite{gottesman2013fault}, Gottesman initiated the study of constant-space-overhead quantum fault-tolerance under the assumption that a large auxiliary classical computation is free and noiseless. The works of~\cite{fawzi2020constant} further relaxed the auxilliary classical computation time to polylogarithmic.
Here, we assume that an even smaller, constant-time noiseless classical computation is used per physical quantum time step.

\begin{definition}[Quantum fault tolerance model, simplified]
Let $C$ be a Clifford+$\CCZ$ quantum circuit of depth \(D\) and space \(W\).
For every timestep of the quantum circuit, we run a noiseless classical circuit of constant depth given access to the measurement history and a classical register.
The quantum operations in the following timestep are permitted to be classically controlled on individual output bits of the classical circuit.
    
In this work, we place no restrictions on the connectivity of the operations, neither for the simulated circuit nor the simulating one.
Our noise model is the commonly use locally-stochastic noise: the set of faulty gates \(F\) satisfies that, for any subset of gates \(S\), \(\Pr(S \subseteq F) \le p^{|S|}\) for some \(p\) that should be thought of as the noise rate.
Faulty gates are permitted to apply any CPTP noise channel.
\end{definition}
The goal of the fault-tolerant circuit is to produce samples from a distribution that is \(\varepsilon\)-close in total-variation distance (TVD) to the output distribution of the noiseless quantum circuit.

The question we study in this paper is: \emph{What is the lowest time overhead possible while maintaining a constant space overhead in the above model?} This question is of both practical and theoretical interest. The model of quantum fault tolerance considered here has been the conventional model studied in the field, stemming from the fact that classical computation is practically perfect with current technology.
However, we believe that studying the overhead of fault tolerance in this model could lead to ideas to answer the same question in the fully-quantum model~\cite{aharonov1999fault}.
Previous to the initiation of this work, the state-of-the-art result in this line of work is due to Yamasaki and Koashi~\cite{yamasaki2024time}, who constructed a protocol based on concatenated quantum Hamming codes. They achieved a quasipolylog-time overhead -- which is a function growing faster than $\log^\alpha(|C|)$ for any constant $\alpha$.

The above discussion leads us to our main result.\footnote{A prior version of this paper assumed a \(\polyloglog\)-depth classical circuit. We have made a small modification to the EC gadget to remove this requirement. } (see \cref{thm:main-result-concat} for the full formal statement)

\begin{theorem}[Main result, informal]\label{thm:main} Given any (arbitrary-connectivity) quantum circuit $C$ on $W$ qubits, of depth $D$, and composed of $|C| \leq f(W)$ gates from the Clifford+$\CCZ$ gate set, where $f$ is a function growing faster than any quasipolynomial function, we can efficiently construct a circuit $C_{FT}$ with the following guarantees. $C_{FT}$ uses at most $O(W)$ physical qubits at any point in time. The quantum depth of $C_{FT}$ is $D \times O(\log^{1+o(1)} \frac{|C|}{\varepsilon})$, and the (noiseless) auxilliary classical computation time used per physical quantum time step is $O(1)$. There exists a constant noise threshold $p_*$, such that when executed under the local stochastic noise model with noise rate $p < p_*$, $C_{FT}$ outputs a distribution which has total-variation distance $\varepsilon$ from the output distribution of $C$.
\end{theorem}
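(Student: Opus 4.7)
The plan is to encode the $W$ logical qubits of $C$ into $O(W)$ physical qubits using a constant-rate qLTC of block size $N = \polylog(|C|/\varepsilon)$, where $N$ is chosen large enough that the logical failure probability of a single gadget lies below $\varepsilon/|C|$. Because the code has constant rate, this immediately yields the $O(W)$ space bound. Each logical time step of $C$ is then simulated by (i) applying a round of logical Clifford or $\CCZ$ gates, followed by (ii) an error correction gadget. The non-Clifford $\CCZ$ gates are realized by magic state injection, using states produced by the Reed-Solomon based distillation protocol of Item~(1) at target error $\varepsilon/|C|$ and hence spacetime $\polylog(|C|/\varepsilon)^{\gamma}$ each; the logical Clifford layer is implemented by transversal qLDPC gates combined with the addressable logic of Item~(4), whose required resource states are supplied in $\widetilde{O}(1)$ spacetime per block by the preparation procedure of Item~(3).

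The engine of the argument is the EC gadget: between every logical operation we invoke the parallel single-shot decoder guaranteed by Item~(2), which runs in $O(1)$ classical depth per time step and in $\widetilde{O}(1)$ quantum depth, and which restores the state to the codespace whenever the physical error configuration has weight below the single-shot threshold, i.e., $\Omega(N/\polylog N)$. For $N = \polylog(|C|/\varepsilon)$, the probability that any individual gadget receives an error pattern above this threshold is super-polynomially small in $N$, so a union bound over the $|C|$ logical operations gives total variation distance at most $\varepsilon$ from the ideal output. Counting physical time, each logical time step consumes $\widetilde{O}(1)$ physical layers for gates and decoding together with the amortized $\polylog(|C|/\varepsilon)^{\gamma}$ spacetime of magic state production, which collapses to the stated $D \times O(\log^{1+o(1)}(|C|/\varepsilon))$ quantum depth once the Reed-Solomon parameter $\gamma$ is pushed to zero as $\varepsilon \to 0$.

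The main obstacle is composing these gadgets rigorously under the locally-stochastic noise model. A naive concatenation of failure probabilities breaks down because errors leaving one gadget are correlated with the fault pattern of the next gadget, and the standard level-by-level conditioning arguments used in concatenated schemes do not translate directly to a single non-concatenated block code. This is precisely the role of the weight enumerator formalism: we attach to each gadget a weighted generating function $\weightenum{\gadget}{\cdot}$ tracking the joint distribution of the output error weight and the internal fault weight, and show that the operators $\wtsum$ and $\wtprod$ correctly describe serial and parallel composition. The argument then reduces to verifying (a) single-shot weight-enumerator bounds for the EC gadget from Item~(2), (b) analogous bounds for the logical gate and state-preparation gadgets from Items~(3) and~(4), and (c) a distillation weight enumerator derived from Item~(1). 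Balancing the block size $N$, the distillation spacetime, and the decoder depth to meet the required $\varepsilon/|C|$ per-gadget tail yields the final $\log^{1+o(1)}(|C|/\varepsilon)$ time overhead.
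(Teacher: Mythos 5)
Your high-level decomposition (constant-rate qLTC of polylog block size, single-shot EC between logical layers, distilled resource states for gates, weight-enumerator bookkeeping) matches the paper's strategy, but the proposal has two gaps that would prevent it from closing.

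First, your fault analysis tacitly assumes that the qLTC gives a constant threshold. It does not. The almost-good qLTC of Dinur--Lin--Vidick has distance $N/\polylog(N)$, and the single-shot decoder of Item~(2) only corrects adversarial errors up to weight $\Theta(N/\polylog N)$. For a block of size $N$ under local stochastic noise of rate $p$, the probability of exceeding that weight is super-polynomially small in $N$ only when $p \lesssim 1/\polylog(N)$. So your per-gadget tail bound requires $p < 1/\polyloglog(|C|/\varepsilon)$, a \emph{vanishing} threshold, whereas the theorem promises a constant $p_*$. The paper's \Cref{thm:main-result-vanish-threshold} proves exactly the vanishing-threshold version; achieving the constant threshold requires a second, separate step (\Cref{subsec:main-thm-constant-threshold}) in which the entire qLTC circuit is simulated by a constant-threshold concatenated-code scheme \cite{yamasaki2024time}. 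That concatenation adds only a $\polylogloglog(WD/\varepsilon)$ factor in time and $O(1)$ in space, but without it the threshold claim is wrong as stated.

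Second, your time-overhead accounting does not produce the stated $D \cdot O(\log^{1+o(1)}(|C|/\varepsilon))$. You attribute the logarithmic factor entirely to the $\gamma\to 0$ exponent of the Reed--Solomon distillation, but that factor is $\log^{o(1)}$, not $\log^{1+o(1)}$: the distillation and gadget depths are each $\widetilde{O}(1)$ or $\log^{o(1)}$. The dominant $\Theta(\log W)$ factor in the final bound comes from \emph{serialization and compilation} (\Cref{lemma:compiling}): with $k_L = \Theta(\log W)$ logical qubits per block, an adversarial logical layer may demand up to $k_L$ distinct in-block positions, and scheduling these onto the small set of addressable primitive gates costs $\Theta(k_L \log k_L) = \Theta(\log W \cdot \loglog W)$ depth per logical layer (via edge-coloring the conflict graph). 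Your proposal never mentions this step, and the depth you actually tally, $D \cdot (\widetilde{O}(1) + \polylog^\gamma)$, would be $D \cdot \log^{o(1)}$, which contradicts the theorem statement, suggesting the bookkeeping silently assumes away the scheduling cost. A related omission: you invoke the distillation protocol at target error $\varepsilon/|C|$ but never say how its \emph{input} states---noisy encoded resource states at constant logical error---are produced; the paper obtains these from concatenated-code FT followed by a qLTC tester gadget (\Cref{lemma:noisy-prep-gadget}), and without the tester there is no guarantee that a badly prepared resource state is even decodable, which breaks the composability you rely on.
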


Above, we assume the ideal circuit is composed of gates from a fixed local gate set. This assumption is simply to remove a potential polylogarithmic depth overhead in decomposing gates according to the Solovay-Kitaev theorem. Notably, the stated time overhead holds even though we are making no other assumptions about the gate connectivity and specific patterns of gate types in the ideal circuit. In fact, our fault tolerance scheme simulates any logical gate in a time overhead of at most $O(\log^{o(1)} \frac{|C|}{\varepsilon})$, and the extra logarithmic factor arises from potentially adversarially selected connectivity in the simulated circuit. The $o(1)$ exponent roughly scales as $1/\logloglog(|C|/\varepsilon)$. We can also smoothly tune the overheads: the space overhead can become $\widetilde{O}(\log \frac{|C|}{\varepsilon})$ and the time overhead can become as low as a constant, while keeping the spacetime overhead $\widetilde{O}(\log \frac{|C|}{\varepsilon})$ (see~\Cref{subsec:combining}).

\subsection{Overview of ideas and technical results}
The result is based on multiple main ideas and new technical results which we overview below. To keep track of less variables, in the rest of this introduction we take the target error $\varepsilon$ to be a constant as well as assume $|C|=\poly(W)$.

\subsubsection{Revisiting Gottesman's protocol}\label{subsubsec:revisiting}
Our starting point is an observation that a modification of Gottesman's protocol~\cite{gottesman2013fault}, in particular the instantiation by Fawzi, Grospellier, and Leverrier~\cite{fawzi2020constant}, suffices to obtain a time overhead of $\polylog(|C|)$. Let us recall this protocol (see Section 2.5 of the arxiv version of~\cite{fawzi2020constant}) that achieved a linear-time overhead.

The authors of~\cite{fawzi2020constant} partition the $W$ qubits into $\ell=\polylog(W)$ blocks with each block containing roughly $K_\mathrm{FGL}= W/\ell$ qubits. Then they use a square-root-distance quantum LDPC code (namely quantum expander code~\cite{leverrier2015quantum}) of parameters $[[N'=\Theta(K_\mathrm{FGL}), K_\mathrm{FGL}, D'= \Theta(\sqrt{K_\mathrm{FGL}})]]$ to encode each block using $N'$ physical qubits. The fault-tolerant circuit $C'$ alternates between a logical operation cycle and an error correction (EC) cycle. For the EC cycle, they use an efficient decoder that is shown to be single-shot (robust against measurement errors), and hence the quantum depth per EC cycle is $O(1)$. To perform logical gates, they first serialize all gates in $C$ such that each layer only consists of one gate, thereby incurring a $O(K_\mathrm{FGL})$ factor in the time overhead. A logical gate is then fault-tolerantly simulated using gate teleportation~\cite{gottesman1999demonstrating}. Here, the main challenge is to fault-tolerantly prepare the resource state for the logical gate. The solution is Gottesman's idea of using off-the-shelf the concatenated-code FT scheme of Aharonov and Ben-Or~\cite{aharonov1999fault}. The idea is that the concatenated-code protocol can serve as a general-purpose factory of ancilla states \footnote{Strictly speaking, we were not able to find a proof of this important lemma (in particular, the unencoding step of the concatenated-code to obtain the physical state) in the literature. Thus, we give a proof of it in this paper. A very recent work \cite{christandl2024fault} also gives a proof.}. In particular, to prepare the resource state, which is a $O(N')$-qubit state, to error $\varepsilon'= 1/\poly(|C|)$ (which is required to run the computation $C$), concatenated-code FT consumes a circuit spacetime of roughly $N'\polylog(N'/\varepsilon')=N'\polylog(W)$. Hence, choosing $\ell=\polylog(W)$ ensures that only $\Theta(W)$ qubits are used at any point in the fault-tolerant circuit. After some careful analysis the time overhead can be shown to be $O(W)$.

We observe that the choice of code block size and serialization described above is suboptimal. In particular, it suffices to partition $W$ qubits into $W/\polylog(W)$ blocks of $K'=\polylog(W)$ qubits each. The reason is because the logical error suppression due to the quantum expander code $[[N',K']]$ is roughly $2^{-\Theta(\sqrt{N'})}$, so  a blocksize of $N'=\polylog(W)$ already suffices to maintain a low logical error rate throughout the computation \footnote{If we instead use recent good qLDPC codes~\cite{panteleev2022asymptotically,leverrier2022quantum} rather than quantum expander codes, we can in fact choose a code block size of $O(\log |C|)=O(\log W)$. These codes are also known to have a single-shot efficient decoder~\cite{gu2023single}.}. Next, instead of serializing all the gates in $C$, we only need to rearrange them such that there is at most one gate acting on a code block in each circuit layer. This task corresponds to an edge coloring problem on a degree-$O(K')$ graph, which can be done with $O(K')=\polylog(W)$ colors by a generalization of Vizing's theorem~\cite{berge1991short}. Hence, this serialization only incurs a $\polylog(W)$ time overhead. Logical gates can be performed the same way as described earlier, leading to another $\polylog(W)$ spacetime overhead. After some further gate scheduling and serializations, one can obtain constant-space and $\polylog(W)$-time overheads. The details will be described and superseded by our construction, so we do not spell them out here. The degree of the polylog factor comes from multiple sources, including the qLDPC code distance scaling, gate scheduling and serializations, and concatenated-code FT overhead, and has not been explicitly worked out. We expect this to be a large constant (say, at least $4$) with the dominant contributor being the concatenated-code overhead. Hence, new ideas are required to bring this down to the optimal exponent, which we conjecture is 1 as in the classical case.

Motivated by the fundamental question of the optimal overhead of fault-tolerant computation, we reduce the exponent in the time overhead to $1+o(1)$. In particular, we will reduce the time overhead of all logical operations to $\log^{o(1)}(W)$ while keeping the space overhead a constant. The remaining $O(\log W)$ time overhead factor originates from the serialization using edge coloring similar to described above, which we conjecture is inherent -- see discussions in~\Cref{subsec:outlook}. This leads to the claimed main result in~\Cref{thm:main}.

We next overview the extra new ideas and technical contributions in achieving this goal.

\subsubsection{Better overheads with distillation}
The first improvement is to remove the $\polylog(W)$ spacetime overhead arising from resource state preparation using concatenated-code FT. In particular, this factor is $\polylog(N'/\varepsilon')$, where $N'$ is the qLDPC block size and $\varepsilon'$ is logical error rate of the simulated gate. As seen earlier,~\cite{fawzi2020constant} had the parameters $N'= W/\polylog(W)$ and $\varepsilon'=1/\poly(W)$, leading to the $\polylog(W)$ scaling. With our new instantiation, $N'=\polylog(W)$ (in fact, our final choice will be $N'=O(\log W)$). However, the choice $\varepsilon'=1/\poly(W)$ seems unavoidable -- after all, we need this logical gate error rate in order to sustain a $\poly(W)$-long computation. Our idea to circumvent this obstruction is to use state distillation protocols~\cite{bravyi2005universal, knill2005scalable} in combination with concatenated-code FT. In particular, we first use concatenated-code FT to prepare the resource state to only a sufficiently small constant error rate $\varepsilon_0$. This step only incurs a $\polyloglog(W)$ spacetime overhead. The second step is to use a state distillation protocol to suppress this error to the target $\varepsilon'= 1/\poly(W)$. These protocols often employ a quantum code (`distillation code') with a transversal gate related to the distilled resource state. Whether or not this leads to an overhead improvement depends on two factors: The space overhead is measured by the protocol's yield rate, the number of noisy states required to produce a good state. The time overhead depends on the distillation code's encoding/unencoding circuit depth and classical decoder efficiency.
Towards this, we prove the following two new results on distillation overheads that are of independent interest:

\begin{theorem}[Stabilizer state distillation, informal] Let $\ket{\psi}$ be a $O(1)$-qubit stabilizer state. There exists a distillation protocol $\mathsf{SSD}(\varepsilon)$ that uses noiseless CNOT, Pauli operations, Pauli measurements and gives the following guarantees. There exists a constant noise threshold $\varepsilon_\mathrm{SSD}$ such that, provided with $N$ (independent) noisy states whose infidelity with $\ket{\psi}$ is $\varepsilon_\mathrm{in} < \varepsilon_\mathrm{SSD}$, where $N >N(\varepsilon_\mathrm{SSD})$ is a sufficiently large number, $\mathsf{SSD}(\varepsilon)$ produces $\Theta(N)$ states each of which has infidelity at most $\varepsilon$ with $\ket{\psi}$. Furthermore, the quantum depth and classical depth of $\mathsf{SSD}(\varepsilon)$ are both $\polyloglog(1/\varepsilon)$.
\end{theorem}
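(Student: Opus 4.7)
The plan is to reduce stabilizer state distillation to classical bit-flip correction and then run a good LDPC code. Given $\ket{\psi}$, pick a noiseless Clifford $V$ with $V\ket{0}^{\otimes k}=\ket{\psi}$ and rebase each noisy input by $V^\dagger$; since infidelity is Clifford-invariant, the result is noisy $\ket{0^k}$ with infidelity $\le\varepsilon_{\mathrm{in}}$. Next, within each $k$-qubit group apply a one-shot $Z$-twirl, i.e. a uniformly random Pauli $Z^r$ with $r\in\{0,1\}^k$ (noiseless classical randomness, $O(1)$ depth), which kills off-diagonal terms in the computational basis while preserving $\ket{0^k}$, yielding a classical mixture $\sum_{s\in\{0,1\}^k} p_s\,X^s\ket{0^k}\bra{0^k}X^s$ with $p_0\ge 1-\varepsilon_{\mathrm{in}}$. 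The task has now been reduced to: given $N$ independent groups each carrying a classical bit-flip pattern with total flip probability $\le\varepsilon_{\mathrm{in}}$, produce $\Theta(N)$ groups whose probability of any flip is $\le\varepsilon$.

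Then I would fix a constant-rate, linear-distance LDPC code $C$ of block length $n=\Theta(\log(1/\varepsilon))$ that admits an $O(\log n)$-depth parallel decoder correcting a constant fraction of adversarial errors (for instance, an expander code with small-set-flip decoding). Partition the $N$ rebased, twirled copies into batches of $n$. For each qubit slot $j\in[k]$, the $n$ noisy qubits at position $j$ across the batch form an iid-noisy zero codeword of $C$, since bit-flip patterns are independent across copies. I would extract the syndrome by CNOTing into ancillas that implement the sparse parity checks, measure the ancillas in the $Z$ basis, decode classically in parallel, and apply the resulting $X$ correction back on the data. Finally, reapply $V$ on each $k$-qubit group to recover $\ket{\psi}$.

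For the analysis, standard below-threshold decoding arguments give a per-slot failure probability of $\exp(-\Omega(n))$; a union bound over the $k=O(1)$ slots per group yields per-output infidelity at most $k\exp(-\Omega(n))\le\varepsilon$ once $n=C'\log(1/\varepsilon)$ for a suitable constant $C'$. Since no post-selection is used, the protocol consumes $n$ inputs and produces $n$ outputs per batch, giving deterministic yield $\Theta(N)$, with the ``sufficiently large $N$'' clause entering only to guarantee that batching is possible and that the aggregate noise pattern concentrates around its mean. The quantum circuit is $O(1)$ depth (rebasing, twirl, constant-depth LDPC syndrome extraction, Pauli correction, and rebasing back), and the parallel classical decoder contributes $O(\log n)=O(\loglog(1/\varepsilon))$ depth, so both quantum and classical depths fit inside $\polyloglog(1/\varepsilon)$.

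The main obstacle is cleanly justifying the first step: one needs to verify that the $Z$-twirl does convert any density-matrix infidelity $\le\varepsilon_{\mathrm{in}}$ into a classical bit-flip probability $\le\varepsilon_{\mathrm{in}}$ (this follows from $\ket{0^k}$ being the unique $+1$ eigenstate of $\langle Z_1,\ldots,Z_k\rangle$, so $Z$-twirling preserves diagonal entries in the computational basis and sets off-diagonals to zero), and that independence of the $N$ input states carries over to independence of the classical error patterns across the $n$ positions of $C$ (which follows immediately from the tensor-product structure of the inputs). Once this reduction is pinned down, the remainder is a standard LDPC decoding analysis together with a Chernoff bound on the total number of groups that survive with good fidelity.
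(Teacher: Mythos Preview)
There is a genuine gap: your approach violates the gate-set restriction in the theorem statement. The protocol is required to use only CNOT, Pauli operations, and Pauli measurements, but your rebasing step applies a general Clifford $V^\dagger$ (and later $V$). For a generic stabilizer state $\ket{\psi}$—say $\ket{+}$ or $\ket{i}$—these unitaries involve Hadamard or phase gates, which are not in the allowed set. This restriction is not cosmetic: in the paper's application the distillation runs on logical qubits of a CSS qLDPC code, which supports transversal CNOT and Pauli but \emph{not} transversal $H$ or $S$, and the whole point of SSD is to manufacture the resource states that would let us do those gates. So you cannot assume access to $V^\dagger$ at the outset. Relatedly, your syndrome-extraction step consumes clean $\ket{0}$ ancillas; the formal version of the lemma explicitly forbids any inputs beyond the $N$ noisy copies, for the same bootstrapping reason.

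Your reduction ``rebase to $\ket{0^k}$, then treat as classical bit-flip noise'' is elegant but collapses precisely because the rebasing is unavailable. The paper circumvents this by using a self-dual CSS code (the quantum Hamming code) as the distillation code: one measures both the $X^H$ and $Z^H$ checks, and then for each stabilizer $s = \pm i^a Z^z X^x$ of $\ket{\psi}$ the eigenvalue of the transversal operator $s^h$ is \emph{inferred classically} as a $\pm$ sign from the already-measured outcomes (using that $|h|$ is even so the phase is real). No basis change is needed, only CNOT, $X/Z$ measurements, and classically controlled Paulis—and no ancillas, because the syndrome extraction is done by applying the encoding unitary (which is CNOT-only for these codes) in reverse and destructively measuring. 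The multi-level structure with growing Hamming-code blocklengths then gives constant rate and $2^l$-fold error suppression after $l$ levels, yielding the $\polyloglog(1/\varepsilon)$ depth.
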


\begin{theorem}[Magic state distillation, informal] Let $\ket{\CCZ}=\CCZ \ket{+}^{\otimes 3}$. There exists a distillation protocol (using noiseless Clifford operations and measurements) $\mathsf{MSD}(\varepsilon)$ and a constant noise threshold $\varepsilon_\mathrm{MSD}$ such that the following holds. Upon receiving $N$ (independent) noisy states  whose infidelity with $\ket{\CCZ}$ is $\varepsilon_\mathrm{in} < \varepsilon_\mathrm{MSD}$, where $N >N(\varepsilon_\mathrm{MSD})$ is a sufficiently large number, $\mathsf{MSD}(\varepsilon)$ produces $\Theta(N/\log^{o(1)}(1/\varepsilon))$ states each of which has infidelity at most $\varepsilon$ with $\ket{\CCZ}$. Furthermore, the quantum depth and classical depth of $\mathsf{MSD}(\varepsilon)$ are both $\polyloglog(1/\varepsilon)$. Here, the $o(1)$ exponent scales as $O(1/\logloglog(1/\varepsilon))$.
\end{theorem}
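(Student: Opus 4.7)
The plan is to iterate a single round of magic state distillation based on CSS codes admitting transversal $\CCZ$, instantiated using Reed-Solomon-based codes, over $T = \Theta(\logloglog(1/\varepsilon))$ rounds with block sizes scheduled so that the total yield loss is $\log^{o(1)}(1/\varepsilon)$ and the total depth is $\polyloglog(1/\varepsilon)$.

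First I would set up the single-round primitive. Fix a CSS code $[[n, k, d]]$ whose $Z$-logicals form a triorthogonal subspace, i.e., whose coordinate-wise (Schur) cube sits in $C_X^\perp$; such a code supports transversal $\CCZ$ on the encoded qubits. One round of distillation then encodes $n$ noisy copies of $\ket{\CCZ}$ into the code, measures the $X$-stabilizers, \emph{decodes} errors of weight up to $\lfloor (d-1)/2 \rfloor$ (as opposed to merely post-selecting on trivial syndrome, so that the acceptance probability stays bounded below $1$ even at high rate), and unencodes. A standard tail bound on locally-stochastic errors gives per-output infidelity $\varepsilon_{\mathrm{out}} \le c_1 \binom{n}{\lceil d/2\rceil}\, \varepsilon_{\mathrm{in}}^{\lceil d/2 \rceil}$ and yield $k/n$.

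Next I would exhibit the Reed-Solomon instantiation. Over $\mathbb{F}_{2^m}$, let $C_Z$ be a Reed-Solomon code of degree $< r$, so its triple Schur product is a Reed-Solomon code of degree $< 3r - 2$; choosing $r$ with $3r \le n - r$ makes this product orthogonal to $C_Z$, yielding triorthogonality after a trace/bit-expansion into a binary CSS code. This gives a family $[[n, k, d]]$ with $k/n \to 1$ and $d = \Theta(n - r)$, with both encoding and decoding executable by parallel FFT / Berlekamp--Massey-style circuits of depth $\polylog(n)$. Concatenating rounds $t = 1, \ldots, T$ with parameters $(n_t, k_t, d_t)$, the infidelity evolves as $\varepsilon_{t+1} \le c_1\, \varepsilon_t^{\lceil d_t/2 \rceil}$ and the rate loss per round is $1 - k_t/n_t = O(d_t/n_t)$. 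The schedule is chosen so that $T = \Theta(\logloglog(1/\varepsilon))$ rounds reach $\varepsilon_T \le \varepsilon$ from $\varepsilon_0 < \varepsilon_{\mathrm{MSD}}$ while $\sum_t d_t/n_t = o(1)$; the total overhead telescopes to $\prod_t n_t/k_t = \exp\!\bigl(\sum_t O(d_t/n_t)\bigr) = \log^{o(1)}(1/\varepsilon)$ with $o(1)$ exponent of order $1/\logloglog(1/\varepsilon)$, and the total depth is $T \cdot \polylog(n_T) = \polyloglog(1/\varepsilon)$.

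The main obstacle will be reconciling the per-round algebraic and probabilistic constraints simultaneously across all $T$ levels: on the algebraic side, that Reed-Solomon codes after bit-expansion genuinely admit the \emph{exact} transversal-$\CCZ$ structure at the prescribed rate and distance (sensitive to the evaluation set, degree budget, and any puncturing/shortening used to tune parameters); and on the probabilistic side, that the error model entering round $t+1$ remains locally stochastic at rate $\varepsilon_{t+1}$ even after syndrome extraction, decoding, unencoding, and conditioning on successful decoding, so that the inductive suppression $\varepsilon_t \to \varepsilon_t^{\lceil d_t/2 \rceil}$ closes with constants that do not deteriorate as $T \to \infty$. The second point requires a careful conditional tail bound --- standard in the MSD literature, but needs explicit constants in order for the final schedule to produce the claimed $1/\logloglog(1/\varepsilon)$ exponent rather than a fixed power of $\log(1/\varepsilon)$.
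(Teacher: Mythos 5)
Your plan shares the paper's broad strategy---iterated distillation using Reed--Solomon-based CSS codes that support transversal $\CCZ$, decoding rather than post-selecting, and a schedule of about $\logloglog(1/\varepsilon)$ rounds---but the accounting that gets you to $\log^{o(1)}(1/\varepsilon)$ overhead contains a fatal error. The triorthogonality constraint you impose, $3r \le n - r$ (i.e., $r \le n/4$), caps the rate of any such Reed--Solomon code at $1/4$, so $k_t/n_t$ cannot approach $1$ and the rate loss per round is $\Omega(1)$, not $O(d_t/n_t)$. The telescoping $\prod_t n_t/k_t = \exp\bigl(\sum_t O(d_t/n_t)\bigr)$ therefore does not apply. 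Even setting that aside: if you tried to force $\sum_t d_t/n_t = o(1)$ over $T = \Theta(\logloglog(1/\varepsilon))$ rounds you would need $d_t/n_t = o(1/T)$, but the per-round noise threshold scales as $\binom{n_t}{\lceil d_t/2 \rceil}^{-2/d_t} = \Theta(d_t/n_t)$, so the threshold would have to shrink to $o(1/\logloglog(1/\varepsilon))$. Since the input infidelity $\varepsilon_0 < \varepsilon_{\mathrm{MSD}}$ is a \emph{constant}, the very first round already violates its threshold and the iteration never gets off the ground.

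The paper sidesteps both issues with the extension-field construction of \cref{thm:pqrs-good-code}: the punctured quantum Reed--Solomon codes live over $\Fq$-qudits with $q = 2^l$ and have a fixed rate of $1/3$ and fixed relative distance, independent of $l$. Each level consumes input $\ket{\CCZ}$ states via gate teleportation to enact $\CCZ^{(q)}$ transversally, at a cost of $l^3$ qubit $\CCZ$'s per qudit gate---a multiplicative factor your per-round yield accounting is missing, and one that only becomes visible when you actually carry out the trace/bit-expansion you gesture at. The $\log^{o(1)}(1/\varepsilon)$ yield in \cref{lemma:magic-distillation-procedure} then comes not from per-round rate $\to 1$ but from the suppression per level growing factorially (level $i$ uses a $t_i \gtrsim i$-error-correcting code), so only $l = \Theta\bigl(\loglog(1/\varepsilon)/\logloglog(1/\varepsilon)\bigr)$ levels are needed and the accumulated constant-factor loss $c^{-l}$ is $\log(1/\varepsilon)^{O(1/\logloglog(1/\varepsilon))}$. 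To repair your proposal you would need to abandon the rate-$\to 1$ schedule in favor of a constant-rate one with a superexponentially growing suppression schedule, introduce the extension-field trick to hold the rate constant as the block length grows, and verify that the bit-expansion preserves the transversal $\CCZ$ structure while charging the correct $l^3$ overhead to the per-round yield.
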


These results are proved in~\Cref{sec:state-distill-procedure-single} and~\Cref{sec:magic-state-distillation-code}, respectively. The `batch sizes' $N_\mathrm{SSD}$ and $N_\mathrm{MSD}$ are roughly $\operatorname{quasipolylog}(1/\varepsilon)$. For the first result, we are not aware of any prior results on stabilizer state distillations of this type. We suspect that this is because previous works often consider quantum codes with transversal Cliffords and hence do not run into this question. In our case, and generally on the recently developed qLDPC code families, we often have a limited set of transversal gates and thus require other fault-tolerant Clifford gate techniques such as lattice surgery~\cite{cohen2022low}. Here, we opt to gate teleportation and thus need to derive the stated result. In fact, we will also use this distillation procedure to distill computational basis states. The magic state distillation (MSD) question, on ther hand, has been more intensively studied, starting from~\cite{bravyi2005universal, knill2005scalable}. Our result MSD overhead result gives a yield rate of $\log^{o(1)}(1/\varepsilon)$, improving over a prior work~\cite{hastings2018distillation} that achieved an exponent of $\approx 0.68$. The time overhead (including both quantum and classical) is only $\polyloglog(1/\varepsilon)$, which is crucial to obtain our main result but often disregarded in the MSD literature. The slightly non-constant MSD space overhead is not an issue, as appropriate gate scheduling can turn this into a time overhead. We note that concurrent and independent works~\cite{wills2024constant, nguyen2024good, golowich2024asymptotically} have established that constant-space-overhead MSD is possible (see the explicit scheme in~\cite{wills2024constant}), however we believe the time overhead of these protocols is at least $\polylog(1/\varepsilon)$, arising from the unencoding circuit of the large distillation code. Hence, our protocol gives a better \emph{spacetime} overhead with an arguably simpler construction, which is necessary for the main result.

Applying the above distillation schemes on the logical level of a high-rate qLDPC code is not necessarily straightforward, as we need to address specific logical qubits in a large qLDPC code block with multiple logical qubits. We show how to use the above results to distill important resource states for logical operations on high-rate qLDPC codes in~\Cref{subsec:boost-fidel-state-prep}. Here, the key observation is that the noiseless operations required in the SSD scheme can be implemented transversally on the computation qLDPC CSS code. Once we have distilled qLDPC logical stabilizer states, we can then use them to run the MSD scheme as well. To our knowledge this is the first time this is explicitly worked out, with a rigorous fault tolerance proof.

\paragraph{Techniques.} The SSD protocol is inspired by the concatenated-code scheme of~\cite{yamasaki2024time}. We distill the stabilizer states in multiple levels using quantum Hamming codes $[[2^l-1, 2^l-l-1,3]]$, which have transversal Clifford gates. As its rate quickly approaches 1, the multi-level distillation protocol gives rise to a constant-space overhead. The details of this protocol are in~\Cref{sec:state-distill-procedure-single}. On the other hand, the MSD protocol is based on a new family of good qudit codes supporting transversal CCZ gate, that we describe in~\Cref{sec:magic-state-distillation-code}. There, we give a simple and self-contained construction using (punctured) Reed-Solomon codes over binary extension fields using ideas from~\cite{krishna2019towards}. The claimed time overhead comes from efficient decoders for these codes and that the code block sizes are at most $O(\loglog(1/\varepsilon))$. Let us give an intuition for why distillation improves the time overhead over simply using the concatenated-code FT scheme~\cite{aharonov1999fault}: Multi-level distillation is in a sense also using concatenation, but we unencode the code before going to the next level, whereas concatenated-code FT never unencodes. Note that we are running these protocols on top of a qLDPC code, and `unencode' here means unencoding from the distillation code, not the computational qLDPC code, so we are still protected by the qLDPC code.
Therefore, the time overhead increases additively in distillation in terms of number of levels, rather than multiplicatively as in concatenated-code FT schemes.

\subsubsection{Addressable logic on high-rate codes}
The next question is what resource states to use. This question arises because the gate connectivity in the simulated circuit $C$ can be arbitrary. For example, a logical CZ gate between qubit $1$ of a code block and qubit $7$ of another code block needs a different resource state than CZ between qubits $5$ and $32$. A Hadamard gate on qubit $1$ needs a different resource state from Hadamard on logical qubit 2. We say these gates are of different types. The distillation protocols, as we show, still work to distill any of them. However, the issue is that there are too many such elementary resource state types. For example, there are $O(K'^2)=\polylog(W)$ possible CZs between two distinct code blocks. On the other hand, distillation protocols produce resource states of the same type \emph{in batches} of size roughly $N_\mathrm{SSD}, N_\mathrm{MSD}= \operatorname{quasipolylog}(1/\varepsilon)$, as seen above. Each layer in a circuit $C$ may adversarially consists of gates of all types. Hence, naively, we either have to ruin the constant-space overhead to run distillation for all of these resource states in parallel (many of them will be left ununsed) or serialize the gates in $C$ so that each layer contains gates of a single type. Either way we would incur a factor of at least $\polylog(W)$ in the overhead.

A solution to this is to design a small set of resource states that are capable of performing addressable logic. Here we do so with a set of $O(\log K') =O(\loglog(W))$ resource states. The idea is as follows: We will only consider resource states for logical single-qubit gates (Hadamard and S) that act on the `first' logical qubit in a code block. For the two-qubit gates (CNOT and CZ), we only care about resource states for the gates that act on qubits 1 and 2 of the same block. Similarly, we only have resource states for the CCZ gate that acts on qubits 1, 2, 3 of the same code block. So far this set has $O(1)$ resource states. To turn them into addressable gates, we use $O(\log K')=O(\loglog W)$ special SWAP/permutation multi-qubit gates that allows us to perform low-depth arbitrary qubit permutation. Importantly, we show that the resource states for these multi-qubit states can also be distilled by running the stabilizer distillation protocol on top of the computation qLDPC code. Hence, in total we only use a set of $O(\loglog W)$ `primitive computation states'. Using this primitive gateset, we can then perform a logical gate on any desired locations while only incurring a $O(\log(K'))=O(\loglog W)$ time overhead. The details of this are given in~\Cref{subsec:compilation}.

\subsubsection{Quantum locally testable codes} We now move on to a subtlety that was intentionally omitted in the above discussion. We have claimed that distillation protocols can be applied on the output of low-fidelity resource states to boost the fidelity. However, these protocols are being performed on top of qLDPC codestates, and their performance analysis only applies when the systems are guaranteed to be in the codespace. This is not necessarily the case for noisy codestates prepared by concatenated-code FT. Let $\ket{\psi}$ be the target resource state. Informally, what concatenated-code FT actually says is: With probability $\geq 1-\varepsilon_0$, it outputs a state $\psi'$ that has the form $\psi'=\mcN(\ket{\psi})$, where $\mcN$ is a local stochastic noise channel with parameter $O(p)$. However, in the remaining case, the output state of concatenated-code FT could be arbitrary. It could either be (1) a codestate with some logical error or (2) an arbitrary non-code state. We would like to filter out noisy states of the latter case before running distillation protocols as quick as possible. One idea is to measure the qLDPC code checks and declare `FAIL' if too many checks are violated. However, in general, there are non-code states that are very far from the codespace but only violates a few checks \footnote{This is a well-known issue with toric code. A common solution to this is to perform multiple rounds (roughly as many as the code distance) of syndrome measurements. This incurs a time overhead and will not suffice for our main result.}. This challenge is exactly addressed by local testable codes and is a motivation of classical local testability, as quoted from Spielman's PhD thesis~\cite{spielman1995computationally}: ``The checker can instantly request a retransmission of that block, before the decoder has wasted its time trying to decode the message.'' One can also define quantum locally testable codes (qLTC)~\cite{aharonov2015quantum, eldar2017local}, which are, by definition, qLDPC codes. Roughly speaking, a qLTC with soundness $\rho<1$ ensures that, if $m$ checks are violated, then the state deviates from the codespace on at most $m/\rho$ qubits. Hence, using a constant-soundness qLTC resolves this issue without incurring any extra time overhead, see~\Cref{subsec:low-fidel-state-prep} for more details. In summary, the discussion up to this point \emph{conditionally} proves our main result. Specifically, 

\begin{quote} Suppose that there exists a qLTC family\footnote{There is also a technical requirement that the family is sufficiently `dense', meaning that the code block size does not grow too fast with respect to the family index.} with constant rate, constant relative distance, and constant soundness (called c3-qLTC). Furthermore, suppose that the qLTC family admits an efficient parallel single-shot decoder. Then quantum fault tolerance can be achieved with the overheads stated in~\Cref{thm:main}.
\end{quote}

To our knowledge this work is the first time local testability is used in a fault tolerance protocol, either classical or quantum. The only previous work that used a related notion of locality in codes is~\cite{romashchenko2006reliable}. There, the author constructed a classical fault tolerance protocol using local decodable codes, a notion much stronger than local testability. Such notion, however, is not possible for quantum codes~\cite{aharonov2013guest}.

\subsubsection{Single-shot decoders for cubical complex quantum codes} Unfortunately the c3-qLTC conjecture is still an open question. However, a recent breakthrough~\cite{dinur2024expansion} has nearly approached this goal, providing a family of almost-c3 qLTC. In particular, Dinur, Lin, and Vidick~\cite{dinur2024expansion} generalized the construction of good qLDPC codes from~\cite{dinur2022good} to high-dimensional cubical complexes. In combination with Panteleev and Kalachev's new result~\cite{kalachev2025maximally} on high-dimensional product-expansion of random codes, this gives a qLTC family of constant rate, inverse-polylog relative distance, and inverse-polylog soundness. We show that a suitable instantiation of this construction suffices to obtain our main result. To this end, we establish sequential and parallel single-shot decoders for general quantum codes built on the cubical complex construction:

\begin{theorem}[Single-shot decoder, informal]\label{thm:decoder-informal}
Let $t \geq 2$. Consider an $[[n,k,d]]$ quantum LDPC code built from a $t$-dimensional cubical complex using the local sheaf construction such as in~\cite{dinur2022good,dinur2024expansion}. There exists a linear-time decoder $\mcD_\mathrm{seq}$, and constant $A,B, C>0$ such that the following holds. If the data errors $e$ and syndrome measurement errors $m$ have weights satisfying $A|e|_R+B|m| < d$ (where $|\cdot|_R$ denotes stabilizer-reduced weight), then $\mcD_\mathrm{seq}$ proposes a correction such that the residual error after applying the correction is upper bounded by $C\cdot|m|$. A similar statement holds for a $O(\log n)$-time parallel decoder $\mcD_\mathrm{par}$. In fact, for any positive number $\tau$, running $\mcD_\mathrm{par}$ in $O(\tau)$ time guarantees that the residual error is suppressed to $\frac{1}{2^{-\Theta(\tau)}} |e|_R + C|m|$.
\end{theorem}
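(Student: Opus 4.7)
The plan is to adapt the small-set-flip style sequential decoder, pioneered for quantum expander codes by Leverrier-Tillich-Zémor and extended to good qLDPC codes by Gu-Pattison-Tang, to the cubical complex setting, and then parallelize it. The key structural input is the chain complex arising from the local sheaf construction on a $t$-dimensional cube complex. Qubits live on middle-dimensional cells, $X$- and $Z$-checks live on adjacent cells, and the relevant \emph{small-set coboundary expansion} (or $c/z$-systolic-like robustness) is inherited from (i) the robustness of the local code at each cube and (ii) the higher-dimensional product-expansion established by Dinur-Lin-Vidick and Kalachev-Panteleev. The content of that expansion statement is: for any error $e$ of stabilizer-reduced weight below some constant fraction of $d$, there exists a small local region (a single cube or a constant-radius neighborhood) on which flipping some qubits strictly decreases the observed syndrome weight by a constant fraction of the weight of $e$ restricted there.

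Given this, I would define $\mathcal{D}_{\mathrm{seq}}$ as follows: maintain a list of candidate cubes whose local view shows an improving flip; at each step, pop one and apply the flip; update neighboring cubes in the list. Linear runtime follows from bounded local geometry of the cube complex (each flip touches only $O(1)$ cubes). Termination is guaranteed since the observed syndrome weight strictly decreases and is bounded above. I would then analyze the terminal state: let $c$ be the accumulated correction and let $\tilde{e} = e + c$ be the residual data error. The terminal observed syndrome is at most $|m|$ (since no further improvement exists, any remaining syndrome must be "explained" by syndrome noise rather than reducible data error). Invoking small-set coboundary expansion at termination, the residual true syndrome is bounded by $O(|m|)$, and hence $|\tilde{e}|_R \le C|m|$, provided the initial hypothesis $A|e|_R + B|m| < d$ holds so that the decoder never leaves the expansion regime. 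The constants $A,B,C$ are read off from the expansion constant and the locality of the sheaf.

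For $\mathcal{D}_{\mathrm{par}}$ I would choose in each round a maximal set of \emph{non-conflicting} improving cubes, i.e.\ cubes whose constant-radius neighborhoods are disjoint; this can be done locally in $O(1)$ depth by a standard symmetry-breaking argument on the bounded-degree cube adjacency graph. Because a constant fraction of the currently improvable cubes can be selected in parallel, and each selected flip reduces the reducible part of the error by a constant factor, the reducible energy contracts geometrically: after $\tau$ parallel rounds it is bounded by $2^{-\Theta(\tau)} |e|_R$, while the "floor" $C|m|$ contributed by syndrome noise is reached after $\tau = \Theta(\log n)$ rounds. This gives precisely the stated bound $2^{-\Theta(\tau)}|e|_R + C|m|$ after $\tau$ parallel time, and the $O(\log n)$-depth claim upon running to convergence.

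The principal obstacle is proving that the small-set coboundary expansion needed to drive the decoder survives two simultaneous complications: the multi-level nature of the $t$-dimensional cubical chain complex (so that local improvements on $X$-checks do not create uncontrollable $Z$-syndrome artifacts, and vice versa), and the syndrome noise $m$ which shifts the "ground state" of the potential function. I expect this to require carefully tracking two potentials (reduced data error and observed syndrome) and invoking the DLV/Kalachev-Panteleev robustness at \emph{every} cell-dimension level, combined with the stabilizer-reduced weight $|\cdot|_R$ to quotient out the freedom in how to realize $e$ modulo the stabilizer group. The parallel analysis additionally needs that the non-conflict selection does not destroy the contraction constant, which should follow from the constant local degree of the cube complex but must be checked.
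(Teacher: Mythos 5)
Your plan for the coboundary-direction decoder (the paper's ``Z-syndrome decoder'') is essentially the right one: small-set flips in the upward local view of a vertex, driven by the two-way robustness of the local codes, termination via strict decrease of observed syndrome weight, and a small-set co-locally-minimal expansion lemma to conclude that the terminal residual error is bounded by $O(|m|)$. The contraction argument for the parallel variant is also in the right spirit, though the hard part — showing that the vertices selected in one parallel round overlap sufficiently with an implicit sequential-flip decomposition to shave a constant factor off the syndrome — is glossed over and does require a genuine lemma (the paper's Lemma about parallel-round progress) rather than just a symmetry-breaking observation.

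The serious gap is that you treat the two syndrome directions symmetrically. The cubical-complex sheaf construction is \emph{not} self-dual: the local maps $h_i$ enter $\delta^k$ and $\partial_k$ asymmetrically, and only the coboundary direction admits a direct small-set-flip decoder. The boundary direction (the paper's ``X-syndrome decoder'') has no analogous ``existence of improving local flip'' statement, and the paper does not attempt to prove one. Instead it uses a reduction that is substantially different from anything in your proposal: (1) each vertex locally computes a minimum-weight guess consistent with its local syndrome view; (2) these local guesses are generally mutually inconsistent, and the inconsistency is traced upward through a tower of auxiliary ``upward-view'' chain complexes $\bs C_{k+i}(X,\mcF_{k+i})$ via the maps $\bs\Delta^i_{k+i}$ and $\bs\partial_L$ (the ``syndrome explanation sequence''); (3) at the top level $t$ the inconsistency is re-expressed as a Z-syndrome-decoding problem on a \emph{dual} sheaf complex built from $\{h_i^\perp\}$ instead of $\{h_i\}$, which is handled by the coboundary decoder already in hand; (4) the correction is then propagated back down the tower. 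The single-shot and parallel guarantees for the boundary direction are inherited from the coboundary decoder through this reduction, with error amplification controlled via the local structure at every level. Without this machinery, your framework has no way to decode one of the two syndrome types, so the statement as claimed is not proved. Your own flagged worry — that improvements on one check type might create artifacts on the other — is in fact not the issue (the two syndrome types decode independently in a CSS code); the issue is that one direction simply has no small-set flip rule at all.
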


We note that the above has been heavily simplified, see~\Cref{sec:qltc} for precise statements. This result is a bounded-distance decoder, i.e., it applies to all (adversarial) errors up to certain weight. It applies to the DHLV code~\cite{dinur2022good} and also the almost-c3 qLTC in~\cite{dinur2024expansion}. In other words, the good qLDPC code family of~\cite{dinur2022good} admits a single-shot decoder up to linear-weight adversarial errors, answering an open question raised by Gu et al.~\cite{gu2023single}. For the almost-c3 qLTC, this result holds for adversarial errors of weight up to $n/\polylog(n)$.

\paragraph{Techniques.} The decoder is a generalization of the small-set flip decoder in~\cite{dinur2022good} to higher-dimensional cubical complexes (\cite{dinur2022good} was on a square complex, i.e., 2-dimensional cubical complex). The main technical challenge is to come up with appropriate high-dimensional generalizations of the procedures and proof techniques there. Furthermore, we show that the decoder is single-shot and parallelizable. The Z-syndrome decoder is shown in~\Cref{sec:Z-decoder}. The sequential version goes roughly as follows: while the syndrome is still nonzero, find a small-set flip in the local view of some vertex that reduces the syndrome weight. In the case when the syndrome is noiseless, we show that if no such flip is found, then the remaining error must be either zero or very large. On the other hand, we show that the chain complex is \emph{small-set co-boundary expanding}, which roughly says that the syndrome weight provides an upper bound on the error weight as long as the error weight is sufficiently small\footnote{The small weight condition makes this property weaker than local testability, but it is all we need for the decoders.}. And since the syndrome weight only reduces throughout the algorithm by design, the error weight cannot become large. Thus when the algorithm terminates, there is no remaining error. In the noisy syndrome case, a similar proof strategy also works by noting that coboundary expansion-based proof is robust to measurement errors. Naturally, the parallel version of the decoder attempts to perform local flips in parallel. We show that, in each parallel decoding round, such flips have large overlap with the underlying error, hence reducing the syndome weight by a multiplicative constant factor per round.

The X-syndrome decoder is not the same due to the asymmetry in the code construction. Similar to~\cite{dinur2022good}, it is obtained via a reduction to a Z-syndrome decoding on a \emph{related} code. This reduction starts by having each vertex locally make a `guess' about what it thinks the error should look like within its local view. The goal is now to fix the inconsistencies between these local guesses to obtain a global guess. The second step of the reduction involves an argument following the ideas in the code distance proof of~\cite{dinur2024expansion}. In particular, for each edge, the opinions of the vertices in that edge are compared, and hence this inconsistency-fixing problem can be moved one level up in the cubical complex. This procedure is repeated until the top level of the cubical complex is reached. Then, it turns out that we can map this inconsistency-fixing problem into a Z-syndrome decoding problem on a related code. So we can invoke the Z-syndrome decoder for that code to obtain a fix, and then propagate the fix back down. Since the above inconsistency-propagation procedure can be done locally, it turns out that the X-syndrome decoder inherits the single-shotness and parallelizability from the Z-syndrome decoder. We refer to~\Cref{sec:X-decoder} for more details.

\subsubsection{Combining everything together}\label{subsec:combining}

Intuitively, combining what we have described so far, including the almost-c3 qLTC with single-shot decoder, we can obtain an FT procotol with overheads as stated in~\Cref{thm:main} but with a sub-constant threshold of $1/\polylog(N')=1/\polyloglog(W)$. To achieve a constant threshold, the final step is to concatenate the described protocol with the Yamasaki-Koashi protocol~\cite{yamasaki2024time}. Their protocol incurs constant-space overhead and a time overhead of $\exp((\loglog 1/\eta)^2)$ to fault-tolerantly simulate a constant-sized operation to logical error rate $\eta$. Hence, setting $\eta=1/\polyloglog(W)$ we only incur an extra factor of $\exp((\operatorname{loglogloglog}W)^2)$ in the time overhead.

This concludes our description of the high-level ideas and new technical results leading to~\Cref{thm:main}. Making the above intuitive descriptions rigorous, however, is significantly involved. To do so, we introduce a new framework to prove (classical or quantum) fault tolerance that we call the \emph{weight enumerator formalism} in~\Cref{sec:weight-enum-formalism}. It can be viewed as a generalization of the technique of Aharonov and Ben-Or~\cite{aharonov1999fault} (dating back to~\cite{GACS1986}) of counting `bad' faulty locations. The motivation of this new framework is the need to combine many different types of fault-tolerant gadgets in non-trivial ways to construct new fault-tolerant
gadgets. In contrast to traditional concatenated-code FT schemes where the same gadgets are recursively used, the large variety of currently existing fault-tolerant gadgets and ways to combine them cause this counting to become unwieldy. Here, we use a framework inspired by the weight enumerator polynomial methods from coding theory. The key observation is that we can associate polynomials to the family of bad fault sets and work with the polynomials similar to probabilities. Multiplication and addition of these polynomials corresponds nicely to operational meaning. We find this framework applicable generally and expect it will be useful in proving fault tolerance in other contexts.

\paragraph{Interpolating space and time overheads.} We briefly describe without proofs how the tradeoff between space and time overheads can be smoothly tuned while keeping the overall spacetime overhead $\widetilde{O}(\log W)$ (the auxiliary classical time per quantum time step is still the same as in~\Cref{thm:main}). The idea is to not use all $K'=\Theta(N')$ logical qubits in a code block, but instead use only $k'$ of them and simply ignore the other logical qubits. This reduces the main slowdown factor $O(K')=O(\log W)$ due to gate scheduling (see~\Cref{subsubsec:revisiting}) to $O(k')$, and hence improves the time overhead to $O(k' \log^{o(1)} W)$, while increasing the space overhead to $O(K'/k')$. Choosing $k'=1$ gives $O(\log W)$-space overhead and $O(\log^{o(1)} W)$-time overhead. To obtain $O(\log^{1+o(1)} W)$-space and $O(1)$-time overheads, we simply shift the ancilla state factories overhead completely to space by preparing ancilla states off-line, and then using the single-shot decoder to preserve them until use. In addition, instead of concatenating with the FT protocol of~\cite{yamasaki2024time} in the step of improving the threshold from $\eta=1/\polyloglog(W)$ to constant, we use the constant-time overhead schemes described in~\cite[Section 8]{gottesman2013fault} or~\cite{bombin2015single}, which only incurs an extra space overhead factor of $\polylog(1/\eta)=\polylogloglog(W)$.

\subsection{Related works}
\paragraph{Classical fault tolerance.} The study of fault-tolerant (FT) computation was started in von Neumann's lectures~\cite{von1956probabilistic}, where he gave a FT scheme for classical computation with log-space and constant-time overhead. His scheme was later made rigorous and explicit by~\cite{dobrushin1977upper, pippenger1985networks}. It was shown by~\cite{pippenger1991lower, gacs1994lower} a matching lower bound on the FT overhead. These works settle the FT spacetime overhead required for classical computation.

\paragraph{Quantum fault tolerance.} The first fault-tolerant quantum computation (FTQC) threshold theorems were shown independently by~\cite{aharonov1999fault, kitaev1997quantum, knill1998resilient}, building on~\cite{shor1996fault}. These works established that FTQC is possible with polylog space and time overheads. Later works such as~\cite{aliferis2005quantum, gottesman2009introduction} devised new proof methods for the concatenated-code FT scheme and improved the rigorously proved threshold value. Motivated by practical considerations, a standard model of FTQC often considers a noisy quantum computer with auxilliary polynomial-time and noiseless classical computation (in this work we only assume polyloglog-time). In this model, Gottesman~\cite{gottesman2013fault} initiated the study of constant-space-overhead FTQC.~\cite{fawzi2020constant} showed that constant-space and polynomial-time overhead is possible. 
Very recent work~\cite{christandl2024fault} studies FTQC with quantum output which is required to prepare resource states of \cite{gottesman2013fault} and \cite{fawzi2020constant}.
In 2022, \cite{yamasaki2024time} achieved a quasipolylog-time overhead which is the state-of-the-art before initiating our work.
Concurrent to our work, we were made aware of a recent independent work initially appearing at the AQIS'24 conference~\cite{tamiya2024aqis}, which gives a proof of FTQC with constant-space and \(\polylog(W)\)-time overheads using substantially different techniques from ours. We note that~\cite{gottesman2013fault, fawzi2020constant, yamasaki2024time} assume the locally-stochastic Pauli noise model, which is weaker than our noise model (we allow faults to be arbitrary CPTP channels).

\paragraph{Good qLDPC code decoders.} Some families of good quantum LDPC codes~\cite{panteleev2022asymptotically, leverrier2022quantum} have been recently shown to admit a single-shot decoder~\cite{gu2023single}. Other provably efficient decoders for qLDPC codes include~\cite{fawzi2020constant,leverrier2023decoding, dinur2022good}. The recent breakthrough work~\cite{dinur2024expansion}, using classical codes constructed in \cite{kalachev2025maximally}, construct almost-good qLTC by generalizing the code from~\cite{dinur2022good}.

\paragraph{Magic states and non-Clifford gates.} The study of magic state distillation protocols are initiated in~\cite{bravyi2005universal, knill2005scalable}. Prior to this work, the state-of-the-art MSD (space) overheads are due to~\cite{hastings2018distillation} (qubit case) and~\cite{krishna2019towards} (qudit case). During preparation of this work, recently posted and independent works~\cite{wills2024constant, nguyen2024good, golowich2024asymptotically} have established that constant-space-overhead MSD by using algebraic-geometry codes. However, the time overhead was not studied in~\cite{nguyen2024good, golowich2024asymptotically}.~\cite{wills2024constant} describe an explicit protocol, although, naively, $\polylog(1/\varepsilon)$ time overhead is required due to the (un)encoding depth and syndrome measurements of the distillation code with large block size. This is insufficient to obtain our main result.
See also very recent works~\cite{zhu2023non, scruby2024quantum, golowich2024quantum, breuckmann2024cups} on quantum LDPC codes with a non-Clifford gate.

\subsection{Outlook}\label{subsec:outlook}
There are multiple immediate follow-up questions to our work, here we list a few:
\begin{itemize}
    \item Can we drop the assumption of noiseless classical computation? We believe that this requirement is not needed and can be removed with careful use of classical fault tolerance along with our constant-time parallel decoder. We leave details of this procedure and its overhead analysis to future work.
    
    \item Can we improve FTQC overhead lower bounds?~\cite{fawzi2022lower}  obtain a space lower bound that is linear in the ideal circuit space. We conjecture that our result in~\Cref{thm:main} is optimal up to sub-polylog factors. We suspect that the classical functions witnessing the $\Omega(\log W)$ classical FT overhead lower bound in~\cite{gacs1994lower} can be extended to the setting of FTQC without auxiliary classical computation.
    \item Toward the question of an overhead lower bound, we describe here an argument for why a time overhead of $O(\log W)$ is likely required for constant-space-overhead FTQC (even with auxiliary classical computation), suggesting that~\Cref{thm:main} is likely optimal up to sub-polylog factors. This comes from the potentially adversarial gate connectivity in the ideal circuit. Consider a FTQC scheme that uses constant-rate code blocks $[[n,k=\Theta(n),d]]$ to simulate an ideal circuit $C$ of $W$ qubits and, for simplicity, size $\poly(W)$. We need the code distance to be $d=\Theta(\log W)$ to maintain a $\poly(W)$-sized computation. Consider $k+1$ codeblocks $\mcB_0,\hdots,\mcB_k$. Suppose we can perform two-qubit gates on any pairs of the $k(k+1)$ logical qubits in $O(1)$ depth. Now, an adversarial layer in the simulated circuit $C$ may consist of two-qubit gates from each logical qubit $j \in [k]$ in $\mcB_0$ to a logical qubit in $\mcB_j$. The implementation of each gate has to touch $\geq d$ physical qubits in $\mcB_0$, thus implementing all $k=\Theta(n)$ of them needs a $\Omega(d)=\Omega(\log W)$ depth by pigeonhole principle. Choosing an initial specific grouping of logical qubits cannot avoid this issue because the connectivity in $C$ may adversarially vary across layers. Regrouping logical qubits mid-circuit with a sorting network would also incur a $\Omega(\log W)$ time factor.
    \item Our construction is rather complicated. Is there a simpler construction that achieves $\widetilde{O}(\log W)$ spacetime overhead? And can we remove the sub-polylog factors? Currently, the $(\log W)^{o(1)}$ factor in the time overhead mainly comes from our magic state distillation scheme. A better MSD scheme, probably ones that use quantum LDPC codes with transversal non-Clifford gates, might improve this factor to $\polyloglog(W)$. There are also other $\polyloglog(W)$ factors
\end{itemize}

\subsection{Organization of paper}
In~\Cref{sec:weight-enum-formalism} we first provide some preliminaries. We then describe our computation model and present the weight enumerator formalism and related notation used in \Cref{sec:main-proof,sec:state-prep}. In~\Cref{sec:main-proof} we present the proof of our main result, with technical lemmas about distillation protocols and the single-shot decoder deferred to later sections. In~\Cref{sec:state-prep} we construct gadgets to distill resource stabilizer and magic states using the code constructed in~\Cref{sec:magic-state-distillation-code}. Finally, we construct the single-shot decoder in~\Cref{sec:qltc}.
For convenience, we provide a diagram (below) of the high-level components of the proof with hyperlinks.

\subsection{Acknowledgements}
This work was done in part while the authors were visiting the Simons Institute for the Theory of Computing, supported by DOE QSA grant \#FP00010905 and NSF QLCI Grant No. 2016245. We thank Anurag Anshu, Nikolas Breuckmann, Louis Golowich, Ting-Chun David Lin, Pedro Paredes, John Preskill, Shiro Tamiya, Hayata Yamasaki, and Guanyu Zhu for useful discussions. We especially thank  Zhiyang (Sunny) He for insightful discussions about quantum fault tolerance proofs. QTN acknowledges support from the NSF Award No. 2238836 and support from the Harvard Quantum Initiative. CAP acknowledges funding from the Air Force Office of Scientific Research (AFOSR) FA9550-19-1-0360 and U.S. Department of Energy Office of Science, DE-SC0020290. The Institute for Quantum Information and Matter is an NSF Physics Frontiers Center.

\begin{figure}[H]
  \centering
  \begin{tikzpicture}[>=latex,line join=bevel,align=center,scale=0.375]
  \pgfsetlinewidth{1bp}
\small%
\begin{scope}
  \pgfsetstrokecolor{black}
  \definecolor{strokecol}{rgb}{0.0,0.0,0.0};
  \pgfsetstrokecolor{strokecol}
  \definecolor{fillcol}{rgb}{1.0,1.0,1.0};
  \pgfsetfillcolor{fillcol}
  \filldraw (05.80,43.77) -- (05.80,58.695) -- (20.60,58.695) -- (20.60,43.77) -- cycle;
  \draw (13.20,57.433) node {Computational QECC};
\end{scope}
\begin{scope}
  \pgfsetstrokecolor{black}
  \definecolor{strokecol}{rgb}{0.0,0.0,0.0};
  \pgfsetstrokecolor{strokecol}
  \definecolor{fillcol}{rgb}{1.0,1.0,1.0};
  \pgfsetfillcolor{fillcol}
  \filldraw (18.10,25.525) -- (18.10,41.89) -- (29.60,41.89) -- (29.60,25.525) -- cycle;
  \draw (24.388,40.627) node {Transversal Gates};
\end{scope}
\begin{scope}
  \pgfsetstrokecolor{black}
  \definecolor{strokecol}{rgb}{0.0,0.0,0.0};
  \pgfsetstrokecolor{strokecol}
  \definecolor{fillcol}{rgb}{1.0,1.0,1.0};
  \pgfsetfillcolor{fillcol}
  \filldraw (30.40,17.965) -- (30.40,42.97) -- (42.80,42.97) -- (42.80,17.965) -- cycle;
  \draw (37.312,41.707) node {Magic States};
\end{scope}
\begin{scope}
  \pgfsetstrokecolor{black}
  \definecolor{strokecol}{rgb}{0.0,0.0,0.0};
  \pgfsetstrokecolor{strokecol}
  \definecolor{fillcol}{rgb}{1.0,1.0,1.0};
  \pgfsetfillcolor{fillcol}
  \filldraw (21.4,43.77) -- (21.4,58.695) -- (41.10,58.695) -- (41.10,43.77) -- cycle;
  \draw (31.50,57.433) node {State Injection};
\end{scope}
\begin{scope}
  \pgfsetstrokecolor{black}
  \definecolor{strokecol}{rgb}{0.0,0.0,0.0};
  \pgfsetstrokecolor{strokecol}
  \definecolor{fillcol}{rgb}{1.0,1.0,1.0};
  \pgfsetfillcolor{fillcol}
  \filldraw (0.80,17.965) -- (0.80,41.89) -- (14.30,41.89) -- (14.30,17.965) -- cycle;
  \draw (7.55,40.627) node {Stabilizer States};
\end{scope}
\begin{scope}
  \pgfsetstrokecolor{black}
  \definecolor{strokecol}{rgb}{0.0,0.0,0.0};
  \pgfsetstrokecolor{strokecol}
  \definecolor{fillcol}{rgb}{1.0,1.0,1.0};
  \pgfsetfillcolor{fillcol}
  \filldraw (6.30,0.80) -- (6.30,17.165) -- (29.90,17.165) -- (29.90,0.80) -- cycle;
  \draw (11.412,15.902) node {Main Result};
\end{scope}
  \pgfsetcolor{black}
  \draw [->] (12.393,51.74) to[line to] (13.614,48.18);
  \draw [->] (24.10,34.916) to[line to] (23.80,29.972);
  \draw [->] (36.60,25.936) to[line to] (36.60,22.387);
  \draw [->] (36.60,33.855) to[line to] (36.60,30.329);
  \draw [->] (26.374,51.74) to[line to] (26.527,48.18);
  \draw [->] (33.3,51.951) to[line to] (28.539,47.914);
  \draw [->] (6.20,34.916) to[line to] (6.20,30.3);
  \draw [->] (6.803,26.05) to[line to] (8.1912,22.388);
  \draw [->] (17.231,11.68) to[line to] (19.029,11.68);
  \draw [->] (24.182,9.4705) to[line to] (24.318,5.967);
  \draw [->] (15.117,10.092) to[line to] (21.828,5.5865);
  \draw [->] (17.939,3.76) to[line to] (19.68,3.76);
  \draw [->] (15.2,44.6) to[out=285,in=140] (20.1,37.7);
  \draw [->] (10.192,51.862) to[bend right] (20.058,29.072);
  \draw [->] (16.869,53.57) to[line to] (22.653,53.57);
  \draw [->] (26.57,26.571) to[line to] (34.05,22.167);
  \draw [->] (21.258,26.658) to[line to] (11.653,22.037);
  \draw [->] (24.10,26.287) to[line to] (24.10,13.889);
  \draw [->] (34.336,18.892) to[line to] (26.692,13.581);
  \draw [->] (27.691,44.622) to[out=285,in=115] (34.128,22.198);
  \draw [->] (23.5,45.471) to[out=225,in=70,looseness=0.9] (9.8721,22.361);
  \draw [->] (13.524,20.565) to[line to] (31.192,20.565);
  \draw [->] (12.615,19.481) to[line to] (22.307,13.734);
\begin{scope}
  \definecolor{strokecol}{rgb}{0.0,0.0,0.0};
  \pgfsetstrokecolor{strokecol}
  \draw (14.20,46.37) ellipse (5.40 and 1.80);
  \draw (14.20,46.2) node {\hyperref[sec:qltc]{Single-Shot Decoder}};
\end{scope}
\begin{scope}
  \definecolor{strokecol}{rgb}{0.0,0.0,0.0};
  \pgfsetstrokecolor{strokecol}
  \draw (11.80,53.57) ellipse (5.04 and 1.80);
  \draw (11.80,53.57) node {almost-good qLTC \\ \cite{dinur2024expansion}};
\end{scope}
\begin{scope}
  \definecolor{strokecol}{rgb}{0.0,0.0,0.0};
  \pgfsetstrokecolor{strokecol}
  \draw (24.10,36.765) ellipse (4.68 and 1.80);
  \draw (24.10,36.765) node {\hyperref[lemma:ec-gadget]{qLDPC Error} \\ \hyperref[lemma:ec-gadget]{Correction Gadget}};
\end{scope}
\begin{scope}
  \definecolor{strokecol}{rgb}{0.0,0.0,0.0};
  \pgfsetstrokecolor{strokecol}
  \draw (24.10,28.125) ellipse (4.68 and 1.80);
  \draw (24.10,28.125) node {\hyperref[lemma:computational-code-transversal-gates]{Transversal Gate} \\ \hyperref[lemma:computational-code-transversal-gates]{Gadgets}};
\end{scope}
\begin{scope}
  \definecolor{strokecol}{rgb}{0.0,0.0,0.0};
  \pgfsetstrokecolor{strokecol}
  \draw (36.60,28.125) ellipse (4.68 and 2.16);
  \draw (36.60,28.125) node {\hyperref[sec:distill-ccz-cczq]{Qubit $|\mathsf{CCZ}\rangle $} \\  \hyperref[sec:distill-ccz-cczq]{Distillation Scheme} \\ \hyperref[sec:distill-ccz-cczq]{($\gamma \rightarrow 0$)}};
\end{scope}
\begin{scope}
  \definecolor{strokecol}{rgb}{0.0,0.0,0.0};
  \pgfsetstrokecolor{strokecol}
  \draw (36.60,20.565) ellipse (5.40 and 1.80);
  \draw (36.60,20.565) node {\hyperref[sec:ft-state-distillation]{$|\mathsf{CCZ}\rangle $} \\ \hyperref[sec:ft-state-distillation]{Preparation Gadget}};
\end{scope}
\begin{scope}
  \definecolor{strokecol}{rgb}{0.0,0.0,0.0};
  \pgfsetstrokecolor{strokecol}
  \draw (36.60,36.765) ellipse (5.40 and 2.88);
  \draw (36.60,36.765) node {\hyperref[sec:pqrs-code-thm]{Punctured Quantum} \\ \hyperref[sec:pqrs-code-thm]{Reed-Solomon Codes} \\ \hyperref[sec:pqrs-code-thm]{over $\mathbb{F}_2^\ell$ qudits}};
\end{scope}
\begin{scope}
  \definecolor{strokecol}{rgb}{0.0,0.0,0.0};
  \pgfsetstrokecolor{strokecol}
  \draw (26.30,53.57) ellipse (3.60 and 1.80);
  \draw (26.30,53.57) node {\hyperref[sec:tester]{qLTC Tester}};
\end{scope}
\begin{scope}
  \definecolor{strokecol}{rgb}{0.0,0.0,0.0};
  \pgfsetstrokecolor{strokecol}
  \draw (36.00,53.57) ellipse (4.32 and 2.0);
  \draw (36.00,53.5) node {\hyperref[fact:AB]{Concatenated} \\ \hyperref[fact:AB]{Code FT} \\ \cite{aharonov1999fault}};
\end{scope}
\begin{scope}
  \definecolor{strokecol}{rgb}{0.0,0.0,0.0};
  \pgfsetstrokecolor{strokecol}
  \draw (26.60,46.37) ellipse (3.60 and 1.80);
  \draw (26.60,46.2) node {\hyperref[sec:injection-gadget]{State Injection} \\ \hyperref[sec:injection-gadget]{Gadget}};
\end{scope}
\begin{scope}
  \definecolor{strokecol}{rgb}{0.0,0.0,0.0};
  \pgfsetstrokecolor{strokecol}
  \draw (6.20,36.765) ellipse (4.32 and 1.80);
  \draw (6.20,36.765) node {Quantum \\ Hamming Code};
\end{scope}
\begin{scope}
  \definecolor{strokecol}{rgb}{0.0,0.0,0.0};
  \pgfsetstrokecolor{strokecol}
  \draw (6.20,28.125) ellipse (4.32 and 2.1);
  \draw (6.20,28.125) node {\hyperref[sec:state-distill-procedure-proof]{Constant Rate} \\ \hyperref[sec:state-distill-procedure-proof]{Stabilizer State} \\ \hyperref[sec:state-distill-procedure-proof]{Distillation}};
\end{scope}
\begin{scope}
  \definecolor{strokecol}{rgb}{0.0,0.0,0.0};
  \pgfsetstrokecolor{strokecol}
  \draw (8.80,20.565) ellipse (4.68 and 1.80);
  \draw (8.80,20.565) node {\hyperref[sec:ft-state-distillation]{Stabilizer State} \\ \hyperref[sec:ft-state-distillation]{Preparation Gadget}};
\end{scope}
\begin{scope}
  \definecolor{strokecol}{rgb}{0.0,0.0,0.0};
  \pgfsetstrokecolor{strokecol}
  \draw (24.10,11.68) ellipse (5.04 and 2.16);
  \draw (24.10,11.8) node {\hyperref[subsec:main-thm-vanishing-threshold]{FT Scheme} \\ \hyperref[subsec:main-thm-vanishing-threshold]{\(\frac{1}{\polyloglog \frac{WD}{\epsilon_L}}\) threshold}};
\end{scope}
\begin{scope}
  \definecolor{strokecol}{rgb}{0.0,0.0,0.0};
  \pgfsetstrokecolor{strokecol}
  \draw (12.90,11.68) ellipse (4.32 and 1.80);
  \draw (12.90,11.68) node {\hyperref[lemma:compiling]{Compilation and} \\ \hyperref[lemma:compiling]{Serialization}};
\end{scope}
\begin{scope}
  \definecolor{strokecol}{rgb}{0.0,0.0,0.0};
  \pgfsetstrokecolor{strokecol}
  \draw (12.50,3.76) ellipse (5.40 and 2.16);
  \draw (12.50,3.76) node {\(O(1)\) Space \\ Concatatenated Code FT \\ \cite{yamasaki2024time}};
\end{scope}
\begin{scope}
  \definecolor{strokecol}{rgb}{0.0,0.0,0.0};
  \pgfsetstrokecolor{strokecol}
  \draw (24.40,3.76) ellipse (4.68 and 2.16);
  \draw (24.40,4.0) node {\hyperref[subsec:main-thm-constant-threshold]{Main Result} \\ \hyperref[subsec:main-thm-constant-threshold]{(const. threshold)}};
\end{scope}
\end{tikzpicture}
\end{figure}

\section{Model of computation and weight enumerator formalism}\label{sec:weight-enum-formalism}
Aharonov and Ben-Or \cite{aharonov1999fault} prove the fault tolerance of their scheme by first working with the set of faulty operations in an adversarial sense with no randomness.
However, the analysis is much more fine-grained than simply the weight of the set: The set of faulty operations is required to exclude certain ``bad'' sets.
This allows one to prove thresholds against many types of noise including stochastic noise.
In this section, we introduce a generalization of this technique inspired by the need to combine many different types of fault-tolerant gadgets in non-trivial ways to construct new fault-tolerant gadgets.
The key observation is that we can associate polynomials (over \(\mathbb{R}_+\)) to the family of bad sets and work with the polynomials similar to probabilities while remaining in the adversarial noise paradigm.
These polynomials are a special case of the weight enumerator polynomials from coding theory, so here we also refer to them as weight enumerators.

We begin by defining our model of computation formally.
This model runs a constant depth classical circuit after every layer in the quantum circuit.
Note that width refers to the number of qubits in the circuit.

\begin{definition}[Adaptive quantum circuit]
    An adaptive Clifford+$\CCZ$ quantum circuit of depth \(D\) and space \(W\) is described by a length-\(w_t\) list of quantum operations \((\mathsf{O_{t,1}}, \dots, \mathsf{O_{t,w_t}})\) for each timestep \(t \in [D]\) such that each quantum operation has disjoint support as well as a sequence of depth-$O(1)$ classical circuits \(C_0, \dots, C_D\).
    The valid quantum operations are one and two-qubit Clifford gates, $\CCZ$ gates, destructive Pauli basis measurements, and state initialization.

    Let \(t \in [D]\) be an arbitrary timestep.
    Let \(M_t = O(W)\) be the number of measurements in the quantum circuit, and let \(\vec{m}_t = (m_{1}, \dots, m_{M_t}) \subseteq \F^{M_t}\) be their result.
    The computation maintains a classical register of polynomial size \(W_C = O(\poly WD)\) with state \(\vec{x}_t \in \F^{W_C}\).
    Each classical circuit \(C_t\) has depth \(O(1)\) and access to both the classical register and measurement results \((\vec{x}_{t}, \vec{m}_{t})\).
  
    Let \(\mathsf{INPUT}\) be the input string to the computation.\footnote{For example, an encoding of a number to be factored.}
    At the start of the computation, the \(W\) qubits are initialized to \(\ket{0}\), and the classical register is initialized to the execution of \(C_0\) on the input string \(x_1 \gets C_0(\mathsf{INPUT})\).
    To execute timestep \(t \in [D]\), we first apply the quantum operations.
    The unitary quantum operations \(u_t \subseteq [w_t]\) are applied conditioned on the classical register \(\vec{x}_{t} = (x_{t,1}, x_{t,2}, \dots)\).
    That is, for each unitary operation \(i \in u_t\), we apply \(\mathsf{O_{t,i}}\) if and only if \(x_{t,i} = 1\).
    The non-unitary quantum operations are applied unconditionally with the measurements producing the measurement outcomes \(\vec{m}_t\).
    Finally, we update the classical register with the output of the classical circuit \(\vec{x}_{t+1} \gets C_t(\vec{x}_t, \vec{m}_{t})\) associated with timestep \(t\).
    The output of the computation is the first \(n\) bits of the classical register after executing all \(D\) timesteps in order.
\end{definition}

Next, we provide some preliminary definitions that will be utilized later.
\subsection{Preliminaries}\label{sec:prelim}
We will use \(\mathcal{H}_2\) to refer to the standard two-dimensional Hilbert space of a qubit.
For a Hilbert space \(\mathcal{H}\), we use \(\mathcal{D}(\mathcal{H})\) to refer to the set of density operators supported on \(\mathcal{H}\).

\begin{definition}[Quantum LDPC code]
  A quantum CSS code \(\qcode =CSS(H_X,H_Z)\) is said to be \(\Delta\)-qLDPC if \(H_X\) and \(H_Z\) have row and column weight at most \(\Delta\).
\end{definition}
Our quantum codes will implicitly carry an encoding map from a dim \(2^k\) logical Hilbert space to a dim \(2^k\) subspace.
This choice is almost arbitrary\footnote{As long as it takes computational basis states to computational basis states.} with the exception of the constant quantum depth encoding circuit for the computational code (\cref{lem:nonFT-state-prep}) and the magic state distillation code (\cref{lemma:prod-codewords}).
For this reason we will refer to the encoding map \(\phi_C\) of \(\qcode\) as \emph{the encoding map}.

\begin{definition}
  A classical code \(\mathcal{C}\subseteq \F^n\) defined by the check matrix \(H \in \F^{r \times n}\) is locally testable with soundness \(\rho\) and locality \(\Delta\) if \(H\) has row and column weight at most \(\Delta\) and for all \(x \in \F^{n}\),
  \begin{align}
    \frac{|H x|}{r} \ge \rho \frac{d(x, \mathcal{C})}{n}.
  \end{align}
\end{definition}

\begin{definition}[Quantum LTC~\cite{aharonov2015quantum, eldar2017local}]
  A quantum CSS code \(\qcode =CSS(H_X,H_Z)\) is said to be \((\rho,\Delta)\)-qLTC if \(H_X\) and \(H_Z\) are both locally testable with soundness \(\rho\) and locality \(\Delta\).
\end{definition}
Note that a quantum code that is \((\rho,\Delta)\)-qLTC is also \(\Delta\)-qLDPC.

\begin{definition}[Stabilizer-reduced weight]
  For a stabilizer code with stabilizer set \(S\) and a Pauli error \(E\), we say that \(E\) is stabilizer-reduced if it is minimal with respect to the application of stabilizers.
  That is, for every \(s \in S\), the weight of \(|sE| \ge |E|\). For a general Pauli error $P$, the stabilizer-reduced weight $|P|_R$ is the weight of a stabilizer-reduced error equivalent to $P$ under applications of stabilizers.
\end{definition}

Let \(\ket{\Phi}=\frac{1}{\sqrt{2}}\left(\ket{00} + \ket{11}\right)\) denote the Bell state.
We recall the standard gate teleportion scheme.
\begin{definition}[Gate teleportation \cite{gottesman1999demonstrating}]
  For an arbitrary \(n\)-qubit unitary \(U\), the state \(\ket{U} = I\otimes U \ket{\Phi}^{\otimes n}\) (\(U\) acting on one qubit from each Bell pair) allows us to perform a teleported application of \(U\) on an input state $\ket{\psi}$: By applying the standard quantum teleportation circuit (using \(X\) and \(Z\) basis measurement and \(\CNOT\)) between a register \(\ket{\psi}\) and \(\ket{U}\), and applying a \(U\)-dependent fix-up operation (\(UX_i U^\dagger\) and \(UZ_i U^\dagger\)) conditioned on the measurement outcomes of the teleportation circuit, we are left with \(U\ket{\psi}\).
\end{definition}

\begin{figure}
    \centering
    \includegraphics[width=0.55\linewidth]{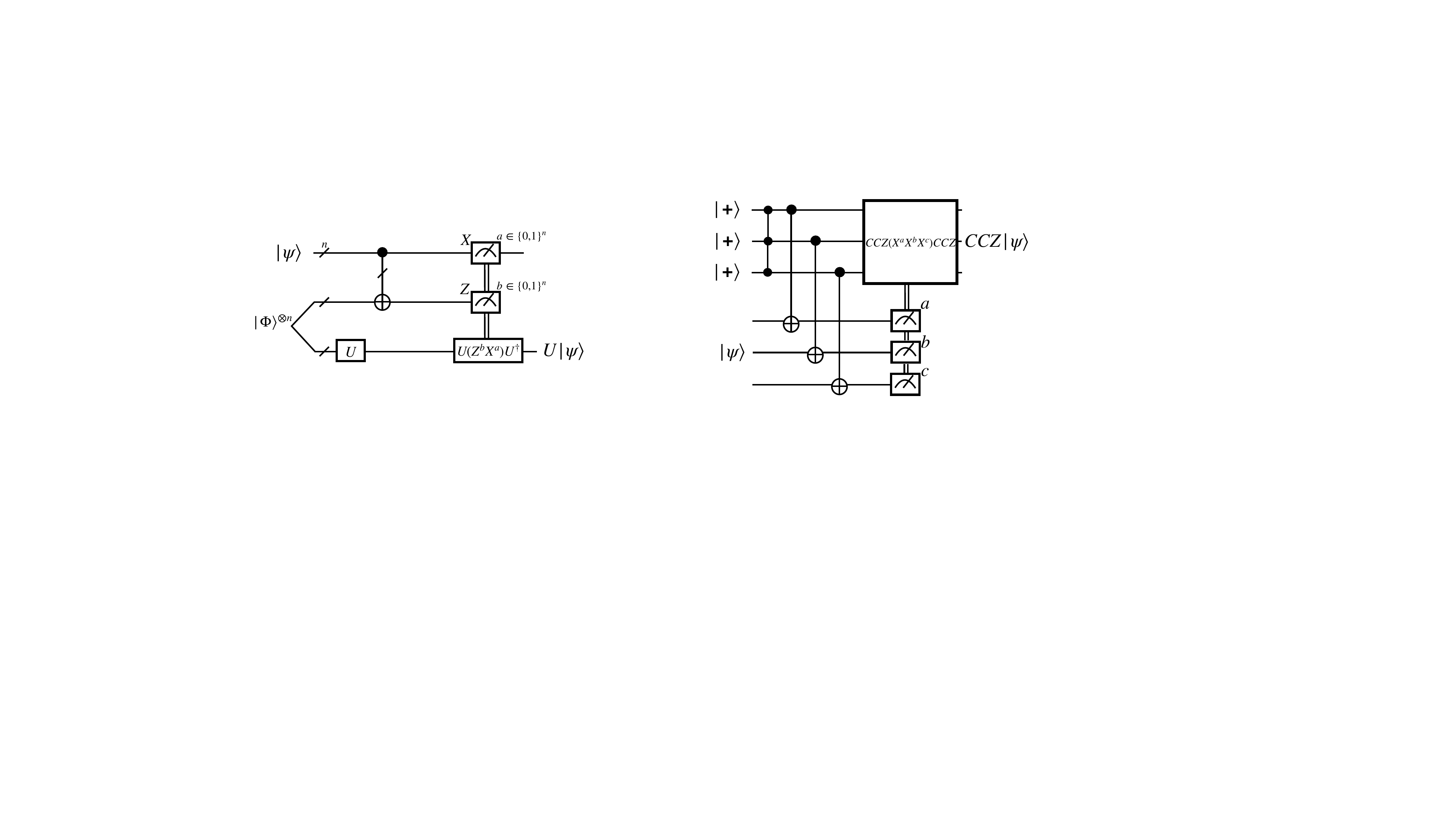}
    \caption{The $n$-qubit gate teleportation circuit. If $U$ is in the $k$-th level of the Clifford hierarchy, then the correction is in level $k-1$. We will use this circuit and the stabilizer resource state $I \otimes U\ket{\Phi}^{\otimes n}$ to implement Clifford gates in our construction.}
    \label{fig:Bell}
\end{figure}

\begin{figure}
    \centering
    \includegraphics[width=0.45\linewidth]{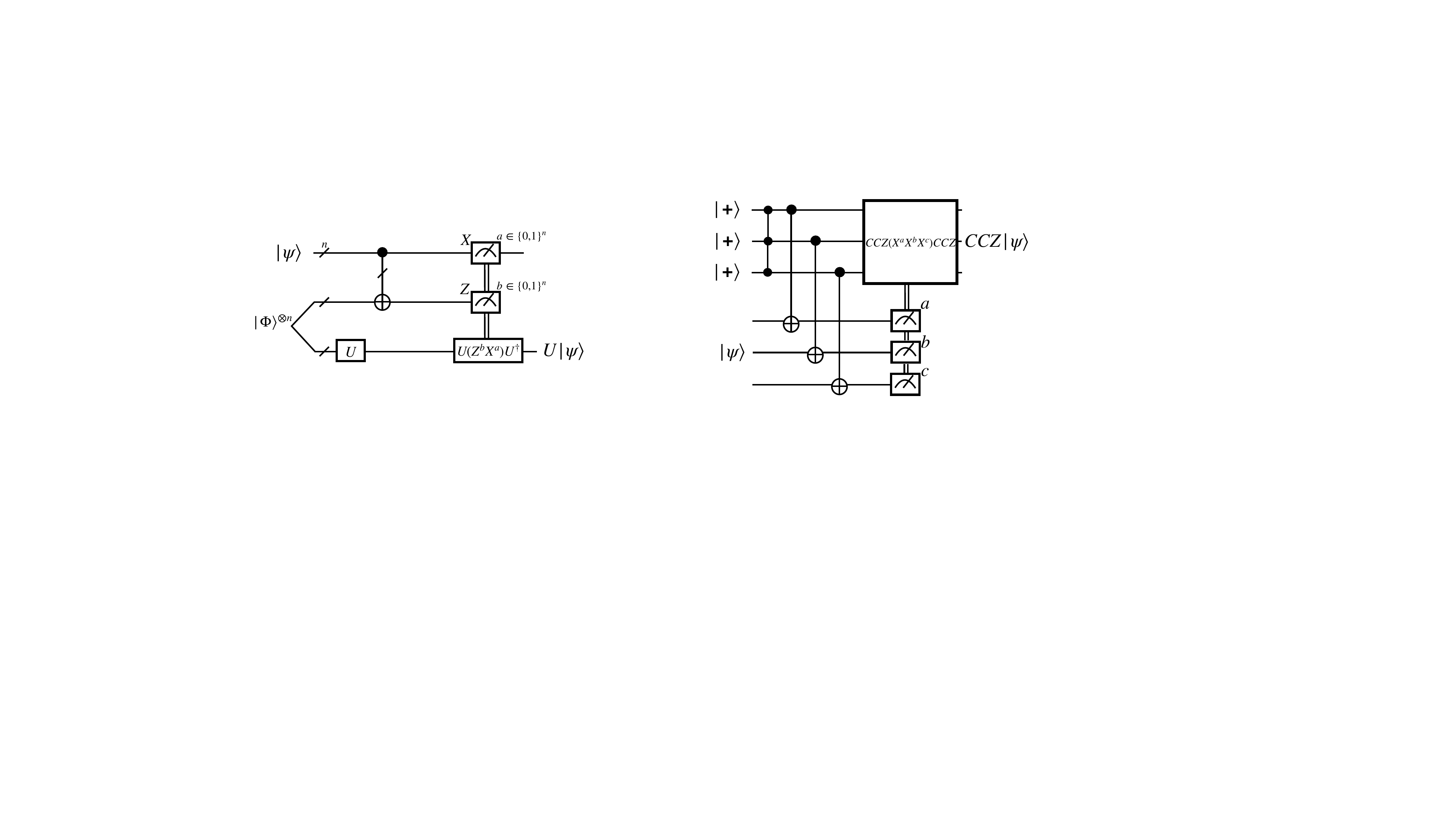}
    \caption{Gate teleportation for $\CCZ$ gate using the magic state $\ket{\CCZ}=\CCZ\ket{+++}$. The fix-up has the form of Pauli operations and $\CZ$ gates. The latter are in turn implemented using~\Cref{fig:Bell}.}
    \label{fig:ccz}
\end{figure}

In particular, if \(U\) is in the \(k\)-th level of the Clifford hierarchy, then the fix-up operation is in the \((k-1)\)-th level.
We will only use \(\CCZ\), a third-level gate, and Clifford gates for which the fix-up is Pauli.
For Clifford gates, the resource state is a stabilizer state and the fix-up includes Pauli operations. In this work we use this standard scheme to perform Clifford gates on the computation qLDPC code. To perform $\CCZ$, we instead use the magic state $\CCZ\ket{+++}$ and the circuit in~\Cref{fig:ccz}.

\begin{remark}[Clifford+T]
  The choice to use Clifford+$\CCZ$ instead of Clifford+T may seem unconventional, but we note that any circuit using the Clifford+T gateset may be converted to the Clifford+$\CCZ$ gateset with \(O(\log (WD/\varepsilon))\) \emph{additive} time cost and constant overhead, by first approximately preparing a \emph{single} \(T\) state to \(1/\poly(WD/\varepsilon)\) error and then using the (exact) catalyzed conversion \(\ket{\CCZ}+\ket{T} \to 2\ket{T}\) to create many \(\ket{T}\) states
  \cite{beverland2020lower}.
\end{remark}

\subsection{Noise model}

We are now ready to define what it means for a circuit to be executed in a faulty way.
Note that we have not yet defined a distribution (or similar) over the ways that a circuit can be faulty.
A major feature is that we can separate the proof of correctness from the distribution\footnote{Most generally, it can be the case that the faults are not drawn from a probability distribution at all (e.g. coherent noise), but we can still upper bound the contribution from ``bad'' fault paths in the output distribution.} the faults are drawn from.
\begin{definition}[Fault]
  For an adaptive quantum circuit \(C = C_D C_{D-1} \dots C_1\) of depth \(D\) and width \(W\), a fault \(\fault\) is a sequence of \(W\)-qubit superoperators \((\fault_1, \fault_2, \dots, \fault_D)\).
  We use \(C[\fault]\) to denote the execution of the circuit \(C\) subject to the fault \(\fault\). That is,
  \begin{align}
      C[\fault] = \fault_D C_D \fault_{D-1} C_{D-1} \dots \fault_1 C_1.
  \end{align}
  We say that a fault is \emph{physical} if each superoperator is completely-positive and trace-preserving (CPTP).

  For a sub-circuit \(C'\) of \(C\), \(C'[\fault]\) should be interpreted as the restriction of \(\fault\) to the locations of \(C'\).
  Whenever such a restriction appears, \(\fault\) will always factorize such that the restriction is well defined.
\end{definition}

We remark that our notion of a fault (singular) approximately corresponds to many faults (plural) in some modern works.
For convenience, we extend multiplication to faults, coordinate-wise.
That is, for two faults \(\fault[f]\) and \(\fault[g]\), \(\fault[f]\fault[g] = (\fault[f]_1 \fault[g]_1, \fault[f]_2 \fault[g]_2, \dots)\).

A particularly special case of a fault will be that of a \emph{Pauli fault}.
Pauli faults are easier to work with for reasons that will be made clear shortly.
\begin{definition}[Pauli superoperator]
  A superoperator is said to be a Pauli superoperator if it is of the form \(\rho \mapsto \alpha A \rho B\) for some Pauli operators \(A\) and \(B\) and complex coefficient \(\alpha\).
  We will further say it is a diagonal Pauli superoperator if it is of the form \(\rho \mapsto \alpha A \rho A\).
\end{definition}

\begin{definition}[Pauli faults]
  For a fault \(\fault = (\fault_1, \fault_2, \dots, \fault_D)\), \(\fault\) is said to be a \emph{Pauli fault} (diagonal Pauli fault) if every superoperator \(\fault_t\) in the sequence is a Pauli superoperator (diagonal Pauli superoperator).
\end{definition}

\begin{remark}
  Since Pauli matrices form a complete basis for the space of matrices, it is clear that every fault can be written as a linear combination of Pauli faults.
  For a circuit subject to a physical fault, we can decompose it into a sum over executions of the circuit subject to Pauli faults.
  We will use this fact to reduce the case of general faults to that of Pauli faults.
  Since Pauli faults are tensor products of single-qubit operators (i.e. factorizes), this makes it convenient to prove properties of circuits in isolation.
\end{remark}

We now define the locations of a circuit.
\begin{definition}[Location]
  A location of an adaptive quantum circuit is a tuple of a timestep \(t \in [D]\) and a set of qubits \(q \subseteq [W]\) such that a quantum operation (including identity) supported on the qubits \(q\) is performed at time \(t\). For each \(t \in [D]\), every qubit must be contained in exactly one location.
\end{definition}
We are now ready to define the analog of the support of a fault.
\begin{definition}[Fault path]
  For an adaptive quantum circuit \(C\) and a fault \(\fault\), the fault path \(\supp\fault\) of \(\fault = (\fault_1, \dots,\fault_D)\) is the subset of locations of \(C\) that \(\fault\) is supported on.
  I.e. a location \((t \in [D], X\subseteq [W])\) of \(C\) is in the fault path if \(\supp \fault_t \cap X \ne \emptyset\).
  Such a location is said to be \emph{faulty}.
\end{definition}

\begin{definition}[Local stochastic noise]\label{def:local-stochastic-noise}
  For an adaptive quantum circuit \(C\) with the set of locations \(L\), the random physical fault \(\mathbf{f}\) is said to be distributed according to an \(\epsilon\)-locally stochastic faults model if, for all \(S \subseteq L\),
  \begin{align}
    \Pr(S \subseteq \supp \mathbf{f}) \le \epsilon^{|S|}.
  \end{align}
\end{definition}

For a quantum circuit \(C\) with depth \(D\) and space \(W\), we define \(|C| = WD\).
Furthermore, we say a circuit is polynomially bounded if \(D = O(\poly W)\).

\subsection{Weight enumerators}

In order to combine fault tolerant gadgets using different techniques such as code concatenation and qLDPC, it becomes convenient to define a notion of a \emph{sparse} error or fault-path that differs from \cite{aharonov1999fault}.
Roughly, these are the errors that are benign and do not affect the final outcome.
We will define these sets by requiring that they exclude certain ``bad'' subsets.
Depending on how these bad subsets are defined, this imposes structure on the error or fault path that can be used to prove correctness of circuit gadgets in the presence of such errors / fault paths.

We now define a notion analogous to a ``sparse'' set of \cite{aharonov1999fault}.
\begin{definition}[Avoiding sets]
  For a set \(\Omega\) and a family of subsets \(\mathcal{F}\subseteq P(\Omega)\), referred to as \emph{the bad sets}, a subset \(X \subseteq \Omega\) is said to be \(\mathcal{F}\)-avoiding if it does not contain an element of \(\mathcal{F}\).
  That is, \(X \subseteq \Omega \) is \(\mathcal{F}\)-avoiding if and only if
  \begin{align*}
    \forall F \in \mathcal{F},~ F \not\subseteq X~.
  \end{align*}
\end{definition}

To this family of sets we can associate a weight enumerator polynomial.
This tracks how many elements of the family of each size there are.
\begin{definition}[Weight enumerators]\label{def:weight-enumerator}
  For a finite set \(\Omega\) and a family of subsets \(\mathcal{F} \subseteq P(\Omega)\), it will be convenient to associate a polynomial \(\weightenum{\mathcal{F}}{x}\) to \(\mathcal{F}\) defined as the sum
  \begin{align}
    \weightenum{\mathcal{F}}{x} = \sum_{w=0}^{\infty} A_w x^w,
  \end{align}
  where \(A_w = |\left\{F \in \mathcal{F} \mid |F| = w\right\}|\) is the number of elements of \(\mathcal{F}\) of weight \(w\).
\end{definition}

Note that our definition is a special case of the weight enumerator polynomial from classical coding theory.
The major motivation of this definition is that \(\weightenum{\mathcal{F}}{x}\) is defined such that, for a random variable \(E\) taking values in \(P(\Omega)\) that is \(\epsilon\)-local stochastic, the probability that \(E\) is not \(\mathcal{F}\)-avoiding is bounded by the evaluation \(\weightenum{\mathcal{F}}{x}\) at \(\epsilon\). In other words,
\begin{align}
  \forall S \subseteq \Omega, ~\Pr(S \subseteq E) \le \epsilon^{|S|} \Rightarrow \Pr(\text{\(S\) is not \(\mathcal{F}\)-avoiding}) \le \sum_{f \in \mathcal{F}} \epsilon^{|f|} = \weightenum{\mathcal{F}}{\varepsilon}.
\end{align}

Having motivated the introduction of weight enumerator polynomials, a natural question is what multiplication and addition of polynomials correspond to.
We begin by defining a sort of sum and product operations on the families of bad sets which produce new families of bad sets with weight enumerator given by the sum or product (\cref{lemma:enumerator-ring}).
The addition operation is essentially the union-bound from probability.
The product operation is slightly more subtle: It is analogous to the probability of simulataneous failure of independent gadgets, but recall that the probability of failure is not entirely independent in our error model.
However, we can recover a notion of independence whenever the families of bad sets do not coincide.

\begin{definition}[Ring of bad sets]
  For a set \(\Omega\) with the decomposition \(\Omega_1\cup \Omega_2 = \Omega\) and families of subsets \(\mathcal{F}_1\subseteq P(\Omega_1)\) and \(\mathcal{F}_2\subseteq P(\Omega_2)\), we define two operations between \(\mathcal{F}_1\) and \(\mathcal{F}_2\) to arrive at a subset \(\mathcal{F}\subseteq \Omega\).
  Let \(\iota_1\) (\(\iota_2\)) be the canonical inclusion map from \(\Omega_1\) (\(\Omega_2\)) to \(\Omega\).

  The first operation is a sort of addition operation.
  \begin{align}
    \mathcal{F}_1\boxplus \mathcal{F}_2 := \iota_1(\mathcal{F}_1) \cup \iota_2(\mathcal{F}_2),
  \end{align}
  where by \(\iota_1(\mathcal{F}_1)\), we mean the application of \(\iota_1\) to each element of the family \(\mathcal{F}_1\).

  When \(\Omega_1\) and \(\Omega_2\) are disjoint, that is \(\Omega = \Omega_1\sqcup \Omega_2\), we define a multiplication operation.%
  \begin{align}
    \mathcal{F}_1\circledast \mathcal{F}_2 := \{\iota_1(f_1) \cup \iota_2(f_2) \mid f_1 \in \mathcal{F}_1,~f_2 \in \mathcal{F}_2\}.
  \end{align}
\end{definition}
First note that the name addition and multiplication are deserved: Multiplication distributes over addition.
Informally, a set \(X \subseteq \Omega\) is \(\mathcal{F}_1\boxplus \mathcal{F}_2\)-avoiding if \(X|_{\Omega_1}\) is \(\mathcal{F}_1\)-avoiding \emph{and} \(X|_{\Omega_2}\) is \(\mathcal{F}_2\)-avoiding whereas \(X\) is \(\mathcal{F}_1\circledast \mathcal{F}_2\)-avoiding if \(X|_{\Omega_1}\) is \(\mathcal{F}_1\)-avoiding \emph{or} \(X|_{\Omega_2}\) is \(\mathcal{F}_2\)-avoiding.
Later, the addition will be used for union-bounds while the second operation will be used for upper bounding the probability of a large simultaneous failure.

Conveniently, the weight enumerators are well behaved under these operations
\begin{prop}\label{lemma:enumerator-ring} It holds that
  \begin{align}
    \weightenum{\mathcal{F}_1\boxplus \mathcal{F}_2}{x} &= \weightenum{\mathcal{F}_1}{x} + \weightenum{\mathcal{F}_2}{x}, \\
    \weightenum{\mathcal{F}_1\circledast \mathcal{F}_2}{x} &= \weightenum{\mathcal{F}_1}{x} \weightenum{\mathcal{F}_2}{x}. 
  \end{align}
\end{prop}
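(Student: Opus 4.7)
The plan is to unpack \cref{def:weight-enumerator} and verify both identities by comparing coefficients of $x^w$ on each side. Since each weight enumerator is a generating function for the number of elements in a family of a given size, the argument reduces to a straightforward counting exercise in each fixed weight.

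For the additive identity, the first observation is that the canonical inclusions $\iota_1, \iota_2$ preserve cardinality, so any element of $\iota_1(\mathcal{F}_1) \cup \iota_2(\mathcal{F}_2)$ of size $w$ arises as the image of an element of $\mathcal{F}_1$ of size $w$ or of $\mathcal{F}_2$ of size $w$. Adding the two contributions yields $A_w^{(1)} + A_w^{(2)}$ as the coefficient of $x^w$ in $\weightenum{\mathcal{F}_1 \boxplus \mathcal{F}_2}{x}$, giving the first identity. The only subtlety is that if $\Omega_1 \cap \Omega_2 \ne \emptyset$, a single subset of $\Omega$ could in principle appear as both $\iota_1(f_1)$ and $\iota_2(f_2)$; in the applications throughout the paper this does not occur because $\mathcal{F}_1$ and $\mathcal{F}_2$ are defined over disjoint or distinguishable collections of circuit locations, and in the general case the stated equality is used as an upper bound, which is exactly what is needed for the union-bound interpretation of $\boxplus$.

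For the multiplicative identity, the assumed disjointness $\Omega_1 \sqcup \Omega_2 = \Omega$ is what makes the proof clean. Disjointness yields $|\iota_1(f_1) \cup \iota_2(f_2)| = |f_1| + |f_2|$, and more importantly allows unique decomposition of any $X \in \mathcal{F}_1 \circledast \mathcal{F}_2$ as $(X \cap \Omega_1, X \cap \Omega_2)$, recovering the generating pair $(f_1, f_2)$. Thus the map $(f_1,f_2) \mapsto \iota_1(f_1) \cup \iota_2(f_2)$ is a bijection $\mathcal{F}_1 \times \mathcal{F}_2 \to \mathcal{F}_1 \circledast \mathcal{F}_2$ which sends total weight $u+v$ to set size $u+v$. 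Partitioning by $u+v = w$ turns the coefficient count into the Cauchy product
\[
A_w^{\circledast} \;=\; \sum_{u+v=w} A_u^{(1)} A_v^{(2)},
\]
which is precisely the coefficient of $x^w$ in the product $\weightenum{\mathcal{F}_1}{x}\,\weightenum{\mathcal{F}_2}{x}$.

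I do not expect a real obstacle: both identities follow from the definitions together with the two structural facts (size-preservation of $\iota_1,\iota_2$ and unique decomposition under disjoint supports). The single point worth flagging in the write-up is the distinguishability issue in the additive case, which is handled by the disjoint-domain convention in which $\boxplus$ is invoked later in the paper.
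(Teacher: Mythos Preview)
Your proof is correct and follows essentially the same approach as the paper's: both arguments compare coefficients of $x^w$, using size-preservation of the inclusions for the additive identity and the disjoint-union decomposition (Cauchy product) for the multiplicative one. You are in fact slightly more careful than the paper in flagging the distinguishability issue for $\boxplus$ when $\Omega_1\cap\Omega_2\neq\emptyset$; the paper's proof glosses over this and simply asserts that coefficients add.
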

\begin{proof}
  A weight \(w\) element of \(\mathcal{F}_1\boxplus \mathcal{F}_2\) is a weight \(w\) element of \(\mathcal{F}_1\) or of \(\mathcal{F}_2\), so the coefficients of the same degree add.
  To prove the second property, note that any weight \(w\) element of \(\mathcal{F}_1\circledast \mathcal{F}_2\) must be the disjoint union of a weight \(w_1\) element of \(\mathcal{F}_1\) and a weight \(w_2\) element of \(\mathcal{F}_2\) where \(w = w_1+w_2\).
\end{proof}

The product operation will be indispensible when constructing new fault-tolerant gadgets from less resilient older ones.
In several places such as recursive simulations (concatenation) and resource state distillation, we will have fault-tolerant gadgets fail in ways that are recoverable.
Because of this, we will need to analyze the fault sets in a recursive way. Unfortunately, a large degree of generality is required for all potential use-cases. This motivates the following definition of a composition operation.

\begin{definition}[Composition]\label{def:composition}
  For a proposition \(Q\) and a set \(X\), let \(\mathbb{I}_{Q}[X]\) denote \(X\) when \(Q\) holds and \(\emptyset\) when \(\lnot Q\) holds.

  Fix a set \(\Omega\) and \(\mathcal{F}\subseteq P(\Omega)\). For ease of notation, identify \(\Omega \simeq [n]\). Also, consider sets \(\{\omega_i\}_{i \in \Omega}\) and families of sets \(\{\mathcal{S}_i \subseteq P(\omega_i)\}_{i \in \Omega}\) that are indexed by elements of $\Omega$.
  We define an operation \(\mathcal{F} \bullet \{\mathcal{S}_i\}_i \subseteq P\left(\bigsqcup_{i\in\Omega} \omega_i\right)\) where
  \begin{align*}
    \mathcal{F} \bullet \{\mathcal{S}_i\}_i := \boxplus_{f \in \mathcal{F}} \left(\mathbb{I}_{n \in f} [\mathcal{S}_n] \circledast \mathbb{I}_{(n-1) \in f} [\mathcal{S}_{n-1}]  \dots \mathbb{I}_{2 \in f} [\mathcal{S}_2] \circledast \mathbb{I}_{1 \in f} [\mathcal{S}_1]\right).
  \end{align*}
  Note that if \(\mathcal{S}_i = \mathcal{S} \subseteq P(\omega)\) for some set \(\omega\), then \(\mathcal{F} \bullet \{\mathcal{S}_i\}_i \subseteq P(\Omega \times \omega)\) corresponds to all sets which for some element \(f \in \mathcal{F}\) are elements of \(\mathcal{S}\) on each row of \(\Omega \times \omega\) where \(f\) is non-trivial.
We will use the notation \(\mathcal{F} \bullet \mathcal{S}\) for this special case.
\end{definition}

Later, when defining a recursive simulation, \(\Omega\) will label circuit locations to be simulated while each \(\omega_{i \in \Omega}\) will label locations in the simulation of the operation associated with location \(i\).
Different gadgets may have different sets of locations which demands the additional complexity in the definition.
Readers can simply consider code concatenation as a guide:
The outer and inner code blocks have corresponding bad sets \(\mathcal{F}_{\mathrm{outer}}\) and \(\mathcal{F}_{\mathrm{inner}}\).
Informally, the family produced by the \(\bullet\) operation takes a bad set \(f \in \mathcal{F}_{\mathrm{outer}}\) of the outer code block and expands each element \(e \in f\) into one of the bad sets \(f' \in \mathcal{F}_{\mathrm{in}}\) of the inner code block, so that if a subset of qubits of the concatenated code is \(\mathcal{F}_{\mathrm{outer}} \bullet \mathcal{F}_{\mathrm{inner}}\)-avoiding, then the set is \(\mathcal{F}_{\mathrm{inner}}\)-avoiding on all inner code blocks except for an \(\mathcal{F}_{\mathrm{outer}}\) set (of inner code blocks).
More generally, the families of bad sets of the inner set may not be uniform e.g. a gadget of a concatenated code may use different inner code gadgets (performing different operations) with different locations and bad sets.

\begin{remark}
  Frequently, for a weight enumerator \(\weightenum{\mathcal{F}}{x}\), we will only have an upper bound on \(\weightenum{\mathcal{F}}{x}\) on some interval \([0,c]\) for \(c \in (0,1]\).
  \(c\) should be thought of as a multiple (e.g. \(1/2\)) of a threshold value.
\end{remark}
  
\begin{prop}[Composition of weight enumerators]\label{lemma:composition-upper-bound}
  Using the variables as defined in \cref{def:composition}, when there exists a polynomial \(p(x)\) such that on some interval \(x \in I \subseteq \mathbb{R}_+\) for all \(i \in \Omega\), \(\weightenum{\mathcal{S}_i}{x} \le p(x)\),
  \begin{align}
    \weightenum{\mathcal{F} \bullet \{\mathcal{S}_i\}_i}{x} \le \weightenum{\mathcal{F}}{p(x)}
  \end{align}
  for \(x \in I\).
\end{prop}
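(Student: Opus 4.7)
The plan is to unfold the weight enumerator of the composition using \Cref{lemma:enumerator-ring} and then apply the hypothesized upper bound termwise under the non-negativity of the coefficients.

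First, I would apply \Cref{lemma:enumerator-ring} iteratively to the definition of $\mathcal{F} \bullet \{\mathcal{S}_i\}_i$. The outer $\wtsum$ over $f \in \mathcal{F}$ becomes an ordinary sum of polynomials, while each inner $\wtprod$-product over the factors indexed by $i$ becomes an ordinary product (this is legitimate because the underlying sets $\omega_i$ are disjoint, which is the premise required for $\wtprod$). Interpreting $\mathbb{I}_{\lnot Q}[X]$ as the $\wtprod$-unit (the family $\{\emptyset\}$, with weight enumerator $1$) so that indices $i \notin f$ simply do not contribute, this yields the clean identity
\begin{align*}
\weightenum{\mathcal{F} \bullet \{\mathcal{S}_i\}_i}{x} \;=\; \sum_{f \in \mathcal{F}} \prod_{i \in f} \weightenum{\mathcal{S}_i}{x}.
\end{align*}

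Next, I would use the hypothesis together with non-negativity. For any $x \in I \subseteq \mathbb{R}_+$, each $\weightenum{\mathcal{S}_i}{x}$ is non-negative since its coefficients (counts of subsets) are non-negative, and therefore $0 \le \weightenum{\mathcal{S}_i}{x} \le p(x)$, which in particular forces $p(x) \ge 0$ on $I$. Multiplying the termwise upper bounds gives $\prod_{i \in f} \weightenum{\mathcal{S}_i}{x} \le p(x)^{|f|}$ for every $f \in \mathcal{F}$. Summing over $\mathcal{F}$ and grouping by cardinality,
\begin{align*}
\weightenum{\mathcal{F} \bullet \{\mathcal{S}_i\}_i}{x} \;\le\; \sum_{f \in \mathcal{F}} p(x)^{|f|} \;=\; \sum_{w=0}^{\infty} A_w\, p(x)^w \;=\; \weightenum{\mathcal{F}}{p(x)},
\end{align*}
where $A_w$ is the number of weight-$w$ elements of $\mathcal{F}$ as in \Cref{def:weight-enumerator}. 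This is exactly the claimed bound.

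There is no real technical obstacle: once \Cref{lemma:enumerator-ring} is in hand, the whole argument is a generating-function style manipulation combined with termwise monotonicity. The one subtlety worth flagging is the reading of the $\mathbb{I}_Q[\cdot]$ indicator inside the $\wtprod$-product. Taking the literal reading $\mathbb{I}_{\lnot Q}[X] = \emptyset$ (the empty family, whose weight enumerator is $0$) would annihilate every product with $f \subsetneq [n]$ and collapse the statement; the argument implicitly requires interpreting it as the $\wtprod$-unit $\{\emptyset\}$ (with weight enumerator $1$), which matches the intended operational meaning that locations outside $f$ contribute no bad event. With that convention fixed, the inequality is immediate from the two displayed identities above.
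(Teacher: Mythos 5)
Your proof is correct and takes essentially the same route as the paper: unfold the composition using the $\boxplus$/$\circledast$ sum and product rules of \Cref{lemma:enumerator-ring} into $\sum_{f \in \mathcal{F}} \prod_{i \in f} \weightenum{\mathcal{S}_i}{x}$, bound each factor by $p(x)$ using non-negativity of the weight enumerators, and then regroup the sum by $|f|$ to recognize $\weightenum{\mathcal{F}}{p(x)}$. The paper's proof is a terser prose statement of exactly this argument.

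Your flagged subtlety is a genuine and worthwhile observation. \Cref{def:composition} writes $\mathbb{I}_{\lnot Q}[X] = \emptyset$; if this denotes the empty family (so $\weightenum{\emptyset}{x} = 0$), then by the definition of $\circledast$ every product containing an index $i \notin f$ is the empty family, and $\mathcal{F} \bullet \{\mathcal{S}_i\}_i$ collapses to at most the single term with $f = \Omega$ — which is plainly not the intended operational meaning and would contradict how the paper later uses the operation (e.g.\ $\mathcal{F}_{\mathrm{test}} \bullet \mathcal{G}_{r,\mathrm{unencode}}$ in \Cref{lemma:noisy-prep-gadget}). The intended reading, and the one your derivation requires, is $\mathbb{I}_{\lnot Q}[X] = \{\emptyset\} \subseteq P(\omega_i)$, the multiplicative identity for $\circledast$ with $\weightenum{\{\emptyset\}}{x} = 1$, so that absent indices contribute a trivial factor. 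The paper's proof relies on this same reading silently (it counts the product as having $|f|$ nontrivial terms, not $|\Omega|$), so making it explicit is an improvement rather than a gap.
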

\begin{proof}
  \(\mathcal{F} \bullet \{\mathcal{S}_i\}_i\) is constructed as a sum over a product for each \(f \in \mathcal{F}\).
  Using the weight enumerator sum rule, the weight enumerator is the sum of weight enumerators for each product.
  Each product has \(|f|\) terms and the corresponding weight enumerator can be evaluated using the weight enumerator product rule.
  Applying the restriction \(x \in I\) and the upper bound \(p(x)\), each term in the sum has upper bound \(p(x)^{|f|}\).
  For each \(d \in \mathbb{N}\), the number of terms in the sum proportional to \(p(x)^d\) correspond to the number of elements of \(\mathcal{F}\) of weight \(d\).
\end{proof}

\subsection{Gadgets}\label{subsec:gadgets}
It will be convenient to group together (physical) qubits into ``blocks.''
Roughly, a block is simply a bookkeeping method to refer to collections of qubits that correspond to the physical qubits of a quantum error correcting code.
To reduce the notation, we intentionally leave implicit a choice of labeling of the qubits.
Recall that for a stabilizer code \(\qcode\) encoding \(k\) qubits into \(n\) qubits, we also leave implicit a choice of encoding map \(\phi_\qcode\) from a \(k\)-qubit Hilbert space to an \(n\)-qubit Hilbert space.

\begin{definition}[Code block]
  For a stabilizer code \(\qcode\) on \(n\) qubits and a set of qubits \([M]\), a block of \([M]\) is an \(n\)-qubit subset of \([M]\) that will be treated as forming an encoded state of \(\qcode\). A block will be further equipped with bad sets \(\mathcal{F}\subseteq P([n])\).
  Such a block will be referred to as a \((\qcode , \mathcal{F})\)-block.
  For a set of qubits \(A \subseteq [M]\), \(A\) is said to be \emph{sparse} on the \((\qcode , \mathcal{F})\)-block \(B \subseteq [M]\) if \(A \cap B\) is \(\mathcal{F}\)-avoiding.
\end{definition}

We would like a notion to compare if the intermediate computational state is equivalent to the correct computional state in a sense that is robust to the presence of benign errors.
Here, we introduce a notion of deviation inspired by \cite{aharonov1999fault, aharonov2008fault} and \cite{kitaev1997quantum}, but suitably generalized for our purposes.
\begin{definition}[Deviation]\label{def:deviation}
  For a set of qubits \(A\), a family of bad sets \(\mathcal{F}\), and two states \(\rho\), \(\sigma\) on \(A\).
  \(\rho\) is said to be \(\mathcal{F}\)-deviated from \(\sigma\) if there exists an \(\mathcal{F}\)-avoiding set \(B \subseteq A\) and a superoperator \(\mathcal{E}_B\) supported only on \(B\) such that \(\rho = \left(\mathcal{I}_{A \setminus B}\otimes \mathcal{E}_B\right)(\sigma)\).
  For a quantum code \(\qcode\), a state \(\rho\) is said to be \(\mathcal{F}\)-deviated from \(\qcode\) if there exists a (possibly mixed) code state \(\sigma\) in \(\qcode\) such that \(\rho\) is \(\mathcal{F}\)-deviated from \(\sigma\).

  We use the less general term ``Pauli \(\mathcal{F}\)-deviated'' (``diagonal Pauli \(\mathcal{F}\)-deviated'') when the superoperator \(\mathcal{E}_B\) is a Pauli superoperator (diagonal Pauli superoperator)
\end{definition}

\begin{remark}
  Our notion of deviation most closely corresponds to the notion of a ``many-to-one'' code of \cite{kitaev1997quantum}.
  A ``many-to-one'' code is the set of states that can be reached from a codestate of a quantum code by application of a superoperator on at most \(l \in \mathbb{N}\) qubits.
  I.e. the space of states that are damaged but not irrecoverably so. Here
  ``many-to-one'' refers to the recovery map sending damaged states to a unique codestate of a standard quantum code.
  
  We avoid comparing the partial traces of states as in \cite{aharonov2008fault} since (i) the reference state is required to be pure (\cite[Section 4.3.3]{aharonov2008fault}) in order for there to exist a superoperator taking one state to another (\cite[Claim 1]{aharonov2008fault}) and (ii) the relation should not be symmetric.
  This difference in definition does not substantially modify the argument of \cite{aharonov1999fault,aharonov2008fault}.
\end{remark}

It will be convenient to refer to a smaller unit of an adaptive quantum circuit which we will call a \emph{gadget}.
A gadget may take a quantum input and outputs the final state of the quantum and classical registers \emph{including the measurement history}.
The requirement that the gadget also preserves the measurement history is a technical one that is needed to deal with non-Pauli errors on inputs to the gadget since the presence of only Pauli faults does not necessarily imply the output error is Pauli (see~\cref{lemma:deterministic-errors}).
At the final step of our analysis, we will discard all measurements except for the result of the logical measurements.
In our case, the gadgets will often operate on a subset of the classical and quantum registers and will be invoked in parallel.
This will be implicit.

The gadgets in our construction will take as input an encoded state and output a state (on the quantum register), possibly encoded in a different code.
This notion of a gadget is somewhat similar to \cite{kitaev1997quantum}.
We now define a fault-tolerant gadget and give several useful propositions related to various forms of composition.
We recommend that readers refer to the diagrams in parallel with the text.
\begin{definition}[Fault-tolerant gadget]\label{def:ft-gadget}
  Fix stabilizer codes \(\qcode\) (\(\qcode'\)) on \(n\) qubits (\(n'\) qubits) encoding \(k\) qubits (\(k'\) qubits) and their respective encoding maps \(\phi_\qcode\) (\(\phi_{\qcode'}\)).
  A fault-tolerant gadget \(\gadget\) operates on a \((\qcode,\mathcal{F})\)-block and returns a \((\qcode',\mathcal{F}')\)-block.
  To reduce notation, we will suppress the dependence on \(\qcode\), \(\mathcal{F}\), and the corresponding primed analogs.

  Consider a gadget \(\gadget\) with a set of locations \([L]\) and a family of bad sets \(\mathcal{G}\subseteq P([L])\).
  Let \(\mathbf{f}\) be a fault.
  For a quantum operation \(\qOp\) mapping unencoded states from $\mixedstate{k}$ to $\mixedstate{k'}$, \(\gadget\) is said to be a \((\qOp, \mathcal{G})\)-FT gadget if the following holds when the fault path \(\supp\fault \subseteq [L]\) of \(\fault\) is \(\mathcal{G}\)-avoiding:
  Without loss of generality, assume the gadget operates on the first \(n\) qubits.
  Then, for any \(m \in \mathbb{N}\), and any state\footnote{See \cref{rmk:reference-system}} \(\rho \in \mixedstate{(k+m)}\), if the input state \(\sigma\in \mixedstate{(n+m)}\) is \(\mathcal{F}\)-deviated from \(\phi_\qcode\otimes \mathcal{I}_m(\rho)\), then the output state \(\sigma' = \gadget[\mathbf{f}]\otimes \mathcal{I}_m(\sigma)\) is \(\mathcal{F}'\)-deviated from \(\left(\phi_{\qcode'}\circ \qOp\right)\otimes \mathcal{I}_m (\rho)\).
  
  A gadget is further said to be \emph{friendly} if, regardless of the input state, when \(\supp\fault\) is \(\mathcal{G}\)-avoiding, there always exists a state \(\rho' \in \mixedstate{(k+m)}\) such that the output state is \(\mathcal{F}'\)-deviated from \((\phi_{\qcode'} \circ \qOp)\otimes \mathcal{I}_m(\rho')\).
\end{definition}

Diagramatically, the following diagram commutes: It is equivalent to apply the operation on the logical information and encode it, or to apply the gadget, and the noisy computation is always ``close'' to the noiseless one.
\begin{align*}
  \begin{tikzcd}[row sep=large, column sep=huge,ampersand replacement=\&]
    \rho \arrow[r,mapsto,"\qOp \otimes \mathcal{I}_{m}"] \arrow[d,mapsto,"\phi_{\qcode}\otimes \mathcal{I}_{m}"] \& \rho' \arrow[d,mapsto,"\phi_{\qcode'}\otimes \mathcal{I}_{m}"]\\
    \overline{\rho} \arrow[r,mapsto,"\gadget \otimes \mathcal{I}_{m}"] \arrow[d,rightsquigarrow,"\text{\(\mathcal{F}\)-deviated}"] \& \overline{\rho'}\arrow[d,rightsquigarrow,"\text{\(\mathcal{F}'\)-deviated}"] \\
    \sigma \arrow[r,mapsto,"\gadget {[}\mathbf{f}{]} \otimes \mathcal{I}_{m}"] \& \sigma'
  \end{tikzcd}
\end{align*}

\begin{remark}
    The friendly property may seem somewhat unusual, but it is required in order for gadgets to be recursively simulated: In a recursive simulation, lower level gadgets may fail, leaving no guarantees on the output state.
    The friendly property allows the gadget following a failed gadget to return the (arbitrarily damaged) state to a well defined logical state, so that the higher simulation level can correct the resulting logical error.
\end{remark}

\begin{remark}\label{rmk:reference-system}
  The presence of the reference system is needed in order for the gadget to be inserted in a larger circuit which may have many more qubits.
  The proof of \cref{lemma:gadget-side-by-side} utilizes this property to show that two fault-tolerant gadgets executed in parallel is itself a fault-tolerant gadget.
\end{remark}

A set of FT gadgets are said to be \emph{compatible} if the parameters of the blocks of the inputs and outputs are matching (or trivial).
That is, for some \((\qcode,\mathcal{F})\), the inputs and outputs of all gadgets in the set are \((\qcode,\mathcal{F})\)-blocks.

In some cases, it will be more convenient to restrict to gadgets that satisfy the fault-tolerant gadget property for Pauli noise.
While we will consider general noise, the analysis of a circuit subject to general noise will be reduced to the Pauli noise case, so that the gadgets themselves only need to be fault-tolerant to Pauli noise.
\begin{definition}[Pauli fault-tolerant gadgets]\label{def:pauli-ft-gadget}
  A Pauli fault-tolerant gadget \(\gadget\) is a gadget satisfying a weaker version of \cref{def:ft-gadget} with all faults restricted to be Pauli faults and the input and output taken to be Pauli deviated.
\end{definition}

What follows is a formalization of properties that allow us to assembly complicated gadgets out of simple ones.
The first proposition allows us to compose gadgets.
The second proposition allows us to execute two gadgets side-by-side in parallel on separate inputs.
The fault analysis in both of these cases will be a union-bound.

\begin{prop}[Composition of Pauli fault-tolerant gadgets]\label{lemma:gadget-composition}
  For a \((\qOp_1, \mathcal{G}_1)\)-Pauli FT gadget \(\gadget_1\) mapping a \((\qcode_1, \mathcal{F}_1)\)-block to a  \((\qcode_2, \mathcal{F}_2)\)-block and a \((\qOp_2, \mathcal{G}_2)\)-Pauli FT gadget \(\gadget_2\) mapping a \((\qcode_2, \mathcal{F}_2)\)-block to a  \((\qcode_3, \mathcal{F}_3)\)-block, the composition is a \((\qOp_2 \circ \qOp_1, \mathcal{G}_1 \boxplus \mathcal{G}_2)\)-Pauli FT gadget that maps a \((\qcode_1, \mathcal{F}_1)\)-block to a \((\qcode_3, \mathcal{F}_3)\)-block.
\end{prop}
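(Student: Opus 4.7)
The plan is to simply chain the two fault-tolerance guarantees, using the $\boxplus$ operation on bad sets to partition the combined fault path into pieces that each gadget can individually tolerate.

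First, I would set up notation. Let $[L_1]$ and $[L_2]$ be the location sets of $\gadget_1$ and $\gadget_2$ respectively, so the composition has location set $[L_1] \sqcup [L_2]$, and let $\fault$ be a Pauli fault on the composition whose fault path is assumed $\mathcal{G}_1 \boxplus \mathcal{G}_2$-avoiding. By the definition of $\boxplus$ (applied to the decomposition $[L_1]\sqcup [L_2]$), this is equivalent to saying that $\supp(\fault|_{[L_1]})$ is $\mathcal{G}_1$-avoiding and $\supp(\fault|_{[L_2]})$ is $\mathcal{G}_2$-avoiding. I would then fix an arbitrary reference system dimension $m \in \mathbb{N}$ and an arbitrary logical state $\rho \in \mixedstate{(k_1+m)}$ where $k_1$ is the number of logical qubits of $\qcode_1$, together with an input state $\sigma$ that is Pauli $\mathcal{F}_1$-deviated from $\phi_{\qcode_1}\otimes \mathcal{I}_m(\rho)$.

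Next I would apply the Pauli FT property of $\gadget_1$ directly: the intermediate state
\[
\tau := \gadget_1[\fault|_{[L_1]}] \otimes \mathcal{I}_m(\sigma)
\]
is Pauli $\mathcal{F}_2$-deviated from $(\phi_{\qcode_2}\circ \qOp_1) \otimes \mathcal{I}_m(\rho)$. Now I would feed $\tau$ into $\gadget_2$ with a new choice of reference state: letting $\rho' := \qOp_1 \otimes \mathcal{I}_m(\rho) \in \mixedstate{(k_2+m)}$, the state $\tau$ is Pauli $\mathcal{F}_2$-deviated from $\phi_{\qcode_2}\otimes \mathcal{I}_m(\rho')$, which is exactly the hypothesis needed to invoke the Pauli FT guarantee of $\gadget_2$. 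Applying it gives that
\[
\gadget_2[\fault|_{[L_2]}] \otimes \mathcal{I}_m(\tau) = (\gadget_2\circ\gadget_1)[\fault]\otimes \mathcal{I}_m(\sigma)
\]
is Pauli $\mathcal{F}_3$-deviated from $(\phi_{\qcode_3}\circ \qOp_2)\otimes \mathcal{I}_m(\rho') = (\phi_{\qcode_3}\circ \qOp_2 \circ \qOp_1)\otimes \mathcal{I}_m(\rho)$, which is exactly the conclusion required.

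There is essentially no hard step; the only things to check carefully are (i) that the $\boxplus$ partition correctly translates ``$\mathcal{G}_1\boxplus\mathcal{G}_2$-avoiding on $[L_1]\sqcup[L_2]$'' into simultaneous avoidance on each piece, which is immediate from the definition of $\boxplus$, and (ii) that the Pauli-ness of the deviation propagates. The latter is where one must be a little careful: the definition of a Pauli FT gadget requires the output to be Pauli $\mathcal{F}_2$-deviated when the input is Pauli deviated and $\fault$ is a Pauli fault, which is precisely the form in which $\tau$ enters $\gadget_2$, so the chain of Pauli deviations goes through without issue. The friendliness property, which is not claimed here, would require a separate but analogous argument that I would not include unless demanded.
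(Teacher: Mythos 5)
Your proof takes essentially the same route as the paper's: partition the fault via the definition of $\boxplus$ into pieces that are $\mathcal{G}_1$- and $\mathcal{G}_2$-avoiding respectively, apply the Pauli FT guarantee of $\gadget_1$ to produce an intermediate state that is Pauli $\mathcal{F}_2$-deviated from the corresponding encoded logical state, and then feed that state into the Pauli FT guarantee of $\gadget_2$. Your write-up is somewhat more explicit about the reference system and the re-identification of the intermediate logical state, but the argument is the same as the paper's commuting-diagram proof and is correct.
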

\begin{proof}
    Let \(\fault\) be a fault with a \(\mathcal{G}_1 \boxplus \mathcal{G}_2\)-avoiding fault path.
    Let \(\sigma_1\) be the input state such that there exists \(\rho_1\) such that \(\sigma_1\) is \(\mathcal{F}_1\)-deviated from \(\phi_{\qcode_1}(\rho_1)\).
    Consider the following diagram:
    \begin{align*}
        \begin{tikzcd}[row sep=large, column sep=huge,ampersand replacement=\&]
        \rho_1 \arrow[r,mapsto,"\qOp_1\otimes \mathcal{I}_{m}"] \arrow[d,mapsto,"\phi_{\qcode_1}\otimes \mathcal{I}_{m}"] \& \rho_2 \arrow[d,mapsto,"\phi_{\qcode_2}\otimes \mathcal{I}_{m}"] \arrow[r,mapsto,"\qOp_2\otimes \mathcal{I}_{m}"] \& \rho_3 \arrow[d,mapsto,"\phi_{\qcode_3}\otimes \mathcal{I}_{m}"]\\
    \overline{\rho}_1 \arrow[r,mapsto,"\gadget_1\otimes \mathcal{I}_{m}"] \arrow[d,rightsquigarrow,"\text{\(\mathcal{F}_1\)-deviated}"] \& \overline{\rho}_2\arrow[d,rightsquigarrow,"\text{\(\mathcal{F}_2\)-deviated}"] \arrow[r,mapsto,"\gadget_2\otimes \mathcal{I}_{m}"] \& \overline{\rho}_3\arrow[d,rightsquigarrow,"\text{\(\mathcal{F}_3\)-deviated}"]\\
        \sigma_1 \arrow[r,mapsto,"\gadget_1{[}\fault{]}\otimes \mathcal{I}_{m}"] \& \sigma_2 \arrow[r,mapsto,"\gadget_2{[}\mathbf{f}{]}\otimes \mathcal{I}_{m}"] \& \sigma_3
        \end{tikzcd}
    \end{align*}
    Using the assumption on the input, the only relations that need to be established is that \(\sigma_2\) is \(\mathcal{F}_2\)-deviated from \(\overline{\rho}_2\) and \(\sigma_3\) is \(\mathcal{F}_3\)-deviated from \(\overline{\rho}_3\).
    Using the definition of an FT gadget, the assumption that the fault path is \(\mathcal{G}_1\)-avoiding, and that \(\sigma_1\) is \(\mathcal{F}\)-deviated from \(\overline{\rho}_1\) establishes \(\sigma_2\) is \(\mathcal{F}_2\)-deviated from \(\overline{\rho}_2\).
    Applying the argument a second time establishes that \(\sigma_3\) is \(\mathcal{F}_3\)-deviated from \(\overline{\rho}_3\).
\end{proof}

It is also the case that two Pauli FT-gadgets acting in parallel also form a new Pauli FT-gadget.
This is the reason for the \(m\)-qubit subsystem in the definition of an FT gadget (\cref{def:ft-gadget})
\begin{prop}[Parallel Pauli FT-gadgets]\label{lemma:gadget-side-by-side}
  For a \((\qOp_1, \mathcal{G}_1)\)-Pauli FT gadget \(\gadget\) mapping a \((\qcode_1, \mathcal{F}_1)\)-block to a  \((\qcode_2, \mathcal{F}_2)\)-block and a \((\qOp', \mathcal{G}')\)-Pauli FT gadget \(\gadget'\) mapping a \((\qcode_1', \mathcal{F}_1')\)-block to a  \((\qcode_2', \mathcal{F}_2')\)-block, \(\gadget \otimes \gadget'\) is a \((\qcode \otimes \qcode', \mathcal{G} \boxplus \mathcal{G}')\)-Pauli FT gadget that maps a \((\qcode_1\otimes \qcode_1', \mathcal{F}_1\boxplus \mathcal{F}_1)\)-block to a \((\qcode_2\otimes \qcode_2', \mathcal{F}_2'\boxplus \mathcal{F}_2')\)-block.
\end{prop}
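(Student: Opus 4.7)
The plan is to reduce the parallel combination to two separate applications of the Pauli FT property, one for each gadget, by treating the other gadget's block together with the $m$-qubit reference as a reference subsystem. Let $L$ and $L'$ be the location sets of $\gadget$ and $\gadget'$ respectively. The set of locations of $\gadget \otimes \gadget'$ is $L \sqcup L'$, and by the definition of $\boxplus$-avoidance, a fault path $\supp\fault \subseteq L\sqcup L'$ is $\mathcal{G}\boxplus\mathcal{G}'$-avoiding iff $\supp\fault|_L$ is $\mathcal{G}$-avoiding and $\supp\fault|_{L'}$ is $\mathcal{G}'$-avoiding. This is the first, routine, observation.

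Next, I would unpack the hypothesis on the input. Fix a reference system of $m$ qubits and a logical state $\rho \in \mixedstate{(k_1 + k_1' + m)}$. The input state $\sigma$ on $\qcode_1$-block $\otimes\, \qcode_1'$-block $\otimes\, m$ qubits is Pauli $(\mathcal{F}_1 \boxplus \mathcal{F}_1')$-deviated from $(\phi_{\qcode_1} \otimes \phi_{\qcode_1'} \otimes \mathcal{I}_m)(\rho)$. By \cref{def:deviation} and the definition of $\boxplus$, there is a single Pauli superoperator $\mathcal{E}_B$ supported on a set $B = B_1 \sqcup B_2$ with $B_1$ an $\mathcal{F}_1$-avoiding subset of the $\qcode_1$-block and $B_2$ an $\mathcal{F}_1'$-avoiding subset of the $\qcode_1'$-block. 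Because $B_1$ and $B_2$ are disjoint and $\mathcal{E}_B$ is a (product of) Pauli superoperator, we may factor $\mathcal{E}_B = \mathcal{E}_{B_1} \otimes \mathcal{E}_{B_2}$ with each factor Pauli.

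The core step is then to apply each gadget's FT property with a suitably extended reference system. To apply $\gadget$'s property, I would absorb $\mathcal{E}_{B_2}$ and the encoder $\phi_{\qcode_1'}$ into an enlarged reference of size $n_{\qcode_1'} + m$, setting
\[
  \tilde{\rho} \;:=\; \bigl(\mathcal{I}_{k_1} \otimes (\mathcal{E}_{B_2}\circ\phi_{\qcode_1'}) \otimes \mathcal{I}_m\bigr)(\rho) \in \mixedstate{(k_1 + n_{\qcode_1'} + m)}.
\]
Since all operators involved act on disjoint registers, $\sigma = (\mathcal{E}_{B_1}\otimes\mathcal{I})\bigl((\phi_{\qcode_1}\otimes \mathcal{I})(\tilde{\rho})\bigr)$, so $\sigma$ is Pauli $\mathcal{F}_1$-deviated from $(\phi_{\qcode_1}\otimes\mathcal{I})(\tilde{\rho})$ in the sense of \cref{def:pauli-ft-gadget}. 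Applying $\gadget$'s FT property with fault $\fault|_L$ yields an intermediate state that is Pauli $\mathcal{F}_2$-deviated from $(\phi_{\qcode_2}\circ\qOp_1\otimes\mathcal{I})(\tilde{\rho})$; call the corresponding deviation $\mathcal{E}'_{B_1'}$ with $B_1'$ an $\mathcal{F}_2$-avoiding subset of the $\qcode_2$-block. I would then repeat the same reshuffling for $\gadget'$: absorb $\mathcal{E}'_{B_1'}$ and $(\phi_{\qcode_2}\circ \qOp_1)$ into the reference to obtain a $\tilde{\rho}'$ of the right size, and apply $\gadget'$'s FT property to conclude the final output carries a Pauli deviation $\mathcal{E}''_{B_2'}$ with $B_2'$ an $\mathcal{F}_2'$-avoiding subset of the $\qcode_2'$-block.

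Finally, the two output deviation factors $\mathcal{E}'_{B_1'}$ on the $\qcode_2$-block and $\mathcal{E}''_{B_2'}$ on the $\qcode_2'$-block combine into a single superoperator supported on $B_1'\sqcup B_2'$, which is $\mathcal{F}_2\boxplus\mathcal{F}_2'$-avoiding by definition of $\boxplus$. Hence the output is Pauli $(\mathcal{F}_2\boxplus\mathcal{F}_2')$-deviated from $\bigl((\phi_{\qcode_2}\circ\qOp_1)\otimes(\phi_{\qcode_2'}\circ\qOp_1')\otimes\mathcal{I}_m\bigr)(\rho)$, which is exactly the noiseless parallel action followed by encoding, as required. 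The only real obstacle is bookkeeping: keeping the registers and encoders straight while verifying that each reshuffling preserves the ``Pauli deviation from an encoded state'' hypothesis; the Pauli factorization of $\mathcal{E}_B$ on disjoint supports is what makes this clean, which is why we work in the Pauli FT setting of \cref{def:pauli-ft-gadget} rather than the general one.
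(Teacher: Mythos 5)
Your proof is correct and takes essentially the same approach as the paper: both exploit the factorization of the Pauli fault (and of the Pauli input deviation) across disjoint supports, and both use the $m$-qubit reference system from \cref{def:ft-gadget} to absorb the untouched block and its deviation while applying one gadget's FT property, then the other's. Your writeup is considerably more explicit about the register bookkeeping and the sequential reshuffling of $\tilde{\rho}$, $\tilde{\rho}'$, which the paper compresses into a one-line appeal to the commutative diagram, but the underlying argument is identical.
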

\begin{proof}
  Recall that in the definition of a FT gadget that the properties continue to hold when the gadget is supported on only a subsystem.
  If \(\fault[f]\) is a Pauli-fault with a \(\mathcal{G} \boxplus \mathcal{G}'\)-avoiding fault path then we can write it in terms of the restrictions \(\fault[f]=\fault[h]\otimes\fault[h']\) such that \(\left(\gadget \otimes \gadget'\right)[\fault[f]] = \gadget[\fault[h]]\otimes \gadget'[\fault[h']]\).
  The claim follows after utilizing the presence of the \(m\)-qubit subsystem in the definition to apply the Pauli-FT property to each gadget individually.
  In other words, we use the definitions show that the following diagram commutes for any \(m \in \mathbb{N}\).
  \begin{align*}
    \begin{tikzcd}[row sep=1cm, column sep=3cm,ampersand replacement=\&]
      \rho \arrow[r,mapsto,"\qOp\otimes \qOp'\otimes \mathcal{I}_{m}"] \arrow[d,mapsto,"\phi_{\qcode_1}\otimes \phi_{\qcode_1'} \otimes\mathcal{I}_{m}"] \& \rho' \arrow[d,mapsto,"\phi_{\qcode_2}\otimes \phi_{\qcode_2'}\otimes \mathcal{I}_{m}"]\\
      \overline{\rho} \arrow[r,mapsto,"\gadget\otimes \gadget'\otimes \mathcal{I}_{m}"] \arrow[d,rightsquigarrow,"\text{\(\mathcal{F}_1\boxplus\mathcal{F}_1'\)-deviated}"] \& \overline{\rho'}\arrow[d,rightsquigarrow,"\text{\(\mathcal{F}_2\boxplus\mathcal{F}_2'\)-deviated}"] \\
      \sigma \arrow[r,mapsto,"\gadget {[\fault[h]]} \otimes \gadget'{[\fault[h']]} \otimes \mathcal{I}_{m}"] \& \sigma'
    \end{tikzcd}
  \end{align*}  
\end{proof}

We are now ready to give an example employing many of the definitions of the last several pages.
\begin{example}[Circuit correctness union bound]
    Consider a circuit with classical input and output that is the composition of \(V\) compatible FT gadgets \(C:=\gadget_V \circ \dots \circ \gadget_2 \circ \gadget_1\) implementing the classical input-classical output operation \(\qOp:=\qOp_V \circ \dots \circ \qOp_2 \circ \qOp_1\) with bad sets \(\{\mathcal{G}_i\}_{i\in[V]}\).
    Let \(\fault\) be a random Pauli fault of \(G\) distributed according to \(\epsilon\)-local stochastic noise.
    
    We can apply \cref{lemma:gadget-composition} inductively in time\footnote{If \(C\) utilized gadgets in parallel, we would first need to apply \cref{lemma:gadget-side-by-side} inductively on each timestep.} to arrive at a circuit with classical output that is correct if \(\supp\fault\) is \(\mathcal{G}:=\mathcal{G}_V \boxplus \dots \boxplus \mathcal{G}_2 \boxplus \mathcal{G}_1\)-avoiding.
    Suppose that for all \(i \in [V]\), \(f(x)\) is an upper bound for \(\weightenum{\mathcal{G}_i}{x}\) on some interval of \([0,1]\) that contains \(\epsilon\).
    Then, the probability that \(\supp\fault\) is not \(\mathcal{G}\)-avoiding is at most 
    \begin{align*}
        \Pr(\text{\(\supp \fault\) is not \(\mathcal{G}\)-avoiding}) \le \weightenum{\mathcal{G}}{\epsilon} = \sum_{i \in [V]} \weightenum{\mathcal{G}_i}{\epsilon)} \le V \cdot f(\epsilon)
    \end{align*}
    where we have used \cref{def:weight-enumerator} to upper bound the probability that \(\supp\fault\) is not \(\mathcal{G}\)-avoiding and \cref{lemma:enumerator-ring} to evaluate \(\weightenum{\mathcal{G}}{x}\) as a sum over \(\{\weightenum{\mathcal{G}_i}{x}\}_{i\in [V]}\).
\end{example}
For this example, much of the machinery could have been skipped and a standard union bound applied.
However, later we will need to use a similar union-bound type argument in combination with a sort of independence utilizing the product operation.
That is, it is rare for many gadgets with disjoint locations to fail simultaneously.
The weight enumerator polynomial machinery allows us to easily analyze such a scenario which, for example, will appear in the distillation of resource states.

\subsection{Decoherence of errors}
In several of our gadgets, we will need to accept states that are Pauli-deviated from the codespace, but the analysis of our error correction (including state distillation) gadgets only considers states that are diagonal Pauli-deviated from the codespace (differs from a codestate by a diagonal Pauli superoperator).
We show that measurement of the checks of a quantum code reduces the former case to the latter.

\begin{lemma}[Decoherence of errors]\label{lemma:deterministic-errors}
  For a \([[n,k,d]]\) stabilizer code \(\qcode\), let \(\mathcal{M}\) be the channel that measures the \(r\) stabilizer checks of \(\qcode\) and outputs the measurement outcomes.
  For a superoperator \(\mathcal{E}_B\) supported on a recoverable (in the sense of erasure) subset of qubits\footnote{This can be thought of as \(|B| < d\), but, in fact, many sets larger than \(B\) are also recoverable. This fact is used for robustness to stochastic noise.} \(B \subseteq [n]\) and a codestate \(\rho\) of \(\qcode\), the application of the noise superoperator followed by (ideal) measurement of the checks can be written as
  \begin{align}
    \mathcal{M}\circ \mathcal{E}_B(\rho) = \sum_{x \in \F^r} \alpha_x \ketbra{x}{x}\otimes E_x \rho E_x,
  \end{align}
  where each \(E_x\) is a Pauli operator supported on \(B\) and the \(\alpha_x\) are complex coefficients.
  When \(\mathcal{E}_B\) is a physical noise channel, they satisfy \(\sum_x \alpha_x = 1\).
  In other words, measurement of the checks collapses the error into a single Pauli error \emph{as long as we remember the measurement outcome}.
\end{lemma}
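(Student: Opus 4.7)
The plan is to expand the arbitrary superoperator $\mathcal{E}_B$ in the Pauli basis supported on $B$ and then use the structure of syndrome measurement combined with erasure-recoverability to force the sum onto a single Pauli representative per syndrome. Concretely, write $\mathcal{E}_B(\rho) = \sum_{P,Q \in \mathcal{P}(B)} c_{PQ}\, P\rho Q$, where $\mathcal{P}(B)$ denotes the set of Hermitian Pauli operators supported on $B$ and the $c_{PQ}\in\mathbb{C}$ are the coefficients of $\mathcal{E}_B$ in the Pauli basis. The ideal syndrome measurement channel acts as $\mathcal{M}(\sigma) = \sum_{x \in \F^r} \ketbra{x}{x}\otimes \Pi_x \sigma \Pi_x$, where $\Pi_x$ is the projector onto the syndrome-$x$ eigenspace.

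Next I would kill the off-syndrome cross terms. Since $\rho$ is a codestate, $P\rho$ lies entirely in the syndrome-$s(P)$ subspace, where $s(P)\in\F^r$ is the commutation pattern of $P$ with the stabilizer generators, and $\rho Q$ lies in the syndrome-$s(Q)$ subspace. Hence $\Pi_x P\rho Q\Pi_x = \delta_{x,s(P)}\delta_{x,s(Q)}\, P\rho Q$, and only pairs with $s(P)=s(Q)$ survive the measurement. Now invoke the erasure-recoverability of $B$: by the Knill--Laflamme conditions, any two Paulis in $\mathcal{P}(B)$ with identical syndrome must differ by an element of the stabilizer (otherwise they would differ by a nontrivial logical, contradicting erasure-correctability of $B$). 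Picking, for each syndrome $x$ appearing in the sum, a representative $E_x \in \mathcal{P}(B)$, every surviving $P$ can be written as $P = \mu_P\, S_P E_x$ for some stabilizer $S_P$ and phase $\mu_P\in\{\pm 1,\pm i\}$, and similarly for $Q$.

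Using $S_P \rho = \rho = \rho S_P$ together with $S_P E_x = \pm E_x S_P$ (depending on whether $S_P$ commutes or anticommutes with $E_x$), one checks that $P\rho Q = \nu_{PQ}\, E_x \rho E_x$ for some scalar $\nu_{PQ}\in\mathbb{C}$ that records the accumulated phases and signs. Defining $\alpha_x := \sum_{s(P)=s(Q)=x} c_{PQ}\,\nu_{PQ}$ then gives the stated decomposition $\mathcal{M}\circ\mathcal{E}_B(\rho)=\sum_x \alpha_x \ketbra{x}{x}\otimes E_x\rho E_x$ with each $E_x$ supported on $B$. The normalization when $\mathcal{E}_B$ is CPTP follows immediately by tracing both sides: since $\mathcal{M}$ and CPTP $\mathcal{E}_B$ are trace-preserving, $1 = \Tr[\mathcal{M}\circ\mathcal{E}_B(\rho)] = \sum_x \alpha_x\, \Tr(E_x \rho E_x) = \sum_x \alpha_x$, using $E_x^2=I$ and $\Tr\rho=1$.

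The main subtlety is the step that rewrites same-syndrome Paulis on $B$ as stabilizer-equivalent: this is exactly what ``erasure-recoverable $B$'' is supposed to provide, so the real task is to verify that the (somewhat loose) notion of recoverability used in this paper coincides with the Knill--Laflamme condition for erasure, i.e.\ excludes nontrivial logical representatives supported on $B$. Once that is in hand, the rest of the argument is elementary Pauli bookkeeping, and the trace-preservation step takes care of the normalization automatically.
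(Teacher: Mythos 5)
Your proof is correct and follows essentially the same route as the paper's: Pauli-decompose $\mathcal{E}_B$, project by syndrome using the codestate condition, and invoke erasure-recoverability of $B$ to conclude that all surviving same-syndrome Paulis on $B$ are stabilizer-equivalent to a single representative $E_x$. The paper compresses the phase bookkeeping into the identity $\Pi_x a \Pi_0 = \Pi_x a(ab)\Pi_0 = \Pi_x b\Pi_0$ rather than explicitly writing $P = \mu_P S_P E_x$, and it leaves the normalization claim $\sum_x \alpha_x = 1$ unproved (your trace-preservation argument supplies it), but these are presentational differences rather than a different approach.
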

\begin{proof}
  Let \(\{S_i\}_{i \in [r]}\) be a generating set for the stabilizer group of \(\qcode\) such that measurement of \(S_i\) produces the \(i\)-th syndrome bit.
  We can write the projector into each syndrome eigenspace \(x \in \F^r\) as \(\Pi_x = \prod_{i \in [r]}\frac{1}{2}(1 + (-1)^{x[i]}S_i)\), so that
  \begin{align*}
    \mathcal{M}(\rho) = \sum_{x \in \F^r} \ketbra{x}{x}\otimes \Pi_x \rho \Pi_x.
  \end{align*}
  
  The noise superoperator \(\mathcal{E}_B\) can be written as a linear combination of Pauli superoperators supported only on \(B\).
  Thus, for some Pauli operators \(\{K_{\mu}\}_{\mu}\), \(\{K'_{\nu}\}_{\nu}\) and complex coefficients \(\{\beta_{\mu\nu}\}_{\mu,\nu}\), we can write
  \begin{align*}
    \mathcal{E}_B(\rho)  = \sum_{\mu,\nu} \beta_{\mu\nu} K_{\mu} \rho K'_{\nu}.
  \end{align*}
  
  We will decompose this representation by separating operators by syndrome. Let \(\sigma\) be the syndrome map from Pauli operators to their syndromes.
  Recall that \(\rho\) is a code-state so that \(\Pi_0 \rho \Pi_0 = \rho\).
  The post-measurement state is
  \begin{align*}
    \mathcal{M}\circ \mathcal{E}_B(\sigma) &= \sum_{x \in \F^r} \ketbra{x}{x}\otimes \sum_{\mu,\nu} \beta_{\mu\nu} \left(\Pi_xK_{\mu} \rho K'_{\nu} \Pi_x\right)\\
                    &= \sum_{x \in \F^r} \ketbra{x}{x}\otimes \sum_{\mu,\nu} \beta_{\mu\nu} \left(\Pi_xK_{\mu} \Pi_0\rho\Pi_0 K'_{\nu} \Pi_x\right)\\
                    &= \sum_{x \in \F^r} \ketbra{x}{x}\otimes \sum_{\substack{\mu \mid \sigma(K_{\mu}) = x \\ \nu \mid \sigma(K'_{\nu}) = x}} \beta_{\mu\nu} \left(\Pi_xK_{\mu} \Pi_0\rho\Pi_0 K'_{\nu} \Pi_x\right).
  \end{align*}
  The projectors will annihilate any term in the Pauli decomposition that do not have syndrome \(x\), so we can restrict the summation.
  For any two Pauli operators \(a,b\) supported on \(B\) with the same syndrome \(x\), their product \(ab\) has trivial syndrome and is also supported on \(B\).
  By assumption, \(B\) is recoverable from erasure so any Pauli operator with trivial syndrome supported on \(B\) must be in the stabilizer.
  For any codestate of \(\qcode\) the action of an element of the stabilizer is trivial, so that we can write
  \begin{align*}
    \Pi_x a \Pi_0 = \Pi_x a (ab)\Pi_0 = \Pi_x b \Pi_0.
  \end{align*}

  This implies that, for an arbitrary set of Pauli operators \(\{E_x\}_x\) supported on \(B\) satisfying \(\sigma(E_x)=x\), we can write
  \begin{align*}
    \sum_{\substack{\mu \mid \sigma(K_{\mu}) = x \\ \nu \mid \sigma(K'_{\nu}) = x}} \beta_{\mu\nu} \left(\Pi_xK_{\mu} \Pi_0\rho\Pi_0 K'_{\nu} \Pi_x\right) &= \sum_{\substack{\mu \mid \sigma(K_{\mu}) = x \\ \nu \mid \sigma(K'_{\nu}) = x}} \beta_{\mu\nu} \left(\Pi_xE_x \Pi_0\rho\Pi_0 E_x \Pi_x\right) \\
    &= \left(\sum_{\substack{\mu \mid \sigma(K_{\mu}) = x \\ \nu \mid \sigma(K'_{\nu}) = x}} \beta_{\mu\nu}\right) \left(E_x\rho E_x\right). 
  \end{align*}
  The sum in parenthesis is the coefficient in the lemma statement.
\end{proof}

\begin{remark}
  It is often said that coherent errors are ``decohered'' by syndrome measurement into Pauli errors.
  \Cref{lemma:deterministic-errors} is the formal statement of this fact.
  The gadgets retain the measurement record in order to ``purify'' the error and more easily track it.
  This is also the only point where we utilize our different definition of deviation (\Cref{def:deviation}) from \cite{aharonov1999fault, aharonov2008fault}.
  Equivalently, one could ensure that the logical state of the computation is pure and employ \cite[Claim 1]{aharonov2008fault} before using \cref{lemma:deterministic-errors}.
\end{remark}

\section{Proof of main result}\label{sec:main-proof}
In this section, we prove the main result (\cref{thm:main-result-concat}). We will first define some notations. Then, we state several lemmas in~\Cref{subsec:deferred-lemmas}, most of whose proofs will be deferred to later sections. These include the error correction gadget and the state preparation / distillation gadgets. In~\Cref{subsec:compilation}, we describe our choice of a set of primitive logical operations, for which the lemmas from~\Cref{subsec:deferred-lemmas} apply, and show how to compile the circuit using these operations. In~\Cref{subsec:main-thm-vanishing-threshold}, we combine the fault-tolerant gadgets and the compilation lemma to prove the main result, albeit with a slightly vanishing threshold due to the `almost-goodness' of the qLTC family~\cite{dinur2024expansion}. Finally, in~\Cref{subsec:main-thm-constant-threshold} we obtain the main result with a constant threshold by a concatenation step with the scheme of~\cite{yamasaki2024time}.

Let us start by introducing some notations. We will need to refer to sets of qubits at the logical level that are grouped together in some way (think the logical qubits in a quantum code).
We will refer to such groups of qubits as a \emph{register} of qubits.
Later, registers will be encoded into error-correcting code blocks.
With respect to registers, we will refer to an operation as transversal if it acts on every qubit identically.
When necessary, we will denote the target register e.g. \(H(A)\) or \(\CNOT(A,B)\) where \(A\) and \(B\) should be thought of as ``unbound'' variables indicated the target register unless otherwise stated.
For two-qubit gates, and two registers \(A\) and \(B\) with qubits coordinates \(\{1, \dots, k\}\) and \(\{k+1, \dots, 2k\}\), a transversal two-qubit gate such as \(\CNOT(A,B)\) acts as \(\CNOT(1,k+1)\CNOT(2, k+2)\dots \CNOT(k,2k)\).

In the following definitions, for an indexed set \([W]\) separated into registers of equal size \(k_L\), \([W] = A \sqcup B \sqcup ...\), and for an index \(i \in [k_L]\), we use the following abuse of notation \(i_A, i_B \in [N]\) to denote an index of the larger set specified by a choice of one of the registers and an index into that register.
E.g. if \(A = \{1, \dots, k_L\}\) and \(B = \{k_L+1, \dots 2k_L\}\), the notation means \(i_A = i\) and \(i_B = k_L + i\).

We will use qLTC code blocks of the same parameters $[[n_L,k_L,d_L]]$, referred to as the \emph{computational code}. Throughout, we fix an encoding map of the computational code \(\mathcal{E}\) mapping from \(k_L\) qubits to \(n_L\) qubits that is implementable by a Clifford circuit and compatible with the constant quantum-depth encoding circuit from \cref{lemma:constant-depth-encoding}.
Let \([W]\) be the set of qubits of the simulated circuit \(C\).
We will label this Hilbert space (with basis) by \(\mcH_S\) and refer to it as the \emph{simulated Hilbert space}.
We further introduce a second Hilbert space \(\mcH_L\) with basis given by \(m\) registers each of size \(k_L\) qubits, such that $m k_L \geq W$, that we refer to as the \emph{logical Hilbert space}.
We will label the qubits of \(\mcH_L\) by \([m]\times[k_L]\).
Much like how it is standard to decompose an \(N\)-qubit Hilbert space into \(\mcH_2^{\otimes N}\), there are two relevant levels of decomposition of the logical Hilbert space \(\mcH_L\): \(\mcH_L = \left(\mcH_2^{\otimes k_L}\right)^{\otimes m}\) is the tensor product of single-register Hilbert spaces\footnote{This decomposition is particularly important, because the fundamental indivisible unit of space for our compiled circuits will be registers. E.g. quantum operations will act on registers instead of qubits.} which is further a tensor product of single-qubit Hilbert spaces.
Finally, the application of \(\mathcal{E}\) to each register in \(\mcH_L\) induces a third Hilbert space \(\mcH_P= \left(\mcH_2^{\otimes n_L}\right)^{\otimes m}\), the physical Hilbert space.

\subsection{Deferred lemmas}\label{subsec:deferred-lemmas}

\subsubsection{Quantum LDPC and LTC}
We first define the computational code on which the main logical computations in our scheme will be performed. We will eventually choose this code to be the qLTC from~\cref{sec:qltc}.

\begin{definition}[Computational qLDPC code]
  We define the computational code \(\qcode_L=CSS(H_X,H_Z)\) to be an element of a family of \(\Delta\)-qLDPC codes.
  It will have parameters \([[n_L,k_L,d_L]]\), at most \(r_L\) rows
  in \(H_X\) and \(H_{Z}\) (the subscript `L' refers to `LDPC'), and will be \((\rho,\Delta)\)-LTC.
  For two (to be determined) integers \(t_{\mathrm{corr}},~t_L \in [n_L]\) with \(t_L \leq t_{\mathrm{corr}}/3\), we will further define the family of bad sets \(\mathcal{F}_{\mathrm{corr}}\) (\(\mathcal{F}_L\)) associated with the computational code to be all subsets of \([n_L]\) of size \(t_{\mathrm{corr}}\) (\(t_L\)).
  Thus,
  \begin{align}
    \weightenum{\mathcal{F}_{\mathrm{corr}}}{x} &= \binom{n}{t_{\mathrm{corr}}} x^{t_{\mathrm{corr}}}\\
    \weightenum{\mathcal{F}_L}{x} &= \binom{n}{t_L} x^{t_L}.
  \end{align}
  
\end{definition}
Note that, by construction any set \(X \subseteq [n_L]\) of size \(|X| \ge t_L\) will contain, as a subset, at least one element of \(\mathcal{F}_L\), i.e., \(\mathcal{F}_L\) is the smallest set of witnesses for the property \(|X| \ge t_L\).
In other words, this definition identifies all sufficiently large sets as bad sets -- those that could cause a logical error. This definition is compatible with our later use of an adversarial noise decoder for the qLTC from~\cref{sec:qltc}.
One can more carefully define bad sets for the setting of stochastic noise, which is crucial if the code relative distance is smaller than the noise rate.
However, their definition is more involved as it requires the use of the \(\Delta\)-qLDPC property (e.g. see the proof of theorem 3 of \cite{gottesman2013fault}).

With the exception of the statement of \cref{lemma:ec-gadget} and within the proofs of some gadgets, all codeblocks of the computational code will be \((\qcode_L,\mathcal{F}_L)\) blocks.
Note that if a set is \(\mathcal{F}_L\)-avoiding then it is also \(\mathcal{F}_{\mathrm{corr}}\)-avoiding, so nearly all statements will hold with \(\mathcal{F}_L\) replaced by \(\mathcal{F}_{\mathrm{corr}}\).

Later, \(t_{\mathrm{corr}}\) and \(t_L\) will be picked in a way related to the maximum distance for which our bounded-distance decoder (from~\Cref{sec:qltc}) succeeds.
If we have a stochastic-error decoder, we may also pick the sets \(\mathcal{F}_{\mathrm{corr}}\) and \(\mathcal{F}_L\) accordingly: The majority of the arguments in the paper do not explicitly depend on the precise choice.
Our end goal will be to construct a set of compatible gadgets where all inputs and outputs are \(\mathcal{F}_L\)-deviated from the codespace.
Since some of our intermediate gadgets will create or spread errors, we will construct an error correction gadget that (roughly) takes a state \(\mathcal{F}_{\mathrm{corr}}\)-deviated from the codespace to a state that is \(\mathcal{F}_L\)-deviated from the codespace.\footnote{Actually, the majority of our final gadgets will only be Pauli fault-tolerant gadgets, so the states will satisfy the stronger condition of being Pauli deviated from the code space.}

To define our gadget, we introduce the standard definition of a single-shot decoder~\cite{bombin2015single, gu2023single} (for adversarial Pauli errors).
\begin{restatable}[Single-shot decoding]{definition}{definitionSingleShotDecoding}
 Let \(\qcode\) be a quantum CSS code specified by the parity check matrices $H_X \in \mathbb{F}_2^{r_X \times N}$ and $H_Z \in \mathbb{F}_2^{r_Z \times N}$. Suppose the data state carries the Pauli errors $e_X, e_Z \in \mathbb{F}_2^{N}$ and the syndrome measurements incur the errors $m_X \in \mathbb{F}_2^{r_Z}$, $m_Z \in \mathbb{F}_2^{r_X}$, such that the associated noisy syndromes are $\tilde{\sigma}_X = H_Z e_X + m_X$ and $\tilde{\sigma}_Z = H_X e_Z + m_Z$. Let $\mathcal{D}$ be a decoding algorithm that receives $\tilde{\sigma}_X$, $\tilde{\sigma}_Z$ and proposes the Pauli corrections $f_X$, $f_Z$. The decoder is said to be $(\alpha(N), \beta(N), \gamma(N),\eta)$-single-shot, where $\alpha, \beta, \gamma: \mathbb{N}^+ \mapsto \mathbb{R}^{+}$ and $0\leq \eta <1$, if the following holds. For each $P \in \{X,Z\}$, provided that
\begin{align*}
    |e_P|_R + \alpha(N) |m_P| \leq \beta(N) N,
\end{align*}
then the proposed Pauli correction $f_P$ satisfies
\begin{align*}
    |e_P + f_P|_R \leq \eta |e_P|_R + \gamma(N) |m_P|.
\end{align*}
Here $|\cdot|_R$ denotes the stabilizer-reduced weight. When $\eta=0$ we simply say $(\alpha,\beta,\gamma)$-single-shot.
\end{restatable}

In~\Cref{sec:qltc} we prove the following decoder result on the code construction from~\cite{dinur2024expansion}.

\begin{restatable}[Computational almost-good qLTC with single-shot decoding]{theorem}{theoremAlmostGoodQltcWithDecoder}\label{thm:single-shot-qltc}
There exists an explicit $i$-indexed family of CSS quantum LDPC codes with parameters $[[N_i,K_i,D_i]]$, where $K_i = \Theta(N_i)$, $D_i = N_i/\polylog(N_i)$. Any instance in the family is locally testable with soundness parameter $\rho_i = 1/\polylog(N_i)$. Furthermore, the code family admits
\begin{itemize}
    \item A $O(\log N_i)$-time parallel decoder that is $(\alpha, \beta(N_i), \gamma,0)$-single-shot, where $\alpha, \gamma$ are constants and $\beta(N_i) = 1/\polylog(N_i)$.
    \item A $O(N_i)$-time sequential decoder with similar single-shot decoding statement.
    \item A $O(1)$-constant time parallel decoder that is $(\alpha, \beta(N_i), \gamma, \eta)$-single-shot, where  $\alpha, \gamma$ are constants, $\beta(N_i) = 1/\polylog(N_i)$, and $\eta \leq 1/8$ is a constant.
\end{itemize}
In addition, there exists a constant $c$ such that for any $i \in \mathbb{N}^+$, it holds that $\frac{N_{i+1}}{N_i} < c$.
\end{restatable}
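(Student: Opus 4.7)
The plan is to combine three ingredients: the almost-good quantum locally testable code family of Dinur--Lin--Vidick~\cite{dinur2024expansion} (as instantiated with the product-expansion results of~\cite{kalachev2025maximally}), the general single-shot decoder for cubical-complex codes that is established later in~\Cref{sec:qltc}, and a density argument on the choice of parameters. I would begin by fixing the DLV construction on a $t$-dimensional cubical complex with $t \geq 2$ as the underlying family of codes. This construction directly provides the desired parameter profile: constant rate $K_i = \Theta(N_i)$, relative distance $D_i/N_i = 1/\polylog(N_i)$, constant locality $\Delta$, and soundness $\rho_i = 1/\polylog(N_i)$. These are precisely the properties asserted in the first two sentences of the theorem.

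Next, I would invoke the single-shot decoder result of the paper (informally stated as~\Cref{thm:decoder-informal}) to produce the three decoders. Since the DLV code arises from the cubical-complex local sheaf construction, the theorem applies directly and yields a linear-time sequential decoder $\mathcal{D}_{\mathrm{seq}}$ and a logarithmic-time parallel decoder $\mathcal{D}_{\mathrm{par}}$ satisfying a guarantee of the form $A|e|_R + B|m| < d$ implies residual stabilizer-reduced weight bounded by $C|m|$. Translating this into the decoder-definition language gives $(\alpha, \beta(N_i), \gamma, 0)$-single-shot decoding with $\alpha = A/B$ and $\gamma = C/B$ constants, and $\beta(N_i) = D_i/(B \cdot N_i) = 1/\polylog(N_i)$, matching the first two bullets. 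For the third bullet, I would use the quantitative version appearing in~\Cref{thm:decoder-informal}: running $\mathcal{D}_{\mathrm{par}}$ for $O(\tau)$ time guarantees residual weight $2^{-\Theta(\tau)} |e|_R + C|m|$. Choosing $\tau$ to be a sufficiently large constant drives the coefficient $2^{-\Theta(\tau)}$ below $1/8$, delivering the $O(1)$-time single-shot decoder with $\eta \leq 1/8$ and the same constants $\alpha, \gamma$ (up to adjustment) and the same $\beta(N_i) = 1/\polylog(N_i)$ window.

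Finally, I would address the density condition $N_{i+1}/N_i < c$. The DLV construction is parameterized by the order of a base group (or equivalently the size of the underlying cubical complex), and by judiciously sweeping this parameter I can interpolate block lengths so that consecutive blocks grow by at most a constant multiplicative factor, while preserving the constant rate, polylog-inverse distance and polylog-inverse soundness (these quantities depend only on the fixed local sheaf and the asymptotic behavior of the complex). Concretely, I would take the family to consist of one DLV code for each admissible complex size, ordered by $N_i$, and verify that the intrinsic scaling of the construction yields bounded ratios; alternatively, if any gap is left, I can pad the family with slightly modified complexes (e.g.\ by varying the local tensor codes or dimension within the allowed range) to fill the gaps.

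The main obstacle I foresee is the last step: verifying that the cubical-complex construction admits a dense enough sequence of block lengths without degrading any of the constant-rate, polylog-inverse-distance or polylog-inverse-soundness guarantees. All of the other statements reduce cleanly to (i) a black-box invocation of~\cite{dinur2024expansion,kalachev2025maximally} and (ii) the decoder theorem proved later in~\Cref{sec:qltc}; in contrast, the density claim requires one to track how the explicit parameters of the DLV complex discretize the achievable values of $N_i$, and to confirm that nearby admissible parameters produce codes with a bounded ratio of block sizes while still satisfying the quoted polylog bounds uniformly.
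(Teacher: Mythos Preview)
Your proposal is essentially correct and follows the same three-ingredient structure as the paper: invoke the DLV parameters, plug in the decoder results from~\Cref{sec:qltc}, and handle density separately. Two remarks. First, the qLTC property requires $t\ge 4$ (the paper fixes $t=4$ and places qubits at level $2$), not merely $t\ge 2$; the latter gives only qLDPC. Second, and more importantly, your handling of the density condition $N_{i+1}/N_i<c$ is where your sketch diverges from what the paper actually does. The paper does not rely on the DLV instantiation as a black box here: it explicitly re-instantiates the cubical complex using the abelian-lifted expanders of~\cite{jeronimo2021explicit} (\Cref{lem:abelian-lift} and \Cref{cor:cubical-complex}), exploiting the fact that the base graph there can be taken of size $\Theta(n_0)$ for \emph{any} sufficiently large $n_0$, which forces $|G_i|=2^{O(i^{\delta'})}$ with $\delta'<1$ and hence bounded consecutive ratios. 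Your fallback suggestion of ``varying the local tensor codes or dimension'' would not work, since those are fixed constant-sized objects chosen once to satisfy the robustness condition of~\Cref{thm:DLV-robustness}; the density must come entirely from the choice of underlying group size, and this requires a specific expander family rather than the one originally used in~\cite{dinur2024expansion}.
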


Above, we introduce an extra requirement on the growth rate of the code family. This is a useful condition to achieve constant-space overhead for fault-tolerant computation when the required code block size is large.
We refer to code families satisfying this as `computational code family'.
With the exception of the transversal logical measurement gadget in \cref{lemma:computational-code-transversal-gates}, we will use the constant-time decoder parallel decoder from the above theorem.

\subsubsection{Global variables}\label{sec:globals}
Here we introduce ``global'' variables and assumptions that will appear in the remaining sections of the proof.
We will pick \(n_L\) and \(k_L\) in the proof of the main theorem.
The remainder of the paper will assume that that \(k_L = 2^{\nu}\) for some integer \(\nu\).
Since our codes are almost-good, we will introduce \(n_L\)-dependence in a very controlled way:
Let \(\zeta \colon \mathbb{R}^+ \to \mathbb{R}^+\) be a monotonic invertible function \(\zeta(x) = x/\polylog(x)\) such that for all \(N \in \mathbb{N}^+\), \(\zeta(N) \le \beta(N) N\) and \(\zeta(N) \le \rho(N) N\).
For a constant \(\epsilon\), we will use the notation \(\ndep{\epsilon}\) to mean \(\frac{\zeta(n_L)}{n_L} \epsilon\).
In all of our results, when substituted with a good qLTC, one should think of \(\zeta(n_L)\) as being proportional to \(n_L\), e.g., \(n_L/100\).

We will also set
\begin{align}\label{eq:tL-size}
  t_L &= \left\lfloor \frac{\zeta(n_L)}{4} \right\rfloor \le \frac{\beta(n_L)n_L}{4}\\
  t_{\mathrm{corr}} &= 3 t_L~.\label{eq:tcorr-size}
\end{align}
We take \(n_L\) to be at least a large enough constant such that certain parameters within \cref{lemma:ec-gadget} \((s)\) and \cref{lemma:ec-gadget-tester} \((s, t_{\mathrm{test}})\) and \(t_L\) are large enough (at least 1 or 2). %

\subsubsection{Resource state preparation}\label{sec:state-prep-defer}

We also require gadgets to prepare resource states, which will enact the gates in our primitive computational gate set defined in \cref{subsec:compilation}.
These lemmas are proved in \cref{sec:ft-state-distillation}.

The first lemma shows how to prepare various types of stabilizer states on our computational qLTC code.

\begin{restatable}[Stabilizer resource state preparation]{lemma}{lemmaStabilizerResourceStatePrep}\label{lemma:stab-resource-state}
  For a constant \(b \in \mathbb{N}\), \(b < 10\)\footnote{Any absolute constant suffices.}, representing the number of blocks, let \(\{\ket{\psi_i}\}_{i=1}^{k_L}\) be a set of \(b\)-qubit stabilizer states such that \(\ket{\psi} := \ket{\psi_1}\otimes \dots \otimes \ket{\psi_{k_L}}\) is preparable by a constant-depth unitary Clifford circuit \(\qOp_{\mathrm{stab}}\) acting on \(\ket{0}^{\otimes b k_L}\).
  In terms of a variable \(M_{\mathrm{stab}} = e^{\Theta(\log n_L \cdot \loglog n_L)}\), let \(K_{\mathrm{stab}} = \Theta(M_{\mathrm{stab}})\).
  Then, there exists a \((\qOp_{\mathrm{stab}},\mathcal{G}_{\mathrm{stab}})\)-Pauli FT gadget \(\gadget_{\mathrm{stab}}\) preparing \(K_{\mathrm{stab}}\) copies of \(\ket{\overline{\psi}}\), each encoded in \(b\) codeblocks of the computational code such that the \(i\)-th logical qubits of the \(b\) codeblocks are jointly in the state \(\ket{\psi_i}\)\footnote{I.e. preparing \(\psi_1\otimes \dots \otimes \psi_{k_L}\) ``coordinate-wise.''} and \(\weightenum{\mathcal{G}_\mathrm{stab}}{x} \le e^{-\zeta(n_L) + O\left(\polylog n_L\right)}\) for \(x \in [0,\ndep{\epsilon_{*,\mathrm{stab}}})\).
  
  Furthermore, \(\gadget_{\mathrm{stab}}\) has depth \(O(\polylog n_L)\) and width \(O\left(M_{\mathrm{stab}} n_L\right)\)
\end{restatable}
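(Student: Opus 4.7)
The plan is to assemble the gadget from three pieces executed in parallel on roughly $K_{\mathrm{stab}} \cdot N_{\mathrm{SSD}}$ copies: (i) noisy preparation of $|\overline{\psi}\rangle$ in the computational code, (ii) filtering of catastrophic outliers via the qLTC tester, and (iii) multi-level stabilizer state distillation run transversally at the logical level. Here $N_{\mathrm{SSD}} = \Theta(M_{\mathrm{stab}})$ is the SSD batch size required to suppress the per-state infidelity from some constant $\varepsilon_0 < \varepsilon_{*,\mathrm{stab}}$ down to order $e^{-\zeta(n_L)}$, matching the target weight-enumerator decay. The reason for the three-step structure is that cheap preparation alone gives states that can be catastrophically far from the codespace, while distillation alone cannot reach the target infidelity without the qLTC filter and an encoded starting point.

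First I would prepare each of the $K_{\mathrm{stab}} \cdot N_{\mathrm{SSD}}$ copies of the product state $|\psi\rangle=|\psi_1\rangle\otimes\cdots\otimes|\psi_{k_L}\rangle$ at the physical level using the Aharonov--Ben-Or concatenated-code factory (\cref{fact:AB}) to a sufficiently small constant error rate $\varepsilon_0$. Since $b$ and $k_L$ are at most $O(\polylog n_L)$ and the concatenated-code overhead is $\polylog$ in the output size, this costs $\polylog(n_L)$ depth and $O(n_L \polylog n_L)$ width per copy. I would then apply the block-wise constant-depth encoder of the computational code (\cref{lemma:constant-depth-encoding}) to obtain an (unprotected) logical state in $b$ computational blocks arranged so that the $i$-th logical qubits jointly carry $|\psi_i\rangle$. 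Because a single fault in the constant-depth encoder can propagate into a macroscopic deviation from the codespace, the next move is essential: push each copy through the qLTC tester (\cref{sec:tester}) and reject any copy whose syndrome weight exceeds a constant fraction of the soundness threshold. By the $(\rho,\Delta)$-LTC property, every accepted copy is $\mathcal{F}_{\mathrm{corr}}$-deviated from some codestate. I would over-provision the initial count by a constant so that, with overwhelming probability in the weight enumerator sense, enough copies survive.

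Next I would run the SSD protocol of \cref{sec:state-distill-procedure-single} on the accepted, encoded copies. Since SSD only uses CNOT, Pauli, and Pauli-basis measurements, each logical step lifts to a constant-depth transversal Clifford across pairs of $b$-block bundles (\cref{lemma:computational-code-transversal-gates}), interleaved with the qLTC error-correction gadget (\cref{lemma:ec-gadget}) to prevent within-block error accumulation between rounds. Running the multi-level SSD through $\Theta(\loglog n_L)$ levels (to push $\log(1/\varepsilon)$ up to $\Theta(\zeta(n_L))$) outputs $K_{\mathrm{stab}} = \Theta(M_{\mathrm{stab}})$ copies at infidelity below $e^{-\zeta(n_L)}$. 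A depth accounting shows the total is $O(\polylog n_L)$ and, since each copy occupies $b\,n_L = O(n_L)$ physical qubits and the distillation has constant rate, the width is $O(M_{\mathrm{stab}} n_L)$ as claimed.

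The main obstacle is the composite fault-tolerance analysis, which must stitch three heterogeneous layers into a single enumerator bound of the form $e^{-\zeta(n_L) + O(\polylog n_L)}$. I would use \cref{lemma:composition-upper-bound} together with the $\bullet$ composition: the SSD contributes the exponential suppression via its recursive structure (each distillation round approximately squares the fault enumerator), the EC/tester layers contribute only an additive $\polylog(n_L)$ correction in the exponent, and the concatenated-code factory contributes per-copy enumerators that are already suppressed to $1/\poly(n_L)$ and therefore absorb into the same additive term. The delicate step is ensuring no double-counting: a catastrophic outcome of the physical preparation that is filtered out by the tester must be charged once (against the factory enumerator), and the remaining accepted copies must actually satisfy the SSD input hypothesis at every level, which I expect to show by a straightforward induction using the friendliness and composability already established in \cref{lemma:gadget-composition,lemma:gadget-side-by-side}.
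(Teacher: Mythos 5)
There is a genuine gap in the first step. You propose to prepare the \emph{unencoded} physical state $\ket{\psi}$ via the Aharonov--Ben-Or factory, then run the measurement-based encoder of \cref{lemma:constant-depth-encoding} ``unprotected.'' That encoder has $\Theta(n_L)$ quantum locations (ancilla preparations, a depth-$\Delta$ cascade of CNOTs, and $\Theta(n_L)$ ancilla measurements whose outcomes feed Gaussian elimination). At the intended noise scale $p = \ndep{\epsilon_{*}} = \Theta(\zeta(n_L)/n_L)$ the number of faulty measurements in a single naked encoding is concentrated around $p \cdot \Theta(n_L) = \Theta(\zeta(n_L))$, which is not a rare event: it happens with probability $1 - e^{-\Omega(\zeta(n_L))}$ \emph{per copy}. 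A noisy measured syndrome means the Gaussian-elimination Pauli correction is the wrong one, and by design the residual error carries the measurement-error syndrome, so its weight is $\Theta(\zeta(n_L))$. The tester's rejection cutoff $\sigma_{\mathrm{reject}}$ in \cref{lemma:ec-gadget-tester} is $\Theta(\rho(n_L)\, t_{\mathrm{corr}}) = \Theta(\zeta(n_L)/\polylog n_L)$, which is asymptotically smaller. Hence the tester rejects essentially every copy as $n_L \to \infty$. ``Over-provisioning by a constant factor'' cannot fix this, because the acceptance probability is not a constant bounded away from zero --- it tends to $0$. Your $\mathcal{G}_2$-type analysis (not enough survivors) therefore fails.

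This is precisely what \cref{lemma:noisy-prep-gadget} is engineered to avoid. The paper's $\gadget_{\mathrm{noisyprep}}$ does \emph{not} prepare $\ket{\psi}$ unencoded and then encode; it simulates the entire composite $\mathcal{E}_{\qcode_L}\circ\qOp_{\mathrm{prep}}$ inside $r$-AB, so the syndrome measurements and Gaussian elimination of the encoder are all protected by concatenated-code FT, and only afterwards does the AB-unencoding gadget of \cref{lemma:ab-unencode} expose the physical state. The unencoding gadget's bad sets have weight enumerator $O(x)$ on each qubit independently, so the exposed output errors are genuinely sparse (controlled by $\mathcal{F}_{\mathrm{test}}\bullet\mathcal{G}_{r,\mathrm{unencode}}$) rather than being dominated by the $\Theta(n_L)$ unprotected encoder locations. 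Your proposal would need to be repaired to match this, after which it collapses to the paper's proof.

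Two smaller numerical slips: you estimate $\Theta(\loglog n_L)$ SSD levels, but to bring the per-copy infidelity from a constant down to $e^{-\zeta(n_L)}$ you need $(\beta_{\mathrm{distill}} x)^{2^{l}} \le e^{-\zeta(n_L)}$, i.e.\ $2^{l} = \Theta(\zeta(n_L))$ and hence $l = \Theta(\log n_L)$ levels, which is what drives $M_{\mathrm{stab}} = e^{\Theta(\log n_L \cdot \loglog n_L)}$. You also write $k_L = O(\polylog n_L)$, but the computational code has constant rate, so $k_L = \Theta(n_L)$; your final width bound comes out right anyway because each copy occupies $O(n_L)$ physical qubits regardless.
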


The second lemma creates magic states for performing \(\CCZ\) gates on the first three qubits of a computational code block.

\begin{restatable}[Magic resource state preparation]{lemma}{lemmaMagicResourceStatePrep}\label{lemma:magic-resource-state} Let \(\qOp_\mathrm{magic}\) be the depth-2 circuit that prepares the state \(\CCZ(1,2,3) \ket{+}^{\otimes k_L}\). In terms of a variable \(M_{\mathrm{magic}}\) satisfying \(M_{\mathrm{magic}} = e^{O(\log n_L \cdot \loglog n_L)}\), \(M_{\mathrm{magic}} = \Omega(n_L)\),  let
\(K_{\mathrm{magic}} = \Omega(M_{\mathrm{magic}} n_L^{-\tilde{\gamma}(n_L)}) \)
  where \(\tilde{\gamma}(n_L) = O_{n_L\to\infty}\left(\frac{1}{\loglog n_L}\right)= o_{n_L \to \infty}(1)\).
  There exists a \((\qOp_\mathrm{magic},\mathcal{G}_\mathrm{magic})\)-Pauli FT gadget \(\gadget_\mathrm{magic}\) preparing \(K_\mathrm{magic}\) copies of \(\ket{\overline{\CCZ(1,2,3)}}\) such that \(\weightenum{\mathcal{G}_\mathrm{magic}}{x} \le e^{-\zeta(n_L) + O\left(\polylog n_L\right)}\) for \(x \in [0,\ndep{\epsilon_{*,\mathrm{magic}}})\).

  Furthermore,
  \(\gadget_\mathrm{magic}\) has depth \(O(\polylog n_L)\) and width \(O(M_\mathrm{magic} n_L)\).
\end{restatable}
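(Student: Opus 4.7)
The plan is a two-stage construction mirroring the framework of~\Cref{subsubsec:revisiting}: first prepare noisy encoded $\ket{\overline{\CCZ(1,2,3)}}$ states at a sufficiently small constant error rate, then amplify their fidelity using the punctured quantum Reed-Solomon MSD code of~\Cref{sec:magic-state-distillation-code}. The first stage will use the Aharonov--Ben-Or concatenated-code FT (invoked as~\Cref{fact:AB}) to produce noisy physical $\ket{\CCZ}$ states at a constant error $\varepsilon_0 < \varepsilon_{*,\mathrm{magic}}$; because the target is a constant rather than $1/\poly(n_L)$, each state costs only $\polyloglog(n_L)$ spacetime. Each physical $\ket{\CCZ}$ will then be injected into the first three logical qubits of a triple of computational-code blocks via the state-injection gadget of~\Cref{sec:injection-gadget}, with the remaining logical qubits filled with $\ket{+}$ to match $\CCZ(1,2,3)\ket{+}^{\otimes k_L}$. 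Immediately afterwards we apply the qLTC tester of~\Cref{sec:tester} to reject outputs that violate many checks: local testability ensures that any state further than $t_\mathrm{corr}$ from the codespace fails in a single round of testing, so that accepted states are Pauli $\mathcal{F}_\mathrm{corr}$-deviated from $\ket{\overline{\CCZ(1,2,3)}}$.

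On top of the computational code we then iterate $O(\logloglog n_L)$ rounds of the Reed--Solomon MSD protocol, treating each triple of computational-code blocks holding a noisy encoded $\ket{\overline{\CCZ}}$ as a single qudit of the distillation code. Within each round, the required operations split into: transversal $H$ and basis changes on the computational CSS code (performed directly at the physical level); Clifford unencoding and syndrome measurements of the distillation code (performed by gate teleportation using stabilizer resource states from~\Cref{lemma:stab-resource-state}); and transversal CCZ on the distillation code, which at our level becomes logical CCZ between triples of computational-code blocks, realized by gate teleportation with the noisy magic states produced at the previous distillation round. Each round quadratically suppresses the input infidelity and retains a rate-$(1-\tilde{\gamma})$ fraction of qudits, with $\tilde{\gamma} = O(1/\ell)$ at alphabet size $\ell = O(\loglog n_L)$; composing across the $O(\logloglog n_L)$ rounds gives the cumulative yield $\Omega(M_\mathrm{magic}\, n_L^{-\tilde{\gamma}(n_L)})$ with $\tilde{\gamma}(n_L) = O(1/\logloglog n_L)$, matching the statement.

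Finally we assemble $\weightenum{\mathcal{G}_\mathrm{magic}}{x}$ via~\Cref{lemma:composition-upper-bound}: each subgadget (injection+tester, transversal Clifford, teleportation, EC) contributes a low-degree enumerator, and the composition and addition rules of~\Cref{lemma:enumerator-ring} handle both the union bound across parallel qudits and the quadratic error suppression within each round, yielding the claimed bound $e^{-\zeta(n_L) + O(\polylog n_L)}$ on $[0,\ndep{\epsilon_{*,\mathrm{magic}}})$. The main obstacle will be the apparent circularity of using CCZ magic states to prepare CCZ magic states: this is resolved by a strict per-level induction, in which round $i$ consumes only the output of round $i-1$ and the stage-one injection provides the $i=0$ base case, so each fault contributes only additively (never multiplicatively across levels) to the final enumerator. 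A secondary subtlety is that the distillation must address only the first three logical qubits of each computational block; this will be handled by composing with the addressable-logic permutation resource states of~\Cref{subsec:compilation}, whose cost is absorbed into the $O(\polylog n_L)$ exponent and the $O(\polylog n_L)$ depth.
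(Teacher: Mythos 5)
Your high-level plan — noisy preparation via concatenated-code FT followed by qLTC testing, then multi-level Reed--Solomon magic-state distillation performed transversally on the computational code, resolved by strict per-level induction — matches the paper's approach. However, the parameter bookkeeping contains a genuine gap that would break the proof.

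You claim the distillation runs for $O(\logloglog n_L)$ rounds with quadratic per-round suppression at alphabet size $\ell = O(\loglog n_L)$, giving $\tilde\gamma(n_L) = O(1/\logloglog n_L)$. None of these are what the construction actually requires. The PQRS codes in \cref{lemma:magic-distillation-procedure} use a \emph{growing} sequence of alphabet sizes $q_i = \Theta(i)$ with distance scaling linearly in $q_i$, so the $i$-th level suppresses error to the $(t_i+1)$-th power with $t_i \geq i$, not to the second power. The cumulative suppression exponent is therefore $(l+1)!$, not $2^l$. To obtain the required weight-enumerator bound $\weightenum{\mathcal{G}_\mathrm{magic}}{x} \le e^{-\zeta(n_L)}$ one needs $(l+1)! = \Omega(\zeta(n_L))$, which by Stirling forces $l = \Theta(\log n_L / \loglog n_L)$ — not $O(\logloglog n_L)$. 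With only $O(\logloglog n_L)$ rounds of quadratic suppression starting from constant error, the final suppression would be $\varepsilon_0^{2^{O(\logloglog n_L)}} = \varepsilon_0^{\polyloglog(n_L)}$, a doubly exponential shortfall relative to $e^{-\Theta(n_L/\polylog n_L)}$. The resulting loss exponent is then $\tilde\gamma(n_L) = O(1/\loglog n_L)$ via the Lambert-$W$ inversion of $l!$, matching the lemma; your $O(1/\logloglog n_L)$ is weaker and does not satisfy the statement. A secondary but nontrivial inaccuracy: the distillation addresses the \emph{first logical qubit of each of three code blocks}, producing $\ket{\overline{\CCZ(1_A,1_B,1_C)}}$, and only afterward is this converted to $\ket{\overline{\CCZ(1,2,3)}}$ on a single block by two teleported $\SWAP$ operations; ``inject into the first three logical qubits of a triple of blocks'' conflates these and would require addressing machinery that is not available before the $\SWAP$ resource states are in hand.
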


\subsubsection{Computational code EC gadget}
We now construct an error correction gadget for the computational code.

\begin{lemma}[Computational code EC gadget]\label{lemma:ec-gadget}
  For \(n_L\) larger than some constant,\footnote{Dependent on constants of the code family and decoder.} there exists a constant \(\epsilon_{*,\mathrm{EC}}\in(0,1)\) and gadget \(\gadget_\mathrm{EC}\) for the computational code of depth \(O(1)\), width \(n_L\) qubits, and \(A_{\mathrm{EC}}=O(n_L)\) locations such that:
  \begin{itemize}
  \item \(\gadget_\mathrm{EC}\) takes a \((\qcode_L, \mathcal{F}_{\mathrm{corr}})\)-block to a \((\qcode_L, \mathcal{F}_L)\)-block.
  \item There is family of bad fault paths \(\mathcal{G}_{\mathrm{EC}}\subseteq P([A_{\mathrm{EC}}])\) such that \(\gadget_\mathrm{EC}\) is a \((\mathsf{Id}, \mathcal{G}_{\mathrm{EC}})\)-Pauli fault-tolerant gadget (i.e. it performs identity).
  \item \(\mathcal{G}_{\mathrm{EC}}\) satisfies \(\weightenum{\mathcal{G}_{\mathrm{EC}}}{x} \le e^{-\zeta(n_L)}/4\) on \(x \in [0,\ndep{\epsilon_{*,\mathrm{EC}}})\).
  \end{itemize}
\end{lemma}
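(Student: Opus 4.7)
The plan is to build $\gadget_\mathrm{EC}$ in three constant-depth stages: (i) syndrome extraction, measuring all $X$- and $Z$-stabilizer checks of the computational code in parallel via standard constant-depth ancilla extraction circuits (possible since the code is $\Delta$-qLDPC), (ii) invocation of the $O(1)$-time parallel single-shot decoder of \cref{thm:single-shot-qltc} with residual suppression constant $\eta \le 1/8$, and (iii) transversal application of the proposed Pauli correction. The resulting gadget has depth $O(1)$ and $A_\mathrm{EC} = O(n_L)$ fault locations.

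For the analysis, fix a Pauli fault $\fault$ with fault path of weight $w := |\supp\fault|$ and an input state that is Pauli $\mathcal{F}_\mathrm{corr}$-deviated from some codestate $\overline{\rho}$, so that the initial data error $e_\mathrm{in}$ is a Pauli of stabilizer-reduced weight less than $t_\mathrm{corr} = 3 t_L$. Let $\kappa = O(1)$ be an upper bound on the number of data qubits or syndrome bits a single fault can affect through forward propagation in the constant-depth extraction circuit. Then the effective data error $e$ presented to the decoder satisfies $|e|_R \le 3 t_L + \kappa w$, the syndrome error obeys $|m| \le \kappa w$, and the correction-application step can inject up to $\kappa w$ further Pauli errors on the output.

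Pick a constant $c \in (0,1)$ small enough, depending only on $\alpha,\beta,\gamma,\kappa$, so that whenever $w < s := \lceil c\, \zeta(n_L) \rceil$ we have both
\begin{align*}
|e|_R + \alpha|m| &\le 3 t_L + \kappa(1+\alpha)\, w \le \tfrac{3}{4}\,\zeta(n_L) + O\!\left(c\,\zeta(n_L)\right) \le \beta(n_L)\, n_L, \\
|e+f|_R + \kappa w &\le \eta\bigl(3 t_L + \kappa w\bigr) + \gamma \kappa w + \kappa w \le \tfrac{3}{8}\, t_L + O\!\left(c\,\zeta(n_L)\right) \le t_L,
\end{align*}
where the first line is the single-shot precondition of \cref{thm:single-shot-qltc} (using $t_L \le \zeta(n_L)/4$ and $\zeta(n_L) \le \beta(n_L) n_L$) and the second line, which invokes the decoder's guarantee $|e+f|_R \le \eta |e|_R + \gamma |m|$, bounds the total residual error after correction and any post-correction faults. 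The output is therefore Pauli $\mathcal{F}_L$-deviated from $\overline{\rho}$, and taking $\mathcal{G}_\mathrm{EC}$ to be the set of all size-$s$ subsets of $[A_\mathrm{EC}]$ makes $\gadget_\mathrm{EC}$ a $(\mathsf{Id},\mathcal{G}_\mathrm{EC})$-Pauli FT gadget.

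Finally, $\weightenum{\mathcal{G}_\mathrm{EC}}{x} = \binom{A_\mathrm{EC}}{s} x^s \le \bigl(e\, A_\mathrm{EC}/s\bigr)^s x^s$. With $A_\mathrm{EC} = O(n_L)$ and $s = c\,\zeta(n_L)$, substituting $x = \ndep{\epsilon} = (\zeta(n_L)/n_L)\,\epsilon$ cancels the $n_L/\zeta(n_L)$ factor and yields the upper bound $(O(1/c)\,\epsilon)^{c\,\zeta(n_L)}$. Choosing the constant $\epsilon_{*,\mathrm{EC}}$ small enough that $O(1/c)\,\epsilon_{*,\mathrm{EC}} \le e^{-(1+1/c)}$ gives $\weightenum{\mathcal{G}_\mathrm{EC}}{x} \le e^{-(1+c)\,\zeta(n_L)} \le e^{-\zeta(n_L)}/4$ for all $n_L$ large enough. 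The delicate step, which I expect to occupy the bulk of the formal write-up, is a careful bookkeeping of how each single fault simultaneously contributes to the effective data error, the syndrome error, and the post-correction error without double-counting; this relies on the constant depth and $\Delta$-locality of the extraction circuit, and, if needed, on \cref{lemma:deterministic-errors} to convert coherent input deviations to Pauli ones compatible with the single-shot adversarial-error guarantee.
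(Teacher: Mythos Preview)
Your proposal is correct and follows essentially the same approach as the paper: construct the gadget from constant-depth syndrome extraction plus the $\eta\le 1/8$ constant-time parallel single-shot decoder plus Pauli correction, define $\mathcal{G}_{\mathrm{EC}}$ as all size-$s$ fault sets with $s=\Theta(\zeta(n_L))$, verify the single-shot precondition and residual-error bound via the same inequalities, and bound the weight enumerator by the same binomial estimate. The only cosmetic difference is that the paper pushes all circuit faults through to the output (so the decoder corrects only the input error $E_{\mathrm{in}}$, with circuit faults appearing as measurement noise $m$ and an output error $\fault_{\mathrm{out}}$), whereas you absorb pre-measurement faults into the effective data error $e$; both framings lead to the same constants up to bookkeeping, and your closing remark about the decoherence lemma matches the paper's final paragraph handling non-diagonal Pauli input.
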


\begin{proof}
  Our circuit and argument will be a standard construction (e.g. see~\cite{gottesman2013fault}) of a syndrome extraction circuit for a qLDPC code:
  Introduce at most \(r_L\) ancilla qubits, one for each row of \(H_X\), initialized in \(\ket{+}\).
  We can efficiently find an edge coloring \(c : [r_L]\times [n_L] \to [\Delta]\) of the Tanner graph of \(H_X\) using \(\Delta\) colors.
  For each color \(i \in [\Delta]\), we perform \(\CNOT\) controlled on the ancilla for all pairs of data and ancilla qubits in \(c^{-1}(i)\) with a depth-1 circuit.
  Finally, perform \(\Had\) and measure the ancilla qubits in the \(Z\) basis to obtain a syndrome \(\tilde{\sigma}_Z\) (detecting \(Z\) errors).
  We repeat this procedure for the \(Z\) basis with \(\CNOT\) replaced by \(\CZ\) and \(H_X\) by \(H_Z\).
  In the absence of faults, this circuit measures the syndrome of the input state by construction.

  We invoke the decoder on the measurement outcomes \((\tilde{\sigma}_X, \tilde{\sigma}_Z)\) to receive the corrections \((f_X,f_Z)\).
  Let \(\tau = O(1)\) be the runtime of the decoder measured as the number of circuit steps.
  All qubits experience (noisy) identity gates for \(\tau\) timesteps while the decoder runs.
  The final step is to apply the correction \(F_c = X^{f_X} Z^{f_Z}\) (i.e. the restriction of \(F_c\) to a qubit \(i\) is \(F|_i = X^{f_{X}[i]}Z^{f_{Z}[i]}\)).

  Here, we will use the code parameters defined in the statement of \cref{thm:single-shot-qltc}.
  That is, the decoder is implementable by a constant-depth classical circuit and is \((\alpha,\beta(n_L), \gamma, \eta)\)-single shot with \(\eta \le 1/8\).

  Let \(\ell = 2\cdot 2^{2\Delta}\),\footnote{A more careful bound using the structure of the syndrome extraction circuit with \(\ell = O(\Delta)\) is possible e.g. see \cite[lemma 5.2]{pattison2025hierarchical}. Here, we use a lightcone argument for simplicity.} and take \(\mathcal{G}_{\mathrm{EC}}\) to be all subsets of the spacetime locations of size \(s = \left\lfloor \frac{t_L}{\ell \cdot 10 \max(\alpha,\gamma, 8)}\right\rfloor\)
  (recall the definitions of \(t_L, t_\mathrm{corr}\) from~\cref{sec:globals}).
  For \(n_L\) larger than some constant, \(s\) satisfies \(s \ge 2\).
  We now prove that this is a Pauli fault-tolerant gadget.
  Recall from the definition of a Pauli fault-tolerant gadget (\cref{def:pauli-ft-gadget}), we are considering Pauli input errors and Pauli faults.
  To avoid cluttering notation, let us neglect the details of the \(m\)-qubit reference system in \cref{def:ft-gadget}: Everything that we do here extends straightforwardly to a state entangled with a reference system (see \cref{rmk:reference-system}).
  
  First we restrict to the following case:
  Suppose that the input state \(\tilde{\rho}\) is \(\mathcal{F}_{\mathrm{corr}}\)-diagonal Pauli deviated from an encoded state \(\overline{\rho} = \phi_{\qcode_L}(\rho)\).
  That is, \(\tilde{\rho} = E_{\mathrm{in}} \overline{\rho} E_{\mathrm{in}}\) for some Pauli operator \(E_{\mathrm{in}}\) that supported on an \(\mathcal{F}_{\mathrm{corr}}\)-avoiding set.
  Further suppose that the \(\gadget_\mathrm{EC}\) is subject to a diagonal Pauli fault \(\mathbf{f}\) that is \(\mathcal{G}_{\mathrm{EC}}\)-avoiding. By construction of \(\mathcal{G}_{\mathrm{EC}}\), \(\mathbf{f}\) is supported on at most \(s-1\) locations, each with at most \(2\) qubits.
  
  Since \(\gadget_\mathrm{EC}\) (excluding the measurements and decoding circuit) is Clifford and \(\mathbf{f}\) is Pauli, we can ``push'' the Paulis to the end and write
  \begin{align}\label{eq:ec-gadget:push-paulis}
    \gadget_\mathrm{EC}[\mathbf{f}](\tilde{\rho}) = \mathbf{f}_{\mathrm{out}} ~ \gadget_\mathrm{EC}[\mathbf{m}] ~ E_{\mathrm{in}}(\overline{\rho}) = \mathbf{f}_{\mathrm{out}} ~ E_{\mathrm{in}} ~ \gadget_\mathrm{EC}[\mathbf{m}\mathbf{\sigma}](\overline{\rho}),
  \end{align}
  where \(\mathbf{m}\) is a Pauli fault stemming from \(\mathbf{f}\) that is supported only on measurements and is decomposed into a part supported only on syndrome measurements detecting \(X\) errors and syndrome measurements detecting \(Z\) errors \(\mathbf{m} = \mathbf{m}_X\mathbf{m}_Z\).
  If \((\sigma_X, \sigma_Z)\) is the syndrome of \(E_{\mathrm{in}}\), then, associating \(\mathbf{m}_X\) with a corresponding bitstring \(m_X\), \(\tilde{\sigma}_X = \sigma_X + m_X\) and likewise for \(Z\).
  \(\mathbf{\sigma}\) is also a Pauli fault,\footnote{It is a consequence of pushing the input error past the syndrome measurement circuit. We are still measuring the syndrome, but it is unfortunate consequence of our notation that it is technically called a ``fault.''} but it corresponds to the flipped syndrome measurements \(E_{\mathrm{in}}\) i.e. to the tuple \((\sigma_{X}, \sigma_Z)\).
  Let us also decompose \(E_{\mathrm{in}} = X^{e_X}Z^{e_Z}\).

  Since there are at most \(2\Delta\) entangling gates, a fault in a single location can propagate to at most \(\ell\) measurements or output qubits.
  It follows that the measurement error and output error stemming from circuit faults is bounded as
  \begin{align} \label{eq:measurement-error-bound}
    |m_X| &\le \ell (s-1) \\ \label{eq:output-error-bound}
    |\supp \mathbf{f}_{\mathrm{out}} | &\le \ell (s-1)~.
  \end{align}
  Using \cref{eq:measurement-error-bound} and the assumption that the input error is supported on an \(\mathcal{F}_{\mathrm{corr}}\)-avoiding set, we are now ready to show that the preconditions for the decoder (\cref{thm:single-shot-qltc}) are satisfied (utilizing definitions of \(t_L, t_\mathrm{corr}\) from \cref{sec:globals}).
  \begin{align*}
    \alpha |m_X| + |e_X| &\le \ell (s-1) + t_{\mathrm{corr}} \\
     &\le \left(\frac{\alpha}{10 \cdot \max(\alpha, \gamma, 8)} + 3\right) t_L\\
     &\le  \left(\frac{1}{10} + 3\right) \frac{\zeta(n_L)}{4}\\
                  &\le \beta(n_L) n_L
  \end{align*}
  
  It follows that the decoder (which is \((\alpha,\beta(n_L), \gamma, \eta)\)-single shot) returns a correction \(f_X\) such that the reduced weight of \(f_X+e_X\) is bounded.
  That is, there exists an operator \(E_{\mathrm{out},X}\) such that \(E_{\mathrm{out},X} X^{f_X + e_X}\) is in the stabilizer of the code and
  \begin{align*}
  |\supp E_{\mathrm{out},X} | &\le \gamma |m_X| + \eta |e_X| \\
   &\le \gamma \ell (s-1) + \eta t_{\mathrm{corr}}~.
  \end{align*}
  An identical argument holds with the roles of \(Z\) and \(X\) exchanged.

  Let \(F_c\) be the correction applied by the gadget, then since codestates are left invariant by measurement of the stabilizer generators, the output state can be written as (recall \cref{eq:ec-gadget:push-paulis})
  \begin{align}\label{eq:ec-gadget:residual}
     \gadget_\mathrm{EC}[\mathbf{f}](\tilde{\rho}) = \mathbf{f}_{\mathrm{out}} ~ E_{\mathrm{in}} ~ \gadget_\mathrm{EC}[\mathbf{m}\mathbf{\sigma}](\overline{\rho}) = \mathbf{f}_{\mathrm{out}} ~ E_{\mathrm{in}} ~ F_c (\overline{\rho}) = \mathbf{f}_{\mathrm{out}}~E_{\mathrm{out},X}~E_{\mathrm{out},Z}(\overline{\rho}).
  \end{align}
  Finally, summing the bounds on the supports and using \(\eta \le 1/8\), the support of \(\mathbf{f}_{\mathrm{out}}E_{\mathrm{out},X}E_{\mathrm{out},Z}\) is at most
  \begin{align*}
    \left|\supp \left(\mathbf{f}_{\mathrm{out}}E_{\mathrm{out},X}E_{\mathrm{out},Z}\right)\right| &\le 2\left(\gamma \ell (s-1) + \eta t_{\mathrm{corr}}\right) + \ell (s-1) \\
    &\le \left(\frac{2}{10} + 6\eta + \frac{1}{80}\right)t_L\\
    &\le t_L~.
  \end{align*}
  That is, the output is \(\mathcal{F}_L\)-diagonal Pauli deviated from \(\overline{\rho}\).

  It now remains to upper bound the weight enumerator \(\weightenum{\mathcal{G}_{\mathrm{EC}}}{x}\).
  Since the decoder runtime is constant, the number of spacetime locations \(A_{\mathrm{EC}} = \Theta(n_L)\) is linear in the blocklength.
  The size of elements of \(\mathcal{G}_{\mathrm{EC}}\) is  \(s = \Theta(t_{L}) = \Theta(\zeta(n_L))\), so the weight enumerator of \(\mathcal{G}_{\mathrm{EC}}\) can be bounded as 
  \begin{align*}
    \weightenum{\mathcal{G}_{\mathrm{EC}}}{x} &= \binom{A_{\mathrm{EC}}}{s} x^s\\
                     &\le \left(\frac{A_{\mathrm{EC}} e x}{s}\right)^s\\
                     &\le \left((\mathrm{const.})\frac{n_L}{\zeta(n_L)} x\right)^{(\mathrm{const.})\zeta(n_L)},
  \end{align*}
  where the terms \((\mathrm{const.})\) depend on \(\Delta\), \(\gamma\), \(\alpha\), and \(\tau\) (the runtime of the decoder) only.
  Thus, there exists a constant \(\epsilon_{*,\mathrm{EC}} \in (0,1)\) such that for \(x \in (0, \ndep{\epsilon_{*,\mathrm{EC}}})\),
  \begin{align*}
    \weightenum{\mathcal{G}}{x} &\le \left[(\mathrm{const.})\frac{n_L}{\zeta(n_L)} \ndep{\epsilon_{*,\mathrm{EC}}} \right]^{(\mathrm{const.})\zeta(n_L)} \\
                       &\le \left[(\mathrm{const.})  \epsilon_{*,\mathrm{EC}} \right]^{(\mathrm{const.})\zeta(n_L)} \\
                       &\le e^{-\zeta(n_L)}/4.
  \end{align*}

  We now remove the assumption on the input state and the fault:
  Suppose that \(\tilde{\rho}\) is now only (non-diagonal) Pauli-deviated from the codestate \(\overline{\rho}\) i.e. \(\tilde{\rho} = \mathcal{E}_{\mathrm{in}}(\overline{\rho})\) and that \(\fault\) is a general Pauli fault.
  If \(\fault\) were a diagonal Pauli fault, \cref{lemma:deterministic-errors} would immediately apply, so that either: 1) \(\mathcal{E}_{\mathrm{in}}\) can be picked to be a diagonal Pauli superoperator, or 2) the state is in the kernel of the measurement projector.
  In the case that \(\fault\) is not a diagonal Pauli fault, \(\fault[\sigma]\fault[m]\) in \cref{eq:ec-gadget:push-paulis} still must be a diagonal Pauli superoperator due to the measurement projectors.
  Thus, after moving \(\mathcal{E}_{\mathrm{in}}\) across the syndrome measurement circuit, \(\mathcal{E}_{\mathrm{out}}\) can only be non-diagonal on a portion of qubits that is in the lightcone of some non-diagonal portion of \(\fault\).
  In other words, the argument holds for the general case except that \(\fault_\mathrm{out}\), \(E_{\mathrm{out},X}\),  and \(E_{\mathrm{out},Z}\) in \cref{eq:ec-gadget:residual} are non-diagonal Pauli superoperators that resulted from the spreading of \(\fault\) as it was pushed through the circuit (and so are already accounted for).
\end{proof}

The next lemma gives transversal operations on our computational codes.

\begin{lemma}[Computational code transversal gates]\label{lemma:computational-code-transversal-gates}
  There exists a constant \(\epsilon_{*,\mathrm{OP}} \in (0,1)\) such that for any
  \begin{align*}
    \qOp \in \{\CNOT(A,B), \SWAP(A,B), M_Z(A), M_X(A), \PAULI, \mathsf{Id}\},
  \end{align*}
  there exists a gadget \(\gadget_\qOp\) for the computational code of uniform depth \(O(\log n_L)\) and \(A_\qOp = O(n_L \log n_L)\) locations such that:
  \begin{itemize}
  \item \(\gadget_\qOp\) takes \((\qcode_L, \mathcal{F}_L)\)-blocks to \((\qcode_L, \mathcal{F}_L)\)-blocks.
  \item There is family of bad fault paths \(\mathcal{G}_\qOp\subseteq P([A_\qOp])\) such that \(\gadget_\qOp\) is a \((\qOp, \mathcal{G}_\qOp)\)-Pauli fault-tolerant gadget.
  \item \(\mathcal{G}_\qOp\) satisfies \(\weightenum{\mathcal{G}_\qOp}{x} \le \polylog(n_L) e^{-\zeta(n_L)}\) on \(x \in [0,\ndep{\epsilon_{*,\mathrm{OP}}})\).
  \end{itemize}
\end{lemma}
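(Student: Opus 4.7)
The plan is to realize each operation in the list as a transversal physical operation on the qLTC code blocks, padded to a uniform depth, followed by (or, for measurements, replaced by) the use of the computational code decoder. Because the computational code is a CSS code, Pauli operations, $\CNOT(A,B)$, and $\SWAP(A,B)$ are transversal (the latter by simply permuting qubit labels); the identity gadget consists of an idle operation on each qubit; and $M_X(A)$, $M_Z(A)$ are implemented by transversal single-qubit $X$- or $Z$-basis measurements followed by a classical decoding step to extract the logical outcome. For every non-measurement operation, I will append a copy of the EC gadget $\gadget_\mathrm{EC}$ from \cref{lemma:ec-gadget}; for the two measurement gadgets, I will instead invoke the $O(\log n_L)$-time full parallel decoder from \cref{thm:single-shot-qltc} on the measurement outcomes to correct the transversal readout against both data errors and measurement faults before reducing modulo the classical code dual to the stabilizer checks to recover the logical bit. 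The overall depth is padded (by idle gates) to the uniform value $O(\log n_L)$, giving $A_\qOp = O(n_L \log n_L)$ locations.

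For the Pauli fault-tolerance property, the main step is to show that if the input is $\mathcal{F}_L$-Pauli-deviated from an encoded state $\phi_{\qcode_L}(\rho)$ and the fault path avoids $\mathcal{G}_\qOp$, then the output is $\mathcal{F}_L$-Pauli-deviated from $\phi_{\qcode_L}(\qOp(\rho))$. For $\PAULI$, $\mathsf{Id}$, $\CNOT$, and $\SWAP$, this follows by the same push-through-Clifford argument as in \cref{lemma:ec-gadget}: the transversal gate is Clifford, so any incoming Pauli error and any Pauli fault during the transversal step commutes through to become a slightly larger Pauli error on the output blocks (with at most a constant multiplicative increase in reduced weight, bounded by the lightcone $\ell'=O(1)$ of the transversal+EC circuit). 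Crucially, for $\CNOT(A,B)$ a single faulty physical $\CNOT$ can create errors on both the $A$ and $B$ output blocks, but the error on each individual block is still of the same order, and in particular the total is within the tolerance of $\gadget_\mathrm{EC}$. Appending $\gadget_\mathrm{EC}$ on each output block then takes us from an $\mathcal{F}_\mathrm{corr}$-deviated state back to an $\mathcal{F}_L$-deviated state using the already-established guarantees of \cref{lemma:ec-gadget}. For general (non-diagonal) Pauli faults and non-diagonal input errors, I will invoke \cref{lemma:deterministic-errors} inside the EC step (as in \cref{lemma:ec-gadget}) to argue that the syndrome measurements inside EC ``decohere'' the coherent portion of the error, except possibly on qubits inside the lightcone of a non-diagonal fault, which is already accounted for.

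For $M_Z(A)$ and $M_X(A)$, after transversal Pauli measurement the situation reduces to classical decoding of a noisy codeword of the relevant CSS component: the measured bitstring equals a codeword of the classical code $\ker H_X$ (for $Z$-basis measurement) plus the data error, and any Pauli faults during the measurement layer contribute additively to that error with bounded lightcone. Since the assumed stabilizer-reduced data weight plus the induced measurement-string error stays within the threshold $\beta(n_L)n_L$, the $O(\log n_L)$-time parallel decoder from \cref{thm:single-shot-qltc} returns a correction that leaves a residual classical error of reduced weight at most $t_L$, so the declared logical outcome matches that of the noiseless measurement of $\phi_{\qcode_L}(\rho)$. The fault path analysis here is identical to that of \cref{lemma:ec-gadget} with ``output qubits'' replaced by ``output classical bit''.

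Finally, for the weight-enumerator bound, I define $\mathcal{G}_\qOp$ to be the set of fault paths of size at least $s_\qOp = \Theta(t_L)/\ell'$, where $\ell' = O(1)$ is the total lightcone of the transversal step composed with EC (or with the decoder, for measurements). Exactly as in \cref{lemma:ec-gadget}, this choice guarantees that any $\mathcal{G}_\qOp$-avoiding Pauli fault path, combined with an $\mathcal{F}_L$-deviated input, produces an $\mathcal{F}_L$-deviated (or correct, for measurements) output. The number of locations is $A_\qOp = O(n_L \log n_L)$, and the same Stirling-type estimate as in the EC gadget then gives
\begin{align*}
  \weightenum{\mathcal{G}_\qOp}{x} \;\le\; \binom{A_\qOp}{s_\qOp} x^{s_\qOp} \;\le\; \left(\mathrm{const.} \cdot \frac{n_L \log n_L}{\zeta(n_L)} \, x\right)^{\Theta(\zeta(n_L))},
\end{align*}
which, for $x \le \ndep{\epsilon_{*,\mathrm{OP}}}$ with a sufficiently small constant $\epsilon_{*,\mathrm{OP}} \in (0,1)$, is bounded by $\polylog(n_L)\, e^{-\zeta(n_L)}$. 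The main technical obstacle is bookkeeping the two-block spreading of $\CNOT$ faults and the propagation of measurement faults through the classical decoding circuit; both are handled by absorbing the constant lightcone $\ell'$ into $s_\qOp$.
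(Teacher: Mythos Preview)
Your high-level construction (transversal gate, then error correction, then padding; transversal measurement followed by classical decoding) matches the paper's. The gap is in your weight-enumerator analysis, and it is not just bookkeeping.

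You define $\mathcal{G}_\qOp$ as all fault paths of size $\ge s_\qOp = \Theta(\zeta(n_L))$ taken over \emph{all} $A_\qOp = O(n_L \log n_L)$ locations, and then claim
\[
\binom{A_\qOp}{s_\qOp} x^{s_\qOp} \;\le\; \left(\mathrm{const.}\cdot \frac{n_L\log n_L}{\zeta(n_L)}\, x\right)^{\Theta(\zeta(n_L))}
\]
is at most $\polylog(n_L)\,e^{-\zeta(n_L)}$ on $x \le \ndep{\epsilon_{*,\mathrm{OP}}}$. Substituting $x = \epsilon\,\zeta(n_L)/n_L$ gives a base of $\mathrm{const.}\cdot \epsilon \log n_L$, which exceeds $1$ once $n_L$ is large, no matter what constant $\epsilon$ you choose. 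So your bound blows up rather than shrinks; there is no constant $\epsilon_{*,\mathrm{OP}}$ that makes the last inequality true. The extra $\log n_L$ comes precisely from lumping the $O(\log n_L)$ padding rounds into a single block of locations.

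The paper avoids this by two coupled changes. First, it pads with \emph{repeated applications of $\gadget_{\mathrm{EC}}$}, not idle gates; this keeps the state $\mathcal{F}_L$-deviated after every round and makes each padding round an independent gadget with its own $O(n_L)$ locations. Second, it defines $\mathcal{G}_\qOp$ compositionally as $\mathcal{G}_{\mathrm{gate}} \boxplus \mathcal{G}_{\mathrm{EC},A} \boxplus \mathcal{G}_{\mathrm{EC},B} \boxplus \mathcal{G}_{\mathrm{EC,pad}}$: a fault path is bad if it is bad in \emph{any single} component. Each component has $O(n_L)$ locations (no $\log n_L$ factor), so its weight enumerator is $\le e^{-\zeta(n_L)}$ on $[0,\ndep{\epsilon})$ for some constant $\epsilon$, exactly as in \cref{lemma:ec-gadget}; and the sum rule (\cref{lemma:enumerator-ring}) then gives $\weightenum{\mathcal{G}_\qOp}{x} \le O(\log n_L)\, e^{-\zeta(n_L)} = \polylog(n_L)\, e^{-\zeta(n_L)}$. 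Your idle-gate padding also has a secondary correctness issue (faults during padding are never corrected, so the output need not be $\mathcal{F}_L$-deviated), but this is easy to patch; the weight-enumerator failure is the essential one.
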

\begin{proof}
  As in the proof of \cref{lemma:ec-gadget}, we avoid cluttering the notation by neglecting the \(m\)-qubit auxillary system in (\cref{def:ft-gadget}).
  However, all steps will be compatible with its presence.
  All gadgets will be padded to have identical length using the error correction gadget (\cref{lemma:ec-gadget}) at most \(O(\log n_L)\) times.
  Let \(\mathcal{G}_{\mathrm{EC,pad}}\) be the bad fault paths of these error correction gadgets.

  We first prove the case of 2-qubit unitary gates on two encoded registers \(A\) and \(B\):
  Our gadget will be to perform the gate transversally on the two blocks followed by an error correction gadget (\cref{lemma:ec-gadget}) on each of the blocks.
  Clearly, an encoded \(\SWAP(A,B)\) or \(\CNOT(A,B)\) can be executed transversally (recall the computational code is CSS).
  The input state \(\tilde{\rho}\) is Pauli deviated from \(\overline{\rho} = \phi_{\qcode_L}\otimes \phi_{\qcode_L} \left(\rho \right)\) by an error \(E_{\mathrm{in}}\) that is \(\mathcal{F}_L\)-avoiding on each of the two blocks.
  Define the output state \(\rho' = \qOp(\rho)\) and \(\overline{\rho}' = \phi_{\qcode_L}\otimes \phi_{\qcode_L} (\rho')\).

  Let \(\mathcal{G}_{\mathrm{gate}}\) be all subsets of size \(t_L\) of the transversal gates layer and let \(\mathcal{G}_{\mathrm{EC},A}\), \(\mathcal{G}_{\mathrm{EC},B}\) be the families of bad fault paths associated with the error correction gadget defined in \cref{lemma:ec-gadget} associated with the error correction gadget acting on the \(A\) and \(B\) blocks, respectively.
  Our family of bad fault paths is \(\mathcal{G}_\qOp = \mathcal{G}_{\mathrm{gate}} \boxplus \mathcal{G}_{\mathrm{EC},A} \boxplus \mathcal{G}_{\mathrm{EC},B} \boxplus \mathcal{G}_{\mathrm{EC,pad}}\).
  Let \(\mathbf{f}\) be a Pauli fault with a \(\mathcal{G}_\qOp\)-avoiding fault path.
  Let us decompose \(\mathbf{f}\) into a part supported only on the layer of transversal gates, a part supported only on the \(A\) block EC gadget, and a part supported only on the \(B\) block EC gadget \(\mathbf{f} = \mathbf{f}_{\mathrm{gate}} \mathbf{f}_{\mathrm{EC},A} \mathbf{f}_{\mathrm{EC},B}\).

  We proceed to bound the error on the state output by the layer of gates:
  Since \(\supp\mathbf{f}_{\mathrm{gate}}\) is \(\mathcal{G}_{\mathrm{gate}}\)-avoiding, the support of \(\mathbf{f}_{\mathrm{gate}}\) on each block is strictly less than \(t_L\).
  The support of \(E_{\mathrm{in}}\) is \(\mathcal{F}_L\)-avoiding on each of the two blocks (\(\mathcal{F}_L\boxplus \mathcal{F}_L\)-avoiding), so it also has weight strictly less than \(t_L\) on each of them.
  It follows that the state after the layer of transversal gates is then \(E_2 \overline{\rho}' E_2\) where \(E_2\) has weight strictly less than \(3t_L = t_{\mathrm{corr}}\) on each of the blocks.
  By construction of \(\mathcal{G}\), \(\mathbf{f}\) is \(\mathcal{G}_{\mathrm{EC}}\)-avoiding on each of the two EC gadgets, so we may apply \cref{lemma:ec-gadget} to conclude that the output state \(\qOp(\tilde\rho)\) is \(\mathcal{F}_L\boxplus \mathcal{F}_L\)-Pauli deviated from \((\phi_{\qcode_L}\otimes \phi_{\qcode_L})\circ \qOp(\rho)\).

  For \(x \in [0,\epsilon_{*,\mathrm{EC}})\), we can now bound (using \cref{lemma:enumerator-ring})
  \begin{align*}
    \weightenum{\mathcal{G}_\qOp}{x} &= \polylog(n_L) \weightenum{\mathcal{G}_{\mathrm{EC}}}{x} + \weightenum{\mathcal{G}_{\mathrm{gate}}}{x} \\
                     &\le \polylog(n_L)  e^{-\zeta(n_L)}/2 + \binom{n_L}{t_L}x^{t_L} \\
                     &\le \polylog(n_L)  e^{-\zeta(n_L)}/2 + \left[(\mathrm{const.})  x \right]^{(\mathrm{const.})\zeta(n_L)}.
  \end{align*}
  Thus, there exists a constant \(\epsilon_{*,\mathrm{OP}} \in (0, \epsilon_{*,\mathrm{EC}}]\) such that for \(x \in [0,\epsilon_{*,\mathrm{OP}}]\),
  \begin{align*}
     \weightenum{\mathcal{G}_\qOp}{x} \le \polylog(n_L)  e^{-\zeta(n_L)}.
  \end{align*}

  The argument for one-qubit unitaries is identical since any logical Pauli operation can be implemented by a corresponding depth-1 pattern of physical Pauli operations.
  
  For measurements, we simply transversally measure all the data qubits in the appropriate basis.
  As in the proof of the error correction gadget, by \cref{lemma:deterministic-errors} (invoked on the trivial code with one codeword stabilized by all \(Z\)s or all \(X\)s.), any non-diagonal Pauli error is in the kernel of the measurement projector, so it suffices to consider diagonal Pauli errors.
  Constructing \(\mathcal{G}_\qOp\) in the same way, we can obtain a bitstring that has Hamming distance less than \(2t_L/3 \le \beta(n_L) n_L\) from the codespace.
  This allows us to invoke the computational code decoder and perform a (classical) decoding to obtain the measurement outcomes.
  More precisely, we invoke the \((\alpha,\beta(n_L), \gamma, 0)\)-single shot decoder with \(O(\log n_L)\) parallel runtime to obtain a noiseless codeword.
  The eigenvalue of each logical operator can be extracted by computing a size \(O(n_L)\) parity of a subset of bits which has \(O(\log n_L)\) parallel runtime.
\end{proof}

\subsection{Compilation}\label{subsec:compilation}
We will need to turn an arbitrary Clifford+\(\CCZ\) circuit into one that only uses the primitive operations that we will build gadgets for.
Before we give our gate set, we first need to define gates for permutation of qubits.

\subsubsection{Register permutation}
Because we will distill our resource states in bulk, we may only distill a few types.
The number of types will be far fewer than the number of distinct gates that we will need to perform due to the need to address the targets.
Thus, we will need a means of targeting the gates our generic resource states perform.
To do this, we introduce the notion of a \(\SWAP\) state.
\(\SWAP\) states perform teleportations of the qubits within a block.
By conditionally applying a teleportation circuit consuming a \(\SWAP\) state, we will obtain one bit of addressing capability.
We will use a small sized generating set of size \(O(\log k_L)\) to obtain full gate targeting.

\begin{definition}[\(\SWAP\) State]
  Given a set of qubits \([N]\) and a matching \(M \subset [N] \times [N]\), the corresponding \(N\)-qubit \(\SWAP\) state is a state \(\ket{\psi_M}\) where each pair of qubits corresponding to an element of \(M\) is a Bell state \(\ket{\phi_+}\), and each qubit that does not appear in \(M\) is a \(\ket{0}\) state.
  I.e. letting \(U = [N]\setminus\left(\bigcup_{m\in M} m\right)\), \(\ket{\psi_M}\) is the state stabilized by the set of stabilizer generators
  \begin{align}
      \left(\bigcup_{(u,v) \in M} \left\{X_u X_v, Z_u Z_v\right\}\right)\bigcup \{Z_u\}_{u \in U}.
  \end{align}
\end{definition}

\begin{remark}
  Intuitively, before we have the \(\SWAP\) states, we are only capable of coordinate-wise fault-tolerant operations.
  State injection allows us to prepare low-fidelity states that are entangled between coordinates.
  Fortunately, distillation of the \(\SWAP\) states we will use requires only coordinate-wise gates and targeted \(\PAULI\) operations, so we can distill these without first having a gate that generates entanglement between coordinates.
\end{remark}

The \(\SWAP\) state is a product state of Bell states and single qubit states, so we can use this to teleport qubits using the traditional teleportation circuit.

The main operations the \(\SWAP\) states will implement we call \(\ROT\):
\begin{definition}[\(\ROT\)]
  For a register of size \(k_L\), and for any \(i \in \mathbb{Z}_{k_L}\), the operation \(\ROT(i)\) implements a cyclic shift of the register by \(i\) places.
  I.e. the qubit at coordinate \(m\) goes to the coordinate \(m+i\text{ mod }k_L\).

  When multiple register are present, we will subscript by the register that the operation acts on e.g. \(\ROT_A(i)\) implements a shift by \(i\) on register \(A\).
\end{definition}

\begin{definition}[Permutation states and implementing \(\ROT(i)\)]
  Let \(\sigma\) be a permutation on \([k_L]\) and define two registers of \(k_L\) qubits \(A\) and \(B\).
  We define the corresponding permutation state on \(AB\), \(\ket{\Phi_\sigma}_{AB}\) as the \(\SWAP\) state defined by the matching
  \begin{align}
    \left\{(i_A, \sigma[i]_B)\right\}_{i\in[k_L]}.
  \end{align}
\end{definition}
Given a \(k_L\)-qubit state \(\ket{\psi} = \sum_{x \in \F^n} \alpha_x \ket{x}\) and a permutation state \(\ket{\Phi_\sigma}_{AB}\), the application of a teleportation circuit between \(\ket{\psi}\) and the \(A\) register will result in the \(B\) register being left in the state \(\sum_{x\in\F^n} \alpha_x \ket{\sigma x}\).
I.e. the permutation \(\sigma\) is applied to the state.
We call this operation \(\PERM(\sigma)\).
An example of this state is illustrated in \Cref{fig:perm-swap-state}.
For \(i \in \mathbb{Z}_{k_L}\), we use \(\ket{\ROT(i)}\) to refer to the permutation state with \(\sigma\) such that consumption of the permutation state implements the operation \(\ROT(i)\).

\begin{prop}[Classically controlled permutation]\label{prop:controlled-perm}
  Given a permutation state \(\ket{\Phi_\sigma}_{AB}\), using classically controlled Pauli operations, transversal \(\CNOT\), and transversal \(\SWAP\), we can apply classically controlled \(\PERM(\sigma)\) in depth \(O(\log n_L)\).
\end{prop}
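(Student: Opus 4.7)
The plan is to realize classically controlled $\PERM(\sigma)$ by cutting the standard teleportation through $\ket{\Phi_\sigma}_{AB}$ in half and letting the classical bit $c$ determine which of two dual finishing measurements to perform. First, I would verify the unconditional case: applying transversal $\CNOT(D \to A)$, then transversal $M_X(D)$ and $M_Z(A)$ with outcomes $\vec{m}_D, \vec{m}_A$, and finally the Pauli correction $\prod_i X_{\sigma(i)_B}^{\vec{m}_A[i]} Z_{\sigma(i)_B}^{\vec{m}_D[i]}$ on $B$, implements $\PERM(\sigma)$ on the data register. This is a direct stabilizer-propagation calculation using the fact that the resource-state Bell pair sits between $i_A$ and $\sigma(i)_B$, so the correction Paulis are naturally indexed by $\sigma$.

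The crucial observation is to stop the teleportation at its midpoint. After applying only $\CNOT(D, A)$, $M_Z(A)$, and the $\vec{m}_A$-dependent classically controlled $X$-correction on $B$, a short computational-basis calculation shows that the remaining pair $(D, B)$ is in the state $\sum_x \alpha_x \ket{x}_D \otimes \ket{\sigma x}_B$, where $\ket{\psi} = \sum_x \alpha_x \ket{x}$ is the input. This half-consumed permutation-Bell state is symmetric across its two sides: $X$-basis measurement of $D$ with outcome $\vec{m}_D$ leaves $\PERM(\sigma)\ket{\psi}$ on $B$ up to a known $Z$ Pauli indexed by $\sigma$, while $X$-basis measurement of $B$ with outcome $\vec{m}_B$ leaves $\ket{\psi}$ on $D$ up to a known $Z$ Pauli indexed by $\sigma^{-1}$.

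The classically controlled operation is then obtained by choosing which side to measure based on $c$: if $c = 1$, apply transversal $M_X(D)$ followed by the classically controlled $Z$-correction on $B$, yielding $\PERM(\sigma)\ket{\psi}$ on $B$; if $c = 0$, apply transversal $M_X(B)$ followed by the classically controlled $Z$-correction on $D$, yielding $\ket{\psi}$ on $D$. All primitive operations used (transversal $\CNOT$, transversal $M_X$, transversal $M_Z$, and classically controlled Paulis) lie in the allowed gate set, and the identity of the output register (either $B$ or $D$) is a classical function of $c$ that the compilation layer can track when routing subsequent gadgets.

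The depth bound follows immediately: the gadget performs $O(1)$ transversal operations on the computational code, each of depth $O(\log n_L)$ by \cref{lemma:computational-code-transversal-gates}, giving total depth $O(\log n_L)$. The main technical subtlety lies in the Pauli bookkeeping for both dual finishing measurements, in particular verifying that the correction in the $c = 0$ branch is indexed by $\sigma^{-1}$ rather than $\sigma$, reflecting the opposite direction of information flow through the permuted Bell pairs; but this is a routine stabilizer calculation.
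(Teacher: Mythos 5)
Your half-teleportation calculation is correct, but the way you finish has a genuine gap: the output register depends on the classical bit --- the result lives in $B$ when $c=1$ and in $D$ when $c=0$. You wave this off by saying the compilation layer will track it, but that mechanism does not exist in the paper's framework. The compilation (\cref{lemma:compiling} together with the register assignment and serialization lemmas) is a compile-time construction that schedules primitive operations on fixed registers of $\mcH_L$; there is no provision for the physical location of a logical register to float with a runtime classical bit. Concretely, \cref{prop:arb-swap} chains $O(\nu)$ classically controlled $\ROT$s acting on the \emph{same} register, so a register whose identity changes with $c$ would break the addressing scheme that this proposition exists to support. The paper's proof avoids the problem entirely by always routing the output to the fixed input register $C$: conditionally teleport $C$ through $\ket{\Phi_\sigma}_{AB}$, correct on $B$, then conditionally $\SWAP(B,C)$, so that $C$ holds $\PERM(\sigma)\ket{\psi}$ if $c=1$ and the untouched $\ket{\psi}$ if $c=0$.

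The fix is close to what you already have. After the unconditional first half (state $\sum_x\alpha_x\ket{x}_D\ket{\sigma x}_B$), conditionally apply $\SWAP(D,B)$ when $c=1$, then \emph{unconditionally} measure $B$ in the $X$ basis, and apply a $(c,\vec{m}_B)$-controlled $Z$-correction on $D$, with exponent $\sigma(\vec{m}_B)$ when $c=1$ and $\sigma^{-1}(\vec{m}_B)$ when $c=0$. The output is then always in $D$, using only classically controlled transversal $\SWAP$, unconditional transversal measurement, and classically controlled Paulis, and the depth bound is unchanged. As a side remark, both your proposal and the paper's proof as literally written condition a destructive measurement on $c$, while the computation model only permits conditioning of unitary operations; the standard patch is exactly the conditional-$\SWAP$-before-a-fixed-measurement routing, which the variant above handles automatically.
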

\begin{proof}
  Let \(C\) be the input register.
  We will conditionally apply the standard teleportation circuit transversally between registers \(A\) and \(C\):
  If the control bit is \(1\), we perform the teleportation circuit \(M_Z(C) M_X(A)\CNOT(A,C)\) to yield measurement outcomes \(z,x \in \F^{k_L}\).
  For each \(i \in [k_L]\), we apply the controlled Pauli operation \(X^{z[i]}_{\sigma(i)} Z_{\sigma(i)}^{x[i]}\) to the \(\sigma(i)\)-th qubit of \(B\).
  Finally, we peform \(\SWAP(B,C)\) to swap register B with register C.
  If the control bit is \(0\), then we do nothing for the \(O(1)\) timesteps required to execute the other branch of the circuit.
\end{proof}

\begin{prop}[Arbitrary \(\SWAP(i_A,j_B)\)]\label{prop:arb-swap}
  Using \(4\nu\) classically controlled \(\ROT\) operations that are independent of \(i,j\) and one \(\SWAP(1_A,1_B)\) operation we can implement \(\SWAP(i_A,j_B)\).
  We can further implement \(\SWAP(i_A,j_A)\) using an additional two \(\SWAP(1_A,1_B)\) operations and a second register where the state is arbitrary and preserved.
\end{prop}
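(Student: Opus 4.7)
The plan is to realize $\SWAP(i_A,j_B)$ as a conjugation of the fixed gate $\SWAP(1_A,1_B)$ by cyclic shifts that bring positions $i$ and $j$ to position $1$ in their respective registers. Setting $R_A := \ROT_A(1-i)$ and $R_B := \ROT_B(1-j)$, we have the identity
\[
\SWAP(i_A,j_B) \;=\; (R_A^{-1}\otimes R_B^{-1})\,\SWAP(1_A,1_B)\,(R_A\otimes R_B).
\]
Since $k_L=2^{\nu}$, any rotation amount $r\in\mathbb{Z}_{k_L}$ has a $\nu$-bit binary expansion $r=\sum_{\ell=0}^{\nu-1}b_\ell 2^\ell$, so $\ROT(r)=\prod_\ell \ROT(2^\ell)^{b_\ell}$. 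Because $i$ and $j$ are classical, the bits $b_\ell$ are classical control bits, and each of the four rotations above is implementable by $\nu$ classically controlled operations drawn from the fixed generating set $\{\ROT(2^\ell)\}_{\ell<\nu}$. Combined with the single uncontrolled $\SWAP(1_A,1_B)$ sandwiched between them, this yields exactly the claimed $4\nu$ classically controlled $\ROT$ operations.

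For the second part, the plan is to use the standard three-swap-via-scratch identity,
\[
\SWAP(i_A,j_A)\;=\;\SWAP(i_A,1_B)\,\SWAP(j_A,1_B)\,\SWAP(i_A,1_B),
\]
which is a permutation identity independent of the initial content of the scratch register $B$; in particular, $B$ is automatically preserved. Each of the three factors is an instance of the first-part construction specialized to $j=1$ and therefore requires \emph{no} rotation on $B$ at all. Writing the three factors out in full gives a sequence of six rotations on $A$ interleaved with three copies of $\SWAP(1_A,1_B)$. The three adjacent pairs of inverse-then-forward rotations on $A$ now merge using the abelian group law of cyclic shifts, e.g.\ $\ROT_A(i-1)\ROT_A(1-j)=\ROT_A(i-j)$, collapsing the six rotations into four. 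The resulting schedule
\[
\ROT_A(1-i)\cdot\SWAP(1_A,1_B)\cdot\ROT_A(i-j)\cdot\SWAP(1_A,1_B)\cdot\ROT_A(j-i)\cdot\SWAP(1_A,1_B)\cdot\ROT_A(i-1)
\]
uses a total of $4\nu$ classically controlled $\ROT$ operations and three $\SWAP(1_A,1_B)$ operations, i.e.\ exactly two more than the first-part construction, as claimed.

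The only steps requiring a moment of care are the binary decomposition of a classical rotation amount into conditional powers-of-two shifts (immediate once $k_L=2^\nu$), and the merging of consecutive rotations on $A$ between successive $\SWAP(1_A,1_B)$ calls; the latter is legitimate because rotations on a single register form an abelian group, so their composition is just the sum of the shift amounts modulo $k_L$, which can again be written with $\nu$ controlled $\ROT$ gates. No step is expected to pose a serious obstacle.
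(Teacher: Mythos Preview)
Your proof is correct. The first part is identical to the paper's. For $\SWAP(i_A,j_A)$ you take a slightly different but equivalent route: you invoke the standard three-swap-via-scratch identity $\SWAP(i_A,j_A)=\SWAP(i_A,1_B)\,\SWAP(j_A,1_B)\,\SWAP(i_A,1_B)$, expand each factor using part one (with no rotation needed on $B$), and then merge adjacent rotations on $A$ to reach a $4$-rotation/$3$-swap schedule. The paper instead builds the conjugating permutation directly, defining $\mathsf{SETUP}=\ROT_A(j-i)\,\SWAP(1_A,1_C)\,\ROT_A(1-j)$ (which simultaneously sends $j_A\to 1_C$ and $i_A\to 1_A$) and writing $\SWAP(i_A,j_A)=\mathsf{SETUP}^{-1}\,\SWAP(1_A,1_C)\,\mathsf{SETUP}$. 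Both routes yield the same resource count and the same circuit shape (four $\ROT_A$ interleaved with three $\SWAP(1,1)$); yours is a clean corollary of part one via a textbook permutation identity, while the paper's construction avoids the merging bookkeeping at the cost of verifying the action of $\mathsf{SETUP}$ by hand. Two minor slips in your write-up: there are \emph{two} adjacent rotation pairs to merge, not three (six rotations collapse to four by merging twice), and your final displayed schedule is written in circuit order rather than the operator-composition order used in your first identity; neither affects the argument or the count.
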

\begin{proof}
  We first show how to implement \(\ROT(s)\) for arbitrary \(s \in \mathbb{Z}_{k_L}\):
  Recall that \(k_L = 2^{\nu}\).
  Write the binary expansion of \(s = \sum_{m=1}^{\nu} a_m2^{m-1}\) where \(a_m \in \{0,1\}\).
  For each \(m \in [\nu]\), implement \(\ROT(2^{m-1})\) conditioned on \(a_m\).
  \(\ROT(a)\ROT(b) = \ROT(a+b)\), so the implemented operation is \(\prod_{m=1}^{\nu}\ROT(a_m 2^{m-1}) = \ROT\left(\sum_{m=1}^{\nu} a_m2^{m-1}\right) = \ROT(s)\).

  The result for \(\SWAP(i_A,j_B)\) follows from the decomposition:\footnote{Recall that the composition of maps should be read from right to left.}
  \begin{align*}
    \SWAP(i_A,j_B) = \ROT_B(j-1)\ROT_A(i-1)~\SWAP(1_A,1_B)~\ROT_B(1-j)\ROT_A(1-i).
  \end{align*}
  We rotate such that the qubits to be swapped are at the first coordinate, swap, and then apply the inverse rotation.
  For \(\SWAP(i_A,j_A)\), we will reduce to a version of the previous case using a scratch register \(C\).
  We rotate qubit \(j_A\) to the first position and then use the register \(C\) to hold it while we perform \(\SWAP(1+i-j)_A,1_C\).
  \begin{align*}
    \mathsf{SETUP} &= \ROT_A(j-i)\SWAP(1_A,1_C)\ROT_A(1-j)\\
    \mathsf{SETUP}^{-1} &= \ROT_A(j-1)\SWAP(1_A,1_C)\ROT_A(i-j) \\
    \SWAP(i_A,j_A) &= \mathsf{SETUP}^{-1} \SWAP(1_A,1_C) \mathsf{SETUP}.
  \end{align*}
  The qubit initially at position \(j_A\) follows the sequence of positions (left to right) \((j_A, 1_A, 1_C, 1_C, 1_A, (1+i-j)_A, (1+i-j)_A, i_A)\), and the qubit initially at position \(i_A\) follows the sequence of positions \((i_A,(1+i-j)_A,(1+i-j)_A,1_A,1_C,1_C,1_A,j_A)\).
  Since register \(C\) is never rotated, the qubit initially at position \(1_C\) ends at position \(1_C\).
\end{proof}
 
\begin{figure}
    \centering
    \includegraphics{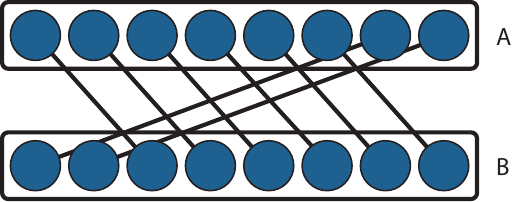}
    \caption{Permutation state corresponding to \(k_L = 8\) with the permutation \(\begin{pmatrix}1 & 2 & 3 & 4 & 5 & 6 & 7 & 8\\ 3 & 4 & 5 & 6 & 7 & 8 & 1 & 2\end{pmatrix}\) i.e. a cyclic shift by \(2\). We also refer to this state as \(\ket{\ROT(2)}\). Lines indicate Bell pairs between registers \(A\) and \(B\).}
    \label{fig:perm-swap-state}
\end{figure}

\subsubsection{Primitive operations}\label{sec:primitive-gateset}
We now give our primitive gate set that we will construct gadgets for.

Let \(\PAULI\) denote the set of 1-qubit Pauli operators.
We take \(H\), \(S\), and \(\CNOT\) as a generating set of the Clifford group and denote this set \(\CLIFF\).
When we augment \(\CLIFF\) by (projective) \(Z\)-basis measurement, we will use \(\CLIFFM\).
We use \(\MEASD_\mathsf{P}\) to denote standard (``destructive") measurements in the basis \(\mathsf{P}\).
Whereas we use \(\MEAS_Z\) to denote computational basis measurements that additionally output a computational basis state initialized to the measurement outcome (``quantum non-demolition").
As a first step, we will decompose any circuit into the gate set \(\PAULI+\CLIFFM+\CCZ\).

However, we will not implement all of these operations directly.
Let \(A\) and \(B\) be arbitrary registers.
Our primitive gateset can be split into three groups.
The first group will be used for implementing all other gadgets via teleportation:
\begin{itemize}
\item \(\MEASD_Z(A)\)
\item \(\MEASD_X(A)\)
\item \(\CNOT(A,B)\)
\item \(\PAULI\)
\end{itemize}
The second group will be used for gate targeting (i.e. addressing logical qubits):
\begin{itemize}
\item \(i \in [\log_2 k_L]\)~\(\ROT(2^i)\)
\item \(\SWAP(1_A, 1_B)\)
\end{itemize}
The final group is used only for computation:
\begin{itemize}
\item \(H(1_A)\)
\item \(S(1_A)\)
\item \(\MEAS_Z(1_A)\)
\item \(\CNOT(1_A, 2_A)\)
\item \(\CCZ(1_A, 2_A, 3_A)\)
\end{itemize}
The second and third groups of gates we call the \emph{primitive computational gates}.

\subsubsection{Register assignment}
\begin{definition}[Register assignment]
  For a circuit \(C\) on the set of qubits \([W]\) and \(m,k\in\mathbb{N}\) such that \(W \le mk\), an \((m,k)\)-register assignment is an injective map \(\phi \colon [W] \to [m]\times[k]\).
\end{definition}

\begin{prop}[Gate compatible register assignment]\label{prop:gate-compat-assignment}
  For \(k\ge 12\), \(m \ge 4W/k\), and one layer \(C\) of non-overlapping Clifford + CCZ gates on \(W\) qubits, there exists a \((m,k)\)-register assignment \(\phi \colon [W] \to [m] \times [k]\) such that each gate in \(C\) is supported only on the qubits \(\phi^{-1}(a,[k])\) for some \(a \in [m]\).
  Furthermore, this register assignment can be found in time \(O(|C|)\).
\end{prop}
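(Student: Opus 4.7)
A single layer $C$ of non-overlapping Clifford$+\CCZ$ gates partitions the gated qubits in $[W]$ into disjoint groups of size at most three (the supports of the individual $1$-, $2$-, or $3$-qubit gates), and each remaining qubit can be treated as its own singleton group. A valid $(m,k)$-register assignment is then precisely a packing of these groups into $m$ bins of capacity $k$ in which no group is split across two bins. Thus the plan is to apply a simple greedy first-fit packing, and to check that it uses at most $m \geq 4W/k$ bins.

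Concretely, I would process the gates of $C$ in any fixed order followed by the remaining ungated qubits, and maintain a current register index $a$ (initialized to $1$) together with the current occupancy of each register. For each incoming group of size $s \in \{1,2,3\}$, if register $a$ has remaining capacity at least $s$, assign the qubits of the group to the next $s$ free slots of register $a$; otherwise advance $a \gets a+1$ and place the group there. Since computing $s$ and updating the occupancy is $O(1)$, and the total number of groups is at most $|C| + W = O(|C|)$ once singletons are appended (and $W \leq 3|C|$ up to ungated qubits, which are themselves enumerable in $O(W) = O(|C|)$ time after a single pass through $C$), the running time is $O(|C|)$ as claimed.

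For correctness I must show that $a$ never exceeds $m$. The key observation is that whenever the algorithm advances past register $a$, the group being placed has size at most $3$, so at that moment register $a$ must already contain at least $k-2$ qubits (otherwise the group would have fit). Consequently, after using $a$ registers, at least $a-1$ of them each contain $\geq k-2$ qubits, giving
\[
  (a-1)(k-2) \;\leq\; W, \qquad\text{hence}\qquad a \;\leq\; 1 + \frac{W}{k-2}.
\]
A direct calculation shows that for $k \geq 12$ one has $1 + W/(k-2) \leq 4W/k$ whenever $W$ is larger than a small absolute constant (e.g.\ $W \geq 5$ when $k = 12$, since $4W/k - W/(k-2) = W(3k-8)/(k(k-2)) \geq 1$). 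The small-$W$ case is trivial because a single register already has capacity $k \geq 12 > W$, and $m \geq \lceil 4W/k\rceil \geq 1$.

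The argument has essentially no obstacle: the main thing to verify is the arithmetic inequality $1 + W/(k-2) \leq 4W/k$, which is slack by a constant factor once $k \geq 12$ because the group sizes are bounded by $3 \leq k/4$. The only mildly delicate point is ensuring that the enumeration of ungated singletons does not blow up the runtime beyond $O(|C|)$; this is fine since we can discover ungated qubits in a single $O(W) = O(|C|)$ sweep over $[W]$ while marking the qubits touched by $C$.
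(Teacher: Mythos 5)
Your proof is correct and uses essentially the same first-fit bin-packing argument as the paper; the paper bounds the register count by noting each advanced-past register holds at least $\lfloor k/3\rfloor$ gates, while you count qubits directly (each advanced-past register holds at least $k-2$ qubits, so $(a-1)(k-2)\le W$), but these are the same algorithm and the same style of accounting. The only nit is the phrase ``small absolute constant'': the threshold $W \ge k(k-2)/(3k-8)$ grows with $k$, so it is not absolute; however your handling of the complementary case $W < k$ (one register suffices) correctly closes that gap, since for $W\ge k\ge 12$ one has $W(3k-8)/(k(k-2)) \ge (3k-8)/(k-2) \ge 2.8 \ge 1$, and for $W<k$ the claim is immediate.
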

\begin{proof}
  First note that the maximum support size of any gate in \(C\) is at most \(3\).
  This is an instance of the \emph{bin packing problem} which requires that we assign a large number of parcels taking different sizes to bins such that no bin exceeds its capacity.
  Our case is that of \emph{high-multiplicity bin packing} for which the number of distinct sizes is \(O(1)\), and efficient approximate algorithms are known \cite{filippi2005asymptotically}.
  
  Here, we describe the first-fit bin packing algorithm \cite{garey1972worst} with a single open bin:
  Maintain pointers \(a \in [m]\) and \(b \in [k]\) initialized to \(1\), called the open bin and load, respectively.
  For each gate \(c\) in the gate list \(C\):
  \begin{enumerate}
    \item If \(\alpha := |\supp c| > k - b + 1\), set \(b = 1\) and increment \(a\).
    \item Assign the qubits \(\supp c\) to positions \(\{b, \dots, b+\alpha\}\) of register \(a\).
    \item Set \(b = b + \alpha\).
  \end{enumerate}
  At least \(\lfloor \frac{k}{3}\rfloor\) gates can be assigned to each register and there are at most \(W\) gates, so at most \(\frac{W}{\frac{k}{3}-1} \le \frac{4W}{k} \le m\) registers will be assigned to.
\end{proof}

\subsubsection{Serialization}
\begin{definition}[Serialized circuits]
  For a circuit \(C\) acting on \(m\) registers of size \(k\), an \(\alpha\)-serialization of \(C\) is a new circuit \(C'\) such that, for every timestep, there are at most \(\frac{m}{\alpha}\) gates and no two gates act on the same register.
\end{definition}

Clearly, given a \(1\)-serialization, it is straightforward to construct an \(\alpha\)-serialization \(\alpha \ge 1\) by a greedy assignment when we do not need to worry about gate conflicts.
\begin{prop}[Serialization subdivision]\label{lemma:serialization-single}
  Given a \(1\)-serialized depth-1 circuit \(C\) acting on \(m\) registers of size \(k\) such that no gate is supported on more than one register and \(\alpha \in [1,m]\)\, we can efficiently compute an \(\alpha\)-serialized circuit \(C'\) of \(C\) such that the depth of \(C'\) is \(2\alpha\).
\end{prop}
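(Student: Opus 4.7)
The plan is a straightforward greedy partition. Since $C$ is $1$-serialized with depth $1$ and, by hypothesis, each gate is supported on a single register, $C$ consists of at most $m$ gates acting on pairwise distinct registers. The key observation is that any partition of this gate list into timesteps will automatically satisfy the ``no two gates on the same register'' requirement of an $\alpha$-serialization, since the registers were already distinct in the depth-$1$ circuit. Thus the only real constraint is the per-timestep gate count $\lfloor m/\alpha \rfloor$.

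Concretely, I would set $\beta := \lfloor m/\alpha \rfloor$ and split the gate list of $C$ into consecutive groups of size at most $\beta$, assigning each group to its own timestep in $C'$. The resulting circuit is $\alpha$-serialized by construction, and the total depth is $\lceil g/\beta \rceil$ where $g \le m$ is the number of gates in $C$. Producing the partition requires only a single linear scan over the gates, giving the claimed efficiency.

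For the depth bound, I would split into two cases according to the size of $\alpha$. If $\alpha \le m/2$, then $\beta = \lfloor m/\alpha \rfloor \ge m/(2\alpha)$, so the number of timesteps is at most $\lceil m/\beta \rceil \le 2\alpha$. If $\alpha > m/2$, then the floor forces $\beta = 1$ (using $\alpha \le m$, so $1 \le m/\alpha < 2$), and the depth is at most $m \le 2\alpha$. Combining the two cases yields the stated bound of $2\alpha$.

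I do not anticipate any real obstacle: the statement is intentionally simple and the only subtlety is the casework that yields the factor of $2$ rather than $\alpha$ itself, which arises solely from the floor in the definition of $\beta$. A slightly tighter bound of $\lceil \alpha \cdot g/m \rceil + 1$ could in principle be obtained, but $2\alpha$ will suffice for the later compilation arguments.
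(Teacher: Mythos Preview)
Your proposal is correct and essentially identical to the paper's proof: both greedily pack the at most $m$ gates into timesteps of capacity $\lfloor m/\alpha\rfloor$, and both verify the depth bound $2\alpha$ via the inequality $2\alpha\lfloor m/\alpha\rfloor \ge m$, splitting into the same two cases $\alpha \le m/2$ and $\alpha > m/2$.
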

\begin{proof}
    Eagerly execute the gates in \(C\) in one of \(2\alpha\) timesteps such that no more than \(m/\alpha\) gates are executed in each timestep.
    In each timestep, we are able to execute \(\lfloor m/\alpha \rfloor\) gates.
    \(C\) contains at most \(m\) gates since it is \(1\)-serialized.
    Thus, after \(2\alpha\) timesteps, there are \(2 \alpha\lfloor m/\alpha \rfloor \ge m\) opportunities to execute a gate.\footnote{Bounding the cases \(\alpha \in [0,m/2]\) and \(\alpha \in [m/2, m]\) separately.}
\end{proof}

We now construct a serialization suitable for general circuits with gates supported on multiple registers.
\begin{lemma}[Serialization]\label{lemma:serialization}
  Given a depth-1 circuit \(C\) acting on \(m\) registers of size \(k\) such that no gate is supported on more than two registers and \(\alpha \in [1,m]\), we can efficiently compute an \(\alpha\)-serialized circuit \(C'\) of \(C\) such that the depth of \(C'\) is \(4k\alpha\).
\end{lemma}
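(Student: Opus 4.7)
The plan is to reduce the problem to gate-scheduling via vertex coloring of a conflict graph, then apply the preceding single-register serialization proposition (or an immediate variant) to each color class. First I would build the conflict graph $H$ whose vertices are the gates of $C$ and whose edges connect any two gates that share at least one register. Since $C$ is depth-$1$ with pairwise disjoint qubit supports, each register $r$ (of $k$ qubits) is touched by at most $k$ gates. A two-register gate $g$ on registers $(r_1,r_2)$ therefore shares a register with at most $(k-1) + (k-1) = 2k-2$ other gates, while a one-register gate has at most $k-1$ neighbors in $H$. Greedy vertex-coloring of $H$ (process gates in arbitrary order, assigning the smallest color not yet used by an already-colored neighbor) then uses at most $2k-1$ colors and runs in time $O(|C|\cdot k)$, which is efficient.

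Each color class $T_i$ consists of gates with pairwise disjoint register supports, so the depth-$1$ subcircuit formed by $T_i$ is $1$-serialized in the sense of \Cref{lemma:serialization-single}. I would then invoke (a mild extension of) that proposition on each $T_i$ to obtain an $\alpha$-serialized circuit of depth $2\alpha$ realizing $T_i$. The proposition as stated covers one-register gates, so I would either extend it or redo the eager-scheduling argument here: distribute the at most $m$ gates of $T_i$ greedily across $2\alpha$ timesteps, placing at most $\lfloor m/\alpha \rfloor$ gates in each; since $2\alpha \lfloor m/\alpha \rfloor \ge m$ every gate fits, and since $T_i$ already has no register conflicts, any subset of $T_i$ scheduled in a single timestep is automatically $\alpha$-serialized.

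Concatenating the $\le 2k$ resulting $\alpha$-serialized blocks in order produces the circuit $C'$, of total depth $2k \cdot 2\alpha = 4k\alpha$, which is $\alpha$-serialized as required. Efficiency follows from the polynomial-time greedy coloring and greedy scheduling steps. The main obstacle I anticipate is tightening constants: a naive use of Shannon's multigraph edge-coloring bound $\lfloor 3\Delta/2 \rfloor$ on the register graph (with single-register gates as self-loops) gives a worse constant than $2k$, so the clean route is through the vertex coloring of the gate-conflict graph $H$ described above, which extracts exactly the bound needed to meet the target depth $4k\alpha$.
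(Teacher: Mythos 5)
Your proof is correct, and it is morally the same as the paper's: both decompose $C$ into $O(k)$ sets of pairwise non-conflicting gates and then serialize each set over $O(\alpha)$ timesteps. The difference is purely in how the decomposition is produced. The paper builds a multigraph on the $m$ registers (an edge per two-register gate), invokes a Vizing-type theorem to edge-color it with $2k$ colors, and then separately distributes the single-register gates into the color classes by a greedy pass. You instead vertex-color the gate-conflict graph $H$ (which is exactly the line graph of that register multigraph, augmented by the single-register gates as low-degree vertices), and a greedy coloring already gives $\le 2k-1$ colors since $\deg_H \le 2k-2$. This is cleaner in two small ways: it handles one- and two-register gates uniformly in one pass, and it needs only the elementary greedy bound $\chi \le \Delta+1$ rather than an edge-coloring theorem, while giving a marginally better constant. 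Both approaches then finish identically by eagerly scheduling each color class over $O(\alpha)$ steps, and both land within the stated $4k\alpha$ depth bound.
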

\begin{proof}
  Two gates are said to be \emph{non-conflicting} if they are supported on qubits in different registers.
  Our procedure will divide the gates of \(C\) into \(2k\) sets of pair-wise non-conflicting gates, and each set will be executed over \((\alpha + 1)\) timesteps.

  Let \(U\) be the set of gates in \(C\) supported on two registers.
  Then, consider the multigraph \(G = ([m],E)\) where each element \((u,v) \in E\) corresponds to a gate of \(U\) with targets in registers \((w_1,w_2)\).
  \(G\) has degree and multiplicity at most \(k\), so a proper edge coloring \(c \colon E \to [2k]\) using at most \(2k\) colors can be efficiently computed \cite{berge1991short}.

  We first partition \(C = Q_1\sqcup Q_2 \dots \sqcup Q_{2k}\) where, for \(i \in [2k]\), \(Q_i\) contains all gates of \(U\) corresponding to the set of edges \(E_i = c^{-1}(i)\) as well as an arbitrarily selected subset of single register gates \(C \setminus U\) such that all elements of \(Q_i\) are pairwise non-conflicting.
  There are at most \(k\) gates of \(C\) supported on each register, and each gate in \(C \setminus U\) is supported on one register.
  Thus, a greedy assignment of elements of \(C \setminus U\) to each \(Q_i\) will succeed.
  
  For each \(i \in [2k]\), we further subdivide \(Q_i = \sqcup_{j \in [\alpha + 1]}Q_{(i,j)}\) such that each subset has at most \(m  /\alpha\) elements i.e. \(|Q_{(i,j)}| \le m/\alpha\) using an argument identical to \cref{lemma:serialization-single}.

  \(C'\) has timesteps \((i,j) \in [2k]\times [\alpha + 1]\) where in timestep \((i,j)\), we execute the gates in \(Q_{(i,j)}\).
  The precise ordering of the timesteps is arbitrary since all gates in \(C\) had disjoint qubit support.
\end{proof}

\begin{lemma}[Compiling]\label{lemma:compiling}
  Assume \(k_L \ge 12\) and \(m \ge 4W/k_L\).
  Given a Clifford + \(\CCZ\) circuit \(C\) acting on \(\mcH_S\) of depth \(T\), we can efficiently construct a new circuit \(C'\) acting on \(\mcH_L\) of depth \(T'\) where
  \begin{itemize}
  \item All operations are primitive computational operations (\cref{sec:primitive-gateset}).
  \item Only one operation is supported on each register per timestep. (\(1\)-serialized)
  \item Only one type of primitive computational operation is used per timestep.\footnote{We treat \(\ROT\) with different parameters as distinct types.}
  \item \(\frac{T'}{T} = O(\nu k_L) = O(k_L\log k_L)\).
  \item $C$ and $C'$ implement the same quantum channel.
  \end{itemize}
\end{lemma}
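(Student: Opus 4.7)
The plan is to process $C$ layer by layer, turning each depth-$1$ layer into a block of $C'$ of depth $O(\nu k_L)$. I first decompose each gate of $C$ into the universal subset $\PAULI+\CLIFFM+\CCZ$ at a constant-factor depth cost, so that every layer then consists of non-overlapping gates of these types.

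For each depth-$1$ layer $C_t$ I would carry out three stages. First, I apply \cref{prop:gate-compat-assignment} (whose hypotheses $k_L\ge 12$ and $m\ge 4W/k_L$ are assumed) to obtain an $(m,k_L)$-register assignment $\phi_t$ under which every gate of $C_t$ has support in a single register. Second (routing), I realize the permutation $\pi_t := \phi_t\circ\phi_{t-1}^{-1}$ on the $W$ active qubits. Here I use the classical fact that every element of $S_W$ factors as a product of two involutions $\pi_t = \sigma_1 \sigma_2$ (for instance by taking $\sigma_1$ to be the cycle-reversal involution of $\pi_t$ and then $\sigma_2 := \sigma_1 \pi_t$, which is directly verified to be an involution). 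Each $\sigma_j$ is a matching of pairwise qubit-disjoint transpositions, so I apply \cref{lemma:serialization} with $\alpha=1$ to each matching of register-level swaps $\SWAP(i_A,j_B)$ to schedule it into a $1$-serialized circuit of depth $4k_L$, then expand each register-level swap into its $O(\nu)$ primitive components via \cref{prop:arb-swap} (intra-register transpositions use the scratch-register variant of \cref{prop:arb-swap} and may be parallelized across registers by assigning distinct scratch registers). Stage~2 thus has depth $O(\nu k_L)$.

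Third (gate execution), within each register $R$ I process its at most $\lfloor k_L/3\rfloor$ assigned gates sequentially while all registers proceed in parallel. For each gate, I use a constant number of within-register swaps via \cref{prop:arb-swap} to bring its targets to the canonical positions $(1)$, $(1,2)$, or $(1,2,3)$, apply the matching primitive computational gate ($H(1_R)$, $S(1_R)$, $\MEAS_Z(1_R)$, $\CNOT(1_R,2_R)$, or $\CCZ(1_R,2_R,3_R)$), and then undo the swap (except for measurements); single-qubit Paulis and destructive basis measurements in $C_t$ are applied directly. Each gate costs $O(\nu)$ primitive operations, and with $O(k_L)$ gates per register in parallel Stage~3 also has depth $O(\nu k_L)$.

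Finally, to enforce the constraint that each timestep contains only one primitive type, I synchronize the micro-steps of the ROT/SWAP expansions across registers so that all active registers perform the same primitive type (e.g.\ all $\ROT(2^i)$ with the same $i$, or all $\SWAP(1_A,1_B)$), using classical control to skip rotations in registers that do not need them; grouping the computational primitives by type adds at most a constant factor. Summing over the $T$ layers gives $T'=O(T\nu k_L)=O(T\, k_L\log k_L)$, and correctness of the simulated channel is immediate since every transposition is a faithful teleported swap (\cref{prop:controlled-perm,prop:arb-swap}) and every primitive gate acts on the intended targets under the labeling $\phi_t$. The main subtlety is the Stage~2 routing bound: without the two-involution decomposition, a generic sorting-network approach would yield only $O(k_L\log W)$ routing depth per layer rather than the target $O(k_L\log k_L)$.
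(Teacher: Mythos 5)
Your proposal is correct and follows essentially the same route as the paper's proof: per-layer register assignment via \cref{prop:gate-compat-assignment}, a depth-$2$ register-level SWAP layer realizing the inter-layer relabeling permutation, serialization of each SWAP layer to depth $O(k_L)$ via \cref{lemma:serialization}, expansion of each $\SWAP(i_A,j_B)$ into $O(\nu)$ primitives via \cref{prop:arb-swap}, and a parallel-over-registers sweep for gate execution using within-register swaps to canonical positions. The only cosmetic difference is that you spell out the two-involution factorization of a permutation to justify the depth-$2$ SWAP layer, which the paper asserts implicitly, and you make the classically-controlled-$\ROT$ synchronization (to obtain one primitive type per timestep) explicit rather than leaving it implicit in the statement of \cref{prop:arb-swap}; neither changes the structure of the argument.
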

\begin{proof}
  Without loss of generality, assume that \(C\) contains only operations in \(\CLIFF_{\MEAS}+\CCZ\) with only one type of gate per layer.
  Otherwise, we may first replace any two-qubit gate in the Clifford group by a constant number of gates in \(\CLIFF_{\MEAS}\), and then replace each layer by a constant number of layers, each containing only one type of gate per layer.
  
  For each layer \(C_t\) of \(C\), we compute a register assignment \(\phi_t\) using \cref{prop:gate-compat-assignment} such that the gates of \(C_t\) have support only on one register each.
  Let \(S_t=S_t^{(2)}S_t^{(1)}\) be a depth-2 layer of \(\SWAP\) gates acting on \(\mcH_L\) that implements the permutation sending qubit \(\phi_t(w)\) to \(\phi_{t+1}(w)\) for each \(w \in [W]\).
  We will seek to implement the circuit:
  \begin{align}
    \left(C_D\right)_{\phi_D} S_{D-1} \left(C_{D-1}\right)_{\phi_{D-1}} \dots S_2 \left(C_2\right)_{\phi_2} S_1 \left(C_1\right)_{\phi_1}.
  \end{align}
  However, we must decompose it further.
  
  For each \(v \in \{1,2\}\), \(S_t^{(v)}\) is the product of operations of the form \(\SWAP(i_A,j_B)\).
  We can decompose \(S_t^{(v)}\) into primitive computational operations by first constructing the \(1\)-serialization (\cref{lemma:serialization}) and then further decompose each \(\SWAP(i_A,j_B)\) operation into power-of-two \(\ROT\) and \(\SWAP(1_A,1_B)\) (\cref{prop:arb-swap}).
  Let \(S_t'\) refer to this decomposition of \(S_t\).
  The depth of \(S_t'\) is \(O(\nu k_L)\).

  Let \(C_t^{(k_L)} C_t^{(k_L-1)}\dots C_t^{(1)}\) be the composition of single layers of operations that is a 1-serialization of \((C_t)_{\phi_t}\).\footnote{Since \((C_t)_{\phi_t}\) only contains gates acting on single registers, this is trivial.}
  For each \(v \in [k_L]\), we (in parallel) swap the support of each gate in \(C_t^{(v)}\) (\cref{prop:arb-swap}) to the first coordinates of the register, perform the corresponding primitive computational operation, and swap the support back.
  Since each gate has support at most \(3\), this can be done in \(O(\nu)\) steps.
  Let \(C_t'\) refer to this sequence of operations (depth \(O(\nu k_L)\)).

  \(C'\) is now simply:
  \begin{align}
    C'  = C_D' S_{D-1}' C_{D-1}' \dots S_2' C_2' S_1' C_1'.
  \end{align}
  By construction \(C'\) satisfies all conditions of the lemma statement.
\end{proof}

\subsection{Main theorem with vanishing threshold}\label{subsec:main-thm-vanishing-threshold}
Here we now prove the main result with a \emph{sub-constant} threshold. This sub-constant threshold is due to our use of the adversarial-noise decoder for the~\emph{sub-linear-distance} quantum locally testable code from~\cref{thm:single-shot-qltc}.
It is reasonable to expect that a decoder capable of decoding ``most'' errors of linear weight exists (a stochastic-noise single-shot decoder).
However, we find that it is simpler to perform a threshold amplification step (\cref{subsec:main-thm-constant-threshold}) instead to amplify the threshold to constant.
\begin{theorem}[Main result with vanishing threshold]\label{thm:main-result-vanish-threshold}
  There exists a function \(f(x)\) growing faster than any \(\quasipoly(x)\) and a value \(\epsilon_{*}\in (0,1)\) such that for any \(\epsilon_L \in (0,1)\) and (Clifford+\(\CCZ\)) classical input / classical output quantum circuit \(C\) with width \(W\) and depth \(D\) satisfying
  \begin{align}
    \frac{WD}{\epsilon_L} &\le f(W)
  \end{align}
  There is a corresponding efficiently constructable  classical input / classical output quantum circuit \(\overline{C}\) with width \(\overline{W}\) and depth \(\overline{D}\) satisfying
  \begin{align}
    \frac{\overline{W}}{W} &= O_{W \to \infty}(1)\\
    \frac{\overline{D}}{D} &= O_{W \to \infty}\left(\left(\log\frac{W D}{\epsilon_L}\right)^{1 + o(1)}\right)
  \end{align}
  and using auxiliary $O(1)$-time classical computation per quantum time step,
  such that the following guarantees hold. There is a family of bad fault paths \(\mathcal{G}\) such that
  \begin{itemize}
  \item For any \(\mathcal{G}\)-avoiding physical fault \(\mathbf{f}\), the output distribution of \(\overline{C}\) subject to \(\mathbf{f}\), \(\overline{C}[\mathbf{f}]\), is equal to the output distribution of \(C\).
  \item \(\weightenum{\mathcal{G}}{x} \le \epsilon_L\) for \(x \in \left[0, \epsilon_* \cdot c\left(\frac{WD}{\epsilon_L}\right)\right]\) where \(c\left(\frac{WD}{\epsilon_L}\right) = \Omega_{W\to\infty}\left(\frac{1}{\polyloglog \frac{W D}{\epsilon_L}}\right)\).
  \end{itemize}
\end{theorem}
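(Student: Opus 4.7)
The plan is to execute the simulated circuit $C$ inside code blocks of the computational qLTC $\qcode_L$ from Theorem~\ref{thm:single-shot-qltc}, compile it into primitive operations using Lemma~\ref{lemma:compiling}, and realize each primitive operation by the fault-tolerant gadgets from Section~\ref{subsec:deferred-lemmas}. Choose \(n_L\) of the form $\log(WD/\epsilon_L)\cdot\polyloglog(WD/\epsilon_L)$ so that $\zeta(n_L) \ge \log(|C'|/\epsilon_L) + \Theta(\log n_L)$, where \(|C'|\) is the length of the compiled circuit; this also ensures $k_L = \Theta(n_L) \ge 12$, and $m = \lceil W/k_L\rceil$ registers are allocated in $\mcH_L$. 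First invoke Lemma~\ref{lemma:compiling} on $C$ to obtain $C'$, which has depth $D\cdot O(k_L \log k_L) = D\cdot (\log(WD/\epsilon_L))^{1+o(1)}$, is $1$-serialized, and uses only one primitive computational operation type per timestep.

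Next, replace each layer of $C'$ by the corresponding gadget. Transversal operations $\CNOT, \SWAP, \PAULI, \MEASD_X, \MEASD_Z, \mathsf{Id}$ on entire registers are supplied by Lemma~\ref{lemma:computational-code-transversal-gates}. The register-addressing operations ($H(1_A), S(1_A), \MEAS_Z(1_A), \CNOT(1_A,2_A), \CCZ(1_A,2_A,3_A), \ROT(2^i), \SWAP(1_A,1_B)$) are executed by gate teleportation (Figures~\ref{fig:Bell}, \ref{fig:ccz}) through resource states supplied by the factory gadgets of Lemmas~\ref{lemma:stab-resource-state} and~\ref{lemma:magic-resource-state}. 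The teleportation circuit itself uses only transversal $\CNOT$, Pauli operations, and destructive measurements; for $\CCZ$, a further layer of stabilizer teleportation applies the $\CZ$ fix-ups. Both the transversal gadgets and the teleportation gadgets include a terminating EC (Lemma~\ref{lemma:ec-gadget}) so that every output is $\mathcal{F}_L$-Pauli-deviated from the code space, and Lemma~\ref{lemma:deterministic-errors} is invoked inside each EC to decohere coherent errors into diagonal Pauli errors before the next gadget consumes the state. The factories run in parallel with the main computation and deliver their output states directly into the code blocks used by the next teleportation; our choice of $n_L$ makes $K_{\mathrm{stab}}, K_{\mathrm{magic}} = \Omega(m\polylog(n_L))$, so a constant number of factories of each type per register group suffice, keeping total width $O(m n_L) = O(W)$.

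The bad-fault-path family $\mathcal{G}$ is the $\boxplus$-sum over gadgets of the individual bad sets, assembled via Propositions~\ref{lemma:gadget-composition} and~\ref{lemma:gadget-side-by-side}. Each logical gate contributes $O(1)$ gadgets with individual bound $\weightenum{\cdot}{x} \le \polylog(n_L)\, e^{-\zeta(n_L)}$ on $x \in [0, \ndep{\epsilon_*}]$, where $\epsilon_* := \min\{\epsilon_{*,\mathrm{EC}}, \epsilon_{*,\mathrm{OP}}, \epsilon_{*,\mathrm{stab}}, \epsilon_{*,\mathrm{magic}}\}$. Summing over all $|C'|$ gadgets yields $\weightenum{\mathcal{G}}{x} \le |C'|\cdot\polylog(n_L)\cdot e^{-\zeta(n_L)}$, which is $\le \epsilon_L$ by the choice of $n_L$. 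Since $\ndep{\epsilon_*} = \epsilon_*\,\zeta(n_L)/n_L = \epsilon_*/\polylog(n_L) = \Omega(\epsilon_*/\polyloglog(WD/\epsilon_L))$, this is the claimed threshold $\epsilon_*\cdot c(WD/\epsilon_L)$ with $c = \Omega(1/\polyloglog(WD/\epsilon_L))$. Because the compiled circuit $C'$ performs the same channel as $C$ on its classical input-output, any $\mathcal{G}$-avoiding physical fault produces exactly the output distribution of $C$.

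The main obstacle is matching space and time overheads simultaneously. For time, the dominant contribution is the $O(k_L \log k_L)$ serialization factor, which is $(\log(WD/\epsilon_L))^{1+o(1)}$ only if $n_L$ is essentially $\log(WD/\epsilon_L)$; for space, the factories must deliver one resource state per register per timestep, and since each factory emits $K_{\mathrm{stab}}$ (or $K_{\mathrm{magic}} \approx M_{\mathrm{magic}}\, n_L^{-\tilde\gamma(n_L)}$) states every $\polylog(n_L)$ steps, matching demand with $O(1)$ factories per gate type requires $K_{\mathrm{stab}}, K_{\mathrm{magic}} = \Omega(m\polylog(n_L))$. Threading this balance through the $\bullet$-composition of bad-set families, using Lemma~\ref{lemma:deterministic-errors} at each EC step to keep deviations diagonal Pauli, and ensuring that Pauli byproducts from teleportation fix-ups are absorbed by the next EC gadget (with the quasi-constant MSD yield $n_L^{-\tilde\gamma}$ contributing only as a sub-constant factor rather than as a polylog blow-up), is the delicate part of the argument.
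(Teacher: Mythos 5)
Your proposal follows the paper's overall plan (compile, replace layers by Pauli-FT gadgets, $\boxplus$-sum their bad fault path families, pick $n_L$), but it has a genuine gap at the last step. You write: ``Because the compiled circuit $C'$ performs the same channel as $C$ on its classical input-output, any $\mathcal{G}$-avoiding physical fault produces exactly the output distribution of $C$.'' This does not follow. All of the gadgets assembled from Lemmas~\ref{lemma:ec-gadget}, \ref{lemma:computational-code-transversal-gates}, \ref{lemma:stab-resource-state}, and~\ref{lemma:magic-resource-state} are only \emph{Pauli} fault-tolerant gadgets (Definition~\ref{def:pauli-ft-gadget}): the composition lemmas only give you that $\overline{C}$ is a $(C,\mathcal{G})$-\emph{Pauli} FT gadget. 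A $\mathcal{G}$-avoiding physical fault $\mathbf{f}$ is a CPTP fault, not a Pauli one, and Lemma~\ref{lemma:deterministic-errors} by itself does not promote Pauli fault-tolerance of a composite gadget to general fault-tolerance. The paper closes this gap with a separate ``Fault analysis and reduction to Pauli noise'' step: decompose $\mathbf{f}$ as a complex-linear combination of Pauli faults supported on $\supp\mathbf{f}$ (hence all $\mathcal{G}$-avoiding), observe that for each such Pauli fault the output distribution is \emph{proportional} to that of $C$ with a fault-dependent coefficient $c_{\mathbf{g}^{(\mathsf{P})}}$, sum by linearity to get a single overall factor $c$, and then argue that $c=1$ because both $\mathbf{f}$ and $C$ are CPTP so the output distribution is normalized. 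Without this step your proof only establishes the theorem for Pauli faults, which is strictly weaker than the stated claim for arbitrary physical faults.

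There are also two smaller inaccuracies in your accounting. First, the claim ``$K_{\mathrm{stab}}, K_{\mathrm{magic}} = \Omega(m\polylog(n_L))$'' is false for large $W$: $K_{\mathrm{stab}} = e^{\Theta(\log n_L\cdot\loglog n_L)} = \quasipolylog(W)$ while $m = \Theta(W/\log W)$, so $m \gg K_{\mathrm{stab}}$ in general. The paper instead runs $\beta = \lceil m / K_{\mathrm{stab}}\rceil$ factories in parallel and bounds the width using $M_{\mathrm{stab}}/K_{\mathrm{stab}} = \Theta(1)$ plus the growth condition $WD/\epsilon_L \le f(W)$ to absorb the residual $\quasipolylog(W)\cdot n_L$ term into $O(W)$; that growth condition is the whole reason $f$ appears in the statement, and your proof never explains why it is needed. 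Second, because the magic-state factories have the sub-constant yield $K_{\mathrm{magic}} = \Omega(M_{\mathrm{magic}} n_L^{-\tilde\gamma(n_L)})$, the $\CCZ$ layers must be $\alpha$-serialized with $\alpha = O(n_L^{\tilde\gamma(n_L)})$ before the teleportations; this is what produces the $n_L^{1+\tilde\gamma(n_L)}\polylog n_L$ depth overhead (and hence the $1+o(1)$ exponent in the theorem), rather than the cleaner $O(k_L\log k_L)\cdot\polylog n_L$ your proposal implies. You gesture at this in your last paragraph, but it is not actually reconciled with the earlier depth count.
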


It is an immediate corollary of the above theorem that the output distribution of \(\overline{C}\) subject to a random fault distributed according to a \(p\)-locally stochastic faults model is \(\epsilon_L\)-close in total-variation distance (TVD) for \(p \le O\left(\frac{1}{\polyloglog \frac{W D}{\epsilon_L}}\right)\).
However, maintaining the weight enumerator allows us to work with the circuit further in the next section.

Our proof is split into roughly two parts which we organize into parts for readability that should be read as a single proof.

  \noindent \emph{Proof.} \paragraph{Setup and compilation}  We use the code family from \cref{thm:single-shot-qltc}.
  If the number of logical qubits is not a power-of-two, we ignore some fraction \(<1/2\) of them, such that the remaining number of logical qubits is \(k_L = 2^{\nu}\).
  This harms our rate by at most \(1/2\).
  There exists a minimum (constant) computational code size \(n_{\mathrm{min}}\) such that all properties required in \cref{sec:globals} hold.
  
  Since the error analysis depends on \(n_L\), we will do it for a general \(n_L\) and pick \(n_L\) at the end.
  We will indicate the circuits that appear in the main steps with a superscript \((n_L)\) to denote that \(n_L\) has not yet been selected.
  For now, we simply require \(n_{\mathrm{min}} \le n_L \).
  
  We use a number of registers \(m = \left\lceil W / k_L \right\rceil\) (using the constant rate).
  We apply \Cref{lemma:compiling} to \(C\) to arrive at an equivalent circuit \(C'^{(n_L)}\) with depth overhead \(O(k_L \log k_L)\) which: 1) Only contains primitive computational operations. 2) Only applies one type of primitive computational operation per timestep. 3) At most one primitive computational operation is applied per register per timestep (is \(1\)-serialized).
  
  \paragraph{Fault-tolerant circuit construction} Our fault-tolerant circuit \(\overline{C}^{(n_L)}\) will have a computational code block for each register of \(C'^{(n_L)}\).
  Let \(D' = D \cdot O(k_L \log k_L)\) be the depth of \(C'^{(n_L)}\) and for each timestep \(t \in [D']\) of \(C'^{(n_L)}\), let \(C'_t\) be the corresponding layer of gates in \(C'^{(n_L)}\).
  The timesteps of \(\overline{C}^{(n_L)}\) will be partitioned into \(D'\) \emph{work periods} where, in each work period, we accomplish the work of a single layer of gates of \(C'^{(n_L)}\).
  
  Let \(K\) be the number of copies of the resource state prepared by a single stabilizer state preparation gadget (\cref{lemma:stab-resource-state}), and define \(\beta = \lceil m/K \rceil\) the number of stabilizer state preparation gadgets required to \(m\) output states.

  We begin the circuit by executing \(\beta\) stabilizers state preparation gadgets to prepare \(m\) \(\ket{\overline{0}^{\otimes k_L}}\) states.\footnote{It is notable that our gadget does not require any input \(\ket{\overline{0}}\) states to operate on, so we can use it to initialize the computation.}
  This takes time \(O(\polylog n_L)\).
  
  Now, consider a given work period \(t \in [D']\).
  By construction, only a single type of primitive computational operation \(\mathsf{O}\) (\cref{sec:primitive-gateset}) is applied in \(C_t'\) and each register is only involved in at most one primitive computational operation.
  I.e. there are at most \(m\) operations to be performed.
  In the following, the $\mathsf{Id}$ gadget in~\Cref{lemma:computational-code-transversal-gates} is repeatedly applied to any of the \(m\) computational code blocks that are not otherwise involved in a gadget.
    
  If \(\mathsf{O}\) is a unitary Clifford operation or \(\MEAS_Z(1_A)\) (projective computational basis measurement), let \(\ket{\mathsf{O}}\) denote the corresponding stabilizer resource state (defined in \cref{sec:prelim}).
  This resource state is supported on at most \(4\) registers, so we can use the stabilizer state preparation gadget (\cref{lemma:stab-resource-state}) to prepare it.
  We run \(\beta = \lceil m/K \rceil\) state preparation gadgets in parallel to produce \(m\) copies of the encoded resource state \(\ket{\overline{\mathsf{O}}}\) and then consume it via state teleportation (\cref{lemma:computational-code-transversal-gates}) to perform \(\mathsf{O}\) (\cref{sec:prelim}).
  If \(\mathsf{O}\) is \(\MEAS_Z(1_A)\) we additionally store the logically decoded value.
  This takes time \(O(\polylog n_L)\).

  Otherwise, \(\mathsf{O}\) is \(\CCZ(1,2,3)\).
  We use the magic state preparation gadget (\cref{lemma:magic-resource-state}) to prepare \(\ket{\overline{\CCZ(1,2,3)}}\).
  Let \(M_\textrm{magic}\) be the variable from the statement of \Cref{lemma:magic-resource-state}, so that a single magic state preparation gadget produces \(K_\textrm{magic} = \Omega\left(M_\textrm{magic} n_L^{-\tilde{\gamma}(n_L)}\right)\) copies of \(\ket{\overline{\CCZ(1,2,3)}}\) and uses space \(O(M_\textrm{magic} n_L)\).
  We execute an \(\alpha\)-serialized version\footnote{Recall that \(C_t\) is already \(1\)-serialized.} of \(C_t'\) (\cref{lemma:serialization-single}) over \(2\alpha\) steps with \(\alpha = \lceil M_\textrm{magic}/K_\textrm{magic} \rceil = O\left(n_L^{\tilde{\gamma}(n_L)}\right)\).
  In each step, we run \(\beta_\textrm{magic} = \left\lceil m/M_\textrm{magic} \right\rceil\) magic state preparation gadgets to produce the \(\frac{m}{\alpha}\) magic states required to execute a single step of the \(\alpha\)-serialized circuit.
  In parallel, we also run 3 sets of \(\beta\) stabilizer state preparation gadgets, preparing \(m\) each of the resource states:
  \(\ket{\overline{\CZ(1,2)}}\), \(\ket{\overline{\CZ(2,3)}}\), and \(\ket{\overline{\CZ(1,3)}}\).
  For each of the \(\frac{m}{\alpha}\) gates in each step, we use these states and the transversal gate gadgets from~\Cref{lemma:computational-code-transversal-gates} to execute a teleported \(\CCZ(1,2,3)\) gate.
  Overall, this takes time \(O(n_L^{\tilde{\gamma}(n_L)} \polylog n_L)\). Again, while waiting for state preparation gadgets, the $\mathsf{Id}$ gadget in~\Cref{lemma:computational-code-transversal-gates} is repeatedly applied on each of the $m$ computation data registers.

  At the end of the last layer of gates, we output the decoded results of any \(\MEAS(1_A)\) gadgets and discard the remainder of the measurement record and classical memory.

  Thus, the overall depth of \(\overline{C}^{(n_L)}\) is \(O\left(D' n_L^{\tilde{\gamma}(n_L)} \polylog n_L \right) = O\left(D n_L^{1+\tilde{\gamma}(n_L)} \polylog n_L\right)\).

  \paragraph{Fault analysis and reduction to Pauli noise}
  Having constructed the circuit \(\overline{C}^{(n_L)}\), we proceed to the fault analysis.
  Following standard techniques (\cite{aharonov1999fault},\cite{aliferis2005quantum},\cite{gottesman2013fault}), we will reduce the analysis of an arbitrary fault to that of Pauli faults.
  We will define \(p_{*} = \min(\epsilon_{*\mathrm{EC}},\epsilon_{*,\mathrm{OP}},\epsilon_{*,\mathrm{stab}},\epsilon_{*,\mathrm{magic}})\) from \cref{lemma:ec-gadget}, \cref{lemma:computational-code-transversal-gates}, \cref{lemma:stab-resource-state}, and \cref{lemma:magic-resource-state}, respectively.
  
  We will denote (non-diagonal) Pauli faults with a superscript \((\mathsf{P})\).
  All gadgets used are Pauli fault-tolerant gadgets that accept and output \((C_L,\mathcal{F}_L)\)-blocks, so they are compatible.\footnote{Recall the definition of compatible FT gadgets (\cref{def:ft-gadget}).}
  This allows us to inductively apply the gadget composition lemmas (\cref{lemma:gadget-composition} and \cref{lemma:gadget-side-by-side}), so the overall circuit \(\overline{C}^{(n_L)}\) is a \((C,\mathcal{G}^{(n_L)})\)-Pauli fault-tolerant gadget where \(\mathcal{G}^{(n_L)}\) is the sum \((\boxplus)\) of the bad fault paths for each gadget.
  I.e. for a Pauli fault \(\mathbf{f}^{(\mathsf{P})}\), the output distribution\footnote{Distribution is a slight abuse of terminology since, in general, there is an overall complex amplitude that will be summed over at the end. It is at this point that we discard the measurement record of the gadgets and sum over all possible measurement outcomes (see \cref{lemma:deterministic-errors}).} of \(\overline{C}^{(n_L)}[\mathbf{f}^{(\mathsf{P})}]\) is proportional to the output distribution of \(C\) when \(\mathbf{f}^{(\mathsf{P})}\) is \(\mathcal{G}^{(n_L)}\)-avoiding.

  Let \(\ket{\mathsf{input}}\) be the classical input string (as a classical register).
  For an arbitrary \(\mathcal{G}^{(n_L)}\)-avoiding physical fault \(\mathbf{f}\) (not necessarily Pauli), it can be decomposed into a sum of Pauli faults with the same fault path (operations are component-wise) and complex coefficients \(\alpha\).
  \begin{align*}
    \mathbf{f} = \sum_{\substack{\text{Pauli faults }\mathbf{g}^{(\mathsf{P})} \\ \supp\mathbf{g}^{(\mathsf{P})}\subseteq \supp\mathbf{f}}}\alpha_{\mathbf{g}^{(\mathsf{P})}}\mathbf{g}^{(\mathsf{P})}.
  \end{align*}
  We can now compute the output probability of \(\overline{C}^{(n_L)}\) subject to an arbitrary fault.
  If \(\overline{C}^{(n_L)}\) is subject to a Pauli fault \(\mathbf{g}^{(\mathsf{P})}\) that is \(\mathcal{G}^{(n_L)}\)-avoiding, then the output distribution is proportional to \(C\), i.e. \(\overline{C}^{(n_L)}[\mathbf{g}^{(\mathsf{P})}] \propto C\).
  Using linearity, for some real constants \(c\) and \(\{c_{\mathbf{g}^{(\mathsf{P})}}\}_{\mathbf{g}^{(\mathsf{P})}}\) independent of \(x\), we can write (recall that quantum operations are superoperators)
  \begin{align*}
    \Pr(\overline{C}^{(n_L)}[\mathbf{f}]\text{ outputs }x) &= \bra{x}\left(\overline{C}^{(n_L)}[\mathbf{f}]\left(\ketbra{\mathsf{input}}{\mathsf{input}}\right)\right)\ket{x}\\
                                                             &= \sum_{\substack{\text{Pauli faults }\mathbf{g}^{(\mathsf{P})} \\ \supp\mathbf{g}^{(\mathsf{P})}\subseteq \supp\mathbf{f}}} \alpha_{\mathbf{g}^{(\mathsf{P})}} \bra{x}\left(\overline{C}^{(n_L)}[\mathbf{g}^{(\mathsf{P})}]\left(\ketbra{\mathsf{input}}{\mathsf{input}}\right)\right)\ket{x}\\
                                                             &=\left(\sum_{\substack{\text{Pauli faults }\mathbf{g}^{(\mathsf{P})} \\ \supp\mathbf{g}^{(\mathsf{P})}\subseteq \supp\mathbf{f}}} \alpha_{\mathbf{g}^{(\mathsf{P})}} \cdot c_{\mathbf{g}^{(\mathsf{P})}} \right)\bra{x}\left(C^{(n_L)}\left(\ketbra{\mathsf{input}}{\mathsf{input}}\right)\right)\ket{x}\\
    &= c \bra{x}\left(C^{(n_L)}\left(\ketbra{\mathsf{input}}{\mathsf{input}}\right)\right)\ket{x}.
  \end{align*}
  Since \(\mathbf{f}\) is a physical fault (i.e. the channels are CPTP) and \(C\) is composed of quantum operations (also CPTP channels), \(P(x) = \Pr(\overline{C}^{(n_L)}[\mathbf{f}]\text{ outputs }x)\) is a normalized probability distribution over bitstrings.
  Therefore, it must be the case that \(c=1\).
  We conclude that \(\overline{C}^{(n_L)}\) is a \((C,\mathcal{G}^{(n_L)})\)-fault-tolerant gadget (under the restriction that the fault is a physical fault) i.e. it is fault tolerant to arbitrary physical faults with a \(\mathcal{G}^{(n_L)}\)-avoiding fault path.

  \paragraph{\(n_L\) selection}
  We now upper bound \(\mathcal{G}^{(n_L)}\) and pick \(n_L\) appropriately.
  There are at most \(A=O(W D \poly(n_L))\) gadgets, each with a weight enumerator upper bounded by \(p_L(x) \le e^{-\zeta(n_L) + O(\polylog n_L)}\) on \(x \in [0, \ndep{\epsilon_{*}}]\).
  Thus, on \(x \in [0, \ndep{\epsilon_{*}}]\),
  \begin{align*}
    \weightenum{\mathcal{G}^{(n_L)}}{x} &\le A p_L(x) \\
                              &\le O(W D \poly(n_L)) e^{-\zeta(n_L) + O(\polylog n_L)}\\
                              &\le W D e^{-\zeta(n_L) +  g(n_L)}.
  \end{align*}
  for some \(g(n_L) \le O(\polylog(n_L))\)
  Let \(y\) be the smallest solution to
  \begin{align}
    \zeta(y) \ge \log \frac{W D}{\epsilon_L} + g(y).
  \end{align}
  Since \(\zeta(x) = x/\polylog x\) and \(g(n_L) \le O(\polylog(n_L))\), \(y = \Theta\left(\log \frac{W D}{\epsilon_L} \cdot \polyloglog \frac{W D}{\epsilon_L}\right)\).
  Now let \(n_L'\) to be the smallest element of the computational code \(N_i\) greater than \(\min(n_{\mathrm{min}},y)\).
  Since our computational code satisfies \(N_{i+1}/N_i = C\) for some constant \(C > 1\), we have the bounds
  \begin{align}
    n_L' = \Theta\left(\log \frac{W D}{\epsilon_L} \cdot \polyloglog \frac{W D}{\epsilon_L}\right).
  \end{align}
  We now set \(\overline{C} = \overline{C}^{(n_L')}\).
  So that for \(x \in [0,\ndep{\epsilon}_*]\),
  \begin{align*}
      W_{\mathcal{G}}(x) \le \epsilon_L~.
  \end{align*}
  \(\ndep{\epsilon_*} = \epsilon_* \frac{\zeta(n_L')}{n_L'} = \epsilon_* \cdot \Omega\left(\frac{1}{\polyloglog \frac{W D}{\epsilon_L}}\right)\)
  
  Using the upper bound on \(\frac{WD}{\epsilon_L}\), \(m \le W\), and the space bounds of the gadgets, it follows that
    \begin{align}
      \frac{\overline{W}}{W} &= \frac{1}{W} O\left(\beta M + \beta_\mathrm{magic} M_\mathrm{magic} + m n_L \right) \\
                           &= \frac{1}{W} O\left(\left(\frac{m}{K} + 1\right) e^{\Theta(\log n_L \cdot \loglog n_L)} + \left(\frac{W}{n_L} + 1\right) n_L\right)\\
                           &= O\left(\frac{e^{\Theta(\log n_L \cdot \loglog n_L)}}{W}  + \frac{n_L}{W} + 1\right) \\
                           &= O\left(\frac{\quasipoly n_L}{W} + 1\right)\\
                           &= O\left(\frac{\quasipolylog \frac{WD}{\epsilon_L}}{W} + 1\right)\label{eq:log-vanishing-width-bound}
    \end{align}
    We now select \(f(x)\).
    There exists absolute constants \(c \in (0,1) \) and \(c' \ge 0\) such that we can define \(f(x) = \exp\left(\exp\left(c' (\log x)^{c}\right)\right)\) to be a quickly growing function such that when \(\frac{WD}{\epsilon} \le f(W)\), the right hand side of \cref{eq:log-vanishing-width-bound} is \(O(1)\).
    We note that \(f(x)\) grows faster than any \(\quasipoly(x)\) but slower than any \(\exp(\poly(x))\).
    Using \(\frac{WD}{\epsilon} \le f(W)\), we have that 
    \begin{align*}
        \frac{\overline{W}}{W} = O(1)~.
    \end{align*}
    
    We move on to the time bound which is
    \begin{align*}
    \frac{\overline{D}}{D} &= O\left(\left(\log \frac{W D}{\epsilon_L}\right)^{1+\tilde{\gamma}\left(\tilde{\Theta}\left(\log \frac{W D}{\epsilon_L}\right)\right)} \right)
  \end{align*}
  where \(\tilde{\Theta}(\cdot)\) suppresses doubly-logarithmic factors and \(\tilde{\gamma}(n) = o(1)\) is defined in the proof of \cref{lemma:magic-resource-state}.~\qedsymbol{}

\subsection{Main theorem}\label{subsec:main-thm-constant-threshold}
Having established that we can achieve nearly-logarithmic time overhead and constant space overhead with a threshold that vanishes as \(\frac{1}{\polyloglog \frac{W D}{\epsilon_L}}\) (\cref{thm:main-result-vanish-threshold}), the last task is to amplify our threshold to a constant.
One (nearly trivial) option is to use the AB concatenated code construction again (as defined in \cref{sec:concatAGP}) with \(r \approx \log\log\log\log \frac{W D}{\epsilon_L}\) where gates and qubits in \(\overline{C}\) are replaced by their concatenated code counterparts.
\cref{lemma:composition-upper-bound} will give us an upper bound on the weight enumerator of the bad fault paths of the resulting circuit that allows us to establish a constant threshold.
However, this introduces a space and time overhead of \(O\left(\polylogloglog \frac{W D}{\epsilon}\right)\) and we desire \(O(1)\) space overhead.

The strategy will be to simulate our circuit with the concatenated codes of \cite{yamasaki2024time}.
Intuitively, the simulated circuit will see a noise model that is not very different from independent noise.
A precise implementation of this program is the subject of \cite{he2025composable} that greatly expands the weight enumerator formalism from \cref{sec:weight-enum-formalism} and proves an extension of \cref{lemma:gadget-composition} suitable for the recursive simulation.
A proof of \cref{thm:yk-sim} (using definitions from \cref{sec:weight-enum-formalism}) can be accomplished using the standard gadgets from \cite{yamasaki2024time} or \cite{aharonov1999fault} on the code family with parameters \(r \in \mathbb{N}\), \([[\left(2^r - 1\right)^2, \left(2^r - r - 1\right)^2, 9]]\) given by concatenating each quantum Hamming code with itself once and then forgetting about the concatenated structure.
This is essentially standard.
Here, we omit these details and state the following implication of \cite{yamasaki2024time} within the framework of our formalism.

We remark that \cite{yamasaki2024time} uses the gate set Clifford+T.
\(\CCZ\) can be exactly simulated by Clifford+T, so this distinction is not important in our use.

For circuit on a set of qubits partitioned into registers, a \emph{register location} is the straightforward generalization of location with qubits replaced by registers.
Likewise, we generalize fault paths to \emph{register fault paths}.
Occasionally, we will promote a fault path to a register fault path in the canonical way by replacing each location by the register location that it is supported in.
\begin{claim}[Modified version of \cite{yamasaki2024time}]\label{thm:yk-sim}
  There exists a constant value\footnote{Think of \(\epsilon_{*,YK}\) as something like \(1/2\) the threshold value computed in \cite{yamasaki2024time}.} \(\epsilon_{*,YK} \in (0,1)\), such that: for \(r \in \mathbb{N}\) and a classical input-classical output circuit \(C\) using \(W\) qubits and depth \(D\), and a partitioning of qubits into registers of size \(k_{\mathrm{YK}} = e^{\Theta(r^2)}\).
  \begin{itemize}
  \item There is a new circuit \(\overline{C}\) with width \(\overline{W}\) and depth \(\overline{D}\) such that \(\overline{W}/W = O(1)\) and \(\overline{D}/D = O(\poly k_{\mathrm{YK}})\) constructed out of gadgets of \cite{yamasaki2024time} that act only on single and pairs of registers at a time and take a single work period of size \(\poly k_{\mathrm{YK}}\).
  \item Let \(\Omega\) be the set of register locations of \(\overline{C}\).
  \item This circuit is equipped with families of bad fault paths parameterized by register fault paths \(\{\mathcal{G}_P\}_{P \subseteq \Omega}\) such that if a physical fault \(\mathbf{f}\) is \(\mathcal{G}_P\)-avoiding, then the output distribution of \(\overline{C}[\mathbf{f}]\) is equal to the output distribution of \(C[\mathbf{g}]\) for some register physical fault \(\mathbf{g}\) with a \(\{P\}\)-avoiding register fault path.
  \item There is a function \(p_{\mathrm{YK}}(x)\) that satisfies \(p_{\mathrm{YK}}(x) \le e^{-O(2^r)}\) on \(x \in [0,\epsilon_{*,YK}]\) such that \(\weightenum{\mathcal{G}_P}{x} \le \left(p_{\mathrm{YK}}(x)\right)^{|P|}\).
  \end{itemize}
\end{claim}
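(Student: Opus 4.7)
The plan is to construct the simulation via recursive application of concatenated quantum Hamming code fault tolerance, translating the resulting error analysis into the weight enumerator formalism developed in~\Cref{sec:weight-enum-formalism}. The product structure $(p_{\mathrm{YK}}(x))^{|P|}$ is the signature of the $\circledast$ operation of~\Cref{lemma:enumerator-ring}, so the construction of the families $\{\mathcal{G}_P\}_P$ should factorize across register locations.

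First I would establish the base-level gadgets. Using the $[[2^{\ell}-1, 2^{\ell}-\ell-1, 3]]$ quantum Hamming code, construct Pauli fault-tolerant gadgets for the primitive operations following the constructions of~\cite{aharonov1999fault}. Because the code has distance $3$, each level-$1$ gadget has a family of bad fault paths $\mathcal{G}^{(1)}$ consisting of all size-$\geq 2$ subsets of its $A_1 = O(\poly(2^{\ell}))$ locations, giving $\weightenum{\mathcal{G}^{(1)}}{x} \leq c_1 x^2$. I would then recursively simulate using the Yamasaki-Koashi sequence of codes, where at recursion level $j$ we use a Hamming code of parameter $\ell_j$ with $\sum_j \ell_j / 2^{\ell_j} = O(1)$, so that the total space overhead converges to a constant. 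Iterating~\Cref{lemma:composition-upper-bound}, the weight enumerator of the bad set at recursion depth $r$ satisfies
\begin{align*}
    \weightenum{\mathcal{G}^{(r)}}{x} \leq p_r(x), \qquad p_j(x) = c_j\,\bigl(p_{j-1}(x)\bigr)^2, \qquad p_0(x)=x.
\end{align*}
Taking $\ell_j = \Theta(j)$, the register size becomes $k_{\mathrm{YK}} = \prod_{j=1}^{r}(2^{\ell_j}-\ell_j-1) = e^{\Theta(r^2)}$, and the recurrence unwinds to a bound of the form $p_r(x) \leq (Cx)^{2^{r}}/C$ for some absolute constant $C$ and all $x$ below a constant threshold $\epsilon_{*,YK}$, yielding the claimed suppression $p_{\mathrm{YK}}(x) \leq e^{-O(2^r)}$. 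Each top-level work period has size $\poly(k_{\mathrm{YK}})$.

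To obtain the parameterized family $\{\mathcal{G}_P\}_{P \subseteq \Omega}$ with the product weight enumerator bound, I would exploit the fact that the top-level gadgets at different register locations act on disjoint qubits. For each register location $v \in \Omega$, let $\mathcal{G}^{(r)}_v$ denote the family of bad physical fault paths of the level-$r$ gadget at $v$, so that $\weightenum{\mathcal{G}^{(r)}_v}{x} \leq p_{\mathrm{YK}}(x)$. I would then set
\begin{align*}
    \mathcal{G}_P \;:=\; \wtprod_{v \in P} \mathcal{G}^{(r)}_v.
\end{align*}
By the definition of $\circledast$, if $\fault$ is $\mathcal{G}_P$-avoiding then there must exist at least one $v \in P$ such that the restriction of $\fault$ to the gadget at $v$ is $\mathcal{G}^{(r)}_v$-avoiding; invoking the Pauli fault tolerance of the level-$r$ gadget at $v$ then implies that the induced register-level fault $\mathbf{g}$ acting on $C$ satisfies $v \notin \supp \mathbf{g}$, so the register fault path of $\mathbf{g}$ does not contain $P$ as a subset, i.e., it is $\{P\}$-avoiding. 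The product rule of~\Cref{lemma:enumerator-ring} immediately gives $\weightenum{\mathcal{G}_P}{x} \leq (p_{\mathrm{YK}}(x))^{|P|}$. Finally, the standard decomposition of arbitrary CPTP faults into complex combinations of Pauli faults, exactly as in the proof of~\Cref{thm:main-result-vanish-threshold}, upgrades the Pauli-fault statement to all physical faults.

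The main obstacle is the careful fault bookkeeping across the $r$ recursion levels. I need to verify that iterated gadget composition yields a level-$r$ gadget whose bad set is precisely of the product form above, which requires a suitable extension of~\Cref{lemma:gadget-composition} to the recursive-simulation setting: whenever a lower-level gadget fails, the resulting arbitrarily damaged state must be absorbed as a single logical fault by the next level of error correction via the \emph{friendly} property of~\Cref{def:ft-gadget}, without disrupting the factorization of bad sets across register locations. This is essentially the content of the companion work~\cite{he2025composable} referenced in the excerpt, and the only novelty relative to~\cite{aharonov1999fault, yamasaki2024time} is isolating the per-register product structure needed for the sharp $(p_{\mathrm{YK}}(x))^{|P|}$ bound.
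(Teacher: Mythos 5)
The paper does not actually prove \cref{thm:yk-sim}: it states it as a \emph{Claim} and explicitly defers the proof to the companion work \cite{he2025composable} and to the standard gadget constructions of \cite{yamasaki2024time,aharonov1999fault}. So there is no paper proof to compare against directly. Your sketch captures the same intended program — recursive concatenated-code simulation expressed in the weight-enumerator formalism, with the friendly property absorbing arbitrary damage from failed inner gadgets into single logical register faults — and so is in the right spirit. Two remarks are worth making.

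First, a minor divergence: the paper recommends using the self-concatenated Hamming codes $[[(2^r-1)^2, (2^r-r-1)^2, 9]]$ so that each level has distance $9$, whereas you use the plain distance-$3$ $[[2^{\ell}-1, 2^{\ell}-\ell-1, 3]]$ codes. Both give the recurrence $p_j(x) = c_j\,p_{j-1}(x)^{\alpha}$ (with $\alpha=2$ for distance $3$ or $\alpha=5$ for distance $9$) and hence the same doubly-exponential suppression $p_{\mathrm{YK}}(x) \le e^{-O(2^r)}$, so this is a matter of margin rather than correctness.

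Second, and this is the genuine gap: your definition $\mathcal{G}_P := \wtprod_{v \in P} \mathcal{G}^{(r)}_v$ silently requires the location sets of the level-$r$ gadgets at distinct $v \in P$ to be pairwise disjoint, since $\circledast$ is only defined for disjoint underlying sets. But $\overline{C}$ necessarily contains two-register gadgets (the claim itself says gadgets act on "single and pairs of registers"), and when $v, v'$ in $P$ are the two registers of the same gate gadget, $\Omega_v \cap \Omega_{v'} \neq \emptyset$ and the product is undefined. Worse, even if you partition the gate gadget's locations in some ad hoc way to regain disjointness, a single subthreshold cluster of faults inside that gadget can corrupt \emph{both} output registers simultaneously, so the failures of $v$ and $v'$ are not independent and the naive weight enumerator for $\{v,v'\}$ is $\approx p_{\mathrm{YK}}(x)$, not $p_{\mathrm{YK}}(x)^2$. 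Recovering the sharp $p_{\mathrm{YK}}(x)^{|P|}$ bound requires a finer, exRec-style accounting that tracks \emph{which output registers} are corrupted by a given fault cluster, and factorizes the bad sets accordingly. This is precisely the technical content the paper offloads to \cite{he2025composable}, and it is the step your argument is missing. You gesture at it in the last paragraph ("verify that iterated gadget composition yields ... bad set precisely of the product form"), but that verification is not routine and the \cref{lemma:gadget-composition} of this paper, which is purely a union-bound ($\boxplus$) statement, is not enough to give it to you.
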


\begin{remark}[Threshold amplification using qLDPC codes]
    A natural question is why constant space overhead qLDPC code constructions are unsuitable for the threshold amplification step.
    Conveniently, because the inner codes used for the amplification step are very small, techniques that gave unacceptably high time-overhead in the outer fault-tolerance scheme can be used.
    Instead of using the large distillation gadgets as in \cref{thm:main-result-vanish-threshold}, one could prepare resource states using a concatenated code as in \cite{gottesman2013fault} with many small sized qLDPC codes (see \cref{subsubsec:revisiting}).

    The reason this does not immediately work turns out to be somewhat subtle: In order to simulate an outer fault-tolerance scheme, it must be the case that the behavior of the simulating circuit is well defined with respect to failures of the inner fault-tolerance gadgets.
    In order for this to be the case, the gadgets must be capable of accepting an arbitrarily damaged state.
    This is the behavior the friendliness property in \cref{def:ft-gadget} captures and is the reason why we needed to construct the qLTC tester in \cref{sec:tester}.
    One should note that standard single-shot error-correction gadgets \emph{are not friendly}.
    We are aware of two ways to establish friendly error correction gadgets which we briefly outline here.

    \begin{itemize}
        \item  
            The first method is to measure the syndrome of the qLDPC code a number of times equal to the distance \(d\) as in \cite{gottesman2013fault} and decode the resulting spacetime syndrome.
            Then, regardless of the input state, the output state is sparsely deviated from a code state.
            Unfortunately, in most cases, an efficient decoder for the resulting spacetime code is not known (it is unknown whether a single shot decoder implies a decoder for the spacetime code).
            A notable exception is the minimum-weight perfect matching decoder for the surface code~\cite{dennis2002topological}.
            Alternatively, one can simply use a brute-force approach since the codes are of extremely small size.
        \item
            The second method to obtain friendly gadgets is to utilize a qLTC.
            Given a source of \(\ket{0}\) states, the error correction gadget in \cref{lemma:ec-gadget} can be made friendly by applying the tester gadget \cref{lemma:ec-gadget-tester} before each error correction gadget.
            In the case where the tester rejects, we swap in the known-good \(\ket{0}\) state as in \cite{aharonov1999fault}.
            The preparation of logical \(\ket{0}\) states using a qLTC is straightforward: Prepare a zero product state and measure the checks of the code. Use a bit-flipping decoder for the \emph{syndrome error}, and then finally use Gaussian elimination to solve for a correction that satisfies the measured syndrome.
            The resulting state has small syndrome and hence must be close to the code space by local-testability.
    \end{itemize}
\end{remark}

We are now ready to state and prove the main result.
\begin{theorem}[Main result]\label{thm:main-result-concat}
  There exists a function \(f(x)\) growing faster than any \(\quasipoly(x)\) and a value \(\epsilon_{*}\in (0,1)\) such that for any \(\epsilon_L \in (0,1)\) and (Clifford+\(\CCZ\)) classical input / classical output quantum circuit \(C\) with width \(W\) and depth \(D\) satisfying
  \begin{align}
    \frac{WD}{\epsilon_L} &\le f(W)
  \end{align}
  There is a corresponding efficiently constructable  classical input / classical output quantum circuit \(\overline{C}\) with width \(\overline{W}\) and depth \(\overline{D}\) satisfying
  \begin{align}
    \frac{\overline{W}}{W} &= O_{W \to \infty}(1)\\
    \frac{\overline{D}}{D} &= O_{W \to \infty}\left(\left(\log\frac{W D}{\epsilon_L}\right)^{1 + o(1)}\right)
  \end{align}
  and using auxiliary $O(1)$-time classical computation per quantum time step,
  such that the following guarantees hold. For a random physical fault \(\mathbf{f}\) distributed according to \(\epsilon\)-locally stochastic faults model with \(\epsilon \in [0,\epsilon_{*}]\), the output distribution of \(\overline{C}\) subject to \(\mathbf{f}\) is \(\epsilon_L\)-close in TVD to the output distribution of \(C\).
\end{theorem}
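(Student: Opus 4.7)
The plan is to compose the vanishing-threshold construction of Theorem~\ref{thm:main-result-vanish-threshold} with the Yamasaki--Koashi concatenated-code simulation of Claim~\ref{thm:yk-sim}, using the YK scheme as an \emph{outer} layer that amplifies its intrinsic constant threshold into the sub-constant inner threshold required by the qLDPC-based scheme. First I would apply Theorem~\ref{thm:main-result-vanish-threshold} to the ideal circuit $C$ to obtain $\overline{C}_1$ of width $O(W)$ and depth $O\!\left(D \log^{1+o(1)}(WD/\epsilon_L)\right)$, equipped with bad fault paths $\mathcal{G}_1$ satisfying $\weightenum{\mathcal{G}_1}{x} \le \epsilon_L$ on $[0,\epsilon_* c_1]$ with $c_1 = \Omega(1/\polyloglog(WD/\epsilon_L))$. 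Next I would apply Claim~\ref{thm:yk-sim} to $\overline{C}_1$ with concatenation parameter $r$ chosen large enough that the YK per-register logical error $p_{\mathrm{YK}}(\epsilon) \le e^{-O(2^r)}$ fits inside the inner threshold, i.e.\ $p_{\mathrm{YK}}(\epsilon) \le \epsilon_* c_1$. It suffices to take $r = \Theta(\loglogloglog(WD/\epsilon_L))$, which yields $k_{\mathrm{YK}} = e^{\Theta(r^2)} = \log^{o(1)}(WD/\epsilon_L)$, a sub-polylogarithmic block size.

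With this choice the composed circuit $\overline{C}$ has width $O(W)$ (product of two constant overheads), depth $O(\overline{D}_1 \cdot \poly k_{\mathrm{YK}}) = O(D\log^{1+o(1)}(WD/\epsilon_L))$, and $O(1)$-time auxiliary classical computation per quantum step (both layers use constant-depth classical processing, which composes to constant depth). For the fault analysis, let $\mathbf{f}$ be an $\epsilon$-locally-stochastic physical fault with $\epsilon \le \epsilon_* := \min(\epsilon_{*,\mathrm{YK}},\ldots)$. Claim~\ref{thm:yk-sim} guarantees that if $\mathbf{f}$ is $\mathcal{G}_P$-avoiding for a particular $P$, then the induced register fault $\mathbf{g}$ on $\overline{C}_1$ has support not containing $P$; ensuring this for every $P \in \mathcal{G}_1$ makes $\supp\mathbf{g}$ itself $\mathcal{G}_1$-avoiding, so Theorem~\ref{thm:main-result-vanish-threshold} produces the correct output distribution. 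A union bound together with the YK weight-enumerator estimate $\weightenum{\mathcal{G}_P}{x}\le (p_{\mathrm{YK}}(x))^{|P|}$ gives
\begin{align*}
\Pr\!\left[\,\mathbf{f}\ \text{not}\ \textstyle\bigcup_{P \in \mathcal{G}_1}\mathcal{G}_P\text{-avoiding}\,\right]
\;\le\; \sum_{P \in \mathcal{G}_1}\weightenum{\mathcal{G}_P}{\epsilon}
\;\le\; \weightenum{\mathcal{G}_1}{p_{\mathrm{YK}}(\epsilon)}
\;\le\; \epsilon_L,
\end{align*}
so the output distribution of $\overline{C}[\mathbf{f}]$ is $\epsilon_L$-close in TVD to that of $C$, which is exactly the claim.

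The main obstacle is not the arithmetic but establishing a recursive simulation statement strong enough to license the weight-enumerator composition above. Concretely, one must verify that YK does not merely probabilistically \emph{approximate} the inner circuit but actually reduces each physical noise realization $\mathbf{f}$ to a well-defined register-level noise realization $\mathbf{g}$ on $\overline{C}_1$ whose support is controlled by the avoidance structure of $\mathbf{f}$. This forces the YK gadgets to be \emph{friendly} in the sense of Definition~\ref{def:ft-gadget}, so that they produce a well-defined output even from arbitrarily damaged inputs, which is the generalization of Proposition~\ref{lemma:gadget-composition} to recursive simulation formalized in the companion work referenced in the excerpt. Granting this framework, the final bound is exactly an instance of the $\bullet$-composition of Definition~\ref{def:composition} with outer family $\mathcal{G}_1$ and inner families $\{\mathcal{G}_P\}_{P}$, applied via Proposition~\ref{lemma:composition-upper-bound}; everything else is routine bookkeeping on overheads.
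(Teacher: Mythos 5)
Your overall route matches the paper's: build $\overline{C}_{\mathrm{qLDPC}}$ via Theorem~\ref{thm:main-result-vanish-threshold}, then simulate it with Claim~\ref{thm:yk-sim} at concatenation level $r = \Theta(\loglogloglog(WD/\epsilon_L))$, and close the fault analysis with a weight-enumerator composition. You also correctly flag friendliness as the key prerequisite for the recursive simulation. However, you have glossed over a step that the paper makes essential and that your argument as written actually needs.

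The bad sets in Claim~\ref{thm:yk-sim} are parameterized by \emph{register} fault paths $P \subseteq \Omega$, and the guarantee $\weightenum{\mathcal{G}^{(\mathrm{YK})}_P}{x} \le (p_{\mathrm{YK}}(x))^{|P|}$ has $|P|$ counting \emph{registers}, not qubits of $\overline{C}_1$. Your chain
$\sum_{P \in \mathcal{G}_1}\weightenum{\mathcal{G}_P}{\epsilon} \le \weightenum{\mathcal{G}_1}{p_{\mathrm{YK}}(\epsilon)}$
implicitly assumes that when you promote a qubit fault path $P \in \mathcal{G}_1$ to the corresponding register fault path $P^{(\mathrm{reg})}$, the weight does not drop: $|P^{(\mathrm{reg})}| = |P|$. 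But if two qubits of a single gadget of $\overline{C}_1$ land in the same YK register, the promotion collapses them, $|P^{(\mathrm{reg})}| < |P|$, and since $p_{\mathrm{YK}}(\epsilon) < 1$ the exponent loss \emph{weakens} the bound — so the displayed inequality would no longer follow from the claim. The paper handles this by explicitly constructing a register assignment (a greedy vertex coloring of the gadget-collision graph, which has degree at most the maximum gadget size $J = \quasipolylog(WD/\epsilon_L)$) so that no two qubits appearing in the same qLDPC gadget share a YK register; this is exactly what forces $\weightenum{\mathcal{G}_{\mathrm{qLDPC}}^{\mathrm{(reg)}}}{x} = \weightenum{\mathcal{G}_{\mathrm{qLDPC}}}{x}$. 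The same construction is also the source of a non-obvious width constraint: one needs roughly $(J+1)k_{\mathrm{YK}}$ registers available, which is why the hypothesis $WD/\epsilon_L \le f(W)$ is used a second time (not just in Theorem~\ref{thm:main-result-vanish-threshold}) to keep the width overhead $O(1)$. Without both the coloring argument and this width accounting, your ``product of two constant overheads'' claim and your weight-enumerator inequality are not actually established.
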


  \paragraph{Construction} \emph{Proof.} First we construct a circuit \(\overline{C}_{\mathrm{qLDPC}}\) using \cref{thm:main-result-vanish-threshold}.
  This has a family of bad fault paths \(\mathcal{G}_{\mathrm{qLDPC}}\) such that if the physical fault \(\mathbf{f}\) is \(\mathcal{G}_{\mathrm{qLDPC}}\)-avoiding, then the output distribution of \(\overline{C}_{\mathrm{qLDPC}}[\mathbf{f}]\) is equal to the output distribution of \(C\), and \(\weightenum{\mathcal{G}_{\mathrm{qLDPC}}}{x} \le \epsilon_L\) for \(x \in [0,\epsilon_{*,\mathrm{LDPC}} \cdot c]\) where \(c = \Omega\left(\frac{1}{\polyloglog \frac{W D}{\epsilon_L}}\right)\).

  We will simulate this circuit using \cref{thm:yk-sim} with \(r = \Theta\left(\log \log \frac{1}{\epsilon_{*,\mathrm{LDPC}} c}\right)\) in a particular way (We will only pick \(r\) at the end):
  Recall that the blocks of \cref{thm:main-result-vanish-threshold} are of size at most \(O\left(\log \frac{WD}{\epsilon_L} \cdot \polyloglog \frac{WD}{\epsilon_L}\right)\) and the largest gadget is of size at most \(J = O\left(\quasipolylog\frac{WD}{\epsilon_L}\right)\).
  We will arrange the qubits of \(\overline{C}_{\mathrm{qLDPC}}\) into registers of size \(k_{\mathrm{YK}} = \exp\left[\Theta(\log\log \left(\frac{1}{\epsilon_{*} c}\right)^2\right]\) such that no two qubits participating in a gadget (of \(\overline{C}_{\mathrm{qLDPC}}\)) are assigned to the same register.
  
  Let \(W_{\mathrm{qLDPC}}\) be the number of qubits in \(\overline{C}_{\mathrm{qLDPC}}\).
  To compute the register assignment, for each timestep \(t\) of \(\overline{C}_{\mathrm{qLDPC}}\), we compute a greedy vertex coloring on the graph \(G_t=([W_{\mathrm{qLDPC}}], E_t)\) where two vertices \(i,j \in [W_{\mathrm{qLDPC}}]\) share a vertex if they participate in the same gadget.
  \(G\) has degree at most \(J\), so there is a vertex coloring \(c_t \colon [W_{\mathrm{qLDPC}}] \to [J+1]\) with at most \(J+1\) colors.
  
  We use \(J+1\) groups of (potentially many) YK registers corresponding to the \(J+1\) colors.
  At time \(t\), we assign each qubit \(q\in [W]\) to occupy a register with color \(c_t(q)\).
  Between timesteps \(t\) of \(\overline{C}_{\mathrm{qLDPC}}\), we swap qubits between registers in parallel to maintain the coloring constraint.
  This adds \(\poly(k_{\mathrm{YK}})\) time overhead.
  Now, we apply \cref{thm:yk-sim} to this circuit\footnote{Actually, to a serialization of this circuit, but this is not important to the argument.} to arrive at the circuit \(\overline{C}_{n}\) which is deeper by a factor \(O(\poly k_{\mathrm{YK}})\).

  \paragraph{Fault analysis} Let \(\mathcal{G}_{\mathrm{qLDPC}}^{\mathrm{(reg)}}\) be a family of bad register fault paths where each fault path in \(\mathcal{G}_{\mathrm{qLDPC}}\) has been replaced by the corresponding register fault path with respect to the register partitioning.
  By ensuring that no two qubits participating in a qLDPC gadget are assigned to the same register, it is the case that \(\weightenum{\mathcal{G}_{\mathrm{qLDPC}}^{\mathrm{(reg)}}}{x} = \weightenum{\mathcal{G}_{\mathrm{qLDPC}}}{x}\).
  Let \(\{\mathcal{G}^{(\mathrm{YK})}_P\}_P\) be the parameterized families of bad fault paths from the statement of \cref{thm:yk-sim}.
  Now consider the fault set
  \begin{align}\label{eq:main-result-register-fault-paths}
    \mathcal{G} = \boxplus_{P \in \mathcal{G}_{\mathrm{qLDPC}}^{\mathrm{(reg)}}} \mathcal{G}_P^{(\mathrm{YK})}.
  \end{align}
  A \(\mathcal{G}\)-avoiding fault \(\mathbf{f}\) is \(\mathcal{G}_P^{(\mathrm{YK})}\)-avoiding for every register fault path \(P \in \mathcal{G}_{\mathrm{qLDPC}}^{\mathrm{(reg)}}\).
  Suppose that \(\mathbf{f}\) is a \(\mathcal{G}\)-avoiding physical fault, then (by \cref{thm:yk-sim}) there exists a physical fault \(\mathbf{g}\) for which its register fault path is \(\mathcal{G}_{\mathrm{qLDPC}}^{\mathrm{(reg)}}\)-avoiding such that the output distribution of \(\overline{C}[\mathbf{f}]\) is the same as \(\overline{C}_{\mathrm{LDPC}}[\mathbf{g}]\).
  Now, by construction of \(\mathcal{G}_{\mathrm{qLDPC}}^{\mathrm{(reg)}}\), if \(\mathbf{g}\) has a register fault path that is \(\mathcal{G}_{\mathrm{qLDPC}}^{\mathrm{(reg)}}\)-avoiding, then its fault path must be \(\mathcal{G}_{\mathrm{qLDPC}}\)-avoiding, so the output distribution of \(\overline{C}_{\mathrm{LDPC}}[\mathbf{g}]\) is equal to the output distribution of \(C\).

  Now let \(\mathbf{f}\) be a random physical fault distributed according to an \(\epsilon\)-locally-stochastic faults model.
  Recall that this means that for any fault path \(S\), the probability that the fault path of \(\mathbf{f}\) contains \(S\) is at most \(\epsilon^{|S|}\).
  We would like to bound the total-variation distance between the output distribution of \(\mathbb{E}_{\mathbf{f}}\overline{C}[\mathbf{f}]\) and that of \(C\).
  We recall that the two distributions are equal conditioned on the event \(\mathbf{f}\) is \(\mathcal{G}\)-avoiding, so
  \begin{align*}
    |\Pr(\overline{C}[\mathbf{f}]\text{ outputs }x) - \Pr(C\text{ outputs }x)| &\le |\Pr_{\mathbf{f}}(\mathbf{f}\text{ is not \(\mathcal{G}\)-avoiding})|\\
                                                                                   &\le \sum_{S \in \mathcal{G}} \Pr(\supp \mathbf{f} \subseteq S) \\
                                                                                   &\le \sum_{S \in \mathcal{G}} \epsilon^{|S|} = \weightenum{\mathcal{G}}{\epsilon}~.
  \end{align*}
  I.e. \(\weightenum{\mathcal{G}}{\epsilon}\) upper bounds the TVD between the two distributions.

  \paragraph{Error rate} It remains to upper bound \(\weightenum{\mathcal{G}}{\epsilon}\).
  Since \(\epsilon \in [0,\epsilon_{*,\mathrm{YK}}]\), we have (using \cref{eq:main-result-register-fault-paths}, \cref{lemma:enumerator-ring}, and \cref{thm:yk-sim})
  \begin{align*}
    \weightenum{\mathcal{G}}{\epsilon} &\le \sum_{P \in \mathcal{G}^{\mathrm{(reg)}}_{\mathrm{qLDPC}}} \weightenum{\mathcal{G}_P^{\mathrm{(YK)}}}{\epsilon} \\
                       &\le \sum_{P \in \mathcal{G}^{\mathrm{(reg)}}_{\mathrm{qLDPC}}} \left(p_{\mathrm{YK}}(\epsilon)\right)^{|P|}\\
                       &\le \weightenum{\mathcal{G}^{\mathrm{(reg)}}_{\mathrm{qLDPC}}}{p_{\mathrm{YK}}(\epsilon)}
  \end{align*}
  We now select the minimum \(r\) such that  \(p_{\mathrm{YK}}(\epsilon) \le \frac{1}{\epsilon_{*,\mathrm{LDPC}} c}\), so that
  \begin{align*}
    \weightenum{\mathcal{G}}{\epsilon} \le \weightenum{\mathcal{G}^{\mathrm{(reg)}}_{\mathrm{qLDPC}}}{p_{\mathrm{YK}}(\epsilon)} \le \weightenum{\mathcal{G}^{\mathrm{(reg)}}_{\mathrm{qLDPC}}}{\frac{1}{\epsilon_{*,\mathrm{LDPC}} c}} \le \epsilon_L~.
  \end{align*}
  \(p_{\mathrm{YK}}(\epsilon) = e^{-O(2^{-r})}\), so \(r = O\left(\loglogloglog \frac{W D }{\epsilon_L}\right)\) suffices.

  \paragraph{Circuit size}
  Let \(f'(x)\) be the function from the statement of \cref{thm:main-result-vanish-threshold}.
  If \(\frac{WD}{\epsilon_L} \le f'(W)\) then \(\frac{W_{\mathrm{qLDPC}}}{W} = O(1)\).
  Note that for our choice of \(r\), \(k_{\mathrm{YK}} = e^{O\left(\left(\loglogloglog \frac{W D }{\epsilon_L}\right)^2\right)}\).
  \((J+1) k_{\mathrm{YK}}\) is the minimum number of registers required to assign each qubit of a gadget of \(\overline{C}_{\mathrm{qLDPC}}\) to distinct YK registers, so the space overhead of \(\overline{C}\) relative to \(\overline{C}_{\mathrm{qLDPC}}\) is at most
  \begin{align}\label{eq:const-width-bound}
      O\left(1+\frac{J k_{\mathrm{YK}}}{W_{\mathrm{qLDPC}}}\right) = O\left(1 + \frac{\quasipolylog \frac{WD}{\epsilon_L}}{W_{\mathrm{qLDPC}}}\right)
  \end{align}
  We now select \(f(x)\).
  As in \cref{thm:main-result-vanish-threshold}, there exists absolute constants \(c \in (0,1) \) and \(c' \ge 0\) such that we can define \(f(x) = \exp\left(\exp\left(c' (\log x)^{c}\right)\right)\) to be a quickly growing function such that 1) \(f(x) \le f'(x)\) and 2) when \(\frac{WD}{\epsilon} \le f(W)\), the right hand side of \cref{eq:const-width-bound} is \(O(1)\).
\(f(x)\) again grows faster than any \(\quasipoly(x)\) 

  Finally, relative to \(\overline{C}_{\mathrm{qLDPC}}\), \(\overline{C}\) has depth overhead
  \begin{align*}
    \poly k_L =O\left(e^{O\left(\left(\loglogloglog\frac{W D }{\epsilon_L}\right)^2\right)}\right)
  \end{align*}
  proving the theorem with \(\epsilon_{*} = \epsilon_{*,\mathrm{YK}}\). \qedsymbol{}

\section{State preparation gadgets}\label{sec:state-prep}
In this section, we develop state preparation gadgets for both logical stabilizer states (\Cref{lemma:stab-resource-state}) and magic states (\Cref{lemma:magic-resource-state}) of the computational qLTC. As described in the introduction section, the main idea is to (1) use existing concatenated-code fault tolerance schemes to prepare the codestates to a constant logical fidelity, and then (2) apply state distillation schemes to boost the fidelity to an arbitrary target error. We will first review the concatenated-code fault tolerance proof of Aharonov and Ben-Or~\cite{aharonov1999fault} and describe how to use it to prepare qLTC codestates to constant fidelity in~\Cref{subsec:low-fidel-state-prep}. Next, in~\Cref{subsec:boost-fidel-state-prep} we describe how to perform state distillation protocols (whose constructions are deferred to later sections) at the logical level of the computional qLTC code to improve the logical fidelity. In~\Cref{sec:state-distill-procedure-single}, we present our stabilizer state distillation scheme with constant-space and polyloglog-time overheads. The magic state distillation scheme with almost-constant space and polyloglog-time overheads is deferred to~\Cref{sec:magic-state-distillation-code}.

\subsection{Concatenated code FT scheme}\label{sec:concatAGP}
We now briefly introduce the scheme of concatenated code fault-tolerance scheme of Aharonov and Ben-Or~\cite{aharonov1999fault} which uses distance-9 CSS codes and Clifford+\(\CCZ\) \footnote{
More precisely they use Clifford+Toffoli gate set, but they use self-dual quantum codes with a transversal Hadamard and hence essentially the same scheme works for the Clifford+$\CCZ $ gate set.}
We use this scheme over the similar \cite{aliferis2005quantum} to avoid introducing the analog of exRecs for the weight enumerators which makes the notation burdensome.
We remark that we are using a modified definition of deviation from \cite{aharonov1999fault}, but their argument is unmodified (e.g. \cite{kitaev1997quantum}).
Nearly any concatenated code scheme suffices.
The code family will be parameterized by a parameter \(r \in \mathbb{N}\) corresponding to a number of concatenation levels.

\begin{claim}[\cite{aharonov1999fault}]\label{fact:AB}
  There is a family of concatenated codes \(\qcode_{\mathrm{AB},r}\) with compatible and friendly gadgets for Clifford+\(\CCZ\) with the property that: There exists constants \(W_{\mathrm{AB}},~A_{\mathrm{AB}},~B_{\mathrm{AB}},~c_{\mathrm{AB}}>0\) such that
  \begin{itemize}
  \item Each code block holds one logical qubit.
  \item The size of a block is at most \(B_{\mathrm{AB}}^r\).
  \item The number of locations in a gadget is at most \(A_{\mathrm{AB}}^r\).
  \item The maximum number of qubits used at any point in time in a gadget is at most \(W_{\mathrm{AB}}^r\).
  \item The gadgets are compatible and friendly as in~\cref{def:ft-gadget}.
  \item For each gadget, the corresponding bad fault sets \(\mathcal{G}\) have weight enumerator \(\weightenum{\mathcal{G}}{x} \le (c_{\mathrm{AB}} x)^{2^r}\) on \([0,1)\).
  \end{itemize}
  we will refer to the \(r\)-th element of this family of code and gadgets as \(r\)-AB.
\end{claim}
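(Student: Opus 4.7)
The plan is to instantiate the Aharonov--Ben-Or concatenated code construction from \cite{aharonov1999fault} and to recast its guarantees in the weight enumerator language of \cref{sec:weight-enum-formalism}. First I fix a distance-$9$ CSS base code $\qcode_{\mathrm{AB},1}$ on a constant number of qubits that supports a transversal universal gate set for Clifford+$\CCZ$, exactly as in \cite{aharonov1999fault}, and define $\qcode_{\mathrm{AB},r}$ recursively by encoding each qubit of $\qcode_{\mathrm{AB},1}$ in $\qcode_{\mathrm{AB},r-1}$. The block size, gadget location count, and maximum concurrent width then satisfy multiplicative recurrences $B_r = B_1 B_{r-1}$, $A_r \le A_1 A_{r-1}$, $W_r \le W_1 W_{r-1}$, producing the claimed exponential bounds $B_{\mathrm{AB}}^r$, $A_{\mathrm{AB}}^r$, $W_{\mathrm{AB}}^r$ for appropriate constants $B_{\mathrm{AB}}, A_{\mathrm{AB}}, W_{\mathrm{AB}}$.

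Next I build the base-level gadgets. For each primitive operation (transversal Pauli, single- and two-qubit Clifford, $\CCZ$, $\ket{0}$/$\ket{+}$ preparation, and destructive $Z$/$X$ measurement) the level-$1$ gadget is the transversal implementation followed by an error correction subcircuit that performs $\Theta(d)$ rounds of Shor-style syndrome extraction and invokes a brute-force decoder on the full spacetime syndrome; because the base code has constant size this decoder runs in $O(1)$ time. The multi-round syndrome extraction together with the brute-force decoder makes the gadget \emph{friendly}: regardless of the input state the gadget returns a block that is sparsely deviated from some codestate (a freshly prepared logical $\ket{0}$ is swapped in on the event that the spacetime syndrome is uncorrectable). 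The standard Aharonov--Ben-Or-style analysis then shows that each level-$1$ gadget is a compatible Pauli FT gadget implementing the intended operation whenever its fault path contains at most one fault.

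For the weight enumerator recursion, declare the level-$1$ bad fault sets $\mcG^{(1)}$ of a gadget to be all subsets of size at least two among its $A_1$ locations, so that $\weightenum{\mcG^{(1)}}{x} \le \binom{A_1}{2} x^2 \le C x^2$ for a constant $C \ge 1$. A level-$r$ gadget is built from a level-$1$ gadget by replacing each location with the corresponding level-$(r-1)$ gadget, so by construction its bad fault set factorizes as $\mcG^{(r)} = \mcG^{(1)} \bullet \mcG^{(r-1)}$ in the notation of \cref{def:composition}. Assuming inductively that $\weightenum{\mcG^{(r-1)}}{x} \le C^{2^{r-1}-1} x^{2^{r-1}}$ on $[0,1)$, \cref{lemma:composition-upper-bound} yields
\[
\weightenum{\mcG^{(r)}}{x} \;\le\; \weightenum{\mcG^{(1)}}{\,C^{2^{r-1}-1} x^{2^{r-1}}\,} \;\le\; C\,(C^{2^{r-1}-1} x^{2^{r-1}})^{2} \;=\; C^{2^{r}-1} x^{2^r},
\]
which closes the induction. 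Setting $c_{\mathrm{AB}} := C$ absorbs the prefactor and gives the claimed bound $\weightenum{\mcG^{(r)}}{x} \le (c_{\mathrm{AB}} x)^{2^r}$ on $[0,1)$.

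The main obstacle I anticipate is the friendly property at the base level. As the paper notes in its threshold-amplification remark, standard single-shot EC gadgets are not friendly, so the base-level EC gadget has to be tailored: the $\Theta(d)$-round Shor-style syndrome extraction followed by brute-force spacetime decoding (cheap because the base code has constant size), together with a swap-in of a freshly prepared logical $\ket{0}$ when the spacetime syndrome is uncorrectable, is what makes the base gadget friendly. Compatibility of the level-$r$ gadgets is then automatic because every gadget takes $(\qcode_{\mathrm{AB},r},\mcF)$-blocks to $(\qcode_{\mathrm{AB},r},\mcF)$-blocks for the family $\mcF$ of sparse fault paths in the recursive tree of sub-blocks, and friendliness at each higher level is inherited from friendliness at the level below via the recursive simulation.
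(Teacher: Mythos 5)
The paper does not prove this statement; it is introduced as a \emph{Claim} cited to \cite{aharonov1999fault} (with a remark that \cite{aliferis2005quantum} or \cite{kitaev1997quantum} would also do, and that the paper's modified notion of deviation does not alter the original argument). So there is no in-paper proof to compare against, and your proposal is a reconstruction of the underlying argument. Its skeleton matches what the paper indicates: a distance-$9$ base CSS code with a transversal Clifford+$\CCZ$ gate set, recursive concatenation giving multiplicative growth in block size, locations, and width, bad fault paths defined recursively via the $\bullet$ composition of \cref{def:composition}, the weight-enumerator recursion closed via \cref{lemma:composition-upper-bound}, and friendliness obtained by $\Theta(d)$ rounds of syndrome extraction plus brute-force spacetime decoding on the constant-size base code with a known-good logical $\ket{0}$ swapped in on decode failure --- exactly the mechanism the paper describes in the remark on threshold amplification with qLDPC codes.

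One concrete slip: you define $\mcG^{(1)}$ to be all subsets of $[A_1]$ of size \emph{at least} two and then write $\weightenum{\mcG^{(1)}}{x}\le\binom{A_1}{2}x^2$. By \cref{def:weight-enumerator} that family's enumerator is $\sum_{w\ge 2}\binom{A_1}{w}x^w=(1+x)^{A_1}-1-A_1x$, which strictly exceeds $\binom{A_1}{2}x^2$ for every $x>0$ whenever $A_1\ge3$. There are two equally easy fixes: (a) follow the paper's own convention for $\mcF_L$ and $\mcF_{\mathrm{corr}}$ and take $\mcG^{(1)}$ to be the subsets of size \emph{exactly} two, giving $\weightenum{\mcG^{(1)}}{x}=\binom{A_1}{2}x^2$ exactly, with no change to which fault paths are $\mcG^{(1)}$-avoiding; or (b) keep the ``at least two'' family and replace $\binom{A_1}{2}$ by $C=2^{A_1}-1-A_1$, using $x^w\le x^2$ on $[0,1]$ for $w\ge2$. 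Either choice makes your induction via \cref{lemma:composition-upper-bound} close as written. Two further small points worth spelling out: you should take $C$ uniformly over gadget types (e.g.\ $C=\max_{\mathrm{types}}\binom{A_{1,\mathrm{type}}}{2}$), since \cref{lemma:composition-upper-bound} requires a single polynomial $p(x)$ bounding all inner families $\{\mathcal{S}_i\}$; and the assertion that the level-$r$ bad fault set ``factorizes as $\mcG^{(1)}\bullet\mcG^{(r-1)}$ by construction'' is really the inductive FT statement that an $\mcG^{(1)}\bullet\mcG^{(r-1)}$-avoiding fault path implies the recursive simulation succeeds --- this is the standard Aharonov--Ben-Or induction and should at minimum be named as the step being invoked rather than treated as definitional.
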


\subsection{Noisy state preparation}\label{subsec:low-fidel-state-prep}
We begin by describing a fault-tolerant gadget that prepares code states of qLDPC codes. The main idea is to use the concatenated-code scheme~\cite{aharonov1999fault} to prepare the qLDPC codestate while being encoded in an outer concatenated code, and then unencode the outer code. This idea originates from Gottesman's seminal work~\cite{gottesman2013fault}, but our goal here differs from his work. While Gottesman used this gadget to prepare the codestates to a low error rate that is inverse polynomial in the computation size, we only do so to a sufficiently small constant error rate. This choice is crucial to obtain the main overhead result of this paper, as concatenated-code FT induces a $\polylog(1/\varepsilon)$ overhead that prevents us from choosing $\varepsilon$ to be too small. Additionally, we present a fault tolerance proof of the unencoding procedure, which was omitted in~\cite{gottesman2013fault}. The main lemma proved in this subsection is~\Cref{lemma:noisy-prep-gadget}.

\subsubsection{Constant depth qLDPC encoding}\label{subsec:const-depth-encoding}
We start by showing how to prepare simple qLDPC codestates (all ancilla states used in this paper satisfy the precondition below) non-fault-tolerantly in constant depth.
    
\begin{lemma}[Constant-depth qLDPC encoding~\cite{gottesman2013fault}]\label{lemma:constant-depth-encoding}
\label{lem:nonFT-state-prep} Let $\ket{\psi}$ be a $k$-qubit state that can be prepared by a (Clifford+\(\CCZ\)) quantum circuit of constant depth. Then the encoding of $\ket{\psi}$ into a $[[n,k]]$ CSS $\Delta$-qLDPC code can be done by a non-fault-tolerant procedure using $O(n)$ qubits in quantum depth $O(\Delta)$ and $O(\log^2 n)$ classical depth.
\end{lemma}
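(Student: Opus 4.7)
The plan is to follow the standard Gottesman-style encoding (as in \cite{gottesman2013fault}): prepare a trivial initial state in constant depth, then use the qLDPC property to project into the codespace via a single round of stabilizer measurement followed by a classically computed Pauli correction. This avoids implementing the full encoding Clifford circuit, which would generally require super-constant depth.

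First, I would fix a convention for the encoding map compatible with the lemma. By applying row operations to $H_X$ and $H_Z$ (which does not change the code and can be absorbed into the classical post-processing), we may assume that the stabilizer generators are in a form where a canonical basis of logical $\bar Z$ operators can be chosen with support on the first $k$ qubits. The procedure now starts by using the given constant-depth Clifford+$\CCZ$ circuit to prepare $\ket{\psi}$ on these $k$ qubits and initializing the remaining $n-k$ qubits in a product state that is a $+1$ eigenstate of a maximal set of commuting operators compatible with this choice (\emph{e.g.}\ $\ket{0}$ on the qubits playing the role of $Z$-checks and $\ket{+}$ on the rest). This preparation has depth $O(1)$.

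Second, I would measure all $X$- and $Z$-type stabilizers once. Because the code is $\Delta$-qLDPC, the Tanner graph of each parity check matrix has vertex degree at most $\Delta$, so Vizing's theorem (constructive version) gives an edge coloring of the check-qubit incidence graph using $O(\Delta)$ colors. Each color class consists of CNOT (resp.\ CZ) gates on disjoint qubit pairs and can be executed in one layer. Together with the preparation of syndrome ancillas in $\ket{+}$ / $\ket{0}$ and their terminal measurement, this produces the syndrome pair $(s_X, s_Z)$ in quantum depth $O(\Delta)$, using $O(n)$ total qubits. Third, I classically solve the two affine linear systems $H_Z z = s_X$ and $H_X x = s_Z$ over $\F_2$ to obtain Pauli corrections; parallel Gaussian elimination over $\F_2$ lies in $\mathsf{NC}^2$, hence runs in $O(\log^2 n)$ classical depth. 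Fourth, I would apply $X^x Z^z$ transversally in depth $1$.

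To verify correctness, note that after the correction the state lies in the $+1$ joint eigenspace of all stabilizers, so it is a codestate; and because the corrections $X^x$, $Z^z$ are purely Pauli with supports determined only by the measured syndromes, they commute modulo stabilizers with the chosen logical $\bar Z$ basis on the first $k$ qubits. Therefore the resulting codestate has the same logical-$\bar Z$ statistics as $\ket{\psi}$, and a parallel argument in the $\bar X$ basis pins it down to be exactly $\mathcal{E}(\ket{\psi})$ for the chosen encoding map. The one point that requires some care, and which I expect to be the main (mild) obstacle, is the bookkeeping that shows that the procedure realizes the \emph{same} encoding map $\mathcal E$ that is used elsewhere in the paper (in particular, the one used by the transversal gate gadgets and the state distillation gadgets); this is ultimately just a choice of representatives for logical operators, but it must be made consistently with the systematic-form assumption above.
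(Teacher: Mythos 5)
Your overall skeleton — constant-depth product-state preparation, a single round of qLDPC syndrome measurement (edge-coloring the Tanner graph to get $O(\Delta)$ depth), $\mathsf{NC}^2$ Gaussian elimination to solve for a Pauli correction, and a final transversal Pauli — is the same as the paper's, and the depth accounting is right. However, the heart of the argument, namely \emph{which} product state to prepare and \emph{which} Pauli correction to apply, is where the paper has to do real work, and your proposal gets both wrong.

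First, your structural claim that ``a canonical basis of logical $\bar Z$ operators can be chosen with support on the first $k$ qubits'' is false: any logical operator has weight at least $d$, which for the codes used here is far larger than $k$. What one can actually arrange (and what the paper proves with a short Hadamard-conjugation argument applied to the canonical-form check matrices $S_X = (A_1|A_2|\id_{m_X})$, $S_Z = (B_1|\id_{m_Z}|B_2)$) is a basis $\overline{X}_i = X_i \otimes P_i \otimes P'_i$, $\overline{Z}_i = Z_i \otimes Q_i \otimes Q'_i$ where the restriction to the first block is the weight-one $X_i$ or $Z_i$, the middle block ($m_Z$ qubits) carries an $X$-type string, and the last block ($m_X$ qubits) carries a $Z$-type string. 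This block-wise \emph{Pauli-type purity} is the whole point; without it, you cannot identify a product state that is simultaneously a $(\pm1)$-eigenstate of all the $\overline{Z}_i$'s with the eigenvalues matching $\ket{\psi}$. (Given this, the right product state is $\ket{\psi}\ket{+}^{m_Z}\ket{0}^{m_X}$; your ``$\ket{0}$ on the $Z$-check qubits, $\ket{+}$ on the rest'' is the opposite assignment, and more to the point you give no argument that any particular choice works.)

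Second, the correction step is missing the crucial fix-up. Solving $H_Z e_X = \sigma_X$ and $H_X e_Z = \sigma_Z$ gives a Pauli that restores the trivial syndrome, but there is no reason for this $e_X$ (resp.\ $e_Z$) to commute with the $\overline{Z}_i$'s (resp.\ $\overline{X}_i$'s), and in general it will not: it is determined only up to the normalizer of the stabilizer, so an arbitrary logical factor can sneak in. Your claim that the corrections ``commute modulo stabilizers with the chosen logical $\bar Z$ basis'' is an unjustified assertion, not a consequence of anything you have set up. The paper handles this by explicitly multiplying $e_X$ by $\overline{X}_i$ for every $i$ such that $\overline{Z}_i$ anticommutes with $e_X$ (and symmetrically for $e_Z$), producing the mixed-type corrections $e'_X, e'_Z$ whose application provably returns $\ket{\overline{\psi}}$. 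Without this step — and without the structural lemma that makes the fix-up operators cheap to compute and apply — the procedure would in general output a codestate with the wrong logical content.
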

\begin{proof}
  Let $m_X$, $m_Z$ be the number of independent X and Z checks, so that $k=n-m_X-m_Z$. First, we observe that for any $[[n,k]]$ CSS code, up to qubit permutations, it is possible to choose the logical Pauli operators to be of the form $\overline{X}_i= X_i \otimes P_i \otimes P'_{i}$ and $\overline{Z}_i= Z_i \otimes Q_i \otimes Q'_i$ for each $i \in [k]$, where $X_i$ ($Z_i$) acts on the $i$-th physical qubit, $P_i$ and $Q_i$ are Pauli X strings on $m_Z$ qubits, and $P'_i$, $Q'_i$ are Pauli Z strings on $m_X$ qubits\footnote{A similar statement can be made for general stabilizer codes.}. Indeed, we can perform Gaussian elimination (and possibly qubit permutations) to put the check matrices into the canonical form~\cite[Section 4]{gottesman1997stabilizer}, $S_X = (A_1|A_2|\id_{m_X})$ where $A_{1} \in \ftwo^{m_X \times k}$, $A_{2} \in \ftwo^{m_X \times m_Z}$, and $S_Z = (B_1|\id_{m_Z}|B_2)$ where $B_{1} \in \ftwo^{m_Z \times k}$, $B_{2} \in \ftwo^{m_Z \times m_X}$. Next, we imagine applying the Hadamards $H^{\otimes m_z}$ on the middle block. This sounds strange at first because it makes the code non-CSS, but we will shortly see why the Hadamards are useful. We write the new stabilizers in the symplectic representation~\cite{gottesman1997stabilizer}
  \begin{align}
  \left(
  \begin{array}{c|c}
       A_1|A_2|\id_{m_X}&\mathbf{0}  \\
      \mathbf{0} & B_1|\id_{m_Z}|B_2
  \end{array}
  \right)
      \overset{H^{\otimes m_z}}{\longrightarrow} 
        \left(
  \begin{array}{c|c|c|c|c|c}
       A_1&\mathbf{0} &\id_{m_X}& \mathbf{0}&A_2&\mathbf{0} \\
      \mathbf{0}&\id_{m_Z}&\mathbf{0} & B_1&\mathbf{0} &B_2
  \end{array}
  \right).
  \end{align}
  In this Hadamard-transformed code, given a logical Pauli X (that contains Pauli Zs on the middle block and Pauli Xs on the first and third block) we can multiply it with stabilizers from the first $m_X$ rows to remove the Pauli Xs from the third block. Similarly, we can multiply with the stabilizers from the last $m_Z$ rows to remove Pauli Xs from the middle block of each logical Pauli Z. Therefore, we can choose a logical basis for this Hadamard-transformed code such that $\overline{X}'_i= X_i \otimes P''_i $ and $\overline{Z}'_i= Z_i \otimes Q''_i$ where $P''_i,Q''_i$ are Pauli Z strings on $m_Z+m_X$ qubits. Applying $H^{\otimes m_Z}$ to return to the original code we obtain the stated logical basis.

  Having chosen a convenient logical basis, we can encode any state $\ket{\psi}$ as follows. We prepare $\ket{\psi}\ket{+}^{m_Z} \ket{0}^{\otimes m_X}$.
  We then measure the qLDPC stabilizers. Importantly, we measure the LDPC presentation of the stabilizers rather than the canonical generators from the previous paragraph, so that the measurement depth is constant $O(\Delta)$. The resulting state will be the desired logical state encoded in some coset of the code, and we can shift back to the correct code as detailed below.
        
  For simplicity we start with the case when $\ket{\psi}$ is a product state in the Z basis, $\ket{\psi}=\ket{a}$ for $a \in \mathbb{F}_2^{k}$. The procedure starts with the state $\phi = \ket{a}\ket{0}^{\otimes (n-k)}$, which is stabilized by $(-1)^{a_i} \overline{Z}_{i}$ for $i\in [k]$. Next we perform the (LDPC) syndrome measurements, obtaining syndromes $\sigma_X \in \mathbb{F}_2^{m^\mathrm{LDPC}_Z}, \sigma_Z \in \mathbb{F}_2^{m^\mathrm{LDPC}_X}$. Using Gaussian elimination we solve for the Pauli corrections $e_X, e_Z \in \mathbb{F}_2^{n}$ such that $H_X e_Z = \sigma_Z$ and $H_Z e_X = \sigma_X$, where $H_X, H_Z$ are matrices representing the LDPC presentation of the stabilizers. We then multiply $e_X$ with $\overline{X}_i$ for each $i \in [k]$ such that $\overline{Z}_i$ anticommutes with $e_X$, yielding a Pauli string $e'_X$. Similarly, we multiply $e_Z$ with $\overline{Z}_i$ for each $i \in [k]$ such that $\overline{X}_i$ anticommutes with $e_Z$, yielding a Pauli string $e'_Z$. Note that $e'_Z$ and $e'_X$ can be of mixed Pauli types (including both X and Z), . Applying $e'_Z$ and $e'_X$
  produces the desired encoded state $\ket{\overline{a}}$. The same procedure can be seen to work for X basis product states as well. And thus it works for any general input state $\psi$ by linearity.

  The quantum depth consists of the physical circuit preparing $\psi$, the LDPC measurements, and the Pauli correction. This is a $O(\Delta)$ when $\psi$ is a constant-depth state and uses $O(n)$ physical qubits if $m^\mathrm{LDPC}_Z, m^\mathrm{LDPC}_X=O(n)$. The classical depth is the time needed to compute the Pauli corrections, which can be done by, e.g., Gaussian elimination with gate complexity $O(n^3)$. This can be parallelized to $O(\log^2n)$ classical depth by the algorithms in~\cite{csanky1975fast}.

\end{proof}

\subsubsection{Concatenated code unencoding}
Concatenated code fault tolerance allows us to simulate the encoding circuit of our computational code in a concatenated code, but we will be left with an encoded state of both the computational code and concatenated code. Hence, we need to unencode this state from the concatenated code.
Here, we establish a gadget for unencoding from concatenated codes that has controlled errors\footnote{A very recent work \cite{christandl2024fault} also gives a proof of this fact.}. A similar claim was made in~\cite{gottesman2013fault,fawzi2020constant} without proof.
Recall that for each \(r \in \mathbb{N}\), the set of gadgets was compatible, i.e. each encoded concatenated code block input and output a \((\qcode_{\mathrm{AB},r},\mathcal{F}_r)\)-block for some family of bad fault sets \(\mathcal{F}_r\) (the exact definition is not needed).
Denote the trivial block corresponding to a single qubit on the set \(\{1\}\) and the family of bad sets \(\{\{1\}\}\) as a \((\qcode_{\mathrm{AB},r}, \mathcal{F}_0)\)-block.

\begin{lemma}[Concatenated code unencoding]\label{lemma:ab-unencode}
  There exists a constant \(c_{\mathrm{unencode,AB}} > 0 \) such that for each \(r \in \mathbb{N}\) the following holds. There exists an unencoding gadget \(\gadget_{\mathrm{unencode,AB,r}}\) that takes a single qubit state \(\psi\) encoded in a \((\qcode_{\mathrm{AB},r},\mathcal{F}_r)\)-block of the \(r\)-th level of AB concatenated codes, and outputs the single qubit state \(\psi\) if the input is \(\mathcal{F}_r\)-deviated from \(\phi_{\qcode_{\mathrm{AB},r}}(\psi)\) and the fault set is \(\mathcal{G}_{r,\mathrm{unencode}}\)-avoiding where \(\weightenum{\mathcal{G}_{r,\mathrm{unencode}}}{x} \le c_{\mathrm{unencode,AB}}  \cdot x\) on \([0,1/(2c_{\mathrm{AB}}))\).
\end{lemma}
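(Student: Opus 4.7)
The plan is to construct $\gadget_{\mathrm{unencode,AB,r}}$ by recursion on the concatenation level $r$. The base case $r=0$ is trivial: the block is a single physical qubit which is output unchanged, so $\weightenum{\mathcal{G}_{0,\mathrm{unencode}}}{x}=0$. For the inductive step, I would assume $\gadget_{\mathrm{unencode,AB,r-1}}$ has been built and form $\gadget_{\mathrm{unencode,AB,r}}$ by composing three subgadgets. First, apply the level-$r$ AB error-correction gadget from \cref{fact:AB} to clean up any correctable top-level deviation of the $\mathcal{F}_r$-deviated input. Second, apply the inverse of the (constant-depth, $O(1)$-gate) AB base-code Clifford encoding circuit, with each of its physical gates realised by a level-$(r-1)$ FT gadget acting on the appropriate level-$(r-1)$ sub-blocks; by construction one designated sub-block will then carry $\ket{\overline{\psi}}$ at level $r-1$ while the remaining sub-blocks are in (discardable) $\ket{\overline{0}}$ states. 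Third, invoke $\gadget_{\mathrm{unencode,AB,r-1}}$ recursively on the surviving sub-block. Composition of these three subgadgets via \cref{lemma:gadget-composition} yields $\gadget_{\mathrm{unencode,AB,r}}$.

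\paragraph{Error analysis.} Letting $f_r(x)$ denote the intended bound on $\weightenum{\mathcal{G}_{r,\mathrm{unencode}}}{x}$, I would use \cref{fact:AB} to bound the EC gadget contribution by $(c_{\mathrm{AB}} x)^{2^r}$ and the $O(1)$ level-$(r-1)$ FT gadgets of the second phase by $C_0 (c_{\mathrm{AB}} x)^{2^{r-1}}$ via the sum rule of \cref{lemma:enumerator-ring}, with $f_{r-1}(x)$ coming from the inductive hypothesis. Telescoping the resulting recurrence gives
\[
f_r(x) \;\le\; (1+C_0) \sum_{j=0}^{r}(c_{\mathrm{AB}} x)^{2^j}.
\]
On $[0,\, 1/(2c_{\mathrm{AB}}))$ one has $c_{\mathrm{AB}} x < 1/2$, hence $(c_{\mathrm{AB}} x)^{2^j} \le c_{\mathrm{AB}} x \cdot (1/2)^{2^j-1}$, and the series $\sum_{j \ge 0}(1/2)^{2^j-1}$ converges to a universal constant. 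Thus $f_r(x) \le c_{\mathrm{unencode,AB}} \cdot x$ uniformly in $r$ for a sufficiently large constant $c_{\mathrm{unencode,AB}}$ depending only on $c_{\mathrm{AB}}$ and $C_0$.

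\paragraph{Main obstacle.} The delicate point will be to check that the second subgadget actually maps an $\mathcal{F}_r$-deviated state to an $\mathcal{F}_{r-1}$-deviated state on the surviving level-$(r-1)$ block, so that the recursive invocation of the third subgadget is legitimate. By the composition structure of $\mathcal{F}_r$ (\cref{def:composition}), an $\mathcal{F}_r$-deviated input will decompose so that each level-$(r-1)$ sub-block is either $\mathcal{F}_{r-1}$-deviated or arbitrary, with the set of arbitrary sub-blocks forming a top-level pattern that the base AB code can correct. The EC gadget will leverage friendliness of the AB gadgets to restore every sub-block to an $\mathcal{F}_{r-1}$-deviated one. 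During the second subgadget, each level-$(r-1)$ FT gadget confines any remaining deviation within its own block, while inter-block coupling only occurs through the constant number of (noiseless) entangling gates of the inverse encoding circuit; since in the noiseless case this circuit sends a codestate to $\ket{\psi}\otimes\ket{0}^{\otimes(n_{\mathrm{AB}}-1)}$ exactly, the surviving sub-block will end up $\mathcal{F}_{r-1}$-deviated from $\phi_{\qcode_{\mathrm{AB},r-1}}(\psi)$. Reduction of general (non-Pauli) faults to Pauli faults will proceed via the standard decoherence argument of \cref{lemma:deterministic-errors} applied to the measurements inside the EC gadget.
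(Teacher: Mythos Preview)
Your proposal is correct and matches the paper's proof essentially step for step: the paper likewise builds $\gadget_{\mathrm{unencode,AB,r}}$ recursively as an EC gadget followed by the level-$(r-1)$ simulation of the base-code unencoding circuit $\qOp_{\mathrm{unencode}}$, then telescopes the bad-fault weight enumerators to get $A_{\mathrm{unencode}}\,x + \sum_{\mathfrak{r}\ge 2} A_{\mathrm{unencode}}(c_{\mathrm{AB}}x)^{2^{\mathfrak{r}-1}}$ and bounds the doubly-exponentially decaying series by a constant times $x$ on $[0,1/(2c_{\mathrm{AB}}))$. Your ``main obstacle'' paragraph is more explicit than the paper (which simply invokes the black-box guarantees of \cref{fact:AB}), but the content is the same.
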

\begin{proof}
  Let \(\qOp_{\mathrm{unencode}}\) be the unencoding operation for the code used in the recursive concatenation procedure.
  We first define \(\widehat{\gadget}_{\mathrm{unencode,AB,\mathfrak{r}}}\) as the EC-gadget followed by the simulation of  \(\qOp_{\mathrm{unencode}}\) using \((\mathfrak{r}-1)\)-AB.
  This takes a state encoded in a \((\qcode_{\mathrm{AB},\mathfrak{r}},\mathcal{F}_{\mathfrak{r}})\)-block and outputs the same state encoded in a \((\qcode_{\mathrm{AB},\mathfrak{r}-1},\mathcal{F}_{\mathfrak{r}-1})\)-block.
  We will define \(\gadget_{\mathrm{unencode,AB,\mathfrak{r}}}\) recursively as
  \begin{align*}
    \gadget_{\mathrm{unencode,AB,\mathfrak{r}}} = \gadget_{\mathrm{unencode,AB,\mathfrak{r}-1}}\circ \widehat{\gadget}_{\mathrm{unencode,AB,\mathfrak{r}}}
  \end{align*}
  with \(\gadget_{\mathrm{unencode,AB,0}}\) trivial.

  Let \(A_{\mathrm{unencode}} = O(1)\) be the number of locations in \(\qOp_{\mathrm{unencode}}\).
  We use \cref{lemma:gadget-composition} and \cref{lemma:enumerator-ring} to upper bound the weight enumerator polynomial for the family of bad sets at each step of the recursion.
  For each recursion step \(\mathfrak{r}>1\), there is a set of bad fault paths \(\mathcal{H}_{\mathfrak{r},\mathrm{unencode}}\) with weight enumerator \(\weightenum{\mathcal{H}_{\mathfrak{r},\mathrm{unencode}}}{x} \le A_{\mathrm{unencode}} (c_{\mathrm{AB}} \cdot x)^{2^\mathfrak{r-1}}\) such that if the fault path is \(\mathcal{H}_{\mathfrak{r},\mathrm{unencode}}\)-avoiding then the gadget takes a \((\qcode_{\mathrm{AB},\mathfrak{r}},\mathcal{F}_{\mathfrak{r}})\)-block \(\rho_{\mathfrak{r}}\) that is \(\mathcal{F}_{\mathfrak{r}}\)-deviated from \(\phi_{\qcode_{\mathrm{AB},\mathfrak{r}}}(\psi)\) to a \((C\qcode_{\mathrm{AB},\mathfrak{r}-1},\mathcal{F}_{\mathfrak{r}-1})\)-block \(\rho_{\mathfrak{r}-1}\) that is \(\mathcal{F}_{\mathfrak{r}-1}\)-deviated from \(\phi_{\qcode_{\mathrm{AB},\mathfrak{r}-1}}(\psi)\).
  For \(\mathfrak{r}=1\), there is no simulation and \(\weightenum{\mathcal{H}_{1,\mathrm{unencode}}}{x} \le A_{\mathrm{unencode}} x\) as any error in one of the \(A_{\mathrm{encode}}\) locations will cause an error on the output.
  The family of fault paths \(\mathcal{H}_{\mathfrak{r},\mathrm{unencode}}\) corresponds to failure of any of the \(A_{\mathrm{unencode}}\) level-\((\mathfrak{r}-1)\) simulation gadgets i.e. it is the sum of the bad fault paths for each of the gadgets.
  \Cref{fact:AB} gives an upper bound on the weight enumerator polynomial for each of these gadgets.
  
  The family of bad fault paths for the entire gadget \(\mathcal{G}_{r,\mathrm{unencode}}\) is the sum of the bad fault paths in each step of the recursion.
  That is, for \(\mathfrak{r}>1\),
  \begin{align}
    \weightenum{\mathcal{G}_{\mathfrak{r},\mathrm{unencode}}}{x} &\le \weightenum{\mathcal{G}_{\mathfrak{r}-1,\mathrm{unencode}}}{x} + \weightenum{\mathcal{H}_{\mathfrak{r},\mathrm{unencode}}}{x}
  \end{align}
  with \(\weightenum{\mathcal{G}_{1,\mathrm{unencode}}}{x} = \weightenum{\mathcal{H}_{1,\mathrm{unencode}}}{x}\).
  Suppose that \(x \in (0,1/(2c_{\mathrm{AB}}))\), then evaluating the recursion gives the upper bound:
  \begin{align}
    \weightenum{\mathcal{G}_{r,\mathrm{unencode}}}{x} &\le A_{\mathrm{unencode}}~ x + \sum_{\mathfrak{r}=2}^{r} A_{\mathrm{unencode}} (c_{\mathrm{AB}}\cdot x)^{2^{\mathfrak{r}-1}} \\
                         &\le A_{\mathrm{unencode}} x \left(1 + c_{\mathrm{AB}}  \sum_{\mathfrak{r}=2}^{\infty} (c_{\mathrm{AB}}\cdot x)^{2^{\mathfrak{r}-1}-1}\right) \\
                         &\le A_{\mathrm{unencode}} x \left(1 + c_{\mathrm{AB}} \int_1^{\infty} (1/2)^{2^n-1}\mathrm{d}n\right) \\
    &\le A_{\mathrm{unencode}} \left(1 + c_{\mathrm{AB}}\right) x.
  \end{align}
  The integral can be evaluated to be less than \(0.343\).
  Thus, the result follows with \(c_{\mathrm{unencode,AB}} = A_{\mathrm{unencode}} \left(1 + c_{\mathrm{AB}}\right)\).
\end{proof}

\subsubsection{qLTC tester}\label{sec:tester}
Combining~\Cref{lem:nonFT-state-prep} and~\Cref{lemma:ab-unencode} allows us to prepare computational code states to a constant error rate. However, we do not have any guarantee on the prepared state when this procedure fails. Hence, we now construct a verification gadget to verify that the state prepared is not too damaged. We need this verification to be fast, and this is where the qLTC property is used.

The final ingredient will be a tester gadget that uses the qLTC property to test (in low depth) if we have prepared a state close to the code space. We use the notations and variables defined in~\Cref{sec:globals}.

\begin{lemma}[Testing variant of computational code EC gadget]\label{lemma:ec-gadget-tester}
  There exists a constant \(\epsilon_{*,\mathrm{tester}}\in (0,1)\) and gadget \(\gadget_{\mathrm{tester}}\) for the computational code of depth \(O(\log n_L)\), width \(O(n_L)\), and \(A_{\mathrm{EC}}=O(n_L \log n_L)\) locations such that
  there is family of bad fault paths \(\mathcal{G}_{\mathrm{tester}}\subseteq P([A_{\mathrm{EC}}])\) for which \(\weightenum{\mathcal{G}_{\mathrm{tester}}}{x} \le \polylog(n_L) e^{-\zeta(n_L)}\) for \(x \in [0,\ndep{\epsilon_{*,\mathrm{tester}}}]\).
  For a \(\mathcal{G}_{\mathrm{tester}}\)-avoiding Pauli fault \(\fault\), and some pure state in the codespace \(\ket{\psi}\), \(\gadget_{\mathrm{tester}}[\fault]\) satisfies
  \begin{itemize}
  \item \(\gadget_{\mathrm{tester}}[\fault]\) outputs to the classical memory a FAIL bit \(y_{\mathrm{fail}}\).
  \item If \(y_{\mathrm{fail}} = 0\) is output, the output is \(\mathcal{F}_{\mathrm{L}}\)-Pauli deviated from the code space.
  \item There exists a family \(\mathcal{F}_{\mathrm{test}}\subseteq P([n_L])\) such that \(\weightenum{\mathcal{F}_{\mathrm{test}}}{x} = \binom{n_L}{t_{\mathrm{test}}} x^{t_{\mathrm{test}}}\) with \(t_{\mathrm{test}} = \Theta(t_L)\).
  \item If the input is \(\mathcal{F}_{\mathrm{test}}\)-deviated from a codestate \(\sigma\), then \(y_{\mathrm{fail}} = 0\) is always output and the output is \(\mathcal{F}_{\mathrm{L}}\)-Pauli deviated from \(\sigma\).
  \end{itemize}
\end{lemma}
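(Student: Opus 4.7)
The plan is to piggyback on the syndrome-extraction portion of \cref{lemma:ec-gadget} and use the qLTC soundness of the computational code (from \cref{thm:single-shot-qltc}) to detect whether the input is close to the code space before trusting the decoder. First, I would execute the same constant-depth \(X\)- and \(Z\)-syndrome extraction circuits used in \(\gadget_{\mathrm{EC}}\), obtaining noisy syndromes \(\tilde\sigma_X, \tilde\sigma_Z\). In \(O(\log n_L)\) subsequent timesteps, during which each data qubit is idle and padded out to the uniform depth, classically compute the total Hamming weight \(|\tilde\sigma_X|+|\tilde\sigma_Z|\) via a balanced binary tree and compare it to a threshold \(T = \Theta(\rho(n_L)\cdot t_L)\). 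If the weight exceeds \(T\), write \(y_{\mathrm{fail}}=1\) to the classical register and apply only the identity on the quantum side; otherwise write \(y_{\mathrm{fail}}=0\), invoke the constant-time \((\alpha,\beta(n_L),\gamma,\eta)\)-parallel single-shot decoder from \cref{thm:single-shot-qltc} on \((\tilde\sigma_X,\tilde\sigma_Z)\), and apply the resulting Pauli correction transversally.

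For the fault family, I would take \(\mathcal{G}_{\mathrm{tester}}\) to be all subsets of the \(O(n_L \log n_L)\) quantum spacetime locations of size \(s_{\mathrm{tester}} = \Theta(\zeta(n_L))\) with the constant chosen small enough that any \(\mathcal{G}_{\mathrm{tester}}\)-avoiding Pauli fault propagates through the constant-depth syndrome-extraction circuit to at most \(O(s_{\mathrm{tester}})\) Pauli errors on the data qubits and \(O(s_{\mathrm{tester}})\) flipped syndrome bits. The desired weight-enumerator bound \(\weightenum{\mathcal{G}_{\mathrm{tester}}}{x} \le \polylog(n_L)\, e^{-\zeta(n_L)}\) on \([0,\ndep{\epsilon_{*,\mathrm{tester}}})\) then follows from the same binomial estimate used at the end of \cref{lemma:ec-gadget}. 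For soundness, when \(y_{\mathrm{fail}}=0\), I push the Pauli fault past the Clifford portion of the gadget as in \cref{lemma:ec-gadget} to write the pre-decoding state as some codestate acted on by an effective Pauli error whose noiseless syndrome has weight at most \(T + O(s_{\mathrm{tester}})\). The qLTC soundness then bounds the stabilizer-reduced weight of this effective error by \((T + O(s_{\mathrm{tester}}))/\rho(n_L) = O(t_L) \le \beta(n_L)\,n_L\), meeting the precondition of the single-shot decoder; combining the decoder's \(\eta\)-contractive guarantee with the \(O(s_{\mathrm{tester}})\) output-side fault contribution gives a residual Pauli error of weight at most \(\eta \cdot O(t_L) + O(\gamma\, s_{\mathrm{tester}}) \le t_L\), certifying that the output is \(\mathcal{F}_L\)-Pauli deviated from some codestate.

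For completeness, take \(\mathcal{F}_{\mathrm{test}}\) to consist of all subsets of \([n_L]\) of size \(t_{\mathrm{test}} = \Theta(t_L)\), sized so that \(\Delta t_{\mathrm{test}} + O(s_{\mathrm{tester}}) < T\). If the input is Pauli \(\mathcal{F}_{\mathrm{test}}\)-deviated from a specific codestate \(\sigma\), then the noiseless syndrome contribution has weight at most \(\Delta t_{\mathrm{test}}\), the \(\mathcal{G}_{\mathrm{tester}}\)-avoiding fault adds at most \(O(s_{\mathrm{tester}})\) further flipped syndrome bits, and so the measured syndrome weight is strictly less than \(T\), forcing \(y_{\mathrm{fail}}=0\). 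The soundness argument then applies with the input error itself playing the role of the effective error, yielding an \(\mathcal{F}_L\)-Pauli deviation from the \emph{same} codestate \(\sigma\). The main obstacle I anticipate is the simultaneous parameter balance: \(T\) must be large enough to absorb the \(\Delta t_{\mathrm{test}} + O(s_{\mathrm{tester}})\) bits of legitimate syndrome (completeness) yet small enough that qLTC soundness composed with the \(\eta\)-contractive decoder pushes the residual strictly below \(t_L\) (soundness). The choice \(T = \Theta(\rho(n_L)\, t_L)\), \(t_{\mathrm{test}} = \Theta(T/\Delta)\), and \(s_{\mathrm{tester}} = \Theta(\zeta(n_L))\) threads these constraints precisely because the global function \(\zeta\) from \cref{sec:globals} was defined so that both \(\rho(n_L)\,n_L\) and \(\beta(n_L)\,n_L\) dominate \(\zeta(n_L)\) up to constants.
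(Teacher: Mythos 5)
Your plan diverges from the paper's in one structural choice that turns out to be fatal: while the classical circuit computes $y_{\mathrm{fail}}$ you keep the data qubits \emph{idle} for $O(\log n_L)$ timesteps, whereas the paper's gadget runs $O(\log n_L)$ applications of $\gadget_{\mathrm{EC}}$ during that delay. This difference forces your choice of $\mathcal{G}_{\mathrm{tester}}$ (all subsets of size $s_{\mathrm{tester}} = \Theta(\zeta(n_L))$ of \emph{all} $O(n_L \log n_L)$ spacetime locations), and that choice does not yield the claimed weight-enumerator bound. Repeating the binomial estimate from the end of \cref{lemma:ec-gadget} with $A_{\mathrm{tester}} = \Theta(n_L \log n_L)$ gives
\begin{align*}
\weightenum{\mathcal{G}_{\mathrm{tester}}}{\ndep{\epsilon_*}} \le \left(\frac{e\,A_{\mathrm{tester}}}{s_{\mathrm{tester}}}\,\epsilon_* \frac{\zeta(n_L)}{n_L}\right)^{s_{\mathrm{tester}}} = \bigl(\Theta(\epsilon_* \log n_L)\bigr)^{\Theta(\zeta(n_L))},
\end{align*}
and because $\epsilon_*$ is a fixed constant, the base $\Theta(\epsilon_* \log n_L)$ eventually exceeds $1$; the expression diverges rather than being dominated by $e^{-\zeta(n_L)}$. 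The paper avoids this by writing $\mathcal{G}_{\mathrm{tester}}$ as a $\boxplus$-sum of per-stage families, each stage being a constant-depth circuit with only $O(n_L)$ locations, so each summand individually satisfies the $e^{-\zeta(n_L)}$ bound and the sum picks up only the harmless $\polylog(n_L)$ prefactor.

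If you instead try to repair this by taking $\mathcal{G}_{\mathrm{tester}}$ as a $\boxplus$-sum over timesteps (so each timestep independently has $< s$ faults), you hit the second problem that the interleaved EC gadgets are there to solve: the admissible faults can then total up to $\Theta(\zeta(n_L)\log n_L)$ across the $O(\log n_L)$ idle layers, and since each idle fault lands directly on a data qubit and your stale syndrome (measured once at the start) never sees them, the residual after the final correction can exceed $t_L$, so the output is not $\mathcal{F}_L$-Pauli deviated. The paper's gadget resets the data error back below $t_L$ after every $\gadget_{\mathrm{EC}}$ call, which is what prevents accumulation. Your threshold and soundness bookkeeping for a single snapshot are fine, but the gadget has to remain fault-tolerant over its full $O(\log n_L)$ depth, and idling plus a one-shot late correction does not achieve that under the required fault family.
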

\begin{proof}
  First execute the same quantum circuit as in \cref{lemma:ec-gadget}, without the decoding and correction steps.
  This will allow us to borrow the majority of the analysis.
  We call this first layer of syndrome extraction the test step.
  While the classical circuit computing \(y_{\mathrm{fail}}\) is running (computing a size \(n_L\) threshold\footnote{One construction is to sort the bits of the input bitstring \(x\). The bit at position \(k\) is \(|x| \ge k\).}), we will perform \(O(\log n_L)\) applications of \(\gadget_{\mathrm{EC}}\) (\cref{lemma:ec-gadget}).
  We parallel the analysis by first restricting to a state that is diagonal Pauli-deviated from some codestate \(\sigma\) and a diagonal Pauli fault \(\fault\).
  
  Let \(\ell = 2\cdot 2^{2\Delta}\), and define \(\mathcal{G}_{\mathrm{test~step}}\) to be all subsets of locations of the test step of a size \(s\) that will be selected later.
  Let \(\mathcal{G}_{\mathrm{tester}}\) be the sum of \(\mathcal{G}_{\mathrm{test~step}}\) and the bad fault paths \(\mathcal{G}_{\mathrm{EC}}\) for each application of \(\gadget_{\mathrm{EC}}\).
  Consider the application of \(\gadget_{\mathrm{tester}}[\fault]\) for a \(\mathcal{G}_{\mathrm{tester}}\)-avoiding diagonal Pauli fault \(\fault\).
  For now, suppose that the input state \(\tilde{\rho}\) differs from the codestate \(\sigma\) by a diagonal Pauli superoperator \(E\).
  By the same argument as in \cref{lemma:ec-gadget}, the state output from the test step then differs from the codestate by a (stabilizer reduced) diagonal Pauli superoperator \(E_{\mathrm{out}}\) which satisfies (note that we are defining \(s\) to be slightly smaller)
  \begin{align}
    |\supp E_{\mathrm{out}} | \le |\supp E| + \ell(s-1).
  \end{align}
  Let \(r_{\mathrm{min}} = \min(r_x,r_z) = \Omega(n_L)\) be a lower bound for the number of \(X\) and \(Z\) checks for the computational code.
  Using the \((\rho,\Delta)\)-qLTC property of the computational code, the Hamming weight of the noisy syndrome measured \(|\tilde{\sigma}_X| + |\tilde{\sigma}_Z|\) satisfies
  \begin{align}
    \frac{\rho r_{\mathrm{min}}}{n_L} |\supp E| - \ell(s-1) &\le |\tilde{\sigma}_X| + |\tilde{\sigma}_Z|  \le 2 \Delta |\supp E| + \ell(s-1).
  \end{align}
  We reject (setting \(y_{\mathrm{fail}} = 1\) using a \(\log n_L\) depth circuit) when the measured syndrome weight \(|\tilde{\sigma}_X| + |\tilde{\sigma}_Z|\) is \(\sigma_{\mathrm{reject}}\) or more.
  \begin{align*}
    \sigma_{\mathrm{reject}} = \frac{\rho r_{\mathrm{min}}}{n_L} (t_{\mathrm{corr}} - s (\ell-1)) - s(\ell-1).
  \end{align*}
  When \(y_{\mathrm{fail}} = 0\), the measured syndrome satisfies \(|\tilde{\sigma}_X| + |\tilde{\sigma}_Z| < \sigma_{\mathrm{reject}}\), so the test step output error is at most
  \begin{align*}
    |\supp E_{\mathrm{out}} | &\le |\supp E| + \ell(s-1) \\
                              &< \left(\sigma_{\mathrm{reject}} + \ell(s-1)\right) \frac{n_L}{\rho r_{\mathrm{min}}} + \ell(s-1) \\
    &< t_{\mathrm{corr}}.
  \end{align*}

  We now analyze under what conditions this test will always accept.
  Let \(t_{\mathrm{test}} = \left\lfloor \frac{\sigma_{\mathrm{reject}} -\ell (s-1)}{2 \Delta} \right\rfloor\) and suppose that \(|\supp E| < t_{\mathrm{test}}\).
  Then, the input state is always accepted:
  \begin{align*}
    |\tilde{\sigma}_X| + |\tilde{\sigma}_Z| &\le 2\Delta |\supp E| + \ell (s-1)\\
                                            &< 2\Delta t_{\mathrm{test}} + \ell(s-1) \le \sigma_{\mathrm{reject}}.
  \end{align*}
  \(\mathcal{F}_{\mathrm{test}}\) will be all subsets of \([n_L]\) of size \(t_\mathrm{test}\) which implies \(\weightenum{\mathcal{F}_{\mathrm{test}}}{x} = \binom{n_L}{t_{\mathrm{test}}} x^{t_{\mathrm{test}}}\).

  Since \(\frac{\rho(n_L) r_{\mathrm{min}}}{n_L} = \Theta\left(\frac{\zeta(n_L)}{n_L}\right)\), for \(n_L\) larger than some constant, we can always pick a \(s = \Theta(\zeta(n_L)) = \Theta(t_{\mathrm{corr}})\) such that \(s \ge 2\) and \(t_{\mathrm{test}}\ge 1\).
  By an identical analysis to \Cref{lemma:ec-gadget}\, there exists an absolute constant \(\epsilon_{*,\mathrm{tester}}\in (0,\epsilon_{*,\mathrm{EC}})\) such that \(\weightenum{\mathcal{G}_{\mathrm{test step}}}{x} \le e^{-\zeta(n_L)}\) for \(x \in [0,\ndep{\epsilon_{*,\mathrm{tester}}}]\).
  Computing the weight enumerator sum, it follows that \(\weightenum{\mathcal{G}_{\mathrm{tester}}}{x} \le \polylog(n_L) e^{-\zeta(n_L)}\) for \(x \in [0,\ndep{\epsilon_{*,\mathrm{tester}}}]\).

  Since \(t_{\mathrm{test}} \le t_{\mathrm{corr}}\), a set that is \(\mathcal{F}_{\mathrm{test}}\)-avoiding is also \(\mathcal{F}_{\mathrm{corr}}\)-avoiding.
  We can now release the assumption on the input state and the fault as in \cref{lemma:ec-gadget}: Apply \cref{lemma:deterministic-errors} to ensure that the input and test step output must be diagonal-Pauli superoperators except where there is a fault.
  If the input state \(\tilde{\rho}\) is \(\mathcal{F}_{\mathrm{test}}\)-Pauli deviated from a codestate \(\sigma\), then it follows from the earlier argument that the test step output state is \(\mathcal{F}_{\mathrm{corr}}\)-Pauli deviated from \(\sigma\).
  Otherwise, if \(y_{\mathrm{fail}}=0\), the test step output state is \(\mathcal{F}_{\mathrm{corr}}\)-Pauli deviated from some codestate \(\phi_{\qcode_L}(\rho')\) where \(\rho'\) is arbitrary.
  It follows that the same holds for the output of the gadget (after one or more rounds of  \(\gadget_{\mathrm{EC}}\)) with \(\mathcal{F}_{\mathrm{corr}}\) replaced by \(\mathcal{F}_{\mathrm{L}}\).
\end{proof}

\subsubsection{Noisy state preparation gadget}\label{sec:injection-gadget}
We are now ready to state and prove the main goal of this section.
\begin{lemma}[Noisy state preparation gadget]\label{lemma:noisy-prep-gadget}
  For \(\epsilon_{\mathrm{noise}}\in (0,1)\) and a pure \(k_L\)-qubit state \(\ket{\psi}\) that can be prepared by a Clifford+\(\CCZ\) circuit \(\qOp_{\mathrm{prep}}\) using depth \(D_\mathrm{prep}\) and space \(W_\mathrm{prep}\) such that \(W_\mathrm{prep} \geq \Theta(k_L) = \Theta(n_L)\), there exists a constant \(\epsilon_{*,\mathrm{inject}} \in (0,1)\) (independent of \(n_L\)) and a Pauli fault-tolerant gadget \(\gadget_{\mathrm{noisyprep}}\)\footnote{We omit the parameters, since this is a slightly extended notion of fault-tolerant gadget.} preparing \(\ket{\overline{\psi}}\) in a \((\qcode_L,\mathcal{F}_L)\)-block with \(L\) locations such that the following properties hold:
  \begin{itemize}
  \item If the fault is Pauli and its fault path is \(\mathcal{G}_{\mathrm{bad}}\)-avoiding, then the output state is \(\mathcal{F}_L\)-Pauli deviated from \(\qcode_L\) or FAIL is output.
  \item If the fault is Pauli and its fault path is also \(\mathcal{G}_{\mathrm{noisy}}\)-avoiding, then the output state is \(\mathcal{F}_L\)-Pauli deviated from \(\phi_{\qcode_L}(\psi)\).
  \end{itemize}
  Where, for \(x \in [0,\ndep{\epsilon_{*,\mathrm{inject}}}]\), \(\mathcal{G}_{\mathrm{bad}},~\mathcal{G}_{\mathrm{noisy}}\subseteq P([L])\) satisify 
  \begin{align}
    \weightenum{\mathcal{G}_{\mathrm{bad}}}{x} &\le \polylog(n_L) e^{-\zeta(n_L)} \\
    \weightenum{\mathcal{G}_{\mathrm{noisy}}}{x} &\le \epsilon_{\mathrm{noise}}
  \end{align}

  \(\gadget_{\mathrm{noisyprep}}\) has depth \(D_\mathrm{prep}'\) and width \(W_\mathrm{prep}'\).
  \begin{align}
    D_\mathrm{prep}' &= O\left(\left(D_\mathrm{prep} + \log^2 n_L\right) \polylog \left(\frac{W_\mathrm{prep}D_\mathrm{prep}}{\epsilon_{\mathrm{noise}}}\right)\right), \\
    W_\mathrm{prep}' &= O\left(W_\mathrm{prep} \polylog \left(\frac{W_\mathrm{prep}D_\mathrm{prep}}{\epsilon_{\mathrm{noise}}}\right)\right).
  \end{align}
\end{lemma}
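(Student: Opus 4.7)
\emph{Proof plan.} The plan is to first simulate a non-fault-tolerant encoding of $\ket{\psi}$ into $\qcode_L$ using the Aharonov--Ben-Or concatenated code of~\Cref{fact:AB} at an appropriate level $r$, then unencode the concatenated code on each physical output qubit via~\Cref{lemma:ab-unencode}, and finally run the qLTC tester of~\Cref{lemma:ec-gadget-tester} to either accept with a controlled deviation from the codespace or reject outright. Concretely, let $\qOp_\mathrm{enc}$ denote the composition of $\qOp_\mathrm{prep}$ with the constant-quantum-depth encoding circuit from~\Cref{lemma:constant-depth-encoding}, a Clifford+$\CCZ$ circuit of width $O(W_\mathrm{prep})$ and depth $D_\mathrm{enc} = D_\mathrm{prep}+O(\log^2 n_L)$ that prepares $\phi_{\qcode_L}(\psi)$ from $\ket{0}^{\otimes O(W_\mathrm{prep})}$. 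I will simulate $\qOp_\mathrm{enc}$ using $r$-AB gadgets, apply $\gadget_{\mathrm{unencode,AB},r}$ to each of the $n_L$ output concatenated-code blocks, and finally apply $\gadget_\mathrm{tester}$, outputting FAIL whenever $y_\mathrm{fail}=1$.

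For the fault analysis, set $\mathcal{G}_\mathrm{bad} := \mathcal{G}_\mathrm{tester}$. By~\Cref{lemma:ec-gadget-tester}, $\weightenum{\mathcal{G}_\mathrm{bad}}{x}\le\polylog(n_L)\, e^{-\zeta(n_L)}$ on $[0,\ndep{\epsilon_{*,\mathrm{tester}}}]$, and whenever $\mathcal{G}_\mathrm{bad}$ is avoided the gadget either outputs FAIL or produces a state that is $\mathcal{F}_L$-Pauli deviated from the codespace, establishing the first property. Set $\mathcal{G}_\mathrm{noisy} := \mathcal{G}_\mathrm{bad}\boxplus \mathcal{G}_\mathrm{sim} \boxplus \mathcal{G}_\mathrm{unfail}$, where $\mathcal{G}_\mathrm{sim}$ is the $\boxplus$-sum of the bad fault paths of all $O(W_\mathrm{prep}D_\mathrm{enc})$ simulated gadgets, and $\mathcal{G}_\mathrm{unfail} := \mathcal{H}\bullet\{\mathcal{G}_{r,\mathrm{unencode},i}\}_{i\in[n_L]}$, with $\mathcal{H}$ the family of all size-$t_\mathrm{test}$ subsets of $[n_L]$. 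Applying~\Cref{lemma:enumerator-ring,lemma:composition-upper-bound} together with the bounds from~\Cref{fact:AB} and~\Cref{lemma:ab-unencode} yields
\begin{align*}
  \weightenum{\mathcal{G}_\mathrm{noisy}}{x} \le \weightenum{\mathcal{G}_\mathrm{bad}}{x} + O(W_\mathrm{prep}D_\mathrm{enc})\,(c_\mathrm{AB}\,x)^{2^r} + \binom{n_L}{t_\mathrm{test}}(c_\mathrm{unencode,AB}\,x)^{t_\mathrm{test}}.
\end{align*}
When $\mathcal{G}_\mathrm{sim}$ is avoided, the $r$-AB simulation reproduces $\qOp_\mathrm{enc}$ exactly in the concatenated code; when $\mathcal{G}_\mathrm{unfail}$ is additionally avoided, fewer than $t_\mathrm{test}$ of the $n_L$ unencodings are corrupted, so the state entering the tester is $\mathcal{F}_\mathrm{test}$-Pauli deviated from $\phi_{\qcode_L}(\psi)$. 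The input guarantee of~\Cref{lemma:ec-gadget-tester} then forces $y_\mathrm{fail}=0$ and the output to be $\mathcal{F}_L$-Pauli deviated from $\phi_{\qcode_L}(\psi)$, giving the second property.

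Choosing $r=\Theta(\loglog(W_\mathrm{prep}D_\mathrm{prep}/\epsilon_\mathrm{noise}))$ and a constant $\epsilon_{*,\mathrm{inject}}\le\min(\epsilon_{*,\mathrm{tester}},1/(2c_\mathrm{AB}))$ makes both the simulation and the unencoding terms at most $\epsilon_\mathrm{noise}/3$ on $[0,\ndep{\epsilon_{*,\mathrm{inject}}}]$ (the unencoding term benefits from $t_\mathrm{test}=\Theta(\zeta(n_L))$ being large), while the tester contribution is absorbed. The stated depth and width bounds then follow by combining the $A_\mathrm{AB}^r,B_\mathrm{AB}^r=\polylog(W_\mathrm{prep}D_\mathrm{prep}/\epsilon_\mathrm{noise})$ per-gate and per-block overheads of $r$-AB applied to $\qOp_\mathrm{enc}$ with the $O(\log n_L)$ depth of the tester. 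The main obstacle is the control of $\mathcal{G}_\mathrm{unfail}$: a naive $\boxplus$-sum across the $n_L$ unencodings yields only $n_L c_\mathrm{unencode,AB}\,x$, far too large, so the correct move is to use the $\bullet$-composition of~\Cref{def:composition} with the size-$t_\mathrm{test}$ outer family $\mathcal{H}$ to exploit independence across disjoint concatenated-code blocks via~\Cref{lemma:composition-upper-bound}, concentrating the failure probability on the event that $\ge t_\mathrm{test}$ blocks fail simultaneously---which is precisely the threshold below which the tester is guaranteed to accept.
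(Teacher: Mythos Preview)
Your proposal is correct and follows essentially the same approach as the paper: simulate $\qOp_\mathrm{prep}$ composed with the constant-depth encoder via $r$-AB, unencode each of the $n_L$ concatenated-code blocks, then apply the qLTC tester, with $\mathcal{G}_\mathrm{bad}=\mathcal{G}_\mathrm{tester}$ and $\mathcal{G}_\mathrm{noisy}$ built from the simulation failures plus the $\bullet$-composition $\mathcal{F}_\mathrm{test}\bullet\mathcal{G}_{r,\mathrm{unencode}}$. The only cosmetic differences are that the paper appends one more $\gadget_\mathrm{EC}$ after the tester and does not fold $\mathcal{G}_\mathrm{bad}$ into $\mathcal{G}_\mathrm{noisy}$ (since the lemma statement already assumes both are avoided), neither of which affects the argument.
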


The family of fault paths \(\mathcal{G}_{\mathrm{bad}}\) should be thought of as those that cause the output to be so bad that \emph{further computation is not possible} e.g. the state is too far from the codespace for the decoder to operate.
We cannot let such fault paths to occur, so we will perform a testing step at the end of our gadget to increase their weight to be \(\Omega(n_L)\). 
In order to avoid incurring a depth overhead, we use the qLTC property to perform the testing in \(O(\log n_L)\) depth.
The family of fault paths \(\mathcal{G}_{\mathrm{noisy}}\) should be thought of as those fault paths that cause the wrong logical state to be prepared.
This is not immediately fatal as we will later be performing distillation of these states.

\begin{proof}[Proof of \cref{lemma:noisy-prep-gadget}]
  Let \(\mathcal{E}_{\qcode_L}\) be the encoding circuit of \(\qcode_L\) from \cref{lem:nonFT-state-prep} which has space at most \(O(n_L)\) and depth \(O(\log^2 n_L)\).
  Let \(\gadget_{\mathrm{prepsim}}\) be the simulation of \(\mathcal{E}_{\qcode_L} \circ \qOp_{\mathrm{prep}}\) by \(r\)-AB with \(r = \lceil \log_2 \log_2 \frac{3\left(|\mathcal{E}_{\qcode_L}|+|\qOp_{\mathrm{prep}}|\right)}{\epsilon_{\mathrm{noise}}}\rceil\).

  Our gadget \(\gadget_{\mathrm{noisyprep}}\) will execute \(\gadget_{\mathrm{prepsim}}\), run the AB concatenated code unencoding gadgets (\cref{lemma:ab-unencode}), the tester gadget (\cref{lemma:ec-gadget-tester}), and then the error correction gadget (\cref{lemma:ec-gadget}). I.e.
  \begin{align*}
    \gadget_{\mathrm{noisyprep}} = \gadget_{\mathrm{EC}}\circ \gadget_{\mathrm{tester}} \circ \left(\gadget_{\mathrm{unencode,AB,r}}\right)^{\otimes n_L} \circ \gadget_{\mathrm{prepsim}}
  \end{align*}
  The bit produced by the tester gadget is the FAIL bit that this gadget outputs.
  The depth and width of \(\gadget_{\mathrm{noisyprep}}\) bounds follow from the depth/width bounds on the components.
  To simplify the bounds, we use the fact that \(W = \Omega(n_L)\).
  Recall that an \(r\)-AB concatenated code gadget has width and depth at most exponential in \(r\):
  \begin{figure}[H]
  \centering
    \begin{tabular}{|c|c|c|c|}\hline
      &  \(\gadget_{\mathrm{tester}}\) & \(\left(\gadget_{\mathrm{unencode,AB,r}}\right)^{\otimes n_L}\) & \(\gadget_{\mathrm{prepsim}}\)\\\hline
      Depth & \(O(\log n_L)\) & \(O\left(\polylog \left(\frac{W_\mathrm{prep}D_\mathrm{prep}}{\epsilon_{\mathrm{noise}}}\right)\right)\) & \(O\left(\left(D_\mathrm{prep} + \log^2 n_L\right)\polylog \left(\frac{W_\mathrm{prep}D_\mathrm{prep}}{\epsilon_{\mathrm{noise}}}\right)\right)\) \\\hline
      Width & \(O(n_{L})\) & \(O\left(n_L \polylog \left(\frac{W_\mathrm{prep}D_\mathrm{prep}}{\epsilon_{\mathrm{noise}}}\right)\right)\) & \(O\left(W_\mathrm{prep} \polylog \left(\frac{W_\mathrm{prep}D_\mathrm{prep}}{\epsilon_{\mathrm{noise}}}\right)\right)\)\\\hline
    \end{tabular}
  \end{figure}

  We now perform the fault analysis beginning with \((\left(\gadget_{\mathrm{unencode,AB,r}}\right)^{\otimes n_L} \circ \gadget_{\mathrm{prepsim}})\).
  Let \(\mathbf{f}\) be the Pauli fault path.

  Using \cref{lemma:gadget-composition}, we can take the sum (\(\boxplus\)) over the bad fault paths for each of the \((|\mathcal{E}_{\qcode_L}|+|\qOp_{\mathrm{prep}}|)\) concatenated code gadgets comprising \(\gadget_{\mathrm{prepsim}}\) and construct a family of bad fault paths \(\mathcal{G}_1\) for \(\gadget_{\mathrm{prepsim}}\) with \(\weightenum{\mathcal{G}_1}{x} \le (|\mathcal{E}_{\qcode_L}|+|\qOp_{\mathrm{prep}}|)(c_{\mathrm{AB}} x)^{2^r} \le \epsilon_{\mathrm{noise}}/3\) for \(x \in [0,\frac{1}{3c_{\mathrm{AB}}}]\) such that: If \(\supp\fault\) is \(\mathcal{G}_1\)-avoiding, then \(\gadget_{\mathrm{prepsim}}[\fault]\) outputs a state \(\tilde{\psi}_1\) that is \(\mathcal{F}_r\)-deviated on each concatenated code block (recall \(\mathcal{F}_r\) are the bad sets for \(r\)-level AB concatenated codes) from \(\phi_{\qcode_{\mathrm{AB},r}}\circ \phi_{\qcode_L}(\ket{\psi})\).
  
  The bad fault paths \(\mathcal{G}_1\) cause an error in preparing the initial state (encoded in the concatenated code), but this is not immediately fatal.
  Next, we must unencode the state and verify that it is close to the codespace.

  We run the unencoding gadget on each \((\qcode_{\mathrm{AB},r},\mathcal{F}_r)\)-block of \(\tilde{\psi}_1\) in parallel.
  Let \(L_{r,\mathrm{unencode}}\) be the set of locations of a single unencoding gadget \(\gadget_{\mathrm{unencode,AB,r}}\).
  \cref{lemma:ab-unencode} guarantees that the output of a single unencoding gadget is perfect if the fault set is \(\mathcal{G}_{r,\mathrm{unencode}}\)-avoiding on \(L_{r,\mathrm{unencode}}\).
  In order for the output to be \(\mathcal{F}_{\mathrm{test}}\)-deviated from \(\phi_{\qcode_L}(\psi)\), it must be the case that the fault path of \(\mathbf{f}\) must be \(\mathcal{G}_{r,\mathrm{unencode}}\)-avoiding on all but a \(\mathcal{F}_{\mathrm{test}}\)-avoiding set of unencoding gadgets.
  That is, \(\supp\mathbf{f}\) must be \(\mathcal{G}_2:=\mathcal{F}_{\mathrm{test}}\bullet \mathcal{G}_{r,\mathrm{unencode}}\)-avoiding on \(\sqcup_{i \in [n_L]} L_{r,\mathrm{unencode}}\).

  \cref{lemma:gadget-composition} computes the weight enumerator of \(\mathcal{G}_2\) in terms of \(\weightenum{\mathcal{G}_{r,\mathrm{unencode}}}{x}\) and \(\weightenum{\mathcal{F}_{\mathrm{test}}}{x}\).
  Thus for \(x \in \left[0,\frac{1}{2c_{\mathrm{unencode,AB}}}\right]\), using \(t_{\mathrm{test}} = \Theta(\zeta(n_L))\), we can then bound
  \begin{align*}
    \weightenum{\mathcal{G}_2}{x} &= \weightenum{\mathcal{F}_{\mathrm{in}}}{\weightenum{\mathcal{G}_{\mathfrak{r},\mathrm{unencode}}}{x}} \\
                       &\le \binom{n_L}{t_{\mathrm{test}}} \left( c_{\mathrm{unencode,AB}} \cdot x \right)^{t_{\mathrm{test}}} \\
                       &\le \left((\mathrm{const.})\frac{n_L}{t_{\mathrm{test}}} x\right)^{(\mathrm{const.}) {t_\mathrm{test}}} \\
                       &\le \left((\mathrm{const.})\frac{n_L}{\zeta(n_L)} x\right)^{(\mathrm{const.}) \zeta(n_L)}.
  \end{align*}
  It follows that we can pick a constant \(\epsilon_{*,2} \in \left(0,\frac{1}{2c_{\mathrm{unencode,AB}}}\right]\)  such that for \(x \in [0, \ndep{\epsilon_{*,2}}]\),
  \begin{align*}
    \weightenum{\mathcal{G}_2}{x} &\le \left((\mathrm{const.})\frac{n_L}{\zeta(n_L)} \ndep{\epsilon_{*,2}})\right)^{(\mathrm{const.}) \zeta(n_L)} \\
                         &\le \left((\mathrm{const.}) \epsilon_{*,2}\right)^{(\mathrm{const.}) \zeta(n_L)} \\
                       &\le \epsilon_{\mathrm{noise}}/3.
  \end{align*}

  Let \(\mathcal{G}_{\mathrm{bad}} := \mathcal{G}_{\mathrm{tester}}\) be the bad fault sets for the tester gadget \(\gadget_{\mathrm{tester}}\).
  Assume that \(\supp \fault\) is \(\mathcal{G}_{\mathrm{bad}}\)-avoiding.

  Define \(\mathcal{G}_{\mathrm{noisy}}:=\mathcal{G}_1\boxplus \mathcal{G}_2\).
  At this point, the fault analysis splits into two cases, depending on whether \((\left(\gadget_{\mathrm{unencode,AB,r}}\right)^{\otimes n_L} \circ \gadget_{\mathrm{prepsim}})\) is faulty.
  
  If \(\supp\fault\) is \(\mathcal{G}_{\mathrm{bad}} \boxplus \mathcal{G}_{\mathrm{noisy}}\)-avoiding we are left with a post-preparation state \(\tilde{\psi}_2\) that is \(\mathcal{F}_{\mathrm{test}}\)-deviated from \(\ket{\overline{\psi}}\).
  The test always passes, and the post-tester state is \(\mathcal{F}_{\mathrm{L}}\)-Pauli deviated from \(\ket{\overline{\psi}}\).

  If \(\supp\fault\) is only \(\mathcal{G}_{\mathrm{bad}}\)-avoiding there is no guarantee on the post-preparation logical state.
  However, we do have the guarantee that if the tester does not output FAIL, then the post-tester state is  \(\mathcal{F}_{\mathrm{corr}}\)-Pauli deviated from some (possibly mixed) codestate, so the output of \(\gadget_{\mathrm{EC}}\) is \(\mathcal{F}_L\)-Pauli deviated from a codestate.

  The desired fault-tolerance properties of the gadget follows for \(x \in [0,\epsilon_{*,\mathrm{inject}}]\) with \(\epsilon_{*,\mathrm{inject}} = \min\left(\frac{1}{3c_{\mathrm{AB}}}, \epsilon_{*,\mathrm{tester}}, \epsilon_{*,\mathrm{EC}}, \epsilon_{*,2} \right)\).
\end{proof}

\subsection{State distillation}\label{subsec:boost-fidel-state-prep}
We now proceed to define a state distillation procedure.
Informally, a state distillation procedure takes \(M\) imperfect copies of a state \(\psi\) and outputs \(K\) perfect copies of \(\psi\) subject to the condition that the subset of imperfect states is sparse in a certain sense.
\begin{definition}[State distillation procedure]
  Let \(M \in \mathbb{N}\) and \(\mathcal{A}\subseteq P([M])\) be a family of sets.
  For an \(n\)-qubit state \(\psi\), an \((M,K,\mathcal{A})\)-state distillation procedure for \(\psi\) takes an \(n\cdot M\) qubit state state \(\rho\) such that there is an \(\mathcal{A}\)-avoiding set \(A \subseteq [M]\) for which \(\rho|_{[n]\times ([M]\setminus A)} = \psi^{\otimes (M-|A|)}\) and outputs \(\psi^{\otimes K}\).
\end{definition}
As in section \cref{sec:main-proof}, we use an overline to denote a register state encoded in a computational code block.

\subsubsection{State distillation procedures}
We will construct a state distillation scheme for stabilizer states and a particular magic state.
To avoid a lengthy excursion, we will defer the proof of these lemmas to later sections.
\begin{restatable}[State distillation procedure for stabilizer states]{lemma}{lemmaStabDistillationProcedure}\label{lemma:stab-distillation-procedure}
  For a stabilizer state \(\psi\) on \(b\) qubits, there exists constants \(\epsilon_{*,distill} \in (0,1)\), \(\beta_{\mathrm{distill}} > 0\) and a family indexed by \(l \in \mathbb{N}\) of \((M_l,K_l,\mathcal{A}_l)\)-state distillation procedure where \(\weightenum{\mathcal{A}_l}{x} \le (\beta_{\mathrm{distill}} x)^{2^{\ell}}\) on \(x \in [0,\epsilon_{*,distill}]\), \(\frac{K_l}{M_l} = \Theta(1)\), and \(M_l = e^{\Theta(l \log l)}\).

  Furthermore, the procedure satisfies:
  \begin{itemize}
      \item The depth is \(O(l^3)\), and the width is \(O(M_l)\).
      \item No additional input states are required beyond the \(M_l\) noisy inputs.
      \item Only the operations \(\CNOT\), \(Z\)/\(X\) basis measurement, and classically-controlled Pauli gates are used.
      \item With the exception of Pauli gates, the quantum gates applied do not depend on the state \(\psi\) to be distilled, only \(b\) and \(l\).
      \item No two-qubit gates are applied between qubits with different \([b]\) coordinates.\footnote{In other words, the gates are applied ``transversally'' with regard to the qubits \([b]\).}
  \end{itemize}
\end{restatable}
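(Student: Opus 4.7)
The plan is to construct the distillation procedure by multi-level concatenation using the family of quantum Hamming codes $\qcode_j = [[n_j, k_j, 3]]$, where $n_j = 2^{a_j} - 1$ and $k_j = n_j - a_j$, with a level schedule $\{a_j\}$ chosen so that $n_j$ grows polynomially in $j$. With this choice, $\log M_l = \sum_{j=1}^{l} \log n_j = \Theta(l \log l)$, and the aggregate rate $\prod_j k_j/n_j$ converges to a positive constant since $1 - k_j/n_j = a_j/n_j$ is summable. Each $\qcode_j$ is a CSS code with a transversal Clifford gate set and constant distance $3$, so one round of syndrome extraction suffices to correct any single faulty input within each length-$n_j$ block; the distillation at level $j$ maps $M_j/n_j$ blocks of $n_j$ noisy copies to $M_j \cdot k_j/n_j$ higher-fidelity copies.

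Within a single level, the circuit structure is as follows. Because $\psi$ is a stabilizer state, the collection $\psi^{\otimes n_j}$ is itself a stabilizer state that can be mapped by a fixed Clifford to an encoding of $\psi^{\otimes k_j}$ in $\qcode_j^{\otimes b}$ together with syndrome qubits. Thus, applying the (Clifford) encoder coordinate-wise across the $b$ qubit positions and then measuring the X- and Z-checks of $\qcode_j$ transversally on each of the $b$ slices implements a projection onto the codespace, with outcomes that flag single-qubit errors; a classical decoder outputs a weight-one Pauli correction that is applied as a classically-controlled Pauli, and a final Clifford unencoding returns the remaining logical copies to bare $\psi$'s ready for the next level. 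The only two-qubit gates involved are CNOTs between registers sharing the same $[b]$ coordinate, so the procedure is coordinate-wise transversal; the Clifford circuit depends only on $\qcode_j$ and $b$, not on $\psi$, with the state-dependence confined to the classically-controlled Pauli fix-up, matching the gate restrictions in the lemma statement.

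The fault analysis proceeds via the weight enumerator formalism of~\Cref{sec:weight-enum-formalism}. Setting $\mathcal{A}_0 = \{\{1\}\}$ so that $\weightenum{\mathcal{A}_0}{x} = x$, a block at level $j$ fails exactly when two or more of its $n_j$ inputs were already bad (the distance-$3$ code corrects only one error) or when faults inside the level-$j$ Hamming circuit corrupt the output; using~\Cref{lemma:enumerator-ring} and~\Cref{lemma:composition-upper-bound} this yields a recursion of the form
\begin{align*}
\weightenum{\mathcal{A}_j}{x} \le \binom{n_j}{2}\bigl(\weightenum{\mathcal{A}_{j-1}}{x}\bigr)^2 + O(n_j)\, x.
\end{align*}
Restricting to the sub-threshold interval $x \in [0, \epsilon_{*,\mathrm{distill}}]$ for $\epsilon_{*,\mathrm{distill}}$ sufficiently small makes the squared term dominate, and iterating the recursion across $l$ levels delivers the doubly-exponential suppression $\weightenum{\mathcal{A}_l}{x} \le (\beta_{\mathrm{distill}} x)^{2^l}$. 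Non-Pauli input errors are handled at each level by invoking~\Cref{lemma:deterministic-errors} immediately after the syndrome-extraction measurements, collapsing coherent errors into a Pauli mixture so the entire analysis can be performed at the Pauli level.

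The main obstacle I anticipate is balancing the four parameter constraints -- constant asymptotic rate, $M_l = e^{\Theta(l \log l)}$, depth $O(l^3)$, and the claimed $(\beta x)^{2^l}$ suppression -- simultaneously with a single schedule $\{a_j\}$. Taking $a_j = c \log_2 j$ for a constant $c > 1$ makes $n_j = \Theta(j^c)$, which gives $\log M_l = \Theta(l \log l)$ and keeps $\prod_j k_j/n_j$ positive; the per-level encoding, syndrome extraction, and classical decoding each fit in $\polylog(j)$ depth, and summing across $j \le l$ stays well within $O(l^3)$. Absorbing the additive per-level fault contribution $O(n_j)\,x$ into the constant $\beta_{\mathrm{distill}}$ so that the recursion genuinely closes in squared form is the standard concatenation-threshold argument and just requires $\epsilon_{*,\mathrm{distill}}$ small enough relative to the largest relevant $n_j$.
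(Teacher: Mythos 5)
Your high-level approach matches the paper's: both use multi-level concatenation of quantum Hamming codes with block sizes growing polynomially (the paper takes $n_\ell = 2^{\lfloor 2\log_2(4\ell)\rfloor} - 1 = \Theta(\ell^2)$, which is your $a_j = c\log_2 j$ with $c=2$), the same circuit structure using coordinate-wise transversal CNOT and the unitary-unencode-then-measure trick so no ancillas are needed, and the same observation that the quantum gates are $\psi$-independent because the $\psi$-dependence is pushed into the classically-controlled Pauli fix-up.

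However, your recursion contains a genuine error. You propose
\begin{align*}
\weightenum{\mathcal{A}_j}{x} \le \binom{n_j}{2}\bigl(\weightenum{\mathcal{A}_{j-1}}{x}\bigr)^2 + O(n_j)\, x,
\end{align*}
and claim the squared term dominates for small $x$. This is backwards: for $x$ near $0$ the additive term is linear in $x$ while the squared term is higher order, so the \emph{linear} term dominates. Iterating this recursion gives $\weightenum{\mathcal{A}_l}{x} = \Theta(x)$ at best — it can never yield the required $(\beta_{\mathrm{distill}} x)^{2^l}$ double-exponential suppression. The additive term is also not supposed to be there at all: the state distillation procedure of this lemma is an \emph{ideal} primitive with a noiseless circuit, where the bad sets $\mathcal{A}_l$ track only which of the $M_l$ input state indices are permitted to be corrupted (the circuit-fault accounting belongs in the enclosing FT gadget, \cref{lemma:stab-resource-state}, which wraps each gate of the distillation procedure in a transversal gate gadget and does a separate fault analysis). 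The correct recursion is simply $\weightenum{\mathcal{A}_j}{x} = \binom{n_j}{2}\weightenum{\mathcal{A}_{j-1}}{x}^2$, i.e. $\mathcal{A}_l = \mathcal{S}^{n_l}_2 \bullet \dots \bullet \mathcal{S}^{n_1}_2 \bullet \mathcal{S}^b_1$ evaluated via the nested composition in \cref{lemma:composition-upper-bound}, which closes cleanly to $(\beta_{\mathrm{distill}} x)^{2^l}$ once the accumulated $\prod_i n_i^{2^{l-i}}$ prefactor is absorbed.

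A secondary issue: your claim of $\polylog(j)$ per-level depth is not consistent with the constraint ``no additional input states.'' Low-depth syndrome extraction needs ancilla qubits, which count as inputs. The paper instead uses an $O(n_j^2)$-depth ancilla-free unencode-then-measure circuit; this still suffices for the stated depth bound, but you should not silently assume ancilla-assisted syndrome extraction here.
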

The proof of the lemma utilizes the same constant-rate concatenated Quantum Hamming code family as used in \cite{yamasaki2024time} as a state distillation method with modifications to reduce the time overhead.
Since the construction is mostly standard ideas from state distillation, we defer the proof to \cref{sec:state-distill-procedure-proof}.

The second state distillation procedure will be for magic states.
\begin{restatable}[State distillation procedure for magic states]{theorem}{lemmaMagicDistillationProcedure}\label{lemma:magic-distillation-procedure}
  There exists constants \(\epsilon_{*,Mdistill} \in (0,1)\), \(\beta_{\mathrm{Mdistill}} > 0\) and a family indexed by \(l \in \mathbb{N}\) of \((M_l,K_l,\mathcal{A}_l)\)-state distillation procedure for \(\ket{\CCZ}\) where \(\weightenum{\mathcal{A}_l}{x} \le (\beta_{\mathrm{Mdistill}} x)^{(l+1)!}\) on \(x \in [0,\epsilon_{*,Mdistill}]\), \(\frac{K_l}{M_l} = \Omega\left(c^l\right)\), and \(M_l = e^{\Theta ( l \log l)}\) for some absolute constant \(c > 0\).

  Furthermore, the procedure satisfies:
  \begin{itemize}
      \item The depth is \(O(l^4 (\log l)^3)\), and the width is \(O(M_l)\).
      \item \(O(M_l)\) perfect \(\ket{+}\) and \(\ket{0}\) input states are required
      \item \(O(M_l)\) classically-controlled \(\CZ\) gates are required
      \item Otherwise, only \(\CNOT\), \(Z\)/\(X\) basis measurement, and classically-controlled Pauli gates are used.
  \end{itemize}
\end{restatable}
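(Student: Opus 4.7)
The proof plan centers on iterating a one-shot distillation primitive built from the punctured quantum Reed-Solomon (QRS) codes over $\mathbb{F}_{2^\ell}$ of~\Cref{sec:pqrs-code-thm}. Recall that, after "unfolding" each qudit into $\ell$ qubits, these codes give a family of CSS qubit codes $[[n_i,k_i,d_i]]$ admitting a logical transversal $\CCZ$ gate and approaching rate one. The basic distillation primitive is the standard one: partition $n_i$ noisy $\ket{\CCZ}$ resources into triples, use the teleportation circuit of~\Cref{fig:ccz} to enact a transversal $\CCZ$ on an input block prepared in $\ket{+}^{\otimes n_i}$ of the distillation code, measure the CSS stabilizers, and post-select on trivial syndrome. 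Because $\CCZ$ lies in the third level of the Clifford hierarchy, the teleportation fix-ups are Pauli and classically-controlled $\CZ$ operators, which is precisely the allowed gate set in the statement. A non-accepted run is discarded; upon acceptance, a constant-depth Clifford unencoding (enabled by the CSS + LDPC structure of the QRS code) produces $k_i$ output copies whose bad-set weight enumerator at level $i$ is bounded by $\weightenum{\mathcal{A}_i^{(1)}}{x} \le (C_i x)^{d_i+1}$, since the code detects any weight-$\le d_i$ input error pattern.

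Next, I will iterate this primitive across $l$ levels, choosing the distance schedule $d_i = i+1$, which is easily realized inside the QRS family since we have free choice of the evaluation-point degree. The outputs of the $i$-th level become the inputs at level $i+1$, with many parallel level-$i$ instances feeding one level-$(i+1)$ instance so that the inputs factor over disjoint qubit subsets. The weight-enumerator composition rule (\Cref{lemma:composition-upper-bound}, using the product rule of~\Cref{lemma:enumerator-ring}) converts the per-level bound into the telescoped estimate
\begin{align*}
  \weightenum{\mathcal{A}_l}{x} \;\le\; \bigl(\beta_{\mathrm{Mdistill}}\, x\bigr)^{\prod_{i=1}^{l}(d_i+1)} \;=\; \bigl(\beta_{\mathrm{Mdistill}}\, x\bigr)^{(l+1)!}
\end{align*}
on $x \in [0,\epsilon_{*,Mdistill}]$, provided the prefactors $C_i$ remain uniformly bounded so that they can be absorbed into a single $\beta_{\mathrm{Mdistill}}$ over the interval. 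The yield is multiplicative and telescopes to $K_l/M_l = \prod_{i=1}^{l} (k_i/n_i) = \Omega(c^l)$ for an absolute constant $c$ determined by the asymptotic rate of the QRS family, and since each $n_i$ is polynomial in $d_i = i+1$, we get $M_l = \prod_i n_i = e^{\Theta(l\log l)}$.

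For the resource accounting, depth at level $i$ is the sum of the constant-depth transversal $\CCZ$ teleportation, the stabilizer-measurement circuit (which is $O(\polylog n_i)$ via standard Tanner-graph edge-coloring for the $\Delta$-qLDPC checks of the QRS code), and the unencoding circuit (also $O(\polylog n_i)$). Summed over the $l$ levels one gets depth $O(l \cdot \polylog l) \subseteq O(l^4 (\log l)^3)$, comfortably within the stated bound. Width is dominated by the top level, where we simultaneously process $M_l$ noisy inputs plus $O(M_l)$ fresh $\ket{+}$ and $\ket{0}$ ancillas for the CSS stabilizer measurements and a matching number of classically-controlled $\CZ$ corrections, all conforming to the gate-set restriction in the statement.

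The principal obstacle I expect is two-fold. First, the construction of the punctured QRS codes themselves: "transversality of $\CCZ$" at the $\mathbb{F}_{2^\ell}$-qudit level does not immediately descend to transversality at the qubit level after unfolding, and the puncturing must be chosen so that the unfolded qubit code inherits both the transversal $\CCZ$ and the CSS structure with near-unit rate. This is the content of~\Cref{sec:pqrs-code-thm} and is where the ideas of~\cite{krishna2019towards} are adapted; I will invoke it as a black box here. Second, the clean multiplicative composition of bad-set weight enumerators requires that the level-$i$ distillation instances feeding a single level-$(i+1)$ block be independent in the sense of disjoint-support factorization, which demands careful register bookkeeping so that the $\boxplus$/$\circledast$ operations from~\Cref{sec:weight-enum-formalism} apply directly without double-counting. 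With those two pieces in place, the telescoped error and yield estimates follow mechanically.
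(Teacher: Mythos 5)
Your high-level plan — iterate a single-shot distillation primitive built from the punctured QRS codes, telescope the bad-set weight enumerator by the composition lemma, and multiply the per-level rates to get the yield — is the approach the paper takes. However, there are two concrete errors that would break the argument and one missing accounting detail.

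First, you post-select on trivial syndrome and discard non-accepted runs. This is incompatible with the definition of an \((M,K,\mathcal{A})\)-state distillation procedure: when the input is \(\mathcal{A}\)-avoiding the procedure must \emph{deterministically} output \(\ket{\CCZ}^{\otimes K}\), yet a small (correctable, hence \(\mathcal{A}\)-avoiding) input error produces a nontrivial syndrome and your procedure would abort. The paper's \cref{alg:magic-state-distillation} instead measures syndromes and applies an error \emph{correction} \(U_{\mathrm{corr}}\), so the per-level bad sets are \(\mathcal{S}^{\tilde n_\ell}_{t_\ell+1}\) with \(t_\ell = \lfloor(d_\ell-1)/2\rfloor\), not a rejection event. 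Switching from detection to correction also changes what error weight fails at each level (roughly \(d/2\) rather than \(d\)), but that is a secondary issue; the primary one is that post-selection does not implement the required interface and cannot be composed within the weight-enumerator framework.

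Second, you claim the stabilizer-measurement and unencoding circuits are \(O(\polylog n_i)\) depth "via standard Tanner-graph edge-coloring for the \(\Delta\)-qLDPC checks of the QRS code." Punctured Reed--Solomon codes are not LDPC — their parity checks are dense, and they remain dense after unfolding \(\F_q\)-qudits into \(\log_2 q\) qubits. The paper performs encoding and unencoding by a general CSS encoding unitary of depth \(O(q_i^3) = O(i^3)\) (\cref{cleve1997efficient} style), and the implementation of \(\CCZ^{(q)}\) from qubit \(\CCZ\)s via \cref{lemma:naive-ccz} has depth \(O((\log q_i)^3)\). Summed over \(l\) levels this gives the stated \(O(l^4 (\log l)^3)\) depth; your \(O(l\,\polylog l)\) figure follows from a false premise, and should have been suspicious precisely because it beats the lemma statement.

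Third, you set \(M_l = \prod_i n_i\). This undercounts the input: implementing a single \(\CCZ^{(q)}\) consumes \((\log_2 q)^3\) qubit \(\ket{\CCZ}\) states (\cref{lemma:naive-ccz}), so each level-\(i\) instance consumes \(\tilde n_i = n_i (\log_2 q_i)^3\) noisy \(\ket{\CCZ}\)s rather than \(n_i\), and \(M_l = \prod_i \tilde n_i\). This is still \(e^{\Theta(l \log l)}\) so the final asymptotics survive, and the ratio \(K_l/M_l\) picks up only extra polylogarithmic factors absorbed into the constant \(c\), but the careful bookkeeping is needed to prove the bad-set bound via \cref{lemma:concat-threshold} with the correct per-level prefactor \(S_i(x) = (O(i (\log i)^3)\, x)^{i+1}\). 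Your flagged concern about the \(\F_{2^\ell}\)-to-qubit unfolding is indeed a real issue, but the resolution (self-dual basis for Clifford transversality plus polynomial basis for the final \(\ket{\CCZ^{(q)}}\to\ket{\CCZ}\) reduction of \cref{lemma:ccz-naive-reduction}) is where the extra \(\ket{0}\) ancillas and the classically-controlled \(\CZ\)s in the lemma statement come from, and it is where the above \((\log q)^3\) factor arises.
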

The proof of this state distillation procedure will require the construction of new quantum codes over qudits of prime-power dimension.
The codes are constructed in \cref{sec:magic-state-distillation-code}, and the proof of \cref{lemma:magic-distillation-procedure} is deferred until \cref{lemma:magic-distillation-procedure-proof}.

\subsubsection{FT state distillation gadgets (proofs of \texorpdfstring{\cref{lemma:stab-resource-state}}{\cref*{lemma:stab-resource-state}} and \texorpdfstring{\cref{lemma:magic-resource-state}}{\cref*{lemma:magic-resource-state}})}\label{sec:ft-state-distillation}

We are now ready to prove the main lemmas for preparation of resource states.
At this point, we are ready to switch to asymptotic notation.
The first gadget exploits the fact that the stabilizer state distillation procedure in \cref{lemma:stab-distillation-procedure} uses the same gates regardless of the state to be distilled.
Thus, we can distill resource states that differ on different coordinates of the computation code.

\lemmaStabilizerResourceStatePrep*

\begin{proof}
  The gadget will first consist of parallel repetitions of preparation of noisy resource states.
  We will then collect those resource states which pass the qLTC check for being close to the codespace and then proceed with the stabilizer state distillation procedure.
    
  Set \(l = \log_2\log_2 \frac{1}{e^{-\zeta(n_L)}} = O(\log n_L)\) in \cref{lemma:stab-distillation-procedure} and \(\epsilon_{\mathrm{noise}}= \min\left(\frac{1}{2\beta_{\mathrm{distill}}}, \frac{1}{8}\right)\) in \cref{lemma:noisy-prep-gadget}.
  Let \(\epsilon_{*,\mathrm{stab}} = \min\left(\epsilon_{*,\mathrm{inject}}, \epsilon_{*,\mathrm{noise}}\right)\).
  This choice of parameters implies that the number of states input to the stabilizer state distillation procedure (\cref{lemma:stab-distillation-procedure}) \(M_{\mathrm{stab}}\equiv M_l = e^{\Theta(\log n_L \loglog n_L)}\).
  Let \(\alpha = \min(n_L^2, M_{\mathrm{stab}}) = \Omega(n_L^2)\) and \(\beta = \left\lceil \frac{M_{\mathrm{stab}}}{\alpha}\right\rceil\).
  By assumption, \(\qOp_{\mathrm{stab}}\) is constant depth and has width \(O(n_L)\).
  
  The gadget is constructed as follows:
  \begin{enumerate}
  \item Consider the noisy simulation of \(\qOp_{\mathrm{stab}}\) using the gadget \(\gadget_\mathrm{noisyprep}\) from \cref{lemma:noisy-prep-gadget}.
  \(\gadget_\mathrm{noisyprep}\) uses width \(O(n_L c(n_L))\) for some \(c(n_L) = O(\polylog n_L)\) and depth \(O(\polylog n_L)\).
  Execute \(\gadget_\mathrm{noisyprep}\) \(\left\lceil\frac{2 \alpha \beta}{c(n_L)}\right\rceil\) times in parallel \(c(n_L)\) times for a total of \(2 \alpha\beta\) states.\footnote{The organization of states in groups of \(\alpha\) is required to avoid a \(\log M_l\) time overhead when collecting the outputs. The preparation over \(c(n_L)\) steps is required to avoid incurring space overhead.}
  Overall, this requires depth \(O(\polylog n_L)\) and width \(O(n_L \polylog n_L)\), so this step has depth \(O(\polylog n_L)\) and width \(O(\alpha \beta n_L) = O(M_{\mathrm{stab}} n_L)\).
    Each of the \(2\alpha\beta\) executions of \(\gadget_\mathrm{noisyprep}\) produce a state on \(b\) code blocks.
    Since the code blocks are naturally indexed by \([\beta] \times [2\alpha] \times [b]\), we will refer to each group of \(b\) code blocks as a ``row'' and each group of \([\beta]\) rows as a ``column.''
  \item Let \(J \subseteq [\beta]\times [2\alpha]\) be the subset of state preparation gadgets for which the FAIL bit is not set (i.e. the qLTC checks are passed).
    In parallel, for each column \(i \in [\beta]\), sort the \(2\alpha\) rows of computation code blocks such that the rows originally in \(J\) are the first rows in column \(i\).
    This can be done in depth \(O(\log \alpha)\) and width \(O(\alpha \beta n_L)\) using a classically controlled transversal \(\SWAP\) of the computational code blocks.
  \item Define the set
    \begin{align*}
      I = \{(i,j) \in [\beta] \times [2\alpha] \mid (i < \beta~\text{and}~j\le \alpha)~\text{or}~(i=\beta~\text{and}~j \le M_{\mathrm{stab}} - (\beta - 1) \alpha\}
    \end{align*}
    of size \(M_{\mathrm{stab}}\) given by taking \(\alpha\) rows from all but the last column and the remaining rows from the last column. Execute the distillation in \cref{lemma:stab-distillation-procedure} on the rows in \(I\).
    Since the gates in the distillation procedure do not depend on the state and act only coordinate-wise, this can be done using the transversal gate gadgets of the computation code (\cref{lemma:computational-code-transversal-gates}).
    This has depth \(O(\polylog n_L)\), width \(O(M_{\mathrm{stab}} n_L)\), and produces \(\Omega(M_{\mathrm{stab}})\) resource states.
  \end{enumerate}

  In the following, we restrict \(x\in [0,\ndep{\epsilon_{*,\mathrm{stab}}}]\) and upper bounds do not depend on \(x\) unless indicated.

  For \((i,j) \in I\), let \(\mathcal{H}^{(i,j)}_{\mathrm{bad}}\) and \(\mathcal{H}^{(i,j)}_{\mathrm{noisy}}\) denote the families of bad fault paths \(\mathcal{G}_{\mathrm{bad}}\) and \(\mathcal{G}_{\mathrm{noisy}}\) from \cref{lemma:noisy-prep-gadget} on the \((i,j)\)-th state preparation gadget in step-1.
  Let \([L]\) index the set of computational code gadgets applied in steps 2 and 3.
  By construction, \(L = O(M_{\mathrm{stab}} \polylog(n_L))\).
  For each computational code gadget \(l \in [L]\), let \(\mathcal{H}_l\) be the corresponding set of bad fault paths.
  
  The output is correct as long as none of the following events occur:
  \begin{enumerate}
  \item A computational code gadget or the state verification step in \cref{lemma:noisy-prep-gadget} fails:
    \begin{align}
      \mathcal{G}_1 &= \left(\wtsum_{l \in [L]} \mathcal{H}_l \right) \wtsum \left(\wtsum_{(i,j) \in I} \mathcal{H}^{(i,j)}_{\mathrm{bad}}\right)\\
      \weightenum{\mathcal{G}_1}{x} &\le \sum_{l \in [L]}\weightenum{\mathcal{H}_l}{x}  + \sum_{(i,j) \in I} \weightenum{\mathcal{G}_{\mathrm{bad}}}{x}\le (L+|I|) e^{-\zeta(n_{L})} \\
                         &= O\left(M_{\mathrm{stab}} \polylog(n_L) e^{-\zeta(n_L)}\right).
    \end{align}
  \item There exists a column with less than \(\alpha\) rows for which FAIL was not set:\footnote{We slightly abuse notation and consider \(A_l\subseteq I\). This detail is not important.}
    \begin{align}
      \mathcal{G}_2 &= \wtsum_{i \in [\beta]} \wtsum_{\substack{J \subseteq [2\alpha] \\ |J| = \alpha+1}} \left(\wtprod_{j \in J} \mathcal{H}^{(i,j)}_{\mathrm{noisy}}\right)\\
      \weightenum{\mathcal{G}_2}{x} &\le \beta \binom{2\alpha}{\alpha+1} \left( \weightenum{\mathcal{G}_{\mathrm{noisy}}}{x}\right)^{\alpha+1}\\
                         &\le \beta \cdot 2^{2\alpha} \left(\epsilon_{\mathrm{noise}}\right)^{\alpha+1}\\
                         &\le \beta \cdot \left(\frac{1}{2}\right)^{\alpha}\\
                         &\le \beta e^{-\Omega(\alpha)} \le M_{\mathrm{stab}} e^{-\Omega(n_l^2)}.
    \end{align}
  \item The distillation output contains an error: \(\mathcal{G}_3\),
    \begin{align}
      \mathcal{G}_3 &= \wtsum_{A \in \mathcal{A}_l} \wtprod_{(i,j) \in A} \mathcal{H}_{\mathrm{noisy}}^{(i,j)} \simeq \mathcal{A}_l\bullet \mathcal{G}_{\mathrm{noisy}}\\
      \weightenum{\mathcal{G}_3}{x} &\le \weightenum{\mathcal{A}_l}{ \weightenum{\mathcal{G}_{\mathrm{noisy}}}{x}}\\
                           &\le \left(\frac{\beta_{\mathrm{distill}}}{2 \beta_{\mathrm{distill}}}\right)^{2^l}\\
                           &\le e^{-\zeta(n_L)}.
    \end{align}
  \end{enumerate}
  Thus, \(\mathcal{G}_{\mathrm{stab}} = \wtsum_{i=1}^3 \mathcal{G}_{i}\) with
  \begin{align}
      \weightenum{\mathcal{G}_{\mathrm{stab}}}{x} &\le O(M_{\mathrm{stab}} \polylog(n_L) e^{-\zeta(n_L)}).
  \end{align}
\end{proof}

The construction of an analog of \cref{lemma:stab-resource-state} for magic states is essentially similar with minor differences due to the need for a conditional Clifford operation and the preparation of the resource state only on the first coordinate instead of for all coordinates.

\lemmaMagicResourceStatePrep*

\begin{proof}
  Let \(w(x)\) be the principal branch of the Lambert \(W\) function.
  I.e. a solution to \(w(x) e^{w(x)} = x\) for \(x \ge 0\) satisfying \(w(x) \ge 0\).
  We start by defining the function \(g(x)\) which satisfies the following:
  \begin{align}
    g(x) = \frac{\log(x)}{w(\log(x^{1/e}))}, & & \left(\frac{g(x)}{e}\right)^{g(x)} = x.
  \end{align}
  This allows us to obtain an upper bound for the inverse of the factorial:
  \begin{align}
   \lceil g(x)\rceil! \ge \left(\frac{g(x)}{e}\right)^{g(x)} = x.
  \end{align}
  We will follow the argument of \cref{lemma:stab-resource-state} almost exactly.
  We apply the magic state distillation procedure of \cref{lemma:magic-distillation-procedure} using the transversal gate gadget (\cref{lemma:computational-code-transversal-gates}).

  We use \cref{lemma:magic-distillation-procedure} with \(l\) such that
  \begin{align}
    l+1 = \left\lceil g\left(\zeta(n_L)\right)\right\rceil
  \end{align}
  and \(\epsilon_{\mathrm{noise}}=\min\left(\frac{1}{e\beta_{\mathrm{Mdistill}}}, \frac{1}{8}\right)\) in \cref{lemma:noisy-prep-gadget}.
  Define \(M_{\mathrm{magic}} \equiv M_l = e^{\Theta(l \log l)}\), the number of states input to the magic state distillation protocol in the statement of \cref{lemma:magic-distillation-procedure}.
  As in \cref{lemma:stab-resource-state}, we set
  \begin{align*}
    \epsilon_{*,\mathrm{magic}} = \min\left(\epsilon_{*,\mathrm{inject}}, \epsilon_{*,\mathrm{noise}}, \epsilon_{*,\mathrm{stab}}\right)~.
  \end{align*}
  Note that \(\epsilon_{*,\mathrm{inject}}\) and \(\epsilon_{*,\mathrm{noise}}\) are not necessarily the same as in \cref{lemma:noisy-prep-gadget}.

  We first use \(\gadget_\mathrm{stab}\) (\cref{lemma:stab-resource-state}) to prepare \(\Theta(M_{\mathrm{magic}})\) copies of \(\ket{\overline{\CZ(1_A,1_B)}}\), \(\ket{\overline{+}}\), and \(\ket{\overline{0}}\).
  The \(\ket{\overline{\CZ(1_A,1_B)}}\) states allow us to use transversal gates (\cref{lemma:computational-code-transversal-gates}) to execute a teleported \(\CZ\) gate on the first coordinate in the execution of the magic state distillation procedure \cref{lemma:magic-distillation-procedure}.
  These are the necessary inputs to execute the magic state distillation procedure of \cref{lemma:magic-distillation-procedure} using transversal gates (\cref{lemma:computational-code-transversal-gates}) to produce \(K_{\mathrm{magic}}\) \(\ket{\overline{\CCZ(1_A,1_B,1_C)}}\) states.
  At the end, we again use \(\gadget_\mathrm{stab}\) (\cref{lemma:stab-resource-state}) to prepare \(K_{\mathrm{magic}}\) copies of \(\ket{\overline{\SWAP(2_A,1_B)}}\) and \(\ket{\overline{\SWAP(3_A,1_B)}}\) which are used to execute teleported \(\SWAP(2_A,1_B)\SWAP(3_A,1_C)\) on \(\ket{\overline{\CCZ(1_A,1_B,1_C)}}\), converting it to \(\ket{\overline{\CCZ(1,2,3)}}\).

  The family of bad fault paths associated with these invocations of the \cref{lemma:stab-resource-state} gadget is added to \(\mathcal{G}_1\), but it does not change the asymptotic scaling with \(n_L\).
  The bound for \(\weightenum{\mathcal{G}_2}{x}\) is identical to that of \cref{lemma:stab-resource-state}.
  \(\mathcal{G}_3\) is significantly modified: For \(x \in [0, \ndep{\epsilon_{*,\mathrm{magic}}})\), we have the new bound
  \begin{align}
    \weightenum{\mathcal{G}_3}{x} \le \left(\frac{\beta_{\mathrm{Mdistill}}}{e\cdot \beta_{\mathrm{Mdistill}}}\right)^{(l+1)!} &\le e^{-(l+1)!}\le e^{-\zeta(n_L)}.
  \end{align}

  The depth is \(O(\polylog n_{L})\), and the width is \(O(n_L M_{\mathrm{magic}})\).
  For some constant \(c \in (0,1)\), we have produced
  \begin{align}
    K_{\mathrm{magic}} &= \Omega(M_{\mathrm{magic}} c^l) = \Omega\left(M_{\mathrm{magic}} c^{g\left(\zeta(n_L)\right)}\right)\\
      &= \Omega\left(M_{\mathrm{magic}} \exp\left[-O\left(\frac{\log n_L}{w\left(\log(n_L^{1/e})\right)}\right)\right]\right) \\
      &= \Omega\left(M_{\mathrm{magic}} n_L^{-\tilde{\gamma}(n_L)}\right)
  \end{align}
   \(\ket{\overline{\CCZ(1_A,1_B,1_C)}}\) states where \(\tilde{\gamma}(n_L) = O\left(\frac{1}{w\left(\log(n_L^{1/e})\right)}\right) = O_{n_L\to \infty}\left(\frac{1}{\loglog n_L}\right)= o_{n_L\to \infty}(1)\).
\end{proof}

\subsection{Constant-overhead stabilizer state distillation procedure}\label{sec:state-distill-procedure-single}
It remains to prove \cref{lemma:stab-distillation-procedure}.
We first begin by defining a procedure for a single round of distillation of a stabilizer state.

We heavily use the stabilizer formalism~\cite{gottesman1997stabilizer,aaronson2004improved}.
For an \(n\)-qubit stabilizer state \(\ket{\psi}\), we use the notation \(\stab(\psi)\) to refer to the subgroup of Pauli operators \(s\) such that \(s \ket{\psi} = \ket{\psi}\).

For an \(n \times m\) matrix \(A\), we use the notation \(A[\cdot, j]\) to refer to the \(j\)-th column, and \(A[i,\cdot]\) to refer to the \(i\)-th row.
For \(i \in [n]\), let \(e_i \in \F^n\) be the indicator vector for the \(i\)-th coordinate.
In what follows, for an \(b\) qubit Pauli operator \(P=P^{(1)}_1P^{(2)}_2\dots P^{(b)}_b\) (the operator \(P^{(i)}\) acting on the \(i\)-th qubit) and a bit string \(x \in \F^n\), we will use \(P^x \in \mathcal{P}^{n \times b}\)  to denote the Pauli operator supported on a set of qubits indexed by \([n] \times [b]\) given by
\begin{align}
  P^x \equiv \prod_{i \in [n], j \in [b]} (P^{(j)})^{x[i]}_{i,j} \in  \mathcal{P}^{n \times b}~.
\end{align}
To reduce the amount of notation, we introduce the following abuse of notation:
For a matrix \(H\in \F^{r\times n}\), and an \(b\)-qubit Pauli operator \(P\), we use \(P^H\) to denote the indexed set
\begin{align}
  P^H\equiv \left\{P^{H[i,\cdot]}\right\}_{i\in [r]} \subseteq \mathcal{P}^{n \times b} ~.
\end{align}
I.e. every row \(i \in [r]\) of \(H\) defines an operator \(h\).
For every column \(j \in [n]\) of \(H\), the corresponding row has \(h|_{\{j\} \times [b]} = P\) iff \(H[i,j] = 1\).

Let \(\qcode_{\mathrm{distill}}\) be a quantum CSS code with parameters \([[n,k,d]]\) with the following properties:
\begin{itemize}
\item There is a (check) matrix \(H \in \F^{r \times n}\) such that the set \(X^H\cup Z^H\) form a minimal generating set for the stabilizer of \(\qcode_{\mathrm{distill}}\).
\item There exists a (generator) matrix \(G \in \F^{k \times n}\) such that \(GG^T = I\) and \(HG^T = 0\).
\item There exists an efficient algorithm \(\mathcal{D}_H \colon \F^r \to \F^n\) to decode all errors of weight at most \(t\) for classical code \(\ker H\).
  I.e. for all \(x \in \F^n\) such that \(x \le t\), \(\mathcal{D}_H(Hx) = x\).
\end{itemize}
Such a quantum CSS code is sometimes said to be ``self-dual''\footnote{This notion is unrelated to the notion of the dual of a classical code.} and possesses convenient properties such as a fully transversal implementation of the Clifford group.
Quantum Hamming codes as well as color codes satisfy this property.

Let \(S_{\psi}\) be a minimal generating for the stabilizer \(\stab(\psi)\) of the \(b\)-qubit state \(\psi\).
I.e. for all \(s \in S_{\psi}\), \(\psi\) is an eigenstate with eigenvalue \(+1 \in \{\pm 1\}\).
In the following, we will associate measurement outcomes of Pauli operators with elements of \(\F\) in the canonical way: (\(+1 \mapsto 0 \in \F\), \(-1 \mapsto 1 \in \F\)).

We first compute (by Gaussian elimination) a set of operators that each anticommute with exactly one element of \(S_{\psi}\).
Concretely, we have a map \(\phi_{\psi}\colon S_{\psi} \to \mathcal{P}^b\) such that for \(s \in S_{\psi}\), \(\phi_{\psi}(s)s = -s\phi_{\psi}(s)\) and for \(S_{\psi}\ni s'\ne s\), \(\phi_{\psi}(s) s' = s' \phi_{\psi}(s)\).
This is sometimes call the ``destabilizer'' of \(S_{\psi}\).

Fix an encoding map (using only \(\CNOT\)) \(\mathcal{E}\) for \(\qcode_{\mathrm{distill}}\) such that for \(j \in [k]\), \(\mathcal{E} Z_j \mathcal{E}^{-1} = Z^{G[j,\cdot]}\) and \(\mathcal{E} X_j \mathcal{E}^{-1} = X^{G[j,\cdot]}\).
This fixes a phase of Pauli \(Y\), \(\mathcal{E} Y_j \mathcal{E}^{-1} = (i)^{|G[j,\cdot]| - 1} Y^{G[k,\cdot]} \equiv \theta_j Y_j\) where \(\theta_j \in \{\pm 1\}\).
We will need to ``fix up'' this phase in order to produce the correct state on the output.
For \(P \in \mathcal{P}^n\), let us notate the phase of an arbitrary Pauli under the encoding map \(\mathcal{E}^{\otimes b} P_j \left(\mathcal{E}^{-1}\right)^{\otimes b} = \theta_{P,j} P\).
Define the fix up operator \(\Xi\in \mathcal{P}^{k\times b}\) be a Pauli operator that corrects this phase: \(s \in S_\psi\), \(\Xi s^{e_j} = \theta_{s,j}^{b} s^{e_j} \Xi\).
\(\Xi\) can be computed by taking the product of destabilizer operators \(\phi_\psi(s)\) on position \(j\) whenever the phase is incorrect \(\left(\theta_{s,j}\right)^b = -1\).
\begin{align}
    \Xi = \prod_{\substack{j \in [k] \\ s \in S_\psi \\ \theta_{s,j} = -1}} \left(\phi_\psi(s)\right)^{e_j}
\end{align}
\RestyleAlgo{boxruled}
\begin{algorithm}[H]
  \DontPrintSemicolon
  \LinesNumbered
  \caption{Stabilizer state distillation algorithm}\label{alg:stab-state-distillation}
  \KwIn{\(n \times b\) qubit state \(\rho\)}
  \KwOut{\(k \times b\) qubit state \(\rho_{\mathrm{out}}\)}
  \Begin{
    \(m^{(X)} \in \F^{r \times b}\)
    \(m^{(Z)} \in \F^{r \times b}\)
    \(\sigma \in \F^r\)\;
    \(U \leftarrow I\)\;
    \tcc*{Measure checks of \(\qcode_{\mathrm{distill}}\)}
    \ForEach{\(j \in [b]\)}{ \label{alg:distill:measure}
      \(m[\cdot,j]^{(X)} \leftarrow\) Measurement of \(X^H\) on \(\rho|_{[n]\times\{j\}}\) \;
      \(m[\cdot,j]^{(Z)} \leftarrow\) Measurement of \(Z^H\) on \(\rho|_{[n]\times\{j\}}\) \;
    }
    \(\rho' \leftarrow\) Post-measurement state of \(\rho\)\;
    \tcc*{Correct eigenvalue of \(s^H\)}
    \ForEach{\(s \in S_{\psi}\)}{\label{alg:distill:stab-loop}
      \((a,z,x) \leftarrow z,x \in \F^b\),~\(a \in \F\) such that \(s = \pm i^a  Z^z X^x\) \tcc*{Decompose} \label{alg:distill:decompose}
      \tcc*{Compute syndrome of \(s^{H}\)}
      \ForEach{\(i \in [r]\)}{%
        \(h \leftarrow H[i,\cdot]\)\;
        \tcc*{Inferred measurement of \(s^h\)}
        \(\sigma_i \leftarrow a^{|h|/2} + \langle x, m[i,\cdot]^{(X)}\rangle + \langle z, m[i,\cdot]^{(Z)}\rangle\) \label{alg:distill:inferred-meas}
      }
      \(c \leftarrow \mathcal{D}_H(\sigma)\) \tcc*{Compute correction} \label{alg:distill:correction}
      \(P \leftarrow \phi_{\psi}(s)\)\;
      \(U \leftarrow UP^c\) \tcc*{Update correction}
    }
    \(\rho_{\mathrm{decoded}} \leftarrow \left(\mathcal{E}^{-1}\right)^{\otimes b}\circ U (\rho)\) \tcc*{Apply correction and unencode. Discarding qubits in \(([n]\setminus [k]) \times [b]\)} \label{alg:distill:unencode} \;
    \(\rho_{\mathrm{out}} \leftarrow \Xi(\rho_{\mathrm{decoded}})\) \tcc*{Fix up phases}
  }
\end{algorithm}

For \(d,n \in \mathbb{N}\), define \(\mathcal{S}^n_d = \{x \subseteq [n] \mid |x| = d\} \subseteq P([n])\) to be all subsets of \([n]\) of size \(d\).
Let \(\mathcal{F}_{\mathrm{distill}} =  \mathcal{S}^n_{t+1} \bullet \mathcal{S}^b_1\) be the family of subsets of \([n] \times [b]\) that contain one element from each of \(t+1\) rows.
\begin{prop}[Single level stabilizer state distillation]\label{prop:single-level-stab-state-distill}
  In \cref{alg:stab-state-distillation}, if \(\rho\) is \(\mathcal{F}_{\mathrm{distill}}\)-deviated from \(\psi^{\otimes n}\), then \(\rho' = \psi^{\otimes k}\).
\end{prop}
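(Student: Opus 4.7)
The plan is to work entirely in the stabilizer/symplectic formalism and trace how the Pauli error is diagnosed and cancelled by the algorithm. Since $\psi^{\otimes n}$ is a stabilizer state and every operation in \cref{alg:stab-state-distillation} is either Clifford or a destructive Pauli measurement, I first decompose the superoperator $\mathcal{E}_B$ on the $\mathcal{F}_{\mathrm{distill}}$-avoiding set $B$ into Pauli components (with the syndrome measurements enforcing diagonal errors via \cref{lemma:deterministic-errors}); by linearity it then suffices to show that for any single Pauli $E$ whose support lies in at most $t$ rows, the output of the algorithm equals $\psi^{\otimes k}$.

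For the decomposition, I fix a choice of destabilizers $\{\phi_\psi(s)\}_{s \in S_\psi}$ that commute pairwise and each anticommute only with $s$ among the generators $S_\psi$ (produced by the standard Aaronson--Gottesman Gaussian elimination); then $\{s^{e_i},\,\phi_\psi(s)^{e_i}\}_{s \in S_\psi,\, i \in [n]}$ is a symplectic basis of the $nb$-qubit Pauli group. Modulo a global phase and factors from $\stab(\psi^{\otimes n})$ that act trivially on $\psi^{\otimes n}$, we can write $E \equiv \prod_{i,s}\phi_\psi(s)^{F[i,s]\, e_i}$ for a matrix $F\colon [n]\times S_\psi \to \F$; every column $F[\cdot,s]$ has Hamming weight at most $t$ because the support of $E$ spans at most $t$ rows.

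Next I verify that the inferred syndrome computed on lines~9--12 equals $\sigma = HF[\cdot,s] \in \F^r$. The operator $s^{H[i,\cdot]} := \prod_{k\,:\, H[i,k]=1} s^{e_k}$ lies in $\stab(\psi^{\otimes n})$, so in the noiseless case its eigenvalue is $+1$; writing $s = \pm i^a Z^z X^x$ and expanding column-by-column shows that, up to the phase accounted for by the $a^{|h|/2}$ term, $s^{H[i,\cdot]}$ factors as a product of the measured checks $\{X^{H[i,\cdot]}_j\}_{j \in \supp x}$ and $\{Z^{H[i,\cdot]}_j\}_{j \in \supp z}$, which justifies the formula on line~11. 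Under $E$ this eigenvalue flips iff $E$ anticommutes with $s^{H[i,\cdot]}$, and by the symplectic pairing this occurs iff $\sum_{k\,:\, H[i,k]=1} F[k,s] \equiv 1 \pmod 2$, i.e.\ $\sigma_i = (HF[\cdot,s])_i$. Since $|F[\cdot,s]| \le t$, the classical decoder $\mathcal{D}_H$ returns $c = F[\cdot,s]$ exactly; applying $\phi_\psi(s)^c$ cancels the $s$-layer of the destabilizer decomposition of $E$ (the different $s$-layers commute by our pairwise-commuting choice). After the loop, the state equals $\psi^{\otimes n}$ projected onto the observed eigenspace of the full $\{X^H,Z^H\}$ check set.

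Finally I unencode. Rather than tracing each single-qubit stabilizer of $\psi^{\otimes n}$ through $(\mathcal{E}^{-1})^{\otimes b}$, I work with the derived stabilizers $s^{G[i,\cdot]}$ for $i \in [k]$, which commute with all measured checks because $HG^T = 0$. Column-by-column, $X^{G[i,\cdot]} \mapsto X_i$ and $Z^{G[i,\cdot]} \mapsto Z_i$ under $\mathcal{E}^{-1}$, while $Y^{G[i,\cdot]} = i^{|G[i,\cdot]|-1} X^{G[i,\cdot]} Z^{G[i,\cdot]} \mapsto \theta_i Y_i$ with $\theta_i = (-1)^{(|G[i,\cdot]|-1)/2} \in \{\pm 1\}$ (using that $|G[i,\cdot]|$ is odd by $GG^T = I$). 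Accumulating these phases across the $b$ columns yields $(\mathcal{E}^{-1})^{\otimes b}\, s^{G[i,\cdot]}\,\mathcal{E}^{\otimes b} = \theta_{s,i}^b\, s^{e_i}$ on the first $k$ qubits of each column; the Pauli $\Xi$ is defined precisely so that commuting it past $s^{e_i}$ contributes a factor $\theta_{s,i}^b$, so after applying $\Xi$ and discarding the ancillas in $([n]\setminus[k])\times[b]$, the kept qubits carry stabilizers $\{s^{e_i}\}_{s \in S_\psi,\,i \in [k]}$ with the correct signs and the output is $\psi^{\otimes k}$. The main obstacle will be the phase bookkeeping---tracking how the $a^{|h|/2}$ measurement-formula phase, the destabilizer-based corrections, and the per-column $\theta_{P,i}$ phases from $\mathcal{E}^{-1}$ combine so that $\Xi$'s definition produces exact cancellation---together with verifying that the pairwise-commuting symplectic choice of $\{\phi_\psi(s)\}$ is indeed achievable and that the sequential loop on line~9 composes without introducing uncontrolled phases.
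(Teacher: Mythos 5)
Your proposal is correct and follows essentially the same path as the paper's proof, with one layer of extra formalism: you represent the error $E$ by an explicit destabilizer-decomposition matrix $F\colon[n]\times S_\psi\to\F$ and argue via cancellation of the $s$-layers, whereas the paper works directly with the anticommutation bitvector $x$ (with $s^{e_i}E = (-1)^{x[i]}Es^{e_i}$) and shows $y = c + x = 0$ so that $R = UE$ commutes with $\{s^{e_i}\}$. Your column $F[\cdot,s]$ is exactly the paper's $x$ for the loop iteration associated with $s$, and your assertion $\sigma = HF[\cdot,s]$ is literally the paper's $\sigma = Hx$, so the content is identical. The only genuine difference is that you additionally impose pairwise commutativity on the destabilizers so that the product $U = \prod_s\phi_\psi(s)^{F[\cdot,s]}$ cleanly annihilates the destabilizer part of $E$; this choice is achievable (Aaronson--Gottesman) and harmless, but it is not needed in the paper's version, which only uses the defining (anti)commutation pattern of each $\phi_\psi(s)$ against the generators. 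Your phase accounting for $\theta_j = (-1)^{(|G[j,\cdot]|-1)/2}$ from $GG^T=I$ and the role of $\Xi$ is correct, and you rightly flag the residual bookkeeping (the $a^{|h|/2}$ term and the per-column accumulation of $\theta_{s,j}$) as the fiddly part; that is exactly the detail the paper's proof spells out around \cref{prop:single-level-stab-state-distill:PcESyndrome} and the final stabilizer-tracking paragraph.
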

\begin{proof}
  By \cref{lemma:deterministic-errors}, it suffices for us to consider \(\rho\) that differs from \(\psi^{\otimes n}\) by a Pauli operator \(E\) supported on an \(\mathcal{F}_{\mathrm{distill}}\)-avoiding set i.e. is \(\mathcal{F}_{\mathrm{distill}}\)-diagonal Pauli-deviated from \(\psi^{\otimes n}\).
  Otherwise, it is in the kernel of the measurement channel.
  
  First, note that \(\psi^{\otimes n}\) is the simultaneous +1 eigenstate of the set of operators \(\{s^{e_i}\}_{i \in [n], s \in S_{\psi}}\).
  I.e. the measurement of the set of operators \(s^H\) on \(\psi^{\otimes n}\) is \(0\).
  Thus, measurement of \(s^H\) on \(\rho\) (an eigenstate) will produce a vector \(\sigma \in \F^r\) such that \footnote{The phase should be interpreted ``coordinate-wise.''} \(s^HE = (-1)^\sigma E s^H\).
  We will indirectly measure these operators.
  
  The measurements at \cref{alg:distill:measure} all commute.
  When a measured operator \(O\) is not in the stabilizer \(S\) of the state, the subgroup \(S'\) of \(S\) commuting with \(O\) is retained and the new stabilizer is \(\langle S', (-1)^a O \rangle\) for a random phase \(a\) (the measurement outcome).
  The decomposition at \cref{alg:distill:decompose}, allows us to write \(s^h = (\pm i^a)^{|h|} (Z^z)^h (X^x)^h\) at \cref{alg:distill:inferred-meas}.
  By assumption of the code properties, \(H^TH = 0\), so \(|h|\) is even.
  Furthermore, \((Z^z)^h (X^x)^h\) is proportional to the products of a subset of \(\cup_{i \in [b]} (X_i)^h\) and \(\cup_{i \in [b]} (Z_i)^h\).
  These operators all pairwise commute and were previously measured.
  Thus, the quantity computed at \cref{alg:distill:inferred-meas} is the current eigenvalue of \(s^h\) on the post-measurement state (i.e. is the variable \(\sigma\) from the previous paragraph).
  The correction operator computed at \cref{alg:distill:correction} modifies only the eigenvalues of \(s^H\) (and of \(s^G\)).
  By definition of the decoding map and \(\phi_{\psi}\),
  \begin{align}
    s^H P^cE = (-1)^{Hc+\sigma} P^c E s^H= P^cE s^H \label{prop:single-level-stab-state-distill:PcESyndrome}
  \end{align}
  and for any \(s' \in S_{\psi}\) with \(s \ne s'\), \((s')^H P^c = P^c (s')^H\).

  It remains to analyze \(R:=UE\).
  We will show that \(R\) commutes with \(\{s^{e_i}\}_{i \in [n], s \in S_{\psi}}\).
  Since \(E\) was supported on a \(\mathcal{F}_{\mathrm{distill}}\)-avoiding set, there exists a subset \(I \subseteq [n]\) with \(|I| \le t\) such that \(\supp E \subseteq I \times [b]\).

  Consider an operator \(s \in S_{\psi}\) and let all variables take their values at the end of the loop trip (starting at \cref{alg:distill:stab-loop}) corresponding to \(s\).
  Define the vectors \(x,y \in \F^n\) such that for \(i \in [n]\)
  \begin{align}
    s^{e_i} E &= (-1)^{x[i]}E s^{e_i}\\
    s^{e_i} (P^cE) &= (-1)^{y[i]}(P^cE)s^{e_i}
  \end{align}
  Note that \(Hx = \sigma\).
  Since \(E\) is supported on at most \(t\) columns (\(|I| \le t\)), \(|x| \le t\) and so \(\mathcal{D}_{H}(\sigma ) = \mathcal{D}_{H}(Hx) = x = c\).
  Thus, \(y = c+x = 0\), so \(R\) commutes with \(\{s^{e_i}\}_{i \in [n], s \in S_{\psi}}\), the generators of  \(\stab(\psi^{\otimes n})\).

  For a Pauli operator \(P\) and a set \(B \subseteq \mathcal{P}^n\), denote \(\{PbP \mid b \in B\}\) by \(PBP\).
  Since \(\stab(\rho) = E\stab(\psi^{\otimes n})E\),  the post measurement state has stabilizer given by\footnote{The phase on the last two terms is not important.}
  \begin{align}
    \stab(\rho') = \langle \{Es^{G}E\}_{s \in S_{\psi}}, \cup_{j \in [b]} (-1)^{m^{(X)}[\cdot,j]}(X_j)^H, \cup_{j \in [b]} (-1)^{m^{(Z)}[\cdot,j]}(Z_j)^H \rangle
  \end{align}
  Using the commutativity of \(R\) with \(s^{G} \subseteq \stab(\psi^{\otimes n})\), the post correction state has stabilizer
  \begin{align}
    \stab(U(\rho')) = \langle \{s^{G}\}_{s \in S_{\psi}}, \pm \cup_{j \in [b]} (X_j)^H, \pm \cup_{j \in [b]} (Z_j)^H \rangle
  \end{align}
  \((\mathcal{E}^{-1})^{\otimes b}\) unencodes the blocks.
  That is, for each row \(i \in [k]\) of \(G\), each \(j \in [b]\), and \(P \in \{X,Z\}\), \((\mathcal{E}^{-1})^{\otimes b}(P_j^{G[i,\cdot]}) = P_{(i,j)}\).
  Any operator of the form \(P_j^H\) is mapped to an operator supported only on the qubits \(\left([n]\setminus[k]\right)\times [b]\) which are discarded.
  Thus \(\stab(\rho_{\mathrm{decoded}}) = \langle \{\theta_{s,i}^{b} s^{e_i}\}_{i \in [k], s \in S_{\psi}}\rangle\).
  Finally, the application of \(\Xi\) fixes the phases so that \(\stab(\rho_{\mathrm{decoded}}) = \langle \{s^{e_i}\}_{i \in [k], s \in S_{\psi}} = \stab(\psi^{\otimes k})\). \qedhere
  
\end{proof}

\subsubsection{Proof of \cref{lemma:stab-distillation-procedure}}\label{sec:state-distill-procedure-proof}
\lemmaStabDistillationProcedure*
\begin{proof}
The quantum Hamming code satisfies the conditions required of \(C_\mathrm{distill}\) in \cref{sec:state-distill-procedure-single} \cite{steane1996simple,yamasaki2024time}.
\(H\) is the check matrix of a classical Hamming code and t=1.
For \(r \ge 3\), the parameters of the quantum Hamming code are \([[n=2^r - 1, k=2^r - 2r - 1, 3]]\).

We will repeatedly apply the state distillation procedure \cref{sec:state-distill-procedure-single} with quantum Hamming codes of increasing size.
In particular, we will use the sequence \((r_1, r_2, \dots, r_l)\) where \(r_\ell = \lfloor 2 \log_2(4\ell)\rfloor\).
Let \(n_\ell = 2^{r_\ell} - 1\), \(k_i = 2^{r_\ell} - 2r_\ell - 1\), \(M_l := \prod_{\ell=1}^l n_\ell\), and \(K_l := \prod_{\ell=1}^l k_\ell\)

We will start with blocks of \(b\) qubits indexed by \([n_l] \times \dots \times [n_1]\) initialized as \(M_l:=\prod_{\ell=1}^l n_\ell\) noisy copies of \(\psi\).
Iterating from \(\ell=1\) to \(\ell=l\), for each \(I \in [n_l] \times \dots \times [n_{\ell+1}]\) and \(J \in [k_{\ell-1}]\times\dots\times [k_1]\) we apply the state distillation procedure \cref{sec:state-distill-procedure-single} on the set of blocks \(\{I\} \times [n_{\ell}] \times \{J\}\) to get a new set of blocks with labels \(\{I\} \times [k_{\ell}] \times \{J\}\).
Let \(\mathcal{S}_{d}^{n}\) be the family of all subsets of \([n]\) of size \(d\).
At step \(\ell\), \cref{prop:single-level-stab-state-distill} gives that the output is \(\psi^{\otimes k_{\ell}}\) if the input is \(\mathcal{S}^{n_\ell}_2\bullet \mathcal{S}^{b}_1\)-deviated from \(\psi^{\otimes n_{\ell}}\).
We use the fact that, by construction, no two outputs from a single application of the state distillation procedure are used together in a following state distillation procedure: If the output of a state distillation in step \(\ell\) is not \(\psi^{\otimes k_{\ell}}\), then the input is not \(\mathcal{S}^{n_{\ell}}_2\)-deviated from \(\psi^{\otimes n_{\ell}}\).
Since each input is from separate state distillation procedures, \(t+1\) distinct state distillation procedures at step \(\ell-1\) must have had inputs that are not \(\mathcal{S}^{n_{\ell-1}}_2\)-deviated from \(\psi^{\otimes n_{\ell-1}}\).
Inducting from \(\ell=1\), the output of the final iteration is \(\psi^{K_l}\) if the input is \(\mathcal{A}_l:=\mathcal{S}^{n_l}_2\bullet\dots\bullet\mathcal{S}^{n_1}_2 \bullet \mathcal{S}^b_1\)-deviated from \(\psi^{\otimes M_l}\).

\(\weightenum{\mathcal{S}^n_d}{x} = \binom{n}{d} x^d \le n^d x^d\), so using \cref{lemma:composition-upper-bound}, for \(x \ge 0\), we can upper bound 
\begin{align}
    \weightenum{\mathcal{A}_l}{x} &= \weightenum{\mathcal{S}_2^{n_l}}{\weightenum{\mathcal{S}_2^{n_{l-1}}}{ \dots \weightenum{\mathcal{S}_2^{n_1}}{\weightenum{\mathcal{S}_1^b}{x}}}}\\
    &\le (b x)^{2^l}\left(\prod_{i=1}^l n_i^{2^{(l-i)}}\right)^2\\
                         &\le (b x)^{2^l}\left(\prod_{i=1}^l (4 i)^{2^{(l-i)}}\right)^2\\
                         &\le (64 b x)^{2^l}
\end{align}
Where we have used \(\sum_{i=1}^{\infty} 2^{-i} \log_2(4i) \le 3\).
Then, \(\beta_{\mathrm{distill}} = 64 b\) and \(\epsilon_{*,distill} = 1/\beta_{\mathrm{distill}}\).

We now turn our attention to the other parameters.
First, \((\ell!)^2 4^{2\ell} = \prod_{i=1}^{\ell} 2^{\log_2 (4\ell)^2} \le M_l \le \prod_{i=1}^{\ell} \frac{1}{4} 2^{\log_2 (4\ell)^2} = (\ell!)^2 4^{\ell}\), so \(M_l=e^{\Theta(l \log l)}\).
For the rate,
\begin{align}
  \frac{K_l}{M_l} &= \prod_{\ell=1}^l \frac{k_{\ell}}{n_{\ell}} \\
                  &\ge \prod_{\ell=1}^\infty \left(1-\frac{2r_{\ell}}{n_{\ell}}\right) \\
                  &\ge \frac{1}{7}\prod_{\ell=2}^\infty \left(1-\frac{8\log_2(4\ell)^2}{(4\ell)^2}\right)\\
                  &\ge \frac{1}{300}
\end{align}
where we have upper bounded the first term in the product separately and numerically evaluated \(\prod_{\ell=2}^\infty \left(1-\frac{8\log_2(4\ell)^2}{(4\ell)^2}\right) \ge \frac{1}{300}\).

To prove the circuit properties, first note that using a length-\(n\) CSS code, syndrome measurement followed by unencoding can be done in depth \(O(n^2)\) using only \(\CNOT\), and \(X\)/\(Z\) measurement without any ancilla qubits:
For the encoding with logical operators \(X^G\) and \(Z^G\) used in the previous section, there exists a unitary encoding circuit \(U\) supported on \(n\) qubits of depth \(O(n^2)\) \cite{cleve1997efficient,gottesman1997stabilizer} that uses only \(\CNOT\).
For an \(n\)-qubit state and \(r_x,r_z\) such that \(r_x+r_z=n-k\), \(U \ket{\psi}\ket{0}^{r_z} \ket{+}^{r_x} = \ket{\overline{\psi}}\).
\(U^{-1}\) maps a set of stabilizer generators of the code to \(\{Z_i\}_{i=n-k}^{n-k+r_z}\cup \{X_i\}_{i=n-k+r_z}^n\), so executing the circuit \(U^{-1}\) followed by measurement of the last \(n-k\) qubits (in either the \(X\) or \(Z\) basis) will produce a measurement outcome \(a \in \F^{n-k}\) and state on the first \(k\) qubits that is equivalent to first measuring the stabilizer generators of the code and then applying \(U^{-1}\).

Using this circuit, the overall circuit depth is \(O(l \cdot n_l^2) = O(l^{3})\) and uses only \(\CNOT\), \(X\) and \(Z\) basis measurements, and classically controlled Pauli gates. The main classical computation includes decoding of the quantum Hamming code, which is of similar time complexity.
With the exception of the classically controlled Pauli gates, the circuit is completely determined by \(l\) and \(b\).
Since no ancillas are prepared, the width is \(O(M_l)\).
\end{proof}

\section{Magic state distillation with almost-constant spacetime overhead}\label{sec:magic-state-distillation-code}
We will distill the magic state \(\ket{\CCZ}\).
To avoid large overhead, we will require a very efficient distillation scheme.
The majority of the section will be devoted to establishing the following theorem about punctured quantum Reed-Solomon codes over extension fields of \(\F\).

\begin{restatable}{theorem}{thmPqrsGoodCode}\label{thm:pqrs-good-code}
    For each \(q = 2^l\) with \(l\ge 3\), there exists a CSS qudit code (punctured quantum Reed-Solomon code) with parameters
    \begin{align}
        \left[\left[\frac{3q}{4}, \frac{q}{4}, \left\lfloor \frac{q}{3} \right\rfloor - \frac{q}{4} + 1\right]\right]_q,
    \end{align}
    such that \(\CCZ^{\otimes 3q/4}_{q}\) acts as logical \(\overline{\CCZ_q}^{\otimes q/4}\) on a basis of logical qudits.
\end{restatable}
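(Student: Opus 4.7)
I would construct $\qcode$ as a CSS code from two punctured Reed--Solomon codes over $\mathbb{F}_q$, and then verify its parameters and the transversal $\CCZ$ action via low-degree polynomial identities specific to characteristic two.

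\emph{Construction and parameters.} Pick an evaluation set $S \subset \mathbb{F}_q$ with $|S| = 3q/4$ and let $T := \mathbb{F}_q \setminus S$ be the puncture set, chosen (as described below) so that $0\notin T$ and suitable moment sums vanish. Set $k_X := \lfloor q/3 \rfloor$ and $k_Z := k_X - q/4$, and let
\begin{align*}
C &:= \{\operatorname{ev}_S(f) : f\in\mathbb{F}_q[x],~\deg f < k_X\}, \\
D &:= \{\operatorname{ev}_S(g) : g\in\mathbb{F}_q[x],~\deg g < k_Z\},
\end{align*}
so $D \subsetneq C$. Define $\qcode$ to be the CSS code with logical basis states $|\vec{s}\rangle_L = \sum_{\vec{u}\in D}|\vec{c}(\vec{s}) + \vec{u}\rangle$ for $\vec{s}$ indexing a transversal of $C/D$. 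By the MDS property of Reed--Solomon codes, $\dim C = k_X,\,\dim D = k_Z$, giving $K = k_X - k_Z = q/4$ logical qudits. The dual $D^\perp$ is a generalized Reed--Solomon code of complementary degree with distance $k_Z + 1$, and $d(C) = 3q/4 - k_X + 1$, so the CSS distance equals $\min(3q/4 - k_X + 1,\ k_Z + 1) = k_Z + 1 = \lfloor q/3\rfloor - q/4 + 1$, using that $k_X + k_Z \le 3q/4$.

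\emph{Arithmetic key.} In characteristic two, $\sum_{\alpha\in\mathbb{F}_q}\alpha^j = 0$ for every $0 \le j \le q-2$, hence $\sum_{\alpha\in\mathbb{F}_q} h(\alpha) = 0$ for every polynomial $h$ of degree at most $q-2$. Triple products appearing in the analysis below (of polynomials from $C$ or $D$, with each factor of degree $<k_X$) have degree at most $3k_X - 3 \le q - 3$, so this identity yields
\begin{align*}
\sum_{\alpha \in S}h(\alpha) = \sum_{\alpha \in T} h(\alpha)
\end{align*}
for all such $h$.

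\emph{Codespace preservation and logical action.} Applying $\CCZ^{\otimes 3q/4}_q$ to three logical states with coset representatives $f_1,f_2,f_3\in\mathbb{F}_q[x]_{<k_X}$ and shifts $g_1,g_2,g_3\in\mathbb{F}_q[x]_{<k_Z}$, the overall phase is $(-1)^{\operatorname{Tr}(\sum_{\alpha\in S}\prod_j(f_j+g_j)(\alpha))}$. Using the arithmetic key, each ``cross term'' with at least one factor $g_j$ reduces to a sum $\sum_{\beta\in T} h(\beta)$ for a polynomial $h$ whose degree lies in a range determined by $k_X$ and $k_Z$. Choose $T$, following the punctured Reed--Solomon framework of~\cite{krishna2019towards}, so that the power sums $\sum_{\beta\in T}\beta^j$ vanish for all degrees $j$ that appear in the cross-term products; this moment-balancing guarantees codespace preservation. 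The only surviving contribution is $\operatorname{Tr}\bigl(\sum_{\beta\in T}(f_1 f_2 f_3)(\beta)\bigr)$. Take the logical basis polynomials $p_\beta(x) := x^{k_Z}\ell_\beta(x)$ for $\beta\in T$, where $\ell_\beta$ is the Lagrange interpolation polynomial for $T$; these have degree $<k_X$ and span a complement of $\mathbb{F}_q[x]_{<k_Z}$ inside $\mathbb{F}_q[x]_{<k_X}$, with $p_\beta(\beta') = (\beta')^{k_Z}\delta_{\beta,\beta'}$. Hence $f_{\vec{s}}(\beta) = \beta^{k_Z}s[\beta]$ at each $\beta\in T$, and the cubic phase becomes
\begin{align*}
\operatorname{Tr}\Bigl(\sum_{\beta\in T}\beta^{3k_Z} s_1[\beta]s_2[\beta]s_3[\beta]\Bigr) = \sum_{\beta\in T}\operatorname{Tr}\bigl(t_1[\beta]t_2[\beta]t_3[\beta]\bigr),
\end{align*}
where $t_j[\beta] := \beta^{k_Z}s_j[\beta]$ is an invertible rescaling since $0\notin T$. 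This is exactly the phase of $\overline{\CCZ_q}^{\otimes q/4}$ on the logical basis $\{|\vec{t}\rangle_L\}_{\vec{t}\in \mathbb{F}_q^T}$.

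\emph{Main obstacle.} The delicate step is constructing the puncture set $T\subset\mathbb{F}_q^\times$ of size $q/4$ whose power-sum moments vanish across the required range of degrees while leaving the cubic main term non-degenerate. In characteristic two, Newton's identities make these moment constraints tight (as $q/4$ and $q-1$ are coprime, so multiplicative $q/4$-th roots of unity are not available), and one must instead appeal to the polynomial construction from~\cite{krishna2019towards} to exhibit a suitable $T$. Once such a $T$ is fixed, the remaining steps are straightforward polynomial arithmetic combined with the standard CSS framework.
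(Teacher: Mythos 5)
There is a genuine gap in your proposal, and it is fatal: the choice of inner code $D = \operatorname{ev}_S(\mathbb{F}_q[x]_{<k_Z})$ (polynomials constrained only in degree) does not make $\CCZ^{\otimes 3q/4}_q$ preserve the code space. For that you need $C^{\ast 2} \subseteq D^\perp$, i.e. $\sum_{\alpha\in S} h(\alpha)g(\alpha) = 0$ for every $h\in\mathbb{F}_q[x]_{<2k_X-1}$ and every $g\in\mathbb{F}_q[x]_{<k_Z}$. Since $\sum_{\alpha\in\mathbb{F}_q}\alpha^j=0$ for $0\le j\le q-2$, this reduces to exactly the moment constraints you ask for: $\sum_{\beta\in T}\beta^j = 0$ for $j=0,\dots,2k_X+k_Z-3$. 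But no nonempty $T\subset\mathbb{F}_q$ of size $q/4$ satisfies them. Indeed, those conditions say the indicator vector of $T$ lies in the dual of the Reed--Solomon code of dimension $2k_X+k_Z-2$, which is itself a Reed--Solomon code of minimum distance $2k_X+k_Z-1 = 3\lfloor q/3\rfloor - q/4 - 1 \ge q/4 + 1 > |T|$ for every $q=2^l$, $l\ge3$; hence the indicator vector must be zero and $T=\emptyset$. So the ``delicate step'' you flag cannot be completed, and the construction as written does not give a code with transversal $\CCZ$.

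The paper avoids this entirely by making a different choice of inner code. Rather than $D=\operatorname{ev}_M(\mathbb{F}_q[x]_{<m-s})$, the paper takes (\cref{defn:pqrs}) $C_2^\perp = \phi_M(\{P\in\mathbb{F}_q[x]_{<m} : P(B)=0\})$, i.e. polynomials of bounded degree which \emph{vanish on the puncture set} $B$. This is a componentwise rescaling of your $D$ (by $\operatorname{ev}_M(\prod_{b\in B}(x-b))$), so it has the same dimension, but it behaves completely differently under componentwise products. With this choice, a cross term such as $\sum_{\alpha\in M}p_1(\alpha)p_2(\alpha)P_3(\alpha)$ vanishes by combining (i) $\deg(p_1p_2P_3) \le 2k+m-3 < q-1$ so the full-field sum is zero, and (ii) $P_3(B)=0$ so the correction $\sum_{\alpha\in B}$ also vanishes --- with no power-sum condition on $B$ whatsoever. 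This is the content of \cref{lemma:compute-C2} (identifying $C_2$ as a plain punctured RS code of degree $<q-m$) and \cref{lemma:C12-in-C2} ($C_1^{\ast 2}\subseteq C_2$, a pure degree count using $2k\le q-m$). The rest of your outline --- the CSS parameters, the characteristic-$2$ sum identity, and evaluating the surviving cubic term on $B$ via Lagrange interpolation --- does track the paper (\cref{prop:interpolation-poly-ortho}, \cref{lemma:prod-codewords}, \cref{prop:logical-ccz}), but it all hinges on first picking the vanishing-on-$B$ subcode.
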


The construction is similar to that of Krishna and Tillich \cite{krishna2019towards} with the main difference being the use of extension fields.
To simplify notation, we will specialize to characteristic 2 though this fact is used in a non-essential way.
In~\Cref{sec:distill-ccz-cczq} we will employ the qudit PQRS code to construct a distillation protocol for the qubit CCZ state with almost-constant spacetime overhead, proving~\Cref{lemma:magic-distillation-procedure}.

\subsection{Polynomials over finite fields}
We begin by making some standard definitions and recalling some standard results about polynomials over finite fields.
Fix some prime power \(q \in \mathbb{N}\).

Notate the set of polynomials in the variable \(x\) with coefficients in \(\Fq\) by \(\Fq[x]\).
For \(k \in \mathbb{N}\), define \(\Fq[X]_{<k} = \{P \in \Fq[x] \mid \deg P < k\}\) with degree of the zero polynomial defined to be \(-\infty\). We use \(\Fq^*\) to denote the set of non-zero elements of \(\Fq\).

For a classical code \(C\), the star product is given by coordinate-wise multiplication.
I.e. for \(a=(a_1,a_2,\dots, a_n) \in C\), \(b=(b_1,b_2,\dots, b_n) \in C\), 
\begin{align*}
    a\ast b := (a_1b_1, a_2 b_2, \dots a_nb_n).
\end{align*}
We use \(C^{\ast 2}\) to denote the set \(\{a\ast b \mid a,b \in C\}\).

For some \(k\in\mathbb{N}\), fix a set of coordinates \(A=\{\alpha_i\}_{i=1}^k \subseteq \Fq\) that will later be used for a systematic encoding for a Reed-Solomon code.
Recall that a Reed-Solomon code is evaluations of the set \(\Fq[X]_{<k}\) on \(\Fq\).
In order to obtain a systematic encoding, we will parameterize \(\Fq[x]_{<k}\) in terms of the evaluations on points of \(A\).
\begin{definition}\label{defn:interpolate-poly}
    The Lagrange interpolation polynomials for the set \(A\) are
    \begin{align*}
        \ell_{i}^{(A)}(x) = \prod_{j \in {[k]\setminus \{i\}}} \frac{x-\alpha_j}{\alpha_i - \alpha_j}.
    \end{align*}
\end{definition}
One can see that the interpolation polynomial \(\ell_i^{(A)}(x)\) is zero on \(A \setminus \{\alpha_i\}\) and one on \(\alpha_1\).
\begin{definition}[Message polynomial]\label{defn:message-poly}
    For a message \((m_1, m_2, \dots m_k) \in \Fq^k\), the message polynomial \(p_m^{(A)}(x)\) corresponding to \(m\) is
    \begin{align*}
        p_m^{(A)}(x) = \sum_{i\in [k]} m_i \ell_{i}^{(A)}(x).
    \end{align*}
\end{definition}
A message polynomial uniquely specifies a codeword of the code and also gives a handle on coordinate-wise multiplication of codewords.
It follows from \cref{defn:interpolate-poly} that \(p_m^{(A)}(\alpha_i) = m_i\).
Our Reed-Solomon codes are evaluations of these message polynomials, so we will find it useful to convert between codewords and the corresponding message polynomials.
\begin{definition}[Evaluation map]
    For an evaluation set \(M = \{\alpha_i\}_{i=1}^n \subseteq \Fq\), we define the evaluation map \(\phi_M \colon \Fq[x] \to \Fq^n\)
    \begin{align*}
        P \mapsto (P(\alpha_1), P(\alpha_2), \dots, P(\alpha_n)).
    \end{align*}
    We will occasionally write \(\phi^{-1}_M\) to denote the map that returns the unique lowest degree polynomial with the given evaluation set.
    I.e. for any \(c \in \Fq^n\), \(P = \phi^{-1}_M(c)\) satisfies \(\deg P \le n-1\) and \(\phi_M(P) = c\).
\end{definition}

Our main tool to get a handle on products of polynomials will be \cref{prop:low-deg-sum}, which states that the average over all field elements of low-degree polynomials is zero.
\begin{prop}\label{prop:element_power}
    For any non-zero element of \(a \in \Fq\),
    \begin{align*}
        a^{q-1} = 1.
    \end{align*}
\end{prop}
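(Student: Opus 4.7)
The plan is to invoke the standard group-theoretic fact that the nonzero elements of a finite field form a multiplicative group. Specifically, $\mathbb{F}_q^*$ is closed under multiplication (since $\mathbb{F}_q$ is a field, the product of nonzero elements is nonzero), contains inverses, and has exactly $q-1$ elements. Once we have this group structure, the proposition is immediate from Lagrange's theorem: the order of any $a \in \mathbb{F}_q^*$ divides $|\mathbb{F}_q^*| = q-1$, so $a^{q-1} = 1$.

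If we want to avoid quoting Lagrange, a self-contained bijection argument works just as well. The key observation is that for any fixed $a \in \mathbb{F}_q^*$, the map $b \mapsto ab$ is a bijection on $\mathbb{F}_q^*$, since it has inverse $b \mapsto a^{-1} b$ (and $a^{-1}$ exists as $\mathbb{F}_q$ is a field). Setting $\Pi = \prod_{b \in \mathbb{F}_q^*} b$, we can reindex the product over the image of this bijection to obtain
\begin{equation*}
\Pi \;=\; \prod_{b \in \mathbb{F}_q^*} (ab) \;=\; a^{q-1}\,\Pi.
\end{equation*}
Since $\Pi$ is a product of nonzero elements of a field, $\Pi \neq 0$, so we can cancel it from both sides to conclude $a^{q-1}=1$.

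There is no real obstacle here; the proposition is a textbook fact and the only decision is stylistic, namely which of the two short arguments above to present. I would pick the bijection proof since it is entirely self-contained and avoids invoking Lagrange's theorem (which the paper does not develop), and it generalizes cleanly to motivate the companion fact $\prod_{a \in \mathbb{F}_q^*} a = -1$ that may be used elsewhere in the section. The whole write-up should be two or three lines.
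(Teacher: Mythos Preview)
Your proposal is correct, and the bijection argument you chose is essentially identical to the paper's proof: the paper writes $\prod_{x \in \Fq^*} x = \prod_{x \in \Fq^*} ax = a^{q-1} \prod_{x \in \Fq^*} x$, notes the product is nonzero, and cancels.
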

\begin{proof}
    The product over all non-zero field elements is non-zero and invariant under shifts by \(a\), so it must be the case that \(a^{q-1} = 1\).
    \begin{align*}
        \prod_{x \in \Fq^*} x = \prod_{x \in \Fq^*} ax = a^{q-1} \prod_{x \in \Fq^*} x \in \Fq^*. & \qedhere
    \end{align*}
    
\end{proof}

\begin{prop}\label{prop:low-deg-sum}
    For a polynomial \(p \in \Fq[x]\) with leading coefficient \(\beta\)
    \begin{align*}
        \sum_{x \in \Fq} p(x) = \begin{cases}
            0 & 0 \leq \deg p < q-1 \\
            -\beta & \deg p = q-1
        \end{cases}.
    \end{align*}
\end{prop}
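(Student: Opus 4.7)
The plan is to reduce to the monomial case by linearity of the sum. Writing \(p(x) = \sum_{d=0}^{D} c_d x^d\) with \(D = \deg p\) and \(c_D = \beta\), the claim will follow immediately once I establish that for every nonnegative integer \(d\),
\begin{align*}
  \sum_{x \in \Fq} x^d = \begin{cases} 0 & 0 \le d < q-1 \\ -1 & d = q-1 \end{cases}
\end{align*}
(with the convention \(0^0 = 1\) for the \(d=0\) term).

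First I would dispatch the two extreme cases. For \(d=0\), the sum is \(\sum_{x \in \Fq} 1 = q \cdot 1 = 0\) in \(\Fq\), since \(q\) is a power of the field characteristic. For \(d = q-1\), I invoke \cref{prop:element_power} to get \(x^{q-1} = 1\) for all \(x \in \Fq^*\), so the sum equals \(|\Fq^*| \cdot 1 + 0 = (q-1) \cdot 1 = -1\) in \(\Fq\).

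For the remaining range \(1 \le d \le q-2\), the key step is to exhibit an \(a \in \Fq^*\) with \(a^d \neq 1\): if \(g \in \Fq^*\) is a multiplicative generator (which exists because \(\Fq^*\) is cyclic), then \(g^d \neq 1\) precisely because the order of \(g\) is \(q-1\) and \(0 < d < q-1\). With such an \(a\) in hand, the substitution \(x \mapsto a x\) permutes \(\Fq\), giving
\begin{align*}
  \sum_{x \in \Fq} x^d = \sum_{x \in \Fq} (a x)^d = a^d \sum_{x \in \Fq} x^d,
\end{align*}
so \((1 - a^d) \sum_x x^d = 0\), and since \(1 - a^d \neq 0\) in the field, the sum vanishes.

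Combining these cases, only the leading term \(c_D x^D\) with \(D = q-1\) can contribute, and it contributes \(-c_D = -\beta\); otherwise every monomial sums to zero. I do not expect any obstacle here — the only mild subtlety is remembering that \(d=0\) is handled by the characteristic and not by the generator argument, and that the proposition's ``\(\deg p < q-1\)'' case silently includes \(\deg p = 0\) via the constant term vanishing in \(\Fq\).
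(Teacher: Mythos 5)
Your argument is correct and follows the paper's proof essentially verbatim: both reduce to monomials by linearity, handle \(d=0\) via the characteristic, handle \(1 \le d \le q-2\) by scaling the summation variable by a primitive element (your ``multiplicative generator'' \(g\)) and using \(g^d \ne 1\), and handle \(d = q-1\) via \cref{prop:element_power}. The only cosmetic difference is that you spell out the \(d=0\) case as a separate bullet, whereas the paper folds it into the same sentence.
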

\begin{proof}
    Let \(\alpha \in \Fq\) be a primitive element for \(\Fq\) (which always exists). %
    Consider a monomial of degree \(k\) summed over the field \(\Fq\).
    This sum is invariant under shifts by \(\alpha\),
    \begin{align*}
        \sum_{x \in \Fq} x^k = \sum_{x \in \Fq} (\alpha x)^k = \alpha^k \sum_{x \in \Fq} x^k.
    \end{align*}
    When \(k=0\), the summand is constant and the number of terms in the sum is a multiple of the characteristic.
    For \(k \in (0, q-1)\), \(\alpha^k\ne 1\), so it must be the case that the sum is zero.

    When \(k=q-1\), \cref{prop:element_power} gives that there are \(q-1\) non-zero terms in the sum, each equal to \(1\).
    The sum is then equal to \(q-1\), the additive inverse of \(1\).
\end{proof}

\subsection{Extension fields and qudits}
Later, in order to perform distillation, we will need to represent our qudits over the alphabet \(\FF_{p^l}\) in terms of qudits over \(\FF_{p}\).
For the remainder of the section fix \(q = p^l\) for some prime \(p\).
To do this, we will need to introduce a basis.
This will not enter in any essential way until \cref{sec:distill-ccz-cczq}.
\subsubsection{Bases}
For an extension field \(\FF_{p^l}\) of degree-\(l\) over \(\FF_p\), a basis for \(\FF_{p^{l}}\) is a set \(\{\alpha_i\}_{i \in [l]}\) such that every element \(x \in \FF_{p^{l}}\) has a unique decomposition \(x_1\alpha_1 + x_2 \alpha_2 + \dots + x_l\alpha_l\) with \(\{x_i\}_{i \in [l]}\subseteq \FF_p\).
We will also make use of the trace map \(\tr \colon \FF_{p^{l}}\to \FF_{p}\) given by %
\begin{align}
  \tr(x) = \sum_{i=1}^l x^{p^{i-1}}.
\end{align}
The trace map is \(\FF_p\)-linear.

In this section, we will use two different bases which simplifies different operations.
\begin{fact}[Polynomial basis \cite{mullen2013handbook}]
  Let \(\alpha\) be a root of a degree \(l\) polynomial irreducible over \(\FF_p\).
  Then \(\{1, \alpha, \alpha^2,\dots, \alpha^l\}\) forms a basis, known as a polynomial basis for \(\FF_{p^{l}}\) over \(\FF_{p}\).
\end{fact}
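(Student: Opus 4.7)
The plan is to verify that $\{1, \alpha, \alpha^2, \ldots, \alpha^{l-1}\}$ (note: the statement as written lists $\alpha^l$ as well, but since $[\FF_{p^l}:\FF_p]=l$ a basis must have exactly $l$ elements, so this should read $\alpha^{l-1}$) is both $\FF_p$-linearly independent and spans $\FF_{p^l}$. To set this up, let $f(x)\in\FF_p[x]$ be the degree-$l$ irreducible polynomial with $f(\alpha)=0$, and consider the subring $\FF_p[\alpha]\subseteq \FF_{p^l}$ generated by $\alpha$ over $\FF_p$.

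The key step for linear independence is to invoke the minimality of $f$: any nontrivial $\FF_p$-linear relation $c_0+c_1\alpha+\cdots+c_{l-1}\alpha^{l-1}=0$ would exhibit $\alpha$ as a root of a nonzero polynomial $g\in\FF_p[x]$ with $\deg g\le l-1$. Since $f$ is irreducible with $f(\alpha)=0$, it equals (up to scalar) the minimal polynomial of $\alpha$, hence $f\mid g$, contradicting $\deg g<\deg f$. For spanning within $\FF_p[\alpha]$, I would apply the Euclidean division algorithm in $\FF_p[x]$: any $\beta=h(\alpha)\in\FF_p[\alpha]$ can be written with $h=qf+r$ where $\deg r<l$, so $\beta=r(\alpha)$ already lies in the $\FF_p$-span of $\{1,\alpha,\ldots,\alpha^{l-1}\}$.

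The only remaining (and mildest) obstacle is to upgrade the conclusion from $\FF_p[\alpha]$ to all of $\FF_{p^l}$. For this I would use that $\FF_p[\alpha]\cong \FF_p[x]/(f)$ is a field because $(f)$ is maximal (since $f$ is irreducible), and by the two steps above this field has cardinality $p^l$. Since any two finite fields of the same order are isomorphic and $\FF_p[\alpha]$ sits inside $\FF_{p^l}$ with matching cardinality, the inclusion must be an equality, so the $l$ elements $1,\alpha,\ldots,\alpha^{l-1}$ form a basis of $\FF_{p^l}$ over $\FF_p$ as claimed.
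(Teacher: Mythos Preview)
Your proof is correct, and you rightly caught the typo in the statement (the basis should end at $\alpha^{l-1}$, not $\alpha^l$). The paper does not actually prove this fact: it is stated with a citation and followed only by the remark that it ``arises from the standard construction of $\FF_{p^l}$ as the quotient of $\FF_p[X]$ by an irreducible polynomial,'' which is precisely the mechanism your argument unpacks in detail.
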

This fact arises from the standard construction of \(\FF_{p^l}\) as the quotient of \(\FF_p[X]\) by an irreducible polynomial.
It will lend itself to easy multiplication of elements of \(\FF_{p^l}\) using operations in \(\FF_p\).

The second basis we will use is known as a self-dual basis.  
\begin{fact}[Existence of self-dual bases \cite{seroussi1980factorization}]
  Let \(q = p^l\) for some prime \(p\).
  For \(q\) even or \(q\) and \(l\) odd, there exists an self-dual basis \(\{\sigma_i\}_{i \in [l]} \subset \FF_q\) for \(\FF_q\) over \(\FF_p\)
  This basis satisfies
  \begin{align}
      \tr \left(\alpha_i \alpha_j\right) = \begin{cases} 1 & i=j \\ 0 & i \ne j \end{cases}.
  \end{align}
\end{fact}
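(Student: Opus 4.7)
The statement concerns the $\mathbb{F}_p$-bilinear form $B(x,y) = \tr_{\mathbb{F}_q/\mathbb{F}_p}(xy)$ on $\mathbb{F}_q$ viewed as an $l$-dimensional $\mathbb{F}_p$-vector space; a self-dual basis is exactly an orthonormal basis for $B$. I would first observe that $B$ is non-degenerate: the trace map is surjective (its kernel has $\mathbb{F}_p$-dimension $l-1$), so $y \mapsto \tr(\cdot\, y)$ is an injective $\mathbb{F}_p$-linear map into the dual, hence an isomorphism. The existence question thus reduces to showing that $B$ is isometric to the standard inner product $\sum_i x_i y_i$ on $\mathbb{F}_p^l$.

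\textbf{Odd case ($p$ and $l$ both odd).} Non-degenerate symmetric bilinear forms on $\mathbb{F}_p^l$ are classified up to isometry by their discriminant in $\mathbb{F}_p^*/(\mathbb{F}_p^*)^2$, so I only need $\det G \in (\mathbb{F}_p^*)^2$, where $G$ is the Gram matrix of $B$ in any chosen basis. Using the polynomial basis $\{1,\alpha,\ldots,\alpha^{l-1}\}$ for a generator $\alpha$ of $\mathbb{F}_q$, one has $G_{ij} = \tr(\alpha^{i+j}) = \sum_{k=0}^{l-1} \alpha_k^{i+j}$, where $\alpha_0, \ldots, \alpha_{l-1}$ are the Galois conjugates of $\alpha$. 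Writing $V_{ij} = \alpha_j^i$ gives the Vandermonde factorization $G = V V^T$, hence
\begin{equation*}
\det G = (\det V)^2 = \Bigl(\prod_{i<j}(\alpha_i-\alpha_j)\Bigr)^2.
\end{equation*}
A priori this lies only in $\mathbb{F}_q$. The Frobenius $\phi:x\mapsto x^p$ generates $\mathrm{Gal}(\mathbb{F}_q/\mathbb{F}_p)$ and permutes the $\alpha_i$ as a single $l$-cycle of sign $(-1)^{l-1}$, which equals $+1$ because $l$ is odd. Therefore $\prod_{i<j}(\alpha_i - \alpha_j)$ is Galois-fixed and lies in $\mathbb{F}_p$, exhibiting $\det G$ as a square in $\mathbb{F}_p^*$. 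This yields the desired isometry and hence the self-dual basis.

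\textbf{Even case ($p=2$).} Here $B(x,x) = \tr(x^2) = (\tr x)^2 = \tr(x)$ is nonzero for some $x$ by surjectivity of the trace, so $B$ is symmetric but \emph{not} alternating. A standard classification over $\mathbb{F}_2$ then says every non-alternating non-degenerate symmetric bilinear form is isometric to $\sum x_i y_i$. I would prove this by induction: pick $v_1$ with $B(v_1,v_1)=1$, decompose $\mathbb{F}_q = \langle v_1\rangle \oplus v_1^{\perp}$ (possible since $B(v_1,\cdot)$ is a nonzero functional), and iterate on $v_1^\perp$, whose restricted form is still non-degenerate. If along the way the restricted form becomes alternating (i.e.\ $B(w,w)=0$ for all $w$ in the current subspace), one replaces a vector by a combination of two isotropic vectors that together support a nontrivial hyperbolic summand $\begin{pmatrix}0&1\\1&0\end{pmatrix}$, which over $\mathbb{F}_2$ is isometric to $\begin{pmatrix}1&1\\1&0\end{pmatrix}$ and thus recovers a unit vector to continue the induction.

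\textbf{Main obstacle.} In each case the crux is the isometry invariant. For odd $p$, it is the identification $\det G = \bigl(\prod_{i<j}(\alpha_i-\alpha_j)\bigr)^2$ together with the Galois-theoretic argument that odd $l$ makes the Frobenius act as an even permutation of conjugates; this is where the hypothesis on $l$ enters decisively. For $p=2$, the subtle point is managing the inductive step when the orthogonal complement temporarily becomes alternating, which requires the hyperbolic/unit-vector exchange above and a careful check that non-alternatingness can always be restored before the induction terminates.
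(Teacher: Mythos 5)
The paper does not prove this fact; it is imported as a citation to \cite{seroussi1980factorization}, so you are supplying a proof the paper omits. Your approach — reducing to the classification of the trace form as a nondegenerate symmetric $\mathbb{F}_p$-bilinear form — is a legitimate and self-contained route, different from the constructive/algorithmic one in the cited reference.

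Your odd-characteristic argument is correct. The identity $G = VV^{T}$ with $V_{ij}=\alpha_j^{i}$ gives $\det G = \bigl(\prod_{i<j}(\alpha_i-\alpha_j)\bigr)^{2}$, Frobenius permutes the conjugates by an $l$-cycle of sign $(-1)^{l-1}=+1$ when $l$ is odd, so $\prod_{i<j}(\alpha_i-\alpha_j)$ is Frobenius-fixed and hence in $\mathbb{F}_p^{*}$, whence $\det G$ is a square and the discriminant classification does the rest.

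There is, however, a genuine error in the characteristic-$2$ argument. You claim the hyperbolic plane $\begin{pmatrix}0&1\\1&0\end{pmatrix}$ is isometric over $\mathbb{F}_2$ to $\begin{pmatrix}1&1\\1&0\end{pmatrix}$. This is false: being alternating ($B(w,w)=0$ for all $w$) is a basis-invariant property of a symmetric bilinear form, the hyperbolic form is alternating, and $\begin{pmatrix}1&1\\1&0\end{pmatrix}$ is not. So the two-dimensional exchange you describe cannot work. The repair requires reaching back into the already-diagonalized part: if at some stage the orthogonal complement of the orthonormal set $\{v_1,\dots,v_m\}$ has become alternating, take a hyperbolic pair $u,w$ in that complement (so $B(u,u)=B(w,w)=0$, $B(u,w)=1$) and \emph{sacrifice} one already-extracted unit vector, say $v_m$. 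A direct check in characteristic $2$ shows that
\begin{align}
  v_m + u,\qquad v_m + w,\qquad v_m + u + w
\end{align}
are pairwise orthogonal, each of norm $1$, and span $\langle v_m, u, w\rangle$; this is a three-dimensional exchange, not a two-dimensional one. Replacing $v_m$ by these three vectors restores a unit vector in what remains of the complement and lets the induction continue. With that correction (and a short argument that the overall form being non-alternating guarantees the process terminates with an orthonormal basis of the whole space), the even case goes through. The overall strategy is sound; only this lemma needs fixing.
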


For an element \(x \in \FF_{p^l}\) and a basis \(N=\{\alpha_i\}_{i\in [l]}\) of \(\FF_{p^{l}}\) over \(\FF_p\), we use \(x_i^{(N)} \in \FF_p\), \(i \in [l]\) to denote the expansion of \(x\) in the basis \(N\) i.e. \(x = \sum_{i\in [l]} \alpha_i x_i^{(N)}\).
When clear from context, we will drop the superscript \(N\) to avoid cluttering the notation.

\subsubsection{\(\Fq\)-Qudits and CSS codes over \(\Fq\)}
We will define CSS codes over qudits of dimension \(q\). %
Let \(\omega\) be a \(p\)-th roof of unity.
For an element \(a \in \FF_p\), we will use the abuse of notation \(\omega^a\) to indicate exponentiation after taking the canonical inclusion \(\FF_p \to \mathbb{Z}\).

For \(a,b,x \in \Fq\), define the generalized Pauli operators (where addition is taken in \(\Fq\))
\begin{align*}
    X^{(q)}(a) \ket{x} = \ket{x+a}, && Z^{(q)}(b) \ket{x} = \omega^{\tr(bx)} \ket{x}.
\end{align*}
So that
\begin{align*}
    \omega^{\tr(ab)} X^{(q)}(a)Z^{(q)}(b) = Z^{(q)}(b)X^{(q)}(a).
\end{align*}

For \(x,y,z \in \Fq\), define
\begin{align*}
  \CNOT^{(q)}\ket{x}\ket{y} = \ket{x}\ket{x+y},
\end{align*}
\begin{align*}
    \CZ^{(q)}\ket{x}\ket{y} = \omega^{\tr(xy)}\ket{x}\ket{y}, & & \CCZ^{(q)}\ket{x}\ket{y}\ket{z} = \omega^{\tr(xyz)}\ket{x}\ket{y}\ket{z}.
\end{align*}

For linear codes \(C_1,C_2\subseteq \Fq\) such that \(C_2^\perp \subseteq C_1\), the CSS code \(\mathrm{CSS}(C_1,C_2)\) is the span of the states
\begin{align}
    c \in C_1& &\ket{c+C_2^\perp} = \frac{1}{\sqrt{|C_2^\perp|}}\sum_{\alpha \in C_2^\perp} \ket{c+\alpha}.
\end{align}

\subsubsection{\(\Fq\)-Qudits using \(\FF_p\)-qudits}
We would like to construct our qudits out of smaller qudits over the base field.
We will use \(\ket{\psi}_q\) to denote a qudit state with local dimension \(q\) (when different from 2).
For a basis \(N\), we will use the following representation of the qudit state
\begin{align}
    \ket{x}_q^{(N)} = \ket{x_1^{(N)}} \ket{x_2^{(N)}} \dots \ket{x_l^{(N)}}.
\end{align}
When \(N\) is self-dual, we have that \(\tr(xy) = \sum_{i\in[l]} x^{(N)}_i y^{(N)}_i\).
This simplifies certain Clifford gates.
In particular,
\begin{align}
  X^{(q)}(a) &= X^{(p)}(a_1) \otimes X^{(p)}(a_2) \otimes \dots \otimes X^{(p)}(a_m), \\
  Z^{(q)}(a) &= Z^{(p)}(a_1) \otimes Z^{(p)}(a_2) \otimes \dots \otimes Z^{(p)}(a_m), \\
  \CZ^{(q)}(a) &= \CZ^{(p)} \otimes \CZ^{(p)} \otimes \dots \otimes \CZ^{(p)}, \\
  \CNOT^{(q)}(a) &= \CNOT^{(p)} \otimes \CNOT^{(p)} \otimes \dots \otimes \CNOT^{(p)}.
\end{align}
We will leave the construction of \(\CCZ^{(q)}\) using \(\CCZ^{(p)}\) to a later section.
For the case $q=2$, we omit the subscript/superscript.

\subsection{Punctured quantum Reed-Solomon code}
Fix \(q = 2^l\) and parameters \(k,m,s \in \mathbb{N}\) such that \(q/3 \ge k \ge m \ge s > 0\) and \(2k \le q-m\).
Fix an set of coordinates \(A \subseteq \Fq\) of size \(k\).
We further will introduce a new set of coordinates \(B \subseteq A\) such that \(|B| = s\).
Fix an evaluation set \(M = \{\alpha_i\}_{i=1}^{q-s} = \Fq \setminus B\).
\begin{definition}[Punctured quantum Reed-Solomon Code]\label{defn:pqrs}
    We begin by defining two sets of polynomials
    \begin{align}
        \mathcal{P}_1 &= \Fq[x]_{<k}, \\
        \mathcal{P}_2^\perp &= \{P \in \Fq[x]_{<m} \mid P(B) = 0\}.
    \end{align}
    Define \(C_1 = \phi_M(\mathcal{P}_1)\) to be the evaluations of \(\mathcal{P}_1\) on \(M\) and \(C_2^\perp= \phi_M(\mathcal{P}_2^\perp)\) to be the evaluations of \(\mathcal{P}_2^\perp\) on \(M\).
    The punctured quantum Reed-Solomon code is defined to be \(\mathcal{C} = \mathrm{CSS}(C_1, C_2)\).

    We will pick a basis parameterized by \(m=(m_1, m_2, \dots, m_{|A|}) \in \Fq^{|A|}\) defined in the following way:
    For any \(m\), the corresponding codeword \(c_m\) is \(\phi_M(p_m^{(A)})\), the evaluation of the corresponding message polynomial with systematic encoding positions \(A\).
    The code state \(\ket{\overline{m}}\) is then defined to be
    \begin{align}\label{eq:pqrs-codestates}
        \ket{\overline{m}} = \frac{1}{\sqrt{|C_2^\perp|}}\sum_{\alpha \in C_2^\perp} \ket{c_m + \alpha} = \ket{c_m + C^\perp_2}.
    \end{align}
\end{definition}

Before proving properties of this code, it is helpful to first compute \(C_2\):
\begin{lemma}\label{lemma:compute-C2}
    \(C_2\) is the evaluations of \(\mathcal{P}_2 = \Fq[x]_{<q-m}\) on \(M\).
\end{lemma}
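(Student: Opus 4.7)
The plan is to show the two-sided inclusion via one orthogonality computation plus a dimension count, in that order.

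First I would verify the containment $\phi_M(\mathcal{P}_2) \subseteq C_2$ by computing the standard inner product of an arbitrary element of $\phi_M(\mathcal{P}_2)$ with an arbitrary element of $C_2^\perp = \phi_M(\mathcal{P}_2^\perp)$. For $p \in \mathcal{P}_2 = \Fq[x]_{<q-m}$ and $p' \in \mathcal{P}_2^\perp$ (which has $\deg p' < m$ and vanishes on $B$), the key observation is that $p(\alpha)p'(\alpha) = 0$ for every $\alpha \in B$, so summing over $M$ is the same as summing over all of $\Fq$:
\begin{align*}
\langle \phi_M(p),\phi_M(p')\rangle \;=\; \sum_{\alpha\in M} p(\alpha)p'(\alpha) \;=\; \sum_{\alpha \in \Fq} p(\alpha)p'(\alpha).
\end{align*}
The degree of $pp'$ is at most $(q-m-1) + (m-1) = q-2 < q-1$, so \cref{prop:low-deg-sum} immediately gives that this sum vanishes. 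This establishes $\phi_M(\mathcal{P}_2) \subseteq (C_2^\perp)^\perp = C_2$.

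Next I would match dimensions. Using the standing assumptions $k\ge m\ge s$ and $k\le q/3$, any nonzero polynomial of degree $< m$ vanishing on $M$ would vanish on $|M| = q-s \ge m$ points, which forces it to be zero; hence $\phi_M|_{\mathcal{P}_2^\perp}$ is injective and $\dim_{\Fq}C_2^\perp = \dim_{\Fq}\mathcal{P}_2^\perp = m - s$ (a standard Vandermonde count on the $s$ vanishing conditions in $B$). Therefore $\dim_{\Fq} C_2 = |M| - \dim_{\Fq}C_2^\perp = (q-s)-(m-s) = q-m$. A parallel injectivity argument for $\phi_M|_{\mathcal{P}_2}$ (a polynomial of degree $< q-m$ vanishing on $q-s \ge q-m$ points must be zero, using $s\le m$) gives $\dim_{\Fq}\phi_M(\mathcal{P}_2) = q-m$. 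Combining with the inclusion from the previous paragraph yields the equality $C_2 = \phi_M(\mathcal{P}_2)$.

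The main obstacle I anticipate is just keeping the parameter inequalities straight: we must use $s\le m$ and $m+s\le q$ (both of which follow from $m\le k\le q/3$) to make both evaluation maps injective, and we must use the degree bound $\deg(pp')\le q-2$ rather than only $q-1$ to land strictly inside the regime where \cref{prop:low-deg-sum} gives zero. Once those bookkeeping checks are done, the proof is essentially the orthogonality computation above followed by the dimension count.
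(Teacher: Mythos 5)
Your proof is correct, and it differs from the paper's in how it establishes the reverse inclusion. Both arguments use the same two key moves for the easy direction: factor out the vanishing on $B$ so the sum over $M$ can be extended to a sum over all of $\Fq$, then invoke \cref{prop:low-deg-sum} on the degree bound $\deg(pp') \le q-2 < q-1$. The divergence is in how equality is finished. The paper characterizes the preimage $\mathcal{P}_2'$ of $C_2$ directly: it shows that any polynomial of degree in $[\,q-m,\,q-1-s\,]$ fails orthogonality by explicitly exhibiting a witness $Q'\in\Fq[x]_{<m-s}$ with $\deg(PQ'R)=q-1$ so \cref{prop:low-deg-sum} gives a \emph{nonzero} sum, and concludes that (up to the kernel of $\phi_M$) $\mathcal{P}_2'=\Fq[x]_{<q-m}$. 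You instead replace that witness construction with a dimension count: $\dim C_2 = |M| - \dim C_2^\perp = (q-s)-(m-s)=q-m$ matches $\dim\phi_M(\mathcal{P}_2)=q-m$, where both evaluation maps are injective because the parameter constraints $s\le m\le k\le q/3$ give $\max(m,\,q-m) \le q-s$. Your route is arguably cleaner and more robust --- it sidesteps a slight subtlety in the paper's argument for polynomial representatives of degree $\ge q-s$, where the witness trick no longer directly applies and the stated set equality $\mathcal{P}_2'=\Fq[x]_{<q-m}$ only holds after reducing modulo $\ker\phi_M$ --- while the paper's route has the minor advantage of not requiring you to verify injectivity of two evaluation maps. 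The parameter bookkeeping you flag as the main obstacle is exactly right and you have it correct.
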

\begin{proof}
    We first write the orthogonal complement of \(C_2^\perp\) as the evaluations of the set of polynomials:
    \begin{align}
        \mathcal{P}_2' = \{P \in \Fq[x]_{<q} \mid \forall Q \in \mathcal{P}_2^\perp~\sum_{\alpha \in M} P(\alpha) Q(\alpha) = 0\}.
    \end{align}
    Clearly, \(C_2 = \phi_M(\mathcal{P}_2')\).
    We will show that this set is \(\Fq[x]_{<q-m}\).
    First note that for any non-zero \(Q \in \mathcal{P}_2^\perp\), it is divisible by the polynomial \(R = \prod_{b \in B} (x-b)\), so there is a unique polynomial \(Q' \in \Fq[x]\) such that \(Q = Q' R\) and \(\deg Q' = \deg Q - s\).
    \(R\) is zero on \(B\), so we may extend the sum to write \(\mathcal{P}_2'\) as
    \begin{align}
        \mathcal{P}_2' &= \{P \in \Fq[x]_{<q} \mid \forall Q' \in \Fq[x]_{<m-s} ~\sum_{\alpha \in \Fq} P(\alpha) Q'(\alpha) R(\alpha)= 0\}.
    \end{align}
    In the condition, if \(\deg P \ge q-m\), then there exists some \(Q' \in \Fq[x]_{<m-s}\) such that \(\deg P + \deg Q' + s = q-1\), making the sum nonzero by \cref{prop:low-deg-sum}.
    Otherwise, the sum is zero, so 
    \begin{align}
        \mathcal{P}_2' = \Fq[x]_{<q-m} \equiv \mathcal{P}_2.
    \end{align}
\end{proof}

\begin{prop}\label{prop:pqrs-params}
    \(\mathcal{C}\) is a quantum CSS code with parameters 
    \begin{align*}
        [[q-s, k-m+s, \min(q-k, m)-s+1]]_q.
    \end{align*}
\end{prop}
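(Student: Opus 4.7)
The plan is to verify the three claimed parameters (length $n = q-s$, dimension $K = k-m+s$, and distance $d = \min(q-k, m)-s+1$) in sequence, using the polynomial descriptions from \cref{defn:pqrs} and \cref{lemma:compute-C2}. The length is immediate since $n = |M| = q-s$, and $\mathcal{C}$ is a well-defined CSS code because $\mathcal{P}_2^\perp \subseteq \Fq[x]_{<m} \subseteq \Fq[x]_{<k} = \mathcal{P}_1$ (using $m \le k$) gives $C_2^\perp \subseteq C_1$ after applying $\phi_M$.

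For the dimension, the key observation is that $\phi_M$ restricted to $\Fq[x]_{<\ell}$ is injective whenever $\ell \le |M| = q-s$, since a nonzero polynomial of degree $<\ell$ has at most $\ell - 1 < |M|$ roots. The hypotheses $q/3 \ge k \ge m$ and $2k \le q-m$ together imply $q-s \ge q-m \ge 2k \ge k \ge m$, so both $\phi_M|_{\mathcal{P}_1}$ and $\phi_M|_{\mathcal{P}_2^\perp}$ are injective. Hence $\dim_{\Fq} C_1 = \dim \mathcal{P}_1 = k$, and $\dim_{\Fq} C_2^\perp = \dim \mathcal{P}_2^\perp = m - s$ (a polynomial of degree $<m$ is determined by $m$ coefficients, with $s$ independent linear conditions from vanishing on $B$). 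The logical qudit count is $K = \dim C_1 - \dim C_2^\perp = k - m + s$.

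For the distance, I will use the standard CSS formula $d = \min(d_X, d_Z)$ with $d_Z = \min\{|c| : c \in C_1 \setminus C_2^\perp\}$ and $d_X = \min\{|c| : c \in C_2 \setminus C_1^\perp\}$. A nonzero element of $C_1$ is $\phi_M(P)$ with $P \in \Fq[x]_{<k}$, giving weight at least $|M| - (k-1) = q-k-s+1$; by \cref{lemma:compute-C2}, a nonzero element of $C_2$ arises from $P \in \Fq[x]_{<q-m}$ and has weight at least $m-s+1$. This yields $d \ge \min(q-k, m) - s + 1$. For the matching upper bound I must exhibit a weight-$(q-k-s+1)$ codeword in $C_1 \setminus C_2^\perp$ (a degree-$(k-1)$ polynomial with $k-1$ roots in $M$, whose degree exceeds the cutoff $m-1$ defining $\mathcal{P}_2^\perp$ whenever $k > m$, and whose restriction to $B$ is nontrivial in the boundary case $k = m$) and symmetrically a weight-$(m-s+1)$ codeword in $C_2 \setminus C_1^\perp$. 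The main obstacle is this coset achievability step: one must check, using the strict dimension gaps ensured by the parameter constraints (in particular $k > m - s$, so $\dim C_1 > \dim C_2^\perp$ and $\dim C_2 > \dim C_1^\perp$), that the minimum-weight codewords of $C_1$ and $C_2$ are not entirely absorbed into the stabilizer subcodes $C_2^\perp$ and $C_1^\perp$, respectively. Constructing explicit witnesses using products of linear factors $\prod_{\alpha \in S}(x-\alpha)$ for suitably chosen $S \subseteq M$ then closes the argument.
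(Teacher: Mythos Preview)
Your argument for validity and dimension is correct and matches the paper's (you give more detail than its one-line ``by dimension counting,'' but the content is the same). Your distance lower bound is also the same as the paper's: bound $d \ge \min(d(C_1), d(C_2))$ and observe that $C_1$, $C_2$ are punctured Reed--Solomon codes with distances at least $q-k-s+1$ and $m-s+1$.

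Where you diverge is the final paragraph. The paper stops at the lower bound and never proves that the distance is \emph{exactly} $\min(q-k,m)-s+1$; the $[[n,k,d]]$ notation here follows the common convention that $d$ is a guaranteed lower bound, and only this lower bound is used downstream (in \cref{thm:pqrs-good-code} and the distillation analysis, where all that matters is that the code corrects $\lfloor (d-1)/2 \rfloor$ errors). So your ``coset achievability'' step, while correctly sketched (the degree argument for $k>m$ and the root-location argument for $k=m$ both go through, and the symmetric case for $C_2 \setminus C_1^\perp$ works the same way once you identify $C_1^\perp = \phi_M(\{P \in \Fq[x]_{<q-k} : P(B)=0\})$), is extra work beyond what the paper does or needs. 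You can safely drop it and the proof is complete.
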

\begin{proof}
    We have the degree bounds \(k \le m\) so \(\mathcal{P}_2^\perp \subseteq \mathcal{P}_1\) implying \(C_2^\perp \subseteq C_1\). Thus \(\mathcal{C}\) is a valid quantum CSS code.
    By dimension counting, \(\mathcal{C}\) encodes \(k - m - s\) qubits.

    For the distance, recall that the distance of \(\mathrm{CSS}(C_1, C_2)\) is lower bounded by the minimum of the distances of \(C_1\) and \(C_2\).
    \(C_1\) is the puncturing of \(\text{RS}(q,k)\) on \(s\) coordinates and so has distance at least \((q - k + 1) - s\).
    Likewise, by \cref{lemma:compute-C2}, \(C_2\) is the puncturing of \(\mathrm{RS}(q,q-m)\) and has distance at least \((m+1)-s\).
\end{proof}

\subsubsection{\(\CCZ\) for punctured quantum Reed-Solomon codes}
\(\CCZ\) requires us to analyze the coordinate-wise product of codewords.
In particular, we will encounter products of the form \(\langle C_1\ast C_1, C_2^{\perp}\rangle\) that may cause the action of \(\CCZ^{\otimes n}\) to be non-constant on cosets of \(C_2^{\perp}\).
A sufficient condition for these products to vanish is given by the following lemma.
\begin{lemma}\label{lemma:C12-in-C2}
    \(C_1^{\ast 2} \subseteq C_2\).
\end{lemma}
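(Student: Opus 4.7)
The plan is to unwind the definitions: elements of $C_1$ are evaluations of polynomials of degree $<k$, so a coordinate-wise product of two such codewords is the evaluation of a product polynomial of degree $<2k-1$. The hypothesis $2k\le q-m$ in the setup of \cref{defn:pqrs} is precisely what is needed to place this product polynomial into $\mathcal{P}_2=\mathbb{F}_q[x]_{<q-m}$, at which point \cref{lemma:compute-C2} finishes the job.

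More concretely, I would proceed as follows. First, pick arbitrary $a,b\in C_1$. By \cref{defn:pqrs}, there exist $P_1,P_2\in\mathbb{F}_q[x]_{<k}$ with $a=\phi_M(P_1)$ and $b=\phi_M(P_2)$. Next, observe that the evaluation map is multiplicative in the sense that $\phi_M(P_1)\ast\phi_M(P_2)=\phi_M(P_1P_2)$ — this is immediate from the coordinate-wise definition of $\ast$ and of $\phi_M$. Then bound $\deg(P_1P_2)\le (k-1)+(k-1)=2k-2$. Using the hypothesis $2k\le q-m$ that appears in the preamble to \cref{defn:pqrs}, conclude $\deg(P_1P_2)<q-m$, hence $P_1P_2\in\mathcal{P}_2$. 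Finally, apply \cref{lemma:compute-C2}, which identifies $C_2=\phi_M(\mathcal{P}_2)$, to obtain $a\ast b=\phi_M(P_1P_2)\in C_2$.

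There is essentially no obstacle: the inclusion is a one-line degree count once the definitions are unpacked, and the numerical condition $2k\le q-m$ was imposed in the definition for exactly this purpose. The only mildly subtle point worth flagging in the write-up is the multiplicativity $\phi_M(P_1)\ast\phi_M(P_2)=\phi_M(P_1P_2)$, which is trivial but should be stated explicitly so the reader sees that the $\ast$-product on $C_1$ is literally polynomial multiplication followed by evaluation.
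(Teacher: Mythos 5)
Your proof is correct and is essentially identical to the paper's: both decompose the two codewords into their degree-$<k$ message polynomials via the evaluation map, observe that the star product corresponds to polynomial multiplication, bound the product degree by $2k-2<q-m$, and conclude via \cref{lemma:compute-C2} that the result lies in $C_2$.
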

\begin{proof}
    Fix two codewords \(c_1, c_2 \in C_1\), and let \(p_1, p_2 \in \mathcal{P}_1\) be the corresponding polynomials such that \(p_1 = \phi_M^{-1}(c_1)\) and \(p_2 = \phi_M^{-1}(c_2)\).
    Then, \(c_1\ast c_2=\phi_M(p_1 p_2)\) is the evaluation of \(p_1 p_2\) on \(M\), and \(\deg (p_1 p_2) \le \deg p_1 + \deg p_2 < 2k\).
    By assumption on the parameters, \(2k \le q-m\).
    It follows that \(p_1 p_2 \in \mathcal{P}_2\), so \(c_1\ast c_2\) is an element of \(C_2\) by \cref{lemma:compute-C2}.
\end{proof}

Before proceeding to multiplication of codewords of \(C_1\), it is convenient to first start with multiplication of the interpolation polynomials.
\begin{lemma}\label{prop:interpolation-poly-ortho}
    For three interpolation polynomials
    \begin{align}
        \sum_{x \in M} \ell_{a}^{(A)}(x) \ell_{b}^{(A)}(x) \ell_{c}^{(A)}(x) =
        \begin{cases}
          -1 & a = b = c \in B\\
          0 & \text{otherwise}
        \end{cases}.
    \end{align}
\end{lemma}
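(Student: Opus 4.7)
The plan is a direct computation using Proposition~\ref{prop:low-deg-sum}. Let $P(x) = \ell_a^{(A)}(x)\, \ell_b^{(A)}(x)\, \ell_c^{(A)}(x)$. Since each $\ell_i^{(A)}$ has degree at most $|A|-1 = k-1$, we have $\deg P \le 3(k-1)$. The hypothesis $k \le q/3$ gives $3(k-1) = 3k - 3 \le q - 3 < q - 1$, so Proposition~\ref{prop:low-deg-sum} yields
\begin{align*}
\sum_{x \in \mathbb{F}_q} P(x) = 0.
\end{align*}

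Next, I would split this sum over $\mathbb{F}_q = M \sqcup B$, obtaining
\begin{align*}
\sum_{x \in M} P(x) = -\sum_{x \in B} P(x),
\end{align*}
and analyze the right-hand side using the defining property of the Lagrange interpolation polynomials: since $B \subseteq A$, for each $\alpha_j \in B$, $\ell_i^{(A)}(\alpha_j) = \delta_{ij}$. Hence $P(\alpha_j) = \delta_{aj}\,\delta_{bj}\,\delta_{cj}$, so the only way for a term in $\sum_{x \in B} P(x)$ to be nonzero is to have $a = b = c = j$ with $\alpha_j \in B$, in which case the summand equals $1$.

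Combining the two observations, if $a = b = c \in B$ then $\sum_{x \in B} P(x) = 1$, giving $\sum_{x \in M} P(x) = -1$; otherwise $\sum_{x \in B} P(x) = 0$, giving $\sum_{x \in M} P(x) = 0$. This matches the claim. There is no real obstacle here — the only subtlety is ensuring the degree bound on $P$ is strictly less than $q-1$, which is guaranteed by the global parameter hypothesis $k \le q/3$.
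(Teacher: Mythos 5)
Your proof is correct and follows essentially the same route as the paper: use Proposition~\ref{prop:low-deg-sum} to get that the full sum over $\mathbb{F}_q$ vanishes, split $\mathbb{F}_q = M \sqcup B$, and evaluate the residual sum over $B$ via the Kronecker-delta property of Lagrange interpolation polynomials on $A \supseteq B$. The only difference is that you spell out the degree bound $3(k-1) \le q-3 < q-1$ explicitly where the paper just asserts it, which is a minor improvement in precision.
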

\begin{proof}
    By construction, the product has degree strictly less than \(q-1\), so it follows from \cref{prop:low-deg-sum} that
    \begin{align}
        \sum_{x \in \Fq} \ell_{a}^{(A)}(x) \ell_{b}^{(A)}(x) \ell_{c}^{(A)}(x) = 0~.
    \end{align}
    This allows us to break up the sum into the evaluated coordinates and the removed coordinates for which we have good control over the values of the interpolation polynomials:
    \begin{align}
        \sum_{x \in M} \ell_{a}^{(A)}(x) \ell_{b}^{(A)}(x) \ell_{c}^{(A)}(x) = -\sum_{x \in B} \ell_{a}^{(A)}(x) \ell_{b}^{(A)}(x) \ell_{c}^{(A)}(x).
    \end{align}
    By construction of the interpolation polynomials, if \(a=b=c\in B\), the summand vanishes on all elements of \(B\) except for one element where it takes the value \(1\).
    In any other case, the product is identically zero on \(B\).
\end{proof}

This straightforwardly allows us to multiply codewords.
\begin{cor}\label{lemma:prod-codewords}
    Let \(p_{m_1}^{(A)}(x)\), \(p_{m_2}^{(A)}(x)\), and \(p_{m_3}^{(A)}(x)\) be three message polynomials with the corresponding codewords of \(C_1\) denoted by \(c_1\), \(c_2\), and \(c_3\).
    Then
    \begin{align}
        \sum_{i \in M} (c_1)_i (c_2)_i (c_3)_i = -\sum_{i \in B} (m_1)_i (m_2)_i (m_3)_i.
    \end{align}
\end{cor}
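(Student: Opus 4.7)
The plan is to derive this corollary as a direct linear-algebraic consequence of the interpolation-polynomial orthogonality proved in Lemma \ref{prop:interpolation-poly-ortho}. First I would unfold the definitions: for each $j \in \{1,2,3\}$ and each $i \in M$, write
\[
(c_j)_i \;=\; p_{m_j}^{(A)}(i) \;=\; \sum_{a \in A} (m_j)_a\, \ell_{a}^{(A)}(i),
\]
using Definition \ref{defn:message-poly}. Substituting into the left-hand side of the claim and swapping the finite sums (everything is a polynomial identity over $\Fq$, so no convergence issues arise) gives
\[
\sum_{i \in M} (c_1)_i (c_2)_i (c_3)_i \;=\; \sum_{a,b,c \in A} (m_1)_a (m_2)_b (m_3)_c \left(\sum_{i \in M} \ell_a^{(A)}(i)\, \ell_b^{(A)}(i)\, \ell_c^{(A)}(i)\right).
\]

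The next step is to apply Lemma \ref{prop:interpolation-poly-ortho} to the inner parenthesized sum. By that lemma, the inner sum vanishes unless $a = b = c$ and this common value lies in $B$, in which case it equals $-1$. Thus only the diagonal terms indexed by $i \in B$ survive, and each contributes $-(m_1)_i (m_2)_i (m_3)_i$, giving
\[
\sum_{i \in M} (c_1)_i (c_2)_i (c_3)_i \;=\; -\sum_{i \in B} (m_1)_i (m_2)_i (m_3)_i,
\]
which is exactly the desired identity.

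There is no real obstacle here: once Lemma \ref{prop:interpolation-poly-ortho} is proved, this corollary is just bookkeeping via bilinearity (trilinearity) of the sum and the systematic-encoding property that $(m_j)_a$ corresponds to $p_{m_j}^{(A)}(a)$. The only minor subtlety worth flagging in the write-up is to emphasize why the sum collapses to an index set in $B$ rather than all of $A$, which traces back to the fact that the relevant interpolation products have degree strictly less than $q-1$ so that $\sum_{x \in \Fq}$ vanishes, and then the values on $B$ (which was removed from $M$) exactly capture the missing contribution — this was the content of the preceding lemma and needs no repetition.
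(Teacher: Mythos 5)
Your proof is correct and takes essentially the same route as the paper: expand each $p_{m_j}^{(A)}$ in the Lagrange basis, swap the order of summation, and apply Lemma~\ref{prop:interpolation-poly-ortho} to collapse the triple sum to the diagonal indexed by $B$. The paper's own proof is simply a terser version of exactly this bookkeeping.
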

\begin{proof}
    We write
    \begin{align}
        \sum_{i \in M} (c_1)_i (c_2)_i (c_3)_i = \sum_{x \in M} p_{m_1}^{(A)}(x) p_{m_2}^{(A)}(x) p_{m_3}^{(A)}(x),
    \end{align}
    and then use \cref{defn:message-poly} to expand the message polynomials in terms of the interpolation polynomials.
    The result follows after application of \cref{prop:interpolation-poly-ortho}.
\end{proof}

\begin{prop}\label{prop:logical-ccz}
    For three codestates \(\ket{\overline{m}_1}\), \(\ket{\overline{m}_2}\), and \(\ket{\overline{m}_3}\),
    \begin{align*}
        \CCZ^{\otimes n} \ket{\overline{m}_1}\ket{\overline{m}_2}\ket{\overline{m}_3} &= (-1)^{\tr \langle m_1^{(B)} \ast m_2^{(B)}, m_3^{(B)}\rangle} \ket{\overline{m}_1}\ket{\overline{m}_2}\ket{\overline{m}_3}\\
        &\equiv \overline{\CCZ}^{(B)} \ket{\overline{m}_1}\ket{\overline{m}_2}\ket{\overline{m}_3}.
    \end{align*}
    Where, \(i \in {1,2,3}\), \(m_i^{(B)}\) refers to the restriction of \(m_i\) to the coordinates in \(B\), and \(\overline{\CCZ}^{(B)}\) refers to a logical \(\CCZ\) acting coordinate-wise on the logical qubits labeled by coordinates of \(B\).
\end{prop}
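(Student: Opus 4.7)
The plan is to expand each codestate using \cref{eq:pqrs-codestates} as a uniform superposition over cosets of $C_2^\perp$, apply $\CCZ^{\otimes n}$ coordinate-wise to pick up a phase, and then show that the resulting phase is constant across the coset sum and hence may be pulled out. Concretely, writing the tensor product of the three codestates as $|C_2^\perp|^{-3/2}\sum_{\alpha_1,\alpha_2,\alpha_3\in C_2^\perp}\ket{c_{m_1}+\alpha_1}\ket{c_{m_2}+\alpha_2}\ket{c_{m_3}+\alpha_3}$, transversal $\CCZ$ applies the phase $(-1)^{\tr\sum_i (c_{m_1}+\alpha_1)_i(c_{m_2}+\alpha_2)_i(c_{m_3}+\alpha_3)_i}$ (recall $p=2$ so $\omega=-1$). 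Expanding the trilinear form produces eight terms, and the goal is to show that only the pure term $\sum_i c_{m_1,i}c_{m_2,i}c_{m_3,i}$ contributes nontrivially to the trace.

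The key step is to verify that each of the seven ``cross terms'' lies in the kernel of $\tr$ identically in the $\alpha_j$. I would handle them by the $C_1/C_2$ duality: terms with a single $\alpha_j$, say $\langle \alpha_1, c_{m_2}\ast c_{m_3}\rangle$, vanish because $c_{m_2}\ast c_{m_3}\in C_1^{\ast 2}\subseteq C_2$ by \cref{lemma:C12-in-C2}, and $\alpha_1\in C_2^\perp$. Terms with two $\alpha$'s, say $\langle \alpha_1\ast\alpha_2, c_{m_3}\rangle$, require showing $(C_2^\perp)^{\ast 2}\subseteq C_1^\perp$: the product polynomial has degree $<2m$ and vanishes on $B$, so multiplying against any $P\in\Fq[x]_{<k}$ gives a polynomial that vanishes on $B$ and has degree at most $2m+k-2\le 3k-2\le q-2<q-1$, hence sums to zero over $\Fq$ by \cref{prop:low-deg-sum}, and therefore over $M$. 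The purely-$\alpha$ term $\sum_i(\alpha_1\ast\alpha_2\ast\alpha_3)_i$ likewise vanishes: the product polynomial has degree $<3m\le 3k\le q-1$ and vanishes on $B$, so the sum over $M$ equals the sum over $\Fq$, which is $0$ by \cref{prop:low-deg-sum}.

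Once all cross terms have been discarded, the remaining phase is $(-1)^{\tr\sum_i c_{m_1,i}c_{m_2,i}c_{m_3,i}}$, and applying \cref{lemma:prod-codewords} evaluates this sum as $-\langle m_1^{(B)}\ast m_2^{(B)},m_3^{(B)}\rangle$. Because we are in characteristic $2$, the minus sign inside the trace is immaterial, and we recover the claimed phase $(-1)^{\tr\langle m_1^{(B)}\ast m_2^{(B)},m_3^{(B)}\rangle}$, which acts as coordinate-wise logical $\CCZ$ on the logical qudits indexed by $B$.

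The main obstacle is organizing the degree bookkeeping so that every cross term really does fall under the hypothesis of \cref{prop:low-deg-sum}; this is where the parameter constraints $k\le q/3$ and $2k\le q-m$ (together with $m\le k$) enter, and must be invoked uniformly. A secondary subtlety is that the phase $-1$ appearing in \cref{lemma:prod-codewords} coincides with $+1$ only because $q$ is a power of $2$; this is consistent with the fact that the proposition is stated for the binary case, and I would note explicitly that the construction generalizes to odd characteristic only after tracking the cube roots correctly.
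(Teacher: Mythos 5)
Your proof is correct, but it handles the cross terms differently from the paper. Where the paper dispatches all seven cross terms with a single observation, you prove two additional degree-counting claims. The point you are missing is that $\mathcal{P}_2^\perp\subseteq\mathcal{P}_1$ (from the parameter constraint $m\le k$) gives $C_2^\perp\subseteq C_1$, so every one of $c_{m_1},c_{m_2},c_{m_3},\alpha,\beta,\gamma$ lives in $C_1$. Combined with the full symmetry of the trilinear form $\langle x\ast y,z\rangle=\sum_i x_iy_iz_i$, any cross term can be regrouped so that one $\alpha$-type factor is isolated in the second slot: for example $\langle\alpha\ast\beta,c_3\rangle=\langle\alpha\ast c_3,\beta\rangle$ with $\alpha\ast c_3\in C_1^{\ast 2}\subseteq C_2$ and $\beta\in C_2^\perp$, so it vanishes by Lemma~\ref{lemma:C12-in-C2} alone; likewise $\langle\alpha\ast\beta,\gamma\rangle=0$ since $\alpha\ast\beta\in C_1^{\ast 2}\subseteq C_2$. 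This makes the auxiliary claim $(C_2^\perp)^{\ast 2}\subseteq C_1^\perp$ and the separate treatment of the triple-$\alpha$ term unnecessary. Your degree arguments are nonetheless sound (they essentially rederive a special case of the vanishing from Proposition~\ref{prop:low-deg-sum}), and the small overcount of $2m+k-2$ in place of $2m+k-3$ for the degree does not affect the bound $<q-1$. The rest of the proof — reduction to the pure $c_1\ast c_2\ast c_3$ term, Corollary~\ref{lemma:prod-codewords}, and absorption of the sign by characteristic $2$ — matches the paper.
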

\begin{proof}
    First we expand the definition of codestates (\cref{eq:pqrs-codestates}) into a sum over elements of \(C_2^\perp\). Writing \(c_i\) for the codeword of \(C_1\) corresponding to the evaluations of \(p_{m_i}(x)\), we have
    \begin{align}
        \CCZ^{\otimes n} \ket{\overline{m}_1}\ket{\overline{m}_2}\ket{\overline{m}_3} &\propto \sum_{\alpha,\beta,\gamma \in C_2^\perp} (-1)^{\tr\langle (c_1+\alpha)\ast (c_2 + \beta), (c_3+\gamma)\rangle}\ket{c_1 + \alpha} \ket{c_2 + \beta} \ket{c_3 + \gamma}.
    \end{align}
    After distributing star product, using linearity of the inner product, and \cref{lemma:C12-in-C2}, we find that the phase does not depend on \(\alpha\), \(\beta\), or \(\gamma\); it is constant on cosets of \(C_2^\perp\). We can then invoke \cref{lemma:prod-codewords}.
    \begin{align}
        \tr\langle (c_1+\alpha)\ast (c_2 + \beta), (c_3+\gamma)\rangle &= \tr\langle c_1\ast c_2, c_3\rangle\\
                                                                      &= -\tr \langle m_1^{(B)} \ast m_2^{(B)}, m_3^{(B)}\rangle.
    \end{align}
    The result follows after using the field characteristic.
\end{proof}

\subsubsection{Proof of \cref{thm:pqrs-good-code}}\label{sec:pqrs-code-thm}
We are now ready to assembly the previous results.
\thmPqrsGoodCode*
\begin{proof}
    The code is as defined in  \cref{defn:pqrs} with 
    \begin{align*}
        k &= m = \lfloor q/3 \rfloor, \\
        s &= q/4.
    \end{align*}
    These choices satisfy \(q/3 \ge k \ge m \ge s > 0\) and \(2k \le q-m\) when \(q = 2^l\) with \(l \ge 2\).
    The code parameters are proven in \cref{prop:pqrs-params} and the logical CCZ is proven in \cref{prop:logical-ccz}.
\end{proof}

\subsection{Distillation of qubit \(\ket{\CCZ}\)}\label{sec:distill-ccz-cczq}
Having established \cref{thm:pqrs-good-code}, we will use this to construct extremely efficient magic state distillation for qubits.
The magic state we would like to distill is \(\ket{\CCZ} = \CCZ\ket{+}\ket{+}\ket{+}\).
As in the previous section, let \(q = 2^l\).
Fix a self-dual basis \(N = \{\alpha_i\}_{i \in [l]}\) for \(\Fq\) over \(\F\).

\subsubsection{Qubits as \(\Fq\)-qudits}

Our codes have only transversal \(\CCZ^{(q)}\), so we will need a means to implement the operation using \(\ket{\CCZ}\) and to convert the final output states to \(\ket{\CCZ}\).

Define the states
\begin{align}
  \ket{+}_q = \frac{1}{\sqrt{q}}\sum_{x \in \Fq} \ket{x}_q, & & \ket{\CCZ^{(q)}}_q = \CCZ^{(q)}\ket{+}_q\ket{+}_q\ket{+}_q.
\end{align}

The conversion of \(\ket{\CCZ^{(q)}}_q\) to \(\ket{\CCZ}\) is straightforward in the appropriate basis.
We first begin by constructing a circuit to change the basis of the field extension.
This will give us greater flexibility when implementing operations later.
\begin{prop}[Change of basis]\label{lemma:F2n-basis-change}
  For two bases \(N = \{\alpha_i\}_{i \in [l]}\) and \(M = \{\beta_{i}\}_{i \in [l]}\) for \(\Fq\) over \(\F\), there is a unitary that maps
  \begin{align}
    \ket{x}^{(N)}  \mapsto \ket{x}^{(M)}
  \end{align}
  for all \(x \in \Fq\) using depth \(O(l)\), \(O(l^2)\) \(\CNOT\) gates, and \(l\) input ancilla qubits initialized to \(\ket{0}\).
\end{prop}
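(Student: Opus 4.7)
The plan is to reduce the change of basis to the implementation of a fixed invertible $\mathbb{F}_2$-linear map on $l$ qubits, which can then be realized by two rounds of CNOTs with the help of the ancilla register. First I would let $T \in GL_l(\mathbb{F}_2)$ denote the unique change-of-basis matrix determined by the identities $\alpha_i = \sum_j T_{ji}\, \beta_j$, so that for every $x \in \mathbb{F}_q$ the coordinate vectors are related by $x_i^{(M)} = \sum_j T_{ij}\, x_j^{(N)}$. Invertibility of $T$ is immediate because both $N$ and $M$ are bases. The target unitary is then precisely the classical linear reversible transformation $(x,0) \mapsto (Tx,0)$ on a data register $D$ of $l$ qubits and an ancilla register $A$ of $l$ qubits initialized to $\ket{0}^{\otimes l}$.

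Next I would implement this in two CNOT rounds. In round 1, for every pair $(i,j)$ with $T_{ij} = 1$, apply $\CNOT$ with control $D_j$ and target $A_i$; since all these CNOTs share a common ``sources in $D$, targets in $A$'' bipartition, they commute, and collectively they compute the map $(x, 0) \mapsto (x, Tx)$. In round 2, for every pair $(i,j)$ with $(T^{-1})_{ji} = 1$, apply $\CNOT$ with control $A_i$ and target $D_j$; these likewise commute, and the resulting update to $D_j$ is $x_j \oplus \sum_i (T^{-1})_{ji}(Tx)_i = x_j \oplus x_j = 0$ by the identity $T^{-1}T = I$. Thus the joint state becomes $(0, Tx)$. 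A final transversal $\SWAP$ between $D$ and $A$ (depth-$3$ in CNOTs) returns the ancillas to $\ket{0}^{\otimes l}$ and places $\ket{Tx} = \ket{x}^{(M)}$ in the data register.

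The CNOT count is immediate: each round applies at most $l^2$ CNOTs, for a total of $O(l^2)$. The main (mild) obstacle is the depth bound. In each round the CNOTs form a bipartite (multi)graph between $D$ and $A$ in which each vertex has degree at most $l$, so by K\"onig's edge-coloring theorem the edges admit a proper edge coloring using at most $l$ colors; executing all CNOTs of one color in parallel yields a schedule of depth at most $l$ per round, and hence $O(l)$ overall including the concluding swap. This gives the required $O(l)$ depth, $O(l^2)$ gate count, and $l$-qubit ancilla budget.
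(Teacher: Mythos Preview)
Your proof is correct and follows essentially the same approach as the paper: both compute the change-of-basis matrix, then use one round of \(\CNOT\)s from the data register into the ancilla to compute the transformed coordinates, followed by a second round from the ancilla back into the data register using the inverse matrix to clear it. You add an explicit final \(\SWAP\) and justify the \(O(l)\) depth via K\"onig's edge-coloring theorem, whereas the paper simply leaves the output in the ancilla register and asserts the depth bound without elaboration; these are minor cosmetic differences.
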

\begin{proof}
  \(N\) and \(M\) are bases for \(\Fq\) over \(\F\), so there exists an invertible matrix \(A \in \F^{l \times l}\) such that for \(x \in \Fq\),
  \begin{align*}
    x^{(M)}_j = \sum_{i \in [l]} A_{ij} x_i^{(N)}.
  \end{align*}
  The initial state is \(\ket{x}^{(N)}\ket{0}^{\otimes l}\).
  For every non-zero entry \((i,j) \in [l]\times [l]\) of \(A\) perform a \(\CNOT\) controlled on the \(j\)-th qubit of the first register and targeted on the \(i\)-th qubit of the second register.
  This maps \(\ket{x}\ket{0}^{\otimes l} \mapsto \ket{x}\ket{y}\).
  Then, for every non-zero entry \((i,j) \in [l]\times [l]\) of \(A^{-1}\) perform a \(\CNOT\) controlled on the \(j\)-th qubit of the second register and targeted on the \(i\)-th qubit of the first register.
  This maps \(\ket{x}\ket{y} \mapsto \ket{x + A^{-1} y} \ket{y} = \ket{0}^{\otimes l}\ket{y}\).
\end{proof}

\begin{lemma}\label{lemma:ccz-naive-reduction}
  Let \(M = \{\alpha_i\}_{i\in[l]}\) be a basis for \(\Fq\) over \(\F\) such that \(\alpha_1 = 1\).
  Then, there exists a circuit to convert \(\ket{\CCZ^{(q)}}_q^{(M)}\) to \(\ket{\CCZ}\) using \(3(l-1)\) measurements and \(3\) classically controlled \(\CZ\) and \(Z\) gates.
\end{lemma}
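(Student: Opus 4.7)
The plan is to perform $Z$-basis measurements on the ``subsidiary'' qubits of each of the three registers and cancel the resulting phases with a few local Clifford corrections. Since $\alpha_1 = 1$, each $\ket{+}_q^{(M)}$ factors as $\ket{+}^{\otimes l}$, so the initial state is
\begin{equation*}
\ket{\CCZ^{(q)}}_q^{(M)} \;=\; \frac{1}{q^{3/2}} \sum_{x,y,z\in\FF_q} (-1)^{\tr(xyz)} \, \ket{x}_q^{(M)}\ket{y}_q^{(M)}\ket{z}_q^{(M)}
\end{equation*}
on three length-$l$ qubit registers $A,B,C$. I first measure all qubits $A_i,B_i,C_i$ for $i\in\{2,\ldots,l\}$ in the $Z$-basis, spending the full budget of $3(l-1)$ measurements; the outcomes determine three field elements $A'=\sum_{i\ge 2} a_i\alpha_i$, $B'=\sum_{i\ge 2} b_i\alpha_i$, $C'=\sum_{i\ge 2} c_i\alpha_i$.

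Using $\alpha_1 = 1$, the residual state on the three ``first'' qubits $(A_1,B_1,C_1)$ is proportional to
\begin{equation*}
\sum_{x_1,y_1,z_1 \in \F} (-1)^{\tr((x_1+A')(y_1+B')(z_1+C'))} \ket{x_1}\ket{y_1}\ket{z_1}.
\end{equation*}
Expanding the phase by $\F$-trilinearity of the form $(a,b,c)\mapsto \tr(abc)$ produces exactly eight terms: a cubic $\tr(1)\, x_1 y_1 z_1$, three quadratic terms with coefficients $\tr(A'),\tr(B'),\tr(C')$, three linear terms with coefficients $\tr(A'B'), \tr(A'C'), \tr(B'C')$, and the constant $\tr(A'B'C')$. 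The constant is an irrelevant global sign; each quadratic term is killed by one classically-controlled $\CZ$ on the corresponding pair of qubits in $\{A_1,B_1,C_1\}$, and each linear term by one classically-controlled $Z$ on the appropriate single qubit. All six control bits are (low-degree) polynomial functionals of the measurement record, computable in $O(1)$ classical depth. This realizes exactly the $3$ classically-controlled $\CZ$ and $3$ classically-controlled $Z$ operations budgeted by the lemma.

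After the corrections the three surviving qubits are in $\sum_{x_1,y_1,z_1}(-1)^{\tr(1)\, x_1 y_1 z_1}\ket{x_1}\ket{y_1}\ket{z_1}$, and it then remains to check that this is $\ket{\CCZ}$ up to normalization. The nontrivial point, and what I expect to be the \emph{main obstacle}, is the field identity $\tr(1) = l \bmod 2$ in $\FF_{2^l}/\F$: the argument as laid out immediately produces $\ket{\CCZ}$ only when $l$ is odd. For even $l$ the cubic term collapses and the ``naive'' reduction by itself fails; resolving this likely requires a small preliminary basis change moving the role of $\alpha_1 = 1$ to some $\alpha$ with $\tr(\alpha^3)=1$ (which exists for all $l\ge 3$ since $\alpha\mapsto \tr(\alpha^3)$ is not identically zero on $\FF_{2^l}$), or equivalently keeping a different qubit of each register as the surviving one. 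Either adjustment preserves the counts $3(l-1)$ measurements and $3$ classically-controlled $\CZ$/$Z$ corrections, giving the lemma uniformly for all $l\ge 3$.
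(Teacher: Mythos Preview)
Your overall approach matches the paper's: measure all but one qubit per register and then correct the residual quadratic and linear phases with classically controlled $\CZ$ and $Z$. You also correctly isolate the obstruction for even $l$, namely $\tr(1)=l\bmod 2=0$.

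Where you diverge from the paper is in the fix. Your first suggestion, a preliminary basis change, introduces $\CNOT$ gates and ancillas that are not part of the stated budget (the lemma is phrased for a \emph{fixed} basis $M$ with $\alpha_1=1$). Your second suggestion---keeping the same qubit $j$ in all three registers---produces the cubic coefficient $\tr(\alpha_j^3)$, and your justification only shows that $\alpha\mapsto\tr(\alpha^3)$ is not identically zero on the \emph{field}; it does not show that some \emph{basis element} $\alpha_j$ of the given $M$ satisfies $\tr(\alpha_j^3)=1$.

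The paper's resolution is to break the symmetry across registers: keep qubit $1$ (coefficient $\alpha_1=1$) in the first two registers but qubit $j$ in the third, where $j$ is chosen so that $\tr(\alpha_j)=1$. Such a $j$ always exists because $\tr$ is a nonzero $\F$-linear functional and hence cannot vanish on every element of a basis. The surviving cubic term is then $\tr(1\cdot 1\cdot\alpha_j)\,x_1y_1z_j=x_1y_1z_j$, and all remaining terms are at most quadratic in $(x_1,y_1,z_j)$, so your correction argument goes through unchanged with exactly $3(l-1)$ measurements and $3$ classically controlled $\CZ$ and $Z$ gates, uniformly in $l$.
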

\begin{proof} Since $\{\alpha_i\}_{i \in [l]}$ form a basis, there exists a $j$ such that $\tr(\alpha_j)=1$ (if $l$ is odd then $\alpha_1=1$ works).
  Using \(\F\)-linearity of the trace, for \(x,y,z \in \Fq\), we can separate the product into 
  \begin{align}
    \tr(xyz) = x_1y_1z_j + \tr(\dots),
  \end{align}
  where the omitted terms on the right are at most quadratic in \(x_1\),\(y_1\), and \(z_j\).
  Then, starting with the state
  \begin{align}
    \ket{\CCZ^{(q)}}^{(M)}_q = \sum_{x,y,z \in \Fq} (-1)^{\tr(xyz)}\ket{x}^{(M)}\ket{y}^{(M)}\ket{z}^{(M)},
  \end{align}
  the computational basis measurement of the last \(l-1\) qubits in each of the first two registers and of the qubits $[l]\backslash\{j\}$ in the third register gives the measurement outcomes \(\{x_i\}_{i=2}^l\), \(\{y_i\}_{i=2}^l\), and \(\{z_i\}_{i \neq j}\).
  The post measurement state is
  \begin{align}
      \frac{1}{2^{3/2}} \sum_{x_1,y_1,z_j \in \F} (-1)^{x_1 y_1 z_j + p(x_1,y_1,z_j)} \ket{x_1}\ket{y_1}\ket{z_j},
  \end{align}
  where \(p(x_1,y_1,z_j)\) is a polynomial over \(\F\) that is at most quadratic in  \(x_1\),\(y_1\), and \(z_j\).
  The coefficients can be classically computed from the measurement outcomes, so the phases can be fixed by application of \(\CZ\) and \(Z\) such that the final state is \(\ket{\CCZ}\).
\end{proof}

We can also implement \(\CCZ^{(q)}\) using qubit $\CCZ$ gates.
\begin{lemma}\label{lemma:naive-ccz}
    \(\CCZ^{(q)}\) has an implementation using at most \(l^3\) \(\CCZ\) gates.
\end{lemma}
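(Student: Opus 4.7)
The plan is to expand the phase $\tr(xyz)$ as an $\mathbb{F}_2$-multilinear form in the individual qubit coordinates under the chosen basis, and then realize each nonzero monomial by a single qubit-level $\CCZ$.

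Fix the basis $N = \{\alpha_i\}_{i\in[l]}$ used to represent each $\Fq$-qudit. For $x,y,z \in \Fq$, write $x = \sum_{i} x_i \alpha_i$, $y = \sum_{j} y_j \alpha_j$, $z = \sum_{k} z_k \alpha_k$ with $x_i,y_j,z_k \in \F$. Using $\F$-linearity of $\tr$, define the constants $c_{ijk} := \tr(\alpha_i \alpha_j \alpha_k) \in \F$ and compute
\begin{align*}
  \tr(xyz) = \tr\!\left(\sum_{i,j,k} x_i y_j z_k \,\alpha_i \alpha_j \alpha_k\right) = \sum_{i,j,k \in [l]} c_{ijk}\, x_i y_j z_k.
\end{align*}
The coefficients $c_{ijk}$ depend only on the basis $N$ and can be precomputed.

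Now implement $\CCZ^{(q)}$ as follows. Let $X_i$, $Y_j$, $Z_k$ denote the $i$-th, $j$-th, $k$-th physical qubit of the first, second, and third encoded qudit, respectively. For each triple $(i,j,k) \in [l]^3$ with $c_{ijk} = 1$, apply a qubit $\CCZ$ gate to the triple $(X_i, Y_j, Z_k)$. Since the three qudits occupy disjoint registers of $l$ qubits each, all such triples consist of qubits on three distinct qudits (though they may share qubits across different triples); in any case the $\CCZ$ gates mutually commute on the computational basis. On basis state $\ket{x}^{(N)}\ket{y}^{(N)}\ket{z}^{(N)}$ the product of these $\CCZ$'s imparts the phase
\begin{align*}
  \prod_{(i,j,k) : c_{ijk}=1} (-1)^{x_i y_j z_k} \;=\; (-1)^{\sum_{i,j,k} c_{ijk} x_i y_j z_k} \;=\; (-1)^{\tr(xyz)},
\end{align*}
which is precisely the action of $\CCZ^{(q)}$. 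The total number of qubit $\CCZ$ gates is at most $|[l]^3| = l^3$.

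There is no real obstacle here beyond writing down the identity: the claim is a direct consequence of the $\F$-multilinearity of $\tr$ on products of basis elements, together with the fact that qubit $\CCZ$ exactly realizes each cubic monomial in the phase. The bound $l^3$ is the trivial count of triples and can in principle be tightened by exploiting symmetries of $c_{ijk}$, but such improvements are not needed for the lemma as stated.
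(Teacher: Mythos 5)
Your proof is correct and takes essentially the same approach as the paper: expand $\tr(xyz)$ via $\F$-linearity into the cubic form $\sum_{i,j,k} \tr(\alpha_i\alpha_j\alpha_k)\,x_i y_j z_k$ and apply one qubit $\CCZ$ per triple with nonzero coefficient, giving at most $l^3$ gates.
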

\begin{proof}
    For \(x,y,z \in \Fq\) consider 
    \begin{align}
        \CCZ^{(q)} \ket{x}^{(N)}\ket{y}^{(N)}\ket{z}^{(N)} = (-1)^{\tr(xyz)}\ket{x}^{(N)}\ket{y}^{(N)}\ket{z}^{(N)}.
    \end{align}
    \(\tr(xyz)\) is a degree-3 polynomial in the coefficients of the basis expansion.
    Let \(N = \{\alpha_i\}_{i \in [l]}\).
    Then we may expand \(\tr(xyz)\) as 
    \begin{align}
        \tr(xyz) = \sum_{i,j,k \in [l]} x^{(N)}_i y^{(N)}_j z^{(N)}_k \tr(\alpha_i \alpha_j \alpha_k).
    \end{align}
    For every triple \((i,j,k)\) for which \(\tr(\alpha_i \alpha_j \alpha_k)\) is non-zero, we perform \(\CCZ\) between the \(i\)-th qubit of the \(x\) register, the \(j\)-th qubit of the \(y\) register, and the \(k\)-th qubits of the \(z\) register.
\end{proof}

\subsubsection{Single-step state distillation}
We begin by giving the construction for a single round of magic state distillation.
Let \(N\) be a self-dual basis, and \(M\) be a polynomial basis.
Fix a field size \(q=2^l\) and a quantum code \(\mathcal{C}\) with parameters \([[n,k,d]]_q\) from \cref{thm:pqrs-good-code}.
Let \(\mathcal{E}\) be the encoding map for \(\mathcal{C}\).

\begin{figure}[H]
\centering
\begin{algorithm}[H]
  \DontPrintSemicolon
  \LinesNumbered
  \caption{Magic  state distillation algorithm\label{alg:magic-state-distillation}}
  \KwIn{\(n\cdot l^3\) \(\ket{\CCZ}\), \(3n\cdot l\) \(\ket{+}\), and \(3n\cdot l\) \(\ket{0}\)}
  \KwData{\(3n\) qubit register \(A\)}
  \KwOut{\(k\) \(\ket{\CCZ}\)}
  \Begin{
    Initialize register \(A\) with input state \(A \gets \left(\ket{+}^{(N)}_q\right)^{\otimes 3n} = \ket{+}^{\otimes 3nl}\)\;
    \(\sigma_1 \gets\) measurements of checks (sequentially) of the \(3\) blocks of register \(A\)\;
    \(U_{corr,1} \gets \) operator \(U_{corr,1}\) such that \(U_{corr,1}\) corrects \(\sigma_1\)\;
    Apply \(U_{corr,1}\) to register \(A\)\;
    Consume \(\ket{\CCZ}^{\otimes n\cdot l^3}\) to apply \(\left(\CCZ^{(q)}\right)^{\otimes n}\) to register \(A\)\;
    \(\sigma_2 \gets\) measurements of checks of \(C\) on the \(3\) blocks of register \(A\)\;
    \(U_{corr,2} \gets \) operator \(U_{corr,2}\) such that \(U_{corr,2}\) corrects \(\sigma_2\)\;
    Apply \(U_{corr,2}\) to register \(A\)\;
    Apply \(\mathcal{E}^{-1}\) to register \(A\)\;
    Use \(3n\cdot l\) \(\ket{0}\) as space to apply change of basis \(\ket{\CCZ}_q^{(N)} \to \ket{\CCZ}_q^{(M)}\) to \(A\)  \;
    Apply \(\ket{\CCZ}_q^{(M)}\) to \(\ket{\CCZ}\) conversion to \(A\)\;
    Output \(A\)\;
  }
\end{algorithm}
\end{figure}

For \(d,n \in \mathbb{N}\), define \(\mathcal{S}^n_d = \{x \subseteq [n] \mid |x| = d\} \subseteq P([n])\) to be all subsets of \([n]\) of size \(d\).
\begin{prop}\label{prop:single-level-magic-state-distill}
  If \(\mathcal{C}\) is a \(t\)-error correcting code, then the output of~\Cref{alg:magic-state-distillation} is \(\ket{\CCZ}^{\otimes k}\) if the input is \(\mathcal{S}^{n}_{t+1}\bullet \mathcal{S}^{l^3}_1 \bullet \mathcal{S}^3_1\) deviated from \(\ket{CCZ}^{\otimes nl^3}\).
\end{prop}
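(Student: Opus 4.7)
The plan is to mirror the argument of Proposition~\ref{prop:single-level-stab-state-distill}, replacing the single-code stabilizer distillation by the two-measurement construction for $\ket{\CCZ}$ distillation built around the transversal $\CCZ^{(q)}$. First, by Lemma~\ref{lemma:deterministic-errors}, it suffices to consider an input state that is diagonal Pauli deviated from $\ket{\CCZ}^{\otimes nl^3}$ by some Pauli operator $E$; any coherent part of the deviation is annihilated by the subsequent stabilizer measurements. Under the composition operator $\bullet$, the hypothesis that $\supp E$ is $\mathcal{S}^n_{t+1}\bullet\mathcal{S}^{l^3}_1\bullet\mathcal{S}^3_1$-avoiding translates to the statement that the projection of $\supp E$ onto the $[n]$-coordinate has size at most $t$: at most $t$ of the $n$ transversal $\CCZ^{(q)}_i$ gates see any Pauli damage at all.

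Next I would verify the two phases of \Cref{alg:magic-state-distillation} separately. Phase~I (initialization): the register $A$ starts as $\ket{+}^{\otimes 3nl} = \bigl(\ket{+}_q^{(N)}\bigr)^{\otimes 3n}$, the uniform superposition over $\Fq^n$ in each of the three blocks, and hence a $+1$ eigenstate of every logical $\overline{X_q^{e_j}}$ of $\mathcal{C}$. Measurement of the CSS qudit checks projects each block into a random coset; computing $U_{\mathrm{corr},1}$ from $\sigma_1$ by Gaussian elimination returns the state to the code space, producing the joint codestate $\ket{\overline{+_q}}^{\otimes 3k}$ on the three blocks. Phase~II (logical $\CCZ$): we apply $\bigl(\CCZ^{(q)}\bigr)^{\otimes n}$ transversally by consuming the $nl^3$ input $\ket{\CCZ}$ resources via gate teleportation, using the decomposition of $\CCZ^{(q)}$ into $l^3$ physical $\CCZ$ gates from Lemma~\ref{lemma:naive-ccz}. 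The key propagation fact is that a Pauli error on the $\ket{\CCZ}$ resource indexed by $(i,j,k)\in[n]\times[l^3]\times[3]$ becomes, after teleportation and its classical $\CZ$/$Z$ fixups, a Pauli operator on $A$ supported only within the $i$-th $q$-ary qudit of the three blocks, because the $j$-th $\CCZ$ in the decomposition of $\CCZ^{(q)}_i$ touches only qubits belonging to that $q$-ary coordinate. Thus the induced Pauli on $A$ has $q$-ary support contained in the projection of $\supp E$ onto $[n]$, of size at most $t$; since $\mathcal{C}$ is $t$-error correcting, measurement of the checks and the correction $U_{\mathrm{corr},2}$ computed from $\sigma_2$ remove the error without any residual logical action.

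To conclude, Proposition~\ref{prop:logical-ccz} together with the parameter choice in the proof of Theorem~\ref{thm:pqrs-good-code} (where $s=k$, so $|B|=k$) shows that the transversal $\CCZ$ acts as $\overline{\CCZ^{(q)}}^{\otimes k}$ on all $k$ logical qudits of the three blocks. Hence unencoding via $\mathcal{E}^{-1}$ yields $\ket{\CCZ^{(q)}}_q^{\otimes k}$. Changing basis from $N$ to $M$ via Proposition~\ref{lemma:F2n-basis-change}, and then converting each $\ket{\CCZ^{(q)}}_q^{(M)}$ to $\ket{\CCZ}$ via Lemma~\ref{lemma:ccz-naive-reduction}, gives the output $\ket{\CCZ}^{\otimes k}$ as claimed. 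The main obstacle will be the bookkeeping in Phase~II: carefully tracking how a Pauli error on an input $\ket{\CCZ}$ resource propagates through the three-qubit gate-teleportation circuit for $\CCZ$---including through the classical $\CZ$ fixups, which a~priori could entangle errors across different $q$-ary coordinates---and verifying that the containment to the single coordinate $i$ survives the Pauli commutation with the subsequent physical $\CCZ$ gates of that transversal layer. Once this containment is established, the remainder of the analysis reduces to standard bounded-distance decoding of $\mathcal{C}$, directly paralleling the stabilizer-case argument of Proposition~\ref{prop:single-level-stab-state-distill}.
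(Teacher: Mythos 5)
Your proof takes essentially the same route as the paper's: prepare $\ket{\overline{+}_q}^{\otimes 3}$ on the three blocks, consume $\ket{\CCZ}$ resources to apply transversal $\CCZ^{(q)}$ (Lemma~\ref{lemma:naive-ccz}), note that faults are localized to at most $t$ of the $n$ qudit positions because the bad sets $\mathcal{S}^n_{t+1}\bullet\mathcal{S}^{l^3}_1\bullet\mathcal{S}^3_1$ are exactly those that hit $t+1$ distinct $[n]$-indices, use the $t$-error-correcting property to restore the codestate, and then unencode, change basis, and convert via Proposition~\ref{prop:logical-ccz}, Proposition~\ref{lemma:F2n-basis-change} and Lemma~\ref{lemma:ccz-naive-reduction}. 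Two small remarks are in order.

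First, your invocation of Lemma~\ref{lemma:deterministic-errors} on the \emph{input} $\ket{\CCZ}^{\otimes nl^3}$ is misplaced: that lemma decoheres errors relative to a stabilizer code, and the bundle of $\ket{\CCZ}$ states is not a codestate of any stabilizer code. The decoherence step should be applied to the encoded register $A$ at the point of the second check measurement $\sigma_2$, which is also where the paper (implicitly, via ``the code is $t$-error correcting'') handles arbitrary, possibly non-Pauli damage: by that point the damage is a superoperator supported on at most $t$ qudits of each block of $\mathcal{C}$, which is a recoverable subset, so measuring the checks collapses it to a correctable Pauli.

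Second, the ``main obstacle'' you flag is not actually one. All $l^3$ physical $\CCZ$ gates in the decomposition of a single $\CCZ^{(q)}_i$ act exclusively within the $i$-th qudit of each of the three blocks (one qubit from each of the three $l$-qubit subregisters at position $i$). Consequently, the teleportation errors from a faulty resource at $[n]$-index $i$, the corresponding $\CZ$/$Z$ fixups, and any $\CZ$-type operators generated by commuting a residual Pauli past the remaining physical $\CCZ$ gates of the $i$-th transversal gate all remain supported on position $i$. There is no mechanism to leak outside the $i$-th $q$-ary coordinate, so the ``at most $t$ positions are faulty'' conclusion, and hence the paper's one-line justification, follows directly.
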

\begin{proof}
  In the self-dual basis \(N\), qudit \(\CNOT\) can be applied using transversal qubit \(\CNOT\). 
  The procedure first prepares the logical code state \(\mathcal{E}\left(\left(\ket{+}_q^{(N)}\right)^{\otimes k}\right)^{\otimes 3}\).
  For each of the \(n\) positions, we consume \(l^3\) \(\ket{CCZ}\) states to perform \(\CCZ^{(q)}\) using \cref{lemma:naive-ccz}.
  If any one of these states is faulty, then an error is incurred on the position.
  However, the code is \(t\)-error correcting, so as long as at most \(t\) positions are faulty, the state is successfully corrected to the encoded logical \(\mathcal{E}^{\otimes 3}\left(\left(\ket{\CCZ^{(q)}}_q^{(N)}\right)^{\otimes k}\right)\).
  We apply \((\mathcal{E}^{-1})^{\otimes 3}\) to get \(\left(\ket{\CCZ^{(q)}}^{(N)}_q\right)^{\otimes k}\)
  We then can then use the \(\ket{0}\) ancillas to convert this state to \(\left(\ket{\CCZ^{(q)}}^{(M)}_q\right)^{\otimes k}\) (\cref{lemma:F2n-basis-change}).
  This basis allows us to use \cref{lemma:ccz-naive-reduction} to obtain \(\ket{\CCZ}^{\otimes k}\).
\end{proof}

\begin{remark}
  This is a magic state distillation method for qubits for which the magic state distillation exponent
  \begin{align}
    \gamma\equiv \frac{\log(n/k)}{\log d} = \frac{\log(3\log_2^3q)}{\log(q/12)} = O\left(\frac{\log{l}}{l}\right)
  \end{align}
  can be made arbitrarily small.
\end{remark}

\begin{lemma}\label{lemma:concat-threshold}
  Fix two functions \(f,g \colon \mathbb{N}\mapsto \mathbb{R}\) such that and consider the family of functions defined as
  \begin{align*}
    W_i(x) &= (f(i) W_{i-1}(x))^{g(i)},\\
    W_0(x) &= x
  \end{align*}
  if \(f\) and \(g\) satisfy the asymptotic bound
  \begin{align}
    \frac{\log(f(i))}{\prod_{j=1}^{i-1} g(j)} = O_{i \to \infty}(i^{-2})
  \end{align}
  then there exists a constant \(\beta > 0\) such that
  \begin{align}
    W_i(x) \le (\beta x)^{\prod_{j=1}^{i} g(j)}
  \end{align}
  for \(x \in [0,1)\).
\end{lemma}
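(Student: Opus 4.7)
The plan is to linearize the recursion by taking logarithms and then telescope. Introduce the shorthand $G_i := \prod_{j=1}^{i} g(j)$ (with $G_0 := 1$) and $L_i(x) := \log W_i(x)$. The defining recursion $W_i(x) = (f(i) W_{i-1}(x))^{g(i)}$ rewrites as $L_i(x) = g(i)\log f(i) + g(i) L_{i-1}(x)$, and dividing by $G_i = g(i)\,G_{i-1}$ yields
$$\frac{L_i(x)}{G_i} \;=\; \frac{L_{i-1}(x)}{G_{i-1}} + \frac{\log f(i)}{G_{i-1}}.$$
Unrolling from the base case $L_0(x) = \log x$ gives the telescoping identity
$$\frac{L_i(x)}{G_i} \;=\; \log x \;+\; \sum_{j=1}^{i} \frac{\log f(j)}{G_{j-1}}.$$

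Next, I would invoke the hypothesis $\log f(j)/G_{j-1} = O(j^{-2})$. Together with the fact that the finitely many small-$j$ terms are each bounded, this shows that the partial sums $T_i := \sum_{j=1}^{i} \log f(j)/G_{j-1}$ are uniformly bounded above by some finite constant $S$. Setting $\beta := e^{\max(S,0)} > 0$, we obtain for every $i$ and every $x \in (0,1)$ that
$$L_i(x) \;\le\; G_i\bigl(\log x + S\bigr) \;\le\; G_i\,\log(\beta x),$$
which upon exponentiation yields the desired bound $W_i(x) \le (\beta x)^{G_i}$. The edge case $x = 0$ is handled trivially: a routine induction using $g(i) > 0$ gives $W_i(0) = 0 = (\beta\cdot 0)^{G_i}$.

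No step is genuinely hard; the proof is essentially a clean \emph{telescope} after taking logarithms. The only subtle points worth flagging are (i) the summands $\log f(j)/G_{j-1}$ may be negative, so the right object is an upper bound $S$ on the partial sums rather than an $\ell^1$ bound, which keeps the inequality direction correct after multiplying through by the positive quantity $G_i$; and (ii) implicit positivity of $f$ and $g$ so that the logarithms are defined and the exponentiation is monotone, both of which hold in the downstream applications (e.g.\ in the proof of \Cref{lemma:magic-distillation-procedure} where $f(i)$ comes from the constants $\beta_{\mathrm{Mdistill}}$ or $\beta_{\mathrm{distill}}$ and $g(i)$ is a positive combinatorial quantity such as $(l+1)!$ or $2^l$).
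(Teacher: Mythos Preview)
Your proof is correct and essentially identical to the paper's: both unroll the recursion to the closed form $\log W_i(x) = G_i\bigl(\log x + \sum_{j=1}^{i}\log f(j)/G_{j-1}\bigr)$ and then invoke the $O(i^{-2})$ hypothesis to uniformly bound the sum. Your presentation via dividing by $G_i$ and telescoping is arguably cleaner than the paper's direct closed-form expansion, and your explicit remarks on the sign of the summands and the implicit positivity assumptions are points the paper leaves tacit.
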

\begin{proof}
  For \(L \in \mathbb{N}\), define \(h(i)=\prod_{j=i}^L g(j)\), so that \(W_L(x)\) has the closed form
  \begin{align}
    W_L(x) &= \left(\prod_{i=1}^{L} \left(f(i)\right)^{h(i)}\right) x^{h(1)}. \\
  \end{align}
  Define \(\bar{h}(i) = h(1)/h(i) = \prod_{j=1}^{i-1} g(j)\).
  Then, we bound
  \begin{align}
    W_L(x) &= x^{h(1)}\exp\left[h(1)\sum_{i=1}^L \frac{\log(f(i))}{\bar{h}(i)}\right]\\
           &\le x^{h(1)}\exp\left[h(1)\sum_{i=1}^\infty \frac{\log(f(i))}{\bar{h}(i)}\right] \\
           &\le x^{h(1)}\exp\left[h(1) (\mathrm{const.})\right] \\
           &= \left(x \cdot (\mathrm{const.})\right)^{h(L)},
  \end{align}
  where the last line follows for a constant independent of \(L\) due to the assumption \(\frac{\log(f(i))}{\prod_{j=1}^{i-1} g(j)} = O_{i \to \infty}(i^{-2})\).
\end{proof}

\subsubsection{Proof of \cref{lemma:magic-distillation-procedure}}\label{lemma:magic-distillation-procedure-proof}
\lemmaMagicDistillationProcedure*
\begin{proof}
  The proof proceeds nearly identically to the case of distillation of stabilizer states. 
  For each extension field \(\Fq\), we pick a self-dual basis over \(\F\).
  We will pick a sequence of codes from \cref{thm:pqrs-good-code} with \(q_i = 64\cdot2^{\lceil \log_2 i\rceil}\), so that \(64 i\le q_i \le 128i\).
  The \(i\)-th code from the sequence has \(n_i=3q_i/4 \le 96 i\), \(k_i\ge 16i\), and is a \(t_i = \lfloor\frac{d_i-1}{2}\rfloor \ge i\)-error correcting code.
  We parallel the argument from \cref{lemma:stab-distillation-procedure}:
  Let \(\tilde{n}_i = n_i (\log_2(q_i))^3 = O(i (\log i)^3)\).

  Let \(M_l := \prod_{\ell=1}^l \tilde{n}_\ell \), and \(K_l := \prod_{\ell=1}^l k_\ell\).

  We start with a set of \(\ket{\CCZ}\) states indexed by \([\tilde{n}_l]\times \dots \times [\tilde{n}_1] \simeq [M_l]\).
  Iterating from \(\ell=1\) to \(\ell=l\), for each \(I \in [\tilde{n}_l] \times \dots \times [\tilde{n}_{\ell+1}]\) and \(J \in [k_{\ell-1}]\times\dots\times [k_1]\) we apply the state distillation procedure \cref{alg:magic-state-distillation} on the set of states \(\{I\} \times [\tilde{n}_{\ell}] \times \{J\}\) to get a new set of states with labels \(\{I\} \times [k_{\ell}] \times \{J\}\) (Consuming input ancilla states and applying \(\CZ\)).
  Let \(\mathcal{S}_{d}^{n}\) be the family of all subsets of \([n]\) of size \(d\).
  At step \(\ell\), \cref{prop:single-level-magic-state-distill} gives that the output is \(\ket{\CCZ}^{\otimes k_{\ell}}\) if the input is \(\mathcal{S}^{\tilde{n}_\ell}_2\bullet \mathcal{S}^{3}_1\)-deviated from \(\ket{\CCZ}^{\otimes \tilde{n}_{\ell}}\).
  We use the fact that, by construction, no two outputs from a single application of the state distillation procedure are used together in a following state distillation procedure: If the output of a state distillation in step \(\ell\) is not \(\ket{\CCZ}^{\otimes k_{\ell}}\), then the input is not \(\mathcal{S}^{\tilde{n}_{\ell}}_{t_{\ell}+1}\)-deviated from \(\ket{\CCZ}^{\otimes \tilde{n}_{\ell}}\).
  Since each input is from separate state distillation procedures, \(t+1\) distinct state distillation procedures at step \(\ell-1\) must have had inputs that are not \(\mathcal{S}^{\tilde{n}_{\ell-1}}_{t_{\ell-1}+1}\)-deviated from \(\ket{\CCZ}^{\otimes \tilde{n}_{\ell-1}}\).

  Inducting from \(\ell=1\), the output of the final iteration is \(\ket{\CCZ}^{\otimes K_l}\) if the input is \(\mathcal{A}_l:=\mathcal{S}^{\tilde{n}_l}_{t_l+1}\bullet\dots\bullet\mathcal{S}^{\tilde{n}_1}_{t_1+1} \bullet \mathcal{S}^3_1\)-deviated from \(\ket{\CCZ}^{\otimes M_l}\).

  Let \(S_i(x) = \left(96\cdot 7^3 i \log(i)^3 x\right)^{i+1}\) which satisfies \(\weightenum{\mathcal{S}^{\tilde{n}_i}_{t_i+1}}{x} \le \left(\tilde{n}_i x\right)^{t_{i}+1} \le S_i(x)\) when \(x \in [0,1/\tilde{n}_i]\).
  When \(\weightenum{\mathcal{A}_{l-1}}{x} \le 1/\tilde{n}_l\), we can bound \(\weightenum{\mathcal{A}_l}{x}\) as 
  \begin{align}
    \weightenum{\mathcal{A}_l}{x} &\le S_l\circ S_{l-1}\circ \dots \circ S_2 \circ S_1 (3x).
  \end{align}
  This recursion satisfies the preconditions of \cref{lemma:concat-threshold},\footnote{\cref{lemma:concat-threshold} gives a superexponential error suppression at every step of the recursion, so we do not need to be worried about the polynomially small precondition \(\weightenum{\mathcal{A}_{l-1}}{x} \le 1/\tilde{n}_l\). It suffices to take \(\beta_{\mathrm{Mdistill}}\) to be \(1/2\) the constant promised by \cref{lemma:concat-threshold}.} so there exists a constant \(\beta_{\mathrm{Mdistill}}\) (independent of \(l\)) such that
  \begin{align}
    \weightenum{\mathcal{A}_l}{x} \le (\beta_{\mathrm{Mdistill}} x)^{\prod_{i=1}^l (i+1)} = (\beta_{\mathrm{Mdistill}} x)^{(l+1)!}.
  \end{align}
  
  The three operations with depth \(\omega(1)\) are the application of \(\mathcal{E}\), \(\mathcal{E}^{-1}\), and \(\CCZ^{(q)}\).
  Using a self-dual basis, in step \(i\), the encoding and unencoding unitiaries have depth \(O(q_i^3) = O(i^3)\) while the \(\CCZ^{(q)}\) has depth \(O\left((\log(q_i))^3\right)=O\left((\log i)^3\right)\), so the overall depth is at most \(O(l^4 (\log l)^3)\).
  A number of ancilla qubits \(O(M_l)\) is needed to sequentially measure the syndrome, so the total space required is \(O(M_l)\) qubits.

  Finally, \(M_l\le \prod_{i=1}^l \alpha i (\log i)^3 = O\left(\alpha^l l! \cdot l (\log l)^3\right)\) for some \(\alpha > 0\) and \(K_l\ge \prod_{i=1}^l 16i = 16^l l!\), so for some \(c > 0\),
  \begin{align}
    \frac{K_l}{M_l} \ge \frac{16^ll!}{\alpha^ll!\cdot l (\log l)^3} = \Omega\left(c^l\right).
  \end{align}

      \paragraph{Classical depth.} We now describe a time-efficient decoder for the PQRS code, which implies to the claimed classical depth. Recall that the X checks correspond to the code $C_2^\perp$. According to~\Cref{lemma:compute-C2}, $C_2$ consists of the evaluations of $\mathbb{F}_q[x]_{< q-m}$ on a set $M$, where in the proof of~\Cref{thm:pqrs-good-code} we set $m = \lfloor q/3 \rfloor$ and $M \subset \mathbb{F}_q$ is an evaluation set of size $|M|=3q/4$. 
      Hence, we can apply standard RS decoders. Here we use the Berlekamp-Welch algorithm (see Theorem 12.1.6 in~\cite{guruswami2022}) which decodes $C_2$ up to error weight $\lfloor (3q/4 - (q-m)+1)/2 \rfloor= \lfloor q/12 \rfloor$ in time $O(q^3)$ (if no codeword is found the decoder outputs FAIL). Similarly, the Z checks correspond to the code $C_1^\perp$, where $C_1$ is $\mathbb{F}_q[x]_{< k}$ evaluated on $M$, and $k= \lfloor q/3 \rfloor$ is chosen in~\Cref{thm:pqrs-good-code}'s proof. So X errors can be corrected with the Berlekamp-Welch algorithm up to error weight $\lfloor (3q/4 - k +1)/2 \rfloor= \lfloor 5q/24 \rfloor$.\footnote{Alternatively, we can also use the standard twirling argument in the MSD literature~\cite{bravyi2005universal}, applying the Clifford $\CCZ^{(q)} (X^{(q)}(a_1)X^{(q)}(a_2)X^{(q)}(a_3)) (\CCZ^{(q)})^\dagger$ for randomly chosen $a \in \mathbb{F}_q^{3}$, to restrict our decoding problem to only Z-type errors.}
      It follows that the total classical depth in the $l$-level distillation procedure is $\sum_{i=1}^{l} O(q_i^3)= \sum_{i=1}^l i^3 = O(l^4).$
\end{proof}

\section{Quantum LTC construction and single-shot decoder}\label{sec:qltc}

We now prove the final remaining piece, an almost-good quantum locally-testable code with a parallel single-shot decoder~\Cref{thm:single-shot-qltc}. We will start by overviewing the cubical complex quantum code construction from~\cite{dinur2024expansion} and important notions such as tensor code robustness~\cite{kalachev2025maximally}. We give the sequential and parallel single-shot decoders for Z syndrome in~\Cref{sec:Z-decoder} and for X syndrome in~\Cref{sec:X-decoder}. The results there are stated generally in terms of the parameters of the code construction. We instantiate them to prove~\Cref{thm:single-shot-qltc} in~\Cref{subsec:proof-single-shot-qltc}

\subsection{Quantum codes from cubical complex}
We give an overview of a recent construction of almost-good quantum locally testable code~\cite{dinur2024expansion, kalachev2025maximally}. The construction is a higher-dimensional generalization of existing good quantum LDPC codes on square complexes~\cite{panteleev2022asymptotically, leverrier2022quantum, dinur2022good}. It is based on two ingredients. The first is a \emph{cubical complex} $X$, which is a purely geometric structure generalizing square complexes and that additionally has certain expansion properties. The second is a system of local coefficients, called \emph{sheaf}, that is based on a collection of constant-sized classical codes with certain robustness properties.

We start by recalling the definition of cubical complexes, following~\cite{dinur2024expansion}.

\begin{definition}[Cubical complex]
\label{def:cubical-complex}
Let $G$ be a set of size $|G|=N$ and $A_1,\hdots,A_t$ be finite subsets of permutations of $G$ of size $|A_i|=n_i $ (for simplicity we assume they have the same cardinality), such that each $A_i$ is closed under inverse and such that the permutations from different sets $A_i$ commute. A $t$-dimensional cubical complex $X(G, \{A_i\})$, or $X$ for short, is defined as a collection of cubical faces of variable dimensions, which are grouped according to their dimensions: $X = X(0) \cup \hdots \cup X(t)$.
\begin{itemize}
    \item The set of 0-dimensional faces (vertices) is denoted $X(0)$. There are $2^t |G|$ vertices, which are elements from $2^t$ copies of $G$. A vertex is referred to by a tuple $[g;\emptyset;b]$, where $g \in G$ labels the associated element and $b \in \{0,1\}^t$ labels which copy of $G$ the element is in. The dummy $\emptyset$ symbol reserves the second tuple element for higher-dimensional face notations, see below.
\end{itemize}
Higher-dimensional faces are (hyper)cubes formed by connecting vertices between different copies of $G$ using edges from $\{A_1,\hdots,A_t\}$:
\begin{itemize}
    \item Consider a subset $S \subseteq [t]$ of size $|S|=k$, a $k$-dimensional face $f$ (or $k$-face for short) of type $S$ is uniquely specified by a tuple $f=[g; (a_i)_{i \in S}, (b_j)_{j \in \overline{S}}]$, where $a_i \in A_i$ and $b_j \in \{0,1\}$. Such a $k$-face contains $2^k$ vertices ($0$-faces) specified by:
    \begin{align*}
        \{[(g \cdot \textstyle \prod_{i:b'_i=1} a_i ); \emptyset, b'\|b ]\text{ for } b' \in \{0,1\}^{S} \},
    \end{align*}
    where $b'\|b$ denotes string concatenation.
    The `$b$' part can be understood as describing the `orientation' of the face. For example, all vertices have the same type: $S=\emptyset$ and have $2^t$ possible orientations, edges (1-dimensional faces) have $t$ possible types and $2^{t-1} t$ possible orientations, etc. See~\Cref{fig:faces} for an illustration when $t=3$. We denote by $X(k)$ the set of faces of any type $S$ with $|S|=k$.
     \end{itemize}
Additionally, the cubical complex is endowed with the following partial order and incidence structure:
\begin{itemize}
    \item Given two faces $f, f'$ and $j \in [t]$ we write $f' \precdot_j f$ if it holds that: (1) $f'_j \in \{0,1\}$ while $f_j \in A_j$, (2) $f'_i=f_i$ for all $i \notin \{0,j\}$, and (3) $f'_0 = \begin{cases}
        f_0 & f'_j=0 \\
        f_0 \cdot f_j& f'_j=1 
    \end{cases}$. \\
    We write $f' \precdot f$ if $f' \precdot_j f$ for some $j$. We also write $f' \prec f$ if there is a sequence $f' \precdot \hdots \precdot f$ and $f' \preceq f$ if either $f' \prec f$ or $f'=f$.
    \item The up incidence map takes a $k$-face $f$ to all $(k+1)$-faces containing $f$: $\mathsf{up}(f) = \{f'| f' \succdot f \} $. Similarly the down map is $\mathsf{down}(f) = \{f'| f' \precdot f \} $. See~\Cref{fig:cubical-complex} for an illustration when $t=3$.
\end{itemize}
\end{definition}
\begin{figure}
    \centering
    \includegraphics[width=0.5\textwidth]{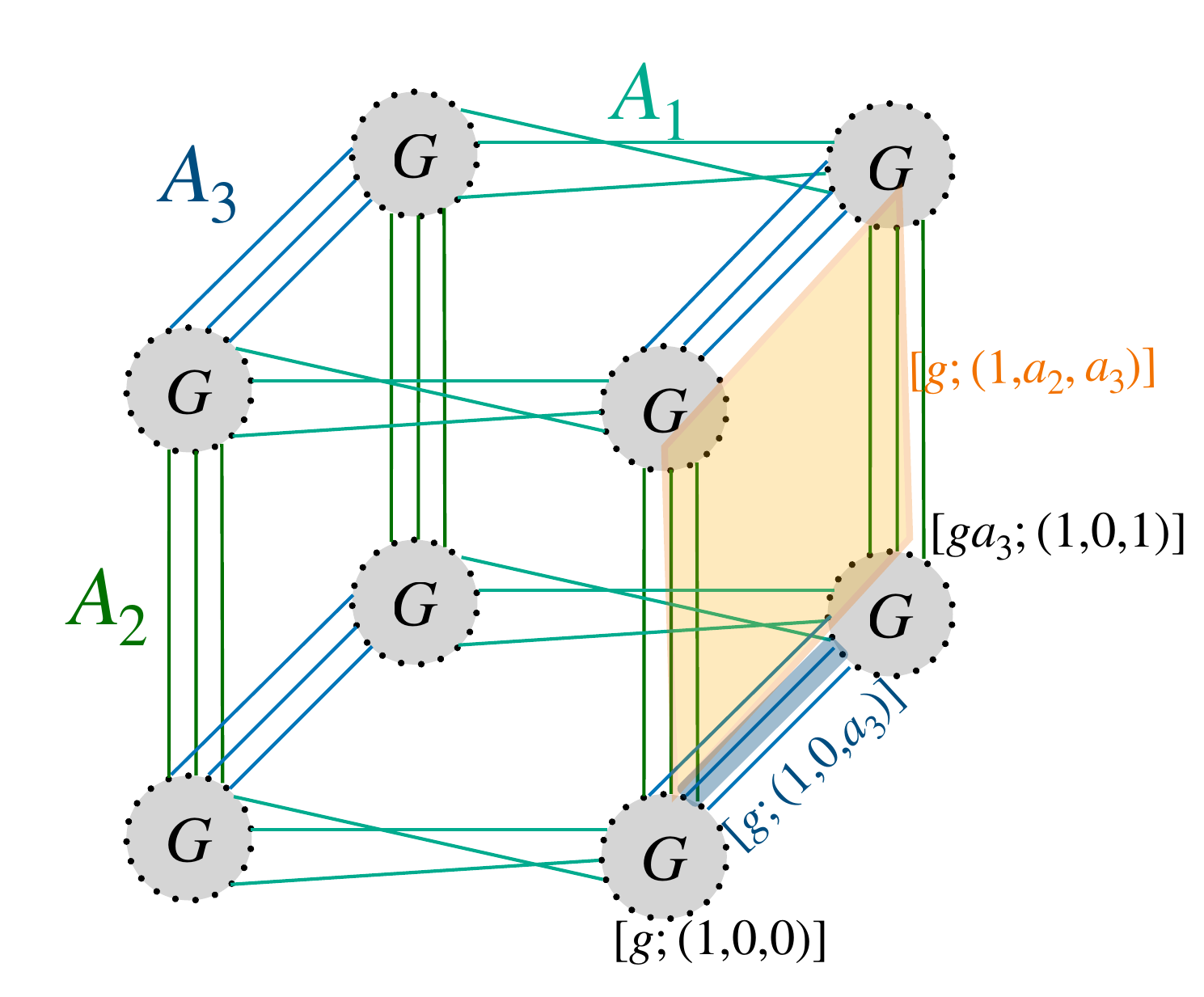}
    \caption{(Similar to Figure 1 in~\cite{dinur2024expansion}) A cubical complex with $t=3$ and three sets of permutations $A_1,A_2,A_3$. The underlying graph can be seen to be $2^t=8$-partite. The faces have label of the form as $[g;f_1,f_2,f_3]$, where $g\in G$. For a $k$-face, k elements out of $f_1,f_2,f_3$ take values from $A_i$ and the remaining elements take binary values.}
    \label{fig:faces}
\end{figure}
As examples, the case $t=1$ simply corresponds to graphs, while the case $t=2$ includes the square complex recently used in constructions of good qLDPC codes~\cite{panteleev2022asymptotically, dinur2022good, leverrier2022quantum, dinur2022locally}.

\paragraph{Links} We will often consider the `local view', or `link', of a face in the complex: The \emph{upward link} of a face $v \in X$ is the subcomplex consisting of all faces containing $v$, denoted $X_{\geq v} = \{ f\in X| f \succeq v\}$. We also denote $X_{\geq v}(k) = \{ f\in X(k)| f \succeq v\}$ and, for a type $S \supseteq \mathsf{type}(f)$, $X_{\geq v}(S) = \{ f\in X(S)| f \succeq v\}$. Similarly, the \emph{downward link} is $X_{\leq v} = \{ f\in X| f \preceq v\}$. 
We denote by $\mathsf{up}^{\ell}(f)$ the set of faces reachable from $f$ with $\ell$ up-walks, and $\mathsf{down}^{\ell}(f)$ is defined similarly.

We have the following claim regarding the geometric incidence properties of the cubical complex:
\begin{claim}\label{lem:incidence} Let $0 \leq i \leq t$ and $f \in X(i)$. It holds that
\begin{align*}
    |\mathsf{down}(f)| = 2i, \qquad |\mathsf{up}(f)| = (t-i)n,
\end{align*}
and, for $0 \leq \ell \leq i \leq k \leq t$,
\begin{align*}
    |X_{\leq f}(\ell)| = {{i}\choose {\ell}} 2^{i-\ell}, \qquad |X_{\geq f}(k)| = {{t-i}\choose {k-i}} n^{k-i}.
\end{align*}
As a consequence, the number of $k$-faces is
\begin{align*}
    |X(k)| =  \frac{1}{2^k} \sum_{v \in X(0)} |X_{\geq v}(k)| = {{t}\choose {k}} 2^{t-k} n^k |G|.
\end{align*}
\end{claim}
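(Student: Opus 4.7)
The proof will be a direct combinatorial bookkeeping exercise using the tuple encoding of faces from \Cref{def:cubical-complex}. Recall that a face $f \in X(i)$ of type $S \subseteq [t]$ with $|S|=i$ is determined by the data $[g;(a_j)_{j\in S},(b_j)_{j\in \overline{S}}]$, i.e., by choosing, for each of the $t$ coordinate directions, either a permutation in $A_j$ (contributing a ``live'' dimension) or a binary orientation bit. The partial order $\precdot_j$ acts by flipping exactly one live coordinate $j\in S$ into a bit, so going up or down by one level corresponds to converting one coordinate between ``live'' and ``bit.''

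For the first two identities, I would fix $f\in X(i)$ and argue directly from this encoding. To enumerate $\mathsf{down}(f)$, we pick one of the $i$ live coordinates $j\in S$ and replace $a_j$ by either $0$ or $1$, giving $|\mathsf{down}(f)|=2i$; the definition of $\precdot_j$ simultaneously dictates whether $g$ is replaced by $g\cdot a_j$. For $\mathsf{up}(f)$, we pick one of the $t-i$ bit coordinates $j\in \overline{S}$ and replace $b_j$ by any $a_j\in A_j$, which with $|A_j|=n$ yields $|\mathsf{up}(f)|=(t-i)n$. The same local picture, iterated, immediately handles $X_{\leq f}(\ell)$ and $X_{\geq f}(k)$: an $\ell$-dimensional face beneath $f$ is obtained by choosing which $\ell$ of the $i$ live coordinates stay live (giving $\binom{i}{\ell}$) and then setting each of the remaining $i-\ell$ to one of two bit values (giving $2^{i-\ell}$), while a $k$-dimensional face above $f$ chooses which $k-i$ of the $t-i$ bit coordinates are promoted to live (giving $\binom{t-i}{k-i}$) and selects a permutation in $A_j$ for each promoted coordinate (giving $n^{k-i}$).

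The global count $|X(k)|$ follows by double counting vertex–$k$-face incidences. Every $k$-face contains exactly $|X_{\leq f}(0)|=2^k$ vertices (by the second formula specialized to $\ell=0$), so
\begin{equation*}
\sum_{v\in X(0)} |X_{\geq v}(k)| \;=\; 2^k\,|X(k)|.
\end{equation*}
Specializing $|X_{\geq v}(k)|=\binom{t}{k}n^k$ (the previous formula with $i=0$) and using $|X(0)|=2^t|G|$ yields $|X(k)|=\binom{t}{k}2^{t-k}n^k|G|$ as claimed.

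There is no real obstacle here; the only thing to be careful about is that the partial order $\preceq$ in \Cref{def:cubical-complex} affects the $g$-component of the tuple in a prescribed way when bits flip from $0$ to $1$, so one must check that the counts above are well-defined, i.e., that distinct choices of bits/permutations produce distinct tuples once the induced $g$-shifts are applied. Since the permutations in $A_j$ commute across different $j$, the map from a choice of live/bit assignments to the resulting tuple is injective, which validates the counts. Everything else is elementary enumeration.
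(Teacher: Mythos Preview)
Your proposal is correct and follows essentially the same approach as the paper: both argue directly from the tuple encoding of faces, counting choices of which coordinates to convert between ``live'' and ``bit'' (with the appropriate $2$ or $n$ choices per coordinate), and both derive $|X(k)|$ by the same double-counting of vertex--$k$-face incidences using that each $k$-face contains $2^k$ vertices. Your version is slightly more explicit about injectivity of the parameterization, but otherwise the arguments coincide.
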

\begin{proof}
    Consider an $i$-face $f=[g;a,b]$ has type $S$ such that $|S|=i$. To specify an element in $\mathsf{down}(f)$ we need to specify an index $j \in S$ and a binary value $b_j$, giving $2i$ choices. This can be generalized to $X_{\leq f}(\ell)$, we need to specify $i-\ell$ indices in $S$ and $i-\ell$ binary values associated to them.
    
    To specify an element in $\mathsf{up}(f)$ we specify an index $j \in \overline{S}$ and a value $a_j \in [n]$, giving $(t-i)n$. Generalizing this to $X_{\geq f}(k)$, we need to specify $k-i$ indices in $\overline{S}$ and $k-i$ elements in from the corresponding permutations in $\{A_1,\hdots,A_t\}$.
    Finally, in the formula for the number of $k$-faces, the factor $2^{-k}$ is because each $k$-face contains $2^k$ vertices.
\end{proof}

The incidence structure of the cubical complex itself forms a chain complex. However, that chain complex does not have a nontrivial (co)homology for a good quantum code. To resolve this,~\cite{dinur2022good, dinur2024expansion} augment the faces in the cubical complex with local vector spaces (i.e., `sheaves') along with local maps with extra expansion structure, forming a chain complex called sheaf complex~\cite{first2022good}.

\begin{definition}[Sheaf complex on cubical complex]
\label{def:sheaf-complex}
Given a cubical complex $X$, a sheaf complex $C(X,\mathcal{F})$ can be constructed by attaching a vector space $V_f$ (i.e., a sheaf) to each face $f$ in $X$. Following~\cite{dinur2024expansion}, the vector space $V_f$ associated to a face $f$ depends only on $\mathsf{type}(f)$. In particular, let $m_i \geq 1$ be an integer and $\hat{A}_i = [m_i]$ for each $i \in [t]$. The vector space $V_f$ for $\mathsf{type}(f)=S$ is
\begin{align*}
    V_f = V_S = \mathbb{F}_q^{\prod_{j \in \overline{S}} \hat{A}_j},
\end{align*}
where in this paper we will restrict ourselves to a finite field $\mathbb{F}_q$ of characteristic 2. So the local vector space of a $t$-dimensional face is $V_{[t]}=\mathbb{F}_q$.
We denote by $\mathcal{F}=\{V_f\}$ the collection of local vector spaces.
This results in a $(t+1)$-term chain complex whose $k$-th chain and cochain spaces are
\begin{align*}
    C_k(X, \mathcal{F}) = \bigoplus_{f \in X(k)} V_f, \qquad C^k(X, \mathcal{F}) = \bigoplus_{f \in X(k)} V_f^*,
\end{align*}
where $V_f^* \cong V_f$ denotes the dual vector space.

The boundary $\partial_k$ and coboundary maps $\delta^k$ between the (co)chain spaces are defined by combining the incidence structure of the original cubical complex with linear operators on the local vector spaces.
For $i \in [t]$, let $h_i \in \mathbb{F}_q^{m_i \times n_i}$. We assume that $m_i \leq n_i$ and $h_i$ is full row rank.
\begin{itemize}
    \item The $k$-th coboundary map $\delta^k : C^k(X,\mathcal{F}) \mapsto C^{k+1}(X,\mathcal{F})$ is defined, for a cochain $z \in C^k(X,\mathcal{F})$, as
\begin{align*}
    \delta^k (z)(f) = \sum_{i \in \mathsf{type}(f)} \sum_{f' \precdot_i f} (\id_{-i} \otimes (h_i[f_i]^\top)) z(f'), \qquad \forall f \in X(k+1).
\end{align*}
In words, to obtain the component at a face $f \in X(k+1)$ of $\delta^k(z)$, we go over all faces $f' \precdot_i f$ below $f$, map the component $z(f')$ to an element in $V^*_f$ by applying the local operator $h_i[f_i]^\top \in \mathbb{F}_q^{1 \times m_i}$ on $z(f')$ (where $h_i[f_i]$ denotes the column of $h_i$ labeled by element $f_i \in A_i$), and then sum all the contributions from each $f'$.
\item Conversely, the $k$-th boundary map $\partial_k : C_{k}(X,\mathcal{F}) \mapsto C_{k-1}(X,\mathcal{F})$ is defined as
\begin{align*}
    \partial_k (z)(f) = \sum_{i \notin \mathsf{type}(f)} \sum_{f' \succdot_i f} z(f') \otimes h_i[f'_i], \qquad \forall f \in X(k-1),
\end{align*}
where $h_i [f'_i]$ denotes the column labeled by $f'_i \in A_i$.
\end{itemize}
Overall, we write the chain complex as
\begin{align*}
    C^*(X,\mathcal{F}) &= C^0(X,\mathcal{F}) \overset{\delta^0}{\rightarrow} C^1(X,\mathcal{F}) \overset{\delta^1}{\rightarrow} \hdots \overset{\delta^{t-1}}{\rightarrow} C^t(X,\mathcal{F}), \\
    C_*(X,\mathcal{F}) &= C_0(X,\mathcal{F}) \overset{\partial_1}{\leftarrow} C_1(X,\mathcal{F}) \overset{\partial_2}{\leftarrow} \hdots \overset{\partial_{t}}{\leftarrow} C_t(X,\mathcal{F}).
\end{align*}
We will often write $C(X)$ for short. We will also sometimes suppress the subscript $k$, writing $\delta$ or $\partial$, when it is clear what level the (co)chains are in.
\end{definition}

For quantum LDPC codes and quantum LTC, we specify to the cases of $t=2$ and $t=4$, respectively. However, all proofs presented in this section holds for arbitrary $t$.

\paragraph{(Block-)Hamming weight.} The block-Hamming weight of a cochain $z \in C^k(X,\mathcal{F})$ is the number of faces on which its restriction is non-zero, denoted as $|z| = |\{f \in X(k) \ | \  z(f)\neq 0 \}|$. The Hamming weight of $z$ is the sum over all faces $f$ of the Hamming weights of $z(f)$, and is denoted $|z|_H$. Note that $|z|_H \leq (\max_i m_i)^t \cdot |z|$. Similar definitions apply for chains.

\paragraph{Spectral expansion.} The expansion properties of the underlying graphs will determine the parameters of the codes in the cubical complex construction. Recall that given a regular connected graph $G=(V,E)$ on $N=|V|$ vertices, we say that it is $\lambda$-expanding if $\max\{|\lambda_2|,|\lambda_N|\} \leq \lambda$, where $\lambda_2$ and $\lambda_N$ are the second and $N$-th eigenvalues of the normalized adjacency matrix of $G$. We will need a slight generalization to non-connected graphs. We say $G$ is $\lambda$-expanding up to size $r|V|$, for $r \in (0,1)$, if the graph is a disjoint union of graphs of size $\geq r|G|$ each of which is $\lambda$-expanding. In the cubical complex construction, let $\operatorname{Cay}(G,A_j)$ denote the graph whose vertex set is $G$ and such that there is an edge between $(g,g')$ if and only if $g'=ag$ for some $a \in A_j$. Note that $\operatorname{Cay}(G,A_j)$ is undirected and $n$-regular since $A_j$ is assumed to be closed under inverse. Later, we will work with instantiations of the cubical complex construction in which $\operatorname{Cay}(G,A_j)$ is $\lambda$-expanding up to size $r|G|$ for each $j \in [t]$, where $r<1$.

\begin{figure}
    \centering
    \includegraphics[width=0.9\textwidth]{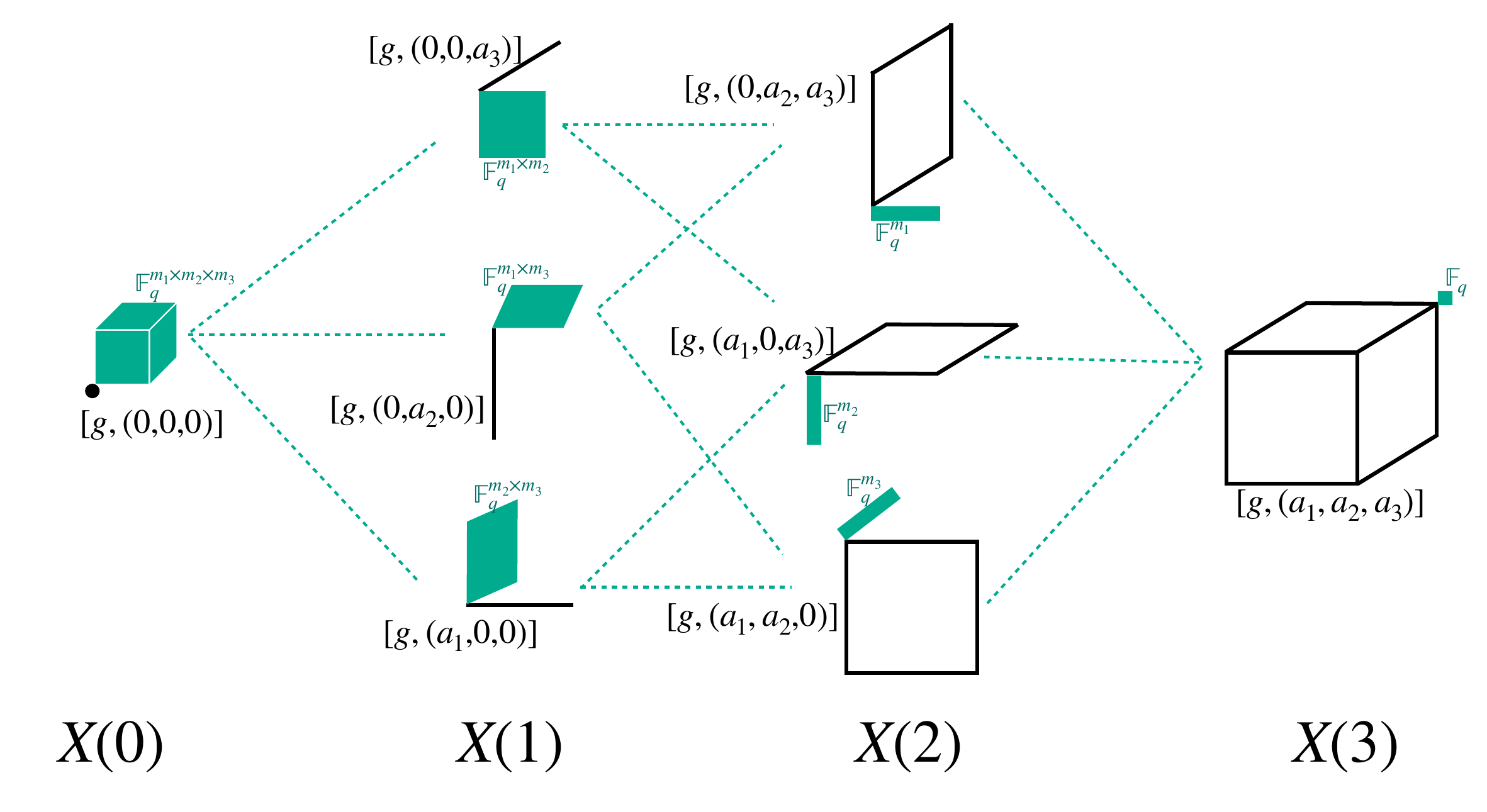}
    \caption{(Similar to Figure 2 in~\cite{dinur2024expansion}) Faces in a cubical complex ($t=3$), with sheaves (local coefficient systems) attached in teal.}
    \label{fig:cubical-complex}
\end{figure}

\subsection{Local views in cubical complex}

In this subsection, we will show that the upward local view of each face corresponds to a \emph{local chain complex}. This fact will be used crucially in the proofs in later sections. In particular, for each face type $\overline{S} \subseteq [t]$, we will show that its local view is isomorphic to a chain complex $C(L_S , \mcF)$ that we first define below.

We first consider the case $S=\{i\}$. The chain complex $C(L_{\{i\}},\mathcal{F})$, or $C(L_{\{i\}})$ for short, is defined as follows. It is a 2-term chain complex that is constructed by attaching sheaves $\mathcal{F}$ to the faces in a geometric object $L_{\{i\}}$. There is a single $0$-dimensional face which we denote $\emptyset$, i.e., $L_{\{i\}}(0)=\{\emptyset\}$. There are $n_i$ $1$-dimensional faces which we label using elements of $A_i$, i.e., we identify the set of 1-faces $L_{\{i\}}(1) \cong A_i$. The incidence structure of $L_{\{i\}}$ is simply $\emptyset \precdot a$ for any $a \in A_i$. The local vector space attached to the 0-face is $V_{\emptyset}=\mathbb{F}_q^{\hat{A}_i}$ (recall that $\hat{A}_i = [m_i]$). For each 1-face that is labeled by an element $a \in A_i$, we attach a local vector space $V_{a} = \mathbb{F}_q$. Hence, the cochain spaces are $C^{0}(L_{\{i\}})=\mathbb{F}_q^{\hat{A}_i}$ and $C^{1}(L_{\{i\}})=  \oplus_{a \in A_i} V_{a} = \mathbb{F}_q^{A_i}$.

The coboundary map $\delta_{\{i\}}$ is defined, for each $x \in  C^{0}(L_{\{i\}})= \mathbb{F}_q^{\hat{A}_i}$ and $a \in A_i$, as
\begin{align*}
    \delta_{\{i\}} (x) (a) =  h_i[a]^\top x \in \mathbb{F}_q.
\end{align*}
And the boundary map $\partial_{\{i\}}$ is defined, for $y \in C^{1}(L_{\{i\}}) =  \mathbb{F}_q^{A_i}$, as
\begin{align*}
    \partial_{\{i\}}(y) (\emptyset) = h_i y.
\end{align*}

For a general type $S \subseteq [t]$, the chain complex $C(L_S)$ is defined to be the homological product~\cite{bravyi2014homological} of $C(L_{\{i\}})$ for $i \in [S]$. In particular,
\begin{align}
    C(L_S) : C^0(L_S) \rightarrow C^1(L_S) \rightarrow \hdots  \rightarrow C^{|S|}(L_S),
\end{align}
where
\begin{align*}
    C^0(L_S) &= \otimes_{i \in S} C^0(L_{\{i\}})\\
    C^1(L_S) &= \bigoplus_{i \in S}  C^1(L_{\{i\}}) \otimes_{j \in S-\{i\}} C^0(L_{\{j\}})\\
    C^2(L_S) &= \bigoplus_{T \subseteq S: |T|=2}  \otimes_{i \in T} C^1(L_{\{i\}}) \otimes_{j \in S-T} C^0(L_{\{j\}})\\
    ...\\
    C^{|S|}(L_S) &= \otimes_{i \in S} C^1(L_{\{i\}}).
\end{align*}

For $0 \leq k  \leq |S|$, the $k$-faces of $L_S$ are $L_S(k) = \cup_{T \subseteq S, |T|=k} L_S(T)$, where faces of type $T$ are identified with elements of $\prod_{i \in T} A_i$. The local vector space associated to a face $f$ of type $T$ is $V_f = \mathbb{F}_q^{\prod_{j \in S-T} \hat{A}_j}$. Hence, $C^k(L_S)= \bigoplus_{T \subseteq S: |T|=k} C(L_S(T))$, where $C(L_S(T))= \bigoplus_{f \in \prod_{i \in T} A_i} \mathbb{F}_q^{\prod_{j \in S-T} \hat{A}_j}$.

The incidence structure is given by $f \precdot f'$ iff $f'$ is of type $T'$ such that $|T'|=|T|+1$ and $T \subseteq T'$. The coboundary map of $C(L_S)$ can be verified, for a cochain $x \in C^i(L_S)$ and face $f$ of type $T \subseteq S$ and $|T|=i+1$, to be
\begin{align}
    \delta_S(x)(f)= \sum_{j \in T} \sum_{f' \precdot_j f} (\id_{-j} \otimes h_{j}[f_j]^\top) x(f').
    \label{eq:local-chain-coboundary-map}
\end{align}
Similarly, the boundary map, for a chain $x \in C_{i+1}(L_S)$ and a face $f$ of type $T \subseteq S$ with $|T|=i$, is
\begin{align}
    \partial_S(x)(f) = \sum_{j \in S-T} \sum_{f' \succdot_j f} x(f') \otimes (h_j [f'_j]).
    \label{eq:local-chain-boundary-map}
\end{align}
Here, for brevity and consistency with the notations in~\cite{dinur2024expansion}, we omit the (co)chain level index in the notations $\delta_S, \partial_S$ (i.e., $S$ only labels the local chain complex). In later uses of these notations, it will be clear from context which (co)chain level the maps act on.

The reason we define these local chain complexes is because they are indeed what the faces in the (global) cubical complex `see' locally, as shown in the following lemma.

\begin{lemma}[Local chain structure, Lemma 4.2 in~\cite{dinur2024expansion}]\label{lemma:local-chain}
For any face $f$ of type $S$ in the sheaf complex $C(X, \mathcal{F})$, there is an isomorphism between $C(X_{\geq f})$ and $C(L_{\overline{S}})$. Specifically, for every $T \subseteq \overline{S}$, it holds that $C(X_{\geq f}(S \cup T)) \cong C(L_{\overline{S}}(T))$.
\end{lemma}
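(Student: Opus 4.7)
The plan is to construct an explicit isomorphism $\Phi : C(L_{\overline S}) \to C(X_{\geq f})$ level-by-level and verify that it intertwines the (co)boundary maps. Writing $f = [g_0;\, (a_i)_{i \in S},\, (b_j)_{j \in \overline S}]$, I would define on faces a bijection $\varphi_T : L_{\overline S}(T) \to X_{\geq f}(S \cup T)$ for each $T \subseteq \overline S$: send $e = (e_j)_{j \in T} \in \prod_{j \in T} A_j$ to the unique face $\varphi_T(e) \in X_{\geq f}(S \cup T)$ whose $i$-th coordinate equals $a_i$ for $i \in S$, equals $b_j$ for $j \in \overline S \setminus T$, and equals $e_j$ for $j \in T$, with the group coordinate determined by iterating condition~(3) in the definition of $\precdot_j$ along the directions $j \in T$ with $b_j = 1$. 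Since the sets $\{A_i\}$ pairwise commute, this group coordinate is independent of the order of the up-walks, so $\varphi_T$ is well-defined. A direct count using \Cref{lem:incidence} gives $|L_{\overline S}(T)| = \prod_{j \in T} n_j = |X_{\geq f}(S \cup T)|$, and the local vector spaces on matched faces share the same index set $\prod_{k \in \overline S \setminus T} \hat A_k = \prod_{k \notin S \cup T} \hat A_k$. Summing these canonical identifications produces a linear isomorphism $\Phi : C^k(L_{\overline S}) \to C^{|S|+k}(X_{\geq f})$ for each $k \in \{0,\dots,|\overline S|\}$.

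The main step is to check that $\Phi$ intertwines the coboundary operators (the boundary case is symmetric). Fix a cochain $z \in C(X)$ supported on $X_{\geq f}$ and a face $g = \varphi_T(e)$ of type $S \cup T$. The global formula reads
\[
\delta(z)(g) \;=\; \sum_{i \in S \cup T} \sum_{g' \precdot_i g} \bigl(\id_{-i} \otimes h_i[g_i]^\top\bigr)\, z(g').
\]
For $i \in S$, every $g' \precdot_i g$ has $g'_i \in \{0,1\}$ while $f_i = a_i \in A_i$; hence $g' \not\succeq f$ and $z(g') = 0$, so these terms vanish. For $i \in T$, there are exactly two faces $g' \precdot_i g$ distinguished by $g'_i \in \{0,1\}$, and only the one with $g'_i = b_i = f_i$ lies in $X_{\geq f}$; the other contributes $z(g') = 0$. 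The surviving term carries the coefficient $h_i[g_i]^\top = h_i[e_i]^\top$, which is precisely the coefficient appearing in~\eqref{eq:local-chain-coboundary-map} for $\delta_{\overline S}$. Comparing term-by-term under $\Phi$ yields $\Phi \circ \delta_{\overline S} = \delta \circ \Phi$, and an analogous calculation with~\eqref{eq:local-chain-boundary-map} gives $\Phi \circ \partial_{\overline S} = \partial \circ \Phi$.

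The only real obstacle is notational bookkeeping: one must carefully track face types, distinguish the two possible down-walks along each $i \in T$, and confirm that the group coordinate of $\varphi_T(e)$ is well-defined despite the lack of a canonical order of up-walks. The latter is precisely what forces the commutativity hypothesis on $\{A_i\}$, and it is this commutativity that gives $X_{\geq f}$ the local structure of a Cartesian product over the $|\overline S|$ ``free'' directions in $\overline S$ — matching, by construction, the homological-product structure defining $L_{\overline S}$. Once the set-up is in place, the intertwining of the differentials is essentially automatic because the local operators $h_i[\cdot]^\top$ depend only on the $i$-th face coordinate, which is common to both sides under $\varphi_T$.
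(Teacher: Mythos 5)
Your proof is correct and follows essentially the same approach as the paper's own argument: identify the faces of $X_{\geq f}(S\cup T)$ with tuples in $\prod_{j\in T}A_j$, observe that the attached vector spaces agree, and then check that the restriction of the global (co)boundary formula to $X_{\geq f}$ reproduces the local formula by noting that the terms with $i \in S$ vanish (since $g' \not\succeq f$) and that for $i \in T$ only one of the two down-walks stays in $X_{\geq f}$. The only difference is one of explicitness — the paper's proof asserts that a face in $X_{\geq f}(S\cup T)$ is ``specified by'' a tuple $a_T \in A_T$ without spelling out the group coordinate, whereas you carefully verify that the group coordinate of $\varphi_T(e)$ is determined and well-defined (independent of up-walk order) via the commutativity of the permutation sets $\{A_i\}$; this is a useful clarification but not a different route.
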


\begin{proof}
    Any face $f' \in X_{\geq f}(S \cup T)$ (i.e., the face is in $X_{\geq f}$ and of type $S \cup T$) is specified by some tuple $a_T \in A_T \triangleq \prod_{i \in T} A_i$, i.e., $f'_i=\begin{cases}
        f_i &  i \notin T\\
        a_i & i \in T
    \end{cases}$.
    By definition of the sheaf complex, the local vector space is $V_{f'} =  \mathbb{F}_q^{\prod_{j \notin (S \cup T) \hat{A}_j } } = V_{S \cup T}$. Hence,
    \begin{align*}
        C(X_{\geq f}(S \cup T)) = \bigoplus_{f' \in X_{\geq f}(S \cup T)} V_{f'} = \bigoplus_{a \in \prod_{i \in T} A_i} \mathbb{F}_q^{\prod_{j \notin (S \cup T) \hat{A}_j } },
    \end{align*} 
    which coincides with $C(L_{\overline{S}}(T))$ in the definition of the local chain complex.

    We can straighforwardly verify the (co)boundary maps coincide. Consider the restriction of the coboundary map of the global complex onto $C(X_{\geq f})$. For a cochain $z \in C^k(X)$ and face $f' \in X(k+1)$, it is defined in~\Cref{def:sheaf-complex} that    
    \begin{align*}
        \delta^k (z)(f') = \sum_{i \in \mathsf{type}(f')} \sum_{f'' \precdot_i f'} (\id_{-i} \otimes (h_i[f'_i]^\top)) z(f'').
    \end{align*}
    The restriction onto $C(X_{\geq f})$ reads, for $z \in C^k(X_{\geq f})$ and $f' \succ f$ (such that $\mathsf{type}(f') \supset S$), as
    \begin{align*}
        \delta^k (z)(f') = \sum_{i \in \mathsf{type}(f')-S} \sum_{f \preceq f'' \precdot_i f'} (\id_{-i} \otimes (h_i[f'_i]^\top)) z(f''),
    \end{align*}
    which can be seen to be isomorphic to~\Cref{eq:local-chain-coboundary-map}.

    The isomorphism of the boundary maps can be checked similarly. Recall that the $k$-th boundary map $\partial_k : C_{k}(X,\mathcal{F}) \mapsto C_{k-1}(X,\mathcal{F})$ is defined in~\Cref{def:sheaf-complex} for a chain $z \in C_k(X)$ and $f' \in X(k-1)$ as
    \begin{align*}
        \partial_k (z)(f') = \sum_{i \notin \mathsf{type}(f')} \sum_{f'' \succdot_i f'} z(f'') \otimes h_i[f''_i],
    \end{align*}
    The restriction onto $C(X_{\geq f})$ reads, for $z \in C_k(X_{\geq f})$ and $f' \succeq f$ (such that $\mathsf{type}(f') \supseteq S$), as
    \begin{align*}
        \partial_k (z)(f') = \sum_{i \notin \mathsf{type}(f')} \sum_{f'' \succdot_i f' \succeq f} z(f'') \otimes h_i[f''_i],
    \end{align*}
    which can be seen to be isomorphic to~\Cref{eq:local-chain-boundary-map}.
\end{proof}

\begin{lemma}
\label{lem:exactness-tensorcode} The following properties hold for any $S \in [t]$:
\begin{itemize}
    \item (Exactness) The chain complex $C(L_S)$ is exact at each level $i=0,\hdots, |S|-1$. In other words, for any $x \in C_i(L_S)$ such that $\partial_S(x)=0$, there is $y \in C_{i+1}(L_S)$ such that $\partial_{S}(y)=x$.
    \item (Tensor code) For any $x \in C_{|S|}(L_{S})$ such that $\partial
_S(x)=0$, $x$ must be in the tensor code $\bigotimes_{j \in S} \ker(h_j)$.
\end{itemize}
\end{lemma}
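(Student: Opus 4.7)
My plan is to prove both statements simultaneously by induction on $|S|$, using the fact that $C(L_S)$ is built as the (homological) tensor product of the two-term chain complexes $C(L_{\{j\}})$ for $j \in S$, and then invoking the Künneth-type behavior of such tensor products.

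For the base case $|S|=1$, the complex $C(L_{\{i\}})$ has only two terms: $C_1(L_{\{i\}})=\mathbb{F}_q^{A_i}$ and $C_0(L_{\{i\}})=\mathbb{F}_q^{\hat{A}_i}$, with boundary map given by $h_i$. Exactness at level $0$ is precisely surjectivity of $\partial_{\{i\}}$, which holds since $h_i$ is full row rank by hypothesis. The ``tensor code'' statement at the top level just says $\ker \partial_{\{i\}} = \ker h_i$, which is immediate. This also computes the homology of each factor: $H_0(C(L_{\{i\}}))=0$ and $H_1(C(L_{\{i\}}))=\ker h_i$.

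For the inductive step, write $S = S' \sqcup \{i\}$ and use that $C(L_S)$ is the tensor product of chain complexes $C(L_{S'}) \otimes C(L_{\{i\}})$ (this can be read off the definitions of the chain groups and boundary maps as a sum of terms coming from $S'$-indices and $i$-indices, which is exactly the differential on a tensor product of complexes). The algebraic Künneth formula for chain complexes of vector spaces over a field gives
\begin{equation*}
  H_k\bigl(C(L_S)\bigr) \;\cong\; \bigoplus_{p+q=k} H_p\bigl(C(L_{S'})\bigr) \otimes H_q\bigl(C(L_{\{i\}})\bigr).
\end{equation*}
By the inductive hypothesis, $H_p(C(L_{S'}))$ vanishes except at $p=|S'|$, where it equals $\bigotimes_{j \in S'} \ker h_j$; and by the base case, $H_q(C(L_{\{i\}}))$ vanishes except at $q=1$, where it equals $\ker h_i$. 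Consequently $H_k(C(L_S))=0$ for $k<|S|$ (giving exactness at those levels) and $H_{|S|}(C(L_S)) = \bigotimes_{j \in S} \ker h_j$ (giving the tensor code statement).

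The main thing to check carefully is that the boundary $\partial_S$ defined in \eqref{eq:local-chain-boundary-map} really agrees with the tensor-product differential $\partial_{S'} \otimes \mathrm{id} \pm \mathrm{id} \otimes \partial_{\{i\}}$ (up to the standard sign convention, which is trivial in characteristic $2$). This is a direct unpacking: a face of type $T \subseteq S$ either contains $i$ or not, and the two summands in \eqref{eq:local-chain-boundary-map} -- indices $j \in T \cap S'$ versus the single index $j=i$ when $i \in T$ -- correspond exactly to the two summands of the tensor-product differential. Once this identification is made, the Künneth formula applies verbatim and both conclusions follow. The only potential obstacle is bookkeeping the isomorphisms between the direct-sum decompositions of $C^k(L_S)$ over subsets $T \subseteq S$ of size $k$ and the standard bigraded tensor product; this is routine but slightly notation-heavy.
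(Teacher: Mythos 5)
Your proof is correct, and the exactness part follows essentially the same route as the paper: both invoke the Künneth formula for the homological product $\bigotimes_{i\in S} C(L_{\{i\}})$ and use $H_0(L_{\{i\}})=0$ (full row rank of $h_i$) to kill all lower homology groups. The paper applies Künneth in one shot to the $|S|$-fold product, whereas you induct on $|S|$; these are the same computation.

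For the tensor-code statement, your route is genuinely a little different. You again apply Künneth, reading $H_{|S|}(L_S)\cong\bigotimes_{j\in S}\ker h_j$ off the inductive hypothesis; since $C_{|S|}(L_S)$ is the top chain group, its homology equals $\ker\partial_S$, and the natural Künneth map identifies it with the tensor product of top-degree cycles, which is exactly the tensor code. The paper instead proves this directly and more concretely: it observes that $\partial_S(x)=0$ restricted to any face of type $S\setminus\{i\}$ forces each coordinate-$i$ slice of $x$ to lie in $\ker h_i$, hence $x\in\bigotimes_{j\in S}\ker h_j$. Your argument is more uniform and purely homological; the paper's is more elementary and makes the tensor-code membership transparent without relying on the naturality of the Künneth isomorphism. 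Both are valid. You correctly flagged that the one thing requiring care is verifying that $\partial_S$ from \eqref{eq:local-chain-boundary-map} is the tensor-product differential — this is indeed just bookkeeping over the $T\subseteq S$ decomposition, and trivial over characteristic $2$.
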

\begin{proof}
    For the exactness property, we can apply Künneth theorem because $C(L_S)$ is the homological product of $C(L_{\{i\}})$. Note the trivial homology $\operatorname{dim} H_0(L_{\{i\}}) = 0$. Hence, for $0\leq \ell \leq |S|-1$, we have $\operatorname{dim} H_\ell (L_S) = \sum_{T \subseteq S: |T|=\ell} (\prod_{j \in T}\operatorname{dim} H_1(L_{\{j\}}) (\prod_{j \notin T}\operatorname{dim} H_0(L_{\{j\}}))=0$.

    For the tensor code property, we note that $C_{|S|}(L_S) \cong \mathbb{F}_q^{\prod_{j \in S} A_j}$. For any face $f$ of type $T=S-\{i\} $, according to~\Cref{eq:local-chain-boundary-map} we have
\begin{align*}
    \partial_S(x)(f) = \sum_{f' \succdot_i f} x(f') \otimes (h_i [f'_i])=0.
\end{align*}
    This implies that after fixing $|S|-1$ coordinates specified by $f$, the restriction of $x$ on $f$ is in the kernel of $h_i$. In other words, when viewing $x$ as a tensor in $\mathbb{F}_q^{\prod_{j \in S} A_j}$, it holds that $x[\hdots,f_{i-1},\cdot,f_{i+1},\hdots] \in \ker(h_i)$ for all $f$ and 
    $i$. Therefore, $x \in \bigotimes_{j \in S} \ker(h_j)$.
\end{proof}

\subsection{Robustness property of local maps}

To use the cubical complex for good quantum code constructions, we need a set of local codes that satisfy the \emph{two-way robustness property} defined with respect to the local chain complexes from the previous subsection. Recall that the (co)boundary maps of the cubical complex code construction is obtained from a set of local codes: for $i \in [t]$, let $h_i \in \mathbb{F}_q^{m_i \times n_i}$ be a full-row-rank parity-check matrix of the $i$-th code (we will slightly abuse notation and refer to the code as $h_i$). Below and in the rest of this section, we assume $n_i=n$ for simplicity.

\begin{definition}[Minimality]
\label{def:minimal}
For $1\leq k \leq |S|$, a cochain $x \in C^{k}(L_S)$ is called minimal if $|x +\delta^{k-1}(y)| \geq |x|$ for all $y \in C^{k-1}(L_S)$. Here, the $|\cdot |$ notation refers to the block-Hamming weight.
\end{definition}

\begin{definition}[Robustness]\label{def:robustness} Let $S \subseteq [t]$ and $\ell = |S|$. For $0 \leq k \leq \ell -1$, let $\kappa_{\ell,k}$ be a positive real number. We say that the local complex $C(L_S)$ is $\kappa_{\ell,k}$-robust at level $k$ if for any $x \in C^k(L_S)$ such that $x$ is minimal it holds that
\begin{align*}
    |\delta_S(x)| \geq \kappa_{\ell,k} n |x|.
\end{align*}
We say $C(L_S)$ is $\kappa$-robust if it is $\kappa$-robust at each level.
\end{definition}
Note that for the case $S=\{i\}$ (so $\ell=1$) and $k=0$, the minimality requirement is vacuous and hence the robustness is simply the distance of the code $h_i$.

\begin{definition}[Two-way robustness]
\label{def:two-way-robustness}
Let $\kappa >0$. We say that the set of matrices the collection of $\{h_1,\hdots h_t\}$ is two-way $\kappa$-robust if the following conditions hold
    \begin{itemize}
        \item When based on the matrices $\{h_1,\hdots,h_t\}$ the complex $C(L_S)$ is $\kappa$-robust for each $S \subseteq [t]$ and at each level $k \in 0,\hdots,|S|-1$.
        \item When based on the matrices $\{h_1^\perp,\hdots,h_t^\perp\}$ the complex $C(L_S)$ is $\kappa$-robust for each $S \subseteq [t]$ and at each level $k \in 0,\hdots, |S|-1$.
    \end{itemize}
\end{definition}

It is known that the notion of level-$k$ robustness is equivalent to a related notion of coboundary expansion at level $k$, see Remark 4.7 in~\cite{dinur2024expansion}.
Furthermore, Appendix B of~\cite{kalachev2022two} shows that level-$(|S|-1)$ robustness of $C(L_S)$ coincides with the notion of two-sided product-expansion of the collection of codes $\{h_i\}_{i \in S}$. The existence of such collection of matrices with two-sided product-expansion, over large alphabets, has been recently proved using the probabilistic method~\cite{kalachev2025maximally}.

\begin{theorem}[Product-expansion of random codes~\cite{kalachev2025maximally}] Let $\ell>0$. For each collection of intervals $I_1,\hdots,I_\ell \subseteq (0,1)$, there exists $\rho >0$ such that for all sufficiently large $n \in \mathbb{N}$, taking uniformly random parity check matrices $h_i \in \mathbb{F}_q^{(1-\rho_i)n \times n}$, where $q = 2^{(n+3)^\ell}$, such that $\rho_i \in I_i$ with high probability give rise to $\{h_1,\hdots,h_\ell\}$ and $\{h_1^\perp,\hdots,h_\ell^\perp\}$ that are both $\rho$-product-expanding.
\end{theorem}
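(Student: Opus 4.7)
The plan is to establish this via a probabilistic argument, showing that uniformly random parity-check matrices over a sufficiently large alphabet are product-expanding with high probability, with the field size $q = 2^{(n+3)^\ell}$ chosen large enough to absorb an enumeration cost at the union-bound step. I would proceed by induction on $\ell$. The base case $\ell = 1$ reduces to the standard fact that a uniformly random linear code of rate bounded away from $1$ has relative distance bounded below by a positive constant with overwhelming probability (Gilbert--Varshamov-type concentration), which directly yields $\rho$-robustness at level $0$ for $C(L_{\{i\}})$.

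For the inductive step, I would use the reduction between top-level robustness and two-sided product-expansion recorded in Appendix B of \cite{kalachev2022two}, together with the observation that lower-level robustness of $C(L_S)$ reduces, via partial evaluation, to top-level robustness of $C(L_T)$ for smaller subsets $T \subseteq S$. Concretely, the key object to control is: for each minimal cochain $x \in C^k(L_S)$ (in the sense of \Cref{def:minimal}), the ratio $|\delta_S(x)|/(n|x|)$. A ``bad'' $x$ is one for which this ratio is smaller than the target $\rho$. I would enumerate bad tensors by their block-support pattern and reduce to bounding, for each pattern, the probability over the random $h_i$'s that a specific nontrivial tensor lies in (a quotient of) the tensor code $\otimes_{i \in S} \ker(h_i)$.

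The main probabilistic estimate is a union bound. For each fixed choice of block-supports, the number of candidate minimal cochains is at most $q^{O(\text{support size})}$, while the probability that a specific cochain is in the kernel of a random constraint is $q^{-1}$, and the construction yields roughly $\Theta(n|x|)$ ``expanding'' constraints to violate. The number of support patterns is at most $\binom{n}{w}^{O(1)} \le 2^{O(n \log n)}$, and iterating through $\ell$ levels, the worst-case enumeration cost is $2^{O(n^\ell)}$. Choosing $q = 2^{(n+3)^\ell}$ makes each term $q^{-\Omega(n)}$ much smaller than this enumeration, closing the union bound with room to spare. The two-way statement --- product-expansion of both $\{h_i\}$ and $\{h_i^\perp\}$ --- follows by symmetry: a uniformly random parity-check matrix of rate $1-\rho_i$ is equidistributed with the dual of a uniformly random parity-check matrix of rate $\rho_i$, so the same bound applied to the dual parameters in $I_i' = 1 - I_i$ handles it, and a final union bound over the two bad events costs only a factor of $2$.

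The hard part will be controlling the minimality condition across levels: naively, minimality is a global property that couples the probabilistic analysis across all $S$ and $k$, so one cannot treat each level independently. Following the Kalachev--Panteleev strategy, I would decouple this by replacing minimality with the weaker requirement that $x$ is reduced modulo $\delta^{k-1}(C^{k-1}(L_S))$, and enumerate equivalence-class representatives via a skeleton argument that fixes the support structure and then varies only the field coefficients. This skeleton enumeration is where the exponential field size is essential, and verifying that the chosen $q = 2^{(n+3)^\ell}$ suffices for all $(S,k)$ simultaneously, including handling the recursive dependence on $\ell-1$ coming from partial evaluations, is the main bookkeeping challenge.
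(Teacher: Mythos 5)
The paper does not prove this theorem; it cites it verbatim as an external result of Kalachev and Panteleev (\cite{kalachev2025maximally}), so there is no internal argument of the paper to compare your proposal against. You are effectively reconstructing a proof from a different paper.

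On its own merits, your high-level plan is plausible --- probabilistic method, union bound over bad tensors, the exponential field size $q$ covering the enumeration cost, and a joint union bound to get the two-way (primal and dual) conclusion --- but the two hardest steps are asserted rather than argued, and both are exactly where such proofs typically break. First, the reduction ``lower-level robustness of $C(L_S)$ reduces, via partial evaluation, to top-level robustness of $C(L_T)$ for $T \subseteq S$'' is stated as an observation but never spelled out; in this paper that kind of bootstrapping from top-level product-expansion down to all-level robustness is a separate, non-trivial theorem (the DLV result, \cref{thm:DLV-robustness} / \S4.3 of \cite{dinur2024expansion}), not a one-line reduction, and it points the opposite way from how you seem to invoke it: it derives lower-level robustness \emph{from} top-level product-expansion, whereas the theorem you are proving only needs top-level product-expansion. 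You may be trying to prove more than the statement asks. Second, your enumeration-cost accounting is the crux of why $q = 2^{(n+3)^\ell}$ is needed, and it is hand-waved. You write that ``iterating through $\ell$ levels, the worst-case enumeration cost is $2^{O(n^\ell)}$'' without saying what is being iterated or why each level multiplies the exponent by $n$; the decoupling of the minimality condition via a ``skeleton argument'' is similarly named but not carried out, and minimality is precisely the part that makes a naive cochain-by-cochain union bound fail (the set of bad $x$ depends on the random $h_i$ through the minimality constraint itself, which cannot be ignored).

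Two smaller points. In the base case, level-$0$ robustness of $C(L_{\{i\}})$ is the minimum distance of the \emph{dual} code generated by $h_i^\top$, not of $\ker h_i$; you need to control both for the two-way statement anyway, and for a uniformly random matrix both are fine, but state it precisely. And for the duality step, you are right that the two bad events (for $\{h_i\}$ and $\{h_i^\perp\}$) are handled by a union bound, but be careful: ``the dual of a uniformly random code is a uniformly random code'' gives equidistribution of the \emph{marginals}, which lets you reuse the single-collection bound; it does not give independence, which your phrasing flirts with.

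In short: since there is no paper proof to check against, the honest verdict is that your sketch is a reasonable outline of what a proof \emph{might} look like, but the inductive/recursive structure and the field-size bookkeeping --- the only parts that are genuinely hard --- are gestured at rather than established.
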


Although the alphabet is exponentially large in the codeword length, this result will suffice for applications of constructing quantum codes because we will only need constant-sized two-way robust local codes. Moreover, product-expansion a priori seems to be weaker than robustness (\Cref{def:two-way-robustness} asks for robustness at all levels, while product-expansion only concerns level $|S|-1$). However,~\cite[Section 4.3]{dinur2024expansion} shows that two-way robustness is indeed implied by product-expansion, establishing the following result.

\begin{theorem}[Existence of two-way robust codes~\cite{dinur2024expansion}]
\label{thm:DLV-robustness}
For each collection of intervals $I_1,\hdots,I_t \subseteq (0,1)$, there exists $\kappa >0$ such that for all sufficiently large $n \in \mathbb{N}$ there exist parity check matrices $h_i \in \mathbb{F}_q^{(1-\rho_i)n \times n}$, where $q = 2^{(n+3)^t}$, such that $\rho_i \in I_i$ and $\{h_1,\hdots,h_t\}$ is two-way $\kappa$-robust.
\end{theorem}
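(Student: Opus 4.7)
The plan has two essentially independent ingredients. First, I would apply the Kalachev--Panteleev product-expansion theorem (the theorem immediately preceding this one in the excerpt) simultaneously to $\{h_i\}$ and to $\{h_i^\perp\}$. Concretely, for each $i$ fix an open subinterval $I_i' \subseteq I_i$ and sample uniformly random parity-check matrices $h_i \in \mathbb{F}_q^{(1-\rho_i)n\times n}$ with $\rho_i \in I_i'$ and $q = 2^{(n+3)^t}$. Applying the product-expansion theorem to $\{h_i\}_{i\in S}$ and to $\{h_i^\perp\}_{i\in S}$ for each $S \subseteq [t]$, and union-bounding over the $2\cdot 2^{t}$ failure events, one obtains with positive probability (in fact with probability tending to $1$) for all sufficiently large $n$ a single collection such that both $\{h_i\}_{i\in [t]}$ and $\{h_i^\perp\}_{i\in [t]}$ are $\rho$-product-expanding on every subset $S \subseteq [t]$, for some constant $\rho = \rho(I_1,\dots,I_t) > 0$ independent of $n$.

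Second, I would upgrade this product-expansion to the full two-way $\kappa$-robustness of Definition~\ref{def:two-way-robustness}. Since the primal and dual sides are symmetric, it suffices to handle $\{h_i\}$. By the equivalence recalled after Definition~\ref{def:two-way-robustness} (Appendix B of \cite{kalachev2022two}), $\rho$-product-expansion of $\{h_i\}_{i\in S}$ is precisely $\rho$-robustness of $C(L_S)$ at its top level $k=|S|-1$. For robustness at a lower level $k < |S|-1$, I would argue by induction on $|S|$, using the Künneth-type decomposition $C(L_S) = \bigotimes_{i\in S} C(L_{\{i\}})$ made explicit via Lemma~\ref{lemma:local-chain}. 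A minimal cochain $x \in C^k(L_S)$ decomposes along the $\binom{|S|}{k}$ type-components indexed by $T \subseteq S$, $|T|=k$, and each type-component can be viewed as a cochain in a top-level sub-complex isomorphic to $C(L_T)$ tensored with the spaces attached to the remaining directions. Applying the already-established top-level robustness to each such slice and summing the contributions to $\delta_S(x)$ across slices yields the required lower bound $|\delta_S(x)| \geq \kappa_{|S|,k}\, n \,|x|$ with a constant $\kappa_{|S|,k}$ depending only on $\rho$ and $t$.

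The main obstacle is the inductive step in the second part: one must argue that global minimality of $x$ in $C^k(L_S)$ translates into minimality of each slice inside its sub-complex $C(L_T)$, so that top-level robustness of $C(L_T)$ can actually be applied. Concretely, one has to show that any coboundary $\delta_S(y)$ that could reduce $|x|$ restricts, on each $T$-slice, to either a local coboundary (absorbed by the minimality assumption on the slice) or to a modification that cannot cancel too many blocks simultaneously across slices. This is precisely the content of Section~4.3 of~\cite{dinur2024expansion}, which we invoke as a black box; together with the union-bound step above it yields the theorem with $\kappa = \min_{S,k}\kappa_{|S|,k} > 0$ depending only on the intervals $I_1,\dots,I_t$.
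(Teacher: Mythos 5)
Your proposal follows the same route the paper itself takes: the paper does not prove this theorem but simply cites it, attributing the existence of product-expanding local codes to \cite{kalachev2025maximally} and the implication from product-expansion to full two-way $\kappa$-robustness (at all levels $k$, for all $S\subseteq[t]$) to \cite[Section 4.3]{dinur2024expansion}. Your two ingredients are exactly these two citations.

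Two minor observations worth noting. First, the Kalachev--Panteleev theorem as quoted already produces, with high probability, a single random ensemble $\{h_i\}$ such that \emph{both} $\{h_1,\dots,h_\ell\}$ and $\{h_1^\perp,\dots,h_\ell^\perp\}$ are $\rho$-product-expanding; you do not need a separate application or union bound for the primal and dual sides, only over the (constantly many) subsets $S\subseteq[t]$. Second, your inductive sketch for upgrading top-level robustness to lower-level robustness is not quite how the decomposition goes: the type-$T$ component of $C^k(L_S)$ is the \emph{top} cochain space of $C(L_T)$ tensored with the remaining $C^0$ factors, so the coboundary $\delta_S$ does not act as a coboundary within each slice $C(L_T)$ — the actual argument in \cite[Section~4.3]{dinur2024expansion} uses a different induction. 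But you explicitly defer this step to that reference as a black box, which is what the paper does as well, so the overall structure of your argument is correct and matches the paper's.
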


\subsection{Proof of qLTC decoder statement}
In this section, we first summarize the results from~\cite{dinur2024expansion} that give lowerbounds on the distance and soundness of a almost-c3 qLTC. We then summarize the single-shot decoding results proved in~\Cref{sec:Z-decoder} and~\Cref{sec:X-decoder}, generally stated in terms of the spectral expansions of the underlying graphs $(G,A_i)$ and the robustness of the local codes. For convenience in proving constant-overhead fault-tolerant computation, we instantiate the construction with an abelian-lifted expander family from~\cite{jeronimo2021explicit}, such that the growth rate of the code family is bounded as required in~\Cref{thm:single-shot-qltc}.

\subsubsection{Distance and soundness results from~\cite{dinur2024expansion}}
The main result of~\cite{dinur2024expansion} includes general lowerbounds on (co-)systolic distance and (co)-cycle expansion of length-$(t+1)$ chain complexes of the form $C(G;\{A_i\})$ described in the previous subsections. Their bounds are stated in terms of the spectral expansion of the graphs $(G,A_i)$ and the two-way robustness of the local codes $\{h_i\}$. These lowerbounds in turn imply to quantum code distance and quantum LTC soundness.

\begin{theorem}[Theorem 3.6 and Corollary 3.7 in~\cite{dinur2024expansion}]
\label{thm:DLVmain}
Let $t \geq 4$ and fix $2 \leq i \leq t-2$ and consider a finite field $\mathbb{F}_q$ of characteristic 2. Let $G$ be a set and $\{A_1,\hdots,A_t\}$ be collections of pairwise commuting permutations on $G$ such that the graphs $\operatorname{Cay}(G,A_j)$ are each $\lambda$-expanding up to size $r|G|$. Let $N=|G|$ and $n=|A_1|=\hdots=|A_t|$. Let $\{h_1,\hdots,h_t\}$ be a collection of full-row-rank matrices $h_j \in \mathbb{F}_q^{m_j \times n}$ satifying the two-way $\kappa$-robustness property from~\Cref{def:two-way-robustness}. Let $C^*(G;\{A_j\};\{h_j\})$ be the length-$(t+1)$ chain complex over $\mathbb{F}_q$ from~\Cref{def:sheaf-complex}. Then the dimension of the space $C^i$ (and also $C_i$) is $|C^i(X)|=Nn^i 2^{t-i} \sum_{T \in [t]: |T|=t-i} \prod_{j \in T} m_j$.

Furthermore, there exist choices
$m_1,\hdots,m_t$ for the local codes such that the following holds.  
Suppose that $\lambda < \kappa^t 2^{-\Theta(t^2)}$. We can define a $\mathbb{F}_q$-qudit CSS code by taking $\delta^i$ and $\partial_i$ as the X- and Z-type checks, respectively such that the following holds. Under a self-dual basis of $\mathbb{F}_q$, we obtain a qubit CSS code $\mcQ$ with checks of weight $O(t n \log_2 q)$ and parameters $[[N_\mcQ,K_\mcQ, D_\mcQ]]$ where
\begin{enumerate}
    \item $N_\mcQ= (\log_2 q) N \Theta(n)^t$,
    \item $K_\mcQ= \Omega((\log_2 q) N_\mcQ)$,
    \item $D_\mcQ = rN(nt)^{-O(t)} \exp(-O(t^2))$.
\end{enumerate}
Moreover, $\mcQ$ is a quantum LTC with soundness $\rho \geq r (\log_2 q)^{-1} (nt)^{-O(t)} \exp(-O(t^2))$.
\end{theorem}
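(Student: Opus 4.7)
The plan is to follow the Dinur--Lin--Vidick strategy, which splits the proof into a dimension/weight audit, a local-to-global (co)systolic expansion argument for distance and soundness, a rate calculation, and finally a conversion from $\mathbb{F}_q$-qudits to qubits via a self-dual basis. I would first fix the parameters $m_j$ so that the two-way $\kappa$-robustness of $\{h_1,\dots,h_t\}$ guaranteed by \Cref{thm:DLV-robustness} holds \emph{and} so that each $\ker h_j$ (and each $\ker h_j^\perp$) has constant rate; this is what eventually buys the $\Omega((\log_2 q) N_{\mcQ})$-rate claim. The dimension formula $|C^i(X)|=Nn^i2^{t-i}\sum_{|T|=t-i}\prod_{j\in T}m_j$ is immediate from the incidence count in \Cref{lem:incidence} together with the assignment of sheaves to faces (a face of type $S$ carries a space of dimension $\prod_{j\notin S}m_j$). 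The LDPC check-weight bound of $O(tn\log_2 q)$ can then be read off the differentials $\delta^i$, $\partial_i$ in \Cref{def:sheaf-complex}: each higher face is incident to $\le 2(i+1)$ lower faces, each $h_j$ has row/column weight $\le n$, and the self-dual qubit expansion multiplies by $\log_2 q$.

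The central technical step is the local-to-global expansion argument that controls both distance and soundness. For distance, I would take a nontrivial class in $\ker\partial_i/\Img\partial_{i+1}$, pick its minimum block-Hamming weight representative $z$ in the sense of \Cref{def:minimal}, and assume for contradiction $|z|<D_{\mcQ}$. For each vertex $v$, \Cref{lemma:local-chain} identifies the restriction $z|_{X_{\ge v}}$ with a chain in the product local complex $C(L_{\overline{\emptyset}})$, where two-way $\kappa$-robustness (\Cref{def:two-way-robustness}) together with the exactness/tensor-code description of \Cref{lem:exactness-tensorcode} produces a local ``primitive'' $y_v$ satisfying $\partial y_v=z|_{X_{\ge v}}$ with weight controlled by a factor $1/\kappa$. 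The global assembly step stitches the $\{y_v\}$ into a true primitive of $z$ modulo $\Img\partial_{i+1}$; consistency defects live on edges of the Cayley graphs $\mathrm{Cay}(G,A_j)$, and here the spectral bound $\lambda<\kappa^t 2^{-\Theta(t^2)}$ is used through the expander mixing lemma to close a $t$-step induction, with the ``up-to-size-$r|G|$'' hypothesis contributing the factor $r$ and producing the combined loss $(nt)^{-O(t)}\exp(-O(t^2))$. This contradicts minimality and yields the distance bound. Soundness at level $i$ is the dual statement: for any $z\in C^i$, replace $z$ with the minimum representative of $z+\Img\delta^{i-1}$ and run the symmetric argument on $\delta^i$, applying two-way robustness to the cocycle direction to conclude $|z|\le |\delta^i z|/\rho$ for the quoted $\rho$. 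The hard part, and the main obstacle, is this induction: the $t$-fold alternation between local robustness (at each intermediate level of $C(L_S)$ for $|S|\le t$) and spectral expansion needs very careful constant bookkeeping so that the $\kappa^t 2^{-\Theta(t^2)}$ threshold on $\lambda$ is genuinely sufficient at every link simultaneously.

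For the rate, I would combine the exactness of $C(L_S)$ at non-top levels (\Cref{lem:exactness-tensorcode}) with the dimension formula from the first step and a K\"unneth-style alternating sum to compute the Euler characteristic contribution at level $i$; choosing the $m_j$ close to $n/2$ (inside an interval where two-way robustness persists) forces $\dim H_i=\Omega(|C^i(X)|)$, which, combined with the distance bound, pins down $K_{\mcQ}$ at constant rate. Finally, the qudit-to-qubit conversion uses a self-dual $\mathbb{F}_2$-basis $\{b_s\}_{s\in[\log_2 q]}$ of $\mathbb{F}_q$, so that $\mathrm{Tr}(b_sb_{s'})=\delta_{ss'}$ and the $\mathbb{F}_q$-symplectic form descends to the standard $\mathbb{F}_2$-symplectic form; expanding each qudit into $\log_2 q$ qubits transforms the CSS code $CSS(\delta^i,\partial_i)$ into a qubit CSS code with parameters $N_{\mcQ}=(\log_2 q)\cdot|C^i(X)|=(\log_2 q)N\,\Theta(n)^t$, $K_{\mcQ}=\Omega((\log_2 q)N_{\mcQ})$, and check weights $O(tn\log_2 q)$, while the qudit distance and soundness transfer up to a benign $\log_2 q$ factor, absorbed into the stated bound on $\rho$. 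Collecting these pieces yields exactly the statement of \Cref{thm:DLVmain}.
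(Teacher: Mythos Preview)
The paper does not prove this theorem at all: it is stated as an imported result (Theorem~3.6 and Corollary~3.7 of~\cite{dinur2024expansion}) and used as a black box. So there is no ``paper's own proof'' to compare against. Your proposal is a reasonable high-level reconstruction of the Dinur--Lin--Vidick argument, and the easy parts (the dimension formula via \Cref{lem:incidence}, the check-weight bound, the self-dual basis conversion) are correct as stated.

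That said, your sketch of the distance/soundness argument is where the real content lies, and it is too coarse to be a proof. You describe a single ``global assembly step'' that stitches the local primitives $\{y_v\}$ together using expansion on the Cayley graphs. The actual DLV argument is a multi-level induction up the cubical complex: the inconsistencies among the $y_v$ live at level~$i+1$, and resolving them requires iterating the same local-exactness-plus-expansion step up through levels $i+1,i+2,\dots,t$, with each step consuming a factor of $\kappa$ and accumulating the $(nt)^{-O(t)}\exp(-O(t^2))$ loss. (The paper's own X-decoder proof in \Cref{prop:noiseless-Xdecoder}, which \emph{is} worked out in detail, exhibits exactly this structure via the ``syndrome explanation sequence'' of \Cref{def:tracing} and \Cref{fig:tracing}.) Your phrase ``$t$-step induction'' hints at this, but the proposal reads as if a single application of the expander mixing lemma on edges suffices, which it does not. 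You correctly flag this as the main obstacle; to turn the plan into a proof you would need to spell out the level-by-level bookkeeping, or simply cite~\cite{dinur2024expansion} as the present paper does.
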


If $n, t, q, r$ are all constant then  this result yields a qLTC family with constant weight parity checks, constant rate, relative distance and soundness. While the parameters $n, t, q$ can be chosen to be constants, it is currently unknown how to construct a high-dimensional cubical complex with the required expansion properties for constant $r$.

\subsubsection{Proof of~\Cref{thm:single-shot-qltc}}\label{subsec:proof-single-shot-qltc}
For convenience, we restate the definition and theorem.
\definitionSingleShotDecoding*

\theoremAlmostGoodQltcWithDecoder*

In~\Cref{sec:Z-decoder} and~\Cref{sec:X-decoder}, we will provide a linear-time single-shot decoder for quantum codes from the cubical complex construction for \emph{any} dimension $t$. The results there will in particular apply for the case DHLV code ($t=2$) and the described qLTC construction ($t=4$). Here, we will summarize the results proved there, and combine them with an explicit instantiation of the cubical complex construction to prove~\Cref{thm:single-shot-qltc}. To obtain the almost-c3 qLTC family with extra properties as required in~\Cref{thm:single-shot-qltc}, we instantiate the underlying cubical complex using a family of explicit abelian-lifted expanders from~\cite{jeronimo2021explicit}. This instantiation was described in~\cite{dinur2024expansion}. However, for constant-overhead fault tolerance applications, we need a better control on the growth rate of the constructed qLTC family, so we use a result from~\cite{jeronimo2021explicit} different from the choice used in~\cite{dinur2024expansion}.

Let $\ell$ be a positive integer and $H$ be a subgroup of the symmetric group $\operatorname{Sym}(\ell)$, recall that the $(H,\ell)$-lift of a (base) graph $G_0$ on $n$ vertices is a graph $G$ on $n \ell$ vertices where each vertex $v$ of $G_0$ is replaced by $\ell$ copies $(v,1),\hdots,(v,\ell)$ and for every edge $e=(u,v)$ of $G_0$ there are edges between $(u,i)$ and $(v,h_e(i))$ for $i \in [\ell]$, and where $h_e \in H$ is a pre-specified element assigned to the edge $e$ (such specification is called an $(H,\ell)$-signing of $G_0$). Importantly, the lifted graph $G$ has three useful properties. First, it has the same degree as the base graph. Second, if $H$ is abelian, then it inherits symmetries of $H$. Third, random lifts often have similar expansion properties to $G_0$ with high probability. For example, Agarwal et al.~\cite{agarwal2019expansion} showed that random $(\mathbb{Z}_\ell,\ell)$-lifts with $\ell \leq \exp(\Theta(n))$ are expanding. For this reason, lifting is a common way of obtaning large expanders from small ones and extensive works have been devoted to derandomizing lifts. Jeronimo et al.~\cite{jeronimo2021explicit} obtained the following derandomized abelian lift. 

\begin{lemma}[Implicit from Theorem 1.1 in~\cite{jeronimo2021explicit}]
\label{lem:abelian-lift}
There exists a universal constant $\delta \in (0,1)$ such that the following holds. For any large enough $n$ and any $\ell \leq \exp(n^\delta)$, given the generating elements of a transitive abelian group $H \leq \operatorname{Sym}(\ell)$, and any constant $\varepsilon \in (0,1)$, we can construct in deterministic polynomial time, a $d$-regular graph $G$, where $d \geq d_0(\varepsilon)$, on $\Theta(n\ell)$ vertices such that
$G$ is a $(H,\ell)$-lift of a graph $G_0$ on $\Theta(n)$ vertices and the normalized second eigenvalue of $G$ is $\lambda(G) \leq \varepsilon$.
\end{lemma}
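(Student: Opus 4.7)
This lemma is a reformulation of the main construction of~\cite{jeronimo2021explicit}, so the plan is to carefully invoke their theorem and verify that the parameter regime stated here can indeed be extracted from it. First, I would recall their setup: given a base graph $G_0$ and a transitive abelian group $H \leq \operatorname{Sym}(\ell)$, one seeks a signing $\{h_e\}_{e \in E(G_0)}$ with $h_e \in H$ such that the resulting $(H,\ell)$-lifted graph $G$ has normalized second eigenvalue at most $\varepsilon$. The randomized baseline is that a uniformly random signing yields near-Ramanujan expansion (cf.~\cite{agarwal2019expansion}), and the key contribution of~\cite{jeronimo2021explicit} is to derandomize this using a pseudorandom generator that fools the expected characteristic polynomial of the lift.

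Next, I would instantiate the construction by choosing a base expander $G_0$ on $\Theta(n)$ vertices of constant degree $d \geq d_0(\varepsilon)$ (e.g.\ a Ramanujan LPS graph or an explicit near-Ramanujan family), with $d_0(\varepsilon)$ large enough that the expansion of $G_0$ together with the expansion gain from the random-like signing guarantees $\lambda(G) \leq \varepsilon$. Their eigenvalue analysis bounds $\lambda(G)$ by a sum of a term $\lambda(G_0)$ and a ``lift contribution'' term, both of which can be driven below $\varepsilon/2$ by choosing $d$ sufficiently large in terms of $\varepsilon$.

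Third, I would verify the deterministic polynomial-time construction and the allowed range of $\ell$. The Jeronimo--Mittal--O'Donnell--Srivastava signing uses a pseudorandom generator with seed length polynomially bounded in $\log(n\ell)$, so one can exhaustively search for a good seed in time $\poly(n\ell)$. This is precisely why $\ell$ may grow as large as $\exp(n^\delta)$ for some universal constant $\delta \in (0,1)$: the seed length must remain polynomial in $n$, so $\log \ell$ must be at most a fixed polynomial in $n$, which yields $\ell \leq \exp(n^\delta)$. The transitivity and abelianness of $H$ are used in their trace-method analysis of the expected characteristic polynomial, so the construction works for arbitrary transitive abelian $H$, not just cyclic groups.

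The main obstacle would be matching parameter regimes precisely --- specifically, checking that the universal $\delta$ coming from the seed length of their PRG covers the range $\ell \leq \exp(n^\delta)$ claimed here, and that the construction can be phrased so that $\varepsilon$ is taken as an arbitrary constant (rather than tending to $2\sqrt{d-1}/d$ as in a Ramanujan statement). Both are accomplished by the final parameter tradeoff in their Theorem 1.1: one fixes $\varepsilon$, chooses $d = d_0(\varepsilon)$ large enough to beat the spectral target, and then the PRG seed length determines the largest admissible $\ell$.
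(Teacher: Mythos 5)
Your proposal matches the paper's approach at the level of ideas: invoke the derandomized abelian-lift construction of Jeronimo et al., choose a suitable base expander on $\Theta(n)$ vertices of degree $d_0(\varepsilon)$, and observe that the PRG seed length bounds the admissible $\ell$ to $\exp(n^\delta)$ with polynomial-time exhaustive search over seeds. The paper itself gives only a citation-level justification, so there is no detailed proof to compare against line-by-line.

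One subtlety you should flag more carefully: you offer ``a Ramanujan LPS graph'' as a candidate base graph, but LPS graphs exist only at a sparse set of sizes (governed by prime parameters), and the lemma needs a base graph on $\Theta(n)$ vertices for \emph{every} sufficiently large $n$ --- this density is exactly what lets the downstream code family satisfy the bounded growth rate $N_{i+1}/N_i < c$ used in \cref{thm:single-shot-qltc}. The paper notes that this density property is not stated in Theorem 1.1 of \cite{jeronimo2021explicit} but is extractable from the proof in their Section 5.2, which relies on the explicit near-Ramanujan family of \cite{mohanty2020explicit} (which does achieve all sizes). Your alternative suggestion of ``an explicit near-Ramanujan family'' is the correct one, but you should drop the LPS suggestion and explicitly verify that the chosen base family is dense in sizes, since this is the one place where the lemma genuinely goes beyond the literal statement of the cited theorem.
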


Above, the fact that the base graph can be chosen to have size $\Theta(n)$ for any large enough $n$ is important to control the qLTC family growth rate. This is not stated in~\cite{jeronimo2021explicit}'s Theorem 1.1, but can be found in its proof in their section 5.2, which is in turn based on~\cite{mohanty2020explicit}.

As a corollary, we obtain a cubical complex family with a similar growth rate.

\begin{cor}
\label{cor:cubical-complex}
Let $t\geq 2$ and $\lambda \in (0,1)$. There exists an $i$-indexed explicit cubical complex family of the form $(G,\{A_1,\hdots,A_t\})$ as in~\Cref{def:cubical-complex} such that its size grows as $|G_i|=2^{O(i^{\delta'})}$ for some constant $\delta'<1$ and $|A_1|=\hdots=|A_t|=n$. Furthermore, for each $t \in [t]$, we have $n= n(\lambda)$ and the graph $\operatorname{Cay}(G,A_j)$ is $\lambda$-expanding up to size $r|G|$ where $r = \Theta((\log |G|)^{-(t-1)/\delta})$, where $\delta<1$ is the universal constant from~\Cref{lem:abelian-lift}.
\end{cor}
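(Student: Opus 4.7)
The plan is to specialize the cubical-complex construction of~\cite[Section 6]{dinur2024expansion} and re-parameterize the underlying abelian lifts from~\Cref{lem:abelian-lift} so as to additionally achieve the growth rate $|G_i|=2^{O(i^{\delta'})}$ with bounded geometric ratio $|G_{i+1}|/|G_i| = O(1)$ required by~\Cref{thm:single-shot-qltc}.

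Concretely, I would fix an inner expansion target $\lambda' = c\lambda$ for a small absolute constant $c$ absorbing the losses from the reductions below, and apply~\Cref{lem:abelian-lift} with $\varepsilon = \lambda'$ to obtain a fixed degree $n = n(\lambda)$ and, for each sufficiently large base size $m$ and lift parameter $\ell \in [1, \exp(m^\delta)]$, an explicit $n$-regular $\lambda'$-expander $\Gamma_{m,\ell}$ on $\Theta(m\ell)$ vertices admitting a free abelian lift action on each fibre. Following the recipe of~\cite[Section 6]{dinur2024expansion}, I would then assemble the cubical-complex datum $(G, \{A_1, \ldots, A_t\})$: the vertex set $G$ is built from a suitable product/quotient of $V(\Gamma_{m,\ell})$, and the $t$ pairwise-commuting permutation sets $A_j$ are obtained from the two-sided multiplication action on a Cayley presentation underlying the abelian lift, using the standard associativity argument to guarantee elementwise commutation of permutations from different $A_j$. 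The $\lambda$-expansion of each $\operatorname{Cay}(G, A_j)$ restricted to its connected components is inherited from the $\lambda'$-expansion of $\Gamma_{m,\ell}$ via a standard spectral quotient-graph argument, the constant-factor losses being absorbed into the choice of $c$. A direct computation with the extremal choice $\ell = \exp(\Theta(m^\delta))$ shows that each such Cayley graph decomposes into $\Theta(m^{t-1}) = \Theta((\log|G|)^{(t-1)/\delta})$ isomorphic components, yielding $r = \Theta((\log|G|)^{-(t-1)/\delta})$ as claimed.

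To meet the growth-rate bound, I would exploit the freedom~\Cref{lem:abelian-lift} provides in choosing $m$ as any sufficiently large integer and $\ell$ as any integer in $[1, \exp(m^\delta)]$: the achievable complex sizes $|G_{m,\ell}|$ form a sufficiently dense subset of $\mathbb{N}$ that a greedy enumeration yields an indexed family $\{|G_i|\}$ with $|G_{i+1}|/|G_i| \le c_0$ for some absolute $c_0 > 1$, while $\log|G_i| = O(m_i^\delta) = O(i^{\delta'})$ for some $\delta' < 1$. The main technical obstacle is verifying the spectral quotient-graph step quantitatively: one must show that each $\operatorname{Cay}(G, A_j)$ retains $\lambda$-expansion on components whose union covers the entire vertex set. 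This follows the lines of~\cite[Section 6]{dinur2024expansion}, and since $\lambda$ is a fixed constant the $O(t)$ reduction losses are already absorbed into the choice of $c$ above.
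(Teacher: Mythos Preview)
Your outline is too vague to match the actual construction, and the mechanisms you invoke (two-sided multiplication for commutation, a spectral quotient-graph argument for expansion) are not the ones used and suggest you have a different, more complicated picture in mind than what is needed.

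The paper's construction is entirely concrete: take the $(H,\ell)$-lift of an $n$-regular base graph $G_0$ on $n_0$ vertices with $\ell=\exp(n_0^\delta)$, and set $G=H\times V_0^{\times t}$. The permutation $\pi_j^i\in A_j$ acts by $(h,v_1,\dots,v_t)\mapsto(s((v_j,v_j'))\,h,\,v_1,\dots,v_j',\dots,v_t)$, where $v_j'$ is the $i$-th neighbor of $v_j$ in $G_0$. Commutation of $A_j$ and $A_{j'}$ for $j\neq j'$ is immediate: they act on distinct $V_0$-coordinates, and on the $H$-coordinate both multiply by elements of the \emph{abelian} group $H$. No ``two-sided multiplication'' or associativity trick is involved.

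With this choice, $\operatorname{Cay}(G,A_j)$ is \emph{literally} the disjoint union of $|V_0|^{t-1}$ copies of the $(H,\ell)$-lift of $G_0$ (one for each fixed tuple $(v_i)_{i\neq j}$). Hence the $\lambda$-expansion of each component is inherited verbatim from \Cref{lem:abelian-lift} with no loss, no quotient argument, and no need for a slack constant $c$. The component count gives $r=|V_0|^{-(t-1)}=\Theta((\log|G|)^{-(t-1)/\delta})$ directly, since $\log|G|=\Theta(n_0^\delta)$.

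For the growth rate, the paper simply fixes $\ell=\exp(n_0^\delta)$ and varies $n_0$ over all sufficiently large integers; then $|G|=n_0^t\exp(n_0^\delta)=2^{O(n_0^{\delta'})}$ and bounded ratio $|G_{i+1}|/|G_i|$ follows from $n_0\mapsto n_0+1$. Your two-parameter $(m,\ell)$ enumeration is unnecessary.
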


\begin{proof}
Consider the expander family from~\Cref{lem:abelian-lift}. Take the base graph to be an $n$-regular $G_0=(V_0,E_0)$ on $n_0=|V_0|$ vertices. Let $s: E_0 \mapsto H$ be the edge signing associated to the $(H,\ell)$-lift in~\Cref{lem:abelian-lift} 
with $\ell=\exp(n_0^\delta)$. The underlying vertex set of the cubical complex is defined to be $G= H \times (V_0)^{\times t}$. The permutation set $A_j = \{\pi_j^1,\hdots,\pi_j^n\}$ is defined as
\begin{align*}
    (h,v_1,\hdots,v_j,\hdots,v_t) \overset{\pi_j^i}{\longrightarrow} (s((v_j,v'_j))h,v_1,\hdots,v'_j,\hdots,v_t),
\end{align*}
where $v'_j$ is the $i$-th neighbor of $v_j$ in $G_0$. In other words, the graph $\operatorname{Cay}(G,A_j)$ is simply $|V_0|^{t-1}$ disjoint copies of the $(H,\ell)$-lift of $G_0$ with signing function $s$. Hence, by choosing the degree $n$ of $G_0$ to be a sufficiently large constant $n(\lambda)$ (whose exact dependence is from~\Cref{lem:abelian-lift}) we are guaranteed that $\operatorname{Cay}(G,A_j)$ is $\lambda$-expanding up to size $r|G|$, with $r = |V_0|^{-(t-1)}=\Omega((\log |G|)^{-(t-1)/\delta})$. For the growth rate of this cubical complex family, we have $|G|=n_0^t \exp(n_0^\delta)= 2^{O(n_0^{\delta'})}$. The growth rate follows from the fact that this construction works for any sufficiently large $n_0$.

\end{proof}

We are now ready to prove~\Cref{thm:single-shot-qltc}.

\begin{proof}[Proof of~\Cref{thm:single-shot-qltc}]
    The cubical complex with appropriate growth rate is from~\Cref{cor:cubical-complex}. The parameter $r= 1/\polylog(|G|)=1/\polylog(N)$. 
    Importantly the degree $n=O(1)$ can be chosen to be a sufficiently large constant such that the expansion $\lambda$ is small enough compared to the two-way robustness $\kappa$. This is possible because the universal constant $\kappa$ from~\Cref{thm:DLV-robustness} does not depend on $n$ (only on $t$), whereas $\lambda$ goes to zero as $n$ grows. Because $n$ is a constant, we can exhaustively search for a tuple $\{h_1,\hdots,h_t\}$ that is two-way $\kappa$-robust whose existence is guaranteed by~\Cref{thm:DLV-robustness}
    
    From~\Cref{thm:DLVmain} we have
    the rate, distance, and LTC soundness guarantees.
    For example, we take $t=4$ and place the qubits at level 2 of the above cubical complex construction. The alphabet size from~\Cref{thm:DLV-robustness} is $q=2^{(n+3)^{t}}=O(1)$.
    The LDPC sparsity of the code is roughly $\Delta= n^{O(t)}=O(1)$. The rate is also constant. The relative distance is $\Theta(r)=1/\polylog(N)$ and so is the soundness.

    The linear-time single-shot sequential decoder is given in~\Cref{prop:noisy-Zdecoder} (Z syndrome) and~\Cref{prop:noisy-Xdecoder} (X syndrome). There it is shown that the decoder is $(\alpha, \beta(N), \gamma, 0)$-single-shot, where $\alpha, \gamma$ are constants and $\beta = \Theta(r) = 1/\polylog(N)$.

    The single-shot parallel decoder is given in~\Cref{prop:noisy-parallel-Z} (Z syndrome) and~\Cref{prop:noisy-parallel-X} (X syndrome). There it is shown that in a constant runtime, the decoder is $(\alpha', \beta'(N), \gamma', \frac{1}{8})$-single-shot, where $\alpha', \gamma'$ are constants and $\beta' = \Theta(r) = 1/\polylog(N)$, and (as a consequence) in $O(\log N)$ runtime the parallel decoder is $(\alpha', \beta'(N), \gamma', 0)$-single-shot

    Again, we note that the required relations between $\lambda, t, \kappa$ in all of the above decoder results can be satisfied by choosing $\lambda$ to be sufficiently small (choosing $n$ to be a large constant).
\end{proof}

\subsection{Z-syndrome decoder}\label{sec:Z-decoder}
In this subsection, we present the Z-syndrome decoder in~\Cref{alg:Zdecoder}, which is a generalization of the `co-decoder' in~\cite{dinur2022good} to higher-dimensional cubical complexes and was briefly sketched in~\cite[Section 6]{dinur2024expansion}. We will first define some auxiliary tools and lemmas that were introduced in the X distance proof of~\cite{dinur2024expansion}. We then describe the decoder and prove its correctness, single-shot property, and time complexity. Finally, we give a parallel version of the decoder.

\subsubsection{Auxiliary definitions and lemmas}\label{sec:Z-distance}

We start with the notion of co-locally minimality (co-LM).

\begin{definition}[Local minimality]
\label{def:co-LM}
For $k>0$, a cochain $x \in C^{k}(X,\mathcal{F})$ is co-locally minimal if $|x +\delta^{k-1}(y)| \geq |x|$ for all $y \in C^{k-1}(X, \mathcal{F})$ supported on $X_{\geq v}(k-1)$ for some vertex $v \in X(0)$. Here, the $|\cdot |$ notation refers to the block-Hamming weight.
\end{definition}

We will use expansion properties of a natural random walk on the cubical complex.

\paragraph{The random walk} For $0 \leq \ell \leq k$ we define the following walk $W^{(k,\ell)}$ on $X(k)$. Informally, starting at a face $f \in X(k)$, we perform $\ell$ down-walks, reaching a face $f' \in X(k-\ell)$, we then take a random step to a face $f'' \in X(k-\ell)$ neighboring to $f'$, and finally perform $\ell$ up-walks from $f''$ back to $X(k)$. We give the detailed definition below.

\begin{definition}[Random walk] For $0 \leq \ell \leq k$ the walk $W^{(k,\ell)}$ is defined, starting $f \in X(k)$, as
\begin{enumerate}
    \item Choose a uniformly random $(k-1)$-face belonging to $f$ (there are $2k$ choices), then choose a random $(k-2)$-face belonging to that face (there are $2(k-1)$ choices), and so forth $\ell$ times in total. Let $f'$ be the $(k-\ell)$-face we end up with. (If $\ell=k$ then simply set $f'=f$).
    \item Let $f'=[g;a,b] \in X(k-\ell)$ be the face reached after step 1 and $S=\mathsf{type}(f')$. Choose a random direction $j \in  \overline{S}$ and random $a_j \in A_j$. Let $g'=ga_j$, $b'_j=1-b_j$ and $b'_i=b_i$ for $i \in \overline{S}\backslash\{j\}$. Let $f''=[g';a,b']$.
    \item Perform $\ell$ up-walks from $f''$ similarly to step 1, but in reverse. Note that an up-walk from $X(i)$ to $X(i+1)$ has $(t-i)\Delta$ choices. 
\end{enumerate}
\end{definition}

We have the following lemma regarding the expansion of the walk $W^{(k,\ell)}$.

\begin{lemma}[Lemma 5.12 in~\cite{dinur2024expansion}]\label{lem:walk-expansion} Suppose that the graphs $(G,A_j)$ are $\lambda$-spectral-expanding up to all vertex sets of size below $r|G|$. Let $0 \leq \ell \leq k$ and $\mathcal{A} \subseteq X(k)$, and let $\mathbf{1}_\mathcal{A}$ be the all-ones vector supported on $\mathcal{A}$, then
\begin{align*}
    \langle \mathbf{1}_\mathcal{A}, W^{(k,\ell)} \mathbf{1}_\mathcal{A} \rangle \leq \lambda |\mathcal{A}| + (4n)^{t} \frac{|\mathcal{A}|^2}{r|X(k)|}. 
\end{align*}
\end{lemma}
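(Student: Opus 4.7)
The plan is to factor the walk $W^{(k,\ell)}$ as a down-walk to level $k-\ell$, followed by a single ``middle'' step, followed by an up-walk back to level $k$, and then reduce the analysis to the spectral expansion of the Cayley graphs $\operatorname{Cay}(G,A_j)$ via the expander mixing lemma. Writing $W^{(k,\ell)} = U_\ell M_{k-\ell} D_\ell$ for the corresponding transition matrices, the uniformity of the down- and up-walk choices together with the incidence counts in \Cref{lem:incidence} yield a clean expression for $\langle \mathbf{1}_\mathcal{A}, W^{(k,\ell)}\mathbf{1}_\mathcal{A}\rangle$ as a bilinear form on $X(k-\ell)$ applied to the image of $\mathcal{A}$ under the down-walk, with explicit multiplicities given by $\binom{k}{\ell}2^\ell$ lower faces per element of $\mathcal{A}$.

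The key observation is the product structure of $M_{k-\ell}$. Fixing a $(k-\ell)$-face $f' = [g;a,b]$ of type $S'$, a direction $j \in \overline{S'}$, the permutations $(a_i)_{i\in S'}$, and the orientation bits $(b_i)_{i\notin S'\cup\{j\}}$, the middle-step rule $(g,b_j)\mapsto (g a_j, 1-b_j)$ with $a_j \in A_j$ uniform is exactly one step of the random walk on the bipartite cover of $\operatorname{Cay}(G,A_j)$ on the vertex set $G\times\{0,1\}$. Since a bipartite cover inherits the nontrivial spectrum of its base, each such slice is $\lambda$-expanding up to vertex sets of size $r|G|$. I would then apply the expander mixing lemma to the projection $\mathcal{B}_{S',j,\vec a,\vec b}\subseteq G\times\{0,1\}$ of $\mathcal{A}$'s lower image onto each slice: slices with $|\mathcal{B}|\le 2r|G|$ contribute at most $\lambda\,|\mathcal{B}|+|\mathcal{B}|^2/(2r|G|)$, and the remaining slices are controlled by the trivial bound $|\mathcal{B}|\le |\mathcal{B}|^2/(2r|G|)$ and fold entirely into the quadratic term.

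Summing the slice-wise bounds and using $|X(k)|=\binom{t}{k}2^{t-k}n^k|G|$ to translate between levels, the linear contributions should aggregate to at most $\lambda|\mathcal{A}|$, while the quadratic contributions produce at most $(4n)^t|\mathcal{A}|^2/(r|X(k)|)$, where the constant $(4n)^t$ crudely absorbs the combinatorial factor counting types, orientations, directions, and permutation labels (an estimate like $\binom{t}{k-\ell}2^{t-k+\ell}(t-k+\ell)n^{k-\ell}\le (4n)^t$ suffices).

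The main obstacle I anticipate is exactly this bookkeeping: verifying that the slice multiplicities from the down- and up-walks combine with the expander contributions so as to produce a term linear in $|\mathcal{A}|$ with only a factor $\lambda$ in front---rather than $\lambda$ times a large combinatorial blow-up---while the quadratic term correctly saves the factor $|X(k)|$ in the denominator. I expect this to follow by checking that $U_\ell D_\ell$ is close to doubly-stochastic with respect to the uniform measure on $X(k)$, so that cross-terms arising from distinct pairs of lower faces decouple cleanly and leave only the genuine Cayley-graph expander contribution at level $k-\ell$ plus an $|\mathcal{A}|^2$-collision term controlled by the slice sizes.
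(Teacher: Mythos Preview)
The paper does not supply its own proof of this lemma: it is stated verbatim as Lemma~5.12 of \cite{dinur2024expansion} and is used as a black box, with only a one-sentence intuition given afterward. There is therefore no in-paper proof to compare your proposal against.

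That said, your outline is along the right lines for how such a statement is proved in \cite{dinur2024expansion}: the walk is decomposed so that the middle step, once the type, orientation, and direction are fixed, is exactly one step of a Cayley-graph random walk, and the expander mixing lemma is applied slice-by-slice. Your own stated concern is the real one: the bookkeeping that ensures the linear term comes out as $\lambda|\mathcal{A}|$ rather than $\lambda$ times a combinatorial factor relies on the fact that the down- and up-walk normalizations are chosen so that the overall operator is Markov (each row of $W^{(k,\ell)}$ sums to $1$), which means averaging over slices introduces no extra multiplicity on the spectral term. The $(4n)^t$ prefactor on the quadratic term then absorbs the slice count when converting $|G|$ in the denominator to $|X(k)|$. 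Your proposal has the right skeleton but would need to make this normalization explicit to close the gap you flagged.
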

The intuition for~\Cref{lem:walk-expansion} is as follows. When $\frac{|\mathcal{A}|}{|X(k)|} \ll \frac{r\lambda}{(4n)^t}$, the RHS is dominated by $\lambda |\mathcal{A}|$. In other words, small sets must expand under the walk $W^{(k,\ell)}$.

Next, we define several notions of `next-nearest neighbors' of a face.
\begin{definition}\label{def:neighbors} Let $0 \leq \ell \leq k <t$ and $v \in X(\ell)$.
Define the following set of $k$-faces that intersect but do not contain $v$ is
\begin{align*}
    \mathsf{Nb}_v (k) = \{ f \in X(k) \ | \ \exists f' \in X(k+1), v \prec f', f \precdot f', f \cap v \neq \varnothing, v \slashed{\prec} f \}.
\end{align*}
Define the following set of $k$-faces that do not intersect $v$ is
\begin{align*}
    \mathsf{Op}_v (k) = \{ f \in X(k) \ | \ \exists f' \in X(k+1), v \prec f', f \precdot f', f \cap v = \varnothing \}.
\end{align*}
$\mathsf{Nb}$ stands for `neighbor' and $\mathsf{Op}$ stands for `opposite'. An illustration of the above definitions of neighbors can be found in~\Cref{fig:neighbors}.
\end{definition}

\begin{figure}[H]
    \centering
    \includegraphics[width=0.9\textwidth]{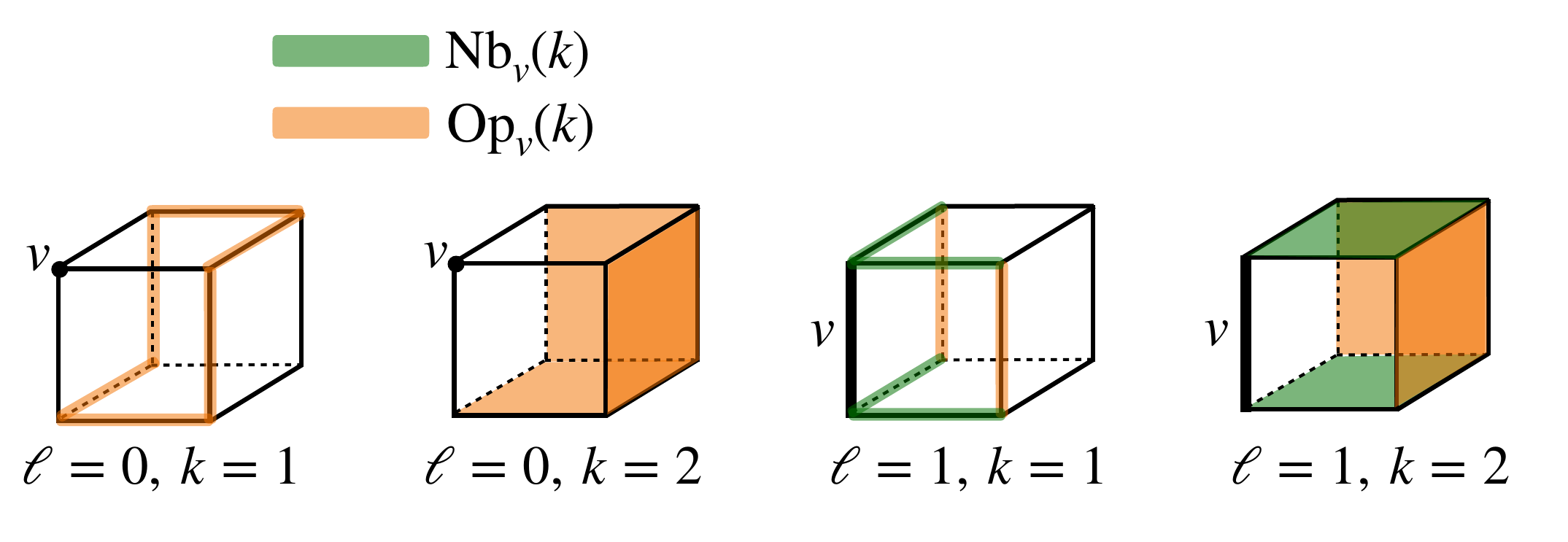}
    \caption{Illustration of `adjacent' neighbors and `opposite' neighbors defined in~\Cref{def:neighbors}. Note that vertices ($1$-faces) can only have opposite-type neighbors.}
    \label{fig:neighbors}
\end{figure}

It is clear from the above definitions that
\begin{align}
\label{eq:neighbor-contribution}
    \mathsf{down}( \mathsf{up}^{k+1-\ell} (v_\ell)) =  X_{\geq v_\ell}(k) \cup \mathsf{Nb}_{v_\ell}(k) \cup \mathsf{Op}_{v_\ell}(k).
\end{align}

Next, we note the following claims that will be used in the decoder's proof. We emphasize that this claim only depends on the geometric structure of the underlying cubical complex but not the attached sheaves. The notation $|\cdot|$ refers to the block-Hamming weight.

\begin{claim}\label{claim:neighbors} For any cochain element $x \in C^{k}(X)$, any $k$-face $v_k \in X(k)$, and any $\ell$-face $v_\ell \in X(\ell)$ with $ 1 \leq \ell \leq k$, it holds that
\begin{align}
\label{eq:neighbor-small}
|x(\mathsf{Nb}_{v_\ell}(k))| \leq 2^t \sum_{v_{\ell-1}\prec v_k \atop v_{\ell-1} \in X(\ell-1)} |x(X_{\geq v_{\ell-1}}(k))|.
\end{align}
In addition, let $x \in C^k(X)$ and $\mathcal{A}$ be the active $k$-faces of $x$, it holds that
\begin{align}
         \sum_{v_\ell \preceq v_k} |x (\mathsf{Op}_{v_\ell}(k))| \leq t8^t n^{k+1-\ell} \langle \mathbf{1}_\mathcal{A}, W^{(k,\ell)} \mathbf{1}_{v_k} \rangle.
         \label{eq:op-neighbors-small}
\end{align}
\end{claim}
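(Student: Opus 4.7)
The two inequalities are purely combinatorial statements about the cubical complex $X$: the sheaves play no role except to define what $|x(\cdot)|$ means via block-Hamming weight. My plan is to prove them by directly enumerating the $k$-faces in $\mathsf{Nb}_{v_\ell}(k)$ and $\mathsf{Op}_{v_\ell}(k)$ according to how they sit geometrically relative to $v_\ell$ (and $v_k$), and then comparing these counts to the combinatorics of the random walk $W^{(k,\ell)}$.

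For inequality~\eqref{eq:neighbor-small}, the plan is to classify each $f\in\mathsf{Nb}_{v_\ell}(k)$ by the dimension $j$ of the intersection $f\cap v_\ell$. By definition of $\mathsf{Nb}$, $v_\ell\not\prec f$ but $f\cap v_\ell\neq\varnothing$ and $f,v_\ell$ sit in a common $(k{+}1)$-face, so $0\leq j\leq \ell-1$. When $j=\ell-1$, the intersection is some $(\ell{-}1)$-face $v_{\ell-1}\prec v_\ell$, and $f\in X_{\geq v_{\ell-1}}(k)$. Cases $j<\ell-1$ are absorbed into these by the trivial containment $X_{\geq v_{\ell-1}}(k)\supseteq\{f : f\cap v_\ell\preceq v_{\ell-1}\}$ for any $(\ell{-}1)$-sub-face $v_{\ell-1}$ that lies above the true intersection. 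A straightforward book-keeping using \Cref{lem:incidence} shows that the resulting over-count is bounded by $2^t$ (each $f$ is counted at most once for every $(\ell{-}1)$-face of $v_\ell$ lying above $f\cap v_\ell$, which contributes a factor at most $2^\ell\leq 2^t$). Summing $|x(f)|$ over each such class yields the stated bound. A minor point to check is that the sum on the right is indeed over $v_{\ell-1}\prec v_\ell$ rather than $v_{\ell-1}\prec v_k$; this follows because a neighbor of $v_\ell$ can only touch a sub-face of $v_\ell$.

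For inequality~\eqref{eq:op-neighbors-small}, the plan is to realize every $g\in\mathsf{Op}_{v_\ell}(k)$ with $v_\ell\preceq v_k$ as a possible endpoint of the walk $W^{(k,\ell)}$ starting at $v_k$, and then to lower-bound the transition probability $\Pr[v_k\to g]$ by a universal constant times $(t\,8^t n^{k+1-\ell})^{-1}$. Concretely, since $g\cap v_\ell=\varnothing$ and $g,v_\ell$ together span a $(k{+}1)$-face $f'$, there is a canonical $(k{-}\ell)$-sub-face $v_{k-\ell}$ of $v_k$ obtained by intersecting $v_k$ with the ``side'' of $f'$ opposite to $g$, and an orthogonal neighbor $v'_{k-\ell}\subset g$ reachable from $v_{k-\ell}$ in one step of $W^{(k,\ell)}$ along a direction orthogonal to $\mathsf{type}(v_{k-\ell})$. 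Counting the $\ell$ down-steps (each with at most $2(k-i)\leq 2t$ choices), the single orthogonal step (weight $\frac{1}{(t-k+\ell)n}$), and the $\ell$ up-steps (each with at most $(t-i)n\leq tn$ choices), one obtains a lower bound of order $(t\,8^t n^{k+1-\ell})^{-1}$ on $\Pr[v_k\to g]$. Summing over $v_\ell\preceq v_k$ and $g\in\mathsf{Op}_{v_\ell}(k)\cap\mathcal{A}$, the right-hand side of~\eqref{eq:op-neighbors-small} emerges as the inverse of this probability times $\langle \mathbf{1}_\mathcal{A}, W^{(k,\ell)}\mathbf{1}_{v_k}\rangle$.

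The main obstacle is the book-keeping in the second part: each opposite face $g$ can be reached by many walk paths from $v_k$ (through different choices of $v_{k-\ell}$ and different orders of the down/up sequences), and different $v_\ell\preceq v_k$ can share the same $g$. One must choose a canonical path per $(v_\ell,g)$ pair so that no over-count exceeds the quoted $t\,8^t n^{k+1-\ell}$ factor. A clean way to do this is to parameterize by the orthogonal direction $j$ used in the middle step and the unordered set of ``down'' directions (which determines $v_{k-\ell}$), and then verify directly that the canonical path is internally consistent with $g\cap v_\ell=\varnothing$ and with $v_\ell\preceq v_k$. Once this bijective accounting is in place, the stated inequality reduces to plugging in the minimum transition weight computed from the definition of $W^{(k,\ell)}$.
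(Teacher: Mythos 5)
For the first inequality your decomposition by the dimension of $f\cap v_\ell$ is essentially the paper's argument; the paper in fact observes that the intersection is \emph{exactly} $(\ell-1)$-dimensional (because $f$ and $v_\ell$ share a common $(k+1)$-face), so the ``absorption'' of lower-dimensional cases you describe is not needed, but this does not change anything. Your worry about $v_{\ell-1}\prec v_\ell$ versus $v_{\ell-1}\prec v_k$ also resolves benignly: the statement as written sums over the larger set $\{v_{\ell-1}\prec v_k\}$, which is a weakening and is exactly what the paper's chain of inequalities produces.

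For the second inequality there is a genuine quantitative gap. The paper gives essentially no argument here --- it simply defers to~\cite[Claim 5.11]{dinur2024expansion} --- so you are on your own, and the numbers you quote do not come out. Multiply out the per-step weights for a single down-then-orthogonal-then-up path in $W^{(k,\ell)}$: the $\ell$ down-steps contribute $\prod_{j=k-\ell+1}^{k}\tfrac{1}{2j}$ per ordering (and summing over the $\ell!$ orderings gives $\tfrac{1}{2^\ell\binom{k}{\ell}}$ to a fixed $(k{-}\ell)$-sub-face), the orthogonal step contributes $\tfrac{1}{(t-k+\ell)n}$, and the $\ell$ up-steps contribute $\tfrac{1}{n^\ell\binom{t-k+\ell}{\ell}}$ to a fixed target $k$-face. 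The total is $\Theta\bigl(n^{-(\ell+1)}\bigr)$, not $\Theta\bigl(n^{-(k+1-\ell)}\bigr)$; these agree only when $\ell=k/2$. In the regime $\ell>k/2$ the single-path lower bound you invoke is strictly too weak to give the stated inequality, and there is no obvious supply of $\sim n^{k-2\ell}$ extra paths to close the gap (indeed, taking $\ell=k$, $t=3$, $k=2$, $v_k$ a square and $\mathcal{A}=\mathsf{Op}_{v_k}(k)$, one can check directly that the $k$-face reached by a single orthogonal step has hitting probability $\Theta(n^{-(k+1)})$ under $W^{(k,k)}$, so the right-hand side is $O(1/n)$ while the left is $\Theta(n)$). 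The power of $n$ you actually produce, $n^{k-\ell+1}$, would come from the walk $W^{(k,k-\ell)}$ rather than $W^{(k,\ell)}$, which suggests that your path parameterization should be checked against the precise convention in~\cite[Claim 5.11]{dinur2024expansion}; as it stands the accounting in your proposal does not yield~\eqref{eq:op-neighbors-small} as written.
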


\begin{proof}
    To see~\Cref{eq:neighbor-small}, we first make the observation that a face $f \in \mathsf{Nb}_{v_\ell}(k)$ must intersect with $v_\ell$ on some $(\ell-1)$-face according to~\Cref{def:neighbors}. This is because $\mathsf{type}(f)$ contains at least $\ell-1$ elements in $\mathsf{type}(v_\ell)$, but $v \slashed{\prec} f$, so $\mathsf{type}(f)$ contains exactly $\ell-1$ elements in $\mathsf{type}(v_\ell)$. Therefore, letting $v_k$ be any $k$-face containing $v_\ell$, we have
\begin{align*}
    |x(\mathsf{Nb}_{v_\ell}(k))| &\leq \sum_{v'_\ell: v'_\ell \prec v_k } |x(\mathsf{Nb}_{v'_\ell}(k))| \tag{trivial since $v_\ell \prec v_k$}\\
    &= \sum_{v'_\ell \prec v_k } \sum_{v_{\ell-1} \prec v'_\ell} \sum_{f_k \succ v'_\ell} \id_{f_k \cap v'_\ell = v_{\ell-1}} \id_{f_k \in x(\mathsf{Nb}_{v'_\ell}(k))} \tag{by observation above} \\
    &\leq  \sum_{v_{\ell-1} \prec v_k} \sum_{v'_\ell: v_{\ell-1} \prec v'_\ell \prec v_k } |x(X_{\geq v_{\ell-1}}(k))| \tag{bounding sum over $f_k$}\\
    &\leq  2^t \sum_{v_{\ell-1} \prec v_k} |x(X_{\geq v_{\ell-1}}(k))|,
\end{align*}
where the subscript implicitly indicates the face's dimension, and the last inequality is because any face has at most $2^t$ subfaces (which is a crude bound for the number of $v_\ell$ such that $v_{\ell-1} \prec v_\ell \prec v_k$, but it suffices for our purposes.)

To see~\Cref{eq:op-neighbors-small}, notice from definitions that $\sum_{v_\ell \preceq v_k} |\mathsf{Op}_{v_\ell}(k)|$ is upperbounded by the number of reachable vertices in $X(k)$ by performing the walk $W^{(k,\ell)}$ starting from $v_k$, hence the same argument in~\cite[Claim 5.11]{dinur2024expansion} can be applied. The factor $t8^t n^{k+1-\ell}$ is a simple upperbound for the factor ${k \choose \ell}{{t-\ell} \choose {k-\ell}}(t-\ell)2^{k-\ell} n^{k+1-\ell}$ in their paper.
\end{proof}

\subsubsection{Sequential decoder}
In this section, we give a linear-time decoder for $Z$-type syndromes, which was briefly sketched in~\cite{dinur2024expansion}. Our proof generalizes the ``co-decoder'' in~\cite{dinur2022good} to arbitrary $t$-dimensional cubical complex for any $t\geq 2$. The decoder below is also a generalization of small-set flip algorithm to high-dimensional cubical complexes~\cite{leverrier2015quantum}.Furthermore, we show that the decoder is single-shot can be parallelized.

\RestyleAlgo{boxruled}
\LinesNumbered
\begin{algorithm}[H]
    \setstretch{1.35}
    \caption{Sequential Z-syndrome decoder (level $k$) with parameter $\varepsilon \in (0, 1]$. \label{alg:Zdecoder}}
 \KwIn{Syndrome $\sigma = \delta^k (\hat{e}) \in C^{k+1}(X)$ of error $\hat{e} \in C^{k}(X)$.}
 \KwOut{A proposed correction $e \in C^{k}(X)$.}
        (Initialization) Set $\sigma^{(0)}=\sigma$;
        
        (Update loop) In $i$-th iteration, if there is a vertex $v^{(i)} \in X(0)$ and a cochain $w^{(i)} \in C^{k}(X)$ supported on $X_{\geq v^{(i)}}(k)$ such that $|\sigma^{(i-1)}| - |\sigma^{(i-1)}+\delta (w^{(i)})|  > (1-\varepsilon) |\delta (w^{(i)})| $, then set $\sigma^{(i)} = \sigma^{(i-1)} + \delta (w^{(i)})$;
        
        (Termination) Stop if no such $w^{(i)}$ is found. Output $e =  \sum_i w^{(i)}$.
\end{algorithm}

It is instructive to consider the case $\varepsilon=1$ of the above decoder. In that case, it is a simple greedy algorithm that finds a vertex whose upward link supports a local `flip' that strictly reduces the syndrome weight. This simple update rule already suffices to obtain a linear-time sequential decoder, but here we introduce the extra parameter $\varepsilon$ and analyze this general case in order to prove the parallel decoder in the next subsection. We start with proving the performance guarantee of~\Cref{alg:Zdecoder} when the syndrome is noiseless.

\begin{prop}[Sequential Z-syndrome decoder with noiseless syndrome]
\label{prop:noiseless-Zdecoder} Let $\varepsilon$ be any constant in $ \frac{2}{\kappa} \lambda^{1/t} 8^{t} < \varepsilon \leq 1$.  Consider a $t$-dimensional cubical complex construction where the underlying graphs $\operatorname{Cay}(G,A_j)$ are $\lambda$-expanding up to size $r|G|$ and the local codes $\{h_1,\hdots,h_t\}$ are two-way $\kappa$-robust.
Let $1\leq k \leq t-1 $ and consider the quantum code where qubits correspond to level $k$ of the chain complex as described in~\Cref{thm:DLVmain}. Let $\hat{e} \in C^{k}(X)$ be the cochain (over $\mathbb{F}_q$) corresponding to a Pauli X error such that $|\hat{e}|< \frac{8^{-t^2}\kappa^{k+1}-\lambda}{2^{t+1} (4n)^t} r |X(k)| \cdot \min (\frac{(8^{-t^2}\kappa^{k+1} - \lambda)^2}{2^{t+3}}, (\frac{\varepsilon \kappa}{2})^{k+1}8^{-t^2}-\lambda )$. 
Given the syndrome $\sigma = \delta^k (\hat{e}) \in C^{k+1}(X)$ for the error $\hat{e}$, then~\Cref{alg:Zdecoder} runs in time $O(q^{n^{O(t)}} |X(k)|)$ and outputs a correction cochain $e$ that is homologically equivalent to $\hat{e}$.
\end{prop}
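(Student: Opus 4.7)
The strategy generalizes the small-set-flip analysis of \cite{dinur2022good} from the square complex ($t=2$) to arbitrary $t$-dimensional cubical complexes. The key object to track is the \emph{residual} cochain $f^{(i)} := \hat{e} + \sum_{j \leq i} w^{(j)}$, which satisfies $\delta^k(f^{(i)}) = \sigma^{(i)}$ at every iteration. The progress condition $|\sigma^{(i-1)}| - |\sigma^{(i)}| > (1-\varepsilon)|\delta(w^{(i)})|$ forces a strict integer decrease of $|\sigma^{(i)}|$, so the loop terminates after at most $|\sigma^{(0)}| = O(n|X(k)|)$ iterations. At each step, enumerating a valid pair $(v, w^{(i)})$ requires going through every vertex and every cochain supported on a single upward link $X_{\geq v}(k)$, for a per-step cost of $|X(0)|\cdot q^{|X_{\geq v}(k)|} = q^{n^{O(t)}} |X(k)|$, matching the claimed total running time. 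It remains to show that (i) the loop cannot halt while $\sigma^{(i)} \neq 0$, and (ii) the final residual $f^{(T)}$ is a coboundary.

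The heart of the argument is a \emph{local flip lemma}: whenever $f \in C^k(X,\mathcal{F})$ is co-locally-minimal (\Cref{def:co-LM}), $\delta^k(f) \neq 0$, and $|f|$ lies below the threshold in the hypothesis, there exist $v \in X(0)$ and $w$ supported on $X_{\geq v}(k)$ with $|\delta(f)| - |\delta(f+w)| > (1-\varepsilon)|\delta(w)|$. The plan is to average the local syndrome weight over vertices: since every active $(k+1)$-face is incident to $2(k+1)$ vertices, at least one $v$ sees a nontrivial local syndrome in its upward link $X_{\geq v}(k+1)$. By \Cref{lemma:local-chain} the restriction of $f$ and $\delta(f)$ to $X_{\geq v}$ lives in the local chain complex $C(L_{[t]})$, and the two-way $\kappa$-robustness of the local codes (\Cref{def:two-way-robustness}) delivers a flip $w$ on $X_{\geq v}(k)$ that cancels an $\Omega(\kappa^{k+1})$ fraction of $v$'s local syndrome. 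The remaining task is to argue that this flip does not create a comparable amount of new syndrome on $(k+1)$-faces outside $X_{\geq v}(k+1)$. By \Cref{eq:neighbor-contribution}, these faces lie in $\mathsf{Nb}_v(k+1) \cup \mathsf{Op}_v(k+1)$; the `adjacent' contribution is bounded deterministically by \Cref{eq:neighbor-small} in terms of the error mass on overlapping $(k{-}1)$-links of $v$, while the `opposite' contribution is bounded \emph{on average over $v$} by combining \Cref{eq:op-neighbors-small} with the small-set random-walk expansion estimate of \Cref{lem:walk-expansion}. The arithmetic hypothesis $\varepsilon > \tfrac{2}{\kappa}\lambda^{1/t}8^t$, together with the assumed weight bound on $|f|$, is calibrated precisely so that these two cross-contributions remain strictly smaller than the local gain $\Omega(\kappa^{k+1})|\delta(w)|$, leaving at least one vertex $v$ at which the required progress inequality holds.

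To apply the local flip lemma at every iteration, one must keep $|f^{(i)}|$ below its threshold. The plan is to pass to the co-LM representative of $f^{(i)}$ (which has the same syndrome and no larger block-Hamming weight) and use a small-set coboundary-expansion estimate, a direct consequence of $\kappa$-robustness at level $k$ and \Cref{lem:walk-expansion} applied to the support of the residual. This yields, in the small-set regime, a bound of the form $|f^{(i)}| = O(|\sigma^{(i)}|/(\kappa n))$ on the co-LM representative. Since $|\sigma^{(i)}| \leq |\sigma^{(0)}| = O(n|\hat{e}|)$ is monotone nonincreasing, the weight hypothesis on $|\hat{e}|$ guarantees the bound holds throughout, so the local flip lemma applies whenever $\sigma^{(i)} \neq 0$, and the algorithm must terminate at $\sigma^{(T)} = 0$. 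At that point $f^{(T)}$ is a cocycle whose co-LM representative still lies in the small-set regime; reusing the same coboundary-expansion estimate with $|\delta(f^{(T)})|=0$ forces this representative to vanish, so $f^{(T)} \in \mathrm{image}(\delta^{k-1})$ and $e = \sum_j w^{(j)}$ is homologically equivalent to $\hat{e}$. The principal obstacle is the local flip lemma: balancing the $\kappa$-robust local gains against the neighbor and opposite-face cross-contributions in dimension $t \geq 2$ is what forces the use of the geometric claims in \Cref{claim:neighbors} and the high-dimensional walk expansion of \Cref{lem:walk-expansion}, and it is here that the generalization beyond the $t=2$ argument of \cite{dinur2022good} genuinely requires new input.
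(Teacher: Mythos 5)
Your high-level plan matches the paper's proof quite closely: pass to co-locally-minimal representatives, use two-way $\kappa$-robustness to get a local gain, use \Cref{claim:neighbors} and the walk expansion of \Cref{lem:walk-expansion} to bound the adjacent and opposite cross-contributions, and combine a small-set coboundary-expansion estimate with the monotone decrease of the syndrome weight to keep the residual in the small-set regime so that it must vanish at termination. That conceptual skeleton is correct, and it is essentially what the paper does.

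There are, however, two concrete gaps. First, your running-time analysis is wrong as stated. You compute a \emph{per-iteration} cost of $|X(0)|\cdot q^{|X_{\geq v}(k)|} = q^{n^{O(t)}}|X(k)|$ and assert this ``matches the claimed total running time,'' but the loop runs for up to $|\sigma^{(0)}|=\Theta(|X(k)|)$ iterations, so your naive scheme yields a total cost of $q^{n^{O(t)}}|X(k)|^2$ --- quadratic, not linear. The linear bound in the statement requires the standard expander-code preprocessing trick: precompute the set of flippable vertices in linear time, then after each flip update only the $O((4n)^t)$ vertices whose upward links intersect the flipped one, making each iteration $O(1)$. Without this you do not recover the claimed $O(q^{n^{O(t)}}|X(k)|)$ bound. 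Second, the mechanics of your ``local flip lemma'' are glossed in a way that does not reflect how robustness actually gets used. Robustness does not directly hand you a flip on a vertex link that ``cancels an $\Omega(\kappa^{k+1})$ fraction of the local syndrome.'' Instead, the paper's argument takes the (unknown) restriction $\hat{x}^{(i)}(X_{\geq v_\ell}(k))$ as the candidate flip, applies robustness at \emph{every} face dimension $\ell = k, k-1,\dots,0$ (not only $\ell=0$, i.e., vertices), and recursively unrolls the resulting neighbor bound through all levels before summing over $v_k\in X(k)$ and invoking walk expansion. Your sketch only averages over vertices and appeals to a single application of robustness in $C(L_{[t]})$; one then has to account for the fact that the `adjacent' neighbor bound from \Cref{eq:neighbor-small} is stated in terms of $(\ell-1)$-faces, which is precisely what forces the full recursion over $\ell$. (Also, a minor slip: a $(k+1)$-face of the cubical complex contains $2^{k+1}$ vertices, not $2(k+1)$; the quantity $2(k+1)$ is $|\mathsf{down}(f)|$.)
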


\begin{proof} 
Let us first describe the high-level idea of the proof. We track the decoding progress with cochains $\hat{x}^{(i)} \in C^k(X)$ (unknown to decoder) that represent the remaining uncorrected error after step $i$. Our goal is to show that after the final iteration $T$, we have $\hat{x}^{(T)}=0$. Up to stabilizers, these cochains can be chosen such that they are co-locally minimal, allowing us to use the $\kappa$-robustness property. In particular, the termination condition and the robustness property together imply $\hat{x}^{(T)}$ only expands a little (\Cref{eq:349}). On the other hand, the spectral expansion of the underlying cubical complex implies that this can only happen when either $\hat{x}^{(T)}=0$ or $\hat{x}^{(T)}$ is large (\Cref{eq:350}). Finally, we show that the latter cannot happen if the initial error is small.

    \paragraph{Auxiliary variables.} We first define auxilliary variables that are used in the analysis but unknown to the decoder. These variables will be distinguished by a hat symbol ``$\string^$''. Let $\hat{x}^{(0)}$ be the minimal cochain that is homologically equivalent to the underlying error $\hat{e}$, i.e., $\hat{x}^{(0)}$ is the minimal block-Hamming weight cochain among those of the form $\hat{e} +\delta^{k-1}(y)$, where $y \in C^{k-1}(X)$. Clearly such $\hat{x}^{(0)}$ is also co-locally minimal (according to~\Cref{def:co-LM}). And $\hat{x}^{(0)}$ and $\hat{e}$ share the same syndrome $\sigma$. Similarly let $\hat{x}^{(i)}$ be the minimal version of $\hat{e} + \sum_{i'=1}^i w^{(i')}$. Let $\sigma^{(i)} = \sigma + \sum_{i'=1}^{i} \delta^k (w^{(i)})$, with $\sigma^{(0)}= \sigma$.

\paragraph{Useful weight inequalities.} Next, we note several useful block-Hamming weight inequalities (\cref{eq:345},~\cref{eq:346},~\cref{eq:347} below) that hold true throughout the algorithm. 

First, using the robustness property of the local maps and the fact that $\hat{x}^{(i)}$ is co-LM, we claim that the following inequality holds.
\begin{claim}
    For every iteration $i$, and
    $\forall \ell \leq k,\ \forall v_\ell \in X(\ell)$, $S= \mathsf{type}(v_\ell)$,
\begin{align}
    \kappa n |\hat{x}^{(i)}(X_{\geq v_\ell}(k))| \leq |\delta_{\overline{S}}(\hat{x}^{(i)}(X_{\geq v_\ell}(k))|,
    \label{eq:345}
\end{align}
where $\delta_{\overline{S}}$ denotes the coboundary map of the local complex from~\Cref{lemma:local-chain}.
\end{claim}

\begin{proof}
    Indeed, according to~\Cref{lemma:local-chain},  $C(X_{\geq v_\ell})$ is isomorphic to the length-$(t-\ell)$ local chain complex $C(L_{\overline{S}})$. So we can view $z \triangleq \hat{x}^{(i)}(X_{\geq v_\ell}(k))$ as a $(k-\ell)$-cochain of $C(L_{\overline{S}})$, i.e., $z \in C^{k-\ell}(L_{\overline{S}})$. Furthermore, the fact that $\hat{x}^{(i)}$ is co-LM implies that $z$ is minimal with respect to $C(L_{\overline{S}})$, in the sense of~\Cref{def:minimal}. Indeed, the case $\ell=k$ is trivial since $z$ is a 0-cochain and thus is automatically minimal. Consider the case $\ell \leq k-1$, let $y \in C^{k-\ell-1}(L_{\overline{S}})$ be an arbitrary $(k-\ell-1)$-cochain in the local chain complex. From the perspective of the global cubical complex $y$ corresponds to a $(k-1)$-cochain supported on $X_{\geq v_\ell}(k-1)$ and similarly $\delta_{\overline{S}}(y)$ corresponds to a $k$-cochain supported on $X_{\geq v_\ell}(k)$. We have that 
\begin{align*}
   |\hat{x}^{(i)} + \delta^{k-1}(y)| &= |z + \delta_{\overline{S}}(y)| + |\hat{x}^{(i)} - z| \tag{decomposing $\hat{x}^{(i)} + \delta^{k-1}(y)$}
    \\ 
    \Rightarrow |z + \delta_{\overline{S}}(y)| & =|\hat{x}^{(i)} + \delta^{k-1}(y)| - |\hat{x}^{(i)} - z| \\
    &\geq  |\hat{x}^{(i)}| - |\hat{x}^{(i)} - z| \tag{using $\hat{x}^{(i)}$ is co-LM}\\
    &\geq |z|. 
\end{align*}

Having shown that $z$ is a minimal cochain in $C(L_{\overline{S}})$, we can then apply the robustness property from~\Cref{def:robustness} to obtain~\Cref{eq:345}.
\end{proof}

Next, recall from~\Cref{eq:neighbor-contribution} that $\mathsf{down}( \mathsf{up}^{k+1-\ell} (v_\ell)) =  X_{\geq v_\ell}(k) \cup \mathsf{Nb}_{v_\ell}(k) \cup \mathsf{Op}_{v_\ell}(k)$, which means that, for \emph{all} cochain $x$ (regardless of whether it is co-LM or not), the restriction of $\delta^k(x)$ onto $X_{\geq v_\ell}(k+1)$ only receives contributions from the images under $\delta^k$ of $x(X_{\geq v_\ell}(k))$,  $x(\mathsf{Nb}_{v_\ell}(k))$, and $x(\mathsf{Op}_{v_\ell}(k))$. Therefore, the following triangle inequalities hold
    \begin{align}
        |\sigma^{(i)}(X_{\geq v_\ell}(k+1))| &\geq |\delta_{\overline{S}} \hat{x}^{(i)}(X_{\geq v_\ell}(k))| - (|\hat{x}^{(i)}(\mathsf{Op}_{v_\ell}(k))| + |\hat{x}^{(i)}(\mathsf{Nb}_{v_\ell}(k))| ), \label{eq:346}\\
        |\sigma^{(i)}(X_{\geq v_\ell}(k+1)) + \delta_{\overline{S}} \hat{x}^{(i)}(X_{\geq v_\ell} (k))| &\leq |\hat{x}^{(i)}(\mathsf{Op}_{v_\ell}(k))| + |\hat{x}^{(i)}(\mathsf{Nb}_{v_\ell}(k))|.\label{eq:347}
    \end{align}

\paragraph{Termination condition implies either no errors remain or final error is large.} Suppose that no small-set flip is found in iteration $T$ so that~\Cref{alg:Zdecoder} terminates at iteration $T$. Our goal is to show that $\hat{x}^{(T)}=0$, which would mean that the proposed correction $\sum_{i=1}^{T} w^{(i)}$ is homologically equivalent to the true error $\hat{e}$ as desired. To this end, let us first show that the termination condition of the algorithm implies either $\hat{x}^{(T)}=0$ or it has to have large block-Hamming weight. 
    
Observe that, for each face $v_\ell \in X(\ell)$ where $0 \leq \ell \leq k$, modifying $\hat{x}^{(i)}$ on $X_{\geq v_\ell}(k)$ only affects the syndrome on $X_{\geq v_\ell}(k+1)$.
In particular, flipping $\hat{x}^{(i)}(X_{\geq v_\ell}(k))$ with a $k$-cochain $f$ supported on $X_{\geq v_\ell}(k)$ changes the syndrome values on $X_{\geq v_\ell}(k+1)$ by $\delta_{\overline{S}} f$.
Since no flip is found at iteration $T$ and $\hat{x}^{(T)}(X_{\geq v_\ell} (k))$ is itself a candidate for $w^{(T)}$, it holds that 
    \begin{align}
        |\sigma^{(T)}(X_{\geq v_\ell}(k+1))| - |\sigma^{(T)}(X_{\geq v_\ell}(k+1)) + \delta_{\overline{S}} \hat{x}^{(T)}(X_{\geq v_\ell} (k))| \leq  (1-\varepsilon)|\delta_{\overline{S}} \hat{x}^{(T)}(X_{\geq v_\ell} (k))|.
        \label{eq:345-5}
    \end{align}
    Combining~\Cref{eq:345-5} with~\Cref{eq:346} and~\Cref{eq:347} we have
    \begin{align}
        |\delta_{\overline{S}} \hat{x}^{(T)}(X_{\geq v_\ell}(k))| \leq \frac{2}{\varepsilon} (| \hat{x}^{(T)}(\mathsf{Op}_{v_\ell}(k))| + |\hat{x}^{(T)}(\mathsf{Nb}_{v_\ell}(k))|).
        \label{eq:348}
    \end{align}
    Moreover, combining~\Cref{eq:348} with~\Cref{eq:345}, it holds that, $\forall \ell \leq k,\ \forall v_\ell \in X(\ell)$
        \begin{align}
       \kappa n |\hat{x}^{(T)}(X_{\geq v_\ell}(k))| \leq \frac{2}{\varepsilon} (| \hat{x}^{(T)}(\mathsf{Op}_{v_\ell}(k))| + |\hat{x}^{(T)}(\mathsf{Nb}_{v_\ell}(k))|).
        \label{eq:349}
    \end{align}

Intuitively,~\Cref{eq:349} upperbounds the weight of a local patch in the support of $\hat{x}^{(T)}$ by the size of its neighborhood. This local expansion is governed by the robustness $\kappa$ of the local maps. Furthermore, such local expansion can be turned into a global expansion when combined with the spectral expansion of the underlying cubical complex, so that bounds on the block-Hamming weight of the global $\hat{x}^{(T)}$ can be obtained. Indeed, let us first apply~\Cref{eq:349} to relate $|\hat{x}^{(T)}|$ to the random walk defined in the previous subsection.

Consider a face $v_k \in X(k)$. We recursively apply~\Cref{eq:349}, starting from $\ell=k$ to $\ell=0$ as follows (below, the subscript $\ell$ in $v_\ell$ implicitly means $v_\ell \in X(\ell)$)
\begin{align*}
    |\hat{x}^{(T)}(X_{\geq v_k}(k))| &\overset{\eqref{eq:349}}{\leq} \frac{| \hat{x}^{(T)}(\mathsf{Op}_{v_k}(k))|}{\varepsilon \kappa n/2} + \frac{|\hat{x}^{(T)}(\mathsf{Nb}_{v_k}(k))|}{\varepsilon \kappa n/2}\\
    &\labelrel\leq{eq:122} \frac{| \hat{x}^{(T)}(\mathsf{Op}_{v_k}(k))|}{\varepsilon \kappa n/2} + 2^t  \sum_{v_{k-1} \prec v_k} \frac{|\hat{x}^{(T)}(X_{\geq v_{k-1}}(k))|}{\varepsilon \kappa n/2}\\
    &\leq \frac{|\hat{x}^{(T)} (\mathsf{Op}_{v_k}(k))|}{\varepsilon \kappa n/2} + 2^t  \sum_{v_{k-1} \prec v_k} \left( \frac{|\hat{x}^{(T)}(\mathsf{Op}_{v_{k-1}}(k))|}{ (\varepsilon \kappa  n/2)^2 } + \frac{|\hat{x}^{(T)}(\mathsf{Nb}_{v_{k-1}}(k))|}{(\varepsilon \kappa n/2)^2} \right)\\
    &\labelrel\leq{eq:123} \frac{| \hat{x}^{(T)}(\mathsf{Op}_{v_k}(k))|}{\varepsilon \kappa n/2} + 2^t  \sum_{v_{k-1} \prec v_k} \frac{|\hat{x}^{(T)}(\mathsf{Op}_{v_{k-1}}(k))|}{(\varepsilon \kappa n/2)^2 } +2^{2t} \sum_{v_{k-2} \prec v_k}\frac{|\hat{x}^{(T)}(X_{\geq v_{k-2}}(k))|}{( \varepsilon \kappa n/2)^2}\\
    &\leq \hdots \\
    & \leq \sum_{\ell=0}^k \sum_{v_\ell \preceq v_k} \frac{2^{t(k-\ell)}}{(\varepsilon \kappa n/2)^{k-\ell+1}} |\hat{x}^{(T)} (\mathsf{Op}_{v_\ell}(k))|, \numberthis \label{eq:x_vk}
\end{align*}
where~\eqref{eq:122},~\eqref{eq:123} used~\Cref{eq:neighbor-small} in~\Cref{claim:neighbors}

Next, let $\mcA$ be the support of $\hat{x}^{(T)}$, then according to~\Cref{eq:op-neighbors-small} in~\Cref{claim:neighbors} we have
\begin{align}
    \sum_{v_\ell \preceq v_k} |\hat{x}^{(T)} (\mathsf{Op}_{v_\ell}(k))| \leq \langle \mathbf{1}_\mathcal{A}, t8^t n^{k+1-\ell} W^{(k,\ell)} \mathbf{1}_{v_k} \rangle .
    \label{eq:x_vk-2}
\end{align}
Combining~\Cref{eq:x_vk} and~\Cref{eq:x_vk-2} we get
\begin{align*}
    |\hat{x}^{(T)}| &= \sum_{v_k \in X(k)} |\hat{x}^{(T)}(X_{\geq v_k}(k))| \\
    &\overset{\eqref{eq:x_vk}}{\leq} \sum_{ v_k \in X(k)} \sum_{\ell=0}^k \sum_{v_\ell \preceq v_k} \frac{2^{t(k-\ell)}}{(\varepsilon \kappa n/2)^{k-\ell+1}} |\hat{x}^{(T)} (\mathsf{Op}_{v_\ell}(k))| \\
    &\overset{\eqref{eq:x_vk-2}}{\leq} \sum_{\ell=0}^k \frac{2^{t(k-\ell)}}{(\varepsilon \kappa n/2)^{k-\ell+1}} \langle \mathbf{1}_\mathcal{A}, t8^t n^{k+1-\ell} W^{(k,\ell)} \mathbf{1}_{\mcA} \rangle \\
    &\leq  \sum_{\ell=0}^k  \frac{2^{t(k-\ell)}}{(\varepsilon \kappa/2)^{k-\ell+1}} t8^t \left( \lambda|\mathcal{A}| + (4 n)^t \frac{|\mathcal{A}|^2}{r|X(k)|} \right) \hspace{0.5cm} \tag{using \Cref{lem:walk-expansion}} \\
    &\leq  \frac{ 8^{t^2}}{(\varepsilon \kappa/2)^{k+1}}\left( \lambda|\hat{x}^{(T)}| + (4 n)^t \frac{|\hat{x}^{(T)}|^2}{r|X(k)|} \right), \numberthis \label{eq:350}
\end{align*}
where we assume $t \geq 2$ to get the simple (but loose and somewhat arbitrary) upperbound $8^{t^2}$ for the sum. It follows that, either $\hat{x}^{(T)}=0$ or $|\hat{x}^{(T)}| \geq \frac{r |X(k)|}{(4n)^t} \left( \frac{(\varepsilon \kappa/2)^{k+1}}{8^{t^2}} -\lambda \right)$.

The previous few paragraphs also automatically imply the following lemma which will be referred to in the parallel decoder proof.

\begin{lemma}[Existence of local flip]
    \label{lem:existence-local-flip} 
    For every iteration $i$, if the current stabilizer-reduced error satisfies $|\hat{x}^{(i)}| < \frac{r |X(k)|}{(4n)^t} \left( \frac{(\varepsilon \kappa/2)^{k+1}}{8^{t^2}} -\lambda \right) $, then there exist
    $\ell \leq k,\ v_\ell \in X(\ell)$ such that, letting $S= \mathsf{type}(v_\ell)$,
    \begin{align}
        |\sigma^{(i)}(X_{\geq v_\ell}(k+1))| - |\sigma^{(i)}(X_{\geq v_\ell}(k+1)) + \delta_{\overline{S}} \hat{x}^{(i)}(X_{\geq v_\ell} (k))| >  (1-\varepsilon)|\delta_{\overline{S}} \hat{x}^{(i)}(X_{\geq v_\ell} (k))|.
    \end{align}
where $\delta_{\overline{S}}$ denotes the coboundary map of the local chain complex from~\Cref{lemma:local-chain}.
\end{lemma}

\paragraph{Update rule guarantees error cannot grow too large from initially small error.} To conclude that $\hat{x}^{(T)} =0$ is the case, we now show that if the initial underlying error $e$ is small, then $|\hat{x}^{(i)}|$ will remain small, and therefore $\hat{x}^{(T)}=0$. Indeed, this follows from two properties: (1) the syndrome's block-Hamming weight strictly decreases in each iteration, $|\sigma^{(i)}| < |\sigma^{(i-1)}|$, and (2) $|\sigma^{(i)}|$ provides an upperbound on $|\hat{x}^{(i)}|$ as long as they are sufficiently small \footnote{This small-set condition differentiates small-set co-LM expansion from local testability, which instead requires a similar relation between error weight and syndrome weight for \emph{all} errors. Of course, when $k \geq 2 $ (e.g., $t=4, k=2$), we can simply use the inverse-polylog local testability soundness from~\cite{dinur2024expansion} to immediately conclude the decoder proof. However, here we aim to give decoder for the lower-dimensional cubical complexes such as $t=2$, $k=1$, which include the good qLDPC code from~\cite{dinur2022good}.}. The first property is by design of~\Cref{alg:Zdecoder}, while the second property is similar to a notion of small-set co-boundary expansion in~\cite{dinur2022good} and holds for any sufficiently small co-locally minimal cochain.

\begin{lemma}[Small-set co-LM expansion]\label{lem:small-set-coLM-expansion} For $1\leq k \leq t-1$ and $\lambda < 8^{-t^2}\kappa^{k+1}$. 
     Let $x \in C^{k}(X)$ be a co-locally minimal cochain with $|x| <  \frac{(8^{-t^2}\kappa^{k+1} - \lambda)^2}{2^{t+2}(4n)^t} r |X(k)|$ and $\varsigma=\delta^k(x)$. Then it holds that $|x| \leq \frac{2}{n(8^{-t^2}\kappa^{k+1}-\lambda )} |\varsigma|$.
\end{lemma}
\begin{proof}

Recall~\Cref{eq:345} which actually holds for any co-LM cochain and~\Cref{eq:346} which holds for any cochain. Combining them we get, $\forall \ell \leq k,\ \forall v_\ell \in X(\ell)$, $S= \mathsf{type}(v_\ell)$,
\begin{align}
    \kappa n |x(X_{\geq v_\ell}(k))| \leq |\delta_{\overline{S}} x(X_{\geq v_\ell}(k))| \leq  | x(\mathsf{Op}_{v_\ell}(k))| + |x(\mathsf{Nb}_{v_\ell}(k))| + |\varsigma(X_{\geq v_\ell}(k+1))|.
    \label{eq:351}
\end{align}
We next apply the same argument in~\Cref{eq:x_vk},~\Cref{eq:x_vk-2}, but replacing~\Cref{eq:349} with~\Cref{eq:351}. Doing so we obtain
    \begin{align*}
       |x| &= \sum_{v_k \in X(k)} |x(X_{\geq v_k}(k))| \\
       &\overset{\eqref{eq:351}}{\leq} \sum_{v_k \in X(k)} \frac{1}{\kappa n}  \big(| x(\mathsf{Op}_{v_\ell}(k))| + |x(\mathsf{Nb}_{v_\ell}(k))| + |\varsigma(X_{\geq v_\ell}(k+1))| \big)\\
       &\leq \sum_{ v_k \in X(k)} \sum_{\ell=0}^k \sum_{v_\ell \preceq v_k} \frac{2^{t(k-\ell)}}{(\kappa n)^{k-\ell+1}} \left( |x (\mathsf{Op}_{v_\ell}(k))| +  |\varsigma(X_{\geq v_\ell}(k+1))| \right)  \tag{similar to~\eqref{eq:x_vk}}\\
       &\leq  \frac{8^{t^2}}{\kappa^{k+1}}\left( \lambda|x| + (4 n)^t \frac{|x|^2}{r|X(k)|} \right) + \frac{8^{t^2}}{n\kappa^{k+1}}  |\varsigma|,  \numberthis\label{eq:553}
    \end{align*}
    where in the last expression, the first term follows from a similar calculation to~\Cref{eq:350}, and in the second term we used the fact that each $(k+1)$-face has a bounded number of subfaces as follows
    \begin{align*}
        &\sum_{ v_k \in X(k)} \sum_{\ell=0}^k \sum_{v_\ell \preceq v_k} \frac{2^{t(k-\ell)}}{(\kappa n)^{k-\ell+1}} |\varsigma(X_{\geq v_\ell}(k+1))|  \numberthis \label{eq:222}\\
        \leq & |\varsigma|\cdot \max_{f \in X(k+1) }\left( \sum_{\ell=0}^{k} \frac{2^{t(k-\ell)}}{(\kappa n)^{k-\ell+1}} \sum_{v_\ell \prec f} \sum_{v_k: v_\ell \preceq v_k} 1   \right)\\
        \leq & |\varsigma|\cdot \left( \sum_{\ell=0}^{k} \frac{2^{t(k-\ell)}}{(\kappa n)^{k-\ell+1}}  {{k+1} \choose {\ell}}  2^{k+1-\ell} \cdot {{t -\ell} \choose {k-\ell}} n^{k-\ell} \right) \tag{\Cref{lem:incidence}}\\
        \leq & |\varsigma|\cdot \left( \frac{2\cdot 2^{2t}}{\kappa n} \sum_{\ell=0}^{k} \frac{2^{(t+1)(k-\ell)}}{\kappa ^{k-\ell}} \right)\\
        \leq & \frac{8^{t^2}}{n\kappa^{k+1}}|\varsigma|  \tag{assuming $t \geq 2$.}
    \end{align*}
    Let $B_1 \triangleq  \frac{8^{t^2}}{\kappa^{k+1}}$ and $B_2 \triangleq \frac{(4n)^t}{r}$. We rearrange~\Cref{eq:553}:
    \begin{align*}
        (1-B_1\lambda) |x| - B_1 B_2 \frac{|x|^2}{|X(k)|} \leq \frac{B_1}{n} |\varsigma|.
    \end{align*}
    Using that $|\varsigma| \leq 2^t n |x|$ and the assumption $|x| \leq  \frac{(1-B_1\lambda)^2}{2^{t+2}B_1^2 B_2} |X(k)|$, we solve the above inequality and find that either
    \begin{align*}
    |x| &\leq \frac{1}{2}\left(\frac{1-B_1 \lambda}{B_1 B_2}|X(k)| - \sqrt{\left(\frac{1-B_1 \lambda}{B_1 B_2}|X(k)|\right)^2-\frac{4}{nB_2 }|\varsigma||X(k)|} \right)  \\
    & \leq \frac{2 B_1}{n(1-B_1 \lambda)}|\varsigma|,
    \end{align*}
    or 
    $$|x| \geq \frac{1}{2}\left(\frac{1-B_1 \lambda}{B_1 B_2}|X(k)| + \sqrt{\left(\frac{1-B_1 \lambda}{B_1 B_2}|X(k)|\right)^2-\frac{4}{nB_2 }|\varsigma||X(k)|} \right) \geq \frac{(1-B_1\lambda)}{2 B_1 B_2}|X(k)|.$$
    The latter solution is excluded by the lemma assumption.
    \end{proof}

    We now use~\Cref{lem:small-set-coLM-expansion} to conclude the decoder proof. Assuming $|\hat{e}| \leq \frac{1}{4} \frac{(8^{-t^2}\kappa^{k+1} - \lambda)^3}{2^{2t+2}(4n)^t} r |X(k)|$, then for every iteration $i$ we have that
    \begin{align*}
        |\sigma^{(i)}| < |\sigma^{(0)}| \leq  2^{t} n |\hat{e}| \leq \frac{1}{4} \frac{(8^{-t^2}\kappa^{k+1} - \lambda)^3}{2^{t+2}(4n)^t} nr |X(k)|.
    \end{align*}
    Since $|\hat{x}^{(1)}|$ is larger than $|\hat{x}^{(0)}|\leq |\hat{e}| \leq \frac{1}{2} \frac{(8^{-t^2}\kappa^{k+1} - \lambda)^2}{2^{t+2}(4n)^t} r |X(k)|$ by at most $(2n)^t$ (since~\Cref{alg:Zdecoder} only performs a small-set flip of size at most $(2n)^t$ per iteration),~\Cref{lem:small-set-coLM-expansion} implies that $|\hat{x}^{(1)}| \leq \frac{2}{n(8^{-t^2}\kappa^{k+1}-\lambda )} |\sigma^{(1)}| \leq \frac{1}{2} \frac{(8^{-t^2}\kappa^{k+1} - \lambda)^2}{2^{t+2}(4n)^t} r |X(k)|$. Repeating the same argument on $|\hat{x}^{(2)}|$ and so on, we conclude for all $i$ that
    \begin{align}
        |\hat{x}^{(i)}| \leq \frac{2}{n(8^{-t^2}\kappa^{k+1}-\lambda )} |\sigma^{(i)}|. \label{eq:near-end-sequential-Z}
    \end{align}

    On the other hand, $|\hat{x}^{(i)}| \leq \frac{2}{n(8^{-t^2}\kappa^{k+1}-\lambda )} |\sigma^{(i)}| \leq  \frac{2^{t+1}}{8^{-t^2}\kappa^{k+1}-\lambda }|\hat{e}|$ is strictly smaller than $\frac{(\varepsilon \kappa/2)^{k+1}8^{-t^2}-\lambda}{(4n)^t} r |X(k)|$ (which is a solution of~\Cref{eq:350} that we want to rule out) as long as $|\hat{e}| < \frac{((\varepsilon \kappa/2)^{k+1}8^{-t^2}-\lambda)(8^{-t^2}\kappa^{k+1}-\lambda)}{2^{t+1} (4n)^t} r |X(k)|$.
    
    Overall it suffices to have $|\hat{e}|< \frac{8^{-t^2}\kappa^{k+1}-\lambda}{2^{t+1} (4n)^t} r |X(k)| \cdot \min \left(\frac{(8^{-t^2}\kappa^{k+1} - \lambda)^2}{2^{t+3}}, (\frac{\varepsilon \kappa}{2})^{k+1}8^{-t^2}-\lambda  \right)$, proving the convergence of the Z-syndrome decoder.

    \paragraph{Linear runtime.} It is clear that the algorithm terminates in $O(|\sigma|)=O(|N_\mcQ|)$ iterations. However, as currently stated, each iteration takes linear time to search for a flip, so the total runtime would naively be quadratic. However, we can introduce a preprocessing step to resolve this, as done in the original classical bit-flip algorithm~\cite{sipser1996expander}. In particular, before starting the algorithm, we go over all vertices in $X(0)$ and create a set $S$ of `flippable' vertices. This step takes time $O(q^{(2n)^t \prod_{j=1}^t m_j} |X(k)|)$.  Then, in each iteration of the algorithm, we take a vertex $v$ in the set $S$, perform the flip on $X_{\geq v}(k)$, and update the set $S$ by checking the flippability of the neighbors of $v$. Here, the neighbors are vertices $v'$ such that $X_{\geq v}(k) \cap X_{\geq v'}(k) \neq \emptyset$ and there are at most $(4n)^t$ of them, hence each iteration takes time $O(q^{(2n)^t \prod_{j=1}^t m_j} (4n)^t)$. When $t, q, n, m_1,\hdots,m_t$ are constants, the total runtime of the modified algorithm is linear. \qedhere

\end{proof}

\subsubsection{Noisy syndrome case}
We next consider the noisy syndrome case. For simplicity, in this subsection we choose the parameter $\varepsilon=1$ since this does not lose much of the conceptual point of how the sequential decoder works under noisy syndrome measurements.

\begin{prop}[Sequential Z-syndrome decoder with noisy syndrome]
\label{prop:noisy-Zdecoder}
Consider the same setting in~\Cref{prop:noiseless-Zdecoder}. Let $\hat{e} \in C^{k}(X)$ be the cochain (over $\mathbb{F}_q$) corresponding to a Pauli X error. 
Let $\hat{\sigma} = \delta^k (\hat{e}) \in C^{k+1}(X)$ be the ideal syndrome. And let $\sigma = \hat{\sigma} + \hat{m}$ be the observed syndrome, where the measurement error $\hat{m} \in C^{k+1}(X)$ satisfies $|\hat{m}| \leq \frac{n}{4} \frac{(8^{-t^2}(\kappa/2)^{k+1}-\lambda)^2}{(4n)^t} r|X(k)|$ and $2^t n |\hat{e}|  + 2|\hat{m}| <  \frac{n}{4} \frac{(8^{-t^2} \kappa^{k+1} -\lambda)}{(4n)^t} r|X(k)| \cdot  \min  (\frac{(8^{-t^2} \kappa^{k+1} -\lambda)^2}{2^{t+2}} ,8^{-t^2}(\frac{\kappa}{2})^{k+1}-\lambda  )$. Then~\Cref{alg:Zdecoder} with parameter $\varepsilon=1$ runs in time $O(q^{n^{O(t)}} |X(k)|)$ and outputs a correction cochain $e$ such that $e + \hat{e}$ has stabilizer-reduced block-weight at most $\frac{2}{n(8^{-t^2}(\kappa/2)^{k+1}- \lambda)} |\hat{m}|$.
\end{prop}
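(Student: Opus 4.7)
\emph{Proof plan.} The plan is to parallel the noiseless analysis of~\Cref{prop:noiseless-Zdecoder}, carrying the measurement error $\hat m$ as an extra additive contribution at each step. Define $\hat x^{(i)}$ to be a stabilizer-reduced (hence co-locally minimal) representative of $\hat e + \sum_{i' \leq i} w^{(i')}$, so that the true syndrome after $i$ iterations is $\delta \hat x^{(i)}$ while the decoder only sees $\sigma^{(i)} = \hat m + \delta \hat x^{(i)}$. The robustness inequality~\Cref{eq:345} is a property of co-LM $\hat x^{(T)}$ alone and is unchanged; the geometric decompositions~\Cref{eq:346}--\Cref{eq:347} also continue to hold, but when we re-express them in terms of the observed $\sigma^{(T)}$ they acquire an extra $|\hat m(X_{\geq v_\ell}(k+1))|$ on the right-hand side.

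Writing the termination condition ($\varepsilon = 1$) for the candidate flip $w = \hat x^{(T)}(X_{\geq v_\ell}(k))$ at every $v_\ell \preceq v_k$, substituting $\sigma^{(T)} = \hat m + \delta \hat x^{(T)}$, and using that the $\delta_{\overline S}\hat x^{(T)}(X_{\geq v_\ell}(k))$ summand of $\delta \hat x^{(T)}|_{X_{\geq v_\ell}(k+1)}$ cancels in characteristic $2$, two applications of the triangle inequality combined with~\Cref{eq:345} yield the noisy analog of~\Cref{eq:349}:
\begin{align*}
  (\kappa n / 2)\,|\hat x^{(T)}(X_{\geq v_\ell}(k))| \leq |\hat m(X_{\geq v_\ell}(k+1))| + |\hat x^{(T)}(\mathsf{Op}_{v_\ell}(k))| + |\hat x^{(T)}(\mathsf{Nb}_{v_\ell}(k))|.
\end{align*}
I then iterate exactly as in the derivation of~\Cref{eq:x_vk}--\Cref{eq:350}: the $\mathsf{Nb}$ term is absorbed by recursing on $\ell$ via~\Cref{eq:neighbor-small}; the $\mathsf{Op}$ contribution is summed using~\Cref{eq:op-neighbors-small} and~\Cref{lem:walk-expansion}; and the new $|\hat m|$ contribution is summed using the incidence bound exhibited in~\Cref{eq:222} of the proof of~\Cref{lem:small-set-coLM-expansion}. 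The result is a global inequality of exactly the same shape as~\Cref{eq:553}, with $\kappa$ replaced by $\kappa/2$ and $|\hat m|$ playing the role of the syndrome:
\begin{align*}
  |\hat x^{(T)}| \leq \frac{8^{t^2}}{(\kappa/2)^{k+1}}\left(\lambda\,|\hat x^{(T)}| + (4 n)^t \frac{|\hat x^{(T)}|^2}{r|X(k)|}\right) + \frac{8^{t^2}}{n(\kappa/2)^{k+1}}\,|\hat m|.
\end{align*}

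Solving this quadratic inequality as in the proof of~\Cref{lem:small-set-coLM-expansion} produces two branches; the small branch is precisely the target bound $|\hat x^{(T)}| \leq \frac{2}{n(8^{-t^2}(\kappa/2)^{k+1} - \lambda)}\,|\hat m|$, and since $e + \hat e$ is homologically equivalent to the minimal cochain $\hat x^{(T)}$ its stabilizer-reduced block-weight equals $|\hat x^{(T)}|$. To rule out the large branch I maintain the a priori invariant that $|\hat x^{(i)}|$ stays sub-threshold throughout the run: the algorithm enforces monotonic decrease of the observed syndrome weight so $|\sigma^{(i)}| \leq |\sigma^{(0)}| \leq 2^t n |\hat e| + |\hat m|$, and applying~\Cref{lem:small-set-coLM-expansion} directly to $\hat x^{(i)}$ (whose true syndrome has weight at most $|\sigma^{(i)}| + |\hat m| \leq 2^t n |\hat e| + 2|\hat m|$) then bootstraps a small-enough upper bound on $|\hat x^{(i)}|$. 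The two hypotheses in the proposition on $|\hat e|$ and $|\hat m|$ are calibrated precisely to enable this bootstrap at every iteration while also discarding the large root of the final quadratic. The main obstacle is this constant bookkeeping, which must consistently thread the intermediate co-LM bound, the termination bound, and the proposition's hypotheses; the $O(q^{n^{O(t)}}|X(k)|)$ runtime argument carries over unchanged from~\Cref{prop:noiseless-Zdecoder}.
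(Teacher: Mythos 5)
Your proposal follows the paper's proof essentially step for step: substituting $\sigma^{(i)} = \hat{\sigma}^{(i)} + \hat{m}$ into the termination condition, picking up an extra $|\hat m(X_{\geq v_\ell}(k+1))|$ term to get the noisy analog of~\cref{eq:349} (which after multiplying both sides by $2$ is exactly the paper's~\cref{eq:554}), propagating it through the random-walk expansion to land on the paper's~\cref{eq:555}, solving the quadratic, and ruling out the large root by bootstrapping~\cref{lem:small-set-coLM-expansion} on the true syndrome via $|\hat{\sigma}^{(i)}| \leq |\sigma^{(i)}| + |\hat m| \leq 2^t n|\hat e| + 2|\hat m|$. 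The only thing left unspecified is the constant threading you flag yourself, which matches the paper's bookkeeping.
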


\begin{proof}
 Similarly to the noiseless case proof, we define variables used in the analysis but unknown to the decoder and distinguish them with a hat symbol. Also note that we set $\varepsilon=1$ in this proof. Suppose we are given the noisy syndrome $\sigma^{(0)} = \hat{\sigma}^{(0)} + \hat{m}$, where $\hat{m} \in C^{k+1}(X)$ is the unknown measurement error and $\hat{\sigma}=\hat{\sigma}^{(0)}$ is the unknown true syndrome. Let $\hat{x}^{(0)}$ be a minimal cochain that is homologically equivalent to the underlying error $\hat{e}$, so that $\delta^{k}(\hat{x}^{(0)})=\delta^k(\hat{e})=\hat{\sigma}$ and note that $\hat{x}^{(0)}$ is also co-LM. Similarly define $\hat{x}^{(i)}$ to be the minimal version of $\hat{e}+ \sum_{i'=1}^i w^{(i')}$, where $w^{(i)}$ is the local flip performed at iteration $i$ of the algorithm when starting with the noisy syndrome $\sigma^{(0)}$. Also let $\hat{\sigma}^{(i)}= \delta^k (\hat{x}^{(i)})$ be the true syndrome at iteration $i$ and let $\sigma^{(i)}= \hat{\sigma}^{(i)} + \hat{m}$ be the syndrome that the decoder `thinks' it is correcting at iteration $i$.

Suppose the algorithm terminates at iteration $T$. Similarly to the argument before~\Cref{eq:348}, the termination condition implies that $|\sigma^{(T)}(X_{\geq v_\ell}(k+1)) + \delta_{\overline{S}} \hat{x}^{(T)}(X_{\geq v_\ell} (k))| \geq |\sigma^{(T)}(X_{\geq v_\ell}(k+1))|$.
Making the substitution $\sigma^{(i)}(X_{\geq v_\ell}(k+1)) = \hat{\sigma}^{(i)}(X_{\geq v_\ell}(k+1)) + \hat{m}(X_{\geq v_\ell}(k+1))$, we obtain 
\begin{align}
    |\hat{\sigma}^{(T)}(X_{\geq v_\ell}(k+1)) + \delta_{\overline{S}} \hat{x}^{(T)}(X_{\geq v_\ell} (k))| + 2 |\hat{m}(X_{\geq v_\ell}(k+1))| \geq |\hat{\sigma}^{(T)}(X_{\geq v_\ell}(k+1))|.
\end{align}
Then we follow the same derivation as before to obtain the following modification of~\Cref{eq:349} (note that we need to replace $\sigma^{(i)}$ by $\hat{\sigma}^{(i)}$ in~\cref{eq:345},~\cref{eq:346},~\cref{eq:347} because $\sigma^{(i)}$ in those equations was meant to be the true syndrome of $\hat{x}^{(i)}$):
 \begin{align}
       \kappa n |\hat{x}^{(T)}(X_{\geq v_\ell}(k))| \leq 2 (|\hat{x}^{(T)}(\mathsf{Op}_{v_\ell}(k))| + |\hat{x}^{(T)}(\mathsf{Nb}_{v_\ell}(k))| + |\hat{m}(X_{v_\ell}(k+1))| ).
        \label{eq:554}
    \end{align}
    
Next, we follow the derivation of~\Cref{eq:350} in~\Cref{prop:noiseless-Zdecoder}'s proof but this time replace~\Cref{eq:349} with~\Cref{eq:554} we obtain
\begin{align*}
    |\hat{x}^{(T)}| &= \sum_{v_k \in X(k)} |\hat{x}^{(T)}(X_{\geq v_k}(k))| \\
    &\overset{\eqref{eq:554}}{\leq} \sum_{v_k \in X(k)} \frac{2}{\kappa n}  \big(| \hat{x}^{(T)}(\mathsf{Op}_{v_\ell}(k))| + |\hat{x}^{(T)}(\mathsf{Nb}_{v_\ell}(k))| + |\hat{m}(X_{\geq v_\ell}(k+1))| \big)\\
    &\leq \sum_{ v_k \in X(k)} \sum_{\ell=0}^k \sum_{v_\ell \preceq v_k} \frac{2^{t(k-\ell)}}{(\kappa n/2)^{k-\ell+1}} \left( |\hat{x}^{(T} (\mathsf{Op}_{v_\ell}(k))| +  |\hat{m}(X_{\geq v_\ell}(k+1))| \right) \tag{similar to~\eqref{eq:x_vk}}\\
    &\leq  \frac{8^{t^2}}{(\kappa/2)^{k+1}}\left( \lambda|\hat{x}^{(T)}| + (4 n)^t \frac{|\hat{x}^{(T)}|^2}{r|X(k)|} \right) + \frac{8^{t^2}}{n(\kappa/2)^{k+1}}  |\hat{m}|, \numberthis\label{eq:555}
\end{align*}
where in the final expression the first term is the same as from~\Cref{eq:350} and the second term follows the same calculation in~\Cref{eq:222}. Using the promise $|\hat{m}| \leq \frac{n}{4} \frac{(8^{-t^2}(\kappa/2)^{k+1}-\lambda)^2}{(4n)^t} r|X(k)|$ and solving for~\Cref{eq:555} we find that either $x^{(T)} \leq \frac{2}{n(8^{-t^2}(\kappa/2)^{k+1}- \lambda)} |\hat{m}|$ or $x^{(T)} \geq \frac{(8^{-t^2}(\kappa/2)^{k+1}-\lambda)}{2 (4n)^t}r|X(k)|$ (similar to the solutions of~\Cref{eq:553}, but here we replace $\kappa$ by $\kappa/2$, $\hat{x}^{(i)}$ by $\hat{x}^{(T)}$, $\sigma^{(i)}$ by $\hat{m}$.)

Finally, to argue that $\hat{x}^{(T)} \leq \frac{2}{n(8^{-t^2}(\kappa/2)^{k+1}- \lambda)} |\hat{m}|$ must be the case, we can reuse~\Cref{eq:553} which is only a statement about co-LM cochains and their (noiseless) syndromes, repeated below (here, we replace from~\Cref{eq:553} $x$ by $\hat{x}^{(i)}$ and $\varsigma$ by $\hat{\sigma}^{(i)})$:

\begin{align*}
    |\hat{x}^{(i)}| \leq  \frac{8^{t^2}}{\kappa^{k+1}}\left( \lambda|\hat{x}^{(i)}| + (4 n)^t \frac{|\hat{x}^{(i)}|^2}{r|X(k)|} \right) + \frac{8^{t^2}}{n\kappa^{k+1}}  |\hat{\sigma}^{(i)}|.
\end{align*}
This time we note that 
\begin{align*}
    |\hat{\sigma}^{(i)}| \leq |\sigma^{(i)}|+|\hat{m}| < |\sigma^{(0)}|+|\hat{m}| \leq |\hat{\sigma}^{(0)}|+2|\hat{m}| \leq 2^t n |\hat{e}|  + 2|\hat{m}|.
\end{align*}
Thus, using $2^t n |\hat{e}|  + 2|\hat{m}| \leq \frac{n}{4} \frac{(8^{-t^2} \kappa^{k+1} -\lambda)^3}{2^{t+2} (4n)^t} r|X(k)|$ we can follow the same solutions to~\Cref{eq:553} in~\Cref{lem:small-set-coLM-expansion} and~\Cref{prop:noiseless-Zdecoder}'s proof and obtain that $|\hat{x}^{(i)}| \leq \frac{2}{n(8^{-t^2}\kappa^{k+1}-\lambda )} |\hat{\sigma}^{(i)}| \leq \frac{1}{2}  \frac{(8^{-t^2} \kappa^{k+1}-\lambda)^2}{2^{t+2} (4n)^t}r|X(k)|$ for all $i$. 
On the other hand we also want $|\hat{x}^{(T)}| < \frac{(8^{-t^2}(\kappa/2)^{k+1}-\lambda)}{2 (4n)^t}r|X(k)|$. 
Therefore, choosing $2^t n |\hat{e}|  + 2|\hat{m}| <  \frac{n}{4} \frac{(8^{-t^2} \kappa^{k+1} -\lambda)}{(4n)^t} r|X(k)| \cdot  \min  \left(\frac{(8^{-t^2} \kappa^{k+1} -\lambda)^2}{2^{t+2}} ,8^{-t^2}(\frac{\kappa}{2})^{k+1}-\lambda  \right)$ ensures that $\hat{x}^{(T)} \leq \frac{2}{n(8^{-t^2}(\kappa/2)^{k+1}- \lambda)} |\hat{m}|$, concluding the proof.
\end{proof}

\subsubsection{Parallel decoder}

In this subsection, we prove the parallel Z-syndrome decoder under noiseless syndrome in~\Cref{prop:noiseless-parallel-Z}. Then we will show in~\Cref{prop:noisy-parallel-Z} that the single-shot property essentially directly follows from the strong form of parallelization that we prove for the parallel decoder in~\Cref{lem:Z-parallel}.

The idea of the parallel decoder is to realize that many local flips in the sequential decoder may be performed simultaneously. However, we have to be careful when the upward links of two vertices $v,v' \in X(0)$ overlap. Hence, we schedule a parallel decoding round into sub-rounds such that in each of which the vertices under consideration do not have this issue.

\begin{claim} The set of vertices $X(0)$ can be partitioned into $\chi =(4n)^{t}+1$ subsets $X(0)= \bigsqcup_{s=1}^\chi V_{s}$ such any two vertices from different subsets have non-intersecting upward links.
\end{claim}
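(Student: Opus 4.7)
The plan is to reduce the partitioning problem to a proper vertex coloring of an auxiliary ``conflict graph'' $H$ defined on the vertex set $X(0)$, in which two distinct vertices $v,v'$ are adjacent iff $X_{\geq v}\cap X_{\geq v'} \neq \emptyset$. Every proper coloring of $H$ yields a partition of $X(0)$ into color classes whose members have pairwise disjoint upward links (which is what the decoder uses to run local flips in parallel without conflicts), so it suffices to exhibit a proper coloring of $H$ using at most $(4n)^{t}+1$ colors.

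The first step is an easy structural observation: $X_{\geq v}\cap X_{\geq v'} \neq \emptyset$ iff $v$ and $v'$ are both sub-faces of some common $t$-face of $X$. Indeed, any face in the intersection can be extended upward to some $t$-face by adding missing directions, so $H$ is precisely the ``shared top-face'' graph. This identification is the only slightly conceptual point of the proof, and I view it as the main (if still modest) step to get right; everything else is counting.

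The second step is to bound $\Delta(H)$. For a fixed vertex $v$, its neighbors are vertices $v' \neq v$ lying in some common face with $v$. By the incidence structure of~\Cref{def:cubical-complex} (see~\Cref{lem:incidence}), the number of $k$-faces containing $v$ equals $\binom{t}{k}\,n^{k}$, and each such face contains $2^k$ vertices. Counting incidences $(f,v')$ with $v, v' \preceq f$ (with overcounting) yields
\begin{align*}
\deg_H(v) \;\leq\; \sum_{k=1}^t \binom{t}{k}\,n^{k}\,(2^{k}-1) \;\leq\; (1+2n)^{t} \;\leq\; (4n)^{t},
\end{align*}
where the last inequality uses $n \geq 1$. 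This is routine arithmetic and I do not expect any obstacle.

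Finally, applying the greedy sequential coloring algorithm to $H$ yields a proper coloring using at most $\Delta(H)+1 \leq (4n)^{t}+1$ colors. Taking $V_s$ to be the $s$-th color class for $s \in [\chi]$ with $\chi=(4n)^{t}+1$ gives the required partition, completing the proof.
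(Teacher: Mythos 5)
Your proof is correct and essentially the same as the paper's: both bound the maximum degree of the conflict graph on $X(0)$ (vertices adjacent iff their upward links intersect) by $(4n)^t$ and then apply greedy coloring with $(4n)^t+1$ colors. The only minor difference is that you sum incidences over all face dimensions while the paper counts at a single level $k$; both reach the same $(4n)^t$ bound, and your unused ``shared $t$-face'' observation would in fact give the slightly sharper $(2n)^t - 1$.
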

\begin{proof}
    According to~\Cref{lem:incidence}, a vertex $v \in X(0)$ has  $\binom{t}{k}n^{k} \leq (2n)^t$ faces in its upward link. Each of these faces contains at most $2^k \leq 2^t$ vertices. Hence, there are at most$(4n)^t$ other vertices whose upward links intersect with $v$'s. This intersection structure gives rise to a graph whose vertices are $X(0)$ and whose degree is at most $(4n)^t$. A greedy vertex coloring using $(4n)^t+1$ colors achieves the claim.
\end{proof}

\RestyleAlgo{boxruled}
\LinesNumbered
\begin{algorithm}[H]
    \setstretch{1.35}
    \caption{Z-syndrome parallel decoder (level $k$).} \label{alg:Z-parallel}
    \KwIn{Syndrome $\sigma = \delta^k (\hat{e}) \in C^{k+1}(X)$ of error $\hat{e} \in C^{k}(X)$. Number of decoding rounds $\tau$.}
    \KwOut{A proposed correction $w \in C^{k}(X)$.}

        (Initialization) Set $\sigma^{(0)}=\sigma$ and $w^{(0)}=0$;

        \For{$i \in [\tau \chi]$}{
        Set $s = i \mod \chi$ \tcc*{Color for current subround}

        Parallel for each vertex $v$ in $V_s$: Find local flip $w^\shortparallel_v \in C^{k}(X_{\geq v})$ such that $|\sigma^{(i)}|- |\sigma^{(i)}+\delta^k (w^\shortparallel_v )| \geq \frac{|\delta^k (w^\shortparallel_v)|}{2}$, and $\delta^k (w^\shortparallel_v)$  has maximal block-weight among such $w^\shortparallel_v$;
        
        Set $\sigma^{(i+1)} \leftarrow  \sigma^{(i)} + \bigsqcup_{v \in V_s} \delta^k (w^\shortparallel_v)$;
        
        Set $w^{(i+1)} \leftarrow  w^{(i)} + \bigsqcup_{v \in V_s} w^\shortparallel_v$;

        }

        Output $w=w^{(\tau\chi)}$.
\end{algorithm}

The performance guarantee of the parallel decoder rests on the following main lemma. It is important to emphasize that the condition $\frac{2}{\kappa} \lambda^{1/t} 8^{t} < \varepsilon < 4^{-(t+2)}$ below is satisfiable by choosing the underlying graphs in the cubical complex to be of sufficiently large degree such that $\lambda$ is sufficiently small (this can be done independently of $\kappa$, a universal constant depending only on $t$ from~\Cref{thm:DLV-robustness}).

\begin{lemma}\label{lem:Z-parallel}
     Let $\frac{2}{\kappa} \lambda^{1/t} 8^{t} < \varepsilon < 4^{-(t+2)}$. Suppose that the current syndrome $\sigma$ (before a round of the parallel decoder) is such that the sequential decoder with parameter $\varepsilon$ successfully corrects the error. Then after one round of parallel decoding (which consists of $\chi$ subrounds), the syndrome weight is reduced by $\frac{1}{16}\left(1 - 4^{t+2}\varepsilon \right)|\sigma|$.
\end{lemma}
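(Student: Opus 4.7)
The plan is to compare one round of parallel decoding against a corresponding chunk of sequential decoding, showing that the parallel round collectively captures a constant fraction of the syndrome reduction potential available at vertices. Let $\sigma := \sigma^{(0)}$ denote the syndrome at the start of the round and let $\hat x \in C^k(X)$ be a minimal (hence co-locally minimal) cochain with $\delta^k(\hat x) = \sigma$. The hypothesis that the sequential decoder with parameter $\varepsilon$ succeeds on $\sigma$ ensures that $|\hat x|$ lies below the threshold from \Cref{lem:existence-local-flip}, and in particular the small-set co-LM expansion bound of \Cref{lem:small-set-coLM-expansion} applies, giving $|\hat x| = O(|\sigma|/n)$.

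The key technical step is an aggregated existence-of-flip bound: if, for each vertex $v \in X(0)$ of type $S = \mathsf{type}(v)$, we let $w_v := \hat x(X_{\geq v}(k))$ be the ``canonical'' candidate flip at $v$, then
\[
\sum_{v \in X(0)} \Bigl( |\sigma(X_{\geq v}(k+1))| - |\sigma(X_{\geq v}(k+1)) + \delta_{\overline S} w_v| \Bigr) \;\geq\; c_t\,(1-\varepsilon)\,|\sigma|,
\]
for a constant $c_t$ depending only on $t$. This is obtained by: (i) applying \Cref{lem:existence-local-flip}'s termination-failure inequality vertex-by-vertex; (ii) lower bounding $|\delta_{\overline S} w_v|$ via the local-complex robustness inequality \Cref{eq:345}; and (iii) a double-counting argument using that each $(k+1)$-face of $\sigma$ lies in the upward link of exactly $2^{k+1}$ vertices, converting the per-vertex contributions back to a fraction of $|\sigma|$ (this is where \Cref{lem:small-set-coLM-expansion} absorbs the conversion between $|\hat x|$ and $|\sigma|$).

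Given this aggregate lower bound, I would distribute the reduction across the $\chi$ color classes by pigeonhole: some color class $V_s$ accounts for at least a $1/\chi$ fraction of the sum. Because vertices in $V_s$ have pairwise disjoint upward links (by the very construction of $\chi$), the syndrome changes induced by the flips $\{w_v\}_{v \in V_s}$ are supported on disjoint sets of $(k+1)$-faces, so their reductions are exactly additive. The algorithm's greedy choice $w_v^{\shortparallel}$ dominates $w_v$ since $\varepsilon < 1/2$ implies $w_v$ itself already satisfies the $1/2$-reduction threshold used in \Cref{alg:Z-parallel}, and the algorithm selects the flip whose image has maximal block-weight among those meeting this threshold. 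This yields a per-subround reduction of $\Omega((1-\varepsilon)|\sigma|/\chi)$.

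The main obstacle is then in summing over the $\chi$ subrounds of a single round: the aggregate bound above was proven for $\sigma^{(0)}$, but later subrounds work with $\sigma^{(i)}$ that has already been partially corrected. To close this gap I would reapply the aggregate existence-of-flip argument at the start of each subround, noting that flips from previous subrounds only affect faces in the upward links of the previously processed color classes, which are disjoint from the faces controlling the flips in the current color class. The $\varepsilon$-slack from each subround's relaxed threshold can leak across at most $(2n)^t$ neighboring vertices per flip (by \Cref{lem:incidence}), yielding a compounding loss bounded by a factor proportional to $4^{t+2}\varepsilon$ after the worst-case summation across all $\chi$ subrounds. Combining the per-subround $\Omega(|\sigma|/\chi)$ gain with this cumulative loss and optimizing the constants gives the stated reduction of $\tfrac{1}{16}(1 - 4^{t+2}\varepsilon)|\sigma|$ after one full round; the principal bookkeeping difficulty is verifying that the $4^{t+2}$ constant in the loss factor is tight enough under the condition $\varepsilon < 4^{-(t+2)}$ to keep the overall fraction positive.
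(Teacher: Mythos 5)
There is a genuine gap, and it lies exactly where you flagged "the principal bookkeeping difficulty": the interaction across subrounds. Your plan relies on the claim that "flips from previous subrounds only affect faces in the upward links of the previously processed color classes, which are disjoint from the faces controlling the flips in the current color class." This is false. The coloring in the construction of $\chi$ only guarantees that upward links are pairwise disjoint \emph{within} a single color class $V_s$; two vertices $u \in V_{s'}$ and $v \in V_s$ with $s' \neq s$ can (and in general do) have overlapping upward links. Consequently, the syndrome that vertex $v$ sees in subround $s$ may already have been altered on $X_{\geq v}(k+1)$ by earlier subrounds, and your subround-by-subround pigeonhole argument, together with the "compounding loss $\propto 4^{t+2}\varepsilon$," does not follow from \Cref{lem:incidence}. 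There is also a quantitative concern with the per-subround pigeonhole step on its own: $\chi = (4n)^t + 1$ grows with $n$, so a single color class accounting for a $1/\chi$ fraction of the aggregate reduction is a vanishing, not constant, fraction of $|\sigma|$; to recover a constant you must sum over all $\chi$ subrounds, which is precisely the step that is broken.

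The paper avoids this obstruction with a different architecture. It does not argue subround by subround. Instead it fixes a specific disjoint decomposition $\hat{x}^{(0)} = \sum_{i=1}^T w^{(i)}$ of a minimal representative of the error, obtained by repeatedly invoking \Cref{lem:existence-local-flip} together with \Cref{claim:block-reduced}; this exhibits a concrete sequential run whose syndrome changes $\nu^{(i)} = \delta^k(w^{(i)})$ nearly tile $\sigma$ by \cref{eq:4913}. It then isolates the "good" indices $G$ for which $\nu^{(i)}$ has a large non-overlapping part $\nu'^{(i)} \subseteq \sigma$ (\Cref{claim:good-indices}). The cross-subround interaction is handled in \Cref{claim:overlap-with-good-indices} by working with the aggregate set $U$ of all $(k+1)$-faces touched at any point during the full round and exploiting the fact that faces in $\supp(\nu'^{(i)}) \setminus U$ are \emph{never} touched, so the local syndrome $z$ seen by a vertex at its turn still agrees with $\sigma$ there; the maximality of $|\delta^k(w^\shortparallel_v)|$ in the algorithm's tie-breaking rule then yields a contradiction if $U$ misses too much of $\nu'^{(i)}$. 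That maximality step is load-bearing and has no counterpart in your argument. Your aggregated existence-of-flip sum and the dominance observation (that $w_v$ is a valid candidate since $\varepsilon < 1/2$, hence $|\delta^k(w_v^\shortparallel)| \geq |\delta^k(w_v)|$) are sound in isolation, but without a mechanism that controls, across subrounds, how much of $\sigma$'s support the parallel flips collectively touch, the per-subround reductions cannot be accumulated into the stated constant-fraction bound.
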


\begin{proof}
    We begin by showing the existence of a `nice' sequence of flips $w^{(1)},...,w^{(T)}$ that form a valid execution of the parameter-$\varepsilon$ sequential decoder upon input $\sigma$. Here, `nice' means that these flips are disjoint, and many of them have large overlaps with the flips that are performed in the parallel decoding round. These properties allow us to prove the lemma.

    To describe this sequence, we recall some notations and use lemmas from the proof of~\Cref{prop:noiseless-Zdecoder}. Let $\hat{x}^{(0)}$ denote a minimal-block-weight cochain that is homologically equivalent to the underlying error $\hat{e}$, so that $\delta^k(\hat{x}^{(0)})= \delta^k(\hat{e}) =\sigma$. Let $\hat{e}^{(i)}=\hat{e} + \sum_{i'=1}^i w^{(i')}$ with $\hat{e}^{(0)}=\hat{e}$ and $\sigma^{(i)} = \sigma + \sum_{i'=1}^{i} \delta^k (w^{(i)})$, with $\sigma^{(0)}= \sigma$. Similarly let $\hat{x}^{(i)}$ be a minimal version of $\hat{e}^{(i)}$.

    Since $|\hat{x}^{(0)}| < \frac{r |X(k)|}{(4n)^t} \left( \frac{(\varepsilon \kappa/2)^{k+1}}{8^{t^2}} -\lambda \right) $ by assumption, we can apply~\Cref{lem:existence-local-flip} which states that there exist
    $\ell \leq k,\ v_\ell \in X(\ell)$ such that, letting $S= \mathsf{type}(v_\ell)$,
    \begin{align}
        |\sigma^{(0)}(X_{\geq v_\ell}(k+1))| - |\sigma^{(0)}(X_{\geq v_\ell}(k+1)) + \delta_{\overline{S}} \hat{x}^{(0)}(X_{\geq v_\ell} (k))| >  (1-\varepsilon)|\delta_{\overline{S}} \hat{x}^{(0)}(X_{\geq v_\ell} (k))|.
    \end{align}
    In other words, $w^{(1)} \triangleq  \hat{x}^{(0)}(X_{\geq v_\ell} (k))$ is a valid flip that could be performed by the sequential decoder. Next, we claim that a minimal weight version of $\hat{e}^{(1)} = \hat{e}^{(0)}+ w^{(1)}$ can be chosen to be $\hat{x}^{(1)} \triangleq \hat{x}^{(0)}+w^{(1)}$. This follows from the following claim.

    \begin{claim}\label{claim:block-reduced}
        Let us say $z \in C^k(X)$ is $|\cdot|$-block-reduced if its block-Hamming weight is minimal among all $k$-cochains homologically equivalent to it. For another $k$-cochain $z'$, we say $z' \subseteq z$ if $\supp(z') \subseteq \supp(z)$ and $z' = z (\supp(z'))$. If $z$ is $|\cdot|$-block-reduced and $z' \subseteq z$, then $z'$ is also $|\cdot|$-block-reduced.
    \end{claim}
    \begin{proof}[Proof of~\Cref{claim:block-reduced}]
        By definition, it holds for all $y \in \Im(\delta^{k-1})$ that
        \begin{align}
            0 \leq |z + y| - |z| = |A| - |B|,  
        \end{align}
        where $A$ denotes the faces in $\supp(y)\backslash \supp(z)$ and $B$ denotes the faces on which $y(B)=z(B)$. Let $A'$, $B'$ be the similar sets defined between $y$ and $z' \subseteq z$. Since $|A'| \geq |A|$ and $|B'| \leq  |B|$, it holds that
        \begin{align}
            0 \leq |A|-|B| \leq |A'| - |B'| = |z'+y| -|z'|,
        \end{align}
        implying the claim.
    \end{proof}
    
    Since $\hat{x}^{(0)}+w^{(1)} \subseteq \hat{x}^{(0)}$,~\Cref{claim:block-reduced} implies that $\hat{x}^{(0)}+w^{(1)}$ is a valid choice for $\hat{x}^{(1)}$. Repeating the same argument (applying~\Cref{lem:existence-local-flip} and~\Cref{claim:block-reduced}), we can choose $w^{(i)} \subseteq \hat{x}^{(i-1)}$ as valid flip at iteration $i$ and
    $\hat{x}^{(i)} = \hat{x}^{(i-1)} + w^{(i)}$ as a minimal block-weight chain homologically equivalent to $\hat{e}^{(i)}$. By construction, $\hat{x}^{(0)}= \sum_{i=1}^{T} w^{(i)}$, where $w^{(1)},...,w^{(T)}$ are disjoint, and each is supported on the upward link of some face $v_\ell^{(i)} \in X(\ell)$ for $\ell \leq k$ (and hence the upward link of some vertex $v_0^{(i)} \in X(0)$). Let $\nu^{(i)} \triangleq \delta^k(w^{(i)})$ be the associated updates to the syndrome. The update rule of the parameter-$\varepsilon$ sequential decoder gives 
    \begin{align}
        |\sigma| = \sum_{i=1}^{T} (|\sigma_{i-1}| - |\sigma_{i}|) \geq \sum_{i=1}^{T} (1-\varepsilon)|\nu^{(i)}|.
        \label{eq:4913}
    \end{align}
    On the other hand, a union bound gives $|\sigma| = |\delta^k(\hat{x}^{(0)})| \leq \sum_{i=1}^{T} |\nu^{(i)}|$, suggesting that the $\nu^{(i)}$ are mostly non-intersecting, a fact that we formalize in~\Cref{claim:good-indices} below.

    Consider for each $i$ a maximal $\nu'^{(i)} \subseteq \nu^{(i)}$ such that $\nu'^{(i)}$ does not intersect with any other $\nu^{(j)}$.  Note that $\nu'^{(i)}$ are disjoint and $\nu'^{(i)} \subseteq \sigma$ for all $i$. Let $G$ be the set of `good' indices $1\leq i \leq T$ such that
    \begin{align}
        |\nu'^{(i)}| \geq ( 1-\varepsilon')|\nu^{(i)}|
        \label{eq:11131}
    \end{align}
    for $\varepsilon' = \varepsilon \cdot 4^{t+1}$. The following claim asserts that the `good' changes cover a large portion of $\sigma$. 
    \begin{claim}\label{claim:good-indices}
        It holds for $c= \frac{\varepsilon'}{4^{t+2}}$ that
        \begin{align}
            |\bigsqcup_{i\in G} \nu'^{(i)} | \geq c\cdot |\sigma|.
            \label{eq:11132}
        \end{align}
    \end{claim}
    \begin{proof}[Proof of~\Cref{claim:good-indices}] 
        Suppose for contradiction that $\sum_{i \in G} |\nu'^{(i)}| < c |\sigma|$. Then,
        \begin{align}
            |\sigma| = |\sum_i \nu^{(i)}| \leq \sum_{i \in G}|\nu^{(i)}| +  |\sum_{i \in B} \nu^{(i)} | \leq  \frac{c}{1-\varepsilon'} |\sigma| + |\sum_{i \in B} \nu^{(i)} |.
            \label{eq:4916}
        \end{align}
        Our strategy will be to upper bound $|\sum_{i \in B} \nu^{(i)} |$ and arrive at a contradiction with~\Cref{eq:4913}. We have
        \begin{align}
            |\sum_{i \in B} \nu^{(i)} | &\leq \sum_{i \in B} |\nu^{(i)} | - \sum_{f \in X(k+1)} \mathbf{1}_{|\{j \in B: f \in \supp(\nu^{(j)}) \}| >1}\\
            &\leq  \sum_{i \in B} |\nu^{(i)} | - \frac{1}{4^t} \sum_{i \in B} \sum_{f \in \supp(\nu^{(i)})} \mathbf{1}_{|\{j \in B: f \in \supp(\nu^{(j)}) \}| >1}.
            \label{eq:4917}
        \end{align}
        Above, the last inequality follows from the fact that the flips $w^{(i)}$ are disjoint and each of them is supported in the upward link of some face $v_\ell^{(i)} \in X(\ell)$ for $\ell \leq k$. Since a face $f$ in $\nu^{(i)}$ contains less than $4^t$ sub-faces, it appears in at most $4^t$ other $\nu^{(j)}$. Next, we lower bound the sum over indicators
        \begin{align*}
            &\sum_{i \in B} \sum_{f \in \supp(\nu^{(i)})} \mathbf{1}_{|\{j \in B: f \in \supp(\nu^{(j)}) \}| >1}\\
            \equiv &\sum_{i \in B} |\{ f \in \nu^{(i)}: \exists j \in B\backslash \{i\} , f \in \supp(\nu^{(j)})\}| \\
            \geq &\sum_{i \in B} \left( |\{ f \in \nu^{(i)}: \exists j \neq i , f \in \supp(\nu^{(j)})\}| - |\{ f \in \nu^{(i)}: \exists j \in G, f \in \supp(\nu^{(j)})\}| \right) \\
            \geq &\sum_{i \in B} \left( \varepsilon' |\nu^{(i)}| - |\{ f \in \nu^{(i)}: \exists j \in G, f \in \supp(\nu^{(j)})\}| \right) \\
            \geq &\sum_{i \in B}  \varepsilon' |\nu^{(i)}| - \sum_{i \in B}  \sum_{f \in \supp(\nu^{(i)}) } \sum_{j \in G} \mathbf{1}_{f \in \supp(\nu^{(j)})}  \\
            = &\sum_{i \in B}  \varepsilon' |\nu^{(i)}| - \sum_{i \in B}  \sum_{f \in \supp(\nu^{(i)}) } \sum_{j \in G} \sum_{f' \in \supp(\nu^{(j)}) } \mathbf{1}_{f = f'} \\\
            = &\sum_{i \in B}  \varepsilon' |\nu^{(i)}| -  \sum_{j \in G} \sum_{f' \in \supp(\nu^{(j)}) } \left( \sum_{i \in B}  \sum_{f \in \supp(\nu^{(i)}) } \mathbf{1}_{f = f'} \right) \\
            \labelrel\geq{eq:112123} &\sum_{i \in B}  \varepsilon' |\nu^{(i)}| -  \sum_{j \in G} \sum_{f' \in \supp(\nu^{(j)}) } 4^t \\
            \geq &\sum_{i \in B}  \varepsilon' |\nu^{(i)}| -  \frac{c}{1-\varepsilon'}|\sigma| 4^t, \numberthis
            \label{eq:4918}
        \end{align*}
        where \eqref{eq:112123} is for the same reason as in~\Cref{eq:4917}.

        Substituting~\Cref{eq:4917} and~\Cref{eq:4918} into~\Cref{eq:4916} we obtain
        \begin{align}
            &|\sigma| \leq \frac{2 c}{1-\varepsilon'} |\sigma| + \sum_{i \in B} (1-\varepsilon'/4^t) |\nu^{(i)} | \\
            \Rightarrow & |\sigma| \leq \frac{(1-\varepsilon'/4^{t})}{1- 2c/(1-\varepsilon')} \sum_{i=1}^T |\nu^{(i)}| \leq (1-\varepsilon'4^{-t}/2) \sum_{i=1}^T |\nu^{(i)}| \leq (1-2\varepsilon) \sum_{i=1}^T |\nu^{(i)}|,
        \end{align}
        which contradicts~\Cref{eq:4913}. \qedhere

    \end{proof}

    The next claim states that the parallel decoding round produces a syndrome change that has a large overlap with each of the steps of the sequential decoder and its proof is inspired by~\cite{leverrier2023decoding}.

    \begin{claim}\label{claim:overlap-with-good-indices}
        The following holds for every $0\leq i \leq T$. Let $v$ be the vertex associated to the update $\nu^{(i)}$ in the sequential decoder upon input syndrome $\sigma$. Now consider one round of the parallel decoder. Let $\sigma^{\shortparallel}_v=\delta^k(w^\shortparallel_v)$ be the syndrome change proposed by vertex $v$ during the parallel decoding round (which could be zero). Let $U = \bigcup_{v \in X(0)} \supp(\sigma^\shortparallel_v)$.
        Then,
        \begin{align}
            |\nu'^{(i)}| - |U \cap \supp(\nu'^{(i)})| < \frac{3}{4} |\nu^{(i)}|.
            \label{eq:11133}
        \end{align}
    \end{claim}
    \begin{proof}[Proof of~\Cref{claim:overlap-with-good-indices}]
        Suppose for the sake of contradiction that $|\nu'^{(i)}| - |U \cap \supp(\nu'^{(i)})| \geq \frac{3}{4} |\nu^{(i)}|$.

        Let $z\in C^{k+1}(X_{\geq v})$ be the syndrome seen locally by vertex $v$ immediately before its turn to make an update in the parallel decoding step. We claim that the following holds
        \begin{align}\label{eq:overlap-good-indices}
            |z + \sigma^\shortparallel_v| - |z + \sigma^\shortparallel_v + \nu^{(i)}| \geq |\nu^{(i)}|/2.
        \end{align}
        Let us prove~\Cref{eq:overlap-good-indices}. First, since $\supp(\nu'^{(i)})\backslash U $ is left untouched throughout the parallel decoding round, we have 
        \begin{align}
            \label{eq:11134}
            \supp(\nu'^{(i)})\backslash U \subseteq \supp(z),
        \end{align}
        and furthermore,
        \begin{align}
            \nu'^{(i)}(\supp(\nu'^{(i)})\backslash U) =\sigma(\supp(\nu'^{(i)})\backslash U)= z(\supp(\nu'^{(i)})\backslash U).
            \label{eq:11135}
        \end{align}
        Similarly, since $\supp(\sigma^\shortparallel_v) \subseteq U$ by definition, we have
        \begin{align}
            \label{eq:11136}
            \supp(\nu'^{(i)})\backslash U \subseteq \supp(z + \sigma^\shortparallel_v).
        \end{align}
        Combining~\cref{eq:11134},~\cref{eq:11135}, and~\cref{eq:11136} and recalling that $\nu'^{(i)} \subseteq \nu^{(i)}$ we see that adding $\nu^{(i)}(\supp(\nu'^{(i)})\backslash U)$ to $z+ \sigma^\shortparallel_v$ removes at least $|\nu'^{(i)}| - |U \cap \supp(\nu'^{(i)})| \geq \frac{3}{4} |\nu^{(i)}|$ faces from it. This implies that adding $\nu^{(i)}$ to $z+ \sigma^\shortparallel_v$ removes at least $\frac{3}{4} |\nu^{(i)}|$ faces from $z+ \sigma^\shortparallel_v$ and therefore adds at most $\frac{1}{4} |\nu^{(i)}|$ faces to it, which in turn implies~\Cref{eq:overlap-good-indices}.

        On the other hand, the parallel decoding rule requires
        \begin{align}
            |z| - |z+\sigma^\shortparallel_v| \geq |\sigma^\shortparallel_v|/2.
            \label{eq:13784}
        \end{align}
        Combining~\Cref{eq:13784} and~\Cref{eq:overlap-good-indices} we obtain
        \begin{align}
            |z| - |z + \sigma^\shortparallel_v + \nu^{(i)}| \geq |\sigma^\shortparallel_v|/2+|\nu^{(i)}|/2 \geq |\sigma^\shortparallel_v + \nu^{(i)}|/2.
        \end{align}
        Note that $\sigma^\shortparallel_v + \nu^{(i)} \in \delta^k(x)$ for some cochain $x \in C^k(X_{\geq v})$ because each of $\sigma^\shortparallel_v $ and $ \nu^{(i)}$ has this form by construction. Thus, $\sigma^\shortparallel_v + \nu^{(i)}$ is also a valid syndrome update for the parallel decoder. Furthermore, $|\sigma^\shortparallel_v + \nu^{(i)}|> |\sigma^\shortparallel_v|$ since at least $\frac{3}{4}|\nu^{(i)}|$ faces in $\supp(\nu^{(i)})$ are not included in $\supp(\sigma^\shortparallel_v)$. This is a contradiction because we have instructed the parallel decoder to pick a local syndrome change $\sigma^\shortparallel_v$ with maximal weight. Therefore,~\Cref{claim:overlap-with-good-indices} holds.
    \end{proof}

    We now combine the two claims to prove the~\Cref{lem:Z-parallel}.~\Cref{claim:overlap-with-good-indices} implies that, for each good index $i \in G$,
    \begin{align}
        |U \cap \supp(\nu'^{(i)})|  > |\nu'^{(i)}| - \frac{3}{4} |\nu^{(i)}| \geq  \frac{1}{4}\left(1 - 4^{t+2}\varepsilon \right) |\nu^{(i)}|.
    \end{align}
    Since $\nu'^{(i)}$ are disjoint, the above expression implies that
    \begin{align}
        |U| \geq \frac{1}{4}\left(1 - 4^{t+2}\varepsilon \right) \sum_{i \in G} |\nu^{(i)}|.
    \end{align}
    Using $\sum_{i \in G}|\nu^{(i)}|\geq \sum_{i \in G} |\nu'^{(i)}|$ and applying~\Cref{claim:good-indices} we obtain that 
    \begin{align}
        |U| \geq \frac{1}{8}\left(1 - 4^{t+2}\varepsilon \right) |\sigma|.
    \end{align}
    Finally, notice that the total number of faces on which the syndrome is updated (where a face may be flipped by more than one sub-rounds of the parallel decoding round) is at least $|U|$. Hence, the update rule of~\Cref{alg:Z-parallel} implies that the decrease in the syndrome weight is at least $|U|/2$. This concludes the proof of~\Cref{lem:Z-parallel}.
\end{proof}

Having established~\Cref{lem:Z-parallel} we can now proceed to prove the parallel decoder performance guarantee.
\begin{prop}[Noiseless parallel Z-syndrome decoder]\label{prop:noiseless-parallel-Z} Consider the same setting as in~\Cref{prop:noiseless-Zdecoder}.
Let $\frac{2}{\kappa} \lambda^{1/t} 8^{t} < \varepsilon < 4^{-(t+2)}$. Let $\hat{e} \in C^{k}(X)$ be the cochain (over $\mathbb{F}_q$) corresponding to a Pauli X error such that $|\hat{e}|_R< \frac{8^{-t^2}\kappa^{k+1}-\lambda}{2^{t+1} (4n)^t} r |X(k)| \cdot \min (\frac{(8^{-t^2}\kappa^{k+1} - \lambda)^2}{2^{t+3}}, (\frac{\varepsilon \kappa}{2})^{k+1}8^{-t^2}-\lambda )$. Given the syndrome $\sigma = \delta^k (\hat{e}) \in C^{k+1}(X)$ for the error $\hat{e}$, then~\Cref{alg:Z-parallel} takes time $O(q^{n^{O(t)}})$ per round and achieves the following guarantees:
\begin{itemize}
    \item After $\tau = \log (\frac{2^{t+s+1}}{8^{-t^2}\kappa^{k+1}-\lambda})/\log (\frac{1}{15/16 + 4^t \varepsilon})$ rounds,~\Cref{alg:Z-parallel} outputs a correction cochain $w$ such that $w + \hat{e}$ has stabilizer-reduced block-weight at most $\frac{1}{2^s}|\hat{e}|_R$ (i.e., the error weight is reduced by a factor of $2^s$ after every $\tau$ rounds).
    \item As a consequence, if the number of rounds is $O(\tau \log |X(k)|)$, then~\Cref{alg:Z-parallel} outputs a correction cochain $w$ that is homologically equivalent to $\hat{e}$.
\end{itemize}
\end{prop}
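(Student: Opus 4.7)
The plan is to iterate Lemma~\ref{lem:Z-parallel} (the per-round contraction lemma), track how the weight of the minimal residual cochain shrinks from round to round, and finally convert a ``$<1$'' residual weight into homological equivalence.

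Concretely, I would introduce auxiliary (decoder-invisible) cochains $\hat{x}^{(j)}$ defined as the minimum-block-weight representative of $\hat{e} + w^{(j\chi)}$, where $w^{(j\chi)}$ is the accumulated correction after $j$ full parallel rounds. These satisfy $\delta^k(\hat{x}^{(j)}) = \sigma^{(j\chi)}$ and are co-locally minimal. The first step is to verify, by induction on $j$, that the hypothesis of Lemma~\ref{lem:Z-parallel} holds at each round, i.e.\ that the sequential decoder with parameter $\varepsilon$ successfully corrects the syndrome $\sigma^{(j\chi)}$. The base case $j=0$ follows from the assumed bound on $|\hat{e}|_R$ combined with Proposition~\ref{prop:noiseless-Zdecoder}. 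For the inductive step, note that $|\hat{x}^{(j)}|_R \le |\hat{x}^{(0)}|_R = |\hat{e}|_R$ since the algorithm only ever adds $\delta^{k-1}$-coboundaries together with local flips that strictly decrease the syndrome weight, so the sequential-decoder hypothesis from Proposition~\ref{prop:noiseless-Zdecoder} remains in force.

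The second step is to apply Lemma~\ref{lem:Z-parallel} $j$ times in a row to conclude
\begin{align*}
  |\sigma^{(j\chi)}| \;\le\; \left(\tfrac{15}{16} + 4^t \varepsilon\right)^{j} |\sigma^{(0)}|,
\end{align*}
using $c := \tfrac{1}{16}(1-4^{t+2}\varepsilon)$ and $1-c = \tfrac{15}{16}+4^t\varepsilon$. Together with the trivial bound $|\sigma^{(0)}| \le 2^t n \,|\hat{e}|_R$, this yields exponential decay of the syndrome weight. The third step converts this back into a bound on the minimal residual: by the small-set co-LM expansion Lemma~\ref{lem:small-set-coLM-expansion} (whose hypothesis is satisfied throughout by the inductive control established above),
\begin{align*}
  |\hat{x}^{(j)}|_R \;\le\; \frac{2}{n\bigl(8^{-t^2}\kappa^{k+1}-\lambda\bigr)}\, |\sigma^{(j\chi)}|
  \;\le\; \frac{2^{t+1}}{8^{-t^2}\kappa^{k+1}-\lambda}\,\left(\tfrac{15}{16} + 4^t\varepsilon\right)^{j}|\hat{e}|_R .
\end{align*}
Setting $j = \tau := \log\!\bigl(\tfrac{2^{t+s+1}}{8^{-t^2}\kappa^{k+1}-\lambda}\bigr)\big/\log\!\bigl(\tfrac{1}{15/16+4^t\varepsilon}\bigr)$ forces the right-hand side to be at most $2^{-s}|\hat{e}|_R$, which gives the first bullet.

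For the second bullet, iterate: after $\tau$ more rounds the residual weight drops by another factor of $2$, so after $\tau \cdot \lceil \log_2 |\hat{e}|_R \rceil = O(\tau \log |X(k)|)$ rounds the minimal residual has block-weight strictly less than $1$, hence equals $0$, so $w + \hat{e}$ is a coboundary and $w$ is homologically equivalent to $\hat{e}$. The per-round runtime $O(q^{n^{O(t)}})$ is immediate from the fact that each sub-round searches each vertex's upward link (of bounded size) for a best local flip in parallel.

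The main obstacle I anticipate is bookkeeping the hypothesis chain: Lemma~\ref{lem:Z-parallel} needs the sequential decoder to succeed on the \emph{current} syndrome, and Lemma~\ref{lem:small-set-coLM-expansion} needs $|\hat{x}^{(j)}|$ to lie below a fixed threshold. Both must be propagated through the iteration simultaneously with the decay estimate. The key observation that makes this work is monotonicity of the syndrome weight within and across parallel rounds, which ensures that once the initial hypothesis of Proposition~\ref{prop:noiseless-Zdecoder} is satisfied it persists for every subsequent round; everything else is a clean geometric series.
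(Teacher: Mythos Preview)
Your overall architecture matches the paper's: iterate Lemma~\ref{lem:Z-parallel}, then convert syndrome decay into residual-error decay via the small-set co-LM expansion (Lemma~\ref{lem:small-set-coLM-expansion}). The runtime and the second bullet follow just as you describe.

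There is, however, a real gap in your inductive step. You claim $|\hat{x}^{(j)}|_R \le |\hat{x}^{(0)}|_R$ ``since the algorithm only ever adds $\delta^{k-1}$-coboundaries together with local flips that strictly decrease the syndrome weight.'' Two problems with this. First, the algorithm does \emph{not} add $\delta^{k-1}$-coboundaries; it only adds the local flips $w^\shortparallel_v \in C^k(X_{\ge v})$, and the passage to the minimal representative $\hat{x}^{(j)}$ is an analysis device, not an algorithmic step. Second, and more importantly, the implication ``syndrome weight decreases $\Rightarrow$ reduced error weight does not increase'' is exactly the statement you cannot take for granted---it is what small-set co-LM expansion supplies, and that lemma has its own smallness hypothesis on $|\hat{x}^{(j)}|$ that you must first verify. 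So your induction is circular as written, and the conclusion $|\hat{x}^{(j)}|_R \le |\hat{e}|_R$ is in fact not what gets established (the constant $\frac{2^{t+1}}{8^{-t^2}\kappa^{k+1}-\lambda}$ in your Step~3 display is typically $>1$).

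The missing observation, which the paper uses and which you almost reach in your final paragraph, is \emph{serialization}: within each subround the participating vertices have disjoint upward links, so the parallel flips can be executed one at a time and constitute a valid run of the sequential decoder with parameter $\varepsilon=1/2$. This lets you import verbatim the step-by-step argument around \eqref{eq:near-end-sequential-Z}: each individual flip changes $|\hat{x}|$ by at most $(2n)^t$, so the smallness hypothesis of Lemma~\ref{lem:small-set-coLM-expansion} is preserved from one flip to the next, yielding $|\hat{x}^{(j)}| \le \frac{2}{n(8^{-t^2}\kappa^{k+1}-\lambda)}|\sigma^{(j\chi)}|$, which stays below the threshold $E_0$ (not below $|\hat{e}|_R$) by the standing assumption on $|\hat{e}|_R$. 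With that in place, the hypothesis of Lemma~\ref{lem:Z-parallel} persists and your Steps~2--4 go through unchanged.
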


\begin{proof}
    We first observe the following:
    after any parallel decoding round, if we were to replace the parallel decoder by the parameter-$\varepsilon$ sequential decoder from~\Cref{prop:noiseless-Zdecoder}, then the sequential decoder would successfully correct the remaining error. This can be seen by using a similar argument to the paragraphs near the end of~\Cref{prop:noiseless-Zdecoder}'s proof as follows. By construction, the syndrome block-weight is always at most $|\sigma^{(i)}| \leq 2^t n |\hat{e}|$ throughout the parallel decoding algorithm. Imagine serialize the parallel local flips within a parallel decoding round, then these flips themselves form a valid sequence of local flips that could have been executed by the sequential decoder (with parameter $\varepsilon=1/2$). Hence, we can use the small-set co-LM expansion property from~\Cref{lem:small-set-coLM-expansion} and a similar argument to~\Cref{eq:near-end-sequential-Z} to conclude that, after every parallel decoding round\footnote{In fact, this holds after every subround.}, the remaining error $\hat{e}+\sum_{j=1}^{i\chi} w^{(j)}$ (for all $i$) has stabilizer-reduced block-weight satisfying
    \begin{align}
        |\hat{x}^{(i\chi)}|   \leq \frac{2}{n(8^{-t^2}\kappa^{k+1}-\lambda )} |\sigma^{(i\chi)}| \leq \frac{r |X(k)|}{(4n)^t} \cdot \min \left(\frac{(8^{-t^2}\kappa^{k+1} - \lambda)^2}{2^{t+3}}, (\frac{\varepsilon \kappa}{2})^{k+1}8^{-t^2}-\lambda \right),
    \end{align}
    where $\hat{x}^{(i\chi)}$ is the minimal block-weight cochain that is homologically equivalent to $\hat{e}+\sum_{j=1}^{i\chi} w^{(j)}$.  This suffices to establish the convergence of the parameter-$\varepsilon$ sequential decoder as in~\Cref{prop:noiseless-Zdecoder}.

    We now apply~\Cref{lem:Z-parallel}. The initial syndrome block-weight is bounded by $|\sigma| \leq 2^t n |\hat{x}^{(0)}|$, so after $\tau$ rounds of parallel decoding the remaining syndrome is
    \begin{align}
        |\sigma^{(\tau \chi)}| \leq (1 - \frac{1}{16}(1 - 4^{t+2}\varepsilon))^{\tau} 2^t n |\hat{e}|.
    \end{align}
    It follows that
    \begin{align}
        |\hat{x}^{(i\chi)}|   \leq \frac{2^{t+1}}{(8^{-t^2}\kappa^{k+1}-\lambda )} \left(\frac{15}{16} + 4^{t}\varepsilon \right)^{\tau} |\hat{x}^{(0)}|.
    \end{align}
    Therefore, the error block-weight decreases by a factor of $2^s$ after every $\tau = \log (\frac{2^{t+s+1}}{8^{-t^2}\kappa^{k+1}-\lambda})/\log (\frac{1}{15/16 + 4^t \varepsilon})$ parallel decoding rounds. And $O(\tau \log(|X(k)|))$ rounds suffice to correct all errors.

    \paragraph{Runtime.} It is clear to see that each local flip can be bruteforce searched in time $q^{n^{O(t)}}$ (see the runtime analysis at the end of~\Cref{prop:noiseless-Zdecoder}'s proof), and hence each parallel round takes time $q^{n^{O(t)}} \cdot \chi = q^{n^{O(t)}}$.
\end{proof}

Finally we consider the noisy syndrome case. The single-shot property quite directly follows from the above strong form of parallelization. This simple proof also appears in~\cite[Section 6]{anshu2024circuit}, which is in turn inspired by~\cite{pippenger1985networks}. \footnote{A more careful analysis of~\Cref{lem:Z-parallel} in the noisy syndrome case can give better constants for the single-shot parallel decoder; e.g., we can avoid the exponential dependence in $q^{n^{O(t)}}$ in the residual error.}

\begin{prop}[Noisy parallel Z-syndrome decoder]\label{prop:noisy-parallel-Z} Consider the same setting as in~\Cref{prop:noiseless-Zdecoder}.
    Let $\frac{2}{\kappa} \lambda^{1/t} 8^{t} < \varepsilon < 4^{-(t+2)}$ and set $\tau = \log (\frac{2^{t+s+1}}{8^{-t^2}\kappa^{k+1}-\lambda})/\log (\frac{1}{15/16 + 4^t \varepsilon})$ and $E_0 = \frac{8^{-t^2}\kappa^{k+1}-\lambda}{2^{t+1} (4n)^t} r |X(k)| \cdot \min (\frac{(8^{-t^2}\kappa^{k+1} - \lambda)^2}{2^{t+3}}, (\frac{\varepsilon \kappa}{2})^{k+1}8^{-t^2}-\lambda )$. Let $\hat{e} \in C^{k}(X)$ be the cochain (over $\mathbb{F}_q$) corresponding to a Pauli X error such that $|\hat{e}|_R< E_0 $ and $\hat{\sigma} = \delta^k (\hat{e}) \in C^{k+1}(X)$ be the ideal syndrome. And let $\sigma = \hat{\sigma} + \hat{m}$ be the observed syndrome, where the measurement error $\hat{m} \in C^{k+1}(X)$ satisfies $\frac{1}{2^s}|\hat{e}|_R + 2^{O(\tau q^{n^{O(t)}})}  |\hat{m}| < E_0$. Then~\Cref{alg:Z-parallel} upon input the noisy syndrome $\sigma$ takes time $O(q^{n^{O(t)}})$ per round and achieves the following guarantees:
    \begin{itemize}
        \item After $\tau$ decoding rounds,~\Cref{alg:Z-parallel} outputs a correction cochain $w$ such that $|w +\hat{e}|_R \leq \frac{1}{2^s}|\hat{e}| + 2^{O(\tau q^{n^{O(t)}})} |\hat{m}| $.
        \item As a consequence, if the number of rounds is $O(\tau \log |X(k)|)$, then~\Cref{alg:Z-parallel} outputs a correction cochain $w$ such that $|w+\hat{e}|_R \leq 2^{O(\tau q^{n^{O(t)}})} |\hat{m}|$.
    \end{itemize}
\end{prop}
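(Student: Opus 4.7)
The plan is to upgrade the noiseless parallel analysis of \Cref{prop:noiseless-parallel-Z} to the noisy-syndrome setting via a Pippenger-style robustness argument, as referenced in \cite{anshu2024circuit}. I run Algorithm \ref{alg:Z-parallel} on the observed syndrome $\sigma = \hat\sigma + \hat m$ and, as in the noiseless proof, track the auxiliary quantities $\hat x^{(i)}$ --- the minimal block-weight cochain homologically equivalent to $\hat e + \sum_{j\le i} w^{(j)}$ --- together with its true syndrome $\hat\sigma^{(i)} = \delta^k \hat x^{(i)}$. The observed syndrome the decoder actually sees satisfies $\sigma^{(i)} = \hat\sigma^{(i)} + \hat m$ throughout the execution; crucially, $\hat m$ is a \emph{fixed} cochain that does not evolve with $i$, only $\hat\sigma^{(i)}$ does.

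The heart of the argument will be to prove a noisy analog of \Cref{lem:Z-parallel}: as long as $|\hat x^{(i\chi)}|_R$ stays below the decodable threshold $E_0$, a single round of parallel decoding satisfies
\[
  |\hat x^{((i+1)\chi)}|_R \;\le\; \left(1-\tfrac{1}{16}\bigl(1-4^{t+2}\varepsilon\bigr)\right)|\hat x^{(i\chi)}|_R \;+\; C\,|\hat m|,
\]
for some $C = q^{n^{O(t)}}$. I will establish this by rerunning the sequential-decoder coupling argument that produced the disjoint good-index set $G$ in \Cref{claim:good-indices,claim:overlap-with-good-indices}, but now restricted to the \emph{clean} vertices $v \in X(0)$ whose upward local view contains no face of $\supp \hat m$. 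For such vertices the local syndrome seen by the parallel decoder is identical to the one it would see under the true syndrome, so their contribution to the observed-to-true syndrome reduction is exactly governed by the noiseless \Cref{lem:Z-parallel}. The remaining \emph{polluted} vertices --- those whose upward link intersects $\supp \hat m$ --- number at most $O(n^t)\,|\hat m|$ by \Cref{lem:incidence}, and each can make at most one incorrect local flip whose block-weight is bounded by $q^{n^{O(t)}}$; summing their contributions to $\hat x^{((i+1)\chi)}$ through the stabilizer-reduction map absorbs into the additive $C|\hat m|$ term. I will also use the factor-of-$1/2$ slack in the parallel update rule of \Cref{alg:Z-parallel} to guarantee that clean-vertex contractions are not undone by the polluted vertices.

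Iterating this recursion $\tau$ times yields
\[
  |\hat x^{(\tau\chi)}|_R \;\le\; \tfrac{1}{2^s}\,|\hat e|_R \;+\; 2^{O(\tau\, q^{n^{O(t)}})}\,|\hat m|,
\]
proving the first bullet, and running for $O(\tau \log |X(k)|)$ additional rounds drives the first term below $1$, proving the second. The quantitative hypothesis $\tfrac{1}{2^s}|\hat e|_R + 2^{O(\tau q^{n^{O(t)}})}|\hat m| < E_0$ in the proposition statement is precisely what is required to preserve the invariant $|\hat x^{(i\chi)}|_R < E_0$ at every step, so that \Cref{lem:existence-local-flip} and the small-set co-LM expansion of \Cref{lem:small-set-coLM-expansion} remain applicable throughout; this is an immediate consequence of the recursion by telescoping. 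I expect the main obstacle to be cleanly decoupling the polluted vertices from the clean ones in the proof of the per-round recursion, since polluted vertices may still end up contributing to the good-index family $G$ in uncontrolled ways; the resolution will be to simply discard their contributions from $G$ at a loss of at most $O(n^t)|\hat m|$ syndrome weight and fold this loss into the constant $C$.
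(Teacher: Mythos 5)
Your proposal takes a genuinely different route from the paper. The paper's proof is essentially a one-paragraph black-box argument: it observes that the $\tau$-round parallel decoder is a bounded-fan-in classical circuit $\mathsf{EC}_\tau$ of depth $O(\tau q^{n^{O(t)}})$, treats the measurement errors as input faults to this circuit, and applies a lightcone bound --- input faults at $|\hat m|$ positions can affect at most $|\hat m| \cdot 2^{O(\tau q^{n^{O(t)}})}$ output positions --- so the first bullet follows directly from \Cref{prop:noiseless-parallel-Z} plus fault propagation. The second bullet is then obtained by iterating the first bullet with $s=1$ in blocks of $\tau$ rounds, noting that the decoder entering each new block sees a syndrome that differs from the true one by the same fixed $\hat m$, and summing the resulting geometric series. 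No part of \Cref{lem:Z-parallel}'s internal coupling argument is reopened.

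Your plan instead tries to prove a noisy per-round analog of \Cref{lem:Z-parallel} via a clean/polluted vertex split. If it went through, it would yield a tighter constant ($O(\tau q^{n^{O(t)}})|\hat m|$ rather than $2^{O(\tau q^{n^{O(t)}})}|\hat m|$), which is a genuine improvement. However, the plan has a real gap you partially acknowledge: the good-index family $G$ in \Cref{claim:good-indices} is constructed from a flip sequence $w^{(1)},\ldots,w^{(T)}$ of a \emph{sequential} decoder, and in the noisy setting this sequential decoder must itself be run on the observed syndrome $\sigma = \hat\sigma + \hat m$. This changes the invariants behind \Cref{lem:existence-local-flip}, \Cref{lem:small-set-coLM-expansion}, \Cref{claim:good-indices}, and \Cref{claim:overlap-with-good-indices}: the sequence no longer terminates with zero residual, the flips may be supported at polluted vertices, and the overlap argument in \Cref{claim:overlap-with-good-indices} compares the parallel decoder's locally optimal flip at $v$ against $\nu^{(i)}$ under the assumption that both operate on the same local view --- which fails at polluted vertices. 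Restricting $G$ to clean vertices is not just a matter of ``discarding $O(n^t)|\hat m|$'': you need to re-derive the cover bound of \Cref{claim:good-indices} for the restricted set, and this requires arguing that the discarded flips can be charged to $|\hat m|$ throughout the argument, not just at the final summation. The paper avoids this entire bookkeeping with the circuit lightcone argument, at the price of an exponentially larger (but still $O(1)$, since $\tau$ is constant) prefactor on the $|\hat m|$ term.
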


\begin{proof}   
   The main observation is that each decoding round of~\Cref{alg:Z-parallel} can be implemented by a circuit of depth $q^{n^{O(t)}}$ and hence we can use a lightcone argument in this circuit viewpoint. More specifically, the $\tau$-round guarantee from~\Cref{prop:noiseless-parallel-Z} can be interpreted as there exists a classical circuit $\mathsf{EC}_\tau$ with 2-bounded gates of depth $\mathsf{D}_\tau= O(\tau q^{n^{O(t)}})$ such that upon input the ideal syndrome $\hat{\sigma}$ it outputs a correction vector $w_1 \in C^k(X)$ with $|w_1+\hat{e}|_R \leq \frac{1}{2^s}|\hat{e}|_R$. Measurement errors correspond to input faults in $\mathsf{EC}_\tau$. These faults cause at most $|\hat{m}| 2^{O(\tau q^{n^{O(t)}})}$ faults in the output of $\mathsf{EC}_\tau$.
   
   Now we show the second item. Choosing $s=1$, we can repeat this $\tau$-round analysis under the assumption $|\hat{e}+w_1|_R \leq \frac{1}{2}|\hat{e}|_R + 2^{O(\tau q^{n^{O(t)}})} |\hat{m}| < E_0$ as follows. Before the next $\tau$ rounds, the decoder would ideally `see' the syndrome of $\hat{e}+w_1$, which is $\hat{\sigma} + \delta^{k}(w_1)$. However, what it actually sees is $\sigma + \delta^{k}(w_1)$, which differs from $\hat{\sigma} + \delta^{k}(w_1)$ by $\hat{m}$. Hence, we are in the same situation of decoding noisy syndrome with measurement noise $\hat{m}$ as before. The next $\tau$ rounds output a correction $w_2$ such that
   \begin{align}
    |\hat{e}+w_1+w_2|_R \leq \frac{1}{2}\left(\frac{1}{2}|\hat{e}|_R + 2^{O(\tau q^{n^{O(t)}})} |\hat{m}|\right) + 2^{O(\tau q^{n^{O(t)}})}|\hat{m}|.
   \end{align}
    And hence after $\Theta(\tau \log |X(k)|)$ rounds the stabilizer-reduced block-weight of the residual error is bounded by
   \begin{align}
        \frac{1}{2^{\Theta(\log |X(k)|)}}|\hat{e}|_R + (1+ \frac{1}{2}+\hdots+\frac{1}{2^{\Theta(\log |X(k)|)}}) 2^{O(\tau q^{n^{O(t)}})}|\hat{m}| \leq 3 \cdot 2^{O(\tau q^{n^{O(t)}})} |\hat{m}|.
   \end{align}
\end{proof}

\subsection{X-syndrome decoder}\label{sec:X-decoder}
In this subsection, we present the X-syndrome decoder in~\Cref{alg:Xdecoder}. Due to the asymmetry between the X and Z directions in the cubical complex construction, we cannot simply use the Z-syndrome decoder to decode X syndromes.
The main idea in the X-decoder is a reduction to the Z-decoding problem in a related code. This idea was introduced in~\cite{dinur2022good}, and our decoder is a natural generalization of theirs to higher-dimensional cubical complexes with $t\geq 2$. We further show that our decoder is single-shot. We will first define some auxiliary tools and lemmas some of which were introduced in~\cite{dinur2024expansion}. We then describe the X-syndrome decoder and prove its correctness, single-shot property, and time complexity. Finally, we show how the decoder can be parallelized.

\subsubsection{Auxiliary definitions and lemmas}

In~\cite{dinur2022good}, the X decoder is obtained via a reduction to the Z decoder on a `dual' chain complex. Similarly, the distance proof in~\cite{dinur2024expansion} is also done via the same reduction.

\begin{definition}\label{def:dual-chain-complex}
    The dual chain complex $\Tilde{C}(X, \Tilde{\mathcal{F}})$ is defined similarly to $C(X,\mathcal{F})$, except that the matrices $h_1,\hdots,h_t$ are replaced by $h_1^\perp,\hdots, h_t^\perp$, where $h_i^\perp \in \mathbb{F}_q^{k_i \times n_i}$ is a full-row-rank parity check matrix for the dual code, i.e., a generating matrix for the code $\ker(h_i)$. Accordingly, the local vector space associated to a face of type $S$ is $\mathbb{F}_q^{\prod_{j \in \overline{S}} [k_j] }$ (in contrast to $\mathbb{F}_q^{\prod_{j \in \overline{S}} [m_j] }$ in the original $C(X,\mathcal{F})$.)
\end{definition}

We begin by introducing yet other new chain complexes by attaching various kinds of local sheaves to the cubical complex $X$. These sheaves correspond to local information of the faces in the original complex $C(X,\mcF)$, and will facilitate proving the mentioned reduction.

\begin{definition}[Upward-view chain complex] Let $0 \leq k \leq t$.
\begin{itemize}
    \item (Upward-view sheaf) For each face $f \in X(k')$ with $k'\leq k$, we define a sheaf for $f$ by $\mathcal{F}_k(f)=C_k(X_{\geq f})$. In other words, the new sheaf is an `aggregation' of all sheaves associated to the $k$-faces in the upward link of $f$ in the original complex $C(X,\mathcal{F})$. The elements of $\mathcal{F}_k(f)$ can themselves be viewed as $k$-chains of the original complex $C(X,\mcF)$, whose support is limited to $X_{\geq f}(k)$. If $k' > k$, we define $\mathcal{F}_k(f)=\{0\}$.
    Furthermore, recall that~\Cref{lemma:local-chain} asserts that space $\mathcal{F}_k(f)=C_{k}(X_{\geq f})$ for a face $f$ of type $S$ is also isomorphic to $C_k(L_{\overline{S}})$.
    \item (Upward-view complex) We denote the resulting cochain complex as (\textbf{boldface}) $\bs C_{ k}(X, \mathcal{F}_k)$. Note that such a cochain complex is defined for each $k$, and hence the extra subscript $k$\footnote{In other words, the subscript $k$ does \emph{not} label the chain spaces of the complex. This is not an issue because we will only use the cochain spaces of them, which are labeled by a superscript.}.
    \item (Coboundary map) The $i$-th coboundary map (\textbf{boldface}) $\bs \Delta^i_{ k}: \bs C^i_{ k}(X, \mathcal{F}_k) \mapsto \bs C^{i+1}_{ k}(X, \mathcal{F}_k) $ for $\bs C_{ k}(X, \mathcal{F}_k)$ is defined in a natural way. For a cochain $y \in \bs C^i_{ k}(X, \mathcal{F}_k)$, the component of $\bs \Delta^i_{ k} (y)$ on each $f' \in X(i+1)$, is the sum over $f \precdot f'$ of the restrictions of $y(f) \in \mathcal{F}_k(f)=C_k(X_{\geq f})$ onto $X_{\geq f'}(k)$, i.e.,
    \begin{align}
        \bs \Delta^i_{ k}(y)(f') = \sum_{f \precdot f'} y(f) \big|_{X_{\geq f'}(k)}.
        \label{eq:local-view-sheaf}
    \end{align}
\end{itemize}
Similar to before, the block-Hamming weight for $y \in \bs C^i_{ k}(X, \mathcal{F}_k)$, denoted $|\cdot|$, is defined to be the number of faces on which $y$ is nonzero.
\end{definition}

The above complex can be interpreted as follows. The sheaf $\mathcal{F}_k(f)$ contains the face $f$'s `opinion' about its upward link in the original chain complex $C(X,\mathcal{F})$. Let us show that $\bs \Delta^i_{ k}$ are indeed valid coboundary maps (compose to zero). In fact, they further satisfy the exactness property. The following claim is [Lemma 7.4 in~\cite{dinur2024expansion}], and we include its proof for completeness and to explain the added statement about time complexity (which was not stated in their paper).

\begin{claim}
\label{claim:exactness-agg}
For each $1 \leq i \leq k$, the map $\bs \Delta^i_{ k}$ is exact on $i$-chains. In other words, for any $y \in \bs C^i_{ k}(X, \mathcal{F}_k)$ such that $\bs \Delta^i_{ k}(y)=0$, there exists a $z \in \bs C^{i-1}_{ k}(X, \mathcal{F}_k)$ such that $\bs \Delta^{i-1}_{ k} (z)=y$. Furthermore, such $z$ exists with $|z|\leq (4n)^t |y|$ and can be constructed in sequential linear time $O(q^{n^{O(t)}}|X(k)|)$ or in parallel time $O(q^{n^{O(t)}})$.
\end{claim}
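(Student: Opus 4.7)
The strategy is to reduce this global exactness statement to a collection of independent local problems, one for each top-dimensional face $g \in X(k)$, each living on the Boolean $k$-cube that is the face poset of the downward link $X_{\leq g}$. Fix such a $g$ and extract the ``slice'' $\{y(f')_g\}_{f' \in X(i),\ f' \preceq g} \subseteq V_g$, where $y(f')_g$ denotes the component at $g$ of the chain $y(f') \in C_k(X_{\geq f'})$. The hypothesis $\bs\Delta^i_k(y) = 0$, evaluated at any $(i+1)$-face $h \preceq g$ and at position $g$, specializes via \eqref{eq:local-view-sheaf} to the Boolean-cube cocycle condition $\sum_{f' \precdot h,\ f' \preceq g} y(f')_g = 0$. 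Since the $k$-cube is contractible, its cochain complex with constant $V_g$-coefficients is exact in every degree $1 \leq i \leq k$, so a local primitive $\{z^{(g)}(f)\}_{f \in X(i-1),\ f \preceq g}$ exists and can be found by linear algebra.

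I then assemble the pieces: for each $f \in X(i-1)$ define $z(f) \in C_k(X_{\geq f}) = \mcF_k(f)$ by letting its component at each $g \succeq f$ in $X(k)$ be $z^{(g)}(f)$. The identity $\bs\Delta^{i-1}_k(z) = y$ is then verified face-by-face in $X(k)$: the slice at $g$ of the left-hand side is precisely the Boolean-cube coboundary of $\{z^{(g)}(\cdot)\}$, which equals $\{y(f')_g\}$ by construction.

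For the support bound, note that $z(f) \neq 0$ forces $z^{(g)}(f) \neq 0$ for some $g \succeq f$, which in turn requires the local cocycle at that $g$ to be nontrivial, hence $y(f') \neq 0$ for some $f' \succ f$ at level $i$ with $f' \preceq g$. Reversing this chain, each $f' \in X(i)$ with $y(f') \neq 0$ can influence at most $|X_{\geq f'}(k)| \cdot \max_{g \in X(k)} |X_{\leq g}(i-1)|$ faces $f$ at level $i-1$, and \Cref{lem:incidence} bounds this product by $(4n)^t$, yielding $|z| \leq (4n)^t |y|$. For the time complexity, each local cube problem has at most $2^t$ unknowns in a vector space of dimension bounded by a constant depending only on $n$ and $t$, so Gaussian elimination over $\mathbb{F}_q$ solves it in time $q^{n^{O(t)}}$ per face. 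Since the problems at distinct $g$'s are completely independent, sequential iteration yields $O(q^{n^{O(t)}} |X(k)|)$ and full parallelization yields $O(q^{n^{O(t)}})$.

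The main point I expect to verify carefully is the correspondence between the formula \eqref{eq:local-view-sheaf} (where restriction to $X_{\geq f'}(k)$ naturally arises) and the standard Boolean-cube coboundary map at the fixed position $g$; once this identification is in hand, all the remaining steps -- local exactness, global assembly, and the counting bound -- are essentially mechanical.
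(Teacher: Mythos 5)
Your proof takes essentially the same route as the paper's. Both proofs rest on the key observation that $\bs\Delta^i_k$ decomposes as a direct sum $\bigoplus_{g \in X(k)} \Delta^i_g$ of completely independent local maps, one per $k$-face $g$, each identified with the coboundary map of the Boolean $k$-cube (the face poset of $X_{\leq g}$) with constant $V_g$-coefficients; the paper establishes this by exhibiting the isomorphism $\bs C^i_k(X, \mathcal{F}_k) \cong \bigoplus_{u \in X(k)} C^i(X_{\leq u}, V_u)$ explicitly, and invokes the K\"unneth formula for the exactness of the hypercube complex, whereas you phrase the same decomposition in terms of ``slices at $g$'' and invoke contractibility directly — these are the same argument. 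Your local-assembly step, support-counting, and per-face brute-force complexity analysis all mirror the paper's. One minor wrinkle worth being aware of (present in both accounts): the constant $(4n)^t$ requires some care, since $|X_{\geq f'}(k)| \cdot |X_{\leq g}(i-1)| = \binom{t-i}{k-i}n^{k-i} \cdot \binom{k}{i-1}2^{k-i+1}$ is bounded by $(4n)^t$ only under a mild relation between $n$ and $t$ (such as $n \geq 2^{t-1}$), which holds in the instantiations used but is not stated; if one only wants the qualitative conclusion $|z| \leq n^{O(t)}|y|$, which is all the downstream applications use, there is no issue.
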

\begin{proof} The main observation is that $\bs \Delta^i_{ k}$ is a direct sum of local maps corresponding to the \emph{downward} link of the $k$-faces. In particular, note that $\bs C^i_{ k}(X, \mathcal{F}_k)$ can be decomposed as follows
\begin{align*}
    \bs C^i_{ k}(X, \mathcal{F}_k) = \bigoplus_{f \in X(i)} C_k(X_{\geq f}) =  \bigoplus_{f \in X(i)} \left(\bigoplus_{u \in X(k): f \preceq u} V_u  \right) = \bigoplus_{u \in X(k)} C^i(X_{\leq u}, V_u),
\end{align*}
where we recall $X_{\leq u}$ denotes the faces in the downward link of $u$ and here $C^i(X_{\leq u}, V_u) \triangleq \bigoplus_{f \in X(i): f \preceq u} V_u$ stores the opinion of each $i$-face in $X_{\leq u}$ about the value on the sheaf $V_u$. Note that $C^i(X_{\leq u}, V_u) \cong V_u^{\oplus |X_{\leq u}(i)|}= V_u^{\oplus {k \choose i} 2^{k-i}}$ according to~\Cref{lem:incidence}. See an illustration of the decomposition below.
\begin{center}
\includegraphics[width=0.5\linewidth]{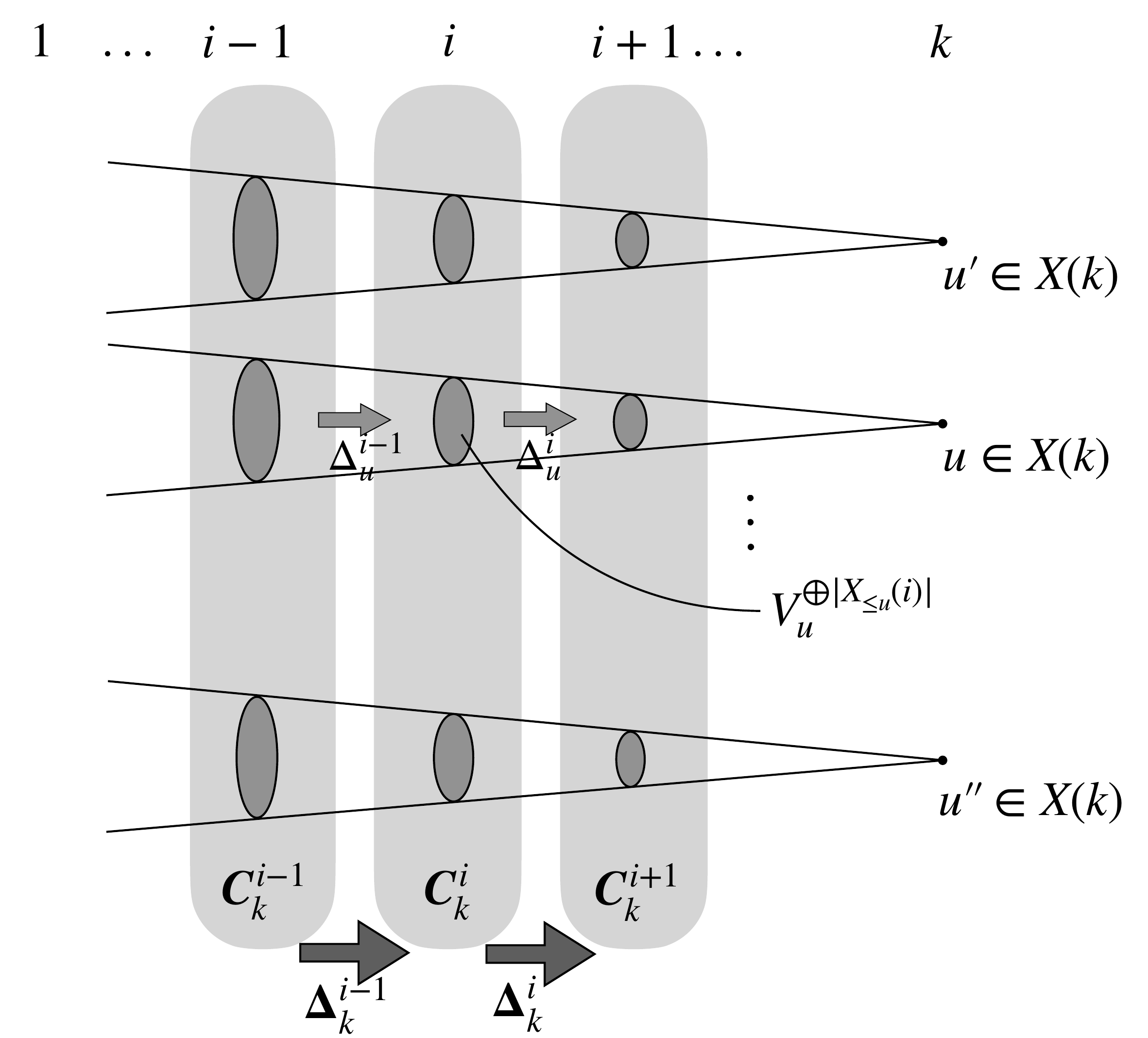}
\end{center}
Therefore, the map $\bs \Delta^i_{ k}$ decomposes as a direct sum $\bs \Delta^i_{ k} = \bigoplus_{u \in X(k)} \Delta^i_u$ of local maps $\Delta^i_u: C^i(X_{\leq u}, V_u) \mapsto C^{i+1}(X_{\leq u}, V_u)$. Hence, it suffices to verify that $\Delta_u^{i+1}\circ \Delta_u^i =0$ and that $\Delta_u^i$ is exact.

The cochain complex $C^i(X_{\leq u}, V_u)$ is obtained by attaching the sheaf $V_u$ to each face of a ``Hypercube'' complex. Let $S = \mathsf{type}(u)$. For each $i \in S$, define a 2-term cochain complex $C^*(H_{\{i\}})$ over $V_u$, where $H_{\{i\}}(0) = \{0,1\}$, $H_{\{i\}}(1)=\{e_i\}$ (a single element) and the coboundary map $\delta_{\{i\}}: C^0(H_{\{i\}}) \mapsto C^1(H_{\{i\}})$ is defined by $(\delta_{\{i\}} x)(e_i) = x(0) + x(1) \in V_u$. Then the hypercube complex $C^*(H_S)$ is obtained by taking the homological product of $C^*(H_{\{i\}})$ over $i \in S$, i.e.,  $C^\ell(H_S) = \bigoplus_{T \subseteq S: |T|=\ell}  \otimes_{j \in T} C^1(H_{\{j\}}) \otimes_{j \in S-T} C^0(H_{\{j\}})$. It has a single $|S|$-dimensional face which is the hypercube $u$, and the lower-dimensional faces are subfaces of $u$. The map $\Delta_u$ is isomorphic to the coboundary map of this hypercube construction, so $\Delta_u \circ \Delta_u=0$ at all levels. Next, we verify the exactness property. The cohomology groups of $C^*(H_{\{i\}})$ have dimensions $\operatorname{dim} H^0(H_{\{i\}}) = 1$ and $\operatorname{dim} H^1(H_{\{i\}}) = 0$. Hence, the Künneth formula gives $\operatorname{dim} H^j(H_S) = 0$ for all $0 < j \leq |S|$. In other words, $C^*(H_S)$ is exact at all levels $i > 0$.

Finally, provided with $y$ such that $\bs \Delta^i_{ k}(y)=0$, we can construct a preimage $z$ by going over each face $u \in X(k)$ and using the exactness of $\Delta_u$ to locally find $z[u]$ from $y[u]$ (here $z[u]$ denotes the restriction of $z$ onto $C^i(X{\leq u}, V_u)$). Since the hypercube complex has size $O(2^t \prod_{j=1}^t m_j)= n^{O(t)}$, we can simply bruteforce for each face $u$ in time $O(q^{n^{O(t)}})$\footnote{This calculation can actually be done faster than bruteforce since it is equivalent to solving a linear system over $\mathbb{F}_q$.} and hence $z$ can be constructed in sequential linear time $O(q^{n^{O(t)}}|X(k)|)$ or parallel constant time $O(q^{n^{O(t)}})$. The block-Hamming weight of the constructed $z$ can be straightforwardly bounded
\begin{align*}
    |z| \leq \sum_{u \in X(k)}  |z[u]| \leq \sum_{u \in X(k)} 2^t |y[u]| \leq \sum_{f \in X(i)} \sum_{u \in X(k): f \preceq u} 2^t |y(f)[u]| \leq \sum_{f \in X(i)} 2^t (2n)^t |y(f)| = (4n)^t|y|,
\end{align*}
where the last inequality uses $|X_{\geq f}(k)| = {{t-i}\choose {k-i}} n^{k-i} \leq 2^t n^t$ from~\Cref{lem:incidence}.
\end{proof}

Since the sheaf $\mathcal{F}_k(f)$ contains the face $f$'s `opinion' about its upward view in the original chain complex, the coboundary maps $\bs \Delta_{ k}$ serve as a measure of the inconsistency among these opinions. This interpretation can be made formal by the following claim from Claim 7.3 in~\cite{dinur2024expansion}. Again, we include the proof to explain the added statement about time complexity.

\begin{claim}
\label{claim:consistent} Let $0\leq k \leq t$.
Let $z \in \bs C^0_{ k}(X,\mathcal{F}_k)$ be such that $\bs \Delta^0_{ k}(z)=0$. Then there is a chain $z' \in C_k(X)$ in the original complex such that $z' \big |_{X_{\geq v}(k)}=z_v$ for all $v \in X(0)$. In other words, $z$, viewed as a collection of upward local views from each vertex $v \in X(0)$, corresponds to a valid global chain in the original $C_k(X, \mcF)$. Furthermore, there exists such a $z'$ with $|z'| \leq (2n)^t|z|$ and it can be constructed in sequential linear time $O((\log_2 q) n^{O(t)} |X(k)|)$ or in parallel constant time $O((\log_2 q) n^{O(t)})$.
\end{claim}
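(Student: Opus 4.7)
The plan is to construct $z'$ explicitly from the local opinions $\{z_v\}_{v \in X(0)}$ and then verify consistency, the restriction property, the weight bound, and the runtime. First, for each $k$-face $u \in X(k)$, I would pick an arbitrary vertex $v \prec u$ (there are exactly $2^k$ such vertices, forming the $0$-skeleton of the downward link $X_{\leq u}$) and set $z'(u) := z_v(u) \in V_u$, where we use that $V_u$ is both the value space assigned to $u$ in $C_k(X,\mcF)$ and the component of $\mcF_k(v) = C_k(X_{\geq v})$ at the face $u \in X_{\geq v}(k)$.

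The key step is to show this definition is independent of the choice of $v \prec u$. The hypothesis $\bs\Delta^0_k(z)=0$ says that for every edge $f' \in X(1)$, letting $v_1, v_2$ be its two endpoints (so $v_1, v_2 \precdot f'$), the restrictions satisfy $z_{v_1}\big|_{X_{\geq f'}(k)} + z_{v_2}\big|_{X_{\geq f'}(k)} = 0$; in characteristic $2$ this means the two opinions agree on $X_{\geq f'}(k)$. Now given any two vertices $v, v' \prec u$, the $1$-skeleton of $X_{\leq u}$ is isomorphic to the $1$-skeleton of a $k$-dimensional hypercube and is therefore connected, so we can connect $v$ to $v'$ by a sequence of edges $f'_1, f'_2, \dots$ each lying in $X_{\leq u}$ (hence $u \in X_{\geq f'_j}(k)$). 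Applying edge-consistency along each step yields $z_v(u) = z_{v'}(u)$, so $z'(u)$ is well defined.

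The restriction property $z'|_{X_{\geq v}(k)} = z_v$ is then immediate: for any $u \in X_{\geq v}(k)$, we may choose $v$ itself as the reference vertex in the definition, so $z'(u) = z_v(u)$. For the weight bound, observe that a face $u \in X(k)$ can lie in $\supp(z')$ only if $z_v(u) \ne 0$ for some $v \prec u$, hence $\supp(z') \subseteq \bigcup_{v \in \supp(z)} \supp(z_v) \subseteq \bigcup_{v \in \supp(z)} X_{\geq v}(k)$. Since $|X_{\geq v}(k)| = \binom{t}{k} n^k \le (2n)^t$ by \Cref{lem:incidence}, we obtain $|z'| \le (2n)^t |z|$.

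Finally, for the complexity, the construction is embarrassingly parallel across $k$-faces $u$: for each $u$ we read one value from some $z_v$ with $v \prec u$ and copy it. Each value lives in $V_u \cong \mathbb{F}_q^{\prod_{j\notin\mathsf{type}(u)}\hat A_j}$, which has dimension $n^{O(t)}$ and hence bit-size $O((\log_2 q)\, n^{O(t)})$. Thus the total sequential cost is $O((\log_2 q)\, n^{O(t)}\, |X(k)|)$ and the parallel depth (one copy per face) is $O((\log_2 q)\, n^{O(t)})$. The main conceptual obstacle is the well-definedness step, which requires the connectivity of the downward-link $1$-skeleton of each $k$-face together with the edge-consistency implied by $\bs\Delta^0_k(z)=0$; everything else is bookkeeping.
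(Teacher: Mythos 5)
Your proof is correct and follows the same construction as the paper: define $z'(u) = z_v(u)$ for an arbitrary vertex $v \prec u$, use edge-consistency from $\bs\Delta^0_k(z)=0$ plus connectivity of the hypercube $1$-skeleton of $X_{\leq u}$ to show well-definedness, and bound the weight by $|X_{\geq v}(k)| \leq (2n)^t$. The only cosmetic difference is that the paper disposes of $k=0$ as a separate trivial case (since then no vertex is strictly $\prec$ a $0$-face), whereas you can fold it in by reading ``$v \preceq u$''; this has no bearing on correctness.
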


\begin{proof} The case $k=0$ is trivial because the $X_{\geq v}(0)=\{v\}$ and indeed $\bs \Delta^0_{ 0}(z)=0$ for any $z \in \bs C^0_{0}(X)$ by definition. So we only need to consider $k\geq 1$.

    We construct $z'$ as follows. For each $k$-face $f \in X(k)$ we choose an arbitrary vertex $v \in X(0)$ such that $v \prec f $ and define $z'(f)=z_v(f)$. Clearly, this can be done in sequential time $O((\log_2 q) n^{O(t)} |X(k)|)$ and in parallel time $O((\log_2 q) n^{O(t)})$. Let us show that the chain $z'$ defined in this way is independent of the choice of $v$. Indeed, consider two vertices $v,v' \prec f$. If $\{v,v'\}$ forms an edge $e \in X(1)$, then it holds that
    \begin{align*}
        \bs \Delta_{  k}^{0}(z)(e)=z_v(f)+z_{v'}(f)=0.
    \end{align*}
    So $z_v(f)=z_{v'}(f)$. On the other hand, if $v, v'$ are not neighbors, then there is still an edge-path $v=v_0, v_1,\hdots,v' $ between them. The assumption $\bs \Delta_{  k}^{0}(z)=0$ gives $z_{v}(f)=z_{v_1}(f)=\hdots=z_{v'}(f)$.

    The stated block-Hamming weight bound on $z'$ follows from~\Cref{lem:incidence}, which asserts that each upward link $X_{\geq v}(k)$ contains at most ${{t}\choose {k}} n^{k} \leq (2n)^t$.
\end{proof}

The final definition needed for the X-decoder proof is a transformation from $\bs C_{ k}(X,\mathcal{F}_k)$ to $\bs C_{ k-1}(X,\mathcal{F}_{k-1})$. Recall from the discussion around~\Cref{lemma:local-chain} about the local chain structure of $C(X,\mathcal{F})$. In particular,~\Cref{lemma:local-chain} asserts that space $\mathcal{F}_k(f)=C_{k}(X_{\geq f})$ for a face of type $S$ is isomorphic to $C_k(L_{\overline{S}})$. Thus, there is a naturally defined map (\textbf{boldface}) $\bs \partial_{L}: \bs C_{ k}(X,\mathcal{F}_k) \mapsto \bs C_{ k-1}(X,\mathcal{F}_{k-1})$ that applies \emph{face-wise} the maps $\partial_{\overline{\mathsf{type}(f)}}: \mathcal{F}_k(f) \mapsto \mathcal{F}_{k-1}(f)$ from~\Cref{lemma:local-chain} (the subscript `L' in $\bs \partial_L$  refers to `local', and we suppress the dependence on the cochain level).

\begin{claim}
\label{claim:partialL}
The following properties hold for $\bs \partial_L$:
\begin{itemize}
    \item $\bs \partial_L \circ \bs \partial_L=0$,
    \item $\bs \partial_L$ is exact. In other words, if $z \in \bs C^{i}_{ k}(X,\mathcal{F}_{k})$ such that $\bs \partial_L(z)= 0$ then there exists $y \in \bs C^{i}_{k+1}(X,\mathcal{F}_{k+1})$ such that $z= \bs \partial_L(y)$ and $|y| \leq |z|$, and such $y$ can be found in sequential time $O(q^{n^{O(t)}}|X(k)|)$ or parallel time $O(q^{n^{O(t)}})$,
    \item The coboundary maps $\bs \Delta^i_{ k}$ and $\bs \partial_L$ commute. In other words, the following diagram is commutative.
\begin{figure}[H]

\begin{center}
		\begin{tikzcd}[arrows=rightarrow]
    \bs C^{i+1}_{ k}(X,\mathcal{F}_{k}) \ar[r,"\bs \partial_L"]  & \bs C^{i+1}_{ k-1}(X,\mathcal{F}_{k-1})  \\
    
    \bs C^i_{ k}(X,\mathcal{F}_{k}) \ar[u,"\bs \Delta^i_{ k}"]\ar[r,"\bs \partial_L"]  &\bs C^{i}_{ k-1}(X,\mathcal{F}_{k-1})   \ar[u,"\bs \Delta^i_{ k-1}"]
\end{tikzcd}
.
\end{center}
    \caption{The double complex formed from the upward-view complexes $\bs C_{ k}(X,\mathcal{F}_{k})$.}
    \label{fig:commutative}
\end{figure}
\end{itemize}
\end{claim}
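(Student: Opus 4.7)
My plan is to prove all three properties using that $\bs \partial_L$ acts face-wise: for $y \in \bs C^i_k(X, \mathcal{F}_k)$ and each $f \in X(i)$, the component $(\bs \partial_L y)(f) = \partial_{\overline{\mathsf{type}(f)}}(y(f))$ is computed entirely within the local chain complex $C(L_{\overline{\mathsf{type}(f)}})$ via the isomorphism from \cref{lemma:local-chain}. Property (i) is then immediate: for every $f$, $(\bs \partial_L \circ \bs \partial_L)(y)(f) = \partial_{\overline{\mathsf{type}(f)}} \circ \partial_{\overline{\mathsf{type}(f)}}(y(f)) = 0$, since $C(L_S)$ is a genuine chain complex.

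For property (ii), suppose $\bs \partial_L(z) = 0$, so that $\partial_{\overline{\mathsf{type}(f)}}(z(f)) = 0$ for every $f \in X(i)$; here $z(f)$ corresponds under the isomorphism of \cref{lemma:local-chain} to an element of $C_{k-i}(L_{\overline{\mathsf{type}(f)}})$. Assuming $k \leq t-1$ so that level $k-i$ lies in the exactness range $[0, |\overline{\mathsf{type}(f)}|-1]$ guaranteed by \cref{lem:exactness-tensorcode}, we can locally find a preimage $y(f) \in C_{k+1-i}(L_{\overline{\mathsf{type}(f)}}) \cong \mathcal{F}_{k+1}(f)$ for each $f$, setting $y(f) = 0$ when $z(f) = 0$. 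This immediately yields $|y| \leq |z|$. Each local chain complex has constant size depending only on $t$, $n$, $q$, and $\{m_i\}$, so brute-force linear algebra finds a local preimage in time $q^{n^{O(t)}}$; iterating over or parallelizing across the $O(|X(k)|)$ relevant faces yields the stated sequential and parallel runtimes.

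Property (iii) will follow from unpacking both sides at an $(i+1)$-face $f'$. Both sides expand as a sum over subfaces $f \precdot f'$; the $f$-th term of the LHS is $[\partial_{\overline{\mathsf{type}(f)}} y(f)]|_{X_{\geq f'}(k-1)}$ and of the RHS is $\partial_{\overline{\mathsf{type}(f')}}[y(f)|_{X_{\geq f'}(k)}]$. To match these, I will evaluate at an arbitrary $(k-1)$-face $g \succeq f'$, expand both via the local boundary formula \cref{eq:local-chain-boundary-map}, and check that the index sets of summation agree. The key observation is that if $\mathsf{type}(f') = \mathsf{type}(f) \cup \{j^*\}$, then for $g \succeq f'$ the direction $j^*$ already lies in $\mathsf{type}(g)$, which (a) removes $j^*$ from the $j$-summation index, matching $\overline{\mathsf{type}(f)} \setminus \mathsf{type}(g)$ with $\overline{\mathsf{type}(f')} \setminus \mathsf{type}(g)$, and (b) forces any $g' \succdot_j g$ appearing in either sum to satisfy $g'_{j^*} = g_{j^*} = f'_{j^*}$, so $g' \in X_{\geq f'}$ automatically; hence the two sets of $(j, g')$ pairs coincide.

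The main obstacle I foresee is the bookkeeping in (iii): tracking types and the shift of viewpoint from $f$'s local complex to $f'$'s requires care, though the underlying computation is elementary once one is careful about the indexing and uses that $f \precdot f'$ implies $X_{\geq f'} \subseteq X_{\geq f}$ with matching coordinate values. All three properties reduce ultimately to face-level verifications using \cref{lemma:local-chain} and \cref{lem:exactness-tensorcode}, with no global expansion or robustness inputs required.
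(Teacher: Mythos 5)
Your proposal is correct and follows the same route as the paper for items (i) and (ii): $\bs\partial_L$ acts face-wise, so (i) reduces to $\partial_{\overline{S}}\circ\partial_{\overline{S}}=0$ in each local complex, and (ii) reduces to the local exactness of \cref{lem:exactness-tensorcode} applied within each sheaf $\mathcal{F}_k(f)\cong C_{k-i}(L_{\overline{\mathsf{type}(f)}})$, with the preimage set to zero on faces where $z$ vanishes (giving $|y|\le|z|$) and brute-forced locally (giving the stated runtime). Your caveat $k\le t-1$ for the exactness range is a genuine observation that the paper leaves implicit.

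For item (iii) the paper simply cites \cite[Lemma 7.5]{dinur2024expansion}, whereas you give a self-contained unwinding at an $(i+1)$-face $f'$. Your argument is sound, but the justification of step (b) is slightly misplaced: matching the $j^*$-coordinate of $g'$ with that of $f'$ is a necessary condition for $g'\succeq f'$, not a sufficient one on its own. The clean reason is simply transitivity of $\succ$: since $g'\succdot_j g$ and $g\succeq f'$, any $g'$ appearing in the sum satisfies $g'\succ f'$, and a fortiori $g'\succ f$; hence the constraint sets coincide and the two double sums are term-by-term equal. With this small repair your direct proof of (iii) is a legitimate (and arguably more informative) substitute for the citation, and does not rely on any expansion or robustness input, matching the paper's remark that the double-complex structure is purely formal.
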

\begin{proof}
    The first item is because $\partial_{\overline{\mathsf{type}(f)}} \circ \partial_{\overline{\mathsf{type}(f)}} = 0$ for each face $f$. 
    The second item is because of the exactness of the local chain shown in~\Cref{lem:exactness-tensorcode}. In particular, for each $k \geq 1$  and each face $f \in X(k)$ we can bruteforce for a preimage within $X_{\geq f}$ which contains at most $(4n)^t$ faces, hence the claimed time complexity.
    The third item is shown in~\cite[Lemma 7.5]{dinur2024expansion}.
\end{proof}

\subsubsection{Sequential decoder}
As in~\cite{dinur2022good}, the X-syndrome decoder is obtained via a reduction to the Z-syndrome decoder on the dual chain complex $\Tilde{C}(X,\Tilde{\mathcal{F}})$. We first describe the noiseless syndrome case.

\RestyleAlgo{boxruled}
\LinesNumbered
\begin{algorithm}[H]
    \setstretch{1.35}
    \caption{X-syndrome decoder (level $k$). \label{alg:Xdecoder}
    \\\textbf{Mode}: sequential or $\tau$-round parallel}
    \KwIn{Syndrome $\sigma = \partial_k (\hat{e}) \in C_{k-1}(X)$ of error $\hat{e} \in C_{k}(X)$}
    \KwOut{A proposed correction $g \in C_{k}(X)$.}

    (Step 1: Local guesses) For each vertex $v \in X(0)$, find a minimal-weight local chain $g_v \in C_k (X_{\geq v})$ consistent with local syndrome $\sigma(X_{\geq v}(k-1))$;

    (Step 2: Compute syndrome explanation sequence) View $\{g_v\}_v$ as a cochain $g^{(0)} \in \bs C^0_{k}(X)$ in the upward-view complex. Compute the sequence $g^{(i)} \in \bs C^{i}_{ k+i}(X),  \sigma^{(i+1)}= \bs \Delta^i_{k+i}(g^{(i)})$ defined in~\Cref{fig:tracing} up to $i=t-k$;

    (Step 3: Z-syndrome decoding on dual complex)

    \Begin{
    Map $\sigma^{(t-k+1)}$ to a $(t-k+1)$-cochain $\tilde{\sigma}^{(t-k+1)}$ in the dual complex $\tilde{C}(X,\tilde{\mathcal{F}})$.
    
    Call Z-syndrome decoder (\Cref{alg:Zdecoder} with parameter $\varepsilon=1$ \footnote{The choice $\varepsilon=1$ is only for simplicity.} if sequential mode and~\Cref{alg:Z-parallel} with $\tau$ rounds if parallel mode) on $\tilde{\sigma}^{(t-k+1)}$ to obtain a $(t-k)$-cochain $\tilde{C}(X,\tilde{\mathcal{F}})$. 
    
    Map this cochain back to the upward-view complex to obtain a `correction' $w^{(t-k)} \in \bs C^{t-k}_{t}(X)$.
    }

    (Step 4: Reverse syndrome explanation sequence) Set $g'^{(t-k)}=g^{(t-k)}+ w^{(t-k)}$, then reverse the syndrome sequence computation in~\Cref{fig:tracing} and obtain $g'^{(0)} \in \bs C^0_{k}(X)$;
    
    Output a correction $g \in C_{k}(X)$ according to $g'^{(0)}$.

    \textbf{Note 1:} in noiseless syndrome case, the algorithm may be allowed to end early in lines 1 or 2 (see Steps 1 and 2 in~\Cref{prop:noiseless-Xdecoder}'s proof).

    \textbf{Note 2:} in noisy syndrome case, lines 1, 2, 5, 9, 10 need a modification, see~\Cref{prop:noisy-Xdecoder}.
\end{algorithm}

\begin{prop}[Sequential X-syndrome decoder with noiseless syndrome]
\label{prop:noiseless-Xdecoder} 
    Consider a $t$-dimensional cubical complex construction where the underlying graphs $\operatorname{Cay}(G,A_j)$ are $\lambda$-expanding up to size $r|G|$ and the local codes $\{h_1,\hdots,h_t\}$ are two-way $\kappa$-robust.
        Let $1\leq k \leq t-1 $ and consider the quantum code where qubits correspond to level $k$ of the chain complex, similar to described in~\Cref{thm:DLVmain}. Let $\hat{e} \in C_{k}(X)$ be the chain (over $\mathbb{F}_q$) corresponding to a Pauli Z error such that $|\hat{e}|_R \leq (nt)^{-O(t^2)} (8^{-t^2}\kappa^{t-k+1}-\lambda)r |X(t-k)| \cdot \min (\frac{(8^{-t^2}\kappa^{t-k+1} - \lambda)^2}{2^{t+3}}, (\frac{\kappa}{2})^{t-k+1}8^{-t^2}-\lambda )$. 
        Given the syndrome $\sigma = \partial_k (\hat{e}) \in C_{k-1}(X)$ for the error $\hat{e}$, then~\Cref{alg:Xdecoder} in sequential mode runs in time $O(q^{n^{O(t)}} |X(k)|)=O(q^{n^{O(t)}} |X(t-k)|)$ and outputs a correction cochain $e$ that is homologically equivalent to $\hat{e}$.
\end{prop}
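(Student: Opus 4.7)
The plan is to execute \Cref{alg:Xdecoder} step by step, tracking block-Hamming weights through each stage, and ultimately reducing the X-decoding problem to a Z-syndrome decoding problem on the dual complex $\tilde C(X,\tilde{\mcF})$, where \Cref{prop:noiseless-Zdecoder} already applies. The entire argument hinges on the double-complex diagram of \Cref{fig:commutative}: we use $\bs \partial_L$-exactness (\Cref{claim:partialL}) to lift a vertical step and $\bs \Delta \circ \bs \Delta = 0$ to verify the hypothesis for the next lift.

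For Step 1, I would set $g_v$ to be a minimum block-weight element of $C_k(X_{\geq v})$ whose boundary matches $\sigma|_{X_{\geq v}(k-1)}$, found by brute force inside the constant-sized link in time $O(q^{n^{O(t)}})$. The "ground truth" assignment $g^{(0)}_{\mathrm{true}}(v) := \hat e|_{X_{\geq v}(k)}$ is itself a feasible candidate, so by exactness of the local chain complex (\Cref{lem:exactness-tensorcode}), $g_v$ and $\hat e|_{X_{\geq v}(k)}$ differ by a local $\partial_L$-boundary. A counting argument using $|X_{\ge v}(k)|\le (2n)^t$ shows $|g^{(0)} - g^{(0)}_\mathrm{true}|\le O((2n)^t)\cdot |\hat e|_R$.

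For Step 2, set $\sigma^{(1)} := \bs \Delta^0_k(g^{(0)})$. Since $g^{(0)}_\mathrm{true}$ is globally consistent, $\bs \Delta^0_k(g^{(0)}_\mathrm{true})=0$, so $\sigma^{(1)} = \bs \Delta^0_k(g^{(0)} - g^{(0)}_\mathrm{true})$, which bounds $|\sigma^{(1)}|$ by a constant factor times $|g^{(0)} - g^{(0)}_\mathrm{true}|$. To lift, observe that $\bs \partial_L \circ \bs \Delta = \bs \Delta \circ \bs \partial_L$ by \Cref{claim:partialL}: the vertex-wise values $\bs \partial_L(g^{(0)})(v) = \sigma|_{X_{\geq v}(k-1)}$ agree on overlapping upward links, so in characteristic 2 their $\bs \Delta$-image vanishes, giving $\bs \partial_L(\sigma^{(1)}) = 0$. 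Exactness of $\bs \partial_L$ produces $g^{(1)} \in \bs C^1_{k+1}(X,\mcF_{k+1})$ with $\bs \partial_L(g^{(1)}) = \sigma^{(1)}$ and $|g^{(1)}| \le (4n)^t |\sigma^{(1)}|$ in time $O(q^{n^{O(t)}} |X(k+1)|)$. Iterating: at each level $\bs \partial_L(\sigma^{(i+1)}) = \bs \Delta\,\bs \partial_L(g^{(i)}) = \bs \Delta(\sigma^{(i)}) = \bs \Delta\bs \Delta(g^{(i-1)}) = 0$, so the lift exists and we compute $\sigma^{(i+2)} = \bs \Delta(g^{(i+1)})$. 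After $t-k$ iterations we land at $g^{(t-k)}\in \bs C^{t-k}_t$ and $\sigma^{(t-k+1)}\in \bs C^{t-k+1}_t$ with weights bounded by $(O(4n))^{t(t-k)}\cdot |\hat e|_R$.

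For Step 3, at the top level $k'=t$ each $\mcF_t(f) = C_t(X_{\geq f})$ is supported entirely on $t$-faces whose local sheaves are $\mathbb{F}_q$, and under this identification $\bs C^j_t(X,\mcF_t)$ embeds into $\tilde C^j(X,\tilde{\mcF})$ with $\bs \Delta^{t-k}_t$ matching the dual coboundary $\tilde\delta^{t-k}$ (this is precisely the reduction used in \cite{dinur2022good,dinur2024expansion} for the code distance proof). The condition $\bs \partial_L(\sigma^{(t-k+1)}) = 0$ translates to $\tilde \sigma^{(t-k+1)}$ being a valid Z-syndrome on the dual complex. Because $\{h_i^\perp\}$ is $\kappa$-robust by two-way robustness, the hypothesis of \Cref{prop:noiseless-Zdecoder} applied on $\tilde C$ is satisfied as long as the dual syndrome weight is below the dual decoding radius, and we obtain $\tilde w$ homologically trivializing $\tilde\sigma^{(t-k+1)}$. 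In Step 4 we lift $\tilde w$ back to $w^{(t-k)}\in \bs C^{t-k}_t$ and reverse the sequence of $\bs \Delta$-corrections back down the diagram: at each level we apply $\bs \Delta^{-1}$ (well-defined up to a $\bs \partial_L$-shift thanks to the exactness established in Step 2), ultimately producing $g'^{(0)}\in \bs C^0_k(X)$ with $\bs \Delta^0_k(g'^{(0)}) = 0$. By \Cref{claim:consistent} this assembles into a global chain $g\in C_k(X)$, and tracing the residual through the reversal shows $\partial_k(g) = \sigma$ and $g+\hat e \in \Im(\partial_{k+1})$.

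The main obstacle is the block-weight bookkeeping in Step 2. Each lift through $\bs \partial_L$-exactness and each application of $\bs \Delta$ can inflate the weight by $O((4n)^t)$; iterating $t-k$ times produces a blowup of $(nt)^{O(t^2)}$ which must still fit inside the noiseless decoding radius of the Z-decoder on the dual complex stated in \Cref{prop:noiseless-Zdecoder}. Matching the threshold $|\hat e|_R \le (nt)^{-O(t^2)}\cdot(\text{dual radius})$ gives exactly the bound claimed in the proposition. The second subtlety, which is conceptually the core of the X-decoder, is verifying that the top-level identification with $\tilde C(X,\tilde{\mcF})$ is correct and that reversing Step 2 (which requires making consistent choices of $\bs \Delta$-preimages) indeed yields a chain homologically equivalent to $\hat e$; this is a direct generalization of the argument in \cite[Section~7]{dinur2022good} from $t=2$ to arbitrary $t$, using the commuting double-complex structure to propagate local corrections down level by level.
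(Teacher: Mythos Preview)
Your overall architecture matches the paper, but Step~3 contains a concrete misconception that would break the proof. You write that $\bs C^j_t(X,\mcF_t)$ ``embeds into'' $\tilde C^j(X,\tilde{\mcF})$, but the dimensions go the other way: $\mcF_t(f)=\mathbb{F}_q^{\prod_{i\notin\mathsf{type}(f)}A_i}$ has dimension $\prod_i n_i$, while the dual sheaf has dimension $\prod_i k_i<\prod_i n_i$. The correct statement (this is where \Cref{lem:exactness-tensorcode}, second item, is used) is that the map goes the other direction: $\tilde C^j$ embeds into $\bs C^j_t$ via tensor-code encoding $\otimes_i(h_i^\perp)^\top$, and the point is that $g^{(t-k)}+\hat x^{(t-k)}$ lies in the image of this encoding precisely because $\bs\partial_L(g^{(t-k)}+\hat x^{(t-k)})=0$. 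The object being Z-decoded on the dual complex is the unencoding $\tilde c^{(t-k)}$ of $g^{(t-k)}+\hat x^{(t-k)}$, which is \emph{unknown} to the decoder; what the decoder knows is that $\sigma^{(t-k+1)}$ is the encoding of $\tilde\delta(\tilde c^{(t-k)})$. To invoke \Cref{prop:noiseless-Zdecoder} you must bound $|\tilde c^{(t-k)}|$, not the syndrome weight, and this requires tracking $|\hat x^{(i)}|$ alongside $|g^{(i)}|$ through the sequence (the paper does this via the auxiliary $\hat z^{(i)}$ satisfying $\bs\partial_L(\hat z^{(i)})=g^{(i)}+\hat x^{(i)}$; see \Cref{def:tracing} and \Cref{eq:dual-chain-weight-bound}).

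Your Step~4 is also under-specified in a way that hides the main difficulty. The Z-decoder returns $\tilde c^{(t-k)}+\tilde\delta(\tilde r^{(t-k-1)})$ for some unknown $\tilde r^{(t-k-1)}$, and the reversal must show that this extra coboundary term, when re-encoded and propagated down via $\bs\Delta$-exactness (\Cref{claim:exactness-agg}, not $\bs\partial_L$-exactness as you wrote), ultimately contributes a genuine $\partial_{k+1}$-boundary to $g$. The paper establishes this by tracking the ``image structure'': the re-encoded boundary term lies face-wise in $\Im(\otimes_i(h_i^\perp)^\top)\subseteq\ker(\bs\partial_L)$, and this property survives the downward propagation, so that its contribution to $g'^{(0)}$ is killed by $\bs\partial_L$ and the remaining piece assembles (via a hypercube-path argument) into $\partial_{k+1}(z')$ for a global $z'$. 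Your phrase ``well-defined up to a $\bs\partial_L$-shift'' is not the right invariant here.
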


\begin{proof}
    As in the Z-syndrome decoder proof, we keep track of variables used in the proof but unknown to the decoder by a hat symbol. For example, $\hat{e} \in C_k(X)$ is the unknown error. Wlog, we assume $\hat{e}$ is (block-Hamming) stabilizer-reduced.

    \paragraph{Step 1: Local guesses.} In the first step, we go over each vertex $v$ in $X(0)$ and locally find the minimal block-Hamming weight chain $g_v \in C_k (X_{\geq v})$ that is consistent with the observed syndrome $\sigma(X_{\geq v}(k-1))$ on $X_{\geq v}(k-1)$. If the syndrome is noiseless, such a $g_v$ exists for each vertex $v$. Furthermore, since $X_{\geq v}$ has size at most $n^{O(t)}$, this step is a local bruteforce in sequential time $O(q^{n^{O(t)}} |X(0)|)$ or parallel time $O(q^{n^{O(t)}})$.
    
    Let $S=\mathsf{type}(v)$. By construction, we have $ \partial_{\overline{S}}(\hat{e}(X_{\geq v}(k))) =  \partial_{\overline{S}} (g_v)=\sigma(X_{\geq v}(k-1))$. So according to the exactness of $ \partial_{\overline{S}}$ (\Cref{lem:exactness-tensorcode}),
    there exists an (unknown to decoder) chain $\hat{z}_v \in C_{k+1}(X_{\geq v})$ such that 
    \begin{align}
        g_v + \partial_{\overline{S}} (\hat{z}_v) &= \hat{e}(X_{\geq v}(k)).\label{eq:4437}
    \end{align}
    Let $g^{(0)}$ denote the cochain in $\bs C_{ k}^{0}(X)$
    corresponding to $\{g_v\}_{v \in X(0)}$. Similarly, let $\hat{x}^{(0)} \in \bs C_{ k}^{0}(X)$ denote the cochain representing the local views of $\hat{e}$ from the vertices. So~\Cref{eq:4437} can be rewritten as
    \begin{align}
        \bs \partial_L (\hat{z}^{(0)}) = g^{(0)} + \hat{x}^{(0)},
        \label{eq:4440}
    \end{align}
    and hence,
    \begin{align}
        \bs \partial_L(g^{(0)})= \bs \partial_L(\hat{x}^{(0)})= \sigma^{(0)}.
        \label{eq:4441}
    \end{align}
    We note the following block-Hamming weight bounds
    \begin{align}
        |g^{(0)}| \leq |\sigma^{(0)}| \leq  |\hat{x}^{(0)}|\leq 2^t |\hat{e}|,
    \end{align}
where the first inequality is because $g_v=0$ if $\sigma_v=0$ and the last inequality is because each $k$-face participates in the upward link of at most $2^{k} \leq 2^t$ vertices (\Cref{lem:incidence}).
    
    Now, if it turns out that $\{g_v\}_{v \in X(0)}$ are consistent, i.e., they correspond to the local views of a valid global chain in $C_k(X, \mcF)$, then we are done. Acording to~\Cref{claim:consistent}, this is equivalent to the condition $\bs \Delta^{0}_{ k} (g^{(0)}) = 0$, and we can  obtain a global guess $g \in C_k(X)$ in time $O((\log_2q)n^{O(t)}|X(k)|)$ or in parallel time $O((\log_2q)n^{O(t)})$, such that $\partial_k(g) = \partial_k (\hat{e}) = \sigma$ and $|g| \leq (2n)^t|g^{(0)}|$. Furthermore, the guess $g$ and the underlying error $\hat{e}$ will be homologically equivalent. This is because
    \begin{align*}
        |g + \hat{e}| \leq (2n)^t |g^{(0)}| + |\hat{e}| \leq (2n)^t|\sigma^{(0)}| + |\hat{e}| \leq ((4n)^t   + 1) |\hat{e}|.
    \end{align*}
    Hence, $|g+ \hat{e}| \leq \mu_\mathrm{syst}(k)$ provided that $|\hat{e}| \leq \frac{1}{(4n)^t+1} \mu_\mathrm{syst}(k)$. We summarize the above discussion into the following claim.

    \begin{claim}\label{claim:local-guess-work}
    Suppose that the local guesses are already consistent, i.e., $\bs \Delta^0_{ k}(g^{(0)}) = 0$. Then we can find a chain $g \in C_{k}(X)$ that has syndrome $\sigma$ in sequential time (i.e., $O(q^{n^{O(t)}}|X(0)|)$) or in parallel time $O(q^{n^{O(t)}})$. Furthermore, assuming that $|\hat{e}| \leq \frac{1}{(4n)^t+1} \mu_\mathrm{syst}(k)$, then $g$ is guaranteed to be homologically equivalent to the underlying error $\hat{e}$.
    \end{claim}

    \paragraph{Step 2: Syndrome explanation sequence.} However, it is likely that the local guesses $g^{(0)}$ made in Step 1 are inconsistent. In that case we need to find a `correction' for the guesses $\{g_v\}$. In this step, we will attempt to do so via procedure that traces the inconsistencies at higher levels of the cubical complex shown in~\Cref{fig:tracing}.

    We first define the `inconsistency' at the bottom level in $\{g_v\}$.
    \begin{align}
        \sigma^{(1)} = \bs \Delta^{0}_{ k} (g^{(0)}) \in \bs C_{ k}^{1}(X, \mcF_k).
    \end{align}
    Let $\hat{z}^{(0)}$ denote the cochain element in $\bs C_{ k+1}^{0}(X, \mcF_{k+1})$ corresponding to $\{\hat{z}_v\}_{v \in X(0)}$. It is natural to ask whether the elements of $\hat{z}^{(0)}$ correspond to the local views of a valid chain in $\bs C_{k+1}(X)$? Notice that if $g^{(0)}$ were consistent, then a consistent $\hat{z}^{(0)}$ satisfying~\Cref{eq:4440} would have existed\footnote{Let $g$ be the chain from~\Cref{claim:local-guess-work}, then $g+\hat{e}=\partial_{k-1}(\hat{z})$ for some chain $\hat{z}\in C_k(X, \mcF)$. Then we can simply take $\hat{z}^{(0)}$ to be the local views of $\hat{z}$.}. So there seems to be a connection between the inconsistencies in $g^{(0)}$ and in $\hat{z}^{(0)}$. The inconsistency in $\hat{z}^{(0)}$ is quantified by
    \begin{align}
        \hat{x}^{(1)} = \bs \Delta^{0}_{ k+1} (\hat{z}^{(0)}) \in C_{ k+1}^{1}(X).
    \end{align}
    Indeed, we can verify that $\sigma^{(1)}$ is the `syndrome' of $\hat{x}^{(1)}$:
    \begin{align*}
        \boldsymbol{\partial}_L (\hat{x}^{(1)}) &=  \boldsymbol{\partial}_L (\bs \Delta^{0}_{ k+1} (\hat{z}^{(0)}))= \bs \Delta^{0}_{ k} (\boldsymbol{\partial}_L ( \hat{z}^{(0)})) \tag{using \Cref{fig:commutative}} \\
        &= \bs \Delta^{0}_{ k} (g^{(0)} + \hat{x}^{(0)} ) \tag{using \Cref{eq:4440}}\\
        &= \bs \Delta^{0}_{ k} (g^{(0)}) \tag{$\hat{x}^{(0)}$ consistent,~\Cref{claim:consistent}}\\
        &= \sigma^{(1)}. \numberthis \label{eq:4445}
    \end{align*}
    Below we will show that knowing $\hat{x}^{(1)}$ (up to elements in $\Im(\bs \Delta_{k+1}^0) $)
     would allow us to fix the inconsistency in $g^{(0)}$ quite straightforwardly.
    Therefore, we will attempt to guess $\hat{x}^{(1)}$ from its syndrome $\sigma^{(1)}$.

    Since $\bs \partial_L(\sigma^{(1)})= \bs \partial_L (\bs \partial_L(x^{(1)}))=0$, we can use the exactness of $\bs \partial_L$ from~\Cref{claim:partialL} to construct $g^{(1)} \in \bs C_{ k+1}^{1}(X)$ such that $\bs \partial_L(g^{(1)}) = \sigma^{(1)}$ and $|g^{(1)}| \leq |\sigma^{(1)}|$.
    This step can be done in sequential time $O(q^{n^{O(t)}}|X(0)|)$ or in parallel time $O(q^{n^{O(t)}})$.
    Howerver, such a locally made guess might not correspond to a cochain homologically equivalent to $\hat{x}^{(1)}$. From definition, we have that $\bs \Delta^1_{ k+1}(\hat{x}^{(1)})=0$. However, this is not necessarily the case for our guess $g^{(1)}$. The quantity
    \begin{align}
        \sigma^{(2)} = \bs \Delta^1_{ k+1}(g^{(1)})
    \end{align}
    could be nonzero.
    
    Since
    \begin{align}
        \bs \partial_L (g^{(1)} + \hat{x}^{(1)}) = \sigma^{(1)} + \sigma^{(1)} =0.
    \end{align}
    the exactness of $\bs \partial_L$ implies that there exists a $\hat{z}^{(1)} \in \bs C^{1}_{  k+2}(X)$ such that
    \begin{align}
       \bs \partial_L(\hat{z}^{(1)}) = g^{(1)} + \hat{x}^{(1)} \qquad \text{and}
        \qquad |\hat{z}^{(1)}| \leq |g^{(1)}+\hat{x}^{(1)}|.
    \end{align}

\begin{figure}
\begin{center}
\begin{tikzcd}[arrows=rightarrow]
&0&1&2&3&\\
k-1 & \sigma^{(0)} &&&&\\
k&     (g^{(0)}, \hat{x}^{(0)}) \ar[r,"\bs \Delta^0_{ k}", "1\mapsto" below] \ar[u,"\bs \partial_L", "//" right] & \sigma^{(1)} &  & &\\
k+1&     \hat{z}^{(0)} \ar[u,"\bs \partial_L", "+" right] \ar[r,"\bs \Delta^0_{ k+1}", "\mapsto 2" below] & (g^{(1)},\hat{x}^{(1)}) \ar[u,"\bs \partial_L", "//" right] \ar[r,"\bs \Delta^1_{ k+1}", "1\mapsto" below] & \sigma^{(2)}& &\\
k+2&     & \hat{z}^{(1)} \ar[rd, dashed, rightarrow, no head] \ar[u,"\bs \partial_L", "+" right] \ar[r,"\bs \Delta^1_{ k+2}", "\mapsto 2" below]& (g^{(2)},\hat{x}^{(2)}) \ar[rd, dashed, rightarrow, no head] \ar[u,"\bs \partial_L", "//" right]&&\\
 & & & ~& \makebox[\widthof{$x^{(3)}$}]{~}&~
\end{tikzcd}
\end{center}
\caption{Illustration of the syndrome explanation sequence in~\Cref{def:tracing}. Here, $\sigma^{(0)}$ is the local views of the syndrome $\sigma$ and $\hat{x}^{(0)}$ is the local views of the underlying error $\hat{e}$. Both $g^{(i)}$ and $\hat{x}^{(i)}$ are cochains in $\bs C^{i}_{ k+i}(X, \mcF_{k+i})$. The arrows have additional labels: ``$//$'' means both $g^{(i)}$ and $\hat{x}^{(i)}$ are mapped to $\sigma^{(i)}$ under $\bs \partial_L$, ``+'' means $\hat{z}^{(i)}$ is mapped to $g^{(i)} + \hat{x}^{(i)}$, ``$1\mapsto$'' means $\sigma^{(i+1)}$ is the image of $g^{(i)}$ (the first element of the tuple) under $\bs \Delta^i_{ k+i}$, ``$\mapsto 2$" means $\hat{x}^{(i+1)}$ (the second element of the tuple) is the image of $\hat{z}^{(i)}$ under $\bs \Delta^i_{ k+i+1}$. In this step of the X-syndrome decoder, we attempt to fix the disagreements in the local guesses in $g^{(0)}$ by generating the sequence $\sigma^{(1)}, g^{(1)},\sigma^{(2)}, g^{(2)}$, etc. until reaching a syndrome $\sigma^{(r)}=0$. If this succeeds then~\Cref{claim:syndrome-consistent} asserts that a consistent set of new local guesses $g'^{(0)}$ can be obtained immediately. Otherwise, we move on to the next step of calling Z-syndrome decoder subroutine.}
\label{fig:tracing}
\end{figure}
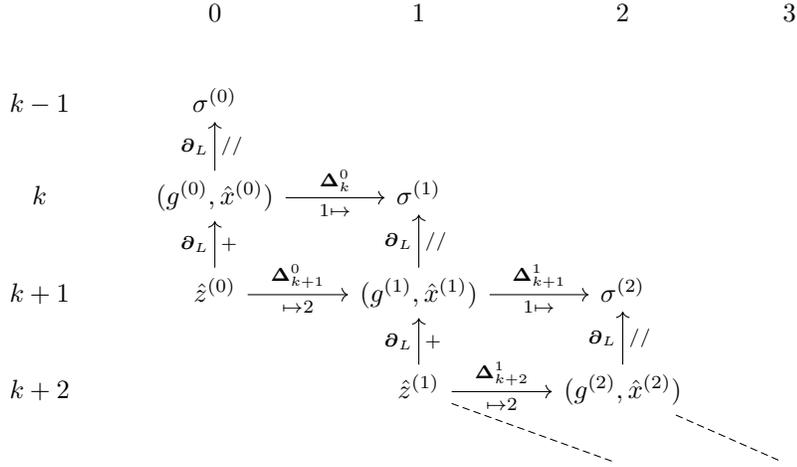

We can similarly define, if needed, $g^{(2)},\sigma^{(2)},\hat{x}^{(2)}, \hat{z}^{(2)}$, etc. as illustrated in~\Cref{fig:tracing}. More formally, they are inductively defined as follows.

\begin{definition}[Syndrome explanations]\label{def:tracing} Let $0 \leq r \leq t-k$. A sequence $(g^{(0)},\hat{x}^{(0)}),\hdots, (g^{(r)},\hat{x}^{(r)})$, where for each $i$, $g^{(i)}, \hat{x}^{(i)} \in \bs C^{i}_{ k+i}(X)$, is said to explain the syndrome $\sigma^{(0)} \equiv \bs \partial_L (\hat{x}^{(0)})$ if the following holds. There are $\sigma^{(1)},\hdots,\sigma^{(r+1)}$ where $\sigma^{(i)} \in \bs C^{i}_{ k+i-1}(X)$ and $\hat{z}^{(0)},\hdots, \hat{z}^{(r)}$ where $\hat{z}^{(i)} \in \bs C^{i}_{ k+i+1}(X)$, such that $\sigma^{(r+1)}=0$, and for all $i=0,\hdots,r$,
\begin{align}
    \sigma^{(i)} = \bs \partial_L (g^{(i)}), \qquad \sigma^{(i+1)}= \bs \Delta^{i}_{ k+i}(g^{(i)})   \tag{known to decoder},
\end{align}
and for all $i=0,\hdots,r-1$,
\begin{align}
    g^{(i)} + \hat{x}^{(i)}= \bs \partial_L(\hat{z}^{(i)}), \qquad \hat{x}^{(i+1)} = \bs \Delta^{i}_{ k+i+1} (\hat{z}^{(i)}). \tag{unknown to decoder}
\end{align}
Additionally, we require, for all $i=0,\hdots,r$, that
\begin{align}
    &|g^{(i)}| \leq |\sigma^{(i)}|,   & |\sigma^{(i+1)}| \leq  nt |g^{(i)}|
    \label{eq:weight-bound-1} \\
    &|\hat{z}^{(i)}| \leq |g^{(i)}|+ |\hat{x}^{(i)}|,   &|\hat{x}^{(i+1)}| \leq nt |\hat{z}^{(i)}|.
    \label{eq:weight-bound-2}
\end{align}
Furthermore, given $g^{(i)}$ we can construct $\sigma^{(i+1)}$ in sequential time $O(q^{n^{O(t)}}|X(0)|)$ or in parallel time $O(q^{n^{O(t)}})$. The same complexity holds for constructing $g^{(i)}$ provided with $\sigma^{(i)}$.
\end{definition}

Note that in~\Cref{def:tracing},
there exist $\hat{z}^{(i)}$ and $g^{(i+1)}$ satisfying the mentioned constraints because of the exactness of $\bs \partial_L$ (\Cref{claim:partialL}) and $\bs \partial_L (\sigma^{(i+1)})=0$, $\bs \partial_L (g^{(i)}+\hat{x}^{(i)})=0$. Indeed, the latter equalities can proved inductively from definitions. The base case $i=0$ were shown in~\Cref{eq:4441} and~\Cref{eq:4445}. For the induction step, we have
\begin{align*}
    \bs \partial_L(\sigma^{(i+1)})&= \bs \partial_L(\Delta^{i}_{ k+i} (g^{(i)}))\\
    &= \bs \Delta^{i}_{ k+i-1} (\bs \partial_L(g^{(i)})) \tag{using~\Cref{fig:commutative}}\\
    &= \bs \Delta^{i}_{ k+i-1} (\sigma^{(i)})\\
    &= \bs \Delta^{i}_{ k+i-1} (\bs \Delta^{i-1}_{ k+i-1}(g^{(i-1)}))\\
    &=0,
\end{align*}
and similarly,
\begin{align*}
    \bs \partial_L (g^{(i)} + \hat{x}^{(i)}) &= \sigma^{(i)} + \bs \partial_L(\bs \Delta^{i-1}_{ k+i} (\hat{z}^{(i-1)}))\\
    &= \sigma^{(i)} + \bs \Delta^{i-1}_{ k+i-1} (\bs \partial_L(\hat{z}^{(i-1)}))\\
    &= \sigma^{(i)} + \bs \Delta^{i-1}_{ k+i-1} (g^{(i-1)}+\hat{x}^{(i-1)})\\
    &= \sigma^{(i)} + \sigma^{(i)} + \bs \Delta^{i-1}_{ k+i-1}(\bs \Delta^{i-2}_{ k+i-1}(\hat{z}^{(i-2)}))\\
    &=0.
\end{align*}
In addition, the block-Hamming weight bounds $|\hat{z}^{(i)}| \leq |\hat{x}^{(i)}|$ and $|\hat{x}^{(i+1)}| \leq nt |\hat{z}^{(i)}|$ hold due to~\Cref{lem:incidence}. Clearly, the stated time complexity in~\Cref{def:tracing} can be achieved because all operations are done within local views of faces.

If we inductively perform the above procedure and end up with a valid syndrome explanation sequence, i.e., $\sigma^{(r+1)}=0$ for some $r \leq t-k$, then~\Cref{claim:syndrome-consistent} below allows us to find a `correction' to the original inconsistent local guesses $g^{(0)}$, reducing to the case of~\Cref{claim:local-guess-work} and in turn obtaining a valid global chain in $C_k(X, \mcF)$.

\begin{claim}
\label{claim:syndrome-consistent}
Suppose that $0 \leq k < t$ and there is some $0\leq r \leq t-k$ such that a sequence $(g^{(0)},\hat{x}^{(0)}),\hdots, (g^{(r)},\hat{x}^{(r)})$ explains the syndrome $\sigma^{(0)}$ according to~\Cref{def:tracing}. Then, provided with $g^{(0)},\hdots,g^{(r)}$, we can construct, in sequential time $O(q^{n^{O(t)}}|X(k)|)$ or in parallel time $O(q^{n^{O(t)}})$, a cochain $g'^{(0)} \in \bs C^0_{ k}(X, \mcF_k)$ such that $\bs \partial_{L}(g'^{(0)})= \sigma^{(0)}$ and $\bs \Delta^0_{ k}(g'^{(0)}) = 0$. Furthermore, it is guaranteed that $|g'^{(0)}| \leq 8^{t^2}n^{t(t+1)} |g^{(0)}|$.
\end{claim}

\begin{proof}[Proof of~\Cref{claim:syndrome-consistent}]
The case $r=0$ is~\Cref{claim:local-guess-work} where we simply take $g'^{(0)}=g^{(0)}$. So we consider $r \geq 1$ and proceed by reducing to the $r=0$ case. In particular, we will construct a cochain $g'^{(r-1)}$ such that $g^{(0)},g^{(1)},\hdots,g'^{(r-1)}$ explain $\sigma^{(0)}$. We will then repeat this procedure to reduce to the case $r=0$.

First, we use the assumption $\sigma^{(r+1)} \equiv \bs \Delta^{r}_{ k+r} (g^{(r)})=0$ and the exactness of $\bs \Delta_{ k+r}$ to construct a cochain $u^{(r-1)} \in \bs C^{r-1}_{ k+r}(X)$ such that $\bs \Delta^{r-1}_{ k+r}(u^{(r-1)}) = g^{(r)}$. In particular,~\Cref{claim:exactness-agg} asserts that such a cochain can be constructed from $g^{(r)}$ in sequential time $O(q^{n^{O(t)}}|X(k+r)|)=O(q^{n^{O(t)}}|X(k)|)$ or in parallel time $O(q^{n^{O(t)}})$, and furthermore, it holds that
$|u^{(r-1)}| \leq (4n)^t|g^{(r)}|$. Next, we claim that $g'^{(r-1)} \triangleq g^{(r-1)}+\bs \partial_L(u^{(r-1)})$ is the desired new element such that $g^{(0)}, g^{(1)},\hdots,g'^{(r-1)}$ explain $\sigma^{(0)}$. Indeed, we have
\begin{align*}
    \bs \Delta^{r-1}_{ k+r-1}(g'^{(r-1)}) &= \bs \Delta^{r-1}_{ k+r-1}(g^{(r-1)}) + \bs \Delta^{r-1}_{ k+r-1}(\bs \partial_L(u^{(r-1)})) \\
    &= \sigma^{(r)} + \bs \partial_L(\Delta^{r-1}_{ k+r}(u^{(r-1)})) \tag{using \Cref{fig:commutative}}\\
    &= \sigma^{(r)} + \bs \partial_L(g^{(r)} )= 0,
\end{align*}
and
\begin{align*}
    \bs \partial_{L}(g'^{(r-1)})&= \bs \partial_{L}(g^{(r-1)}) + \bs \partial_{L}(\bs \partial_L(u^{(r-1)}))\\
    &= \sigma^{(r-1)}.
\end{align*}
Note the following bound on the block-Hamming weight of $g'^{(r-1)}$
\begin{align*}
    |g'^{(r-1)}| \leq |g^{(r-1)}|+|\bs \partial_L(u^{(r-1)})| \leq |g^{(r-1)}| + (4n)^t|g^{(r)}| \leq (1+ nt(4n)^{t} )|g^{(r-1)}|,
\end{align*}
where the last inequality follows from the block-Hamming weight bound $|g^{(i+1)}| \leq nt |g^{(i)}|$ in~\Cref{def:tracing}.

Using the same procedure we obtain an explanation sequence $g^{(0)},\hdots, g'^{(r-2)}$ with $|g'^{(r-2)}| \leq |g^{(r-2)}| + (4n)^t |g'^{(r-1)}| \leq |g^{(r-2)}| + (4n)^t (1+ nt(4n)^{t} )|g^{(r-1)}| \leq (1+ nt(4n)^{t} + (nt(4n)^t)^2 )|g^{(r-2)}|$. Repeating this procedure until $r=0$ we obtain the desired $g'^{(0)}$ with the following weight bound
\begin{align*}
   |g'^{(0)}| \leq (1+nt(4n)^t+\hdots+(nt(4n)^t)^r)|g^{(0)}| \leq (nt(4n)^t)^{r+1} |g^{(0)}| \leq 8^{t^2}n^{t(t+1)}|g^{(0)}|.
\end{align*}
The time complexity is as claimed since the above reduction step is done no more than $t$ steps.
\end{proof}

However, if Step 2 doesn't succeed, i.e., we end up with $\sigma^{(t-k+1)}\neq 0$, then we instead proceed by finding a `correction' to $g^{(t-k)}$ to reduce to the case $\sigma^{(t-k+1)}=0$. This is done via a reducing to a Z-syndrome decoding problem on the dual chain complex from~\Cref{def:dual-chain-complex} and then invoking the Z-syndrome decoder (\Cref{alg:Zdecoder} with parameter $\varepsilon=1$), as shown in~\Cref{claim:apply-Zdecoder} below.

\begin{claim}
\label{claim:apply-Zdecoder}
Let $1 \leq k <t$. Suppose that the procedure in~\Cref{def:tracing} yields a sequence $(g^{(0)},\hat{x}^{(0)}),\hdots, (g^{(t-k)},\hat{x}^{(t-k)})$ \footnote{We emphasize that here $\hat{x}^{(i)}$ are unknown to the decoder.} with $\sigma^{(t-k+1)}\neq 0$. Furthermore, suppose that
\begin{align}
    |\hat{e}| \leq \frac{1}{2^t(t-k+2)(nt)^{t-k}}\frac{8^{-t^2}\kappa^{t-k+1}-\lambda}{2^{t+1} (4n)^t} r |X(t-k)| \cdot \min (\frac{(8^{-t^2}\kappa^{t-k+1} - \lambda)^2}{2^{t+3}}, (\frac{\kappa}{2})^{t-k+1}8^{-t^2}-\lambda ).
\end{align}
Then we can find a chain $g \in C_{k}(X)$ that is homologically equivalent to the underlying error $\hat{e}$ in time equal to the runtime of the Z-syndrome decoder call, plus $O(q^{n^{O(t)}}|X(k)|)$ sequential time (or $O(q^{n^{O(t)}})$ parallel time). 
\end{claim}

\begin{proof}[Proof of~\Cref{claim:apply-Zdecoder}]
Our goal is to find a `correction' co-chain $w^{(t-k)} \in \bs C^{t-k}_{ t}(X)$ to $g^{(t-k)}$ such that the sequence $g^{(0)},g^{(1)},\hdots,g'^{(t-k)} \triangleq g^{(t-k)}+w^{(t-k)}$ explains $\sigma^{(0)}$, which essentially brings us back to the setting of~\Cref{claim:syndrome-consistent}. As per~\Cref{def:tracing}, we want the following to hold
\begin{align}
    \bs \Delta^{t-k}_{ t}(g'^{(t-k)})=0, \qquad \text{and} \qquad 
    \bs \partial_L (g'^{(t-k)})=\sigma^{(t-k)}.
    \label{eq:995}
\end{align}

    We now describe how to construct such a correction cochain.
    According to~\Cref{def:tracing}, it holds that
    \begin{align}
        \bs \partial_L (g^{(t-k)} + \hat{x}^{(t-k)}) = 0.
    \end{align}
    
     \paragraph{Step 3: Map to dual complex and apply Z-decoder} Here recall that $g^{(t-k)}, \hat{x}^{(t-k)} \in \bs C^{t-k}_{ t}(X, \mathcal{F}_{t})$ and their restriction on each face $f \in X(t-k)$ corresponds to an upward local view in the original chain complex: $(g^{(t-k)}+ \hat{x}^{(t-k)})(f) \in C_t(X_{\geq f}) \cong \mathbb{F}_q^{\prod_{j \notin \mathsf{type}(f)} n_j}$. It follows, due to~\Cref{lem:exactness-tensorcode} that, the restriction $(g^{(t-k)} + \hat{x}^{(t-k)})(f)$ is a tensor codeword in $\otimes_{j \notin \mathsf{type}(f)} \ker(h_j)$. Therefore, we can define a $(t-k)$-cochain $\Tilde{c}^{(t-k)}\in \Tilde{C}^{t-k}(X,\Tilde{\mathcal{F}})$ in the dual chain complex (\Cref{def:dual-chain-complex}) by replacing the tensor codeword $(g^{(t-k)} + \hat{x}^{(t-k)})(f)$ by its un-encoded logical message $\Tilde{c}^{(t-k)} \in \mathbb{F}_q^{\prod_{j \notin \mathsf{type}(f)} k_j}$ (according to the pre-specified generating matrix $h_j^{\perp} \in \mathbb{F}^{k_j \times n_j}$). Of course, $\Tilde{c}^{(t-k)}$ is a priori unknown to the decoder. However, if we knew $\Tilde{c}^{(t-k)}$, we would know $g^{(t-k)}+ \hat{x}^{(t-k)}$ (by going over each face and re-encoding $\Tilde{c}^{(t-k)}(f)$ back into the tensor code in sequential time $O(q^{n^{O(t)}}|X(k)|)$ or in parallel time $O(q^{n^{O(t)}})$), which can be easily verified to be such a desired correction $w^{(t-k)}$.
    
    Therefore, we attempt to find $\tilde{c}^{(t-k)}$. Remarkably, it turns out that we have a handle on its syndrome,  $\tilde{\delta}^{t-k} (\tilde{c}^{(t-k)})$, with respect to the coboundary map of $\Tilde{C}(X, \Tilde{\mathcal{F}})$. Indeed, for each face $f \in X(t-k+1)$, it holds that
    \begin{align*}
        \left(\bigotimes_{j \notin \mathsf{type}(f)} (h_j^\perp)^\top  \right) \big(  \tilde{\delta} (\tilde{c}^{(t-k)}) (f) \big)
        &=\sum_{i \in \mathsf{type}(f)} \sum_{f' \precdot_i f} \left( \left( \bigotimes_{j \notin \mathsf{type}(f)} (h_j^\perp)^\top \right) \otimes (h_{i}^\perp [f_i])^\top \right) \big(\tilde{c}^{(t-k)}  (f') \big) \tag{using \Cref{def:sheaf-complex}; here $[f_i]$ means $f_i$-th column of $h_i^\perp$}\\
   &= \sum_{i \in \mathsf{type}(f)} \sum_{f' \precdot_i f} \big(g^{(t-k)} + \hat{x}^{(t-k)} \big) (f') \big|_{X_{\geq f}(t)} \tag{$h_j^\perp$ re-encodes}\\
   &= \big(\bs \Delta_{ t}^{t-k} (g^{(t-k)}+\hat{x}^{(t-k)}) \big)(f) \tag{using~\Cref{eq:local-view-sheaf}}\\
   &= \sigma^{(t-k+1)},
    \end{align*}
where the last equality is because $\bs \Delta_{ t}^{t-k} (\hat{x}^{(t-k)})= \bs \Delta_{ t}^{t-k}(\bs \Delta^{t-k-1}_{ t} (\hat{z}^{(t-k-1)}))=0$. In other words, $\sigma^{(t-k+1)}$, which we know, is in fact an encoding of the syndrome $\tilde{\delta}^{t-k} (\tilde{c}^{(t-k)})$. So we can first un-encode the local tensor code in $\sigma^{(t-k+1)}$ to find $\Tilde{\sigma}^{(t-k+1)} =\tilde{\delta}^{t-k} (\tilde{c}^{(t-k)})$ (in sequential time $O(q^{n^{O(t)}}|X(k)|)$ or in parallel time $O(q^{n^{O(t)}})$).  
Note the following block-Hamming weight bound
    \begin{align*}
     |\tilde{c}^{(t-k)}| &\leq |g^{(t-k)}| + |\hat{x}^{(t-k)}| \\
     &\leq |\sigma^{(t-k)}| + nt(|g^{(t-k-1)}| + |\hat{x}^{(t-k-1)}|)  \tag{using~\Cref{eq:weight-bound-2}}\\
     &\leq \hdots\\
     &\leq \sum_{i=0}^{t-k} (nt)^i |\sigma^{(t-k-i)}| + (nt)^{t-k} (|g^{(0)}|+ |\hat{x}^{(0)}|)\\
     &\leq \sum_{i=0}^{t-k} (nt)^i (nt)^{t-k-i}|\sigma^{(0)}| + (nt)^{t-k} (|g^{(0)}|+ |\hat{x}^{(0)}|)  \tag{using~\Cref{eq:weight-bound-1}} \\
     &\leq  (t-k+2)(nt)^{t-k}|\hat{x}^{(0)}|.
      \numberthis\label{eq:dual-chain-weight-bound}
    \end{align*}
Hence, using~\Cref{claim:apply-Zdecoder}'s promise and $|\hat{x}^{(0)}|\leq 2^t |\hat{e}|$ we can apply the Z-syndrome decoder~\Cref{alg:Zdecoder} (with parameter $\varepsilon=1$ for simplicity) on $\Tilde{C}(X,\Tilde{\mathcal{F}})$, whose performance guanrantee from~\Cref{prop:noiseless-Zdecoder} applies because of the two-way $\kappa$-robustness of the local maps. According to~\Cref{prop:noiseless-Zdecoder}, upon the syndrome input $\Tilde{\sigma}^{(t-k+1)}$, the Z-decoder outputs 
\begin{align}
    \tilde{c}'^{(t-k)} = \tilde{c}^{(t-k)} + \tilde{\delta}^{t-k-1}(\tilde{r}^{(t-k-1)}),
\end{align}
where $\tilde{r}^{(t-k-1)} \in \Tilde{C}^{t-k-1}(X)$. Our correction to $g^{(t-k)}$ will be the re-encoded version of this, i.e., let $w^{(t-k)}(f) \triangleq \big( \otimes_{j \notin \mathsf{type}(f)} (h_j^\perp)^\top \big)\tilde{c}'^{(t-k)}(f)$, for each face $f \in X(t-k)$.

Let us verify that this correction satisfies~\Cref{eq:995}. The re-encoded version of $\tilde{c}^{(t-k)}$ is $g^{(t-k)}+ \hat{x}^{(t-k)}$. Let $\hat{v}^{(t-k)} \in \bs C^{t-k}_{ t}(X)$ be the re-encoded version of $\Tilde{\delta}(\tilde{r}^{(t-k-1)})$, we have that
\begin{align*}
    \bs \Delta_{ t}^{t-k} (g^{(t-k)}+ w^{(t-k)}) &= \bs \Delta_{ t}^{t-k} (\hat{x}^{(t-k)}) + \bs \Delta_{ t}^{t-k} (\hat{v}^{(t-k)})\\
    &= \bs \Delta_{ t}^{t-k} (\hat{v}^{(t-k)}).
\end{align*}
We can verify that $\bs \Delta_{ t}^{t-k} (\hat{v}^{(t-k)})=0$. For each face $f \in X(t-k)$ we have
\begin{align*}
    \hat{v}^{(t-k)}(f)&=\big(\otimes_{j \notin \mathsf{type}(f)} (h_j^\perp)^\top  \big) \big(  \Tilde{\delta}(\Tilde{r}^{(t-k-1)})  (f) \big)\\
   &= \sum_{i \in \mathsf{type}(f)} \sum_{f' \precdot_i f} \left(\bigotimes_{j \notin \mathsf{type}(f)} (h_j^\perp)^\top \otimes (h_{i}^\perp [f_i])^\top \right) \big(  \Tilde{r}^{(t-k-1)} (f') \big)\\
   &= \sum_{i \in \mathsf{type}(f)} \sum_{f' \precdot_i f} \hat{r}^{(t-k-1)}(f') \big|_{X_{\geq f}(t)} \tag{$\hat{r}^{(t-k-1)}$ is re-encoded version of $\Tilde{r}^{(t-k-1)}$}\\
   &= \bs \Delta_{ t}^{t-k-1} (\hat{r}^{(t-k-1)}) (f). \numberthis \label{eq:r-t-k-1}
\end{align*}
Hence, $\bs \Delta_{ t}^{t-k} (g^{(t-k)}+ w^{(t-k)}) = \bs \Delta_{ t}^{t-k}(\bs \Delta_{ t}^{t-k-1} (\hat{r}^{(t-k-1)}))=0$, satisfying the first requirement from~\Cref{eq:995}. Next, we verify the second requirement from~\Cref{eq:995}
\begin{align*}
    \bs \partial_L (g^{(t-k)}+ w^{(t-k)})
    &= \bs \partial_L (\hat{x}^{(t-k)}) + \bs \partial_L (\hat{v}^{(t-k)})\\
    &= \sigma^{(t-k)} + \bs \partial_L (\hat{v}^{(t-k)})\\
    &= \sigma^{(t-k)},
\end{align*}
where the last equality is because $\hat{v}^{(t-k)}(f)$ is an image of the map $\otimes_{j \notin \mathsf{type}(f)} (h_j^\perp)^\top$ and thus $\bs \partial_L (\hat{v}^{(t-k)})=0$.

To summarize, we have found a correction $w^{(t-k)} \in \bs C_{t}^{t-k}(X)$  such that the new guess $g'^{(t-k)} \triangleq g^{(t-k)}+ w^{(t-k)}$ has the following form
\begin{align}
    g'^{(t-k)} = \hat{x}^{(t-k)} +\hat{v}^{(t-k)},
    \label{eq:g'-step3}
\end{align}
where $\hat{v}^{(t-k)} \in \Im(\bs \Delta_{ t}^{t-k-1})$ and $\hat{v}^{(t-k)} \in \ker(\bs \partial_L)$.
And the updated sequence $g^{(0)},g^{(1)},\hdots, g'^{(t-k)}$ gives us a new syndrome explanation sequence~\Cref{def:tracing} (only almost so, since the weight bounds from~\Cref{def:tracing} don't hold anymore, due to the extra `boundary term' $\hat{v}^{(t-k)}$ in $g'^{(t-k)}$.)

\paragraph{Step 4: Obtaining final guess.}
We can perform similar calculations to the proof of~\Cref{claim:syndrome-consistent} in Step 2 to convert the sequence $g^{(0)},g^{(1)},\hdots, g'^{(t-k)}$ obtained in Step 3 into a final guess $g'^{(0)}$ that is consistent. To repeat what happened in~\Cref{claim:syndrome-consistent}'s proof, we use the exactness of $\bs \Delta^i_k$ for $1\leq i \leq k$ from~\Cref{claim:exactness-agg} to find $u^{(t-k-1)} \in \bs C^{t-k-1}_t(X)$ such that $\bs \Delta^{t-k-1}_t (u^{(t-k-1)}) = g'^{(t-k)}$ and set $g'^{(t-k-1)} = g^{(t-k-1)} + \bs \partial_L(u^{(t-k-1)})$. The new sequence $g^{(0)},g^{(1)},\hdots, g'^{(t-k-1)}$ is still a valid explanation sequence. This is repeated to obtain $g'^{(t-k-2)},\hdots,g'^{(0)}$ such that $\bs \Delta^{i}_{k+i}(g'^{(i)}) = 0$ and $\bs \partial_L (g'^{(i)}) = \sigma^{(i)}$. Each iteration takes either linear sequential time or parallel constant time as per~\Cref{claim:exactness-agg}. Now, we can invoke~\Cref{claim:consistent} on $g'^{(0)}$ to obtain a chain $g \in C_k(X, \mcF)$ as done in Step 1. By construction, $g$ has syndrome $\sigma$, so it remains to prove that $g$ is homologically equivalent to the underlying error $\hat{e}$.
We cannot use the simple block-Hamming weight bound argument as done in Step 1 and Step 2 since those weight bounds don't hold anymore, due to the extra `boundary term' ($v^{(t-k)}$ in~\Cref{eq:g'-step3}) coming from applying the Z-decoder in Step 3. So we use a different argument that involves opening up the structure of $g'^{(0)}$ and the proofs of~\Cref{claim:exactness-agg} and~\Cref{claim:consistent}.

First, we claim that, for each $0\leq i \leq t-k$, the new guess $g'^{(i)}$ can be written as $g'^{(i)} = \hat{x}^{(i)} + \hat{v}^{(i)}$, where $\hat{v}^{(i)} \triangleq g'^{(i)} + \hat{x}^{(i)} \in \Im(\bs \Delta^{i-1}_{k+i}) \cap \Im(\bs \partial_L)$. Indeed, by construction it holds that $\bs \Delta^{i}_{k+i} (\hat{v}^{(i)}) = \bs \Delta^{i}_{k+i}(g'^{(i)}) -   \bs \Delta^{i}_{k+i}(\hat{x}^{(i)})=0$ and $\bs \partial_L (\hat{v}^{(i)})= \bs \partial_L (g'^{(i)}) - \bs \partial_L (\hat{x}^{(i)})= \sigma^{(i)} - \sigma^{(i)}=0$. Hence, the claimed statement follows from the exactness of $\bs \Delta_{k+i}$ and $\bs \partial_L$. Let us write
\begin{align}
    g'^{(1)}= \hat{x}^{(1)} + \bs \Delta^0_{k+1}(\hat{r}^{(0)}),
\end{align}
 where $\hat{r}^{(0)} \in \bs C^0_{k+1}(X)$ (which is unknown to us). \footnote{More generally, we can write $g'^{(i)}= \hat{x}^{(i)} + \bs \Delta^{i-1}_{k+i}(\hat{r}^{(i-1)})$, such that when $i=t-k$, $\hat{r}^{(t-k-1)}$ coincides with~\Cref{eq:r-t-k-1}.} Next, recall that $g'^{(0)} = g^{(0)} + \bs \partial_L(u^{(0)})$ where $u^{(0)}$ is such that $\bs \Delta^{0}_{k+1}(u^{(0)})= g'^{(1)}=\hat{x}^{(1)} + \bs \Delta^0_{k+1}(\hat{r}^{(0)})$. We now show that the $k$-chain $g\in C_{k}(X,\mcF)$ obtained from $g'^{(0)}$ has the form $g = \hat{e} + \partial_{k+1}(\hat{r}')$ for some $\hat{r}' \in C_{k+1}(X,\mcF)$ related to $\hat{r}^{(0)}$. To do this, we need to know exactly how $g$ and $u^{(0)}$ are constructed.

Let us look at $u^{(0)}$. Recall from the proof of~\Cref{claim:exactness-agg} that the map $\bs \Delta^0_{k+1}$ can be decomposed into a direct sum of local maps $\Delta^0_f$ over faces $f \in X(k+1)$. In particular, it was shown that $\bs C^0_{k+1}(X, \mathcal{F}_{k+1}) =  \bigoplus_{f \in X(k+1)} C^0(X_{\leq f}, V_f)$, where $C^0(X_{\leq f}, V_f)$ is a cochain space in a `hypercube' complex defined within the downward link of $f$. Hence, for each 1-face (edge) $e=(v,v') \in X_{\leq f}$ we have 
\begin{align}
    g'^{(1)}(e)[f] = u^{(0)}(v)[f] + u^{(0)}(v')[f],
\end{align}
where, as in~\Cref{claim:exactness-agg}'s proof, $(\cdot)[f]$ denotes the restriction onto the local opinions for $f$). On the other hand, because $\bs \Delta^{1}_{k+1}(g'^{(1)}) = 0$, we have, for each $2$-face (square) $s=(v_{00},v_{01},v_{10},v_{11}) \in X_{\leq f}$, that
\begin{align}
    g'^{(1)}((v_{00},v_{01}))[f] + g'^{(1)}((v_{01},v_{11}))[f] + g'^{(1)}((v_{00},v_{10}))[f] + g'^{(1)}((v_{10},v_{11}))[f] =0.
    \label{eq:square}
\end{align}
Hence, one solution to the equation $\bs \Delta^{0}_{k+1}(u^{(0)})= g'^{(1)}$ is as follows. Consider a face $f \in X(k+1)$, let $v_a$ denote the vertices in $f$, where the label $a$ goes over all bitstrings in $\{0,1\}^{k+1}$. We set 
\begin{align}
    u^{(0)}(v_{0^{k+1}})[f]=0,
\end{align}
where $0^{k+1}$ is the all-zeros bitstring. And for each vertex $v_a$ in f, if $(e=(v_{0^k},v),\hdots,e'=(v', v_a))$ are the edges on a path from $v_{0^k}$ to $v_a$, we set 
\begin{align}
u^{(0)}(v_a)[f]= g'^{(1)}(e)[f] + \hdots + g'^{(1)}(e')[f].
\label{eq:642}
\end{align}
Although there can be multiple paths between $v_{0^k}$ and $v_a$, it is easy to see that the definition $u^{(0)}(v_a)[f]$ is independent of the chosen path because one path can be deformed to another by adding squares which has no effect to the assigned value due to~\Cref{eq:square}.

Next we look at how $g$ is constructed. Because $g'^{(1)}= \hat{x}^{(1)} + \bs \Delta^0_{k+1}(\hat{r}^{(0)})= \bs \Delta^0_{k+1}(\hat{z}^{(0)} + \hat{r}^{(0)})$, it holds for each edge $e=(v,v')$ that
\begin{align}
    g'^{(1)}((v,v'))[f]=   \hat{z}^{(0)}(v)[f] + \hat{z}^{(0)}(v')[f] +  \hat{r}^{(0)}(v)[f] + \hat{r}^{(0)}(v')[f].
    \label{eq:643}
\end{align}
Therefore, combining~\Cref{eq:642} and~\Cref{eq:643}, we obtain, for each face $f \in X(k+1)$ and vertex $v \prec f$, that
\begin{align}
    u^{(0)}(v)[f]= (\hat{z}^{(0)} + \hat{r}^{(0)})(v)[f] + (\hat{z}^{(0)} + \hat{r}^{(0)})(v_{0^k})[f].
\end{align}
So we can write
\begin{align}
    u^{(0)}=\hat{z}^{(0)} + \hat{r}'^{(0)},
\end{align}
where
\begin{align}
    \hat{r}'^{(0)}(v)[f] \triangleq \hat{z}^{(0)}(v_{0^k})[f] + \hat{r}^{(0)}(v_{0^k})[f] + \hat{r}^{(0)}(v)[f].\label{eq:645}
\end{align}
Define the cochain $\hat{z}'^{(0)} \in \bs C^0_{k+1}(X)$ with $ \hat{z}'^{(0)}(v)[f]=\hat{z}^{(0)}(v_{0^k})[f] + \hat{r}^{(0)}(v_{0^k})[f]$. By construction this cochain is consistent and represents the local views of a chain $z' \in C_{k+1}(X,\mcF)$. We rewrite~\Cref{eq:645} as
\begin{align}
    \hat{r}'^{(0)} = \hat{z}'^{(0)} + \hat{r}^{(0)}.
    \label{eq:646}
\end{align}

Hence, we have 
\begin{align}
    g'^{(0)} = g^{(0)} + \bs \partial_L(u^{(0)}) = g^{(0)} + \bs \partial_L(\hat{z}^{(0)}) + \bs \partial_L(\hat{r}'^{(0)}) = \hat{x}^{(0)} + \bs \partial_L(\hat{r}'^{(0)}).
    \label{eq:647}
\end{align}
The final guess $g \in C_{k}(X, \mcF)$ is constructed by setting, for each face $f \in X(k)$, $g(f) = g'^{(0)}(v)[f]$, where $v$ is an arbitrary vertex in $f$ (recall this assignment is independent of the choice of $v$ because of~\Cref{claim:consistent}.)

We are now ready to show $g = \hat{e} + \partial_{k+1}(\hat{r}')$ for some $\hat{r}' \in C_{k+1}(X,\mcF)$. Indeed, for each face $f \in X(k)$ and a canonical choice of vertex $v \prec f$,
\begin{align*}
    g(f)&=g'^{(0)}(v)[f] \\
    &= \hat{x}^{(0)}(v)[f] +  \bs \partial_L(\hat{r}'^{(0)})(v)[f] \tag{using \Cref{eq:647}}\\
    &= \hat{x}^{(0)}(v)[f] +  \bs \partial_L(\hat{z}'^{(0)})(v)[f] + \bs \partial_L(\hat{r}^{(0)})(v)[f] \tag{\Cref{eq:646}}\\
    &= \hat{e}(f) + \bs \partial_L(\hat{z}'^{(0)})(v)[f]+ \bs \partial_L(\hat{r}^{(0)})(v)[f] \tag{$\hat{x}^{(0)}$ contains local views of $\hat{e}$} \\ 
    &= \hat{e}(f) +  \partial_{\overline{\mathsf{type}(v)}}(\hat{z}'^{(0)}(v)) (f)  + \bs \partial_L(\hat{r}^{(0)})(v)[f] \tag{definition of $\bs \partial_L$~\Cref{eq:local-chain-boundary-map}}\\
    &= \hat{e}(f) + \partial_{k+1}(z')(f) + \bs \partial_L(\hat{r}^{(0)})(v)[f] \tag{consistency of $\hat{z}'^{(0)}$}.
\end{align*}
Finally, we notice that $\bs \partial_L(\hat{r}^{(0)})=0$. This fact can be traced back to~\Cref{eq:r-t-k-1} which shows that for each face $f \in X(t-k-1)$, we have $\hat{r}^{(t-k-1)}(f)$ is an image of the map $\otimes_{j \notin \mathsf{type}(f)} (h_j^\perp)^\top$. Furthermore, the reverse syndrome explanation procedure is linear and can be verified to preserve this image structure. Specifically, consider each vertex $v \in X(0)$ and $i \in [t]$ and $v \preceq f \in X(k)$ such that $i \notin \mathsf{type}(f)$, then for the faces $f' \succdot_i f$, the local coefficients $\hat{r}^{(0)}(v)[f'] \in \mathbb{F}_q^{\prod_{j \notin (\mathsf{type}(f) \cup\{i\})} \hat{A}_j}$ satisfy
\begin{align}
    \{\hat{r}^{(0)}(v)[f'] \}_{f' \succdot_i f} \in \Im((h_i^\perp)^\top) \otimes \mathbb{F}_q^{\prod_{j \notin (\mathsf{type}(f) \cup\{i\})} \hat{A}_j}
\end{align}
implying $\bs \partial_L(\hat{r}^{(0)})(v)[f]=0$.  Therefore, the final guess $g$ satisfies $g= \hat{e} + \partial_{k+1}(z')$ and is homologically equivalent to the true error.

The runtime stated in~\Cref{claim:apply-Zdecoder} is because both the syndrome sequence computations and the unencoding/reencoding of the local codes can be done in sequential time $O(q^{n^{O(t)}}|X(k)|)$ (or parallel time $O(q^{n^{O(t)}})$) since all of them are done locally in neighborhoods of size $n^{O(t)}$. This concludes the proof of~\Cref{claim:apply-Zdecoder}.
\end{proof}

\paragraph{Runtime} The time complexity of the sequential Z-syndrome decoder call is $O(q^{n^{O(t)}}|X(t-k)|) =O(q^{n^{O(t)}}|X(k)|)$ due to~\Cref{prop:noiseless-Zdecoder}. Hence, the runtime of the entire~\Cref{alg:Xdecoder} is $O(q^{n^{O(t)}}|X(k)|)$ since each of steps 1, 2, 3 ,4 take this much time (sequentially). This concludes the proof of the noiseless X-syndrome decoder algorithm. 
\end{proof}

\subsubsection{Noisy syndrome case}

We now describe a modification of~\Cref{alg:Xdecoder} for the noisy syndrome case. In step 1, we now only see the noisy syndrome $\sigma = \hat{\sigma} + \hat{m}$ where $\hat{m}$ is the measurement error and $\hat{\sigma}$ is the noiseless syndrome corresponding to the underlying data error $\hat{e} \in  C_k(X, \mcF)$. Hence, a local view $\sigma(X_{\geq v}(k-1))$ for a vertex $v$ is not necessarily associated with a preimage $g_v \in C_k(X_{\geq v})$. So if we do not find a chain $g_v$ such that $\partial_{\overline{\mathsf{type}(v)}}(g_v)= \sigma(X_{\geq v}(k-1))$, we simply set $g_v = 0$ in step 1. Similar modifications are also made in other steps of the algorithm, we refer to the proof for details.

Under the above modification, we can show that the modified~\Cref{alg:Xdecoder} is single-shot by carefully tracking the propagation of the syndrome errors throughout the proof of the noiseless X-decoder~\Cref{prop:noiseless-Xdecoder} and invoking the single-shotness (\Cref{prop:noisy-Zdecoder}) of the sequential Z-decoder subroutine used within~\Cref{alg:Xdecoder}.

\begin{prop}[Sequential X-decoder with noisy syndrome]
\label{prop:noisy-Xdecoder}
    Consider the same setting in~\Cref{prop:noiseless-Xdecoder} and the X-syndrome decoder algorithm (\Cref{alg:Xdecoder}), with the modification described above. Let $\hat{e} \in C_{k}(X)$ be the chain (over $\mathbb{F}_q$) corresponding to a Pauli Z error. Let $\hat{\sigma} = \partial_k (\hat{e}) \in C_{k-1}(X)$ be the ideal syndrome. And let $\sigma = \hat{\sigma} + \hat{m}$ be the observed syndrome, where the measurement error $\hat{m} \in C_{k-1}(X)$ satisfies $\hat{m} \leq \frac{1}{(2nt)^t} \frac{n}{4} \frac{(8^{-t^2}(\kappa/2)^{t-k+1}-\lambda)^2}{(4n)^t} r|X(t-k)|$ and $2^t(t+1) n |\hat{e}|_R  + 2|\hat{m}| \leq \frac{1}{(2nt)^t} \frac{n}{4} \frac{(8^{-t^2} \kappa^{t-k+1} -\lambda)}{(4n)^t}  \cdot  \min  (\frac{(8^{-t^2} \kappa^{t-k+1} -\lambda)^2}{2^{t+2}} ,8^{-t^2}(\frac{\kappa}{2})^{t-k+1}-\lambda  )$. Then~\Cref{alg:Xdecoder} (modified version) in sequential mode runs in time $O(q^{n^{O(t)}} |X(k)|)$ and outputs a correction chain $g$ such that $g + \hat{e}$ has stabilizer-reduced weight at most $\frac{(nt)^{O(t^2)}}{(\kappa/2)^{t-k+1}- \lambda} |\hat{m}|$.
\end{prop}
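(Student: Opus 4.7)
\emph{Proof plan.} The plan is to follow the four-step structure of the noiseless X-decoder proof in Proposition~\ref{prop:noiseless-Xdecoder} while tracking how the measurement error $\hat{m}$ propagates through each step, with the goal of reducing the noisy X-syndrome decoding problem to a noisy Z-syndrome decoding problem on the dual complex $\tilde{C}(X,\tilde{\mathcal{F}})$. We then invoke the single-shot noisy Z-decoder (Proposition~\ref{prop:noisy-Zdecoder}) as a black box on that dual complex and show that all residual-error factors that arise from passing back to the primal complex are polynomial in $n$ and $t$, giving the claimed $\frac{(nt)^{O(t^2)}}{(\kappa/2)^{t-k+1}-\lambda}|\hat{m}|$ residual bound.

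First, in Step~1 we modify the local-guess routine as stated: if the local view $\sigma(X_{\geq v}(k-1))$ has no preimage in $C_k(X_{\geq v})$, we set $g_v=0$; otherwise we pick a minimum block-weight preimage. To quantify the distortion caused by $\hat{m}$, we use exactness of the local chain complex (Lemma~\ref{lem:exactness-tensorcode}) to produce, for each vertex $v$, a local cochain $\hat{m}_v \in C_k(X_{\geq v})$ with $\partial_{\overline{\mathsf{type}(v)}}(\hat{m}_v) = \hat{m}(X_{\geq v}(k-1))$ and $|\hat{m}_v| \leq n^{O(t)}|\hat{m}(X_{\geq v}(k-1))|$. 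Assembling these into a cochain $\hat{m}^{(0)} \in \bs{C}^0_{k}(X,\mathcal{F}_k)$ of block weight $(2n)^{O(t)}|\hat{m}|$, we obtain the decomposition $g^{(0)} = \hat{x}^{(0)} + \hat{m}^{(0)} + \bs{\partial}_L(\hat{z}^{(0)})$ for some auxiliary $\hat{z}^{(0)}$, i.e., the local guesses are equal to the true local views of $\hat{e}$ up to a ``boundary'' term and an ``effective local measurement error''~$\hat{m}^{(0)}$.

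Second, for Step~2 the syndrome explanation sequence is generated exactly as in Definition~\ref{def:tracing}, since it depends only on the decoder-known cochain $g^{(0)}$; what changes is the interpretation of the unknown variables $\hat{x}^{(i)}$, which now implicitly absorb $\hat{m}^{(0)}$. Consequently the weight bound analogous to \eqref{eq:dual-chain-weight-bound} becomes $|\tilde{c}^{(t-k)}| \leq (t-k+2)(nt)^{t-k}(|\hat{x}^{(0)}|+|\hat{m}^{(0)}|) \leq (nt)^{O(t)}(|\hat{e}|_R + |\hat{m}|)$. More importantly, the dual syndrome $\tilde{\sigma}^{(t-k+1)}$ we hand to the Z-decoder equals $\tilde{\delta}^{t-k}(\tilde{c}^{(t-k)})$ plus an ``effective measurement error'' $\tilde{m}^{(t-k+1)}$ that is the image of $\hat{m}^{(0)}$ transported up through the maps $\bs{\Delta}^i_{k+i}$ and through the unencoding of the local tensor code; a weight chase identical to \eqref{eq:weight-bound-1}--\eqref{eq:weight-bound-2} combined with the incidence bounds of Lemma~\ref{lem:incidence} yields $|\tilde{m}^{(t-k+1)}| \leq (nt)^{O(t)}|\hat{m}|$.

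Third, in Step~3 we apply Proposition~\ref{prop:noisy-Zdecoder} on the dual complex with data cochain $\tilde{c}^{(t-k)}$ and measurement error $\tilde{m}^{(t-k+1)}$. The hypotheses on $|\hat{e}|_R$ and $|\hat{m}|$ in the proposition statement are precisely what is needed so that both $|\tilde{c}^{(t-k)}|$ and $|\tilde{m}^{(t-k+1)}|$ fall below the thresholds of Proposition~\ref{prop:noisy-Zdecoder}; this is the step where the factor $(2nt)^t$ in the hypothesis appears. The Z-decoder then returns a correction $\tilde{c}'^{(t-k)}$ whose residual has stabilizer-reduced weight at most $\frac{2}{n(8^{-t^2}(\kappa/2)^{t-k+1}-\lambda)}|\tilde{m}^{(t-k+1)}| \leq \frac{(nt)^{O(t)}}{(\kappa/2)^{t-k+1}-\lambda}|\hat{m}|$. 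Finally, Step~4 performs only local-view manipulations (re-encoding into the primal complex, walking back down the syndrome explanation sequence via the exactness of $\bs{\Delta}^i_{k+i}$ and $\bs{\partial}_L$, then invoking Claim~\ref{claim:consistent} to extract a global chain); each of these local steps expands block weight by at most $n^{O(t)}$ and there are $O(t)$ such steps, giving a final residual $|g+\hat{e}|_R \leq \frac{(nt)^{O(t^2)}}{(\kappa/2)^{t-k+1}-\lambda}|\hat{m}|$. The runtime is dominated by the single call to the noisy sequential Z-decoder, which is $O(q^{n^{O(t)}}|X(k)|)$. The main obstacle I expect is the second step, namely verifying in detail that the virtual measurement error $\hat{m}^{(0)}$ transports cleanly through the upward-view double complex without blowing up exponentially in $t$ and that the re-encoded error on the dual complex is correctly identified with $\tilde{m}^{(t-k+1)}$; this requires opening up the commutation between $\bs{\Delta}^i_{k+i}$ and the local unencoding maps, analogous to the calculation around Equation~\eqref{eq:r-t-k-1}.
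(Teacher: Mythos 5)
Your high-level plan — reduce to noisy Z-syndrome decoding on the dual complex, invoke Proposition~\ref{prop:noisy-Zdecoder}, and argue that every step on the way up and back down multiplies block-weight by at most $n^{O(t)}$ — matches the paper's route and the final bound is correct. However, there is a genuine gap in the mechanism you propose for Step~1, which then propagates through Step~2. You construct a ``virtual local measurement error'' $\hat{m}^{(0)} \in \bs C^0_k(X,\mathcal{F}_k)$ by picking, at each vertex $v$, a preimage $\hat{m}_v$ of $\hat{m}(X_{\geq v}(k-1))$ under the local boundary map, citing exactness (Lemma~\ref{lem:exactness-tensorcode}). But exactness only yields a preimage for elements of $\ker\partial$, and the restriction of an arbitrary measurement error $\hat m$ to a local view need not lie in $\ker\partial$ of the local complex. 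At the vertices where the noisy local syndrome has no preimage at all, the modified algorithm forces $g_v = 0$, and there the difference $g_v - \hat{g}_v = -\hat{g}_v$ is a preimage of the ideal syndrome $\hat\sigma(X_{\geq v}(k-1))$, not of $\hat m(X_{\geq v}(k-1))$; so your $\hat m_v$ simply does not exist there. For the same reason, the claim in your Step~2 that $\tilde m^{(t-k+1)}$ is ``the image of $\hat m^{(0)}$ transported up through $\bs\Delta^i_{k+i}$'' is not quite right: the noise injected at each level of the syndrome explanation sequence is not only the $\bs\Delta$-image of the noise from the level below, but also includes the extra discrepancy caused by freshly forced-to-zero local guesses at that level.

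The paper sidesteps this by never decomposing the discrepancy algebraically. It defines both the noisy and ideal objects at every level ($g^{(i)}$ vs $\hat g^{(i)}$, $\sigma^{(i)}$ vs $\hat\sigma^{(i)}$) and bounds the \emph{block-weight of the difference} by a lightcone counting argument: a face $f$ has $g^{(i)}(f)\ne \hat g^{(i)}(f)$ only if it sits in the $(nt)$-neighborhood of a vertex in the support of $\hat m$, and each such face contributes at most one to the block weight. This gives the recursion $|\hat h^{(i)}| = |g^{(i)}-\hat g^{(i)}| \leq (nt)^i 2^t|\hat m|$ and $|\sigma^{(i)}-\hat\sigma^{(i)}| \leq (nt)^i 2^t|\hat m|$ with no need for $\hat h^{(i)}$ to have any further structure, in particular no need for it to be a preimage of anything. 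If you replace your construction-by-exactness of $\hat m^{(0)}$ with the direct definition $\hat h^{(i)} := g^{(i)} - \hat g^{(i)}$ and bound its block weight by the counting argument, your Steps~3 and~4 go through as written and you recover the paper's proof.
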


\begin{proof}
    We go over the steps in the proof of~\Cref{prop:noiseless-Xdecoder} and track the propagation of the syndrome errors. We will keep track of how the variables defined in~\Cref{prop:noiseless-Xdecoder}'s proof changes in each step. In particular, in that proof, we used variables such as $g^{(i)}, \sigma^{(i)}, \hat{z}^{(i)}, \hat{x}^{(i)}$ (see~\Cref{fig:tracing}) where variables with a hat symbol are unknown to decoder. The ideal guesses and ideal syndromes were known to the decoder, but they are not in the current proof. So we will add a hat symbol $\hat{g}^{(i)}, \hat{\sigma}^{(i)}$ when we refer to results derived in~\Cref{prop:noiseless-Xdecoder}. In the current proof, we still have the nonideal guessses $g^{(i)}$ and syndromes $\sigma^{(i)}$ that the decoder `thinks' it is fixing. 

    \paragraph{Step 1: (Modified) Local guesses.} In this step we go over each vertex $v$ in $X(0)$ and locally find the minimal block-Hamming weight chain $g_v \in C_k (X_{\geq v})$ that is consistent with the observed syndrome $\sigma(X_{\geq v}(k-1))$ on $X_{\geq v}(k-1)$. If there is no such chain $g_v$ then we set $g_v=0$. If the syndrome is noiseless as in~\Cref{prop:noiseless-Xdecoder}, such a $g_v$ exists for each vertex $v$. Let us denote by $\{\hat{g}_v\}_{v \in X(0)}$ the ideal local guesses and let $\hat{g}^{(0)}$ be the corresponding cochain in $\bs C^0_{k}(X)$. In the current setting, upon receiving the noisy syndrome $\sigma= \hat{\sigma} + \hat{m}$, the modified local guess rule will output a different $g^{(0)}$. Since each $(k-1)$-face contains $2^{k-1} \leq 2^t$ vertices, the difference $\hat{h}^{(0)} \triangleq g^{(0)} -  \hat{g}^{(0)}$ can be bounded by
    \begin{align}
        |\hat{h}^{(0)}|=|g^{(0)} -  \hat{g}^{(0)}| \leq 2^t|\hat{m}|.
    \end{align}

    As before, $\hat{\sigma}^{(0)} \in \bs C^0_{k-1}(X)$ (with the hat symbol now) denotes the local views of $\hat{\sigma}$, $\sigma^{(0)}$ corresponds to $\sigma$, $\hat{x}^{(0)} \in \bs C^0_{k}(X)$ coresponds to $\hat{e}$. It holds that $\bs \partial_L(\hat{g}^{(0)})= \bs \partial_L(\hat{x}^{(0)})= \hat{\sigma}^{(0)}$. So we can still define a cochain $\hat{z}^{(0)} \in  \bs C^0_{k+1}(X)$ such that
    \begin{align}
        \bs \partial_L (\hat{z}^{(0)}) = \hat{g}^{(0)} + \hat{x}^{(0)}.
    \end{align}

    \paragraph{Step 2: (Modified) Syndrome explanation sequence.} We proceed by tracing the syndrome explanation sequence as before. However, note that $g^{(i)}$ are now constructed by the modified local guess rule: if there is no preimage for a local view, then set it to be zero. The (noisy) higher-level syndromes are defined accordingly $\sigma^{(i+1)}= \bs \Delta^i_{k+i}(g^{(i)})$. In the below diagram, we keep track of the ideal guesses $\hat{g}^{(i)}$ and syndromes $\hat{\sigma}^{(i+1)} = \bs \Delta^i_{k+i}(\hat{g}^{(i)}) $, as well as $\hat{z}^{(i)}$, together with the noisy values. The below diagram includes some notations different from~\Cref{fig:tracing}. In particular, the label `$2+3$' means the the output is equal to the sum of tuple elements 2 and 3, label `$1;1 \mapsto 2;2$' means tuple element 1 gets mapped to 1, 2 to 2, label `$\sim//$' means tuple element 1 gets mapped approximately and elements 2 and 3 gets mapped to element 2.

    \begin{center}
        \begin{tikzcd}[arrows=rightarrow]
        &0&1&2&3&\\
        k-1 & (\sigma^{(0)}, \hat{\sigma}^{(0)}) &&&&\\
        k&     (g^{(0)}, \hat{g}^{(0)} ,\hat{x}^{(0)}) \ar[r,"\bs \Delta^0_{ k}", "1;1 \mapsto 2;2" below] \ar[u,"\bs \partial_L", "\sim //" right] & (\sigma^{(1)}, \hat{\sigma}^{(1)}) &  & &\\
        k+1&     \hat{z}^{(0)} \ar[u,"\bs \partial_L", "2+3" right] \ar[r,"\bs \Delta^0_{ k+1}", "\mapsto 3" below] & (g^{(1)},\hat{g}^{(1)},\hat{x}^{(1)}) \ar[u,"\bs \partial_L", "\sim //" right] \ar[r,"\bs \Delta^1_{ k+1}", "1;1 \mapsto 2;2" below] & (\sigma^{(2)},\hat{\sigma}^{(2)})& &\\
        k+2&     & \hat{z}^{(1)} \ar[rd, dashed, rightarrow, no head] \ar[u,"\bs \partial_L", "2+3" right] \ar[r,"\bs \Delta^1_{ k+2}", "\mapsto 3" below]& (g^{(2)},\hat{g}^{(2)}, \hat{x}^{(2)}) \ar[rd, dashed, rightarrow, no head] \ar[u,"\bs \partial_L", "\sim //" right]&&\\
         & & & ~& \makebox[\widthof{$x^{(3)}$}]{~}&~
        \end{tikzcd}
        \end{center}

        We have $|\hat{h}^{(0)}|=|g^{(0)} -  \hat{g}^{(0)}| \leq 2^t|\hat{m}|$, so
        \begin{align}
            |\sigma^{(1)} - \hat{\sigma}^{(1)}| = |\bs \Delta^0_{k} (g^{(0)} + \hat{g}^{(0)})| \leq  nt |g^{(0)} -  \hat{g}^{(0)}| \leq (nt) 2^t |\hat{m}|.
        \end{align}
        We still have $\bs \partial_L (\hat{\sigma}^{(1)})=0$, while $\bs \partial_L (\sigma^{(1)})$ can be nonzero but $|\bs \partial_L (\sigma^{(1)})| \leq (nt)2^{t}|\hat{m}|$.
        Next, $\hat{g}^{(1)}$ is still defined to be such that $\bs \partial_L(\hat{g}^{(1)}) = \hat{\sigma}^{(1)}$. To construct $g^{(1)}$ from $\sigma^{(1)}$, there can be faces $f \in X(k+1)$ where no preimage exists for $\sigma^{(1)}(f)$, for which we set $g^{(1)}(f)=0$. There are at most $(nt)2^t|\hat{m}|$ such faces, so we have $|g^{(1)} - \hat{g}^{(1)}| \leq (nt)2^t|\hat{m}|$. Next, we can similarly bound $|\sigma^{(2)}- \hat{\sigma}^{(2)}| = |\bs \Delta^1_{k+1} (g^{(1)} + \hat{g}^{(1)})| \leq (nt)^2 2^t|\hat{m}|$ and $|g^{(2)} - \hat{g}^{(2)}|$, and so on. Performing this calculation recursively we obtain
        \begin{align}
            |\hat{\varsigma}^{(i)}| \triangleq |\sigma^{(i)} - \hat{\sigma}^{(i)}| \leq (nt)^i 2^t|\hat{m}|,\\
            |\hat{h}^{(i)}| \triangleq |g^{(i)} - \hat{g}^{(i)}| \leq (nt)^i 2^t|\hat{m}|.
        \end{align}
        We compute the above sequence until obtaining $\sigma^{(t-k+1)}$.

    \paragraph{Step 3: Map to dual complex and apply single-shot Z-decoder.} In this step we map the problem to a Z-decoding problem on noisy syndrome related to $\sigma^{(t-k+1)}= \hat{\sigma}^{(t-k+1)} + \hat{\varsigma}^{(t-k+1)}$.

    As in~\Cref{prop:noiseless-Xdecoder}'s proof, we notice that $\hat{g}^{(t-k)} + \hat{x}^{(t-k)} \in 
    \bs C^{t-k}_t(X, \mcF_t)$ and $\hat{\sigma}^{(t-k+1)} \in \bs C^{t-k+1}_t(X, \mcF_t)$ are associated with a cochain $\tilde{c}^{(t-k)} \in \tilde{C}^{t-k}(X,\tilde{\mcF})$ and its syndrome $\tilde{s}^{(t-k+1)} =\tilde{\delta}^{t-k}(\tilde{c}^{(t-k)})$, respectively, via applying the unencoding maps of the local codes $\{h_1^\perp,\hdots,h_t^\perp\}$ on each face. The difference is that now we only have access to $\sigma^{(t-k+1)}$. So when we try to unencode the local codes on $\sigma^{(t-k+1)}$, we only obtain a noisy version $s^{(t-k+1)} \in \tilde{C}^{t-k+1}(X,\tilde{\mcF})$ of $\tilde{s}^{(t-k+1)}$. Note that there may be faces $f \in X(k+1)$ where $\sigma^{(t-k+1)}(f)$ is not a valid codeword of the local tensor code, for which we simpy set $s^{(t-k+1)}(f)=0$. However, we have the following guarantee
    \begin{align}
        |s^{(t-k+1)} - \tilde{s}^{(t-k+1)}| \leq |\hat{\varsigma}^{(t-k+1)}| \leq (nt)^{t-k+1}2^t |\hat{m}| \leq (2nt)^t |\hat{m}|.
        \label{eq:8011}
    \end{align}
    Note that we still have the bound
    \begin{align}
        |\tilde{c}^{(t-k)}| \leq |\hat{g}^{(t-k)}+ \hat{x}^{(t-k)}| \leq  (t-k+2)(nt)^{t-k}2^t|\hat{e}|\leq (t+1)(2nt)^t|\hat{e}|
        \label{eq:8012}
    \end{align}
    from~\Cref{eq:dual-chain-weight-bound}.
    So from the given promise on the weights of $\hat{e}$ and $\hat{m}$, we are guaranteed that
    \begin{align}
        |s^{(t-k+1)} - \tilde{s}^{(t-k+1)}| \leq D_2 r|X(t-k)|\\
        2^t n|\tilde{c}^{(t-k)}| + 2 |s^{(t-k+1)} - \tilde{s}^{(t-k+1)}| \leq D_3 r |X(t-k)|, 
    \end{align}
    where $D_2= \frac{n}{4} \frac{(8^{-t^2}(\kappa/2)^{t-k+1}-\lambda)^2}{(4n)^t}$, $D_3=\frac{n}{4} \frac{(8^{-t^2} \kappa^{t-k+1} -\lambda)}{(4n)^t}  \cdot  \min  (\frac{(8^{-t^2} \kappa^{t-k+1} -\lambda)^2}{2^{t+2}} ,8^{-t^2}(\frac{\kappa}{2})^{t-k+1}-\lambda  )$ are constants from~\Cref{prop:noisy-Zdecoder}.

    Hence, we can invoke the noisy sequential single-shot Z-syndrome decoder from~\Cref{prop:noisy-Zdecoder}. In particular,~\Cref{alg:Zdecoder} when applied on the Z-syndrome decoding problem with noisy syndrome $s^{(t-k+1)}$, runs in time $O(q^{n^{O(t)}} |X(t-k)|)$ and outputs a cochain $\tilde{c}'^{(t-k)}$ such that
    \begin{align}
        \tilde{c}'^{(t-k)} = \tilde{c}^{(t-k)} + \tilde{\delta}^{(t-k-1)}(\tilde{r}^{(t-k-1)}) + \tilde{w}^{(t-k)},
    \end{align}
    where $\tilde{r}^{(t-k-1)} \in \tilde{C}^{t-k-1}(X,\tilde{\mcF})$ is some unknown cochain and
     \begin{align}
        |\tilde{w}^{(t-k)}| &\leq \frac{2}{n(8^{-t^2}(\kappa/2)^{t-k+1}- \lambda)} |s^{(t-k+1)} - \tilde{s}^{(t-k+1)}| \stackrel{\eqref{eq:8011}}{\leq}  \frac{2(2nt)^t}{n(8^{-t^2}(\kappa/2)^{t-k+1}- \lambda)}|\hat{m}|
     \end{align}
    is the residual error.

    Now, adding the re-encoded version of $\tilde{c}'^{(t-k)}$ to $g^{(t-k)}$ yields
     \begin{align}
        g'^{(t-k)} = \hat{x}^{(t-k)} + \hat{v}^{(t-k)}+ \hat{h}^{(t-k)} + \hat{w}^{(t-k)},
     \end{align}
    where $\hat{v}^{(t-k)}$ is the re-encoded version of $\tilde{\delta}^{(t-k-1)}(\tilde{r}^{(t-k-1)})$ and $\hat{w}^{(t-k)}$ is of $\tilde{w}^{(t-k)}$. And we have the following bound
    \begin{align}
        |\hat{h}^{(t-k)} + \hat{w}^{(t-k)}| \leq (nt)^i 2^t|\hat{m}| + \frac{2(2nt)^t}{n(8^{-t^2}(\kappa/2)^{t-k+1}- \lambda)}|\hat{m}| \leq \frac{4(2nt)^t}{n(8^{-t^2}(\kappa/2)^{t-k+1}- \lambda)}|\hat{m}|.
    \end{align}
    In other words, let $\hat{g}'^{(t-k)}= \hat{x}^{(t-k)} + \hat{v}^{(t-k)}$ (the `ideal' corrected guess) and $\hat{h}'^{(t-k)}= \hat{h}^{(t-k)} + \hat{w}^{(t-k)}$, then we can write
    \begin{align}
        |\hat{h}'^{(t-k)}| = |g'^{(t-k)} - \hat{g}'^{(t-k)}| \leq \frac{4(2nt)^t}{n(8^{-t^2}(\kappa/2)^{t-k+1}- \lambda)}|\hat{m}|.
        \label{eq:8888}
    \end{align}
    \paragraph{Step 4: Obtaining the final guess.} Now we reverse the syndrome sequence as similarly done in step 4 of~\Cref{prop:noiseless-Xdecoder}, starting from $g'^{(t-k)}$. In~\Cref{prop:noiseless-Xdecoder}, we were guaranteed that $\hat{g}'^{(t-k)}$ would give rise to $\hat{g}'^{(0)}$ (added hat symbols) that would provide a global guess $\hat{g} \in C_k(X)$ homologically equivalent to the underlying error $\hat{e}$. Now we instead perform this step starting with $g'^{(t-k)}$ whose deviation from $\hat{g}'^{(t-k)}$ is bounded in~\Cref{eq:8888}.

    We recall that we used the exactness of $\bs \Delta^i_k$ for $1\leq i \leq k$ from~\Cref{claim:exactness-agg} to find $\hat{u}^{(t-k-1)} \in \bs C^{t-k-1}_t(X)$ (added hat symbols) such that $\bs \Delta^{t-k-1}_t (\hat{u}^{(t-k-1)}) = \hat{g}'^{(t-k)}$ (see~\Cref{claim:syndrome-consistent}'s proof). Then we set $\hat{g}'^{(t-k-1)} = \hat{g}^{(t-k-1)} + \bs \partial_L(\hat{u}^{(t-k-1)})$. This was repeated to obtain $\hat{g}'^{(t-k-2)},\hdots,\hat{g}'^{(0)}$ such that $\bs \Delta^{i}_{k+i}(\hat{g}'^{(i)}) = 0$. In the current proof, we are no longer guaranteed that $\bs \Delta^{t-k}_{t}(g'^{(t-k)}) = 0$. By~\Cref{eq:8888}, $\bs \Delta^{t-k}_{t}(g'^{(t-k)})$ is mostly zero. Since $\bs \Delta^{t-k}_{t}$ only acts locally, i.e., it decomposes into local maps $\bs \Delta_{t} = \bigoplus_{f \in X(t)} \Delta_f $ each of which is exact (recall~\Cref{claim:exactness-agg}'s proof), we can define $u^{(t-k-1)}$ locally within each downward link of a face $f \in X(t)$. If the restriction $g'^{(t-k)}(X_{\leq f}(t-k))[f]$ is nonzero, we set the preimage to be zero. There are at most $(2n)^t|g'^{(t-k)} - \hat{g}'^{(t-k)}|$ \footnote{Here $(2n)^t$ is a crude upperbound of the number of superfaces of any face.} faces $f\in X(t)$ for whose downward link this could be the case, hence we have
    \begin{align}
        |u^{(t-k-1)} - \hat{u}^{(t-k-1)} | \leq 2^t (2n)^t|g'^{(t-k)} - \hat{g}'^{(t-k)}|.
    \end{align}
    With $g'^{(t-k-1)}= g'^{(t-k)}+ \bs \partial_L (u^{(t-k-1)})$,  it follows that
    \begin{align}
        |g'^{(t-k-1)} - \hat{g}'^{(t-k-1)}| \leq (1+(4n)^t)|g'^{(t-k)} - \hat{g}'^{(t-k)}|.
    \end{align}
    Repeating this calculation, we obtain
    \begin{align}
        |g'^{(0)} - \hat{g}'^{(0)}| &\leq (1+(4n)^t+ +\hdots + ((4n)^t)^{t-k})|g'^{(t-k)} - \hat{g}'^{(t-k)}|\\
        & \leq n^{O(t^2)}|g'^{(t-k)} - \hat{g}'^{(t-k)}|\\
        & \stackrel{\eqref{eq:8888}}{\leq } \frac{(nt)^{O(t^2)}}{8^{-t^2}(\kappa/2)^{t-k+1}- \lambda} |\hat{m}|.
    \end{align}
    Finally we obtain a global guess $g \in C_k(X,\mcF)$ from $g'^{(0)}$. Since $g'^{(0)}$ is not necessarily consistent, we can't find such $g$ consistently everywhere. For faces $f \in X(k)$ with inconsistent local views in $g^{(0)}$, we simply set $g(f)=0$. Since $\hat{g}'^{(0)}$ is consistent, the above weight bound guanrantees there are at most $(2n)^t|g'^{(0)} - \hat{g}'^{(0)}|$ $k$-faces for which $g$ differs from the ideal $\hat{g}$. Therefore, up to stabilizers, $g$ differs from the underlying error $\hat{e}$ by at most $\frac{(nt)^{O(t^2)}}{8^{-t^2}(\kappa/2)^{t-k+1}- \lambda} |\hat{m}|$.

    \paragraph{Runtime} We only made some local modifications in the steps of the modified algorithm, so the runtime is the same as in~\Cref{prop:noiseless-Xdecoder}.
\end{proof}

\subsubsection{Parallel decoder}
At this point it is straightforward to obtain the X-syndrome parallel decoder because we have seen that all the steps in~\Cref{prop:noiseless-Xdecoder}'s proof except for the call to Z-syndrome decoder can be done in $O(q^{n^{O(t)}})$ parallel time. So all we need to do now is to call the parallel Z decoder instead. A similar argument to~\Cref{prop:noisy-Xdecoder} also shows single-shotness. Hence, we will be brief.

\begin{prop}[Parallel X-decoder with noiseless syndrome]\label{prop:noiseless-parallel-X}
    Consider the same setting in~\Cref{prop:noiseless-Xdecoder}. Let $\frac{2}{\kappa} \lambda^{1/t} 8^{t} < \varepsilon < 4^{-(t+2)}$ and $\tau_0 = \log (\frac{2^{t+2}}{8^{-t^2}\kappa^{t-k+1}-\lambda})/\log (\frac{1}{15/16 + 4^t \varepsilon})$. 
    Let $\hat{e} \in C_{k}(X)$ be the chain (over $\mathbb{F}_q$) corresponding to a Pauli Z error such that $|\hat{e}|_R \leq (nt)^{-O(t^2)} (8^{-t^2}\kappa^{t-k+1}-\lambda)r |X(t-k)| \cdot \min (\frac{(8^{-t^2}\kappa^{t-k+1} - \lambda)^2}{2^{t+3}}, (\frac{\varepsilon\kappa}{2})^{t-k+1}8^{-t^2}-\lambda )$. 
        Given the syndrome $\sigma = \partial_k (\hat{e}) \in C_{k-1}(X)$ for the error $\hat{e}$, then~\Cref{alg:Xdecoder} in $\tau$-round parallel mode runs in time $O(\tau q^{n^{O(t)}})$ and achieves the following guarantees
        \begin{itemize}
            \item If $\tau=O(\tau_0 (t^2 \log(nt) + s))$, then it outputs a correction cochain $g$ such that $|g + \hat{e}|_R \leq \frac{1}{2^s}|\hat{e}|_R$.
            \item If $\tau=O(\tau_0 \log |X(t-k)|)$, then it outputs a correction cochain $g$ homologically equivalent to $\hat{e}$.
        \end{itemize}
\end{prop}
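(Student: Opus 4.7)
The plan is to run the exact same four-step algorithm as in the sequential X-decoder proof (\Cref{prop:noiseless-Xdecoder}), with the single modification that Step 3 invokes the parallel Z-syndrome decoder (\Cref{prop:noiseless-parallel-Z}) with $\tau$ rounds on the associated dual complex, instead of the sequential Z-decoder. Steps 1, 2, and 4 are all local over neighborhoods of size $n^{O(t)}$ and were shown in \Cref{prop:noiseless-Xdecoder} to admit parallel implementations of depth $O(q^{n^{O(t)}})$. Hence the total runtime is dominated by Step 3, namely $O(\tau q^{n^{O(t)}})$.

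For the second bullet (homologically equivalent output with $\tau = O(\tau_0 \log |X(t-k)|)$ rounds), the argument of \Cref{prop:noiseless-Xdecoder} applies essentially verbatim. The weight bound in \Cref{eq:dual-chain-weight-bound} guarantees that $|\tilde c^{(t-k)}| \le (t+1)(2nt)^t |\hat e|_R$, which combined with the assumption on $|\hat e|_R$ places us strictly inside the convergence radius of the parallel Z-decoder. With $O(\tau_0 \log |X(t-k)|)$ rounds \Cref{prop:noiseless-parallel-Z} returns a correction $\tilde c'^{(t-k)}$ homologically equivalent to $\tilde c^{(t-k)}$ in the dual complex. Re-encoding, followed by the reverse-syndrome-sequence step (which is purely algebraic and preserves homological equivalence in the original complex, as in \Cref{claim:apply-Zdecoder}), yields a final $g \in C_k(X,\mathcal{F})$ homologically equivalent to $\hat e$.

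For the first bullet, the key observation is that the $\tau$-round parallel Z-decoder suppresses the residual error in the dual complex by a factor of $2^{-s'}$ whenever $\tau \ge \tau_0 \cdot s'$ for the constant $\tau_0$ given in \Cref{prop:noiseless-parallel-Z}. Thus after the Z-decoder call, we obtain $\tilde c'^{(t-k)} = \tilde c^{(t-k)} + \tilde\delta^{t-k-1}(\tilde r^{(t-k-1)}) + \tilde w^{(t-k)}$ with $|\tilde w^{(t-k)}| \le 2^{-s'} |\tilde c^{(t-k)}| \le 2^{-s'} (t+1)(2nt)^t |\hat e|_R$. Re-encoding gives an analogous cochain $\hat w^{(t-k)}$ of the same weight on the upward-view complex, playing the role of $\hat h'^{(t-k)}$ in the noisy proof \Cref{prop:noisy-Xdecoder} (\cref{eq:8888}). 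Propagating $\hat w^{(t-k)}$ back through the reverse syndrome explanation sequence amplifies its weight by a factor of at most $n^{O(t^2)}$, exactly as in the calculation preceding the conclusion of \Cref{prop:noisy-Xdecoder}, so the final stabilizer-reduced residual error is bounded by $(nt)^{O(t^2)} \cdot 2^{-s'} |\hat e|_R$. Choosing $s' = s + C t^2 \log(nt)$ for a sufficiently large constant $C$ absorbs the $(nt)^{O(t^2)}$ factor and yields the claimed bound $|g + \hat e|_R \le 2^{-s}|\hat e|_R$; this corresponds to $\tau = O(\tau_0(t^2 \log(nt) + s))$ rounds.

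The main obstacle is bookkeeping the weight-amplification factors carefully: one must verify that the reverse-propagation amplification $(nt)^{O(t^2)}$ derived in \Cref{prop:noisy-Xdecoder} applies here with the Z-decoder's residual playing the role of the measurement error $\hat m$, and that the intermediate cochains $g^{(i)}$ remain within the regime where the local exactness arguments of \Cref{claim:exactness-agg} and \Cref{claim:consistent} apply. These follow since the Z-decoder's intermediate syndromes never exceed the initial syndrome weight (cf.\ the lightcone argument in \Cref{prop:noisy-parallel-Z}), so the assumption on $|\hat e|_R$ keeps us inside the convergence radius throughout, just as in the sequential case.
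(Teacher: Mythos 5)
Your proof follows essentially the same approach as the paper's: replace the sequential Z-decoder call in Step~3 of \Cref{alg:Xdecoder} by the $\tau$-round parallel Z-decoder (\Cref{prop:noiseless-parallel-Z}), noting that Steps~1, 2, and~4 are local and constant-depth in parallel; for the logarithmic-rounds case invoke full convergence of the parallel Z-decoder and run the sequential argument of \Cref{prop:noiseless-Xdecoder}; for the constant-rounds case treat the Z-decoder's residual error as an effective measurement error in the sense of step~4 of \Cref{prop:noisy-Xdecoder}, bound its amplification through the reverse syndrome-explanation sequence by $(nt)^{O(t^2)}$, and choose $\tau = O(\tau_0(t^2\log(nt) + s))$ to absorb it. This is exactly the structure of the paper's proof, including the identification of $\hat w^{(t-k)}$ with the role of $\hat h'^{(t-k)}$ in \cref{eq:8888}, so the proposal is correct.
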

\begin{proof} We follow~\Cref{prop:noiseless-Xdecoder}'s proof, with the difference that we will perform all steps and not end early (as was allowed in its steps 1 and 2 --~\Cref{claim:local-guess-work} and~\Cref{claim:syndrome-consistent}).
     In~\Cref{prop:noiseless-Xdecoder}'s proof, the main step (step 3) is to call the Z-syndrome decoder (level $t-k$) on to correct a cochain $\tilde{c}^{(t-k)}$ on the dual chain complex. In~\Cref{eq:dual-chain-weight-bound}, the block-weight of this cochain was bounded by
    $|\tilde{c}^{(t-k)}| \leq (t-k+2)(nt)^{t-k} 2^t |\hat{e}|_R$. Hence, the weight promise on $|\hat{e}|_R$ allows us to call the parallel Z-decoder from~\Cref{alg:Z-parallel} and use the performance guarantees from~\Cref{prop:noiseless-parallel-Z}.

    We first consider the case of logarithmic many rounds. And let $\tau_0 = \log (\frac{2^{t+2}}{8^{-t^2}\kappa^{t-k+1}-\lambda})/\log (\frac{1}{15/16 + 4^t \varepsilon})$ be the constant from~\Cref{alg:Z-parallel} (note that we are decoding level $t-k$ of the dual complex). It is guaranteed that after $O(\tau_0 \log |X(t-k)|)$
    parallel decoding rounds,~\Cref{alg:Z-parallel} outputs a cochain homologically equivalent to $\tilde{c}^{(t-k)}$. The rest of the argument in~\Cref{prop:noiseless-Xdecoder} then follows.

    Next we consider the case of constant number of parallel decoding rounds. In this case there will be residual error in decoding $\tilde{c}^{(t-k)}$ and hence we need to use an argument similar to steps 3 and 4 of~\Cref{prop:noisy-Xdecoder}'s proof. After $\tau$ decoding rounds (which we assume to be an integer multiple of $\tau_0$),~\Cref{alg:Z-parallel} outputs a cochain $\tilde{c}'^{(t-k)}$ such that $|\tilde{c}'^{(t-k)} -\tilde{c}^{(t-k)}|_R \leq (\frac{1}{2})^{\tau/\tau_0} |\tilde{c}^{(t-k)}|_R$. Then we follow the same argument in step 4 of~\Cref{prop:noisy-Xdecoder}'s proof starting from~\Cref{eq:8888}. We need to replace RHS of~\Cref{eq:8888} by $(\frac{1}{2})^{\tau/\tau_0} |\tilde{c}^{(t-k)}|_R$, and this leads to a final correction (for the X-decoding problem) $g$ such that $|g+ \hat{e}|_R \leq n^{O(t^2)}(\frac{1}{2})^{\tau/\tau_0} |\tilde{c}^{(t-k)}|_R \leq (nt)^{O(t^2)}(\frac{1}{2})^{\tau/\tau_0} |\hat{e}|_R$. Hence taking $\tau = O(\tau_0 (t^2 \log(nt)+s))$ guarantees $|g+ \hat{e}|_R \leq \frac{1}{2^s}|\hat{e}|_R$. 
\end{proof}

\begin{prop}[Parallel X-decoder with noisy syndrome]
\label{prop:noisy-parallel-X}
        Consider the same setting in~\Cref{prop:noiseless-Xdecoder} and the X-syndrome decoder algorithm (\Cref{alg:Xdecoder}) with the modification described in~\Cref{prop:noisy-Xdecoder}. Let $\hat{e} \in C_{k}(X)$ be the chain (over $\mathbb{F}_q$) corresponding to a Pauli Z error. Let $\frac{2}{\kappa} \lambda^{1/t} 8^{t} < \varepsilon < 4^{-(t+2)}$ and $\tau_0 = \log (\frac{2^{t+2}}{8^{-t^2}\kappa^{t-k+1}-\lambda})/\log (\frac{1}{15/16 + 4^t \varepsilon})$ and $E_0 = (nt)^{-O(t^2)} (8^{-t^2}\kappa^{t-k+1}-\lambda)r |X(t-k)| \cdot \min (\frac{(8^{-t^2}\kappa^{t-k+1} - \lambda)^2}{2^{t+3}}, (\frac{\varepsilon\kappa}{2})^{t-k+1}8^{-t^2}-\lambda )$. Let $\hat{\sigma} = \partial_k (\hat{e}) \in C_{k-1}(X)$ be the ideal syndrome. And let $\sigma = \hat{\sigma} + \hat{m}$ be the observed syndrome, where the measurement error $\hat{m} \in C_{k-1}(X)$ satisfies $2^t(t+1) n |\hat{e}|_R  + 2^{O(\tau_0 q^{n^{O(t)}} )}|\hat{m}| < E_0$. Then~\Cref{alg:Xdecoder} (modified version) in $\tau$-round parallel mode runs in time $O(\tau q^{n^{O(t)}})$ and achieves the following guarantees
        \begin{itemize}
            \item If $\tau=O(\tau_0 (t^2 \log(nt) + s))$, then it outputs a correction cochain $g$ such that $|g + \hat{e}|_R \leq \frac{1}{2^s}|\hat{e}|_R + 2^{O(\tau_0 q^{n^{O(t)}} )}|\hat{m}|$.
            \item If $\tau=O(\tau_0 \log |X(t-k)|)$, then it outputs a correction cochain $g$ such that $|g+\hat{e}|_R \leq 2^{O(\tau_0 q^{n^{O(t)}} )} |\hat{m}|$.
        \end{itemize}
\end{prop}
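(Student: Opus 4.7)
The plan is to combine the fault-propagation analysis from the sequential noisy X-decoder (\Cref{prop:noisy-Xdecoder}) with the noisy parallel Z-decoder (\Cref{prop:noisy-parallel-Z}) as the subroutine in Step~3 of \Cref{alg:Xdecoder}. Since the noiseless parallel X-decoder (\Cref{prop:noiseless-parallel-X}) already established that Steps~1, 2, and~4 run in parallel time $O(q^{n^{O(t)}})$ per round, the only ingredient that must be revisited under noisy syndromes is how measurement error propagates through the local-guess, syndrome-explanation, and reverse-tracing stages, and what happens when the Z-subroutine is itself run for a limited number of parallel rounds.

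First I would re-run Steps~1 and~2 exactly as in the proof of \Cref{prop:noisy-Xdecoder}: the modified local-guess rule (zero out where no preimage exists) and the recursive construction of $\sigma^{(1)}, g^{(1)}, \dots, \sigma^{(t-k+1)}$ yield the same weight-propagation bounds
\[
    |\hat{\varsigma}^{(i)}| \;\leq\; (nt)^i 2^t |\hat{m}|,\qquad
    |\hat{h}^{(i)}|\;\leq\;(nt)^i 2^t |\hat{m}|,
\]
because these bounds are purely local and do not depend on whether the Z-subroutine is sequential or parallel. As before, after unencoding the local tensor codes on $\sigma^{(t-k+1)}$ we obtain a noisy syndrome $s^{(t-k+1)}$ on the dual complex satisfying $|s^{(t-k+1)} - \tilde{s}^{(t-k+1)}| \leq (2nt)^t|\hat{m}|$, while the underlying cochain $\tilde{c}^{(t-k)}$ has weight at most $(t+1)(2nt)^t |\hat{e}|_R$. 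The assumed bound $2^t(t+1)n|\hat{e}|_R + 2^{O(\tau_0 q^{n^{O(t)}})}|\hat{m}| < E_0$ (with the $E_0$ of the X-decoder absorbing the $E_0$ of the Z-decoder up to polynomial factors in $(nt)^t$) is designed so that the preconditions of \Cref{prop:noisy-parallel-Z} hold at level $t-k$ of the dual complex.

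Next I would invoke \Cref{prop:noisy-parallel-Z} on this dual-complex Z-decoding problem with the same $\tau$ that the X-decoder is allotted. For $\tau = O(\tau_0 \log|X(t-k)|)$ that proposition yields a correction with residual stabilizer-reduced weight bounded by $2^{O(\tau_0 q^{n^{O(t)}})}|\hat{m}|$, whereas for $\tau = O(\tau_0(t^2 \log(nt)+s))$ it gives residual weight at most $\tfrac{1}{2^s}|\tilde c^{(t-k)}|_R + 2^{O(\tau_0 q^{n^{O(t)}})}|\hat{m}|$ (since the parallel Z-decoder's per-round shrinkage is multiplicative in the adversarial part while its measurement-error residual saturates after $\tau_0$ rounds). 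After re-encoding and adding to $g^{(t-k)}$, this reproduces the analogue of equation~\eqref{eq:8888}, now with the right-hand side replaced by $\tfrac{1}{2^s}|\tilde c^{(t-k)}|_R\cdot n^{O(t)} + 2^{O(\tau_0 q^{n^{O(t)}})}|\hat{m}|$.

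Finally, Step~4 (the reverse syndrome-explanation walk) is local, exact, and identical to the corresponding step in \Cref{prop:noisy-Xdecoder}; it multiplies any residual weight on $g'^{(t-k)}-\hat g'^{(t-k)}$ by at most a factor of $(nt)^{O(t^2)}$ while being implementable in parallel time $O(q^{n^{O(t)}})$ per level by \Cref{claim:exactness-agg}. Composing the bounds gives the desired
\[
    |g+\hat{e}|_R \;\leq\; \tfrac{1}{2^s}|\hat{e}|_R + 2^{O(\tau_0 q^{n^{O(t)}})}|\hat{m}|
\]
for the constant-round case, and the logarithmic-round case follows by taking $s=\Theta(\log|X(t-k)|)$ and absorbing the $(nt)^{O(t^2)}$ factor into the exponent. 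The main obstacle I anticipate is the bookkeeping needed to show that the $s$-parameter of the inner parallel Z-decoder composes cleanly with the $n^{O(t^2)}$ amplification in Step~4 without forcing a mismatched constant in $E_0$; this is addressed by choosing the number of Z-decoder rounds as $\tau = O(\tau_0(t^2\log(nt)+s))$, which is exactly large enough for the Step~4 amplification to be reabsorbed into the $\tfrac{1}{2^s}$ factor. The runtime $O(\tau q^{n^{O(t)}})$ per round follows since every step outside the Z-subroutine already has parallel complexity $O(q^{n^{O(t)}})$ and the Z-subroutine contributes a further $O(q^{n^{O(t)}})$ per its own parallel round.
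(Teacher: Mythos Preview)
Your proposal is correct and follows essentially the same route as the paper: re-use the fault-propagation bounds from Steps~1--2 of the sequential noisy X-decoder proof (\Cref{prop:noisy-Xdecoder}) to bound the syndrome noise seen by the dual-complex Z-decoding problem, invoke the noisy parallel Z-decoder (\Cref{prop:noisy-parallel-Z}) in Step~3, and then propagate the residual through the $(nt)^{O(t^2)}$ amplification of Step~4. Your observation that the extra $t^2\log(nt)$ term in $\tau$ is precisely what absorbs the Step~4 amplification into the $\tfrac{1}{2^s}$ factor is exactly the bookkeeping the paper does in its constant-round case.
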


\begin{proof}
We follow~\Cref{prop:noisy-Xdecoder}'s proof. In step 3, we need to perform Z-syndrome decoding on the dual chain complex to find $\tilde{c}^{(t-k)}$ under noisy syndrome, under the promise that the syndrome noise has weight at most $\leq (2nt)^t|\hat{m}|$ (\Cref{eq:8011}) and $|\tilde{c}^{(t-k)}| \leq (t+1)(2nt)^t |\hat{e}|_R$ (\Cref{eq:8012}). Hence the promise in the proposition allows us to invoke the performance guarantees from~\Cref{prop:noisy-parallel-Z} for the $\tau$-round parallel Z-decoding (level $t-k$)~\Cref{alg:Z-parallel} under noisy syndrome.

First consider the logarithmic many rounds case. We are guaranteed that after $O(\tau_0|X(t-k)|)$ parallel decoding rounds,~\Cref{alg:Z-parallel} outputs a cochain $\tilde{c}'^{(t-k)}$ such that $|\tilde{c}'^{(t-k)}-\tilde{c}^{(t-k)}|_R \leq 2^{O(\tau_0 q^{n^{O(t)}} )}|\hat{m}|$. Then we follow the same argument in step 4 of~\Cref{prop:noisy-Xdecoder}'s proof starting from~\Cref{eq:8888}. We replace RHS of~\Cref{eq:8888} by $2^{O(\tau_0 q^{n^{O(t)}} )}|\hat{m}|$, and this leads to a final correction (for the X-decoding problem) $g$ such that $|g+ \hat{e}|_R \leq 2^{O(\tau_0 q^{n^{O(t)}} )}|\hat{m}|$.

For the constant many rounds case. We are guaranteed that after $\tau$ (which we assume is a multiple of $\tau_0$) decoding rounds~\Cref{alg:Z-parallel} we obtain 
a cochain $\tilde{c}'^{(t-k)}$ such that $|\tilde{c}'^{(t-k)}-\tilde{c}^{(t-k)}|_R \leq (\frac{1}{2})^{\tau/\tau_0} |\tilde{c}^{(t-k)}|_R +  2^{O(\tau_0 q^{n^{O(t)}} )}|\hat{m}|$. Then we follow the same argument in step 4 of~\Cref{prop:noisy-Xdecoder}'s proof starting from~\Cref{eq:8888} and find that~\Cref{alg:Xdecoder} outputs a cochain $g$ such that $|g - \hat{e}|_R \leq (nt)^{O(t^2)}(\frac{1}{2})^{\tau/\tau_0} |\hat{e}|_R  +  2^{O(\tau_0 q^{n^{O(t)}} )}|\hat{m}| $. Hence taking $\tau = O(\tau_0 (t^2 \log(nt)+s))$ guarantees $|g+ \hat{e}|_R \leq \frac{1}{2^s}|\hat{e}|_R + 2^{O(\tau_0 q^{n^{O(t)}} )}|\hat{m}|$. 
\end{proof}

\printbibliography
\end{document}